\tikzset{dotmark/.style={circle,fill,inner sep=1.5pt}}
\tikzset{emptymark/.style={circle,draw,fill=white,inner sep=1.5pt}}
\tikzset{crossmark/.style={thick,inner sep=1.5pt}}
\newcommand{\A}{\mathcal{A}}
\newcommand{\B}{\mathcal{B}}
\newcommand{\Oh}{\mathcal{O}}
\newcommand{\tOh}{{\widetilde{\Oh}}}
\newcommand{\cOtilde}{\tOh}
\newcommand{\eps}{\varepsilon}
\newcommand{\lcp}{\mathsf{lcp}}
\newcommand{\sm}{\setminus}
\newcommand{\cO}{\mathcal{O}}
\newcommand{\ceil}[1]{\left\lceil #1 \right\rceil}
\newcommand{\floor}[1]{\left\lfloor #1 \right\rfloor}
\newcommand{\per}{\operatorname{per}}
\newcommand{\rot}{\operatorname{rot}}
\newcommand{\OccEx}{\mathrm{Occ}}
\newcommand{\OccE}{\mathrm{Occ}^E}
\newcommand{\OccD}{\mathrm{Occ}^D}
\newcommand{\ed}{\delta_E}
\newcommand{\DD}{\delta_D}
\newcommand{\G}{\mathcal{G}}
\newcommand{\X}{\mathcal{X}}
\newcommand{\Y}{\mathcal{Y}}
\renewcommand{\S}{\mathcal{S}}
\newcommand{\ktot}{\kappa}
\newcommand{\Z}{\mathbb{Z}}
\newcommand{\Zz}{\mathbb{Z}_{\ge 0}}
\newcommand{\Zp}{\mathbb{Z}_{> 0}}
\def\sp#1{\textsf{Special}(#1)}
\def\spb#1{\textsf{Special}_{\beta\varphi}(#1)}
\newcommand{\Comp}{\textsf{Trim}}
\newcommand{\Comps}{\textsf{Tm}}
\newcommand{\comp}{\alpha}
\newcommand{\rred}{\textsf{Red}}
\newcommand{\update}{\textsf{update}}
\newcommand{\rpl}{\textsf{Plain}}
\newcommand\markf{\ensuremath \mathrm{mk}}
\newcommand\tradeoff{\ensuremath \eta}
\newcommand\ktotm{\ensuremath \hat{\ktot}}
\newcommand\lpref{\ensuremath \partial_P}
\newcommand{\Hv}{\mathbb{H}}
\newcommand{\light}{\mathbb{L}}
\newcommand{\LP}{\mathcal{L}^P}
\newcommand{\LT}{\mathcal{L}^T}
\newcommand{\res}[2]{#1_{#2}}
\newcommand{\gen}{\textsf{gen}}
\newcommand{\val}{\textsf{val}}
\newcommand{\brkp}{\mathsf{B}}
\newcommand{\sub}{\subseteq}
\newcommand{\edl}[2]{{\delta_E}(#1,{}^*\!#2^*)}
\newcommand{\eds}[2]{{\delta_E}(#1,{}^*\!#2)}
\newcommand{\edp}[2]{{\delta_E}(#1,#2^*)}
\def\twoheadleadsto{\tikz[baseline=(a.base)]{\draw[%
    decorate,decoration={zigzag,segment length=4, amplitude=.9},%
    ] (0,0) -- (.25, 0);%
    \draw[%
    -{Classical TikZ Rightarrow}.{Classical TikZ Rightarrow},%
    ] (.25, 0) -- (.4, 0);%
    \node (a) at (.4/2,-.03) {\phantom{\(\leadsto\)}};%
}}
\newcommand{\onto}{\twoheadleadsto}
\def\modelname{{\tt PILLAR}\xspace}
\def\lceOp#1#2{{\tt LCP}(#1, #2)}
\def\lcbOp#1#2{{\tt LCP}^R(#1, #2)}
\def\ipmOp#1#2{{\tt IPM}(#1, #2)}
\def\perOp#1{{\tt Period}(#1)}
\def\accOp#1#2{#1\position{#2}}
\def\ipmOpName{{\tt IPM}\xspace}
\def\accOpName{{\tt Access}\xspace}
\def\extractOpName{{\tt Extract}\xspace}
\def\lenOpName{{\tt Length}\xspace}
\def\lceOpName{{\tt LCP}\xspace}
\def\lcbOpName{{\tt LCP$^R$}\xspace}
\def\cycEqOp#1#2{{\tt Rotations}(#1,#2)}
\newcommand{\concat}{\ensuremath{\mathtt{concat}}}
\newcommand{\makestring}{\ensuremath{\mathtt{makestring}}}
\newcommand{\splitOp}{\ensuremath{\mathtt{split}}}
\DeclareMathOperator*{\downshift}{\searrow\,}
\def\substr{\ensuremath \preccurlyeq}
\def\fragmentco#1#2{\bm{[}\,#1\,\bm{.\,.}\,#2\,\bm{)}}
\def\fragmentoc#1#2{\bm{(}\,#1\,\bm{.\,.}\,#2\,\bm{]}}
\def\fragmentoo#1#2{\bm{(}\,#1\,\bm{.\,.}\,#2\,\bm{)}}
\def\fragment#1#2{\bm{[}\,#1\,\bm{.\,.}\,#2\,\bm{]}}
\def\position#1{\bm{[}\,#1\,\bm{]}}
\def\problembox#1{%
    \vspace{2mm}%
    \noindent\fbox{%
    \begin{minipage}{.985\textwidth}%
        #1
    \end{minipage}%
    }%
    \vspace{2mm}%
}
\def\alphav{128}
\def\betav{8}
\def\deltavN{3}
\def\deltavD{8}
\pgfmathsetmacro{\betavh}{int(\betav/2)}
\def\threehalfs{{}^3{\mskip -4mu/\mskip -3.5mu}_2\,}
\def\onehalf{{}^1{\mskip -4mu/\mskip -3.5mu}_2\,}
\def\Q{\ensuremath Q^{\infty}}
\def\posa{\ensuremath v}
\def\posb{\ensuremath w}
\renewenvironment{cases}{%
  \matrix@check\cases\env@cases
}{%
  \endarray\right.%
}
\def\env@cases{%
  \let\@ifnextchar\new@ifnextchar
  \left\lbrace
  \def\arraystretch{1.1}%
  \array{@{\;}c@{\quad}l@{}}%
}
\def\mid{\ensuremath :}
\newcommand{\SM}{\textsc{New\-Periodic\-Matches}\xspace}
\newcommand{\PMWE}{\textsc{PM\-with\-Edits}\xspace}
\def\emptyset{\varnothing}
\def\Sb{\S_{\beta}}
\def\Sm{\S_{\mu}}
\def\Sf{\S_{\varphi}}
\def\torn{torsion\xspace}
\DeclareMathOperator{\tor}{tor}
\newcommand{\DPM}{\textsc{Dynamic\-Puzzle\-Matching}\xspace}
\newcommand{\I}{\mathcal{I}}
\title{Faster Pattern Matching under Edit Distance}
\author{Panagiotis Charalampopoulos}{Reichman University, Herzliya, Israel}{pcharalampo@gmail.com}{https://orcid.org/0000-0002-6024-1557}{Partly supported by Israel Science Foundation grant 810/21.}
\author{Tomasz Kociumaka}{University of California, Berkeley, U.S.}{kociumaka@berkeley.edu}{https://orcid.org/0000-0002-2477-1702}{Partly supported by NSF 1652303, 1909046, and HDR TRIPODS 1934846 grants, and an Alfred P. Sloan Fellowship.}
\author{Philip Wellnitz}{Max Planck Institute for Informatics, SIC,
Saarbrücken, Germany}{wellnitz@mpi-inf.mpg.de}{https://orcid.org/0000-0002-6482-8478}{}
\authorrunning{P. Charalampopoulos, T. Kociumaka, and P. Wellnitz}
\begin{document}
\maketitle
\begin{abstract}
We consider the approximate pattern matching problem under the \emph{edit distance}. Given a text $T$ of length $n$, a pattern $P$ of length $m$, and a threshold~$k$, the task is to find the starting positions of all substrings of~$T$ that can be transformed to $P$ with at most $k$ \emph{edits}.
More than 20 years ago, Cole and Hariharan [SODA'98, J.~Comput.'02] gave an $\mathcal{O}(n+k^4 \cdot n/m)$-time algorithm for this classic problem, and this runtime has not been improved since.

Here, we present an algorithm that runs in time $\mathcal{O}(n+k^{3.5} \sqrt{\log m \log k} \cdot n/m)$, thus breaking through this long-standing barrier.
In the case where $n^{1/4+\varepsilon} \leq k \leq n^{2/5-\varepsilon}$ for some arbitrarily small positive constant $\varepsilon$, our algorithm improves over the state-of-the-art by polynomial factors: it is polynomially faster than both the algorithm of Cole and Hariharan and the classic $\mathcal{O}(kn)$-time algorithm of Landau and Vishkin [STOC'86, J.~Algorithms'89].

We observe that the bottleneck case of the alternative $\mathcal{O}(n+k^4 \cdot n/m)$-time algorithm of Charalampopoulos, Kociumaka, and Wellnitz [FOCS'20] is when the text and the pattern are (almost) periodic.
Our new algorithm reduces this case to a new dynamic problem (Dynamic Puzzle Matching), which we solve by building on tools developed by Tiskin [SODA'10, Algorithmica'15] for the so-called seaweed monoid of permutation matrices.
Our algorithm relies only on a small set of primitive operations on strings and thus also applies to the fully-compressed setting (where text and pattern are given as straight-line programs) and to the dynamic setting (where we maintain a collection of strings under creation, splitting, and concatenation), improving over the state of the art.
\begin{figure}[htpb!]
\centering
\scalebox{.85}{

\begin{tikzpicture}

    \pgfmathsetmacro{\sx}{6}
    \pgfmathsetmacro{\sy}{3.5}

    \path[fill=green!40!black!2] (\sx*1.5,0) -- (\sx*1.8,\sy*.6) -- (\sx*1.82,\sy*1.64)
    -- (\sx*2,\sy*2) -- (\sx*2,0) -- cycle;
    \path[fill=black!2!white] (0,0) -- (\sx*0.75,0) -- (\sx* 1,\sy*1) -- (\sx*2,\sy*2) --
(0,\sy*2) -- cycle;

    \node[fill,circle, inner sep=.75pt] (o) at (0,0) {};
    \node[fill,circle, inner sep=1pt] (1) at (\sx*0.75,0) {};
    \node[fill,circle, inner sep=1pt] (2) at  (\sx*0.85714285714,0) {};
    \node[fill,circle, inner sep=1pt] (3) at  (\sx*1.5,0) {};
    \node[fill,circle, inner sep=1pt] (4) at  (\sx*1,\sy*1) {};
    \node[fill,circle, inner sep=.75pt] (4a) at  (\sx*1,0) {};
    \node[fill,circle, inner sep=.75pt] (4b) at  (0,\sy*1) {};
    \node[fill,circle, inner sep=1pt] (5) at  (\sx*1.2,\sy*1.2) {};
    \node[fill,circle, inner sep=.75pt] (5a) at  (\sx*1.2,0) {};
    \node[fill,circle, inner sep=.75pt] (5b) at  (0,\sy*1.2) {};
    \node[fill,circle, inner sep=1pt] (6) at  (\sx*2,\sy*2) {};
    \coordinate (6p) at (\sx*3,\sy*3) {};
    \node[fill,circle, inner sep=.75pt] (6a) at  (\sx*2,0) {};
    \coordinate (6ap) at  (\sx*3,0) {};
    \node[fill,circle, inner sep=.75pt] (6b) at  (0,\sy*2) {};
    \coordinate (6bp) at  (0,\sy*3) {};

    \coordinate (62) at (\sx*1.8,\sy*1.8) {};
    \coordinate (622) at (\sx*1.82,\sy*1.82) {};
    \coordinate (92) at (\sx*1.8,\sy*.6) {};
    \coordinate (922) at (\sx*1.82,\sy*1.64) {};

    \coordinate (7a) at  (\sx*1.795,0) {};
    \coordinate (7a2) at  (\sx*1.825,0) {};
    \coordinate (7a') at  (\sx*1.8,0) {};
    \coordinate (7a2') at  (\sx*1.82,0) {};
    \coordinate (7b) at  (0,\sy*1.79) {};
    \coordinate (7b2) at  (0,\sy*1.83) {};
    \coordinate (7b') at  (0,\sy*1.8) {};
    \coordinate (7b2') at  (0,\sy*1.82) {};

    \draw[very thin,black!10] (7a') -- (62) -- (7b');
    \draw[very thin,black!10] (7a2') -- (622) -- (7b2');

    \draw[thin] (62) -- ++(180:.15);
    \draw[thin] (62) -- ++(90:-.15);
    \draw[thin] (622) -- ++(180:.15);
    \draw[thin] (622) -- ++(90:-.15);

    \draw[thin] (92) -- ++(90:.15);
    \draw[thin] (92) -- ++(90:-.15);
    \draw[thin] (922) -- ++(90:.15);
    \draw[thin] (922) -- ++(90:-.15);

    \draw[very thin,black!30] (6a) -- (6) -- (6b);
    \draw[very thin,black!30] (5a) -- (5) -- (5b);
    \draw[very thin,black!30] (4a) -- (4) -- (4b);

    \draw (o) -- (6a);
    \draw (o) -- (6b);
    \draw[white,thick] (7b) -- (7b2);
    \draw[thin] (7b) -- ++(45:.3);
    \draw[thin] (7b) -- ++(45:-.3);
    \draw[thin] (7b2) -- ++(45:.3);
    \draw[thin] (7b2) -- ++(45:-.3);

    \draw[white,thick] (7a) -- (7a2);
    \draw[thin] (7a) -- ++(45:.3);
    \draw[thin] (7a) -- ++(45:-.3);
    \draw[thin] (7a2) -- ++(45:.3);
    \draw[thin] (7a2) -- ++(45:-.3);

    \draw[color=purple!90!black, line width=2pt] (1) -- (2); 
    \draw[color=purple!90!black, line width=2pt] (2) -- (5) node[midway, above, sloped,fill=white,fill
    opacity=.65,scale=.8,outer sep=.5ex] {\phantom{\(\tOh(n + k^{3.5})\), This work}} node[midway, above,
    sloped,inner sep=0.3ex]{
    \(\tOh(n + k^{3.5})\), This work\qquad\mbox{}}; 
    \draw[color=black, line width=.5pt] (622) -- (6) (4) -- (5) -- (62) node[midway, above, sloped,
    inner sep=0.3ex]
    {\(\Oh(kn)\), \cite{LandauV89}}; 
    \draw[color=blue!60!black!80, line width=.5pt] (o) -- (1);
    \draw[color=blue!60!black!80, line width=.5pt] (1) -- (4) node[midway, above,
    sloped,inner sep=0.3ex] {\(\Oh(n +
        k^4)\), \cite{ColeH98}}; 
        \draw[color=green!50!black, line width=.5pt] (922) -- (6) (3) -- (92) node[midway, below,
    sloped,align=center,inner sep=0.5ex,text width=10em]
    {\(\Omega(k^2)\) \cite{bi18}}; 

    \def\slfrac#1#2{\ensuremath{}^{#1}\!/\!{}_{#2}}

    \node[anchor=north east,outer sep=.1em,rotate=45] at (o)  { $k \approx 1$};
    \node[anchor=north east,outer sep=.1em,rotate=45] at (1)  { $k \approx n^{1/4}$};
    \node[anchor=north east,outer sep=.1em,rotate=45] at (2)  { $k \approx n^{2/7}$};
    \node[anchor=north east,outer sep=.1em,rotate=45] at (4a)  { $k \approx n^{1/3}$};
    \node[anchor=north east,outer sep=.1em,rotate=45] at (5a)  { $k \approx n^{2/5}$};
    \node[anchor=north east,outer sep=.1em,rotate=45] at (3)  { $k \approx n^{1/2}$};
    \node[anchor=north east,outer sep=.1em,rotate=45] at (6a)  { $k \approx n$};

    \node[anchor=east,outer sep=.5em] at (o) { $t(n,k) \approx n$};
    \node[anchor=east,outer sep=.5em] at (4b) { $t(n,k) \approx n^{4/3}$};
    \node[anchor=east,outer sep=.5em] at (5b) { $t(n,k) \approx n^{7/5}$};
    \node[anchor=east,outer sep=.5em] at (6b) { $t(n,k) \approx n^2\,\,$};
\end{tikzpicture}

\caption{\footnotesize
    The running time $t(n,k)$ of algorithms for the approximate pattern matching
    problem under the edit distance as a function of $k$ for the important special
    case where $m=\Theta(n)$.
    The scale is doubly logarithmic and sub-polynomial factors are hidden; running
    times below \(n\) and above \(n^2\) are not relevant, neither are values of \(k\)
    that lie above \(n\).
    Any point that lies strictly to the bottom-right of the green line segment is unattainable unless SETH fails.
}\label{fig:tradeoffs}
\end{figure}
\vspace*{-6ex}
\end{abstract}
\clearpage
\setcounter{page}{0}%
\tableofcontents
\clearpage
\section{Introduction}

Almost every introductory algorithms textbook covers the \emph{pattern
matching} problem: in a given text~\(T\) of length \(n\), we wish to find all occurrences
of a given pattern \(P\) of length \(m\).
As fundamental as both this problem and its solutions are by today, as apparent are their
limitations: a single surplus or missing character in the pattern (or in a potential
occurrence) results in (potentially all) occurrences being missed.
Hence, a large body of work focuses on \emph{approximate pattern matching},
where we want to identify substrings of the text that are \emph{close} to the pattern.
In particular, in this paper, we consider a classic variant of approximate pattern matching where we allow for up to
\(k\) insertions, deletions, and substitutions (collectively: edits); that is, we consider approximate pattern
matching under the edit distance.

Formally, for two strings \(X\) and \(Y\), their edit distance (also known as
the Levenshtein distance) $\ed(X,Y)$, is the minimum number of insertions, deletions,
and substitutions of single characters required to transform $X$ into $Y$.
Now, in the \emph{pattern matching with edits} problem, for a given text $T$, pattern~$P$,
and an integer threshold $k>0$, the task is to find the starting positions
of all \emph{$k$-error} (or \emph{$k$-edit}) \emph{occurrences} of~$P$ in $T$.
Specifically, we wish to list all positions $v$ in $T$ such that the edit distance between
${T\fragmentco{v}{w}}=T\position{v}T\position{v+1}\cdots T\position{w-1}$ and $P$ is at most $k$
for some position $w$; we write \(\OccE_k(P,T)\) to denote the set of all such positions $v$.

Let us highlight the main prior results for pattern matching with edits;
for a thorough review of other (in particular) early results on pattern matching with edits,
we refer to the extensive survey of Navarro~\cite{N01}.
Back in 1980, Sellers~\cite{S80} demonstrated how the standard dynamic-programming
algorithm for computing $\ed(P,T)$ can be adapted to an $\cO(nm)$-time algorithm for the
pattern matching with edits problem.
Around the same time, Masek and Paterson~\cite{MP80} reduced the running time by a
poly-logarithmic factor using the Four-Russians technique.
Only several year later,
Landau and Vishkin~\cite{LV88} presented an $\cO(n k^2)$-time solution,
which they could then improve to the---by now---classic ``kangaroo jumping''
algorithm that solves this problem in $\Oh(nk)$ time~\cite{LandauV89}.
In search of even faster algorithms,
Sahinalp and Vishkin~\cite{SV96} developed an algorithm that runs in time $\cO(n +
nk^{8+1/3} (\log^* n)^{1/3} / m^{1/3})$---this algorithm was then improved
by Cole and Hariharan~\cite{ColeH98}, who gave an $\cO(n+k^4 n/m)$-time solution,
which is asymptotically faster than the aforementioned Landau--Vishkin algorithm
when $k = o(\sqrt[3]{m})$, and in that setting also the fastest known algorithm even
today.

From a lower-bound perspective, we can benefit from the discovery that the classic quadratic-time
algorithm for computing the edit distance of two strings is essentially optimal:
Backurs and Indyk~\cite{bi18} recently proved that any polynomial-factor improvement would
yield a major breakthrough for the satisfiability problem.
For pattern matching with edits, this means that there is no hope for an
algorithm running in time $\Oh(n + k^{2-\eps}n/m)$ for any constant $\eps>0$:
given an $\Oh(n + k^{2-\eps}n/m)$-time algorithm for pattern matching with edits,
we could compute the edit distance of any two
given strings $X$ and~$Y$ of total length $N$ over an alphabet $\Sigma$ in time $\cO(N^{2-\eps} \log N)$.
Specifically, we pad \(X\) and~\(Y\) to
$P \coloneqq \$^{2N} X \$^{2N}$ and $T \coloneqq \$^{2N} Y \$^{2N}$, where $\$ \notin
\Sigma$.
Now, as $\min_{v,w} \ed(P,T\fragmentco{v}{w})=\ed(P,T)=\ed(X,Y)$,
we can binary search for the smallest value of $k$ such that $\OccE_k(P,T)$ is not empty.

Despite the large gap between the quadratic and bi-quadratic dependency on $k$,
no further advancements have been made to settle the
running time of the pattern matching with edits problem.
In particular, there has not even been any
progress on resolving the 24-year-old conjecture of~Cole and
Hariharan~\cite{ColeH98} that an $\Oh(n + k^3n/m)$-time algorithm \emph{should be
possible}---until now.
We give the first algorithm that improves over the running time achieved by Cole
and Hariharan~\cite{ColeH98}:

\begin{restatable}{mtheorem}{stedalgmain}\label{thm:stedalgmain}
    Given a text $T$ of~length $n$, a pattern $P$ of~length $m$,
    and an integer threshold $k>0$,
    we can compute the set $\OccE_k(P, T)$
    in $\Oh(n + n/m \cdot k^{3.5}\sqrt{\log m \log k})$ time.
    \ifx\stedalgmaint\undefined\lipicsEnd\fi
\end{restatable}

Observe that if \(k\) is
roughly between \(n^{1/4}\) and \(n^{2/7}\), we obtain the first linear-time  algorithm
for the important special case where text and pattern are close in length. Further, we
still obtain polynomial improvements in the running time for values of \(k\) that are
roughly less than \(n^{2/5}\).
Consult \cref{fig:tradeoffs} for a graphical comparison of the running times of
our algorithm with the previous state-of-the-art and the conditional lower bound discussed above.

\paragraph*{The \modelname Model and Faster Algorithms in Other Settings}

Our approach is reasonably general and allows for an easy adaption to different settings
(where the text and the pattern are not given explicitly).
In particular, we follow the approach by
Charalampopoulos, Kociumaka, and Wellnitz~\cite{unified}
and implement the algorithm in the so-called \modelname model.
In that model, one bounds the running times of algorithms in terms of the number of
calls to a small set of very common operations (the \modelname operations) on
strings, such as computing the length of their longest common prefix.
Then, for any setting, an efficient implementation of the \modelname operations
yields a fast algorithm for approximate pattern matching.
For pattern matching with edits,~\cite{unified}
presented an algorithm that runs in $\cO(n/m \cdot k^4)$ time in the \modelname model.
We improve upon their algorithm.

\begin{restatable}{mtheorem}{edalgII}\label{thm:edalgII}
    Given a pattern $P$ of~length $m$, a text $T$ of~length $n$, and an integer
    threshold $k>0$, we can compute a representation of the set $\OccE_k(P,T)$
    as $\cO(n/m \cdot k^3)$ arithmetic progressions with the same difference
    in $\Oh(n/m \cdot k^{3.5} \sqrt{\log m \log k})$ time in the \modelname model.
    \ifx\edalgIt\undefined\lipicsEnd\fi
\end{restatable}

Consistently with~\cite{unified}, we represent the output set $\OccE_k(P,T)$ as $\Oh(k^3)$
disjoint arithmetic progressions with a common difference.
Unless $P$ is almost periodic, though, $\OccE_k(P,T)$ is of size $\Oh(k^2)$,
and we can report $\OccE_k(P,T)$ explicitly;
see~\cite{unified} for a structural characterization of $\OccE_k(P,T)$.


Now, in the standard setting, where the text and the pattern are both given explicitly,
after an $\cO(n)$-time preprocessing, we can perform
each primitive \modelname operation in constant time.
We thus instantly obtain \cref{thm:stedalgmain}.
The same \modelname implementation remains valid in the \emph{internal setting} introduced in~\cite{IPM}.
Specifically, after a linear-time preprocessing of an input string $X$,
the algorithm of \cref{thm:edalgII} can efficiently compute $\OccE_k(P,T)$ for any two fragments $P,T$ of the string $X$.


In \cref{sec:impl}, we show that existing implementations of the primitive operations of the \modelname model allow
us to also obtain efficient algorithms for pattern matching under edit distance in the fully-compressed setting (where the text and the pattern are
given as straight-line programs) and in the dynamic setting (where we maintain a
collection of strings under creation, splitting, and concatenation).
Our algorithms improve over the state-of-the-art algorithms of~\cite{unified} for these
settings: we trade a $\sqrt{k}$ factor for a factor that is asymptotically
upper-bounded by the logarithm of the length of the considered pattern.
Formally, we obtain the following results.
\begin{restatable}{mtheorem}{dynalgmain}\label{thm:dynalgmain}
    We can maintain a collection $\X$ of non-empty persistent strings of~total length $N$
    subject to
    $\makestring(U)$, $\concat(U,V)$, and $\splitOp(U,i)$ operations that require
    $\cO(\log N +|U|)$,
    $\cO(\log N)$, and $\cO(\log N)$ time, respectively, so that given two strings $P, T
    \in \X$, and an integer threshold $k>0$, we can compute a representation of $\OccE_k(P, T)$
    as $\cO(|T|/|P| \cdot k^3)$ arithmetic progressions with the same difference in
    time $\Oh(|T|/|P| \cdot k^{3.5} \sqrt{\log |P| \log k} \log^2
    N)$.\ifx\dynalgmaint\undefined{\footnote{All running time bounds hold with high
            probability (that is, $1-1/N^{\Omega(1)}$). A deterministic version can be obtained at the cost of a $\mathrm{poly}(\log \log N)$-factor overhead.}
    \lipicsEnd}\fi
\end{restatable}
\vspace*{-2ex}
\begin{restatable}{mtheorem}{gcedalgmain}\label{gc_ed_alg_intro}
    Let $\G_T$ denote a straight-line program of~size~$n$ generating a string~$T$,
    let~$\G_P$ denote a straight-line program  of~size~$m$ generating a string~$P$,
    let $k>0$ denote an integer threshold,
    and set $N \coloneqq |T|$ and $M \coloneqq |P|$.
    We can compute $|\OccE_k(P, T)|$ in time
    $\Oh(m\log N + n\, k^{3.5} \log^{2} N \sqrt{\log M \log k} \log\log N)$
    and we can report the elements of $\OccE_k(P, T)$ within $\Oh(|\OccE_k(P, T)|)$ extra time.
    \ifx\gcedalgmaint\undefined\lipicsEnd\fi
\end{restatable}

\subsection{Related Work}

\subparagraph*{Pattern Matching with Mismatches.}
The Hamming distance of~two
(equal-length) strings is the number of~positions where the strings differ.
This metric is more restrictive than edit distance since it allows substitutions but does not support insertions or deletions.

In the \emph{pattern matching with mismatches} problem,
we are given a text $T$ of~length $n$, a~pattern~$P$ of~length $m$, and an integer threshold $k>0$, and
we wish to compute the \emph{$k$-mismatch occurrences} of~$P$ in $T$,
that is, all length-$m$ substrings of~$T$ that are at Hamming distance at most
$k$ from $P$.
This problem has been extensively studied since the 1980s.
A long line of works~\cite{Abrahamson,Kosaraju,LandauV86,GG86,AmirLP04,CliffordFPSS16,GawrychowskiU18,cgkkp20} has culminated in an
$\tOh(n+ kn/\!\sqrt{m})$-time algorithm, presented by
Gawrychowski and Uznański~\cite{GawrychowskiU18}, who also showed that a significantly
faster ``combinatorial'' algorithm would
have (unexpected) consequences for the complexity of~Boolean matrix multiplication.
Pattern matching with mismatches on strings is thus well understood in the standard setting.

As shown in~\cite{unified}, pattern matching with mismatches admits an $\tOh(k^2 \cdot n/m)$-time algorithm in the \modelname model.
Analogously to pattern matching with edits, this solution constitutes the basis of the state-of-the-art algorithms in
the internal, fully-compressed, and dynamic settings.

\subparagraph*{Online Algorithms for Pattern Matching with Edits.}
The pattern matching with edits problem has also been considered in the online setting
where the text arrives character by character and, by the time $T\position{w}$ becomes available,
the algorithm needs to decide whether $\min_v \ed(P,T\fragmentco{v}{w})\le k$.
Landau, Myers, and Schmidt~\cite{LMS98} provided an online algorithm that runs in $\Oh(k)$ time per character.
Subsequent work focused on the streaming model, whether the main emphasis is on reducing the space complexity of an online algorithm,
usually at the cost of introducing Monte-Carlo randomization.
Starikovskaya~\cite{S17} presented an algorithm for this setting with both the space usage and the time required to process each character of the text
being proportional to $\sqrt{m} (k \log m)^{\cO(1)}$.
Very recently, Kociumaka, Porat, and Starikovskaya~\cite{KPS21}, improved upon this result, presenting an algorithm that uses $\cOtilde(k^5)$ space
and processes each character of the text in $\cOtilde(k^8)$ amortized time; here,
$\cOtilde(\star)$ hides $\log^{\cO(1)}m$ factors.

\subparagraph*{Approximating Pattern Matching with Edits.}
Chakraborty, Das, and Koucký~\cite{CDK19} presented an $\cOtilde(nm^{3/4})$-time algorithm that produces,
for each position $w$ of the text, a constant factor approximation of
$\min_v\ed(P,T\fragmentco{v}{w})$. They also provided an online algorithm with a weaker approximation guarantee.

\subsection{Open Problems}

The most important and obvious open problem is to close the gap between upper and lower
bounds for the pattern matching with edits problem; as is depicted in \cref{fig:tradeoffs}.
In the quest for faster algorithms, one could try to relax the problem in scope, for
instance, by considering its (easier) decision version
where we only need to check whether $\OccE_k(P,T)$ is empty,
or by allowing for some approximation by also reporting an arbitrary subset of the positions in $\OccE_{(1+\eps)k}(P,T) \setminus \OccE_k(P,T)$ for a small $\eps>0$.

Another research direction could be to devise an algorithm with an analogous running time as the
one presented here
that reports all fragments of $T$ that are at edit distance at most $k$ from $P$ (in
appropriate batches); recall that $\OccE_k(P,T)$ is only the set
of the starting positions of such fragments. While we think that the $\Oh(k^{4}\cdot n/m)$-time \modelname algorithm of \cite{unified}
can be generalized to report all such fragments, our $\tOh(k^{3.5}\cdot n/m)$-time solution
\emph{does not} seem to generalize.
We remark that Landau, Myers, and Schmidt~\cite{LMS98} showed that all the sought fragments can be listed in $\Oh(nk)$ time;
for this, they adapted the algorithm of~\cite{LandauV89}.

\subsection{Technical Overview}\label{sec:techov}

For a string $P$ (also called a \emph{pattern}), a string $T$ (also called a \emph{text}),
and an integer $k > 0$ (also called a \emph{threshold}), we say that $P$ has a
\emph{$k$-error occurrence} in $T$ at position $v$ if we have
$\ed(P, T\fragmentco{v}{w})\leq k$ for some $w \ge v$.
We write $\OccE_k(P,T)$ to denote the set of the starting positions of $k$-error occurrences of
$P$ in~$T$, that is, $\OccE_k (P,T)\coloneqq  \{v \mid \exists_{w \ge v} \; \ed(P,T\fragmentco{v}{w})\leq k\}$.
We now formally state the \emph{pattern matching with edits} problem.

\problembox{
    $\PMWE(P, T, k)$\\
    {\bf{Input:}} A pattern $P$ of length \(m\), a text $T$ of length \(n\), and a positive integer $k \leq m$.\\
    {\bf{Output:}} The set $\OccE_k(P,T)$.
}

\paragraph*{The {NewPeriodicMatches} Problem}

Let us start with a short exposition of parts of our notation.\footnote{See also \cref{sec:prel}, where we
provide a comprehensive exposition of the notation used throughout this paper, including
those we consider standard. Further, consider the notation tables at the very end of this
paper for a quick reference for the most important notations.}
A string $S$ is \emph{primitive} if it cannot be expressed as $U^y$ for a string $U$ and an integer $y>1$.
For two strings $U$ and $V$, we write $\edp{U}{V} \coloneqq \min
\{\ed(U,V^\infty\fragmentco{0}{j}) \mid j \in \mathbb{Z}_{\ge 0}\}$  to denote the minimum edit
distance between $U$ and any prefix of $V^\infty = V\cdot V \cdots$.
Further, we write $\edl{U}{V} \coloneqq \min\{\ed(U,V^\infty\fragmentco{i}{j}) \mid i, j \in \mathbb{Z}_{\ge 0}, i \le j\}$
to denote the minimum edit distance between $U$ and any substring of~$V^\infty$.

As we explain in \cref{sec:oldalgo,sec:redsm}, a recent algorithm of Charalampopoulos, Kociumaka, and Well\-nitz~\cite{unified}
reduces the \PMWE problem to several instances of the following restricted variant;
the reduction takes $\cO(n/m \cdot k^3)$ time in the \modelname model.

\def\SMproblem{\problembox{
    $\SM(P, T, k, d, Q, \A_P, \A_T)$\\
    {\bf{Input:}} A pattern $P$ of length $m$,
    an integer threshold $k \in \fragment{0}{m}$,
    a positive integer $d\ge 2k$,
    a text~$T$ of length $n \in \fragmentco{m-k}{\ceil{\threehalfs m} + k}$,
    a primitive string $Q$ of length $q \coloneqq |Q| \le {m}/{8d}$, an edit-distance alignment
    $\A_P: P \onto Q^\infty\fragmentco{0}{y_P}$
        of cost $d_P\coloneqq \edl{P}{Q}=\ed(P,Q^*)\le d$, and an edit-distance alignment
        $\A_T : T \onto Q^\infty{\fragmentco{x_T}{y_T}}$ of cost
        $d_T \coloneqq \edl{T}{Q}\le 3d$, where $x_T \in \fragmentco{0}{q}$.\\
    {\bf{Output:}} The set $\OccE_k(P,T)$ represented as $\cO(d^3)$ disjoint arithmetic progressions with difference $q$.
}}
\SMproblem

Specifically, \cite{unified} implies the following reduction.

\begin{restatable}{fact}{reduction}\label{fact:reduction}
    Let \(P\) denote a pattern of length \(m\), let \(T\) denote a text of length \(n\),
    and let \(k\leq m\) denote a positive integer.

    We can compute a representation of the set \(\OccE_k(P,T)\) as $\cO(n/m \cdot k^3)$ disjoint arithmetic progressions with the same
    difference in time
    \(\Oh(n/m \cdot k^3)\) in the \modelname model plus the time required for solving
    several instances $\SM(P_i, T_i, k_i, d_i, Q_i,\A_{P_i},\A_{T_i})$, where $\sum_i |P_i| = \cO(n)$ and, for each~$i$, we have
    $|P_i| \le m$ and $d_i=\ceil{8k/m\cdot |P_i|}$.\lipicsEnd
\end{restatable}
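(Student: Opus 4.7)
The plan is to combine two ingredients from \cite{unified}: a standard overlapping-windows split of the text and the structural dichotomy that, for a pattern and a text of comparable length, either enumerates the $k$-error occurrences explicitly or produces a certificate of approximate periodicity.

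First, I would split $T$ into $\Theta(n/m)$ overlapping fragments $T_i$ of length in $\fragmentco{m-k}{\ceil{\threehalfs m}+k}$, shifted so that every fragment of $T$ of length at most $m+k$ is contained in at least one $T_i$. Since any $k$-error occurrence of $P$ spans at most $m+k$ characters, $\OccE_k(P,T)$ equals the (offset) union of the sets $\OccE_k(P,T_i)$, and overlap-induced duplicates can be pruned at the end. Setting $P_i := P$ and $k_i := k$ for every window, we obtain $\sum_i |P_i| = \Theta(n)$ and $d_i := \ceil{8k/m \cdot |P_i|} = 8k \ge 2k_i$, matching the precondition $d_i \ge 2k_i$ of \SM.

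For each window I then invoke the structural theorem of \cite{unified}: in $\Oh(k^3)$ time in the \modelname model it either (a) lists $\OccE_k(P,T_i)$ directly as $\Oh(k^2)$ positions (each a trivial arithmetic progression), or (b) returns a primitive string $Q_i$ with $|Q_i| \le |P_i|/(8 d_i)$, together with edit-distance alignments $\A_{P,i}: P \onto Q_i^\infty\fragmentco{0}{y_{P,i}}$ of cost at most $d_i$ and $\A_{T_i}: T_i \onto Q_i^\infty\fragmentco{x_{T_i}}{y_{T_i}}$ of cost at most $3 d_i$, with $x_{T_i} \in \fragmentco{0}{|Q_i|}$. Case (b) matches the input format of $\SM(P_i,T_i,k_i,d_i,Q_i,\A_{P,i},\A_{T_i})$ verbatim, and its output---$\Oh(d_i^3) = \Oh(k^3)$ progressions---is exactly in the required form. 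Concatenating per-window outputs gives the claimed $\Oh(n/m \cdot k^3)$ progressions describing $\OccE_k(P,T)$.

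The main obstacle is verifying that the certificates produced in case (b) by \cite{unified} conform exactly to the \SM preconditions---primitivity of $Q_i$, the cap $|Q_i| \le |P_i|/(8 d_i)$, the cost bounds $d_{P,i} \le d_i$ and $d_{T,i} \le 3 d_i$, and the normalization $x_{T_i} \in \fragmentco{0}{|Q_i|}$---which one confirms by tracing the periodicity analysis in that paper. The more general scaling $d_i = \ceil{8k/m \cdot |P_i|}$ in the statement accommodates recursive variants that additionally split the pattern into fragments; the top-level invocation uses $P_i = P$ and hence $d_i = 8k$. Outside the \SM calls themselves, the windowing, the structural analyses, and the output assembly together cost $\Oh(n/m \cdot k^3)$ in the \modelname model, as claimed.
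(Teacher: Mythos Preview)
Your dichotomy ``either list $\OccE_k(P,T_i)$ directly or receive a periodic certificate for $P$'' is not what the structural analysis of \cite{unified} actually delivers, and this is where your argument breaks. The analysis (\cref{prp:EIalg}) has \emph{three} outcomes: (a) $2k$ disjoint breaks, (b) a collection of disjoint repetitive regions $H_1,\ldots,H_r\preceq P$ each with its own approximate period $Q_i$, or (c) a single approximate period for all of $P$. Only case (a) lets you list the occurrences directly (\cref{lm:impEdA}), and only case (c) fits your branch ``(b)''. In case (b), however, one must run \texttt{PeriodicMatches} on each region $H_i$ with threshold $k_i=\lfloor 4k/m\cdot |H_i|\rfloor$ and $d_i=\lceil 8k/m\cdot |H_i|\rceil$ (\cref{lm:impEdB}); these are precisely the \SM instances with $|P_i|<m$ and the non-trivial scaling $d_i=\lceil 8k/m\cdot |P_i|\rceil$. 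They are not ``recursive variants'' but part of the base reduction, and without them the bound $\sum_i |P_i|=\cO(n)$ with general $|P_i|\le m$ has no content.

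Two smaller points. First, going from \texttt{PeriodicMatches}$(P_i,T,k_i,d_i,Q_i)$ to an actual \SM instance requires trimming the text so that $\edl{T_i}{Q_i}\le 3d_i$ (\cref{lem:erelevant}), rotating $Q_i$ so that $\edl{P_i}{Q_i}=\edp{P_i}{Q_i}$ (\cref{lem:witness}), and then explicitly computing the alignments $\A_{P_i},\A_{T_i}$ (\cref{lem:apat}); this is $\cO(d_i^2)$ extra work per instance that you skip. Second, you do not argue that the arithmetic progressions across different windows share a common difference; the paper handles this by noting that the difference is determined by $P$ and $k$ alone (it is $|Q|$ in case (c) and arbitrary otherwise), hence identical across all $T_j$.
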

\def\redlend{1}

\begin{remark}
    In the case where \cref{fact:reduction} is applied to an
    instance of the \PMWE problem such that the pattern $P$ is approximately periodic,
    the input $(P_i, T_i, k_i, d_i, Q_i,\A_{P_i},\A_{T_i})$ to each produced instance of $\SM$
    satisfies the following conditions: $P_i=P$, $T_i$ is a fragment of $T$, $k_i = k$, and $d = \cO(k)$.
    For the purposes of this technical overview, one can focus solely on that case.
    \lipicsEnd
\end{remark}

\SetKwFunction{PerMat}{PeriodicMatches}

Using the algorithm \PerMat{$P$, $T$, $k$, $d$, $Q$} of {\cite[Lemma 6.11]{unified}}
to solve the \SM problem in $\cO(d^4)$ time in the \modelname model,
the time (in the \modelname model) required for solving all instances of the \SM problem
that are generated by \cref{fact:reduction} is
\[\sum_i \Oh(d_i^4) = \sum_i \Oh(k^4/m^4 \cdot |P_i|^4) = \sum_i
\Oh(k^4/m \cdot |P_i|) = \Oh(n/m \cdot k^4).\]
In particular, we can reinterpret the \(\Oh(n/m\cdot k^4)\)-time algorithm of~\cite{unified}
for the \PMWE problem as a combination of \cref{fact:reduction} and \PerMat{$P$, $T$, $k$,
$d$, $Q$}.
Our main contribution is the following faster algorithm for the \SM problem.

\begin{restatable}[\texttt{NewPeriodicMatches($P$, $T$, $k$, $d$, $Q$, $\A_P$, $\A_T$)}]{lemma}{newsm}\label{lem:sthold}
    We can solve the \SM problem in $\cO(d^{3.5} \sqrt{\log n \log d})$ time
    in the \modelname model.\lipicsEnd
\end{restatable}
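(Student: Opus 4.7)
The plan is to reduce the $\SM$ problem to a new dynamic problem $\DPM$, and then solve the latter using Tiskin's seaweed monoid machinery for distance products of unit-Monge permutation matrices. The overall target is a running-time bound whose dominant cost $d^{3.5}$ comes from roughly $d^2$ ``atomic'' composition events, each touching a seaweed permutation of order $O(d)$ at a per-event cost of $\tOh(d^{1.5})$, with a $\sqrt{\log n \log d}$ factor arising from balancing a logarithmic composition tree.

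First, I would exploit the approximate periodicity of both inputs. The alignment $\A_P\colon P\onto Q^\infty\fragmentco{0}{y_P}$ of cost $\edl{P}{Q}\le d$ splits $P$ into $O(d)$ maximal \emph{regular blocks} (each matched exactly against a run of $Q$) interleaved with $O(d)$ \emph{irregular blocks} localized at the edit positions, and similarly for $T$ via $\A_T$, which now carries $O(d)$ irregular regions spread across the text. For a candidate start $v\in\OccE_k(P,T)$, the optimal alignment of $P$ against the relevant fragment of $T$ can be normalized so that, outside a bounded neighborhood of these irregular regions, it follows the canonical $Q$-periodic matching. Consequently, the cost is governed by how the $O(d)$ pattern puzzle pieces slot into the $O(d)$ text puzzle pieces near $v$, and the arithmetic-progression structure of the output (with common difference $q$) reflects the fact that shifting $v$ by one full period of $Q$ changes only a constant number of puzzle pieces.

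Second, I would formalize this as $\DPM$: maintain an implicit matrix $M$ of order $N=O(d)$ whose entry $M[i,j]$ records the minimum cost over alignments of a fixed suffix of $P$ from cut-point $i$ with a fragment of $T$ ending at cut-point $j$, for cut-points inside the current active window. Classical semi-local edit-distance results imply that $M$ is unit-Monge, so it is stored implicitly as a \emph{seaweed} permutation of order $O(d)$. As $v$ advances by $q$, $M$ is updated by composing it with a constant number of small seaweed matrices describing the block that leaves the window and the block that enters; a query at a fixed $v$ reduces to a range-minimum query in a band of $M$, yielding one $k$-error decision. Batching such queries across each of the $O(d^3)$ output progressions then produces the full answer within the claimed budget once Tiskin's primitives for composing and querying unit-Monge seaweeds are plugged into a balanced composition tree over the window of $T$.

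The principal obstacle will be proving that throughout the algorithm the relevant seaweed really has order $O(d)$, rather than order $\Omega(|P|/q)$ coming from the many $Q$-periods. Concretely, I must establish a structural lemma saying that any maximal purely-periodic stretch of pattern or text contributes only a trivial ``shift'' block to the seaweed, representable and composable in constant space and time; only the $O(d)$ irregular blocks and their immediate $Q$-periodic surroundings contribute non-trivially. This in turn will rely on the assumption $q\le m/(8d)$ and on the Periodicity Lemma to rule out deviant alignments that would disrupt the canonical matching far from the edit positions. Once this invariant is secured, combining it with Tiskin's $\tOh(N^{1.5})$ seaweed composition and the logarithmic composition tree gives the advertised $\Oh(d^{3.5}\sqrt{\log n \log d})$ bound.
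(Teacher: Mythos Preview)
Your plan has the right ingredients (puzzle pieces, seaweed matrices, band restriction), but the cost accounting is wrong and the key technical step is missing.

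First, the arithmetic. You claim $d^{3.5}$ arises as $d^2$ events times $\tOh(d^{1.5})$ per event using Tiskin's seaweed composition. Tiskin's composition is $\Oh(N\log N)$, not $\tOh(N^{1.5})$ (see \cref{thm:prod}), so each update on a width-$\Oh(d)$ seaweed costs $\tOh(d)$. With $\Oh(d^2)$ events you would therefore get $\tOh(d^3)$, not $\tOh(d^{3.5})$---but in fact you cannot achieve only $\Oh(d^2)$ events. As the window shifts by one period, a special piece from $P$ and a special piece from $T$ move past each other, and the run of canonical pairs between them changes length by one. Over all $\Oh(d^2)$ such pairs and all shifts, this gives $\Oh(d^3)$ updates (this is exactly \cref{lem:swaps-proto} with $\alpha=k+1$), yielding only $\tOh(d^4)$, which is the old bound.

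The missing step is the heavy/light partition (\cref{sec:marking,sec:faster}). The paper places marks on each text position $v$ proportional to the potential ``savings'' an optimal alignment $P\onto T\fragmentco{v}{w}$ can achieve by aligning locked fragments of $P$ against locked fragments of $T$ rather than against pure $Q$-periodicity. Positions with at least $\eta$ marks are \emph{heavy}; there are few enough of them that they can be handled by direct verification in $\Oh(d^4/\eta)$ time (\cref{lem:heavy-total}). For \emph{light} positions, the bounded savings force any cost-$k$ alignment to match exactly within every run of $\ge 2\eta+\Oh(1)$ plain pairs (\cref{lem:nofakelight}); hence runs can be trimmed to length $\Oh(\eta)$ rather than $k+1$. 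This cuts the number of $\DPM$ updates to $\Oh(d^2\eta)$, each costing $\Oh(d\log z\log d)$ via \cref{thm:dpm}, for a total of $\tOh(d^3\eta\log n\log d)$. Balancing $d^4/\eta$ against $d^3\eta\log n\log d$ gives $\eta=\Theta(\sqrt{d/(\log n\log d)})$ and the claimed $\Oh(d^{3.5}\sqrt{\log n\log d})$.

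Your ``principal obstacle'' paragraph worries about the seaweed order blowing up because of long periodic stretches. That issue is handled almost directly by the band restriction (\cref{lem:prod,lem:puzzle}): each piece contributes a permutation of width $\Oh(\Delta)=\Oh(d)$ regardless of the piece's length. The real bottleneck is not the width but the \emph{number} of non-trivial updates across all shifts, and that is what the marking scheme addresses.
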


By combining \cref{fact:reduction,lem:sthold}, we obtain \cref{thm:edalgII}.

\def\edalgIt{1}
\edalgII*
\begin{proof}
    By \cref{fact:reduction}, in \(\Oh(n/m \cdot k^3)\) time,
    we can reduce the \PMWE problem to several instances $\SM(P_i, T_i, k_i, d_i, Q_i,\A_{P_i},\A_{T_i})$, where $\sum_i |P_i| = \cO(n)$ and, for each~$i$, we have
    $|P_i| \le m$ and $d_i=\ceil{8k/m\cdot |P_i|}$.
    By \cref{lem:sthold}, the time required for solving all of the obtained instances
    (in the \modelname model) is
    \begin{align*}
        \sum_i \Oh(d_i^{3.5} \sqrt{\log |T_i| \log d_i}) & = \sum_i \Oh(k^{3.5}/m^{3.5} \cdot |P_i|^{3.5} \sqrt{\log m \log k}) \\
                                                         & = \sum_i \Oh(k^{3.5}/m \cdot |P_i| \sqrt{\log m \log k})\\
                                                         & = \Oh(n/m \cdot k^{3.5} \sqrt{\log m \log k}).\qedhere
    \end{align*}
\end{proof}

\paragraph*{A Fast Algorithm for the {NewPeriodicMatches} Problem}

We continue with a high-level description of the algorithm that underlies \cref{lem:sthold}.
In what follows, for simplicity, we assume that $m/k \gg q \gg k$ and
that both $Q^\infty\fragmentco{0}{y_P}$ and $Q^\infty{\fragmentco{x_T}{y_T}}$ are powers of $Q$.

\subparagraph*{A First Solution via the {DynamicPuzzleMatching} Problem.}

Let us first discuss how the (almost) periodicity of
\(P\) and \(T\) yields a simple way to \emph{filter out} many potential starting positions of $k$-error
occurrences.

As an introductory example, suppose that \(P\) and \(T\) are perfectly periodic with
period \(Q\), that is,
\(P = Q^{\infty}\fragmentco{0}{m}\) and
\(T = Q^{\infty}\fragmentco{0}{n}\).
Observe that in this special case, \(\A_P\) and \(\A_T\)
are cost-0 alignments (that is, \(d_T = d_P = 0\)), and we have \(m = y_P\), \(0 = x_T\),
and \(n = y_T\).
Next, we argue that all $k$-error occurrences of $P$ in $T$ start around the positions in \(T\) where an exact
occurrence of \(Q\) starts, that is, in the intervals \(\fragment{j q - k}{jq + k}\) for \(j \in \mathbb{Z}\).\footnote{Under our earlier assumption  that \(q \gg k\), this claim indeed allows for filtering out some positions where no occurrence may start
as we have \(jq + k \ll (j+1)q - k\) in that case.}
To that end, observe that, for any alignment of cost at most $k$ mapping $P$ to a fragment $T\fragmentco{v}{w}$ of $T$,
at least one of the copies of $Q$ that comprise $P$ must match exactly; otherwise the edit distance would be much larger than $k$.
Suppose that the $i$-th copy of $Q$, that is, $P\fragmentco{iq}{(i+1)q}$, is matched exactly.
As $Q$ is primitive and hence does not match any of its non-trivial rotations,
$P\fragmentco{iq}{(i+1)q}$ must be matched with
a fragment $T\fragmentco{i'q}{(i'+1)q}$ of $T$. As the entire alignment makes at most $k$ insertions and deletions,
this implies that $v \in \fragment{(i'-i)q-k}{(i'-i)q+k}$.

Now, the strings \(P\) and \(T\) are only \emph{almost} periodic---in particular, the
edits in \(\A_T\) and \(\A_P\) may widen the intervals of potential starting positions,
albeit only by a $d_T+d_P \leq 4 d$ additive term. Since $k<d$, we have
$\OccE_k \subseteq \bigcup_{j\in \mathbb{Z}} \fragment{j q - 5d}{j q + 5d}$.
Hence, for each $j$, we define a fragment $R_j = T\fragmentco{r_j}{r'_j}$ of $T$ that is of length $m+\cO(d)$
and, in the considered instance, is \emph{responsible} for capturing $k$-error occurrences of $P$ in $T$ that start in $\fragment{j q - 5d}{j q + 5d}$;
specifically, we have $r_j + \OccE_k(P,R_j) \supseteq \OccE_k(P,T) \cap \fragment{j q - 5d}{j q + 5d}$.
In addition, we identify a set $J \sub \mathbb{Z}$ of size $\cO(m/q)$ such that
$\OccE_k(P,T) = \bigcup_{j \in J} \big(r_j + \OccE_k(P,R_j) \big)$.

Our goal is to compute occurrences of $P$ in each $R_j$ separately.
To that end, observe that both  $P$ and all $R_j$s essentially decompose into
(possibly slightly ``edited'') copies of $Q$.
In particular, for $j,j+1 \in J$, we can obtain $R_{j+1}$ from \(R_{j}\)
by replacing $\cO(d)$ such ``edited'' copies.
As a first step toward capturing the notions of \(P\) and \(R_j\) decomposing into \emph{pieces} and our algorithm
\emph{replacing} pieces of \(R_j\), we define \emph{$\Delta$-puzzles};
consult \cref{fig:puzzles} for a visualization of an example of a \(\Delta\)-puzzle.

\begin{definition}
    For a $\Delta \in \Zz$, we say that $z\ge 2$ strings $S_1,\ldots,S_z$ form a
    \emph{$\Delta$-puzzle} if
    \begin{itemize}
        \item $|S_i|\ge \Delta$ for each $i\in \fragment{1}{z}$, and
        \item $S_i\fragmentco{|S_i|-\Delta}{|S_i|} = S_{i+1}\fragmentco{0}{\Delta}$ for
            each $i\in \fragmentco{1}{z}$.
    \end{itemize}
    The \emph{value} of the puzzle is  $\val_{\Delta}(S_1,\ldots,S_z)\coloneqq S_1
    \cdot S_2\fragmentco{\Delta}{|S_2|}\cdot S_3\fragmentco{\Delta}{|S_3|}\cdots
    S_z\fragmentco{\Delta}{|S_z|}$.
    \lipicsEnd
\end{definition}

\begin{figure}[t!]
    \centering
    \begin{tikzpicture}

\node[label = {left: $S$}]  at (0,0.2) {};
\draw (0,0) rectangle (11.5,0.4);
\foreach \x/\y in {0.2/$\strut\texttt{a}$, 0.5/$\strut\texttt{p}$, 0.8/$\strut\texttt{p}$, 1.1/$\strut\texttt{r}$, 1.4/$\strut\texttt{o}$, 1.7/$\strut\texttt{x}$, 2/$\strut\texttt{i}$, 2.3/$\strut\texttt{m}$, 2.6/$\strut\texttt{a}$, 2.9/$\strut\texttt{t}$, 3.2/$\strut\texttt{e}$, 3.5/$\strut\texttt{p}$, 3.8/$\strut\texttt{a}$, 4.1/$\strut\texttt{t}$, 4.4/$\strut\texttt{t}$, 4.7/$\strut\texttt{e}$, 5/$\strut\texttt{r}$, 5.3/$\strut\texttt{n}$, 5.6/$\strut\texttt{m}$, 5.9/$\strut\texttt{a}$, 6.2/$\strut\texttt{t}$, 6.5/$\strut\texttt{c}$, 6.8/$\strut\texttt{h}$, 7.1/$\strut\texttt{i}$, 7.4/$\strut\texttt{n}$, 7.7/$\strut\texttt{g}$, 8/$\strut\texttt{e}$, 8.3/$\strut\texttt{d}$, 8.6/$\strut\texttt{i}$, 8.9/$\strut\texttt{t}$, 9.2/$\strut\texttt{d}$, 9.5/$\strut\texttt{i}$, 9.8/$\strut\texttt{s}$, 10.1/$\strut\texttt{t}$, 10.4/$\strut\texttt{a}$, 10.7/$\strut\texttt{n}$, 11/$\strut\texttt{c}$, 11.3/$\strut\texttt{e}$}{
                    	\node[anchor=base, label = {\y}]  at (\x,-.25) {};
                	}
                
\begin{scope}[yshift=0.7cm]
                    \foreach \x/\y in {0.2/$\strut\texttt{a}$, 0.5/$\strut\texttt{p}$, 0.8/$\strut\texttt{p}$, 1.1/$\strut\texttt{r}$, 1.4/$\strut\texttt{o}$, 1.7/$\strut\texttt{x}$, 2/$\strut\texttt{i}$, 2.3/$\strut\texttt{m}$}{
                    	\node[anchor=base, label = {above: \y}]  at (\x,-0.25) {};
                	}
                	\node[label = {left: $S_1$}]  at (0,0.2) {};
                    \draw (0,0) rectangle (2.5,0.4);                    
\end{scope}
                	
\begin{scope}[yshift=1.4cm]
					\foreach \x/\y in {1.4/$\strut\texttt{o}$, 1.7/$\strut\texttt{x}$, 2/$\strut\texttt{i}$, 2.3/$\strut\texttt{m}$, 2.6/$\strut\texttt{a}$, 2.9/$\strut\texttt{t}$}{
                    	\node[anchor=base, label = {above: \y}]  at (\x,-0.25) {};
                	}    
                	\node[label = {left: $S_2$}]  at (1.2,0.2) {};
                    \draw (1.2,0) rectangle (3.1,0.4);
\end{scope}            	

\begin{scope}[yshift=2.1cm]
					\foreach \x/\y in {2/$\strut\texttt{i}$, 2.3/$\strut\texttt{m}$, 2.6/$\strut\texttt{a}$, 2.9/$\strut\texttt{t}$, 3.2/$\strut\texttt{e}$, 3.5/$\strut\texttt{p}$, 3.8/$\strut\texttt{a}$, 4.1/$\strut\texttt{t}$, 4.4/$\strut\texttt{t}$, 4.7/$\strut\texttt{e}$}{
                    	\node[anchor=base, label = {above: \y}]  at (\x,-0.25) {};
                	}    
                	\node[label = {left: $S_3$}]  at (1.8,0.2) {};
                    \draw (1.8,0) rectangle (4.9,0.4);
\end{scope}            	

\begin{scope}[yshift=2.8cm]                	
                	\foreach \x/\y in {3.8/$\strut\texttt{a}$, 4.1/$\strut\texttt{t}$, 4.4/$\strut\texttt{t}$, 4.7/$\strut\texttt{e}$, 5/$\strut\texttt{r}$, 5.3/$\strut\texttt{n}$, 5.6/$\strut\texttt{m}$, 5.9/$\strut\texttt{a}$, 6.2/$\strut\texttt{t}$, 6.5/$\strut\texttt{c}$, 6.8/$\strut\texttt{h}$}{
                    	\node[anchor=base, label = {above: \y}]  at (\x,-0.25) {};
                	}     
                	\node[label = {left: $S_4$}]  at (3.6,0.2) {};
                    \draw (3.6,0) rectangle (7,0.4);
\end{scope}            	       	

\begin{scope}[yshift=2.1cm]                	
                	\foreach \x/\y in {5.9/$\strut\texttt{a}$, 6.2/$\strut\texttt{t}$, 6.5/$\strut\texttt{c}$, 6.8/$\strut\texttt{h}$}{
                    	\node[anchor=base, label = {above: \y}]  at (\x,-0.25) {};
                	}
                	\node[label = {right: $S_5$}]  at (7,0.2) {};
                    \draw (5.7,0) rectangle (7,0.4);
\end{scope}            	

\begin{scope}[yshift=1.4cm]                	
                	\foreach \x/\y in {5.9/$\strut\texttt{a}$, 6.2/$\strut\texttt{t}$, 6.5/$\strut\texttt{c}$, 6.8/$\strut\texttt{h}$, 7.1/$\strut\texttt{i}$, 7.4/$\strut\texttt{n}$, 7.7/$\strut\texttt{g}$, 8/$\strut\texttt{e}$, 8.3/$\strut\texttt{d}$, 8.6/$\strut\texttt{i}$, 8.9/$\strut\texttt{t}$}{
                    	\node[anchor=base, label = {above: \y}]  at (\x,-0.25) {};
                	}
                	\node[label = {right: $S_6$}]  at (9.1,0.2) {};
                    \draw (5.7,0) rectangle (9.1,0.4);
\end{scope}            	                	

\begin{scope}[yshift=0.7cm]
                	\foreach \x/\y in {8/$\strut\texttt{e}$, 8.3/$\strut\texttt{d}$, 8.6/$\strut\texttt{i}$, 8.9/$\strut\texttt{t}$, 9.2/$\strut\texttt{d}$, 9.5/$\strut\texttt{i}$, 9.8/$\strut\texttt{s}$, 10.1/$\strut\texttt{t}$, 10.4/$\strut\texttt{a}$, 10.7/$\strut\texttt{n}$, 11/$\strut\texttt{c}$, 11.3/$\strut\texttt{e}$}{
                    	\node[anchor=base, label = {above: \y}]  at (\x,-0.25) {};
                	}
                	\node[label = {right: $S_7$}]  at (11.5,0.2) {};
                    \draw (7.8,0) rectangle (11.5,0.4);
\end{scope}            	
\end{tikzpicture}
    \caption{$S_1, \ldots, S_7$ is a $4$-puzzle whose value is $S$.}
    \label{fig:puzzles}
\end{figure}

In \cref{sec:DPMtile}, we define \emph{pieces} $P_1, \ldots, P_z$ and $T_{j,1}, \ldots
,T_{j,z}$ (for each \(j \in J\)) that form $\Delta$-puzzles with values $P$ and $R_j$,
respectively, where $\Delta \coloneqq 6(d_P+d_T+k)$.
Let us intuitively describe these pieces.\footnote{This description provides an oversimplified definition of pieces. In particular, as defined in \cref{sec:DPMtile}, $P_1$ covers at least 2 tiles whereas $P_z$ covers 17 tiles. This is due to complications arising without the assumption $q \gg k$.}
First, let us partition both $P$ and $T$ into \emph{tiles}, that is,
maximal fragments that are aligned to different copies of $Q$ by $\A_P$ and $\A_T$, respectively.
Observe that all but $\cO(d)$ tiles are exact copies of $Q$.
Further, the endpoints $R_j$ are $\cO(d)$ positions apart from tile boundaries.
We then obtain an induced partition for $R_j$ by extending the first and last tiles that it fully contains by $\cO(d)$ positions.
Finally, we extend all tiles of the partition of $P$ and the induced partition of $R_j$, other than the trailing ones, by $\Delta$ characters to the right.
Consult \cref{fig:tiles} for a visualization of this setting.

\begin{figure}[t!]
\centering
\includegraphics[width=.95\textwidth]{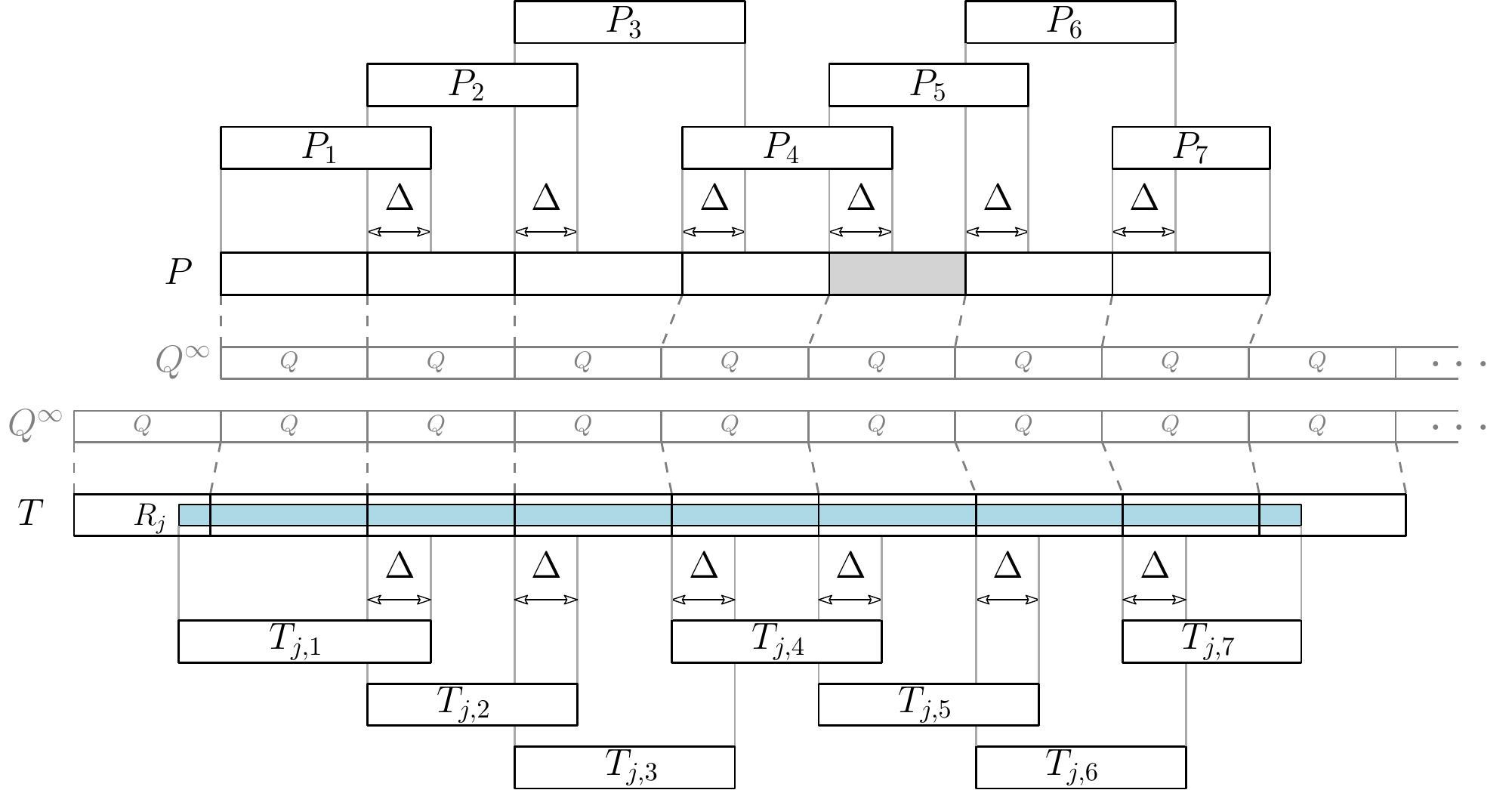}
\caption{
An alignment $\A_P: P \onto Q^7$ and an alignment $\A_T: P \onto Q^9$ are shown.
Both $P$ and $T$ are partitioned into tiles.
Specifically, dashed lines indicate the copy of $Q$ to which a tile of $P$ (or~$T$) is
aligned by $\A_P$ (or~$\A_T$).
For example, $\A_P$ aligns the shaded tile of $P$ with the fifth copy of $Q$.
The fragment $R_j$ starts $\cO(d)$ positions prior to the start of the second tile of $T$ and ends $\cO(d)$ positions after the end of the eighth tile.
The pieces $P_1, \ldots , P_7$ and $T_{j,1}, \ldots, T_{j,7}$ form $\Delta$-puzzles with values $P$ and $R_j$, respectively.}\label{fig:tiles}
\end{figure}

We call pieces $P_2, \ldots, P_{z-1}$ and  $T_{j,2}, \ldots, T_{j,z-1}$ (for \(j \in J\)) \emph{internal}.
Observe that, for each $i$, all internal pieces of the form $T_{j,i'}$ with \(j + i' = i\)
coincide; that is, overlapping parts of different \(R_j\)s share their internal pieces.
Hence, for each $i \in \fragmentoo{\min J +1}{\max J +z}$, we define $T_i \coloneqq T_{j,i'}$ for any $j \in J$ and
$i' \in \fragmentoo{1}{z}$ with $j+i'=i$.
This is an essential property for our approach to work: when
moving from \(R_j\) to \(R_{j + 1}\), we exploit that we need to only shift the pieces
\(T_i\), and not recompute them altogether.

Now, suppose that we can efficiently maintain a pair of $\Delta$-puzzles so that we can at any time
efficiently query for the $k$-error occurrences of the value of the first puzzle in the value of the second one.
Then, as a warm-up solution, we can initialize the two puzzles as $P_1, \ldots , P_z$ and $T_{\min J,1}, \ldots, T_{\min J, z}$
and then replace pieces of the second puzzle as necessary in order to iterate over puzzles $T_{j,1}, \ldots, T_{j, z}$ for all $j \in J$.
In fact, our final algorithm iterates over carefully trimmed versions of such puzzles,
where we omit \emph{plain} pieces that do not contribute to the solution set in an
interesting manner.
Formally, we capture the problem of maintaining such a pair of puzzles with the \DPM
problem.

\vspace{2mm}
\noindent\fbox{
    \begin{minipage}{0.96\textwidth}
        $\DPM(k, \Delta, \Sb, \Sm, \Sf)$\\ 
        {\bf{Input:}} Positive integers $k$ and $\Delta$,
        as well as string families $\Sb$, $\Sm$, and $\Sf$ of \emph{leading}, \emph{internal}, and \emph{trailing} pieces, respectively.\\
        {\bf{Maintained object:}} A sequence $\I=(U_1,V_1)(U_2,V_2)\cdots(U_z,V_z)$ of
        ordered pairs of strings (a \emph{DPM-sequence}), that additionally satisfies
        the following two conditions:
        \begin{enumerate}[(a)]
            \item $U_1,V_1 \in \Sb$, $U_z, V_z  \in \Sf$,
                and, for all $i\in \fragmentoo{1}{z}$, $U_i,V_i \in \Sm$,\label{it:orig}
            \item The \torn \(\tor(\I) \coloneqq \sum_{i=1}^z \big||U_i|-|V_i|\big|\)
                satisfies \(\tor(\I) \le \Delta/2 - k\).
                \label{it:unique}
        \end{enumerate}
        {\bf{Update operations:}}
        \begin{itemize}
            \item {\tt DPM-Delete(\(i\))}: Delete the $i$-th pair of strings.
            \item {\tt DPM-Insert($(U',V')$, \(i\))}:
                Insert the pair of strings $(U',V')$ after the $i$-th pair of strings.
            \item {\tt DPM-Substitute($(U',V')$, \(i\)):}
                Substitute the $i$-th pair of strings with the pair of strings $(U',V')$.
        \end{itemize}
        It is assumed that $\I$ satisfies conditions (\ref{it:orig}) and (\ref{it:unique})
        at initialization time and after each update.\\
        {\bf{Query ({\tt DPM-Query}):}} Return
        $\OccE_k(\I) \coloneqq
        \OccE_k(\val_{\Delta}(U_1,\ldots, U_z), \val_{\Delta}(V_1,\ldots, V_z))$
        under a promise that $U_1, \ldots, U_z$ and  $V_1, \ldots, V_z$ are $\Delta$-puzzles.
    \end{minipage}
}
\vspace{2mm}

\newcommand{\Sr}{\hat{S}}
We move on to our main result for the \DPM problem.
For a precise statement, we need to be able to
quantify the complexity of the input families of strings;
formally we define the \emph{median edit distance}
of a family $\S$ of strings over an alphabet $\Sigma$
as \(\ed(\S) \coloneqq \min_{\Sr\in \Sigma^*} \sum_{S\in \S}\ed(S,\Sr)\).
Now, our result reads as follows.


 \begin{restatable}{theorem}{dpm}\label{thm:dpm}
     There is a data structure for \(\DPM(k, \Delta, \Sb, \Sm, \Sf)\)
     with $\Oh(\Delta \log z \log \Delta)$-time updates and queries,
     $\Oh(\Delta z \log \Delta)$-time initialization,
     and $\Oh((d^3+\Delta^2 d)\log^2 (d+\Delta))$-time preprocessing,
     where $d=\ed(\Sb)+\ed(\Sm)+\ed(\Sf)$.\footnote{Recall that \(z\) is
         the length of the DPM-sequence that we maintain in the data
     structure.}
    \ifx\dpmlend\undefined\lipicsEnd\fi
\end{restatable}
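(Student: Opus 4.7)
}

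The plan is to reduce the maintenance of the DPM-sequence to maintaining a balanced binary tree of \emph{seaweed permutations} in Tiskin's algebraic framework for DIST matrices. Concretely, for every pair $(U_i,V_i)$ I would assign an $\cO(\Delta)$-size permutation $\pi_i$ that captures the relevant boundary-to-boundary edit distances: the $\Delta$-puzzle overlap picks out $\cO(\Delta)$ alignment entry/exit coordinates on each side, and the tortion bound $\tor(\I) \le \Delta/2 - k$ confines the relevant alignments to a corridor of width $\cO(\Delta)$, so the DIST matrix of $(U_i,V_i)$ is anti-Monge of rank $\cO(\Delta)$ and, by Tiskin's theorem, is faithfully represented by $\pi_i$. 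The crucial algebraic fact is that seaweed multiplication $\pi_i \boxtimes \pi_{i+1}$ realizes the concatenation of two adjacent sub-puzzles while matching the shared $\Delta$-overlap at zero cost; one checks this by pre- and post-composing each $\pi_i$ with the identity on the $\Delta$ shared coordinates. Two such permutations can be multiplied in $\cO(\Delta \log \Delta)$ time by Tiskin's seaweed multiplication algorithm.

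Given these building blocks, the data structure is a balanced binary tree of $z$ leaves, with each internal node storing the seaweed product of its two children. A \texttt{DPM-Substitute}, \texttt{DPM-Insert}, or \texttt{DPM-Delete} touches $\cO(\log z)$ ancestors of a single leaf, each requiring one $\cO(\Delta \log \Delta)$-time seaweed multiplication; this yields the $\cO(\Delta \log z \log \Delta)$ per-update bound. A \texttt{DPM-Query} reads $\OccE_k(\I)$ off the root permutation: the $k$-error occurrences of the pattern at position $v$ correspond to entries of cost at most $k$ in a specific slice of the root DIST matrix, which Tiskin's indexing data structure lets me enumerate in the same $\cO(\Delta \log z \log \Delta)$ budget, noting that under the tortion invariant such entries number $\cO(\Delta)$. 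A bottom-up sweep initializes the tree in $\cO(\Delta z \log \Delta)$ time.

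The preprocessing focuses on the families $\Sb$, $\Sm$, $\Sf$. The idea is to pick a median string $\hat{S}$ per family (the minimizer witnessing $\ed(\S)$) and compute its permutation $\pi_{\hat{S}}$ once; for every other piece $S$ in the family I then derive $\pi_S$ from $\pi_{\hat{S}}$ by replaying the $\ed(S,\hat{S})$ edits as local perturbations of the permutation via Tiskin's incremental DIST updates. The $\Delta^{2} d$ term accounts for initially building the three base permutations on a $\Delta$-wide corridor while paying linearly in $d$ for their preparation, and the $d^{3}$ term accounts for propagating a total of $d$ edits across all pieces with Tiskin's $\cO(d^{2})$-per-edit seaweed update; the $\log^{2}(d+\Delta)$ factor is the overhead of the persistent geometric structure storing the permutations together with their inverses for later retrieval.

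The main obstacle I anticipate is the careful bookkeeping around the $\Delta$-overlap during a seaweed product: one must show that, after identifying the shared boundary coordinates of two adjacent pieces, the product $\pi_i \boxtimes \pi_{i+1}$ still represents the DIST matrix of $\val_\Delta(U_i,U_{i+1})$ against $\val_\Delta(V_i,V_{i+1})$ restricted to the corridor, and that the corridor stays of width $\cO(\Delta)$ after each update by virtue of the tortion invariant. A secondary difficulty is that $\Sb$, $\Sm$, $\Sf$ may contain pieces of varying lengths, so the $\Delta$-normalization of their permutations requires a uniform coordinate system; this is achievable with the median-alignment approach but needs a formal statement that piecewise edits commute with seaweed composition modulo the coordinate change.
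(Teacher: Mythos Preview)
Your overall architecture---a balanced binary tree of $\cO(\Delta)$-size seaweed permutations, one per pair $(U_i,V_i)$, with $\cO(\Delta\log\Delta)$-time products and $\cO(\log z)$ nodes recomputed per update---is exactly what the paper does. But two of the steps you wave at are where the real work lies, and your descriptions of them do not go through as written.

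\emph{Handling the $\Delta$-overlap.} The seaweed product stitches alignment graphs along a shared \emph{boundary}; it models concatenation, not overlap. There is no way to ``match the shared $\Delta$-overlap at zero cost'' by composing with identities on shared coordinates---the overlapping rows of $U_i$ and $U_{i+1}$ live in different alignment graphs and cannot simply be identified. The paper's fix is to \emph{trim} each $U_i$ to $U'_i$ so that $U=U'_1\cdots U'_z$ is a genuine concatenation; then $P_{V,U}$ factors as a product of matrices $P_{V,U'_i}$ (shifted), each of which still involves the full string $V$. To cut these down to size $\cO(\Delta)$, the paper introduces a new \emph{restriction} operator $A\mapsto A|_I$ on permutation matrices (keeping only the band $I$ of diagonals), proves that it distributes over the seaweed product, and shows that replacing each factor by $P_{V_i,U'_i}|_{I_i}$ for a local band $I_i$ of width $\Delta$ still yields $P_{V,U}|_I$ after one final restriction (Lemma~\ref{lem:puzzle}). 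The linear-time algorithm computing $A|_I$ from $A$ (Lemma~\ref{lem:restrict}) is a nontrivial ingredient you are missing.

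\emph{Preprocessing long pieces.} You propose to build $\pi_{\hat S}$ for a median $\hat S$ and then replay edits. But the object you need, $P_{V_i,U'_i}|_{I_i}$, depends on a \emph{pair} of strings, and more importantly the pieces can be arbitrarily long---only the median edit distance is $\cO(d)$---so even a single base permutation cannot be built from scratch within the budget. The paper (Lemma~\ref{lem:prepr}) further decomposes each piece into $\cO(d)$ \emph{parts}: long ``perfect'' parts that are identical across the whole family (one $\tOh(\Delta^2)$ banded computation each), and ``imperfect'' parts of length $\cO(\Delta+d)$ on which the dynamic DIST algorithm of~\cite{CKM20} is run across all pairs $(X,Y)$. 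The part-level restricted matrices are then recombined per piece by yet another seaweed product. Your ``$\cO(d^2)$-per-edit seaweed update'' primitive does not exist in Tiskin's framework, and without the perfect/imperfect split the $\tOh(d^3+d\Delta^2)$ preprocessing bound is not reachable.
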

\def\dpmlend{1}

Let us defer a detailed discussion of \cref{thm:dpm} (proved in \cref{sec:DPM_sol}) to the end of this overview.
Here, we discuss its application to the \SM problem with the following string families.
\begin{align*}
    \Sb &\coloneqq \{P_1\} \cup \{T_{j,1} \mid j\in J\},\\
    \Sm &\coloneqq \{P_{i} : i\in \fragmentoo{1}{z}\}\;\cup \{T_{i} : i \in \fragmentoo{\min J +1}{\max J +z} \},\\
    \Sf &\coloneqq \{P_z\} \cup \{T_{j,z} : j\in J\}.
\end{align*}
Next, we define multisets \(\sp{P}\), \(\sp{T}\), and \(\spb{T}\) of \emph{special pieces}---in this
overview, we focus on the two former multisets.
In our fixed instance, we have
\begin{align*}
    \sp{P} & = \{P_i : i \in \fragmentoo{1}{z} \text{ and } P_i \neq Q^\infty\fragmentco{0}{q+\Delta}\}, \\
    \sp{T} & = \{T_i :i \in \fragmentoo{1 + \min J}{z + \max J} \text{ and } T_i \neq Q^\infty\fragmentco{0}{q+\Delta}\}.
\end{align*}
As mentioned earlier, there are only very few special pieces---crucially, we show the
following lemma.

\begin{restatable}{lemma}{spbndsimpl}\label{lem:specialbound-simpl}
    The median edit distance of each of the families \(\Sb\), \(\Sm\), and $\Sf$ is bounded
    by \(\Oh(d)\).

    Further, each of the multisets \(\sp{P}\),
    \(\sp{T}\), and \(\spb{T}\) is of size \(\Oh(d)\) and can be computed
    in \(\Oh(d)\) time in the \modelname model.
    \lipicsEnd
\end{restatable}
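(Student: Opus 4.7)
The plan is to exploit two structural facts that follow directly from the hypotheses of the \SM problem. First, the alignments $\A_P$ and $\A_T$ make at most $d_P + d_T \le 4d$ edits in total, so almost every tile of $P$ and almost every tile of $T$ is an exact copy of $Q$. Second, an internal piece has length $q+\Delta = \cO(q+d)$ and two consecutive internal pieces overlap in exactly $\Delta$ characters, so each position of $P$ (and of $T$) is contained in only $\cO(1)$ internal pieces. Together, these observations should localise all deviations from a perfectly periodic configuration to $\cO(d)$ pieces, each of which contributes only $\cO(1)$ on average to the total edit budget.

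For the median edit distance bound I would pick explicit witnesses. For $\Sm$ the natural witness is $\hat{S} \coloneqq Q^\infty\fragmentco{0}{q+\Delta}$: an internal piece $P_i$ or $T_i$ whose window is edit-free under the corresponding alignment equals $\hat{S}$ exactly, and otherwise its edit distance to $\hat{S}$ is bounded by the number of alignment edits falling in its window, plus a constant boundary correction. Summing over all internal pieces, each of the $\cO(d)$ alignment edits is charged $\cO(1)$ times, giving $\ed(\Sm) = \cO(d)$. For $\Sb$ and $\Sf$ the same reasoning applies with canonical leading and trailing witnesses obtained from $Q^\infty$ at the appropriate offsets; the singleton contributions from $P_1$ and $P_z$ add only $\cO(d)$ because $P$ itself lies at edit distance $\cO(d)$ from $Q^*$.

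To compute the special multisets I would first read off from $\A_P$ and $\A_T$ the positions of their edits and, from these, the $\cO(d)$ boundaries of \emph{bad} tiles (those not equal to $Q$). A piece $P_i$ or $T_i$ is special iff its window intersects a bad tile, so iterating over each bad tile and marking the $\cO(1)$ pieces whose window covers it enumerates $\sp{P}$ and $\sp{T}$ in $\cO(d)$ time. Each special piece is a fragment of $P$ or $T$, so it can be materialised using a constant number of \modelname operations per piece. The multiset $\spb{T}$ is then identified by intersecting the same bad-tile information with the additional window boundaries defined by the endpoints of the fragments $R_j$, which contributes only $\cO(d)$ further entries.

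The main obstacle I expect is bookkeeping around the boundary pieces. The pieces $P_1$, $P_z$, $T_{j,1}$, and $T_{j,z}$ each span more than one tile (as the footnote in the overview warns for $P_1$ and $P_z$), and the $T_{j,1}$ additionally depend on the offset $x_T \bmod q$ and on the $\cO(d)$-position slack introduced when the $R_j$ are defined. Making the witnesses for $\Sb$ and $\Sf$ into \emph{single} fixed strings, rather than $j$-indexed families, requires a careful choice that absorbs the varying offsets into the boundary region of the witness and lets the periodic interior dominate. Once this is arranged, all inequalities reduce to counting edits of $\A_P$ and $\A_T$ with constant multiplicity, and the \modelname running time bound follows immediately from the constant cost of extracting each special piece as a fragment.
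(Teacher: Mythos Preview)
Your proposal is correct and follows essentially the same approach as the paper: choose the canonical substring of $Q^\infty$ as witness for each family, bound the distance of each piece by the alignment edits hitting an $\cO(1)$-tile window around it, and sum so that each of the $\cO(d)$ edits of $\A_P,\A_T$ is charged $\cO(1)$ times; for the algorithm, scan the breakpoints of $\A_P,\A_T$ and mark the $\cO(1)$ candidate pieces per breakpoint. One correction worth flagging: outside the simplifying assumption $q\gg k$ of the overview, internal pieces have length $\tau+\Delta$ (with $\tau=q\lceil\kappa/2q\rceil$), not $q+\Delta$, so your witness for $\Sm$ must be $Q^\infty\fragmentco{0}{\tau+\Delta}$; the paper's window is $13$ consecutive $\tau$-tiles precisely because $\Delta=6\kappa\le 12\tau$. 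Your anticipation that the boundary pieces ($P_1,P_z$ and the extremal $T_{j,1},T_{j,z}$) are the delicate part is exactly right---the paper handles the $\cO(1)$ extreme indices of $j$ separately with a crude $\cO(\kappa)$ bound each, and uses fixed witnesses $Q^\infty\fragmentco{-\kappa}{2\tau+\Delta}$ and $Q^\infty\fragmentco{z\tau}{y_P+\kappa}$ for the rest.
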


For $j \in J$,
let $\I_j$ denote the DPM-sequence $(P_1,T_{j,1})(P_2,T_{j,2}) \cdots (P_z,T_{j,z})$.
Now for $j \in J \setminus \max J$,
each of $\I_j$ and $\I_{j+1}$ contains $\cO(d)$ \emph{special} pairs,
that is, pairs that contain special pieces.
Hence, we can naively iterate over all $\I_j$s
in an instance of the \DPM problem using $\cO(d\cdot|J|)$ updates;
in the considered instance we have $d\cdot|J|=\cO(d\cdot m/q)$.
See \cref{sec:warmup} for details of this reduction in the general case.
As a preliminary improvement, in
\cref{sec:druns}, we show how to reduce the number of updates to $\cO(d^3)$.
Let us give a brief sketch of this reduction.

We call a pair of pieces $(P_i, T_{j,i})$ \emph{plain} if $i \in \fragmentoo{1}{z}$ and neither $P_i$ nor $T_{j,i}$ is special;
in the restricted case that we are considering here,
the second condition is equivalent to $P_i = T_{j,i} = Q^{\infty}\fragmentco{0}{q+\Delta}$.
For $j\in J$, let $\I'_j$ denote the DPM-sequence obtained from $\I_j$
by trimming each run of plain pairs (that is, maximal contiguous subsequences that consist
of plain pairs) in $\I_j$ to length $k+1$ by deleting excess pairs.

The main idea is that we do not gain or lose any \(k\)-error occurrences by trimming the
DPM-sequences, that is, we have $\OccE_k(\I_j) = \OccE_k(\I'_j)$.
One direction is easy: removing the same substring from two strings \(P\) and \(T\) may
only decrease the edit distance between \(P\) and \(T\); this naturally translates to
DPM-sequences. For the other direction, observe that if a DPM-sequence contains a run of
at least \(k + 1\) plain pairs, then any cost-\(k\) alignment between the corresponding
strings has to perfectly match at least one copy of $Q$ pair in such a run---we can
hence duplicate said copy by adding more plain pairs in the DPM-sequence without increasing the cost of the alignment.
Induction then yields the claim.

With the aim of obtaining an $\cO(d^3)$ upper bound on the number of required updates for iterating over the $\I'_j$s,
let us think of the process of shifting $P$ along $T$.
For each $j$, each run of plain pairs in $\I_j$ can be attributed to a run of plain pieces in $P_2, \ldots , P_{z-1}$
that overlap a run of plain pieces in $T_{j,2} \ldots T_{j,z-1}$.
As we shift $P$, in the most general case, the length of the overlap first increases,
then it remains static, and, finally, it decreases.
Overall, as $j$ gets incremented, a run of plain pairs that is attributed to a specific pair
of runs of plain pieces may change length $\omega(d)$ times.
However, after trimming the lengths of all runs of plain pairs to $k+1$, the length of such a run
gets incremented/decremented $\cO(d)$ times.
As we have $\cO(d)$ special pieces in each of $P$ and $T$, we have $\cO(d^2)$ pairs of runs of plain pieces, and hence
we get the desired $\cO(d^3)$ upper bound, as we can bound the number of updates other than insertions/deletions
of plain pairs by $\cO(d^2)$.

Note that we cannot always iterate explicitly over all $\I'_j$s as this would require
$\Omega(m/q)$ calls to {\tt DPM-Query}.
We circumvent this problem by observing that if we have $\I'_{j-1} = \I'_{j}$
(for some $j \in J\setminus \min J$),
then $r_j + \OccE_k(P,R_{j}) = q + r_{j-1} + \OccE_k(P,R_{j-1})$.
Consequently, for any maximal interval $\fragment{j_1}{j_2} \subseteq J$ where $\I'_{j_1}= \cdots = \I'_{j_2}$, we only process $\I'_{j_1}$;
then, for each position $u \in \OccE_k(P,R_{j_1})$, we report an
arithmetic progression $\{r_{j_1}+u+iq: i \in \fragment{0}{j_2-j_1}\}$ of $k$-error occurrences of $P$ in $T$.
On a high level, we are offloading the computation of $\bigcup_{j=j_1}^{j_2} \big(r_j + \OccE_k(P,R_{j})\big)$
to the computation of $\OccE_k(\I'_{j_1})$.

\subparagraph*{A Faster Solution.}
To obtain a faster solution for the \SM problem, we intend to trim runs of plain pairs
even further, to a length of roughly \(\tOh(\sqrt{d})\).
Now, naively processing the obtained DPM-sequences, we may obtain ``false-positive''
occurrences, but---as we can prove---not too many. In
particular, we can extend existing tools to filter such ``false-positive'' occurrences.

For a slightly more detailed overview,
for any two positions $v<w$ of $T$, let us write $\Q\fragmentco{\rho(v)}{\rho(w)}$ for
the fragment of $\Q$ that $\A_T$ aligns with $T\fragmentco{v}{w}$.
Suppose that we have
\[\edl{P}{Q} = \ed(P,\Q\fragmentco{\rho(v)}{\rho(w)}) \leq \edl{T\fragmentco{v}{w}}{Q} =
\ed(T\fragmentco{v}{w},\Q\fragmentco{\rho(v)}{\rho(w)}).\]
Then, the triangle inequality yields
\begin{align*}
    \Lambda \coloneqq \edl{P}{Q} + \edl{T\fragmentco{v}{w}}{Q}
    & \geq \ed(P,T\fragmentco{v}{w}) \geq \edl{T\fragmentco{v}{w}}{Q} - \edl{P}{Q}.
\end{align*}
We see that, intuitively, the best case is when all the errors of $P$ with $\Q\fragmentco{\rho(v)}{\rho(w)}$
cancel out with errors of $T\fragmentco{v}{w}$ with $\Q\fragmentco{\rho(v)}{\rho(w)}$.
Now, roughly speaking, for each position $v$ of $T$, we quantify the ``potential
savings'' that an alignment $P\onto T\fragmentco{v}{w}$ of cost at most $k$ may yield
compared to $\min_x \ed(P, \Q\fragmentco{\rho(v)}{x}) + \min_y \ed(T\fragmentco{v}{w}, \Q\fragmentco{\rho(v)}{y}))$.
To this end, we use the notion of \emph{locked fragments} from \cite{unified} to mark
each position of the
text with a number of marks proportional to said ``potential
savings''. (A~similar notion was used in \cite{ColeH98}.)
Based on a threshold $\tradeoff = \widetilde{\Theta}(\sqrt{d})$ on the number of marks (and a few technical conditions),
we then classify each position as either \emph{heavy} or \emph{light}.
Details on locked fragments and our marking scheme can be found in \cref{sec:marking}.

We then present our solution for \SM in \cref{sec:faster}.
First, we show that the set of heavy positions intersects $\tOh(\sqrt{d})$ ranges, each of
size $\cO(d)$, where a $k$-error
occurrences of $P$ may start (recall that $\OccE_k \subseteq \bigcup_{j\in \mathbb{Z}} \fragment{j q - 5d}{j q + 5d}$).
We can then compute the intersection of $\OccE_k(P,T)$ with heavy positions efficiently,
that is, in $\cOtilde(d^{3.5})$ time, using known tools.

Having taken care of the heavy positions, we can return to \DPM for the light positions.
To that end, consider again $T\fragmentco{v}{w}$, supposing that $v$ is a
light position of~$T$.
We then have that $\Lambda \geq \ed(P,T\fragmentco{v}{w}) \geq \Lambda - \eta$.
Now, the optimal alignment $\A$ from $P$ to $T\fragmentco{v}{w}$
has to make $\Lambda - \eta$ edit operations just to align the locked fragments of the text and the pattern.
This means that the number of edit operations that $\A$ makes in aligning portions of $P$ disjoint from the locked fragments of $P$
to portions of $T$ disjoint from the locked fragments of $T$ is at most $\tradeoff$.

Now, we define a set $\rred(P) \supseteq \sp{P}$ that additionally contains
all pieces of $P$ that overlap some locked fragment of $P$;
we similarly define a set $\rred(T)$ of pieces of~$T$.
Importantly, both $\rred(P)$ and $\rred(T)$ are of size $\cO(d)$.
Redefining plain pairs to be those that contain no red piece,
we show that we can
trim each run of plain pairs
to have a length of $\cO(\tradeoff)=\widetilde{\Theta}(\sqrt{d})$.
This allows us to reduce our problem to an instance of the \DPM problem with $\widetilde{\Theta}(d^{2.5})$ updates in total;
as before, we can essentially charge all but $\cO(d^2)$ updates
to $\cO(d^2)$ pairs of runs of plain pairs,
so that each such pair gets charged with $\widetilde{\Theta}(\sqrt{d})$ updates.

\paragraph*{A Solution for the {DynamicPuzzleMatching} Problem}
For our solution to the \DPM problem
(which we present in \cref{sec:DPM_sol}),
we rely on a framework of Tiskin~\cite{abs-0707-3619,Tis08,Tis15}
(which we recall and extend in \cref{sec:seaweeds}).
A key observation behind this framework is that \emph{semi-local} alignments between strings $U$ and $V$
can be represented as paths between boundary vertices of a certain \emph{alignment graph}:
a grid on vertices $\fragment{0}{|V|} \times \fragment{0}{|U|}$, 
augmented with diagonal edges.
All horizontal and vertical edges have weight $1$ (they represent insertions and deletions),
whereas each diagonal edge $(u,v)\leftrightarrow (u+1,v+1)$ has weight $0$ (for
a match) or $1$ (for a substitution).
Then, $\ed(V\fragmentco{v}{w},U)$ corresponds to the distance
from $(v,0)$ to $(w,|U|)$.
As observed in~\cite{Tis08}, even though there are quadratically many such distances, they
can be encoded in linear space using a certain \emph{permutation matrix} that we denote by $P_{V,U}$.
Moreover, we can \emph{stitch} alignment graphs by computing
a certain \emph{seaweed product} of permutation matrices.
For example, $P_{V,UU'}$ can be expressed as the seaweed product of $P_{V,U}$ and $P_{V,U'}$ (shifted appropriately so that the characters of $U'$ are indexed from $|U|$ rather than from $0$). Tiskin~\cite{Tis15} provided an $\Oh(n\log n)$-time algorithm for computing the seaweed product of two $n\times n$ permutation matrices, but we cannot hope to compute $P_{V,U}$ in truly subquadratic time because it encodes $\ed(U,V)$.

In our setting, though, the strings $U$ and $V$ are of similar length
(that is, $\big||V|-|U|\big| \le \tor(\I)\le \Delta/2-k$) and we only care about alignments of cost at most $k$.
The underlying paths corresponding to such alignments are fully contained within a narrow \emph{diagonal band} of the alignment graph:
all of their vertices $(u,v)$ satisfy $u-v \in I\coloneqq \fragment{-k}{|V|-|U|+k}$ (in short, they belong to band $I$ of the alignment graph);
see \cref{fig:band} for an illustration.
In order to capture this scenario, we restrict the alignment graph to band~$I$, which corresponds to zeroing out the costs of all diagonal edges outside band $I$.
We prove that the permutation matrix $P_{U,V}|_I$ of the restricted graph can be encoded
in $\Oh(|I|)$ space and computed in $\tOh(|I|^2)$ time (in the \modelname model).
Moreover, we show that $P_{U,V}|_I$ can be expressed solely in terms of $P_{U,V}$, which
leads to a new operation of \emph{restricting} a permutation matrix $P$ to a
given interval $I$. We write $P|_I$ for the result of said operation and we present a linear-time algorithm that computes $P|_I$ directly from $P$ and $I$.

Let us now explain how these techniques are helpful in solving the \DPM problem.
Our high-level idea is to express $P_{V,U}|_I$ as the seaweed product of $z$ smaller permutation matrices $P_1,\ldots,P_z$, with
$P_i$ depending only on the $i$-th pair $(U_i,V_i)$. For a first attempt,
we could use $P_{V_i,U_i}$,
but the corresponding parts of the alignment overlap and thus cannot be stitched easily. Thus, we trim each piece $U_i$ to $U'_i$ so that $U=\val_{\Delta}(U_1,\ldots,U_z)=U'_1\cdots U'_z$. Now, the seaweed product of matrices $P_{V_i,U'_i}$ (shifted appropriately),
restricted \emph{a posteriori} to interval $I$, yields $P_{V,U}|_I$. However, the individual matrices $P_{V_i,U'_i}$ are still too large,
so we need to restrict them \emph{a priori} as well. Thus, we actually use $P_{V_i,U'_i}|_{I_i}$, for appropriate intervals $I_i$ of size at most $\Delta$;
see \cref{fig:stitch} for an illustration.
We build a balanced binary tree on top of the permutation matrices $P_{V_i,U'_i}|_{I_i}$ in order to maintain their seaweed product (so that every update requires recomputing $\Oh(\log z)$ partial products). For each query, we retrieve $P_{V,U}|_I$ and apply the SMAWK algorithm~\cite{SMAWK}
in order to check, for every $v\in \fragment{0}{|V|-|U|+k}$, whether $\ed(U,V\fragmentco{v}{w})\le k$ holds for some $w\in \fragment{|U|-k}{|W|}$
in $\cOtilde(\Delta)$ time in total.

The remaining challenge is to build the matrices $P_{V_i,U'_i}|_{I_i}$.
For this, we exploit the small median edit distance of the families $\Sb,\Sm,\Sf$
to show that all such matrices can be precomputed in $\tOh(d^3+\Delta^2 d)$ time.
If the puzzle pieces were of size $\Oh(d+\Delta)$, we could simply use an algorithm of Charalampopoulos, Kociumaka, and Mozes \cite{CKM20} that maintains $P_{X,Y}$
subject to edits of $X,Y$. In general, though, we decompose each piece into $\Oh(d)$ parts: perfect parts, which can be arbitrarily long but are kept intact among all the puzzle pieces, and imperfect parts, which can contain edits but are of size $\Oh(\Delta)$. For each perfect part, we compute a single restricted permutation matrix in $\tOh(\Delta^2)$ time. For imperfect parts, we use the dynamic algorithm of~\cite{CKM20}. Finally, the restricted permutation matrix of a pair of pieces
is obtained by stitching the matrices for pairs of parts similarly to how we obtain $P_{V,U}|_I$ from $P_{V_i,U'_i}|_{I_i}$s.

\begin{figure}[t]
\begin{center}
\begin{tikzpicture}[xscale =.45, yscale=-.45]

    \begin{scope}
    \clip (0,0) rectangle (8,10);
    \draw[line width = 2mm,blue!50] (0,0) -- (5,5) -- (5,6) -- (7,8) -- (7,9) -- (8,10);
    \end{scope}

    \begin{scope}
    \clip (3,0) rectangle (12,10);
    \draw[line width = 2mm,orange!50] (3,0) -- (11,8) -- (11,9) -- (12,10);
    \end{scope}

    \draw (0,0) grid (12, 10);

    \foreach \x in {0,...,11} {
        \foreach \y in {0,...,9} {
            \draw (\x,\y) -- (\x+1,\y+1);
    }
    }

    \foreach \y in {2,6,7,8} {
        \draw  (-0.25,\y+0.5) node[left] {$\mathtt{a\vphantom{b}}$};
    }

    \foreach \y in {0,1,3,4,5,9} {
        \draw (-0.25,\y+0.5) node[left] {$\mathtt{b\vphantom{a}}$};
    }

    \foreach \x in {2,5,6,9,10} {
        \draw  (\x+0.5, -0.25) node[above] {$\mathtt{a\vphantom{b}}$};
    }

    \foreach \x in {0,1,3,4,7,8,11} {
        \draw (\x+0.5, -0.25)  node[above] {$\mathtt{b\vphantom{a}}$};
    }

    \foreach \x in {2,5,6,9,10} {
    \foreach \y in {2,6,7,8} {
         \draw[line width = .5mm]  (\x, \y) -- (\x+1,\y+1);
    }
    }

    \foreach \x in {0,1,3,4,7,8,11} {
    \foreach \y in {0,1,3,4,5,9} {
         \draw[line width = .5mm]  (\x, \y) -- (\x+1,\y+1);
    }
    }

    \draw[blue, rounded corners=2mm, fill=blue, fill opacity = 0.1] (-.25, 2.25) -- (-.25,-.25) -- (4.25, -.25) -- (12.25, 7.75) -- (12.25, 10.25) -- (7.75, 10.25) -- cycle;

    \end{tikzpicture}
\end{center}
\caption{The alignment graph for $U=\mathtt{bbabbbaaab}$ and $V=\mathtt{bbabbaabbaab}$. Thin edges have cost $1$ whereas thick edges have cost $0$.
The blue and orange path represent cost-2 alignments $U\leadsto V\fragmentco{0}{8}$ and $U\leadsto V\fragmentco{3}{12}$, respectively.
The diagonal band $I=\fragment{-2}{4}=\fragment{-k}{|V|-|U|+k}$ corresponding to $k=2$ is
shaded in blue.}\label{fig:one}\label{fig:band}
\end{figure}

\begin{figure}[b!]
    \begin{center}
    \begin{tikzpicture}[xscale=.5, yscale=-.5]
    
        \draw[thick] (0,0) rectangle (20, 19);
        
        \draw (0, -.1) rectangle node{$V_1$} (6, -1);
        \draw (2, -1.1) rectangle node{$V_2$} (11, -2);
        \draw (7, -.1) rectangle node{$V_3$} (17, -1);
        \draw (13, -1.1) rectangle node{$V_4$} (20, -2);

        \draw (-1.1, 0) rectangle node{$U_1$} (-2, 5.5);
        \draw (-2.1, 1.5) rectangle node{$U_2$} (-3, 11);
        \draw (-1.1, 7) rectangle node{$U_3$} (-2, 16.5);
        \draw (-2.1, 12.5) rectangle node{$U_4$} (-3, 19);
        
        \draw (-.1, 0) rectangle node{$U'_1$} (-1, 3.5);
        \draw (-.1, 3.5) rectangle node{$U'_2$} (-1, 9);
        \draw (-.1, 9) rectangle node{$U'_3$} (-1, 14.5);
        \draw (-.1, 14.5) rectangle node{$U'_4$} (-1, 19);
        
        \draw[thick] (0,0) rectangle (6, 3.5);
        \draw[thick] (2,3.5) rectangle (11,9);
        \draw[thick] (7,9) rectangle (17, 14.5);
        \draw[thick] (13, 14.5) rectangle (20, 19);
        
        \draw[fill=red, fill opacity=0.1] (0,0) -- (3, 0) -- (6, 3)  -- (6, 3.5) -- (2.5, 3.5) -- (0, 1) -- cycle;
        \draw[fill=red, fill opacity=0.1] (2, 3.5) -- (6, 3.5) -- (11,8.5) -- (11,9) -- (7.5, 9) -- cycle;
        \draw[fill=red, fill opacity=0.1] (7,9) -- (11,9) -- (16.5,14.5) -- (12.5,14.5)  -- cycle;
        \draw[fill=red, fill opacity=0.1] (13,14.5) -- (17,14.5) -- (20, 17.5) -- (20, 19) -- (17.5, 19)  -- cycle;

        \draw[fill=blue, fill opacity=0.2] (0,0) -- (2,0) -- (20,18) -- (20,19) -- (18,19) -- (0,1) -- cycle;
        \end{tikzpicture}
    \end{center}
    \caption{A schematic illustration explaining why $P_{V,U}|_I$, which corresponds to the purple band, can be obtained by the seaweed products of matrices $P_{V_i,U'_i}|_{I_i}$, which correspond to the pink bands within the rectangles representing the alignment graphs of $U'_i$ and $V_i$ (subgraphs of the alignment graph of $U$ and $V$).}\label{fig:stitch}
\end{figure}
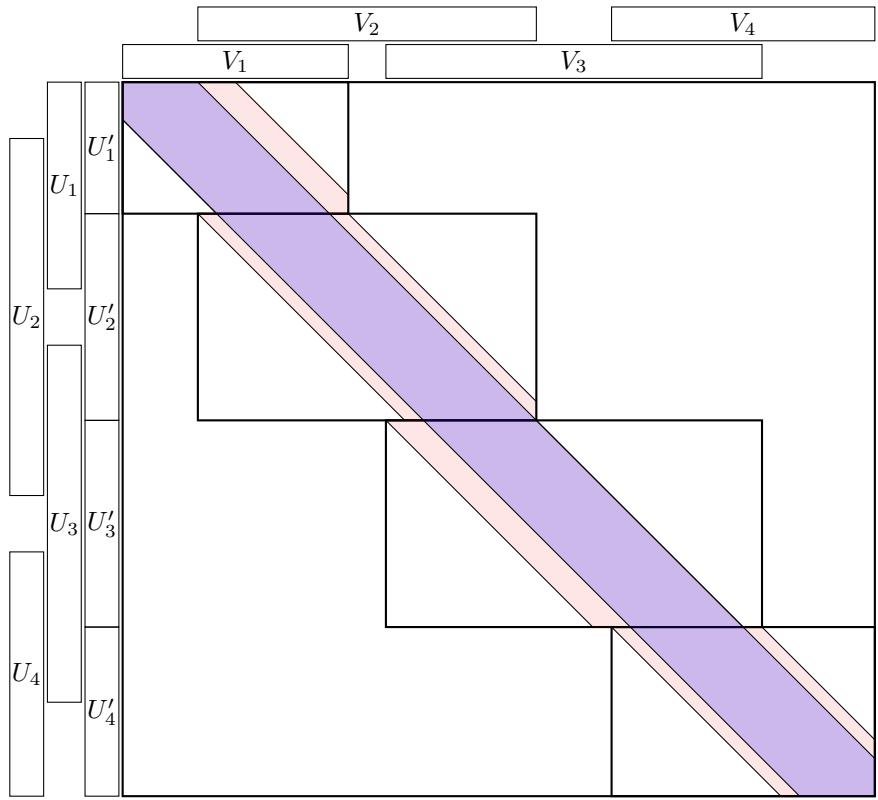

\newpage

\section{Preliminaries}\label{sec:prel}

\paragraph*{Sets and Arithmetic Progressions}
For $i,j\in \mathbb{Z}$,
we write $\fragment{i}{j}$ to denote $\{i, \dots, j\}$ and
$\fragmentco{i}{j}$ to denote $\{i ,\dots, j - 1\}$;
the sets $\fragmentoc{i}{j}$ and $\fragmentoo{i}{j}$ are defined similarly.

For integers \(a,d\), and \(\ell > 0\),
the set $\{ a + j \cdot d \mid j \in \fragmentco{0}{\ell}\}$
is an \emph{arithmetic progression} with starting value $a$, difference $d$, and
length~$\ell$.
Whenever we use arithmetic progressions in an algorithm, we store them
as triples $(a,d,\ell)$ consisting of their first value, their difference, and their length.

For a set $X\sub \mathbb{Z}$ and an integer $s \in \mathbb{Z}$, we write
$s+X$ and $X+s$ to denote the set $\{s+x : x\in X\}$ containing all elements of $X$ incremented by $s$.

\paragraph*{Strings}

We write $T=T\position{0}\, T\position{1}\cdots T\position{n-1}$ to denote a \textit{string} of
length $|T|=n$ over an alphabet $\Sigma$. The elements of~$\Sigma$ are called \textit{characters}.
We write $\varepsilon$ to denote the \emph{empty string}.

A string $P$ is a \emph{substring} of a string~$T$ (denoted by $P \substr T$)
 if for some integers $i,j$ with $0\le i \le j \le |T|$, we have
 $P=T\position{i}\cdots T\position{j-1}$.
In this case, we say that there is an \emph{exact occurrence} of~$P$ at position $i$
in~$T$, or, more simply, that $P$ \emph{exactly occurs in} $T$.
We write $T\fragmentco{i}{j}$ for this particular occurrence of \(P\) in \(T\),
which is formally a \emph{fragment} of $T$ specified by the two endpoints $i,j$.
For notational convenience, we may also refer to this fragment as $T\fragment{i}{j-1}$,
$T\fragmentoc{i-1}{j-1}$, or $T\fragmentoo{i-1}{j}$.
Two fragments (perhaps of different strings) \emph{match} if they are occurrences of the same strings.

A \emph{prefix} of~a string $T$ is a fragment that starts at position~$0$ (that is, a
prefix is a fragment of~the form $T\fragmentco{0}{j}$ for some $j\in \fragmentco{0}{|T|}$).
A \emph{suffix} of~a string $T$ is a fragment that ends at position ${|T|-1}$ (that is,
a suffix is a fragment of~the form $T\fragmentco{i}{|T|}$ for some $i\in \fragmentco{0}{|T|}$).
We write \(\lcp(U, V)\) for the length of the \emph{longest common prefix} of two strings
$U$ and $V$, that is \(\lcp(U, V)\) is the length of the longest string that occurs as a
prefix of~both $U$ and~$V$.
Similarly, we write \(\lcp^R(U, V)\) for the length of the \emph{longest common suffix} of two strings
$U$ and $V$, that is \(\lcp(U, V)\) is the length of the longest string that occurs as a
suffix of~both $U$ and~$V$.

For two strings $U$ and $V$, we write $UV$ or $U\cdot V$ to denote their concatenation.
We also write $U^k \coloneqq  U\cdots U$ to denote the concatenation of~$k$ copies of~the
string $U$. Furthermore, we write $U^\infty$ to denote an infinite string obtained by concatenating
infinitely many copies of~$U$. To simplify our exposition at certain points, we may access
such an infinite repetition of \(U\) also at negative positions; hence for an integer \(j \in
\fragmentco{0}{|Q|}\) and a (possibly negative) integer \(i\), we formally set \(U^{\infty}\position{i
\cdot |U| + j} \coloneqq U\position{j}\).
A string $T$ is \emph{primitive} if it cannot be expressed as $T=U^k$
for any string~$U$ and any integer~$k > 1$.

A positive integer $p$ is a \emph{period} of~a string $T$ if $T\position{i} =
T\position{i + p}$ for
all $i \in \fragmentco{0}{|T|-p}$. We refer to the smallest
period as \emph{the period} $\per(T)$ of~the string.
Further, we call $T\fragmentco{0}{\per(T)}$ the \emph{string period} of~$T$.
A string is \emph{periodic} if its period is at most half of~its length.

For a string $T$, we define the following \emph{rotation} operations:
The operation $\rot(\star)$ takes as input a string, and moves its last character to the
front; that is,~$\rot(T) \coloneqq  T\position{n-1}T\fragment{0}{n-2}$.
The inverse operation $\rot^{-1}(\star)$ takes as input a string and
moves its initial character to the end; that is,~$\rot^{-1}(T) \coloneqq  T\fragment{1}{n-1}T\position{0}$.
Observe that a primitive string $T$ does not match any of~its non-trivial rotations,
that is, we have $T=\rot^j(T)$ if and only if $j \equiv 0 \pmod{|T|}$.

For a string $T$, the \emph{reverse string} of~$T$ is
$T\position{n-1}T\position{n-2}\cdots T\position{0}$.

\paragraph*{Edit Distance and Pattern Matching with Edits}

The \emph{edit distance} (also known as \emph{Levenshtein distance}) between two
strings $X$ and $Y$, denoted by $\ed(X,Y)$, is the minimum number of
character insertions, deletions, and substitutions required to transform $X$ into~$Y$.
Similarly, the \emph{deletion distance} $\DD(X,Y)$ is the minimum number of character insertions and deletions required to transform $X$ into $Y$.

For a formal definition, we first define an \emph{alignment} between strings.
\begin{definition}\label{def:ali}
    A sequence $\A=(x_i,y_i)_{i=0}^{m}$ is an \emph{alignment}
    of $X\fragmentco{x}{x'}$ onto $Y\fragmentco{y}{y'}$, denoted by \(\A: X\fragmentco{x}{x'} \onto Y\fragmentco{y}{y'}\),
    if we have
    \begin{align*}
        (x_0,y_0) &= (x,y);\quad\text{and}\\
        \text{for all \(i \in \fragmentco{0}{m}\):}\quad (x_{i+1},y_{i+1}) &\in
        \{(x_i+1,y_i+1),(x_i+1,y_i),(x_i,y_i+1)\};\quad\text{and} \\
        (x_m,y_m) &=(x',y').
    \end{align*}
    \begin{itemize}
        \item If $(x_{i+1},y_{i+1})=(x_i+1,y_i)$, we say that $\A$ \emph{deletes}
            $X\position{x_i}$,
        \item If $(x_{i+1},y_{i+1})=(x_i,y_i+1)$, we say that $\A$ \emph{inserts} $Y\position{y_i}$,
        \item If $(x_{i+1},y_{i+1})=(x_i+1,y_i+1)$, we say that $\A$ \emph{aligns}
            $X\position{x_i}$ and $Y\position{y_i}$. If~additionally $X\position{x_i}=
            Y\position{y_i}$, we say that $\A$ \emph{matches} $X\position{x_i}$ and
            $Y\position{y_i}$.
            Otherwise, we say that $\A$ \emph{substitutes} $Y\position{y_i}$ for
            $X\position{x_i}$.\lipicsEnd
    \end{itemize}
\end{definition}


Further, for an alignment $\A:X\fragmentco{x}{x'}\onto Y\fragmentco{y}{y'}$
with \(\A = (x_i,y_i)_{i=0}^m\),
we define the \emph{inverse alignment} $\A^{-1} : Y\fragmentco{y}{y'}\onto
X\fragmentco{x}{x'}$ as $\A^{-1} \coloneqq (y_i,x_i)_{i=0}^m$.
The \emph{cost} of an alignment $\A$ of $X\fragmentco{x}{x'}$ onto $Y\fragmentco{y}{y'}$,
denoted by $\ed^{\A}(X\fragmentco{x}{x'},Y\fragmentco{y}{y'})$, is the total number of characters that $\A$ inserts, deletes, or substitutes.
Now, we define the edit distance $\ed(X,Y)$ as the minimum cost of an alignment
of $X\fragmentco{0}{|X|}$ onto~$Y\fragmentco{0}{|Y|}$.
An alignment of $X$ onto $Y$ is \emph{optimal} if its cost is equal to $\ed(X, Y)$.
The deletion distance $\DD(X,Y)$ is defined as the minimum cost of an alignment $\A : X \onto Y$
that aligns $X\position{x}$ to $Y\position{y}$ only if the two characters match.

Given an alignment $\A:X\fragmentco{x}{x'}\onto Y\fragmentco{y}{y'}$ and a fragment
$X\fragmentco{\bar{x}}{\bar{x}'}\substr X\fragmentco{x}{x'}$,
we write $\A(X\fragmentco{\bar{x}}{\bar{x}'})$ for the fragment
$Y\fragmentco{\bar{y}}{\bar{y}'}\substr Y\fragmentco{y}{y'}$ that $\A$ aligns against $X\fragmentco{\bar{x}}{\bar{x}'}$.
Given that insertions and deletions may render this definition ambiguous, we formally set
\[\bar{y} \coloneqq \min\{\hat{y} : (\bar{x},\hat{y})\in \A\}\quad\text{and}\quad
    \bar{y}' \coloneqq \left\{\begin{array}{c l}
            y' & \text{if }\bar{x}' = x',\\
            \min\{\hat{y}' : (\bar{x}',\hat{y}')\in \A\} & \text{otherwise}.
    \end{array}\right.\]
This particular choice satisfies the following decomposition property.
\begin{fact}\label{fct:ali}
    For any alignment $\A$ of $X$ onto $Y$ and a decomposition $X=X_1\cdots X_t$ into $t$
    fragments, $Y=\A(X_1)\cdots \A(X_t)$ is a decomposition into $t$ fragments with
    $\ed^\A(X,Y)\ge \sum_{i=1}^t \ed(X_i,\A(X_i))$.

    Further, if \(\A\) is an optimal alignment, then have equality:
    $\ed^\A(X,Y) = \sum_{i=1}^t \ed(X_i,\A(X_i))$.
    \lipicsEnd
\end{fact}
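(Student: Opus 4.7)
The plan is to verify the decomposition $Y=\A(X_1)\cdots\A(X_t)$ explicitly, then partition the steps of $\A$ into sub-alignments corresponding to the $X_i$ and sum costs. Writing $X_i=X\fragmentco{x_{i-1}}{x_i}$ with $0=x_0\le x_1\le\cdots\le x_t=|X|$, I would set $y_i\coloneqq\min\{\hat y : (x_i,\hat y)\in\A\}$ for $0\le i<t$ and $y_t\coloneqq |Y|$. A direct application of the definition of $\A(\cdot)$ — using the special clause when $\bar x'=x'$ for the last piece — shows $\A(X_i)=Y\fragmentco{y_{i-1}}{y_i}$ for every $i$, so the decomposition $Y=\A(X_1)\cdots\A(X_t)$ follows immediately since consecutive fragments share the boundaries $y_i$.

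Next, I would observe that each point $(x_i,y_i)$ lies on $\A$ by construction, so the subsequence $\A_i$ of $\A$ running from $(x_{i-1},y_{i-1})$ to $(x_i,y_i)$ is itself an alignment of $X_i$ onto $\A(X_i)$. Since the $\A_i$ partition the steps of $\A$ cleanly (no step is shared or missed), their costs sum to $\ed^\A(X,Y)$. Each sub-alignment cost is at least $\ed(X_i,\A(X_i))$ by the definition of edit distance as a minimum over alignments, which yields the inequality $\ed^\A(X,Y)\ge\sum_{i=1}^t\ed(X_i,\A(X_i))$.

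For the equality claim under optimality of $\A$, I would argue by contradiction: if some $\A_i$ were not an optimal alignment of $X_i$ onto $\A(X_i)$, then replacing each $\A_i$ with an optimal $\A_i^*$ and concatenating would produce an alignment of $X$ onto $Y$ of cost $\sum_{i=1}^t \ed(X_i,\A(X_i)) < \ed^\A(X,Y)=\ed(X,Y)$, contradicting optimality of $\A$. Hence every $\A_i$ must already be optimal and $\ed^\A(X,Y)=\sum_{i=1}^t\ed(X_i,\A(X_i))$.

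The only subtle part, and hence the main obstacle, is the boundary bookkeeping: one must check that the $\min$ convention used to define $\bar y$ (together with the special case for $\bar x'=x'$) really does partition the steps of $\A$ into the $\A_i$ without double-counting or omissions. This reduces to the observation that $\{\hat y:(x_i,\hat y)\in\A\}$ is a contiguous interval of $y$-values, that $\A_i$ ends at its minimum, and that $\A_{i+1}$ begins at the very same point; any vertical (insertion) moves made by $\A$ while staying in column $x_i$ are thereby charged uniquely to $\A_{i+1}$.
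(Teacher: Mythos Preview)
Your argument is correct and is exactly the natural verification one would write out; the paper itself states this fact without proof (it ends with the \lipicsEnd{} marker and carries no accompanying argument), treating it as an immediate consequence of the definition of $\A(\cdot)$ given just above. Your careful handling of the boundary convention---in particular that insertions at column $x_i$ are uniquely charged to $\A_{i+1}$ because $y_i$ is chosen as the \emph{minimum} $\hat y$ with $(x_i,\hat y)\in\A$---is precisely the point that makes the definition work, and your contradiction argument for the equality under optimality is the standard one.
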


Consider \cref{ex:align} for a visualization of an example.

\begin{figure}[t]
    \begin{center}
        \begin{tabular}{lccccccccccccccc}
            $X=$&& \texttt{a} && \texttt{a} && \texttt{c} && $-$ && \texttt{b} && \texttt{c} && \texttt{d} \\
                &&   \textcolor{red}{\textbf{\rotatebox{90}{$*$}}}   &&  \textcolor{blue!50!green}{\textbf{\,\rotatebox{90}{$=$}}} &&
            \textcolor{blue!50!green}{\textbf{\,\rotatebox{90}{$=$}}}   && \textcolor{red}{\textbf{\rotatebox{90}{$*$}}}  &&
            \textcolor{blue!50!green}{\textbf{\rotatebox{90}{$=$}}} &&  \textcolor{blue!50!green}{\textbf{\,\rotatebox{90}{$=$}}}  &&  \textcolor{red}{\textbf{\rotatebox{90}{$*$}}} \\
            $Y=$&& \texttt{b} && \texttt{a} && \texttt{c} && \texttt{d} && \texttt{b} &&  \texttt{c} && $-$\\
            $\A=$ & $(0,0)$ && $(1,1)$ && $(2,2)$ && $(3,3)$ && $(3,4)$ && $(4,5)$ && $(5,6)$ && $(6,6)$
        \end{tabular}
    \end{center}
    \caption{Consider strings $X=\texttt{aacbcd}$ and $Y=\texttt{bacdbc}$.
        We have $\ed(X,Y)=3$, witnessed by an optimal alignment \(\A: X \onto Y\).
        The alignment $\A$ inserts, deletes, or substitutes three characters; these edit
        operations are denoted by red asterisks.
        The set of (augmented) breakpoints of $\A$ is
        $\brkp_{X,Y}(\A)=\{(-1,-1,\bot), (0,0, {\tt SUB}), (3,3, {\tt DEL}), (5,6, {\tt INS}),
        (6,6, \bot)\}$.
        Further, we have
        $\A(X\fragmentco{0}{3}) = Y\fragmentco{0}{3} = \texttt{bac}$,
        $\A(X\fragmentco{3}{5}) = Y\fragmentco{3}{5} = \texttt{dbc}$, and
    $\A(X\fragmentco{5}{6}) = Y\fragmentco{6}{6}=\eps$.}\label{ex:align}
\end{figure}

\begin{definition}
    For an alignment $\A:X\fragmentco{x}{x'} \onto Y\fragmentco{y}{y'}$, and a
    pair \((x_i,y_i)\in \A\) where \(\A\) does not match \(x_i\) and \(y_i\),
    we define the corresponding \emph{augmented breakpoint}~\(B\) as\[
        B \coloneqq \left\{
            \begin{array}{cl}
                (x_i, y_i, {\tt INS}), & \text{if \(\A\) inserts \(Y\position{y_i}\),}\\
                (x_i, y_i, {\tt DEL}), & \text{if \(\A\) deletes \(X\position{x_i}\),
                }\\
                (x_i, y_i, {\tt SUB}), & \text{if \(\A\) substitutes \(X\position{x_i}\),
                and}\\
                (x_i, y_i, \bot), & \text{if \(x_i = x'\) and \(y_i = y'\)
                or if \(x_i = x - 1\) and \(y_i = y - 1\).}
            \end{array}
    \right.\]
    We write $\brkp_{X,Y}(\A)$ for the set of augmented breakpoints.
    \lipicsEnd
\end{definition}

Observe that $\ed^{\A}(X,Y)=|\brkp_{X,Y}(\A)|-2$, with the $-2$ term corresponding to
$(x',y',\bot)\in \brkp_{X,Y}(\A)$ and
$(x - 1,y - 1,\bot)\in \brkp_{X,Y}(\A)$.

Moreover, observe that we can uniquely reconstruct $\A$ from $\brkp_{X,Y}(\A)$.
Our algorithms use $\brkp_{X,Y}(\A)$ (with elements stored in a sorted array) to represent the
alignment $\A$;
if the cost of $\A$ is $d$, then this \emph{breakpoint representation} takes $\Oh(d+1)$
space.\footnote{Observe that the breakpoint representation requires constant (non-zero)
space for alignments with cost \(d = 0\).}

\def\brpst#1#2{\sigma^{#1}_{#2}}
\def\brped#1#2{\lambda^{#1}_{#2}}

\begin{lemma}\label{fct:alignments}
    For an alignment $\A:X\fragmentco{x}{x'} \onto Y\fragmentco{y}{y'}$
    and a position $\bar{y}\in \fragment{y}{y'}$,
    write $\fragment{\brpst{\A}{\bar{y}}}{\brped{\A}{\bar{y}}} \coloneqq
    \{\bar{x} \in \fragment{x}{x'} \mid (\bar{x},\bar{y})\in \A\}$ for the corresponding
    positions in \(X\) under \(\A\).

    Given the breakpoint representation of a cost-$d$ alignment $\A$,
    in \(\Oh(d + 1)\) time, we can compute
    the sequences $(\brpst{\A}{\bar{y}})_{\bar{y}=y}^{y'}$
    and $(\brped{\A}{\bar{y}})_{\bar{y}=y}^{y'}$, represented as concatenations of $d+1$
    arithmetic progressions with difference $1$.
\end{lemma}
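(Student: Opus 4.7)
The plan is to trace left-to-right through the sorted breakpoint representation of $\A$, which by definition has exactly $d+2$ entries: the two endpoint $\bot$ entries plus the $d$ edit breakpoints. Between any two consecutive breakpoints $\A$ performs only matches, i.e.\ steps $(\bar{x},\bar{y})\to(\bar{x}+1,\bar{y}+1)$, so the restrictions of both $\brpst{\A}{\cdot}$ and $\brped{\A}{\cdot}$ to the $\bar{y}$-values traversed by such a matching run coincide with a linear expression $\bar{y}+c$, which is exactly an arithmetic progression with difference $1$.

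More concretely, I would enumerate the breakpoints as $B_0=(x-1,y-1,\bot), B_1,\ldots,B_d, B_{d+1}=(x',y',\bot)$. For each $i\in\fragment{0}{d}$, the edit type of $B_i$ determines in $\Oh(1)$ time the unique pair $(x^*_i,y^*_i)\in \A$ that lies immediately after $B_i$: the alignment steps from $(x_i,y_i)$ to $(x_i+1,y_i+1)$ after $\bot$ or $\mathtt{SUB}$, to $(x_i+1,y_i)$ after $\mathtt{DEL}$, and to $(x_i,y_i+1)$ after $\mathtt{INS}$. On the $\bar{y}$-range spanned by the subsequent matching run before the next non-match at $B_{i+1}$, both sequences coincide with $\bar{y}+(x^*_i-y^*_i)$, yielding a single difference-$1$ arithmetic progression.

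The only care needed is at the boundary values $\bar{y}=y_i$ corresponding to the edit breakpoints, where $\brpst{\A}{y_i}$ and $\brped{\A}{y_i}$ may differ (this happens exactly when $\A$ performs one or more $\mathtt{DEL}$ steps at $y$-coordinate $y_i$) and may deviate from the linear pattern of the neighbouring matching runs. A short case analysis on $\mathtt{INS}/\mathtt{DEL}/\mathtt{SUB}$ reads off both $\brpst{\A}{y_i}$ and $\brped{\A}{y_i}$ in $\Oh(1)$ time from the coordinates of $B_i$ and, if several $\mathtt{DEL}$s or $\mathtt{INS}$s share the same $y$-coordinate, from the coordinates of the clustered neighbours. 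Each boundary value is then either appended to the current progression (when the difference-$1$ pattern is preserved) or emitted as a length-$1$ progression.

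Since the $d+2$ breakpoints delineate $d+1$ maximal matching runs, each of the two output sequences decomposes as a concatenation of at most $d+1$ difference-$1$ arithmetic progressions; merging neighbouring progressions of matching slope keeps the count within this bound. A single left-to-right pass over the breakpoint array emits all of them in $\Oh(d+1)$ total time, matching the claim. I do not anticipate a substantive obstacle here; the only delicate point is bookkeeping the off-by-one conventions at $\mathtt{INS}$ versus $\mathtt{DEL}$ versus $\mathtt{SUB}$ and at the two $\bot$ endpoints, together with correctly folding runs of consecutive $\mathtt{DEL}$ or $\mathtt{INS}$ edits sharing the same $y$-coordinate into the appropriate singleton entry.
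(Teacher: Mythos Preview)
Your proposal is correct and follows essentially the same approach as the paper: a single left-to-right scan over the sorted breakpoint array, emitting one difference-$1$ progression per maximal run of diagonal steps. The only cosmetic difference is that the paper's scan explicitly ignores $\mathtt{SUB}$ breakpoints (since a substitution is still a diagonal step and therefore continues the current progression), whereas you process every breakpoint and rely on a post-hoc merge of adjacent same-slope progressions to stay within the $d+1$ bound; both variants yield the same output with the same $\Oh(d+1)$ cost.
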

\begin{algorithm}[t]
    \SetKwBlock{Begin}{}{end}
    \SetKwFunction{revs}{Reverse}
    \revs{\(\brkp_{X,Y}(\A) = \{(x-1,y-1,\bot),\dots,(x',y', \bot)\}\)}\Begin{
        \(px \gets\) \(x\); \(py \gets\) \(y\)\;
        \ForEach{\(B \in \brkp_{X,Y}(\A)\)}{
            \Switch{\(B\)}{
                \uCase{\upshape (\(a\), \(b\), {\tt INS})}{
                    {\bf yield} \(\fragmentco{\brpst{\A}{px}}{\brpst{\A}{a}} \gets \fragmentco{px}{a}\)\;
                    {\bf yield} \(\fragment{\brped{\A}{py}}{\brped{\A}{b}} \gets \fragment{py}{b}\)\;
                    \(px \gets a\); \(py \gets b + 1\)\;
                }
                \uCase{\upshape (\(a\), \(b\), {\tt DEL})}{
                    {\bf yield} \(\fragment{\brpst{\A}{px}}{\brpst{\A}{a}} \gets \fragment{px}{a}\)\;
                    {\bf yield} \(\fragmentco{\brped{\A}{py}}{\brped{\A}{b}} \gets \fragmentco{py}{b}\)\;
                    \(px \gets a + 1\); \(py \gets b\)\;
                }
                \Other{
                    {\bf continue}\;
                }
            }
        }
        {\bf yield} \(\fragmentco{\brpst{\A}{px}}{\brpst{\A}{x'}} \gets \fragmentco{px}{x'}\)\;
        {\bf yield} \(\fragmentco{\brped{\A}{py}}{\brped{\A}{y'}} \gets \fragmentco{py}{y'}\)\;
    }
    \caption{Pseudo code for \cref{fct:alignments}.}\label{alg:reverse}
\end{algorithm}
\begin{proof}
    First, observe that for a fixed \(\bar{y}\in \fragment{y}{y'}\),
    we have
    \(\brpst{\A}{\bar{y}} \ne \brped{\A}{\bar{y}}\) if and only if \(\A\) deletes
    \(X\fragmentco{\brpst{\A}{\bar{y}}}{\brped{\A}{\bar{y}}}\).
    Similarly, multiple positions \(y_i,
    y_{i + 1}, \dots,y_{i + a}\) share a common value of \(\brpst{\A}{\bar{y_{\star}}} =
    \brped{\A}{\bar{y_{\star}}}\) if and only if \(\A\) inserts the fragment
    \(X\fragmentco{y_i}{y_{i + a - 1}}\).
    Hence, to compute the values \(\brpst{\A}{\bar{y}}\) and \(\brped{\A}{\bar{y}}\) for all
    \(\bar{y}\), it suffices to do a linear pass over the (augmented) breakpoints:
    we keep a pointer \(p_X\) in \(X\) and a pointer \(p_y\) in  \(Y\), each pointing to
    the character after the last
    insertion or deletion of \(\A\) (or to \(X\position{x}\) and \(Y\position{y}\) at the
    beginning). Now, when we encounter an insertion \((a, b, {\tt INS})\), deletion
    \((a, b, {\tt DEL})\) or the last breakpoint \((a = x', b = y', \bot)\)
    in \(\brkp_{X,Y}(\A)\), we set
    \begin{align*}
        \fragment{\brpst{\A}{p_x}}{\brpst{\A}{a}} \coloneqq \fragment{p_x}{a} \quad\text{and}\quad
        \fragment{\brped{\A}{p_y}}{\brped{\A}{b}} \coloneqq \fragment{p_y}{b}.
    \end{align*}
    Afterward, we update \(x_p\) and \(y_p\) accordingly.
    Consult \cref{alg:reverse} for a detailed pseudo-code.

    The correctness follows immediately from the preceding discussion; for the
    running time, observe that at each of the \(d + 1\) events, our algorithm requires
    constant time. In total, we hence obtain the claim.
\end{proof}

We write $\edp{S}{T} \coloneqq  \min
\{\ed(S,T^\infty\fragmentco{0}{j}) : j \in \mathbb{Z}_{\ge 0}\}$  to denote the minimum edit
distance between a string $S$ and any prefix of~a string $T^\infty$.
Further, we write $\edl{S}{T}\coloneqq  \min\{\ed(S,T^\infty\fragmentco{i}{j}) : i, j \in \mathbb{Z}_{\ge 0}, i \le j\}$
to denote the minimum edit distance between $S$ and any substring of~$T^\infty$,
and we set $\eds{S}{T} \coloneqq   \min\{\ed(S,T^\infty\fragmentco{i}{j|T|}) : i, j \in \mathbb{Z}_{\ge 0}, i \le j|T|\}$.

Next, it is easy to verify that the edit distance satisfies the triangle inequality.
\begin{fact}[Triangle Inequality]\label{Etria}
    Any strings $A$, $B$, and $C$ satisfy
    \[\ed(A, C) + \ed(C, B) \ge \ed(A, B) \ge |\ed(A, C) - \ed(C, B)|.
    \lipicsEnd\]
\end{fact}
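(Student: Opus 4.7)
The plan is to reduce the entire fact to the first inequality $\ed(A,C)+\ed(C,B)\ge \ed(A,B)$; the second inequality then follows by two applications of the first one together with the symmetry $\ed(X,Y)=\ed(Y,X)$. Concretely, from $\ed(A,B)+\ed(B,C)\ge \ed(A,C)$ we get $\ed(A,B)\ge \ed(A,C)-\ed(C,B)$, and from $\ed(A,C)+\ed(A,B)\ge \ed(C,B)$ we get $\ed(A,B)\ge \ed(C,B)-\ed(A,C)$, and these two combine to $\ed(A,B)\ge |\ed(A,C)-\ed(C,B)|$. Hence the work is entirely in establishing the subadditivity statement.

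For the subadditivity, I would take an optimal alignment $\A_1=(a_i,c_i)_{i=0}^{m_1}:A\onto C$ of cost $\ed(A,C)$ and an optimal alignment $\A_2=(c'_j,b_j)_{j=0}^{m_2}:C\onto B$ of cost $\ed(C,B)$, and then \emph{compose} them into an alignment $\A:A\onto B$ whose cost is at most $\ed(A,C)+\ed(C,B)$. Intuitively, for every step of $\A_1$ I advance in $A$ and in $C$ (some steps being pure insertions or deletions with respect to $C$), and then I use $\A_2$ to translate each position of $C$ into a (possibly empty) segment of positions of $B$. Formally, I would walk simultaneously through $\A_1$ and $\A_2$ with a common pointer in $C$: whenever $\A_1$ moves in $C$, I first play out all pure insertions of $\A_2$ at that position of $C$, then the matching/substitution/deletion step of $\A_2$; whenever $\A_1$ does not move in $C$ (it inserts a character of $C$ on the fly, or rather, it deletes a character of $A$), I do not advance in $\A_2$. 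This yields a sequence of steps of the three types required by \cref{def:ali}, with endpoints $(0,0)$ and $(|A|,|B|)$, so it is a valid alignment $A\onto B$.

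The only nontrivial point is the cost bound. Here I would classify the steps of the composed alignment $\A$ by the pair of steps of $\A_1$ and $\A_2$ that produced it. A step of $\A$ is free (a match) only if it stems from a match in $\A_1$ composed with a match in $\A_2$ on the same character of $C$; this uses that matches require character equality, so $A[a_i]=C[c_i]=B[b_j]$. Every other step of $\A$ can be charged to at least one edit operation (insertion, deletion, or substitution) used by $\A_1$ or by $\A_2$, with no edit operation being charged twice. Hence the cost of $\A$ is at most $\ed^{\A_1}(A,C)+\ed^{\A_2}(C,B)=\ed(A,C)+\ed(C,B)$, and by definition $\ed(A,B)\le \ed^\A(A,B)$, giving the first inequality.

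The main obstacle is really just writing the composition carefully enough that the case analysis of the charging argument is unambiguous; everything else is either definition-chasing or the elementary algebra of the second inequality. Symmetry $\ed(X,Y)=\ed(Y,X)$, used in deriving the lower bound half, is immediate from the bijection $\A\mapsto \A^{-1}$ between alignments $X\onto Y$ and $Y\onto X$, which preserves cost (insertions become deletions and vice versa, substitutions stay substitutions, matches stay matches).
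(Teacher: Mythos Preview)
The paper does not actually give a proof of this fact: it is stated with the remark ``it is easy to verify'' and marked with \verb|\lipicsEnd|, indicating the proof is omitted as standard. Your proposal is a correct and standard argument---compose optimal alignments for the upper bound, then derive the lower bound from the upper bound via symmetry---so there is nothing to compare against and nothing to fix.
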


Equally useful is the fact that we can easily remove prefixes (or suffixes) of the same length.
\begin{fact}\label{fct:ed-prefix}
    For any non-empty strings \(A\) and \(B\), we have \[
        \ed(A\fragmentco{1}{|A|}, B\fragmentco{1}{|B|}) \le \ed(A, B)
        \quad\text{and}\quad
        \ed(A\fragmentco{0}{|A|-1}, B\fragmentco{0}{|B|-1}) \le \ed(A, B).
        \tag*{\lipicsEnd}
    \]
\end{fact}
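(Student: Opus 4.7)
The plan is to prove the first inequality; the second then follows by applying the same statement to the reverses of $A$ and $B$, since edit distance is invariant under simultaneously reversing both arguments.

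To establish $\ed(A\fragmentco{1}{|A|}, B\fragmentco{1}{|B|}) \le \ed(A,B)$, I would fix an optimal alignment $\A : A \onto B$ of cost $c \coloneqq \ed(A,B)$ and construct from it an alignment $\A' : A\fragmentco{1}{|A|} \onto B\fragmentco{1}{|B|}$ of cost at most $c$, by case analysis on the first edge of $\A$. If that edge aligns $A\position{0}$ with $B\position{0}$ (whether matching or substituting), I simply drop it, obtaining an alignment $\A'$ of cost $c$ or $c - 1$. Otherwise, $\A$ begins either by deleting $A\position{0}$ or by inserting $B\position{0}$; these two subcases are symmetric, so I focus on the first.

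Assuming $\A$ starts by deleting $A\position{0}$, I would locate the earliest subsequent edge of $\A$ that consumes $B\position{0}$. The optimality of $\A$ precludes this edge from being an insertion of $B\position{0}$: the initial deletion together with such an insertion could be replaced by a single substitution of $B\position{0}$ for $A\position{0}$ (keeping all intermediate deletions intact), strictly reducing the cost. Hence this later edge aligns $B\position{0}$ with some $A\position{i}$ for $i \ge 1$, and the intervening edges of $\A$ are exactly deletions of $A\position{1},\ldots,A\position{i-1}$. I form $\A'$ by first deleting $A\position{1},\ldots,A\position{i}$ (contributing cost $i$) and then appending the suffix of $\A$ that follows the $A\position{i}\leftrightarrow B\position{0}$ edge; by construction, this suffix is already an alignment of $A\fragmentco{i+1}{|A|}$ onto $B\fragmentco{1}{|B|}$ of cost $c - i - \alpha$, where $\alpha \in \{0,1\}$ is the cost of that edge in $\A$. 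Adding up, $\A'$ has cost $c - \alpha \le c$, as required.

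The only subtle point is the optimality argument that rules out a deletion-then-insertion (or insertion-then-deletion) configuration at the start of $\A$; once that is noted, the construction and the accompanying cost accounting in each case are routine bookkeeping on alignment edges.
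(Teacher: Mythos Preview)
Your proof is correct. The paper does not actually prove this statement; it is recorded as a \texttt{Fact} with the \lipicsEnd\ marker inside the display, which is the paper's convention for well-known results stated without proof. So there is no ``paper's proof'' to compare against, and your argument stands on its own.

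One remark: the case analysis you carry out in the deletion subcase (locating the first edge that consumes $B\position{0}$ and invoking optimality to rule out an insertion there) is valid, but you can shortcut it via the triangle inequality. If $\A$ begins by deleting $A\position{0}$, then dropping that edge yields an alignment witnessing $\ed(A\fragmentco{1}{|A|},B)\le c-1$; since $\ed(B,B\fragmentco{1}{|B|})=1$, the triangle inequality gives $\ed(A\fragmentco{1}{|A|},B\fragmentco{1}{|B|})\le (c-1)+1=c$. The insertion subcase is symmetric. This avoids the optimality argument entirely, though your version has the mild advantage of producing an explicit witness alignment.
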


Combining \cref{fct:ed-prefix,fct:ali}, we see that we can ``cut''
fragments of equal length out of an optimal alignment.

\begin{corollary}\label{cor:al-cut-out}
    Let \(A = A_p \cdot A_s\) and \(B\) denote non-empty strings and write \(\A : A \onto B\)
    for an optimal alignment of \(A\) onto \(B\). For any \(u \in
    \fragmentco{0}{\min\{|A_s|, |\A(A_s)|\}}\), we have\[
        \ed(A_p, \A(A_p)) + \ed(A_s\fragmentco{u}{|A_s|},
        \A(A_s)\fragmentco{u}{|\A(A_s)|} )
        \le \ed(A, B).
    \tag*{\lipicsEnd}
    \]
\end{corollary}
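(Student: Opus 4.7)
The plan is a direct two-step reduction using the two facts immediately preceding the corollary. First, I would apply \cref{fct:ali} to the optimal alignment $\A : A \onto B$ with the decomposition $A = A_p \cdot A_s$. By the equality part of that fact (which holds precisely because $\A$ is optimal), the induced decomposition $B = \A(A_p)\cdot \A(A_s)$ satisfies
\[
\ed(A,B) = \ed^{\A}(A,B) = \ed(A_p, \A(A_p)) + \ed(A_s, \A(A_s)).
\]

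Second, I would reduce $\ed(A_s, \A(A_s))$ by stripping off $u$ characters from the front. The condition $u \in \fragmentco{0}{\min\{|A_s|, |\A(A_s)|\}}$ guarantees that, after removing any $u' \le u$ prefix characters, both strings remain non-empty, so \cref{fct:ed-prefix} applies iteratively. A straightforward induction on $u$ yields
\[
\ed\bigl(A_s\fragmentco{u}{|A_s|},\, \A(A_s)\fragmentco{u}{|\A(A_s)|}\bigr) \le \ed(A_s, \A(A_s)).
\]
Combining this inequality with the displayed equality from the first step gives the desired bound. Since both steps are immediate invocations of the cited facts, there is no real obstacle; the only point worth verifying carefully is that the range of $u$ is exactly what is needed for the inductive application of \cref{fct:ed-prefix} to remain valid at every intermediate step.
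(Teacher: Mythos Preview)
Your proposal is correct and matches the paper's approach exactly: the paper states the corollary as an immediate consequence of combining \cref{fct:ali} and \cref{fct:ed-prefix}, which is precisely the two-step argument you give. One small remark: you only need the inequality part of \cref{fct:ali} (since $\ed(A,B)=\ed^{\A}(A,B)\ge \ed(A_p,\A(A_p))+\ed(A_s,\A(A_s))$ already suffices), so invoking optimality for the equality is slightly more than necessary, but it does no harm.
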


We conclude with the easy, but useful, observation that no \(k\)-error occurrence of a pattern may start
in the final part of the text.

\begin{fact}\label{ft:no-occs-suffix}
    For any text $T$ of length $n$ and any pattern $P$ of length $m$ we have \(\OccE_{k}( P, T ) \cap \fragmentoo{n - m + k}{n} = \emptyset\).
    \lipicsEnd
\end{fact}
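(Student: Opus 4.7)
The plan is to derive the claim by a short length-counting argument: any fragment $T\fragmentco{v}{w}$ starting at $v \in \fragmentoo{n-m+k}{n}$ is strictly shorter than $m-k$, and hence too far from $P$ (of length $m$) in edit distance.

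More concretely, fix any $v \in \fragmentoo{n-m+k}{n}$ and any $w \ge v$ for which $T\fragmentco{v}{w}$ is a fragment of $T$, so that $w \le n$. The length of this fragment satisfies
\[
|T\fragmentco{v}{w}| = w - v \le n - v \le n - (n-m+k+1) = m - k - 1,
\]
where the final inequality uses $v \ge n-m+k+1$, which is exactly the condition $v \in \fragmentoo{n-m+k}{n}$ for integer $v$.

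Now I would invoke the standard lower bound that edit distance dominates length difference: since each insertion or deletion changes the length by exactly one and substitutions preserve length, $\ed(X,Y) \ge \big||X|-|Y|\big|$ for all strings $X,Y$. Applying this to $P$ and $T\fragmentco{v}{w}$ yields
\[
\ed(P, T\fragmentco{v}{w}) \ge |P| - |T\fragmentco{v}{w}| \ge m - (m-k-1) = k+1 > k.
\]
Since this holds for every admissible $w$, no such $v$ can lie in $\OccE_k(P,T)$, which is exactly the claim.

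The argument is essentially a single observation, so there is no real obstacle; the only minor care needed is in parsing the half-open interval $\fragmentoo{n-m+k}{n}$ (integers strictly between $n-m+k$ and $n$) to extract the bound $v \ge n-m+k+1$ used above. All other ingredients — the length bound on a fragment $T\fragmentco{v}{w}$ and the elementary inequality $\ed(X,Y) \ge \big||X|-|Y|\big|$ — are immediate from the definitions in \cref{sec:prel}.
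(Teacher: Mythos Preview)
Your proof is correct and is precisely the length-counting argument one would expect; the paper in fact omits any proof for this fact (it is stated with the end-of-statement marker and no proof environment), treating it as an immediate consequence of $\ed(X,Y)\ge\big||X|-|Y|\big|$.
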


\subsection{The \modelname Model}\label{sec:pillar}

To unify the implementations of algorithms for (approximate) pattern matching problems in
different settings, \cite{unified} introduced the \modelname model. We use the same
\modelname model in this work.
In particular, we bound the running times of the algorithms in this work in terms of the number of
calls to a small set of very common operations (the primitive \modelname operations) on
strings.\footnote{If our algorithms require (asymptotically)
significant extra computations, we also specify the required extra running time.}
Together with the implementations of said primitive
operations (presented in detail in~\cite{unified}), we then obtain (fast) algorithms for
various different settings at once. This includes the standard setting, the fully compressed
setting, and a dynamic setting. We provide more details in \cref{sec:impl}.

To keep the \modelname model flexible, we do not directly work on specific representations
of (the input) strings. Instead, in the \modelname model, we maintain a collection of
strings~\(\X\); the operations in the \modelname model work on fragments
\(X\fragmentco{\ell}{r}\) of \(X \in \X\), which are represented via a
\emph{handle}.\footnote{The implementation details depend on the specific setting.
In the standard setting, a fragment $X\fragmentco{\ell}{r}$ is represented by a reference to $X$ and the endpoints $\ell,r$.}
At the start of the computation, the \modelname model provides a handle to each \(X \in
\X\), which represents \(X\fragmentco{0}{|X|}\). Using an \extractOpName operation, we can
obtain handles to other fragments of the strings in \(\X\)~\cite{unified}:
\begin{itemize}
    \item $\extractOpName(S,\ell,r)$: Given a fragment $S$ and positions $0 \le \ell \le r
        \le |S|$, extract the (sub)fragment $S\fragmentco{\ell}{r}$. If
        $S=X\fragmentco{\ell'}{r'}$ for $X\in \X$, then $S\fragmentco{\ell}{r}$ is defined
        as $X\fragmentco{\ell'+\ell}{\ell'+r}$.
\end{itemize}
The other primitive \modelname model operations read as follows~\cite{unified}:
\begin{itemize}
    \item $\lceOp{S}{T}$: Compute the length of~the longest common prefix of~$S$ and $T$.
    \item $\lcbOp{S}{T}$: Compute the length of~the longest common suffix of~$S$ and $T$.
    \item $\ipmOp{P}{T}$: Assuming that $|T|\le 2|P|$, compute $\OccEx(P,T)$ (represented
        as an arithmetic progression with difference $\per(P)$).
    \item $\accOpName(S,i)$: Assuming $i\in \fragmentco{0}{|S|}$, retrieve the character $\accOp{S}{i}$.
    \item $\lenOpName(S)$: Retrieve the length $|S|$ of~the string $S$.
\end{itemize}

As working just with the primitive \modelname model operations is rather bothersome,
\cite{unified} also provides a useful toolbox of operations already implemented.
For brevity, we only list these operations here and refer to \cite{unified} for
(pointers to) their implementation.
\begin{fact}[\modelname Toolbox,
    \cite{unified}]\label{fct:toolbox}\label{lm:streq}\label{lm:inflcp}\label{lm:inflcpold}\label{lm:inflcpR}\label{lm:emath}\label{lm:verifye}
    The \modelname model supports all of the following operations.
    \begin{itemize}
        \item Equality~\cite[Fact 2.5.2]{thesis}:
            For strings $S$ and $T$, we can  check whether $S$ and $T$ are equal
            in $\Oh(1)$ time in the \modelname model.
        \item $\perOp{S}$,~\cite{IPM,thesis}:
            For a string $S$, we can compute $\per(S)$ or declare that $\per(S) > |S| / 2$
            in $\Oh(1)$ time in the \modelname model.
        \item $\cycEqOp{S}{T}$,~\cite{IPM,thesis}:
            For strings $S$ and $T$, we can find all integers $j$ such that $T = \rot^j(S)$
            in $\Oh(1)$ time in the \modelname model. The output is represented as an arithmetic
            progression.
        \item $\lceOp{S}{Q^{\infty}\fragmentco{\ell}{r}}$,~\cite{unified}, see also~\cite[Fact 2.5.2]{thesis} and~\cite{BabenkoGKKS16}:
            For strings $S$ and $Q$ and integers $0 \le \ell \le r$, we can compute
            $\lceOp{S}{Q^{\infty}\fragmentco{\ell}{r}}$ in $\Oh(1)$ time in the \modelname
            model.
        \item $\lcbOp{S}{Q^{\infty}\fragmentco{\ell}{r}}$,~\cite{unified}:
            For strings $S$ and $Q$ and integers $0 \le \ell \le r$, we can compute
            $\lcbOp{S}{Q^{\infty}\fragmentco{\ell}{r}}$ in $\Oh(1)$ time in the \modelname
            model.
        \item {\tt ExactMatches}$(P, T)$,~\cite{unified}:
            Let $T$ denote a string of~length $n$ and let $P$ denote a string of~length~$m$.
            We can compute the set $\OccEx(P, T)$ using $\Oh(n/\per(P))$ time and $\Oh(n/m)$
            \modelname operations.
        \item {\tt Verify($P$, $T$, $k$, $I$)}, {\cite[Section 5]{ColeH98}; see
            also~\cite{unified}}:
            Let $P$ denote a string of~length $m$, let $T$ denote a string,
            and let $k \le m$ denote a positive integer.
            Further, let $I$ denote an interval of~positive integers. Using $\Oh(k(k + |I|))$
            \modelname operations, we can compute $\{(\ell,\min_r \ed(P,T\fragmentco{\ell}{r}))
            : \ell \in \OccE_k(P, T)\cap I\}$.
            \lipicsEnd
    \end{itemize}
\end{fact}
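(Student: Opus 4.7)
The statement is a compilation fact that bundles several \modelname-model primitives (equality, period, rotations, infinite-power LCP/LCP$^R$, \texttt{ExactMatches}, and \texttt{Verify}) together with their cited sources. My plan is therefore to sketch how each item reduces to the five core \modelname operations (\lceOpName, \lcbOpName, \ipmOpName, \accOpName, \lenOpName), since no new combinatorial insight is required beyond the previous work.

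For \emph{equality}, I would observe that $S=T$ iff $|S|=|T|$ and $\lceOp{S}{T}=|S|$, which is one call each to \lenOpName{} and \lceOpName. For \emph{period} computation, the idea is to invoke \ipmOpName{} to locate occurrences of $S\fragmentco{0}{\ceil{|S|/2}}$ in $S$: the smallest nontrivial shift among these (if any) equals $\per(S)$, while absence of such a shift means $\per(S)>|S|/2$. For \emph{rotations}, I would use the classical fact that $T=\rot^j(S)$ iff $|T|=|S|$ and $T$ occurs in $SS$ at position $|S|-j$; thus one \ipmOpName{} call on $(T,SS)$ (realized by extracting a length-$2|S|$ fragment, which requires bundling \extractOpName{} with concatenation-like bookkeeping handled in~\cite{unified}) yields the desired arithmetic progression of valid $j$.

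For \emph{infinite-power LCP}, that is $\lceOp{S}{Q^\infty\fragmentco{\ell}{r}}$, I would reduce to at most two ordinary \lceOpName{} calls: first compute $\ell_1\coloneqq\lceOp{S}{Q^\infty\fragmentco{\ell}{\ell+|Q|}}$ by aligning $S$ against the rotation $\rot^{\ell\bmod|Q|}(Q)$ (extracted via two \extractOpName{}s and concatenated), and then, if $\ell_1$ reaches the end of one period, continue with $\lceOp{S\fragmentco{\ell_1}{|S|}}{Q^\infty\fragmentco{\ell+\ell_1}{r}}$; since after the first full period the alignment is with $Q$ itself (up to offset), the recursion folds into a constant number of primitive calls. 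The \lcbOpName{} variant is fully symmetric.

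Finally, \texttt{ExactMatches}$(P,T)$ is handled by the standard trick of covering $T$ with $\Oh(n/m)$ overlapping blocks of length $2|P|$ and running \ipmOpName{} inside each, then merging the resulting arithmetic progressions (all with difference $\per(P)$) in $\Oh(n/\per(P))$ time. For \texttt{Verify}, I would deploy Landau--Vishkin kangaroo jumping restricted to starting positions in $I$: maintain the $\Oh(k^2)$ diagonal DP cells, and extend each cell to the next mismatch via a single infinite-power \lceOpName{} call, giving $\Oh(k(k+|I|))$ primitive calls in total. The only conceptual obstacle throughout is making these procedures oblivious to how fragments are represented, but this is precisely what the \modelname abstraction enforces and is exactly what the cited references~\cite{IPM,thesis,unified,ColeH98} have already carried out; the present fact merely re-exports their guarantees in a uniform form.
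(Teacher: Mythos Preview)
The paper does not give a proof of this fact at all: it is stated as a citation-only ``toolbox'' fact, with the surrounding text saying explicitly ``For brevity, we only list these operations here and refer to~\cite{unified} for (pointers to) their implementation.'' So there is nothing to compare your argument against; you have correctly identified the nature of the statement and gone beyond the paper by sketching the underlying implementations.

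Your sketches are largely on target, but a couple of details would not go through as written. For \emph{rotations}, you cannot form the string $SS$ in the \modelname model---there is no concatenation primitive, only \extractOpName{} on fragments of the fixed input collection---so the ``occurs in $SS$'' reduction is not available; the actual implementation in~\cite{IPM,thesis} instead works by computing $\per(S)$ and $\per(T)$ and comparing primitive roots via \lceOpName. For \emph{infinite-power} \lceOpName, your ``recursion folds into a constant number of calls'' hides the key step: once aligned to a period boundary, one computes $\lceOp{S}{S\fragmentco{|Q|}{|S|}}$ to find the length of the longest $|Q|$-periodic prefix of $S$, which is what keeps the call count constant rather than proportional to the number of period repetitions. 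Finally, for \texttt{Verify}, ordinary \lceOpName{} between fragments of $P$ and $T$ suffices for Landau--Vishkin; no infinite-power variant is needed there.
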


\SetKwFunction{verify}{Verify}
\SetKwFunction{alignment}{Alignment}

Finally, we observe that the adaptation of the Landau--Vishkin algorithm~\cite{LandauV89}
already provided in~\cite{unified} lets us efficiently compute an optimal alignment of a string $S$
onto a substring of $Q^\infty$ starting at a given position $x\in \Z$.

\begin{lemma}[{$\protect\alignment(S,Q,x)$}]\label{fct:alignment}
    Consider non-empty strings $S,Q$ and an integer $x\in \Z$.
    Using $\Oh(1+d^2)$ \modelname operations, we can construct (the breakpoint representation of) an alignment
    $\A: S\onto Q^\infty\fragmentco{x}{y}$ of optimal cost $d=\edp{S}{\rot^{-x}(Q)}$.
\end{lemma}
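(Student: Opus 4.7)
The plan is to adapt the Landau--Vishkin ``kangaroo-jumping'' procedure~\cite{LandauV89} to the \modelname model, using $\lceOp{\cdot}{Q^\infty\fragmentco{\ell}{r}}$ in place of ordinary LCP queries on an explicit text. Observe that for every $j\ge 0$, the fragment $Q^\infty\fragmentco{x}{x+j}$ is equal to the string $\rot^{-x}(Q)^\infty\fragmentco{0}{j}$, so the target cost $d = \edp{S}{\rot^{-x}(Q)}$ is precisely $\min_{y\ge x}\ed(S, Q^\infty\fragmentco{x}{y})$. Hence, running semi-local Landau--Vishkin in the alignment graph whose rows are indexed by $S$ and whose columns are indexed by $Q^\infty$ starting at position $x$, with the endpoint in the column axis left free, returns exactly the quantity we want.

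Concretely, for a guess $d'$, I maintain the standard furthest-reach values $L(e,k)$ for $e\in\fragment{0}{d'}$ and $k\in\fragment{-e}{e}$, where $L(e,k)$ is the largest $i\in\fragment{0}{|S|}$ such that the vertex corresponding to matching $S\fragmentco{0}{i}$ against $Q^\infty\fragmentco{x}{x+i+k}$ is reachable with at most $e$ edits. The recurrence
\[
L(e,k) = \max\bigl\{L(e-1,k-1),\; L(e-1,k)+1,\; L(e-1,k+1)+1\bigr\}
\]
is followed by a single diagonal extension
\[
L(e,k) \mathrel{+}= \lceOp{S\fragmentco{L(e,k)}{|S|}}{Q^\infty\fragmentco{x+L(e,k)+k}{x+|S|+k}},
\]
which is an $\Oh(1)$-cost \modelname operation by \cref{fct:toolbox} (after one $\extractOpName$ call to name the relevant suffix of~$S$). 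As soon as some $k^\star$ satisfies $L(e,k^\star) = |S|$, I set $d\coloneqq e$ and $y\coloneqq x + |S| + k^\star$. If no such $k^\star$ appears within budget $d'$, I double $d'$; the geometric sum gives a total of $\Oh(1 + d^2)$ \modelname operations, matching the claim.

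To produce the breakpoint representation, I store together with each $L(e,k)$ a one-word back-pointer recording which of the three options in the recurrence supplied its value, as well as the number of characters gained from the subsequent $\lceOp{}{}$ extension. After the optimal $(d,k^\star)$ is identified, I walk the back-pointer chain from $(d,k^\star)$ to $(0,0)$; each back-pointer contributes one augmented breakpoint of type $\mathtt{INS}$, $\mathtt{DEL}$, or $\mathtt{SUB}$ at the appropriate $(x_i,y_i)$, while the recorded extension lengths account for the runs of matched characters between consecutive breakpoints. Prepending $(-1,-1,\bot)$ and appending $(|S|,y-x,\bot)$ yields the full set $\brkp_{S,Q^\infty\fragmentco{x}{y}}(\A)$, sorted by construction, in $\Oh(1+d)$ additional time outside the \modelname model. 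The only subtlety is that $x+L(e,k)+k$ may be negative; this is harmless because the convention $Q^\infty\position{i}\coloneqq Q\position{i\bmod |Q|}$ from \cref{sec:prel} keeps every $\lceOp{\cdot}{Q^\infty\fragmentco{\cdot}{\cdot}}$ call well-defined, and the primitive itself is insensitive to the sign of the starting index.
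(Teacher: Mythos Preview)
Your proposal is correct and takes essentially the same approach as the paper: both run Landau--Vishkin kangaroo jumping against $Q^\infty$ using the \modelname primitive $\lceOp{\cdot}{Q^\infty\fragmentco{\cdot}{\cdot}}$ (the paper simply delegates this to \cite[Lemma~6.1]{unified} and runs it incrementally for $e=0,1,\ldots,d$, so your doubling search is an unnecessary complication but still within the stated bound). One cosmetic slip: since the alignment is onto $Q^\infty\fragmentco{x}{y}$, the sentinel breakpoints should be $(-1,x-1,\bot)$ and $(|S|,y,\bot)$ rather than $(-1,-1,\bot)$ and $(|S|,y-x,\bot)$, and the \cref{fct:toolbox} statement of $\lceOp{S}{Q^\infty\fragmentco{\ell}{r}}$ formally requires $\ell\ge 0$, so you should shift by a multiple of $|Q|$ rather than relying on the primitive accepting negative indices.
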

\begin{proof}
    Set $\hat{Q} \coloneqq \rot^{-x}(Q)$.
    \cite[Lemma 6.1]{unified} provides an online algorithm that, for subsequent integers $k\in \Zz$, computes the longest prefix $S_k$ of $S$ such that $\edp{S_k}{\hat{Q}}\le k$, as well as a witness length $\ell_k\in \Zz$ and alignment $\A_k : S_k \onto \hat{Q}\fragmentco{0}{\ell_k}$ of cost at most $k$.
    We run this algorithm until $S_k=S$ holds for the first time, which indicates that $k=\edp{S}{\hat{Q}}=d$, and return the final alignment $\A_d$, reinterpreted as an alignment $\A:S \onto Q\fragmentco{x}{y}$ for $y\coloneqq x+\ell_d$.

    A minor subtlety is that \cite[Lemma 6.1]{unified} provides a slightly weaker representation of the alignment $\A_d$, with $(i,j)$ for an substitution of $S\position{i}$ to $\hat{Q}\position{j}$, $(i,\bot)$ for a deletion of $S\position{i}$, and $(\bot,j)$ for an insertion of $\hat{Q}\position{j}$.
    Nevertheless, a left-to-right scan of this representation, keeping track of the shift $\delta$, equal to the number insertions processed so far minus number of deletions processed so far, lets us produce the breakpoints of $\A$, that is, $(-1,x-1,\bot)$ at the beginning, $(i,j+x,{\tt SUB})$ for each substitution $(i,j)$, $(i, i+\delta+x, {\tt DEL})$ for each deletion $(i,\bot)$, $(j-\delta, j+x, {\tt INS})$ for each insertion ($\bot, j)$, and $(|S|,y,\bot)$ at the end.

    In the \modelname model, the algorithm of \cite[Lemma 6.1]{unified} takes $\Oh(1)$ preprocessing time and $\Oh(1+k)$ time for the $k$th step (for $k\in \Zz$).
    Thus, running it for all $k\in \fragment{0}{d}$ costs $\Oh(1+d^2)$ time in total.
    The post-processing of the alignment takes $\Oh(1+d)$ extra time.
\end{proof}

\subsection{An Overview of an $\Oh(k^{4})$-Time Algorithm for Pattern Matching with Edits in the \modelname Model}
\label{sec:oldalgo}

Before we discuss the new algorithm for pattern matching with edits,
we briefly recall the algorithm presented
in~\cite{unified} that runs in \(\Oh(k^4)\) time in the \modelname model when $n = \cO(m)$.\footnote{This translates to a running time of \(\Oh(n+
    k^4)\) in the standard setting (under the assumption that $n = \cO(m)$).}
Thereby, we can naturally introduce required notations and concepts.
Further, we highlight the bottleneck in the aforementioned algorithm; all other parts
run in \(\Oh(k^3)\) time in the \modelname model, which translates to a running
time of $\Oh(n + k^3)$ in the standard setting.

Toward obtaining algorithms for \PMWE,
first observe that it suffices to focus on a bounded-ratio version
of said problem, where $n < \threehalfs m + k$ (consult~\cite{unified} for a rigorous
proof):
Given a text of arbitrary length $n$, we can reduce the problem to $\floor{2n/m}$ instances of the bounded-ratio version of the problem
using the so-called \emph{standard trick}~\cite{Abrahamson}. 
That is,
we consider the overlapping fragments
$T_i \coloneqq  T\fragmentco{\floor{i\cdot {m}/2}} {\min\{n, \floor{(i+3)\cdot {m}/2} + k - 1\}}$,
for \(i \in \fragmentco{0}{\floor{2n/m}-1}\),
and compute the
approximate occurrences of $P$ in each of $T_0, \dots, T_{\floor{2n/m}-1}$.
Afterward, we straightforwardly merge the obtained partial results.
Thus, an algorithm that solves the bounded-ratio version of the
approximate pattern matching problem in time $f(k)$ in the \modelname model,
directly yields an algorithm that solves an arbitrary instance of the
approximate pattern matching problem in time $\cO(n/m)\cdot f(k)$ in the \modelname model.

Next, as a first step in solving the bounded-ratio version of
\PMWE,
we analyze the pattern according to \cite[Lemma 6.4]{unified}.\footnote{When citing specific
statements of~\cite{unified}, we do so using their numbering in the full (arXiv) version of
the paper.} 

\begin{lemma}[{\tt Analyze($P$, $k$)},~{\cite[Lemma 6.4]{unified}}]\label{prp:EIalg}
    Let $P$ denote a string of~length $m$ and let $k \le m$ denote a positive integer.
    Then, there is an algorithm that computes one of~the following:
    \begin{enumerate}[(a)]
        \item $2k$ disjoint breaks $B_1,\ldots, B_{2k} \substr P$,
            each having period $\per(B_i)> m/\alphav k$ and length $|B_i| = \lfloor
            m/\betav k\rfloor$.
        \item Disjoint repetitive regions $H_1,\ldots, H_{r} \substr P$
            of~total length $\sum_{i=1}^r |H_i| \ge \deltavN/\deltavD \cdot m$ such
            that each region~$H_i$ satisfies
            $|H_i| \ge m/\betav k$ and is constructed along with a primitive approximate
            period $Q_i$
            such that $|Q_i| \le m/\alphav k$ and $\edl{H_i}{Q_i} = \ceil{\betav k/m\cdot
                |H_i|}$.\footnote{Observe that we renamed the repetitive regions to
            \(H_{\star}\), as we use \(R_{\star}\) later with a different meaning.}
        \item A primitive approximate period $Q$ of~$P$
            with $|Q|\le m/\alphav k$ and $\edl{P}{Q} < \betav k$.\label{item:alm-per}
    \end{enumerate}
    \noindent The algorithm uses $\Oh(k^2)$ time plus $\Oh(k^2)$ \modelname
    operations.\lipicsEnd
\end{lemma}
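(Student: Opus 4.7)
The plan is to partition the pattern into contiguous blocks and classify each by the size of its period. Using the \perOpName primitive from \cref{fct:toolbox}, compute the period of each of the $\betav k$ disjoint length-$\lfloor m/\betav k\rfloor$ blocks obtained by cutting $P$ from left to right; this costs $\Oh(k)$ \modelname operations. Call a block a \emph{break} if its period exceeds $m/\alphav k$.

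If at least $2k$ blocks are breaks, output any $2k$ of them as case~(a): the length and period requirements hold by construction and disjointness is immediate. Otherwise fewer than $2k$ blocks are breaks, so more than $6k$ blocks are \emph{periodic} (period at most $m/\alphav k$). Scan left to right, grouping maximal contiguous stretches of non-break blocks into runs. For each such stretch $S$, take $Q$ to be the string period of its leftmost block; since $|Q|\le m/\alphav k$ is far smaller than a block, the uniqueness of short periods forces all consecutive non-break blocks in $S$ to share $Q$ up to rotation, which is verified by one \lceOpName call per internal boundary against $Q^\infty$. Extend $S$ in both directions greedily using \lceOpName and \lcbOpName against $Q^\infty$, absorbing mismatches as long as the running amortized edit budget $\lceil \betav k/m\cdot |H|\rceil$ is respected, and thus obtain a candidate repetitive region $H$ together with its approximate period $Q$. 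Apply $\alignment(H,Q,x)$ from \cref{fct:alignment} with $x$ chosen to match the left boundary of $H$ inside the periodic extension to certify the value $\edl{H}{Q}$ in $\Oh(1+\edl{H}{Q}^2)$ \modelname operations. Since the total edit-distance budget across all candidate regions is $\Oh(k)$, the aggregate cost is $\Oh(k^2)$ \modelname operations.

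If a single candidate region, after a final extension, covers $P$ entirely with $\edl{P}{Q}<\betav k$, return $Q$ as case~(c). Otherwise return the candidate regions $H_1,\ldots,H_r$ as case~(b); each satisfies $|H_i|\ge m/\betav k$ because each is seeded by at least one full block, and the total length they cover is at least the total length of the periodic blocks themselves, namely $6k\cdot \lfloor m/\betav k\rfloor\ge \deltavN/\deltavD\cdot m$ for the choices $\betav=8$ and $\deltavN/\deltavD=3/8$.

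The main obstacle is the amortized extension step: each greedy extension must simultaneously preserve the exact identity $\edl{H_i}{Q_i}=\lceil \betav k/m\cdot |H_i|\rceil$ without double-counting edits that properly belong to a neighboring region, and must resist being over-extended into a short break whose period coincides with $Q_i$ up to rotation. The subtle subcase is when two adjacent periodic stretches share the same short approximate period but are separated by a single break that itself fits inside the amortized budget; these must be fused into a single region, and distinguishing this situation from the case~(c) condition (which requires the fused region to cover all of $P$ with fewer than $\betav k$ edits) is the technical core on which the balance of the constants $\alphav=128$, $\betav=8$, and $\deltavN/\deltavD=3/8$ depends.
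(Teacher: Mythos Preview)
This lemma is not proved in the paper; it is quoted verbatim from \cite[Lemma~6.4]{unified} and carries no proof here. So there is no in-paper argument to compare against. Your block-partition-and-classify strategy is the right high-level shape and matches the approach of \cite{unified}, but the sketch has real gaps beyond the ones you flag in your final paragraph.

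The central error is the sentence ``the uniqueness of short periods forces all consecutive non-break blocks in $S$ to share $Q$ up to rotation.'' This is false: the blocks are \emph{disjoint}, so Fine--Wilf gives you nothing. Two adjacent non-break blocks can have completely unrelated short periods---take $B_i=(\texttt{ab})^{\lfloor m/16k\rfloor}$ followed by $B_{i+1}=(\texttt{cd})^{\lfloor m/16k\rfloor}$. Your \lceOpName ``verification'' would then fail, and you do not say what happens next; simply splitting into separate regions at that point can leave you with many tiny regions whose total length no longer meets the $3m/8$ bound. The actual construction in \cite{unified} does not group runs of periodic blocks and then hope they agree; it processes $P$ left to right, greedily extending a single candidate region with a fixed $Q$ (absorbing errors up to the amortized budget) and only starting a fresh region when that extension genuinely exhausts the budget---this is what makes the \emph{equality} $\edl{H_i}{Q_i}=\lceil 8k/m\cdot|H_i|\rceil$ come out exactly rather than as an inequality. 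Your greedy step, as written, does not establish that equality, and your own ``main obstacle'' paragraph is effectively conceding that the fusion/termination logic is not worked out.
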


For the case where the analysis of the pattern using \cref{prp:EIalg} yields an approximate
period $Q$, we use the algorithm encapsulated in the following lemma.

\begin{fact}[{\tt PeriodicMatches($P$, $T$, $k$, $d$, $Q$)},~{\cite[Lemma 6.11]{unified}}]\label{lm:impEdC}
    Let $P$ denote a pattern of length $m$ and let $T$ denote a text of length $n$.
    Further, let $k \in \fragment{0}{m}$ denote a threshold, let $d\ge 2k$ denote a positive
    integer, and let $Q$ denote a primitive string that satisfies $|Q|\le m/8d$ and
    $\edl{P}{Q} \le d$.

    Then, we can compute a representation of the set $\OccE_k(P,T)$ as $\cO(n/m\cdot d^3)$ disjoint arithmetic progressions
    with difference $|Q|$
    using $\Oh(n/m\cdot d^4)$ time in the \modelname model.\footnote{The representation of $\OccE_k(P,T)$ is not stated explicitly in \cite[Lemma 6.11]{unified}.}\lipicsEnd
\end{fact}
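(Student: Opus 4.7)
The plan is to follow the two-step blueprint foreshadowed in the technical overview for the related \SM problem. Since the statement concerns an arbitrary text of length $n$, I would first apply the \emph{standard trick} (as recalled in the discussion before \cref{prp:EIalg}) to reduce to $\Oh(n/m)$ overlapping sub-instances in which the text has length $\Theta(m)$. It then suffices to show that each sub-instance can be solved in $\Oh(d^4)$ \modelname operations while producing $\Oh(d^3)$ arithmetic progressions with difference $q\coloneqq |Q|$; linearity over sub-instances immediately yields the claim. For the remainder I focus on a single sub-instance text $T'$ of length $\Oh(m)$.

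The next step is to analyze $T'$ for approximate $Q$-periodicity. Running the algorithm behind \cref{fct:alignment} with a cost budget capped at $\Oh(d)$, I would in $\Oh(d^2)$ operations either obtain an alignment $\A_{T'}\colon T' \onto Q^\infty\fragmentco{x_{T'}}{y_{T'}}$ of cost $\edl{T'}{Q}\le 3d$, or certify $\edl{T'}{Q}>3d$. In the second case, the triangle inequality together with $\edl{P}{Q}\le d$ forces $\ed(P,T'\fragmentco{v}{w})>k$ for all fragments $T'\fragmentco{v}{w}$ whose own distance to $Q^\infty$ exceeds $d+k$; thus the few possible $k$-error starting positions lie inside a bounded number of $\Oh(d)$-wide windows where $T'$ \emph{is} close to $Q^\infty$, and each such window can be processed by \texttt{Verify} (\cref{fct:toolbox}) in $\Oh(d^2)$ operations. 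The interesting case is therefore the first one, where both $P$ and $T'$ are close to powers of $Q$. Here I would establish, as in the technical overview of \SM, that any $k$-error occurrence of $P$ in $T'$ starts in one of $\Oh(m/q)$ candidate bands of width $\Oh(d)$ around positions of the form $jq+x_{T'}$: since $Q$ is primitive and $\edl{P}{Q}+\edl{T'}{Q}\le 4d$, an alignment of cost at most $k<d$ must perfectly match at least one copy of $Q$ inside $P$ with a copy of $Q$ inside $T'$, up to an additive $\Oh(d)$ slack induced by the edits in $\A_P$ and $\A_{T'}$.

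For each candidate band I would compute the $k$-error occurrences using a diagonal-band edit-distance dynamic program restricted to a sub-fragment $R_j$ of $T'$ of length $m+\Oh(d)$; this costs $\Oh(d^3)$ operations per band, for a naive total of $\Oh(m/q \cdot d^3)$ operations per sub-instance, which unfortunately can exceed $\Oh(d^4)$ when $q\ll m/d$. To reach the target bound, I would exploit that successive bands $R_j$ and $R_{j+1}$ share all but $\Oh(d)$ of their $Q$-tiles, so their $k$-error occurrence sets are equal up to an additive shift by $q$ on all but $\Oh(d^3)$ "non-trivial" starting positions across all bands combined; formally this is the same trimmed-DPM-sequence collapse argument sketched in the overview, and it both keeps the total work at $\Oh(d^4)$ per sub-instance and enables the compact representation as $\Oh(d^3)$ arithmetic progressions with common difference $q$. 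The main obstacle is precisely this last step: proving the structural lemma that shifting by $q$ maps occurrences to occurrences \emph{except} on a set controlled by the $\Oh(d)$ special tiles of $P$ and $T'$, and then turning that structural statement into an algorithm that avoids recomputing $\OccE_k(P,R_j)$ for each of the $\Oh(m/q)$ shifts independently.
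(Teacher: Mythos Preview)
The paper does not prove this statement: it is presented as a \textbf{Fact} quoted from \cite[Lemma 6.11]{unified}, with no accompanying proof (note the terminal \texttt{\textbackslash lipicsEnd}). There is thus no proof in this paper to compare your proposal against. The paper does re-derive closely related machinery in Sections~3--4 as a stepping stone toward the faster algorithm, but that development targets the \SM problem after the reduction of \cref{lem:erelevant,lem:witness,lem:apat} has already been applied, and yields a slightly slower $\cO(d^4\log n\log d)$ bound via the \DPM framework rather than re-proving \cref{lm:impEdC} directly.

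Your proposal does contain a genuine gap, in the handling of the case $\edl{T'}{Q}>3d$. From the triangle inequality you correctly deduce that any $k$-error occurrence $T'\fragmentco{v}{w}$ must satisfy $\edl{T'\fragmentco{v}{w}}{Q}\le d+k$, but you then assert without justification that such starting positions $v$ are confined to ``a bounded number of $\Oh(d)$-wide windows.'' Nothing you have said bounds the \emph{number} of long fragments of $T'$ that are individually close to $Q^\infty$; a text with $\edl{T'}{Q}>3d$ can still contain many such fragments. The actual route, both in \cite{unified} and as recapitulated here in \cref{lem:erelevant}, goes the other way: one first proves that all $k$-error occurrences lie inside a \emph{single} relevant fragment $T''$ with $\edl{T''}{Q}\le 3d$, so your case split never needs to arise. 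Once that reduction is in place, your high-level plan for the periodic case (band decomposition plus a shift/trim argument to collapse identical bands into arithmetic progressions) is the right shape and matches what \cite{unified} does.
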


As the main contribution of this work, we reduce the running time in \cref{lm:impEdC}
to $\tOh(n/m\cdot d^{3.5})$.

For the case where the analysis of the pattern using \cref{prp:EIalg} yields $2k$ disjoint breaks,
we use the efficient algorithm encapsulated in the following lemma.

\begin{fact}[{\tt BreakMatches($P$, $T$, $\{B_1,\dots,B_{2k}\}$, $k$)},~{\cite[Lemmas 5.21 and 6.12]{unified}}]\label{lm:impEdA}
    Let $k$ denote a threshold and
    let $P$ denote a pattern of~length $m$ having $2k$ disjoint breaks $B_1,\dots,B_{2k}
    \substr P$ each satisfying $\per(B_i) \ge m / \alphav k$.
    Further, let $T$ denote a text of~length $n < \threehalfs m + k$.
    Then, we can compute the set $\OccE_k(P, T)$, which is of size $\cO(k^2)$, using $\Oh(k^3)$
    time in the \modelname model.\lipicsEnd
\end{fact}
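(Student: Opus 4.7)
The plan is to use the $2k$ disjoint breaks as anchors for candidate starting positions and then run a batched Landau--Vishkin-style verification. By the disjointness of the breaks and \cref{fct:ali}, any alignment $\A\colon P \onto T\fragmentco{v}{w}$ of cost at most $k$ can incur errors on at most $k$ of the $2k$ breaks, so at least $k$ of them are matched exactly by $\A$. Each such exact match of $B_i = P\fragmentco{p_i}{p_i+|B_i|}$ against some $T\fragmentco{u}{u+|B_i|}$ determines a diagonal $d \coloneqq u - p_i$ satisfying $|v - d| \le k$, because $\A$ performs at most $k$ insertions and deletions in the prefix $P\fragmentco{0}{p_i}$. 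Hence every $v \in \OccE_k(P, T)$ is \emph{supported} by at least $k$ diagonals drawn from exact break-occurrences, all lying in $[v - k, v + k]$.

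Carrying out the plan, I would first invoke $\texttt{ExactMatches}(B_i, T)$ for each $i \in \fragment{1}{2k}$. By \cref{fct:toolbox}, each call runs in $\Oh(|T|/\per(B_i)) = \Oh(k)$ time and uses $\Oh(|T|/|B_i|) = \Oh(k)$ \modelname operations, since $\per(B_i) \ge m/\alphav k$, $|B_i| = \lfloor m/\betav k \rfloor$, and $|T| < \threehalfs m + k$. Summing over all breaks gives $\Oh(k^2)$ operations in total. Converting each resulting occurrence into its diagonal yields a multiset $D$ with $|D| = \Oh(k^2)$. Sorting $D$ and running a sliding-window sweep produces the set $\mathcal{C} \subseteq \mathbb{Z}$ of candidate starting positions with $|D \cap [v - k, v + k]| \ge k$. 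A double-counting argument comparing to $\sum_v |D \cap [v-k, v+k]| = |D| \cdot (2k+1) = \Oh(k^3)$ bounds $|\mathcal{C}| = \Oh(k^2)$, matching the target bound on $|\OccE_k(P, T)|$.

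In the second phase, I would verify the candidates. Writing $\mathcal{C}$ as a disjoint union of $t$ maximal intervals $I_1, \ldots, I_t$, invoking $\texttt{Verify}(P, T, k, I_j)$ from \cref{fct:toolbox} for each yields total cost $\sum_{j=1}^t \Oh(k(k + |I_j|)) = \Oh(k^2 t + k \cdot |\mathcal{C}|) = \Oh(k^2 t + k^3)$. The main technical obstacle is to show $t = \Oh(k)$, i.e., that the candidate positions cluster into only $\Oh(k)$ maximal intervals rather than up to $\Oh(k^2)$. I expect this to follow from the periodic structure of each break's occurrences: because $\per(B_i) \ge m/\alphav k$ and $|T| < \threehalfs m + k$, consecutive exact occurrences of $B_i$ are at least $m/\alphav k$ apart, so each break contributes $\Oh(1)$ arithmetic progressions of diagonals, for $\Oh(k)$ progressions in total. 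A global alignment shift of $P$ against $T$ of width $2k+1$ picks at most $\Oh(1)$ diagonals from each progression, so the heavy windows yielding candidates must correspond to $\Oh(k)$ globally consistent alignments of the $2k$ progressions; combining this with Fine--Wilf-style arguments to rule out spurious collisions should yield $t = \Oh(k)$ and thus the claimed $\Oh(k^3)$ bound in the \modelname model.
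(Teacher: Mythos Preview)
The paper does not prove this statement; it is quoted as a black-box fact from~\cite{unified} (the terminal \lipicsEnd{} marks a statement imported without proof). Your outline follows the intended approach: generate $\Oh(k^2)$ candidate diagonals from the $2k$ breaks via exact matching, filter to the set $\mathcal{C}$ of positions supported by at least $k$ diagonals within a $(2k{+}1)$-window, bound $|\mathcal{C}| = \Oh(k^2)$ by double counting, and verify interval by interval. The first phase and the $|\mathcal{C}|$ bound are correct.

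The gap is precisely where you flag it: the bound $t = \Oh(k)$ on the number of maximal intervals of $\mathcal{C}$. Your justification contains a concrete error. You claim that each break contributes $\Oh(1)$ arithmetic progressions of diagonals, but with $|B_i| = \lfloor m/\betav k \rfloor$ and $|T| = \Theta(m)$, the occurrences of $B_i$ in $T$ can decompose into as many as $\Oh(|T|/|B_i|) = \Oh(k)$ progressions (the \ipmOpName guarantee of a single progression only applies when the text has length at most twice the pattern). This yields $\Oh(k^2)$ progressions across all breaks, not $\Oh(k)$, so your downstream ``$\Oh(k)$ globally consistent alignments'' heuristic does not follow, and the appeal to ``Fine--Wilf-style arguments'' is a placeholder rather than a proof. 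A workable route is to first dispatch the regime $m = \Oh(k^2)$ by a single call \texttt{Verify}$(P,T,k,\fragment{0}{n-m+k})$ of cost $\Oh(k(k+m)) = \Oh(k^3)$; in the remaining regime $m \gg k^2$, each break's diagonals are spaced at least $m/\alphav k \gg k$ apart, so every window of width $2k{+}1$ sees at most one diagonal per break---but turning this separation into a bound $t = \Oh(k)$ on the number of threshold crossings still requires a genuine combinatorial argument that you have not supplied.
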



Finally, let us consider the case where the analysis of the pattern using~\cref{prp:EIalg}
returns disjoint repetitive regions $H_1,\ldots, H_{r} \substr P$
of~total length $\sum_{i=1}^r |H_i| \ge {}^{\deltavN}\!/\!{}_{\deltavD}\, m$ such
that each region~$H_i$ satisfies
$|H_i| \ge m/\betav k$ and is constructed along with a primitive approximate
period $Q_i$
such that $|Q_i| \le m/\alphav k$ and $\edl{H_i}{Q_i} = \ceil{\betav k/m\cdot
|H_i|}$.

In this case, we give a brief overview of the proof of~{\cite[Lemmas 5.24 and 6.13]{unified}}, stated below,
showing that the problem in scope reduces to several calls to the \texttt{PeriodicMatches}
procedure of \cref{lm:impEdC}.

\begin{fact}[{\tt RepetitiveMatches($P$,$T$,$\{ (H_1, Q_1), \dots, (H_{r},Q_r)\}$,$k$)},~{\cite[Lemmas 5.24 and 6.13]{unified}}]\label{lm:impEdB}
    Let $P$ denote a pattern of~length~$m$
    and let $k \le m$ denote a threshold.
    Further, let $T$ denote a string of~length~$n < \threehalfs m + k$.
    Suppose that $P$ contains disjoint repetitive regions $H_1,\ldots, H_{r}$
    of~total length at least $\sum_{i=1}^r |H_i| \ge {}^{\deltavN}\!/\!{}_{\deltavD}\, m$
    such that each region $H_i$ satisfies $|H_i| \ge m/\betav k$ and has a
    primitive approximate period~$Q_i$
    with $|Q_i| \le m/\alphav k$ and $\ed(H_i,Q_i^*) = \ceil{\betav k/m\cdot |H_i|}$.

    Then, we can compute the set $\OccE_k(P,T)$, which is of size $\cO(k^2)$, using $\Oh(k^4)$ time in the \modelname model.
\end{fact}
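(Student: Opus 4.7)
The plan is to leverage the repetitive regions $H_1, \ldots, H_r$ as anchors, reducing the problem to several calls to $\texttt{PeriodicMatches}$ followed by a batch of verifications. The key structural observation is that by the decomposition property of alignments (\cref{fct:ali}), any $k$-error alignment $\A : P \onto T\fragmentco{v}{w}$ restricts, for every index~$i$, to an alignment of $H_i$ onto a fragment $\A(H_i) \substr T$ of cost at most~$k$. Combining $\edl{H_i}{Q_i} = d_i$ with the triangle inequality then gives $\edl{\A(H_i)}{Q_i} \le k + d_i = \cO(k)$, so the image $\A(H_i)$ is almost $Q_i$-periodic. In particular, every $v\in \OccE_k(P,T)$ gives rise, for each~$i$, to an occurrence $v_i\in \OccE_k(H_i,T)$ with $v_i-(v+\ell_i)\in \fragment{-k}{k}$, where $\ell_i$ is the starting offset of $H_i$ in~$P$.

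For each~$i$, let $\tilde{H}_i$ be the maximal fragment of~$P$ containing $H_i$ that remains $\cO(k)$-close to $Q_i^*$; this extension is chosen so that $|Q_i| \le |\tilde{H}_i| / (8 d_i^*)$ holds for some threshold $d_i^* = \Theta(k)$ satisfying $d_i^* \ge 2k$. By \cref{lm:impEdC}, the call $\texttt{PeriodicMatches}(\tilde{H}_i, T, k, d_i^*, Q_i)$ uses $\Oh(n/|\tilde{H}_i| \cdot (d_i^*)^4)$ \modelname operations and returns a representation of $\OccE_k(\tilde{H}_i, T)$ as disjoint arithmetic progressions with common difference~$|Q_i|$. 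Each such occurrence of $\tilde{H}_i$ at position $v_i$ in~$T$ translates to a candidate set $\{v_i - \tilde{\ell}_i + j : |j| \le k\}$ of potential starting positions of~$P$, where $\tilde{\ell}_i$ is the offset of~$\tilde{H}_i$ within~$P$.

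Taking the intersection of the candidate sets across all~$i$ leaves an $\cO(k^2)$-sized candidate set, which we filter via $\texttt{Verify}$ queries on~$P$. Summing the cost of the $\texttt{PeriodicMatches}$ invocations using $n = \cO(m)$, $\sum_i |H_i| \ge \frac{3}{8} m$, $|\tilde{H}_i| \ge |H_i|$, and $d_i^* = \cO(k)$, with a careful amortization that distributes the work proportionally to $|H_i|/m$, yields $\cO(k^4)$ \modelname operations in total; verifying the $\cO(k^2)$ remaining candidates at $\cO(k^2)$ \modelname operations each (by \cref{lm:verifye}) adds another $\cO(k^4)$, matching the claimed bound.

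The main obstacle is the construction of the extensions $\tilde{H}_i$: each must be long enough so that $|Q_i| \le |\tilde{H}_i|/(8 d_i^*)$ for a valid $d_i^* \ge 2k$, while preserving $\edl{\tilde{H}_i}{Q_i} = \cO(k)$. This hinges on the quantitative guarantees of \cref{prp:EIalg}---in particular $|Q_i| \le m/(\alphav k)$ and the constant error density $\betav k/m$ within~$H_i$---combined with standard facts about approximate periodicity that permit growing $H_i$ without substantially inflating its edit distance to $Q_i^*$. For regions whose maximal extensions remain too short, one absorbs them into a neighboring region with a compatible approximate period or handles the corresponding occurrences in the verification step; in either case, the global bookkeeping ensures that the aggregate cost stays within $\cO(k^4)$.
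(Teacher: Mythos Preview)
Your overall strategy—anchor on the repetitive regions via \texttt{PeriodicMatches}, then verify—matches the paper's, but your choice of parameters breaks the $\cO(k^4)$ bound. You invoke $\texttt{PeriodicMatches}(\tilde{H}_i, T, k, d_i^*, Q_i)$ with the \emph{global} threshold $k$ and $d_i^*=\Theta(k)$, costing $\Oh(n/|\tilde{H}_i|\cdot k^4)$ per region by \cref{lm:impEdC}. There can be $r=\Theta(k)$ regions (each as small as $m/(8k)$), so even if every $\tilde{H}_i$ could be grown to length $\Theta(m)$ the sum is $\Theta(k^5)$; with only $|\tilde{H}_i|\ge |H_i|\ge m/(8k)$ it is $\Theta(k^6)$. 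No amortization rescues this: the per-call cost does not shrink with $|H_i|$ once $d_i^*$ is pinned at $\Theta(k)$. (The extensions $\tilde{H}_i$ are also not guaranteed to exist—nothing prevents $\edl{\cdot}{Q_i}$ from blowing up just outside $H_i$—but that is secondary.)

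The paper instead uses \emph{region-specific} thresholds $k_i=\floor{4k|H_i|/m}$ and $d_i=\ceil{8k|H_i|/m}$. With these, \cref{lm:impEdC} applies directly to $H_i$ (no extension needed: $|Q_i|\le m/(128k)\le |H_i|/(8d_i)$ and $\edl{H_i}{Q_i}=d_i$ by hypothesis), and each call costs $\Oh((n/|H_i|)\cdot d_i^4)=\Oh(k^4|H_i|^3/m^3)\le \Oh(k^4|H_i|/m)$, summing to $\Oh(k^4)$. The trade-off is that $\OccE_{k_i}(H_i,T)$ no longer captures every $v\in\OccE_k(P,T)$ individually, so your intersection argument is unavailable; instead a pigeonhole argument (since $\sum_i|H_i|\ge \tfrac{3}{8}m$, for every $k$-error alignment some $H_i$ absorbs at most its proportional share $\le k_i$ of the errors) shows that the \emph{union} of the shifted sets covers $\OccE_k(P,T)$, and can be post-processed into $\cO(k)$ length-$k$ intervals, verified in $\cO(k^3)$ total time.
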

\begin{proof}[Proof sketch]
For each repetitive region $H_i$, set $k_i \coloneqq \floor{\betavh \cdot k/m \cdot
|H_i|}$
and $d_i \coloneqq \ceil{\betav\cdot k/m \cdot |H_i|}$.
We can compute $\OccE_{k_i}(H_i,T)$ using a call
$\texttt{PeriodicMatches}(H_i,T,k_i,d_i,Q)$; consult \cite{unified} for a rigorous proof
that the conditions in the statement of \cref{lm:impEdC} are indeed satisfied.

Next, using the sets $\OccE_{k_i}(H_i,T)$ we can identify in $\cO(k^2 \log \log
k)$ time
$\cO(k)$ length-$k$ intervals whose union is a superset of $\OccE_{k}(P,T)$.
Finally, we use {\tt Verify($P$, $T$, $k$, $J$)} for each such interval $J$ to filter out
false-positive positions; these calls to {\tt Verify} take $\cO(k^3)$ time in
total in the \modelname model.
\end{proof}

All in all, we see that the pattern matching with edits problem reduces to several calls
to the \texttt{PeriodicMatches} procedure:
\begin{itemize}
    \item If we can detect breaks in the pattern, no calls to \texttt{PeriodicMatches} are
        needed.
    \item If the pattern is close to being periodic, a single call
        {\tt PeriodicMatches($P$, $T$, $k$, $d$, $Q$)} suffices. In this case, we have $d=\cO(k)$
        and hence we obtain a representation of $\OccE_k(P,T)$ that consists of $\cO(k^3)$ disjoint arithmetic progressions
    		with difference $|Q|$.
    \item If $P$ contains disjoint repetitive regions $H_1,\ldots, H_{r}$
        of~total length at least $\sum_{i=1}^r |H_i| \ge \deltavN/\deltavD\cdot m$
        such that each region $H_i$ satisfies $|H_i| \ge m/\betav k$ and has a
        primitive approximate period~$Q_i$
        with $|Q_i| \le m/\alphav k$ and $d_i \coloneqq \ed(H_i,Q_i^*) = \ceil{\betav k/m\cdot
        |H_i|}$.
        Then, we make \(r\) calls ${\tt PeriodicMatches}(H_i,T,k_i,d_i,Q_i)$.
\end{itemize}
Not that if the pattern is not close to being periodic, the set $\OccE_k$ is of size $\cO(k^2)$ due to
\cite[Lemmas 5.21 and 5.24]{unified}.
Thus, in all cases, we obtain a representation of $\OccE_k(P,T)$ as $\cO(k^3)$ arithmetic progressions with the same difference.
Observe that the common difference of arithmetic progressions
is solely dependent on $P$ and $k$: it is $|Q|$ if \cref{prp:EIalg} returns an approximate period $Q$,
while it can be set to an arbitrary positive integer otherwise as the $\cO(k^2)$ positions of $\OccE_k(P,T)$ are interpreted
as arithmetic progressions of length one in this case.
We summarize the above discussion in the following statement.

\begin{fact}\label{cor:red_to_sync2}
    Let $P$ denote a pattern of~length~$m$
    and let $k \le m$ denote a threshold.
    Further, let $T$ denote a string of~length~$n < \threehalfs m + k$.

    We can compute a representation of $\OccE_k(P,T)$ as $\cO(k^3)$ disjoint arithmetic progressions with the same difference
    in $\cO(k^3)$ time in the \modelname model
    and additionally several calls $\texttt{PeriodicMatches}(P_i,T,k_i,d_i,Q_i)$
    such that $\sum_i |P_i| \le m$ and $d_i=\ceil{8k/m\cdot |P_i|}$.
	The common difference of said arithmetic progressions can be computed given only $P$ and $k$.\lipicsEnd
\end{fact}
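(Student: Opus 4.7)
The plan is to combine the three case analyses from \cref{prp:EIalg,lm:impEdA,lm:impEdB,lm:impEdC} into a single reduction, with some care to book-keep the output representation. First, I would invoke \texttt{Analyze}($P$, $k$) from \cref{prp:EIalg}, which takes $\Oh(k^2)$ time in the \modelname model and outputs one of three possible structural descriptions of $P$. Since \texttt{Analyze} depends only on $P$ and $k$, the case it returns (and, where relevant, the period $Q$ it produces) is a function of $P$ and $k$ alone.

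If \texttt{Analyze} produces $2k$ disjoint breaks, I would apply \texttt{BreakMatches} of \cref{lm:impEdA} to compute $\OccE_k(P,T)$ explicitly as a set of $\Oh(k^2)$ positions in $\Oh(k^3)$ time, and represent each position as a length-$1$ arithmetic progression with a fixed difference (say~$1$); no calls to \texttt{PeriodicMatches} are generated. If \texttt{Analyze} returns repetitive regions $H_1,\ldots,H_r$ with approximate periods $Q_i$, I would apply \texttt{RepetitiveMatches} of \cref{lm:impEdB}: its proof sketch makes exactly $r$ calls of the form $\texttt{PeriodicMatches}(H_i, T, k_i, d_i, Q_i)$ where $d_i = \ceil{\betav k/m\cdot |H_i|} = \ceil{8k/m\cdot|H_i|}$, and spends only $\Oh(k^3)$ additional time in the \modelname model on interval generation and \texttt{Verify}. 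Since the $H_i$ are disjoint fragments of $P$, we have $\sum_i |H_i| \le m$, so the sum-constraint $\sum_i |P_i|\le m$ is satisfied with $P_i \coloneqq H_i$. Again the $\Oh(k^2)$ output positions are represented as length-$1$ arithmetic progressions with an arbitrary fixed difference.

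If \texttt{Analyze} returns an approximate period $Q$ with $|Q|\le m/\alphav k = m/(128k)$ and $\edl{P}{Q} < \betav k = 8k$, I would issue the single call $\texttt{PeriodicMatches}(P, T, k, 8k, Q)$. Here I need to check the preconditions of \cref{lm:impEdC}: with $d = 8k$, we have $d \ge 2k$, $|Q|\le m/(128k)\le m/(8d)$, and $\edl{P}{Q} \le d$, as required. This yields a representation of $\OccE_k(P,T)$ as $\Oh(n/m\cdot d^3) = \Oh(k^3)$ arithmetic progressions (using $n < \threehalfs m + k$) with common difference $|Q|$, which is determined by $P$ and $k$ alone via \texttt{Analyze}. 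The single call satisfies $P_1 = P$, so $\sum_i|P_i| = m$ and $d_1 = 8k = \ceil{8k/m\cdot|P_1|}$.

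In all three cases, we obtain $\Oh(k^3)$ disjoint arithmetic progressions with a fixed common difference determined by $P$ and $k$, at the cost of $\Oh(k^3)$ time in the \modelname model outside of the \texttt{PeriodicMatches} calls. The only real work is the parameter verification for case~(c) and the observation that $\sum_i|H_i|\le m$ in case~(b); neither is an obstacle but these checks must be spelled out. The rest is purely organizational.
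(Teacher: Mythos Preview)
Your proposal is correct and mirrors the paper's own argument: the paper likewise dispatches on the output of \texttt{Analyze}($P,k$), uses \cref{lm:impEdA} in the breaks case with no \texttt{PeriodicMatches} calls, unpacks the proof sketch of \cref{lm:impEdB} to expose the $r$ calls $\texttt{PeriodicMatches}(H_i,T,k_i,d_i,Q_i)$ with $d_i=\ceil{8k/m\cdot|H_i|}$ and $\sum_i|H_i|\le m$, and in the periodic case issues one call with $d=8k$; the common difference is $|Q|$ in case~(c) and an arbitrary fixed integer (for length-$1$ progressions) otherwise. Your parameter checks for case~(c) are exactly what is needed and match the paper.
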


\clearpage
\partn{From {\SM} to {\DPM}}

\section{The {\SM} Problem}\label{sec:smnew}

As a first step toward an improved implementation of the \texttt{PeriodicMatches}
procedure, we recall useful tools from \cite{unified}---this also allows us to slightly
simplify the problem statement of the problem we need to solve.
Then, we also give a first algorithm that solves said problem fast in a (very restricted)
special case; however, this algorithm turns out to be useful in general as well.

\subsection{Computing Occurrences in the Periodic Case: Preprocessing and Simplifications}\label{sec:redsm}

In this (sub-)section, we exploit \cref{cor:red_to_sync2} to reduce an instance of the \PMWE
to several instances of the \SM problem defined in \cref{sec:techov},
which we restate here for convenience.

\SMproblem

Sepcifically, we prove the following reduction.

\reduction*

As before, we use the standard trick to reduce the problem to $\cO(n/m)$ bounded-ratio instances:
specifically, for each $j \in \fragmentco{0}{\floor{2n/m}-1}$, an instance
$\PMWE(P,T_j,k)$ with $|T_j| < \threehalfs m + k$
and, without loss of generality, $|T_j|\ge m-k$
(otherwise, $\OccE_k(P,T_j)=\emptyset$).

Let us fix some $j \in \fragmentco{0}{\floor{2n/m}-1}$.
We apply \cref{cor:red_to_sync2} to reduce $\PMWE(P,T_j,k)$, in $\cO(k^3)$ time in the \modelname model,
to several calls $\texttt{PeriodicMatches}(P_{j,i},T_j,k_{j,i},d_{j,i},Q_{j,i})$
with $\sum_i |P_{j,i}| \le m$ and $d_{j,i}=\ceil{8k/m\cdot |P_{j,i}|}$.
Over all $j$, the total length of the patterns in the constructed instances of \SM is thus
$\sum_{j=0}^{\floor{2n/m}-2} \sum_i |P_{j,i}|\leq \sum_{j=0}^{\floor{2n/m}-2} m = \cO(n)$.
As shown in the remainder of \cref{sec:redsm}, answering a call $\texttt{PeriodicMatches}(P_{j,i},T_j,k_{j,i},d_{j,i},Q_{j,i})$
reduces, in $\cO(d_{j,i}^2)$ time in the \modelname model, to solving an instance $\SM(P_{j,i},T_j,k_{j,i},d_{j,i},Q_{j,i},\A_{P_{j,i}},\A_{T_{j}})$.
Observe that then, for a fixed $j \in \fragmentco{0}{\floor{2n/m}-1}$, we have
\[\sum_i \cO(d_{j,i}^2)= \sum_i \cO(k^2/m^2\cdot |P_{j,i}|^2)=\sum_i \cO(k^2/m \cdot |P_{j,i}|)=\cO(k^2).\]
Hence, over all $j \in \fragmentco{0}{\floor{2n/m}-1}$, the total time required for the reduction in the \modelname model is $\cO(n/m \cdot k^3)$,
dominated by the time required for the calls to the algorithm underlying \cref{cor:red_to_sync2}.
Finally, observe that for each $T_j$, we obtain a representation of $\OccE_k(P,T_j)$ as $\cO(k^3)$ arithmetic progressions with the same difference,
which is computable from $P$ and $k$.
As $P$ and $k$ are common in all instances $\PMWE(P,T_j,k)$, all computed arithmetic progressions have the same difference.
In linear time in their number, we can linearly scan them in order to merge overlapping ones, thus ensuring that the final output consists of
$\cO(n/m \cdot k^3)$ disjoint arithmetic progressions with the same difference.

In the remainder of \cref{sec:redsm} we show that answering a call $\texttt{PeriodicMatches}(P,T,k,d,Q)$
reduces, in $\cO(d^2)$ time in the \modelname model, to solving an instance $\SM(P,T,k,d,Q,\A_P,\A_T)$.

We first use the following (simplified) version of \cite[Lemma 6.8]{unified};
it is readily verified that all conditions of \cref{lem:erelevant} are satisfied in this call.

\begin{fact}[\texttt{FindRelevantFragment($P$, $T$, $k$, $d$, $Q$)},~{Compare \cite[Lemma 6.8]{unified}}]\label{lem:erelevant}
    Let $P$ denote a pattern of~length $m$, let $T$ denote a text of~length $n$,
    and let $0 \le k\le m$ denote a threshold such that $n<\threehalfs m+k$.
    Further, let $d\ge 2k$ denote a positive integer and let
    $Q$ denote a primitive string that satisfies $|Q|\le m/8d$ and $\edl{P}{Q}\le d$.

    Then, there is an algorithm that computes a fragment $T'$ of $T$ such that
    $\edl{T'}{Q}\le 3d$ and $|\OccE_k(P,T)|=|\OccE_k(P,T')|$.
    The algorithm runs in $\Oh(d^2)$ time in the \modelname model.
    \lipicsEnd
\end{fact}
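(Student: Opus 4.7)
The plan is to carve out of $T$ a fragment $T'$ that is close to $Q^{\infty}$ and retains every $k$-error occurrence of $P$, exploiting the triangle inequality and the fact that $|Q|$ is tiny relative to $|P|$. The driving observation is that every $k$-error occurrence is itself strongly periodic: for any $T\fragmentco{v}{w}$ with $\ed(P,T\fragmentco{v}{w})\le k$,
\[
    \edl{T\fragmentco{v}{w}}{Q} \;\le\; \edl{P}{Q} + \ed(P,T\fragmentco{v}{w}) \;\le\; d+k \;\le\; \tfrac{3d}{2},
\]
and its length $|w-v|\in\fragment{m-k}{m+k}$ spans at least $(m-k)/|Q|\ge 8d-1$ copies of the period. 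Moreover, since $Q$ is primitive and $P$ aligns to $Q^{\infty}$ with cost $\le d$, all $k$-error occurrences align to $Q^{\infty}$ in a consistent phase modulo $|Q|$.

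Algorithmically, I would proceed in three steps. First, fix the canonical phase: using \ipmOpName, locate an exact occurrence of $Q$ in a length-$2|Q|$ block of $P$ (at most $d$ of the $\ge 8d$ non-overlapping candidate blocks of $P$ can be touched by the $\le d$ edits of an optimal alignment $P\onto Q^{\infty}$, so at least one block is intact), then invoke \cref{fct:alignment} to obtain an optimal alignment $\A_P:P\onto Q^{\infty}\fragmentco{x_P}{y_P}$ of cost $\le d$ in $\cO(d^2)$ \modelname operations. Rotate $Q$ so that $x_P=0$. Second, perform two capped Landau--Vishkin-style passes based on \cref{fct:alignment} in the canonical frame: a left-to-right pass identifies the largest $v_R$ such that $T\fragmentco{0}{v_R}$ admits an alignment of cost $\le 3d$ into $Q^{\infty}$, and a right-to-left pass identifies the smallest $v_L$ such that $T\fragmentco{v_L}{n}$ admits a similar alignment. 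Each pass halts as soon as the cost exceeds $3d$, so each uses $\cO(d^2)$ \modelname operations. Third, return $T'\coloneqq T\fragmentco{v_L}{v_R}$, or the empty fragment if $v_L\ge v_R$ (in which case $\OccE_k(P,T)=\emptyset$ by the bound above).

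For correctness, $\edl{T'}{Q}\le 3d$ holds because $T'\substr T\fragmentco{0}{v_R}$ and the left pass exhibits a cost-$\le 3d$ alignment of $T\fragmentco{0}{v_R}$ into $Q^{\infty}$; truncation gives a cost-$\le 3d$ alignment for $T'$ as well. For the count equality, the displayed inequality together with the bounded-ratio assumption $n<\tfrac{3m}{2}+k$ (which forces any two $k$-error occurrences to overlap by at least $m/2-3k\ge |Q|$ characters, hence to share a common phase) implies that the cheap alignments of all $k$-error occurrences can be stitched into a single cost-$\le 3d$ alignment covering $\fragmentco{v^*}{w^*}$, where $v^*=\min\OccE_k(P,T)$ and $w^*$ is the corresponding maximum right endpoint. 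Maximality of the two Landau--Vishkin passes then yields $v^*\ge v_L$ and $w^*\le v_R$, so every $k$-error occurrence lies in $T'$ and hence $\OccE_k(P,T)=v_L+\OccE_k(P,T')$ as sets.

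The principal obstacle is Step~1, efficiently fixing a consistent phase $x_P$; the approach based on \ipmOpName sketched above resolves it within $\cO(d^2)$ \modelname operations via the pigeonhole argument on candidate $Q$-blocks of $P$. The remaining steps are routine applications of capped Landau--Vishkin via \cref{fct:alignment}, totalling $\cO(d^2)$ \modelname operations.
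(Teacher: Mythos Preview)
The paper does not prove this statement; it is imported verbatim from \cite[Lemma~6.8]{unified} and carries no proof here (note the \texttt{\textbackslash lipicsEnd} immediately after the statement). So there is nothing in this paper to compare your argument against.

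That said, your proposal has a genuine gap. Your two Landau--Vishkin passes are anchored at positions $0$ and $n$ of $T$, but nothing in the hypotheses forces the endpoints of $T$ to lie in a $Q$-periodic region. Concretely, take $Q$ with $|Q|\ge 2$, let $c$ be a character not occurring in $Q$, and set $T = c^{3d+1}\cdot Q^{a}\cdot c^{3d+1}$ with $a|Q|\in\fragmentco{m-k}{m}$. This satisfies $n<\tfrac{3m}{2}+k$ (since $6d+2<m/2$ follows from $m\ge 8d|Q|\ge 16d$), and if $P$ is any string with $\edl{P}{Q}\le d$ then $P$ has many $k$-error occurrences inside the $Q^a$ block. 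Yet your left-to-right pass stalls with $v_R\le 3d$, because $\edl{T\fragmentco{0}{3d+1}}{Q}=3d+1$ (every character of the prefix is foreign to $Q$); symmetrically $v_L\ge n-3d$. You then return the empty fragment and assert $\OccE_k(P,T)=\emptyset$, which is false.

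The error surfaces precisely in your correctness paragraph: you establish that $T\fragmentco{v^*}{w^*}$ admits a cost-$\le 3d$ alignment to $Q^{\infty}$, and then write ``maximality of the two Landau--Vishkin passes then yields $v^*\ge v_L$ and $w^*\le v_R$''. But maximality of $v_R$ only says $T\fragmentco{0}{v_R}$ is the longest \emph{prefix of $T$} with a cheap alignment; it says nothing about $T\fragmentco{0}{w^*}$, whose own prefix $T\fragmentco{0}{v^*}$ may already cost more than $3d$. The ``canonical frame'' you extract from $P$ in Step~1 does not help: it pins down the rotation of $Q$ that $P$ prefers, but tells you nothing about \emph{where in $T$} the periodic region sits. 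The construction in \cite{unified} instead first locates an anchor \emph{inside} $T$ (e.g.\ by finding where the approximate $Q$-periodicity of $T$ begins, or via a candidate occurrence), and only then extends outward from that anchor with capped Landau--Vishkin in both directions.
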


A simplified version of \cite[Lemma 6.5]{unified} lets us pick $Q$ so that $\edl{P}{Q} = \edp{P}{Q}$.
\SetKwFunction{witness}{FindAWitnes}

\begin{fact}[\texttt{FindAWitness($k$, $Q$, $S$)},~{Compare \cite[Lemma 6.5]{unified}}]\label{lem:witness}
    Let $k$ denote a positive integer,
    let $S$ denote a string,
    and let $Q$ denote a primitive string that satisfies $|S|\ge (2k+1)|Q|$.

    Then, we can compute a \emph{witness} $x\in \Zz$
    such that $\edp{S}{\rot^{-x}(Q)}=\edl{S}{Q}\le k$,
    or report that $\edl{S}{Q}>k$.
    The algorithm takes $\Oh(k^2)$ time in the \modelname model.
    \lipicsEnd
\end{fact}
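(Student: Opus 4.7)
The plan is to exploit the bound $|S|\ge(2k+1)|Q|$ to pin $x$ down to a small candidate set and then invoke the Landau--Vishkin-style algorithm of \cref{fct:alignment} for verification.

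First, I would partition $S\fragmentco{0}{(2k+1)|Q|}$ into $2k+1$ blocks $B_j\coloneqq S\fragmentco{j|Q|}{(j+1)|Q|}$ of length $|Q|$ and, using the \cycEqOpName operation of \cref{fct:toolbox} on each block in constant time, collect in $\Oh(k)$ total time the set $J$ of indices such that $B_j$ is a rotation of $Q$, together with the rotation phase $y_j\in\fragmentco{0}{|Q|}$ satisfying $B_j=\rot^{-y_j}(Q)$. The structural claim is: if $\edl{S}{Q}\le k$, then any optimal alignment $\A\colon S\onto Q^\infty\fragmentco{i}{j}$ of cost $e\le k$ can place edits in at most $k$ blocks, so at least $k+1$ blocks are exactly matched onto length-$|Q|$ fragments of $Q^\infty$ and are therefore rotations of $Q$; in particular $|J|\ge k+1$. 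If instead $|J|<k+1$, we conclude $\edl{S}{Q}>k$ and halt.

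To recover $x=i\bmod|Q|$, observe that for every exactly matched block $B_j$ the phase satisfies $y_j\equiv x+\delta_j\pmod{|Q|}$, where $\delta_j\in\fragment{-k}{k}$ denotes the net insertion-minus-deletion shift of $\A$ accumulated before position $j|Q|$. Since at least one of the $k+1$ exactly matched blocks among $B_0,\dots,B_{2k}$ has index in $\fragment{0}{k}$ (by pigeonhole), the sought $x$ lies in the candidate set $\mathcal{C}\coloneqq\{(y_j-\delta)\bmod|Q|:j\in J\cap\fragment{0}{k},\ \delta\in\fragment{-k}{k}\}$. I would then run the Landau--Vishkin computation of \cref{fct:alignment} on the candidates in $\mathcal{C}$ with early termination once the cost exceeds $k$, and output the candidate minimizing $\edp{S}{\rot^{-x}(Q)}$; if no candidate achieves cost at most $k$, we report $\edl{S}{Q}>k$, since by the structural claim every $x$ realizing $\edl{S}{Q}\le k$ must belong to $\mathcal{C}$.

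The main obstacle is meeting the $\Oh(k^2)$ bound: since $|\mathcal{C}|=\Oh(k^2)$ and each verification by \cref{fct:alignment} costs $\Oh(k^2)$ operations, the naive approach yields $\Oh(k^4)$. To remove this blow-up, I would fuse the individual Landau--Vishkin runs into a single iteration seeded by all candidates in $\mathcal{C}$ simultaneously: because any alignment of cost at most $k$ drifts by at most $k$ from its main diagonal, only $\Oh(k)$ diagonals remain active at each of the $\Oh(k)$ error levels, giving $\Oh(k^2)$ \modelname operations in total. The technical subtlety is to merge diagonals that coincide modulo $|Q|$ without losing the starting phase needed for the witness, using the primitivity of $Q$ to rule out spurious collisions between different rotations.
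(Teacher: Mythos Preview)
The paper does not prove this statement; it is stated as a \texttt{Fact} with a citation to \cite[Lemma~6.5]{unified} and used as a black box.

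Regarding your proposal: the pigeonhole argument and the resulting candidate set $\mathcal{C}$ are correct, but the fusing step that is supposed to bring the cost down to $\Oh(k^2)$ has a genuine gap. Your fused Landau--Vishkin is seeded with $|\mathcal{C}|=\Theta(k^2)$ starting diagonals, and the claim that ``only $\Oh(k)$ diagonals remain active at each error level'' is not justified. The set $J\cap\fragment{0}{k}$ contains not only clean blocks (whose phases $y_j$ all lie in $\fragment{x-k}{x+k}\bmod|Q|$) but potentially up to $k$ \emph{accidental} rotations---blocks that happen to equal some $\rot^{-y_j}(Q)$ without being edit-free in the optimal alignment---and those phases $y_j$ are completely unconstrained. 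When $|Q|>4k$, the intervals $\fragment{y_j-k}{y_j+k}$ coming from such spurious $j$'s can be pairwise disjoint, so $\mathcal{C}$ genuinely contains $\Theta(k^2)$ distinct residues and the fused run does $\Theta(k^3)$ work. Merging diagonals modulo $|Q|$ only helps when $|Q|=\Oh(k)$, and primitivity of $Q$ merely makes each $y_j$ well-defined; it does not rule out accidental rotations.

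One way to close the gap is to first shrink the candidate set to $\Oh(k)$ \emph{consecutive} residues via a majority argument over all of $J$ (not just $J\cap\fragment{0}{k}$): since at least $k+1$ of the at most $2k+1$ phases $y_j$ lie in $\fragment{x-k}{x+k}$, a median of $\{y_j:j\in J\}$ also lies there, so $x\in\fragment{y_{\mathrm{med}}-k}{y_{\mathrm{med}}+k}$. These $\Oh(k)$ consecutive candidates can then be handled by a single call to \texttt{Verify} from \cref{fct:toolbox} (pattern $S$, text a length-$(|S|+\Oh(k))$ window of $Q^\infty$, interval of length $\Oh(k)$), which costs $\Oh(k^2)$.
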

We use \cref{lem:witness} as follows. First, we make a call $\texttt{FindAWitness}(d, Q, P)$
to derive $x\in \Zz$ such that $\edp{S}{\rot^{-x}(Q)}=\edl{P}{Q}$.
Then, we make a call $\alignment(P,Q,x)$ to the function of \cref{fct:alignment},
which yields an optimum alignment $\A:P\onto Q^\infty\fragmentco{x}{y}$ of cost $\edl{P}{Q}$.
Next, we extract a fragment $Q'$ of $P$
that $\A$ matches without any edits against $Q^\infty\fragmentco{x'}{x'+|Q|}$
for some $x'\equiv x \bmod {|Q|}$. Finally, we replace $Q\coloneqq Q'$.

We conclude that it suffices to implement the $\texttt{PeriodicMatches}$ procedure under the following assumptions:
\begin{itemize}
  \item $m-k \le n < \threehalfs m + k$,
  \item $\edl{T}{Q}\le 3d$,
  \item $\edl{P}{Q}=\edp{P}{Q}$.
\end{itemize}
These are exactly the conditions of the \SM problem.

Next, we show how to efficiently obtain optimal alignments from \(P\) and \(T\) to \(\Q\),
thus completing the proof that of the fact that answering a call $\texttt{PeriodicMatches}(P,T,k,d,Q)$
reduces, in $\cO(d^2)$ time in the \modelname model, to solving an instance $\SM(P,T,k,d,Q,\A_P,\A_T)$.

\begin{lemma}\label{lem:apat}
    For any instance of the $\SM$ problem, we can construct
    optimal alignments $\A_P: P \onto Q^\infty\fragmentco{0}{y_P}$ of cost $d_P$
    (for some $y_P\in \Zz$)
    and $\A_T : T \onto Q^\infty{\fragmentco{x_T}{y_T}}$ of cost $d_T$
    (for some $x_T\in \fragmentco{0}{|Q|}$ and $y_T\in \Zz$)
    in $\Oh(d^2)$ time in the \modelname model.
\end{lemma}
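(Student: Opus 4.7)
The plan is to derive both alignments by directly invoking the $\alignment$ procedure of \cref{fct:alignment}, after identifying an appropriate starting offset in $Q^\infty$ for each of $P$ and $T$.

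For $\A_P$, the situation is essentially free: the $\SM$ problem statement explicitly guarantees $d_P = \edl{P}{Q} = \ed(P,Q^*) = \edp{P}{Q}$, i.e., some prefix of $Q^\infty$ already realizes the two-sided minimum. Consequently, I will simply call $\alignment(P,Q,0)$. By \cref{fct:alignment}, this produces (the breakpoint representation of) an optimal alignment $\A_P : P \onto Q^\infty\fragmentco{0}{y_P}$ of cost $\edp{P}{\rot^{0}(Q)} = \edp{P}{Q} = d_P$, in $\Oh(1+d_P^2) = \Oh(d^2)$ time in the \modelname model.

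For $\A_T$, the alignment is allowed to start at an arbitrary offset $x_T \in \fragmentco{0}{|Q|}$, so an extra step is needed to \emph{choose} this offset. I will first invoke $\texttt{FindAWitness}(3d, Q, T)$ of \cref{lem:witness} to obtain an integer $x$ with $\edp{T}{\rot^{-x}(Q)} = \edl{T}{Q} = d_T \le 3d$; because the two distances $\edp{T}{\rot^{-x}(Q)}$ and $\edp{T}{\rot^{-(x \bmod |Q|)}(Q)}$ coincide, I may replace $x$ by $x_T \coloneqq x \bmod |Q| \in \fragmentco{0}{|Q|}$. A subsequent call $\alignment(T, Q, x_T)$ then yields (the breakpoint representation of) an optimal alignment $\A_T: T \onto Q^\infty\fragmentco{x_T}{y_T}$ of cost $d_T$ in $\Oh(1 + d_T^2) = \Oh(d^2)$ extra time.

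The one thing to verify is that the $\texttt{FindAWitness}$ precondition $|T| \ge (2 \cdot 3d + 1)|Q|$ is met, and this follows from the bounds built into the $\SM$ instance. Indeed, $|Q| \ge 1$ together with $|Q| \le m/(8d)$ forces $d \le m/8$, whence $k \le d/2 \le m/16$ and $|T| \ge m-k \ge \tfrac{15}{16}m$, while $(6d+1)|Q| \le (6d+1)m/(8d) \le \tfrac{7}{8}m$. So the call is legal, and its cost $\Oh((3d)^2) = \Oh(d^2)$ fits the budget. Summing the two contributions gives $\Oh(d^2)$ total time in the \modelname model. I do not anticipate any genuine obstacle in this proof; the only care is bookkeeping around the rotation offset $x_T$ and confirming this mild length inequality.
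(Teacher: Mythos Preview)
Your proposal is correct and follows essentially the same approach as the paper: call $\alignment(P,Q,0)$ for $\A_P$, then $\texttt{FindAWitness}(3d,Q,T)$ followed by reducing the offset modulo $|Q|$ and calling $\alignment(T,Q,x_T)$ for $\A_T$. The only cosmetic difference is the arithmetic you use to verify $|T|\ge (6d+1)|Q|$; the paper derives it as $n \ge m-k \ge 8d|Q|-k \ge 7d|Q| \ge (6d+1)|Q|$, whereas you go via $|T|\ge \tfrac{15}{16}m$ and $(6d+1)|Q|\le \tfrac{7}{8}m$, but both chains are valid.
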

\begin{proof}
    As for \(\A_P\), we call $\alignment(P,Q,0)$ of \cref{fct:alignment},
    which directly yields the alignment $\A_P: P \onto Q^\infty\fragmentco{0}{y_P}$
    of cost $d_P = \edp{P}{Q}$.

    As for $\A_T$, we first call $\witness(3d,Q,T)$ of \cref{lem:witness} to obtain a
    positive integer $x_T$
    such that $\edp{T}{\rot^{-x_T}(Q)}=d_T$.
    This call is valid as the properties of the \SM instance yield \[
    n \ge m-k \ge   8d|Q|-k \ge 7d|Q| \ge (6d+1)|Q|.\]
    Moreover, due to $\rot^{-x_T}(Q)=\rot^{-x_T \bmod |Q|}(Q)$, we may set
    $x_T \coloneqq x_T \bmod {|Q|}$
    and thus assume $x_T\in \fragmentco{0}{|Q|}$ without loss of generality.
    Next, we call $\alignment(T,Q,x_T)$ to obtain
    an alignment $T\onto Q^\infty\fragmentco{x_T}{y_T}$
    of cost $\edp{T}{\rot^{-x_T}(Q)} = d_T$.

    The overall running time in the \modelname model is $\Oh(1+d_P^2+d^2+d_T^2)=\Oh(d^2)$.
\end{proof}

\subsection{A First Algorithm for {\SM}}\label{3.2}
For the remainder of this section, we fix an instance of the {$\SM(P, T, d, k,
Q, \A_P, \A_T)$} problem and set {$\kappa \coloneqq k+d_P+d_T$},
and {$\tau \coloneqq q\ceil{{\kappa}/{2q}}$}.

In a first step toward algorithms for \SM, we discuss how the (almost-)periodicity of
\(P\) and \(T\) yield simple ways to \emph{filter out} many potential starting positions of
occurrences. In particular, this allows us to obtain a fast algorithm for {\SM} when \(q\)
is very large.

Let us briefly recall an example from \cref{sec:techov}.
Suppose that \(P\) and \(T\) are perfectly periodic with
period \(Q\), that is
\(P = Q^{\infty}\fragment{0}{m}\) and
\(T = Q^{\infty}\fragment{0}{n}\).
In particular, in this special case \(\A_P\) and \(\A_T\)
are identity maps and we have \(m = y_P\), \(0 = x_T\),
 \(n = y_T\), and \(d_T = d_P = 0\).
Now, clearly all occurrences start around the positions in \(T\) where an
exact occurrence of \(Q\) starts, that is, in the intervals \(\fragment{j q - k}{jq + k}\) for
\(j \in \mathbb{Z}\).
Now, if we have \(q > 2k\), then indeed \(jq + k < (j+1)q - k\) and we can
thus filter out positions where no occurrence may start.

Now, as we are dealing with \emph{almost} periodic strings \(P\) and \(T\), potential
edits in \(\A_T\) and \(\A_P\) widen the intervals of potential starting positions.
Returning to the more general \(T\) and \(P\) fixed at the beginning of this
(sub-)section and given an alignment $\A : P \onto T\fragmentco{v}{w}$ of
cost at most $k$, observe that $\A_T\circ \A \circ \A_P^{-1}$
induces an alignment $Q^\infty\fragmentco{0}{y_P} \onto U \coloneqq
\A^{-1}_T(T\fragmentco{v}{w})$
of cost at most $\kappa = k + d_T + d_P$.
Now, as~$Q$ is primitive and $Q^\infty\fragmentco{0}{y_P}$ is a long substring of $Q^\infty$,
we can conclude that $U$ must be a prefix of $(\rot^{r}(Q))^\infty$ for
$r \in \fragment{-\kappa}{\kappa}$.
Hence, $U$ starts at some position
$u \in \bigcup_{j\in \mathbb{Z}} \fragment{j q-x_T-\kappa}{j q-x_T+\kappa}$
of $Q^\infty\fragmentco{x_T}{y_T}$.
The fact that $\A_T$ is of cost $d_T$, yields that $v \in \bigcup_{j\in \mathbb{Z}} \fragment{j q-x_T-\kappa-d_T}{j q-x_T+\kappa+d_T}$.
Now, if we have \(q > 2\kappa + 2 d_T\), then indeed \(j q-x_T + \kappa + d_T
< (j+1) q-x_T-\kappa-d_T\) and we can
thus filter out positions where no occurrence may start.

\begin{lemma}\label{lm-tmp-4.2-2}
    Consider an alignment \(\A : P \onto T\fragmentco{v}{w}\) of cost at most \(k\)
    and write \(\Q\fragmentco{v'}{w'} \coloneqq
    \A_T(T\fragmentco{v}{w})\).
    Further, for an integer \(j\),
    we have \begin{align*}
        v  &\in \fragment{j q-x_T-\kappa-d_T}{j q-x_T+\kappa+d_T},\\
        v' &\in \fragment{\max\{x_T, jq - \kappa\}}{jq + \kappa},
        \quad\text{and}\\
        w' &\in \fragment{jq + y_P - \kappa}{\min\{y_T, jq + y_P + \kappa\}}.
    \end{align*}
    Further, we have \(
        |\Q\fragmentco{v'}{w'}| \in \fragment{y_P - \kappa}{y_P + \kappa}.
        \)
\end{lemma}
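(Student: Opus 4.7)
The plan is to compose the given alignments into a single alignment between two fragments of $\Q$ and then exploit primitivity of $Q$, combined with a pigeonhole argument on the allowed edits, to tie $v'$ and $w'$ to a common multiple of $q$. First I define $\A' \coloneqq \A_T \circ \A \circ \A_P^{-1}$, which is an alignment $\Q\fragmentco{0}{y_P} \onto \A_T(T\fragmentco{v}{w}) = \Q\fragmentco{v'}{w'}$ of cost at most $d_P + k + d_T = \kappa$. The length bound derived from this cost immediately gives $|(w' - v') - y_P| \le \kappa$, which is exactly the final claim on $|\Q\fragmentco{v'}{w'}|$. Since $\A_T$ targets $\Q\fragmentco{x_T}{y_T}$, we trivially have $x_T \le v'$ and $w' \le y_T$, supplying the outer $\max$ and $\min$ terms in the stated ranges for $v'$ and $w'$.

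Next I locate a run of at least $q$ consecutive matches in $\A'$. Since $\A'$ performs at most $\kappa$ non-match operations, its matches along $\Q\fragmentco{0}{y_P}$ split into at most $\kappa + 1$ maximal runs and cover at least $y_P - \kappa$ positions. The instance conditions $q \le m/(8d)$, $d \ge 2k$, $d_P \le d$, and $d_T \le 3d$ yield $\kappa \le 9d/2$ and $y_P \ge m - d_P \ge 8dq - d$, from which a direct calculation shows $y_P - \kappa > (\kappa + 1)(q - 1)$; hence the longest run has length $M \ge q$. Within it, $\A'$ aligns $\Q\fragmentco{a}{a+M}$ with $\Q\fragmentco{b}{b+M}$ for some $a, b$; equality of these length-$M$ substrings of $\Q$ together with primitivity of $Q$ forces $a \equiv b \pmod q$. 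Thus $b - a = jq$ for some integer $j$, and the diagonal shift $\sigma \coloneqq b - a - v' = jq - v'$ of $\A'$ at this match satisfies $|\sigma| \le c_1$, where $c_1$ is the cost of the prefix of $\A'$ before the run. Consequently, $|v' - jq| = |\sigma| \le c_1 \le \kappa$.

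For $w'$, the same $j$ works: the diagonal shift stays constant throughout the run and, by a suffix cost of $c_2 \le \kappa - c_1$, differs from the final shift $w' - v' - y_P$ by at most $c_2$. Therefore $|(w' - v' - y_P) - (jq - v')| \le c_2$, which simplifies to $|w' - jq - y_P| \le c_2 \le \kappa$. Finally, to transfer the bound on $v'$ to $v$, I use that $(v, v')$ is a vertex of $\A_T$, which starts at $(0, x_T)$ and has total cost $d_T$; the diagonal shift at $(v, v')$ satisfies $|v' - x_T - v| \le d_T$, and combining with $|v' - jq| \le \kappa$ via the triangle inequality yields the claimed interval for $v$. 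The main technical subtlety is the pigeonhole step guaranteeing a matched run of length at least $q$; everything else is cost accounting.
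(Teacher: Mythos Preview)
Your proof is correct and follows essentially the same approach as the paper: compose the three alignments into $\B=\A_T\circ\A\circ\A_P^{-1}:\Q\fragmentco{0}{y_P}\onto\Q\fragmentco{v'}{w'}$ of cost at most $\kappa$, use a counting argument to find a long exactly-matched segment, invoke primitivity of $Q$ to get the offset $jq$, and then translate back via the cost of $\A_T$. The only cosmetic difference is that the paper pigeonholes over the $\ge\kappa+1$ full copies of $Q$ in $\Q\fragmentco{0}{y_P}$ (one must be matched exactly, and its image, being equal to $Q$, starts at a multiple of~$q$), whereas you pigeonhole over maximal runs of matches to find one of length $\ge q$; both arguments are equivalent here and yield the same~$j$.
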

\begin{proof}
    We formalize the example from before. To that end, write
    \(\B \coloneqq \A_T\circ \A\circ\A_P^{-1}\) for the alignment of
    \(\Q\fragmentco{0}{y_P}\) to \(\Q\fragmentco{v'}{w'}\) induced by \(\A\);
    observe that \(\B\) has a cost of at most \(\kappa\). Observe that this immediately
    yields the claimed bound on the size of \(\Q\fragmentco{v'}{w'}\).

    Now, observe that the fragments of \(\Q\) that \(\B\) aligns to each other are
    both long; hence at
    least one full occurrence of \(Q\) is matched exactly under \(\B\).
    \begin{claim}\label{cl-tmp-4.2-1}
        There is a fragment \(Q_i \coloneqq \Q\fragmentco{iq}{(i+1)q}\) of
        \(\Q\fragmentco{0}{y_P}\) with \(\B(Q_i) = Q_i = Q\).
    \end{claim}
    \begin{claimproof}
        As \(P\), \(T\), \(d\), \(k\), and \(Q\) stem from our fixed instance of \(\SM\),
        we have \[
            8dq \le m\quad\text{and}\quad d_P \le d\quad\text{and}\quad \kappa \le
            5d\quad\text{and}\quad 0 < d.
        \] Combined, we hence also have \[
            |\Q\fragmentco{0}{y_P}| = y_P \ge m - d_P \ge 8dq - d \ge 7dq \ge (\kappa + 2)q.
        \]
        In particular, the fragment \(\Q\fragmentco{0}{y_P}\) contains at least \((\kappa +
        1)\) full repetitions of \(Q\). As the alignment \(\B\) makes at most \(\kappa\)
        edits, at least one full occurrence of \(Q\) is hence aligned without edits.
    \end{claimproof}

    Now, observe that the fragment \(\B(Q_i)\) from \cref{cl-tmp-4.2-1} starts at some
    position \(i'q\) in \(\Q\)---as \(\B\) allows for at most \(\kappa\) insertions and
    deletions, we hence obtain that \(v' \in \fragment{(i'-i)q - \kappa}{(i'-i)q + \kappa}\).
    Symmetrically, we obtain \(w' \in \fragment{(i'-i)q + y_P - \kappa}{(i'-i)q + y_p
    + \kappa}\). Further, we observe that \(\B\) cannot extend beyond \(\A_T\).
    Finally, as $\A_T(T\fragmentco{v}{w}) = \Q\fragmentco{v'}{w'}$ and $\A_T$ is of cost $d_T$,
    we also have that $v \in \fragment{(i'-i)q - \kappa - d_T}{(i'-i)q + \kappa +d_T}$,
    thus completing the proof.
\end{proof}

Observe that expanding the minimum and maximum expressions
from \cref{lm-tmp-4.2-2} easily yields bounds on the possible values of \(j\); we have
\begin{equation*}
    j \in \bar{J} \coloneqq  \fragment{\ceil{{(x_T-\kappa)}/{q}}}{\floor{{(y_T-y_P+\kappa)}/{q}}}
    =\{j\in \mathbb{Z} \mid x_T \le jq+\kappa  \text{ and } jq + y_P - \kappa \le y_T \}.
\end{equation*}

Now, intuitively, we wish to process each fragment
\(T\fragment{jq - x_T-\kappa-d_T}{j q-x_T+\kappa+d_T}\) separately (each corresponding to
a single \(j \in \bar{J}\)).
However, if \(q\) is very small (that is, if \(q < \kappa / 2\)), this results in
too many fragments to process.
Hence we group them into longer fragments of length at
least \(\tau \coloneqq q\ceil{\kappa/2q} \ge q(\kappa/2q) = \kappa / 2\);
observe that \(\tau\) is an integer multiple of \(q\) and that we have
\begin{alignat}{2}\label{eq:tau-is-q}
    \tau &= q\ceil{{\kappa}/{2q}} = q, &&\quad\text{if \(q \ge \kappa / 2\) and}\\
    \label{eq:tau-is-less-k}
    \tau &\le q( 1 + \kappa / 2q) < \kappa,
         &&\quad\text{if \(q < \kappa / 2\).}
\end{alignat}
As discussed in the introductory example,
for two consecutive integers \(j\) and \(j + 1\),
the intervals \(\fragment{jq - \kappa}{jq + \kappa}\) and \(\fragment{(j+1)q -
\kappa}{(j+1)q + \kappa}\) are disjoint only if \(q > 2\kappa\)---in which case \(\tau =
q\). Hence, the definition of \(\tau\) agrees with \cref{lm-tmp-4.2-2}.

In total, this leads to the following definition.

\begin{definition}\label{def:jdef}
    Partition \(\Q\) into blocks of length \(\tau\) and number them starting from \(0\).
    For the \(j\)-th block \(Q_j \coloneqq \Q\fragmentco{j\tau}{(j + 1)\tau}\),
    we define the \emph{interesting region \(\bar{Q}_j\) of \(Q_j\)} as the fragment
    \(\bar{Q}_j \coloneqq \Q\fragment{j\tau - \kappa}{j\tau + \kappa}\).
    Now, write \(J\) for the set of all numbers of blocks whose interesting regions
    overlap \(\Q\fragmentco{x_T}{y_T - y_P + \kappa}\).
    Formally, we set
    \[J \coloneqq \fragment{\ceil{{(x_T-\kappa)}/{\tau}}}{\floor{{(y_T-y_P+\kappa)}/{\tau}}}
    =\{j\in \mathbb{Z} \mid x_T \le j\tau+\kappa  \text{ and } j\tau + y_P -\kappa \le y_T\}.\]
    Now, for each $j\in J$, we write \(R_j \coloneqq T\fragmentco{r_j}{r'_j}\) for the
    largest fragment of \(T\) that corresponds to the interesting region \(\bar{Q}_j\);
    that is, we set
    \begin{align*}
        r_j &\coloneqq \min\{a_T \mid (a_T,a_Q)\in \A_T
        \text{ for }  j\tau-\kappa \le a_Q\}
        \quad\text{and}\\
            r'_j &\coloneqq \max\{a_T \mid (a_T,a_Q)\in \A_T\text{ for } a_Q \le
             j\tau+y_P + \kappa\}.
    \end{align*}
    For convenience, we also set \(r_{\max J + 1} \coloneqq n\).
        \lipicsEnd
\end{definition}
\begin{remark}\label{rem:rj-corner}
    For convenience, we write
    \begin{align*}
        \A_T(r_j) &\coloneqq \min\{ a_Q \mid (r_j, a_Q) \in \A_T \text{ and }
        j\tau-\kappa \le a_Q\}\quad \text{and }\\
        \A_T(r'_j) &\coloneqq \max\{ a_Q \mid (r'_j, a_Q) \in \A_T \text{ and }  a_Q
        \le j\tau+y_P + \kappa \}.
    \end{align*}

    Observe that for \emph{most} of \(j \in J\), we have
    \(\A_T(r_j) = j\tau - \kappa\) and \(\A_T(r'_j) = j\tau + y_P + \kappa\).
    In particular, whenever \(\A_T(r_j) > j\tau - \kappa\), that is, if
    \(j\in \fragment{\ceil{({x_T-\kappa})/{\tau}}}{\floor{({x_T+\kappa})/{\tau}}}\),
    then \(j\tau - \kappa\) is a position before the first position that \(\A_T\) maps to:
    we have \(j\tau - \kappa \le x_T\) and hence \(\A_T(r_j) = x_T\),
    which in turn implies \(r_j = 0\).

    Similarly,
    whenever \(\A_T(r'_j) < j\tau + y_P + \kappa\),
    that is, if \(j\in
    \fragment{\ceil{(y_T-y_P-\kappa)/\tau}}{\floor{(y_T-y_P+\kappa)/{\tau}}}\),
    we have \(y_T < j\tau + y_P + \kappa\) and hence \(\A_T(r'_j) = y_T\),
    which in turn implies \(r'_j = n\).
    \lipicsEnd
\end{remark}
\begin{remark}\label{rem:negj}
    It is useful to number the blocks before the zeroth block with negative
    indices: observe that for \(x_T < \kappa\), we may have \((\min J) < 0\).
    In particular, we see that \(
    (\min J) \ge \ceil{{-\kappa}/{\tau}} \ge -2\), so we may indeed need to address a
    \(-1\)-st and \(-2\)-nd block.

    Similarly, \((\max J)\) may exceed the natural barrier of \((y_T - y_P) / \tau\).
    In particular, we see that
    \( (\max J) \le \ceil{y_T / \tau} - \ceil{y_P / \tau} + \kappa / \tau \le \ceil{y_T /
    \tau} - \ceil{y_P / \tau} + 2\).
    \lipicsEnd
\end{remark}
\begin{remark}\label{lem:rj-length}
    It is useful to upper bound the length of each fragment \(R_j\):
    fix a \(j \in J\) and consider the fragment \(\A_T( R_j )\) of \(\Q\).
    By construction, \(\A_T( R_j )\) has a length of\[
        |\A_T( R_j )| = (y_P + j\tau + \kappa) - (j\tau - \kappa) = y_P +  2\kappa.
    \] Now, as \(\A_T\) has a cost of \(d_T\), we conclude that the fragment \(R_j\) has a
    length of at most
    \begin{align*}
        |R_j| = r'_j - r_j \leq |\A_T( R_j )| + d_T  = y_P + 2\kappa + d_T \le m + 3
        \kappa - k.\tag*{\lipicsEnd}
    \end{align*}
\end{remark}

Intuitively, we want to think of \(R_j\) as the fragment of \(T\) that contains
all \(k\)-error occurrences of \(P\) that start at a
position in \(T\) that corresponds to a position in \(\bar{Q}_j\).
Formally, this requires a proof as a fragment~\(R_j\) could (in theory) be too short to
fully contain such an occurrence.

\begin{lemma}\label{lem:aligned-proto}
    Fix a \(j \in J\) and a position \(v \in \fragmentco{r_j}{r_{j + 1}}\).
    For any position \(w > r_j'\), any
    alignment \(\A: P \onto T\fragmentco{v}{w}\)
    has a cost of at least \(k + 1\).
\end{lemma}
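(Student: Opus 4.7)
I argue by contradiction: assume some alignment $\A : P \onto T\fragmentco{v}{w}$ has cost at most $k$, and write $\Q\fragmentco{v'}{w'} \coloneqq \A_T(T\fragmentco{v}{w})$. The plan is to combine two sources of constraints on $v'$ and $w'$: one coming from the hypotheses $v < r_{j+1}$ and $w > r'_j$, and the other coming from applying \cref{lm-tmp-4.2-2} to $\A$. The final contradiction will then follow from a short case split depending on whether $\tau = q$ or $\tau > q$.

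For the first source of constraints, I would invoke the definitions of $r_{j+1}$ and $r'_j$ from \cref{def:jdef} together with the monotonicity of $\A_T$. Since $r_{j+1}$ is the smallest position $a_T$ for which some $(a_T,a_Q)\in \A_T$ satisfies $a_Q \ge (j+1)\tau - \kappa$, the hypothesis $v < r_{j+1}$ forces every $a_Q$ with $(v,a_Q) \in \A_T$ to satisfy $a_Q < (j+1)\tau - \kappa$, and in particular $v' < (j+1)\tau - \kappa$. Symmetrically, the maximality of $r'_j$ applied to $w > r'_j$ forces every $a_Q$ with $(w,a_Q) \in \A_T$ to satisfy $a_Q > j\tau + y_P + \kappa$; using the convention from the preliminaries ($w' = y_T$ if $w = n$ and $w' = \min\{a_Q : (w,a_Q)\in \A_T\}$ otherwise, either of which is at least this minimum) this gives $w' > j\tau + y_P + \kappa$.

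For the second source of constraints, applying \cref{lm-tmp-4.2-2} to $\A$ delivers an integer $j^*$ with $v' \ge j^*q - \kappa$, $w' \le j^*q + y_P + \kappa$, and the length bound $w' - v' \le y_P + \kappa$. Combining with the inequalities from the previous paragraph yields $j^*q < (j+1)\tau$ and $j^*q > j\tau$, so $j\tau < j^*q < (j+1)\tau$. Recall that $\tau$ is always a multiple of $q$. If $\tau = q$ (equivalently, $q \ge \kappa/2$, cf.\ \eqref{eq:tau-is-q}), then $j < j^* < j+1$ admits no integer $j^*$, an immediate contradiction. Otherwise $\tau > q$ forces $q < \kappa/2$, and then \eqref{eq:tau-is-less-k} yields $\tau < \kappa$; the length bound now gives
\[w' - v' > (j\tau + y_P + \kappa) - ((j+1)\tau - \kappa) = y_P + 2\kappa - \tau > y_P + \kappa,\]
contradicting $w' - v' \le y_P + \kappa$.

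The main obstacle I foresee is not any single step but the careful handling of edge cases, in particular reconciling the inequalities with \cref{rem:rj-corner}: when $j = \max J$, the convention $r_{\max J + 1} \coloneqq n$ makes $v < r_{j+1}$ vacuously compatible with every $v < n$, but in that same regime \cref{rem:rj-corner} forces $r'_j = n$ so that the hypothesis $w > r'_j$ is impossible and the statement holds trivially. Dually, when $\min J < 0$ or when $r_j = 0$ because $\A_T$ starts at position $x_T > 0$, I need to make sure that the bound $\beta_v < (j+1)\tau - \kappa$ still follows from $v < r_{j+1}$ via the minimality of $r_{j+1}$, and that this (together with the $\max\{x_T, j^*q - \kappa\}$ in \cref{lm-tmp-4.2-2}) still suffices to conclude $j^*q < (j+1)\tau$. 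All of these are straightforward bookkeeping steps that do not affect the main line of argument.
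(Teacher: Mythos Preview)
Your approach matches the paper's: contraposition, bound $v'$ via the definition of $r_{j+1}$, apply \cref{lm-tmp-4.2-2}, and split on $\tau = q$ versus $\tau < \kappa$. The paper concludes $w' \le j\tau + y_P + \kappa$ directly and deduces $w \le r'_j$, whereas you add the symmetric constraint $w' > j\tau + y_P + \kappa$ from $w > r'_j$ and derive a contradiction; the two are logically equivalent.

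There is one small gap in your handling of $j = \max J$. \cref{rem:rj-corner} does \emph{not} always force $r'_{\max J} = n$: it only asserts this when $\max J \ge \lceil (y_T - y_P - \kappa)/\tau \rceil$, which can fail when $\tau = q > 2\kappa$ (for instance take $y_T - y_P = 8$, $\kappa = 1$, $q = 10$, so $\max J = 0$ and $r'_0 < n$). In that regime $w > r'_j$ is not vacuous, and you cannot derive the bound $v' < (j+1)\tau - \kappa$ from $v < r_{\max J+1} = n$. The fix is easy and stays within your argument: from the $w'$ bound alone you already get $j^*q > jq$, hence $j^* > j = \max J$; but the paragraph following \cref{lm-tmp-4.2-2} forces $j^* \in \bar{J}$, so $j^* \le \max J$, a contradiction. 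The paper's own footnote handles this corner in the same spirit (arguing the admissible interval for $w'$ becomes empty). Your Case~2 is unaffected, since $\tau < \kappa$ there indeed guarantees $r'_{\max J} = n$.
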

\begin{proof}
    We prove the contraposition.
    To that end, fix an alignment
    \(\A : P \onto T\fragmentco{v}{w}\)
    with a cost of at most \(k\).
    We intend to show that \(w \le r_j'\).

    To that end, write \(\Q\fragmentco{\A_T( v )}{\A_T( w )} \coloneqq
    \A_T(T\fragmentco{v}{w})\) and consider the position \(\A_T( v )\).
    First, suppose that \(q \ge \kappa / 2\) and, in particular, \(\tau = q\).
    Now, the definition of \(r_j\) yields\footnote{Observe that if \(j\) is equal to
        \(\max J\), then by \cref{lm-tmp-4.2-2}, for \(\A_T( v ) \in \fragment{(j+1)\tau
        - \kappa}{y_T}\) the interval of possible positions for \(w'\) is empty. Hence,
    also in this corner case, we can safely assume \(\A_T( v ) \le (j+1)\tau\).}\[
        \A_T( v ) \in
        \fragmentco{ j\tau - \kappa }{\min\{y_T, (j + 1)\tau - \kappa\}}
        \subseteq \fragmentco{ jq - \kappa }{(j + 1)q - \kappa}
    \]
    Applying \cref{lm-tmp-4.2-2},
    we see that \(\A_T( v ) \in \fragment{jq- \kappa}{jq + \kappa}\) and
    \[
        \A_T(w) \le \min\{y_T, jq + y_P + \kappa\} \le jq + y_P + \kappa
        = j\tau + y_P + \kappa
    \] Hence, by definition \(w \le r_j'\).

    Next, suppose that \(q < \kappa/2\). Now, using \cref{eq:tau-is-q}, we see
    that\footnote{Again, if \(j\) is \(\max J\), then by \cref{lm-tmp-4.2-2} with \(\tau
        \ge q\), we may assume \(\A_T( v ) < (j+1)\tau\).}\[
        \A_T( v ) \le (j + 1)\tau - \kappa \le j\tau + \kappa - \kappa.
    \] Now, using \cref{lm-tmp-4.2-2}, we see that \(|\A_T(T\fragmentco{v}{w})|
    \le y_P + \kappa\) and hence\[
        \A_T(w) \le \A_T(v) + |\A_T(T\fragmentco{v}{w})|
        \le j\tau + y_P + \kappa.
    \] Thus, we have \(w \le r_j'\); completing the proof.
\end{proof}

In particular, \cref{lem:aligned-proto} confirms our earlier intuition: indeed, the
fragment \(R_j\) fully contains all \(k\)-error occurrences of \(P\) that start at a
position in  \(T\) that corresponds to a position in \(\bar{Q}_j\).

As another consequence of \cref{lem:aligned-proto}, we see that the set \(\OccE_k(P,T)\)
decomposes over~\(J\)~and~\(R_j\).
\begin{corollary}\label{lem:aligned}
    For any position \(v \in \OccE_k(P, T)\) and a corresponding alignment
    \(\A : P \onto T\fragmentco{v}{w}\), there is a  (unique) \(j \in J\)
    such that \(v \in \fragmentco{r_j}{r_{j + 1}}\) and \(w \le r_j'\);
    that is, we have
    $\OccE_k(P, T)=\bigcup_{j\in J} \big(\OccE_k(P, R_j) + r_j \big)$.
\end{corollary}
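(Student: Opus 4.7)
The plan is to derive the statement as a direct consequence of \cref{lem:aligned-proto} together with a simple covering argument for the sequence $(r_j)_{j \in J \cup \{\max J + 1\}}$. First, observe that this sequence is non-decreasing, because $\A_T$ is monotone and the lower-bound thresholds $j\tau - \kappa$ in \cref{def:jdef} are non-decreasing in $j$; hence the half-open intervals $\fragmentco{r_j}{r_{j+1}}$ with $j \in J$ (some possibly empty) are pairwise disjoint, and their union equals $\fragmentco{r_{\min J}}{n}$ under the convention $r_{\max J + 1} = n$. Uniqueness of $j$ in the statement follows from this disjointness.

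Now take $v \in \OccE_k(P, T)$ with a witness alignment $\A \colon P \onto T\fragmentco{v}{w}$ of cost at most $k$. The main task is to show $v \ge r_{\min J}$, so that $v$ falls into a (unique) interval $\fragmentco{r_j}{r_{j+1}}$ with $j \in J$; the bound $v < n$ is immediate (compare \cref{ft:no-occs-suffix}). Applying \cref{lm-tmp-4.2-2} to $\A$ yields an integer $j' \in \bar J$ with $\A_T(v) \ge \max\{x_T,\, j'q - \kappa\}$, and by \cref{rem:rj-corner} we have $\A_T(r_{\min J}) = \max\{x_T,\, \min J \cdot \tau - \kappa\}$. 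I would split on $\tau$. When $\tau = q$ (\cref{eq:tau-is-q}, that is, $q \ge \kappa/2$), the sets $\bar J$ and $J$ coincide and $j' \ge \min J$, so $\A_T(v) \ge \A_T(r_{\min J})$ by direct comparison. When $\tau > q$ (\cref{eq:tau-is-less-k}, giving $\tau \le 2\kappa$), combining $\min J = \ceil{(x_T - \kappa)/\tau}$ with $\tau \le 2\kappa$ yields $\min J \cdot \tau - \kappa \le x_T$, whence $\A_T(r_{\min J}) = x_T \le \A_T(v)$. In either case, monotonicity of $\A_T$ gives $v \ge r_{\min J}$.

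Fix the unique $j \in J$ with $v \in \fragmentco{r_j}{r_{j+1}}$. The contrapositive of \cref{lem:aligned-proto}, applied to the cost-$\le k$ alignment $\A$, forces $w \le r'_j$. The displayed set identity is then immediate: for the $(\subseteq)$ inclusion, the restriction of $\A$ to $T\fragmentco{v}{w} \substr T\fragmentco{r_j}{r'_j} = R_j$ witnesses $v - r_j \in \OccE_k(P, R_j)$; for the $(\supseteq)$ inclusion, any cost-at-most-$k$ alignment of $P$ onto a prefix of $R_j\fragmentco{v'}{|R_j|}$ transports to an alignment into $T$ starting at $v' + r_j$, since $R_j$ is a fragment of $T$ beginning at $r_j$. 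The main delicacy lies in the case analysis of the middle paragraph, specifically reconciling the $q$-indexed constraints of \cref{lm-tmp-4.2-2} with the coarser $\tau$-indexed partition of \cref{def:jdef} at the left boundary of $T$; once the inequality $\tau \le 2\kappa$ from \cref{eq:tau-is-less-k} is invoked, the rest of the verification is routine.
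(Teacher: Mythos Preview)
Your proof is correct and follows essentially the same route as the paper: both argue $v \ge r_{\min J}$ via a case split on $\tau = q$ versus $\tau > q$ (the paper phrases the second case as $r_{\min J} = 0$ directly, using the sharper bound $\tau < \kappa$ from \cref{eq:tau-is-less-k}), and then invoke \cref{lem:aligned-proto}. One phrasing point: your final appeal to ``monotonicity of $\A_T$'' is slightly loose, since monotonicity alone does not yield the converse direction; the clean justification is that $(v,v')\in\A_T$ with $v'\ge\min J\cdot\tau-\kappa$ places $v$ in the set whose minimum defines $r_{\min J}$ (and in the second case simply $r_{\min J}=0$).
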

\begin{proof}
    Recall that \(R_j = T\fragmentco{r_j}{r_j'}\).
    Now, using \cref{lem:aligned-proto}, we see that for each \(j \in J\),
    the set \(\OccE_k(P, R_j)\) (shifted by \(r_j\)) contains all \(k\)-error
    occurrences of \(P\) that start in \(T\fragmentco{r_j}{r_{j+1}}\).

    Finally, we need to argue that no \(k\)-error occurrence may start at a position in
    \(T\) not covered by the fragments \(R_j\). To that end, first observe that \(r_{\max
    J + 1} = n\), so we may lose an occurrence only if it starts before \(r_{\min J}\) in
    \(T\). Set \(j^* \coloneqq \min J\) and observe that we are done if \(j^* = 0\), as
    \(r_0 = 0\). Now, for a positive~\(j^*\), if \(q \ge \kappa / 2\) (and hence \(\tau =
    q\)), we see that for any positive integer \(\alpha\),
    we have \((j^* - \alpha)q + \kappa < x_T\)---hence (by \cref{lm-tmp-4.2-2})
    no \(k\)-error occurrence starts in \(T\) before \(r_{j^*}\).

    Finally, if \(q < \kappa / 2\) (and hence \(\tau < \kappa\) by
    \cref{eq:tau-is-less-k}); we have\[
        j^* \tau - \kappa \le \left(1 + \frac{x_T - \kappa}{\tau}\right) \tau - \kappa
                  < x_T - \kappa \le x_T.
    \] Hence, we have \(\A_T( r_{j^*} ) = x_T\) and hence \(r_{j^*} = 0\); completing the
    proof.
\end{proof}

Next, we discuss how to compute the sequences \((r_j)\) and \((r'_j)\).

\begin{lemma}\label{fct:rj}
    Given the alignment $\A_T$, in \(\Oh(d_T + 1)\) time, we can construct the sequences
    $(r_j)_{j\in J}$ and $(r'_j)_{j\in J}$, represented as concatenations of $\Oh(d_T+1)$
    arithmetic progressions with difference $\tau$.
\end{lemma}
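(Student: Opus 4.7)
The plan is to invoke Lemma~\ref{fct:alignments} on the alignment $\A_T$ and then sample the resulting representations at the arithmetic progressions of positions arising from $J$. First, applying Lemma~\ref{fct:alignments} to $\A_T$ in $\mathcal{O}(d_T+1)$ time produces representations of the sequences $(\brpst{\A_T}{\bar{y}})_{\bar{y}=x_T}^{y_T}$ and $(\brped{\A_T}{\bar{y}})_{\bar{y}=x_T}^{y_T}$ as concatenations of $\mathcal{O}(d_T+1)$ arithmetic progressions of difference $1$. Intuitively, between consecutive breakpoints of $\A_T$ the alignment proceeds purely diagonally, so the maps $\bar{y}\mapsto \brpst{\A_T}{\bar{y}}$ and $\bar{y}\mapsto \brped{\A_T}{\bar{y}}$ are piecewise affine with slope $0$ or $1$ and change their behavior only at the at most $d_T$ breakpoints.

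Next, I would identify $r_j$ and $r'_j$ as samples of these maps. Unwinding Definition~\ref{def:jdef}, for every $j\in J$ with $j\tau - \kappa \ge x_T$ we have $r_j = \brpst{\A_T}{j\tau-\kappa}$; by Remark~\ref{rem:rj-corner}, the exceptional indices (those with $j\tau-\kappa<x_T$) form a set of size $\mathcal{O}(\kappa/\tau)=\mathcal{O}(1)$, since $\tau=q\lceil \kappa/(2q)\rceil \ge \kappa/2$ holds by definition, and for these $r_j=0$. Symmetrically, $r'_j=\brped{\A_T}{j\tau+y_P+\kappa}$ for all but $\mathcal{O}(1)$ indices, with $r'_j=n$ on the boundary. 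Now, since $\tau$ is a positive integer (being a multiple of $q$), the set $\{j\tau-\kappa : j\in J\}$ is itself an arithmetic progression of difference $\tau$; its intersection with each of the $\mathcal{O}(d_T+1)$ unit-difference arithmetic progressions composing the domain decomposition of $(\brpst{\A_T}{\bar{y}})$ is again a single arithmetic progression of difference $\tau$, and applying the locally affine sampling map of slope $0$ or $1$ yields an arithmetic progression of difference $\tau$ (or the degenerate difference-$0$ case) in the output sequence $(r_j)_{j\in J}$. The same reasoning applied to $(\brped{\A_T}{\bar{y}})$ and the sampling positions $\{j\tau+y_P+\kappa : j\in J\}$ yields the required representation for $(r'_j)_{j\in J}$.

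Finally, I would implement the sampling in a single linear scan over the $\mathcal{O}(d_T+1)$ pieces of each representation, prepending and appending the $\mathcal{O}(1)$ constant boundary runs of Remark~\ref{rem:rj-corner} (the constant-$0$ run at the start of $(r_j)_{j\in J}$ and the constant-$n$ run at the end of $(r'_j)_{j\in J}$). The overall running time is $\mathcal{O}(d_T+1)$, dominated by the call to Lemma~\ref{fct:alignments} and the subsequent linear scan. There is no substantial obstacle here; the only care required is to stitch the $\mathcal{O}(1)$ boundary runs onto the $\mathcal{O}(d_T+1)$ arithmetic progressions produced by the scan, which is a simple bookkeeping task.
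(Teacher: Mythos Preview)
Your proposal is correct and follows essentially the same approach as the paper: invoke Lemma~\ref{fct:alignments} on $\A_T$, identify $r_j=\brpst{\A_T}{j\tau-\kappa}$ and $r'_j=\brped{\A_T}{j\tau+y_P+\kappa}$ away from the boundaries, sample each difference-$1$ piece at the step-$\tau$ grid of query positions, and prepend/append the $\mathcal{O}(1)$ constant boundary runs from Remark~\ref{rem:rj-corner} using the bound $\tau\ge\kappa/2$. The paper's proof carries out exactly this plan (it even computes the explicit bound $2\kappa/\tau\le 4$ for the number of boundary indices); your mention of a possible ``degenerate difference-$0$ case'' is harmless since any such piece is a single-element progression and can be declared to have difference~$\tau$.
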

\begin{proof}
    We intend to employ \cref{fct:alignments} to compute the first positions in \(T\) that
    \(\A_T\) maps to a position \(j\tau - \kappa\) and the last positions in \(T\) that
    \(\A_T\) maps to a position \(j\tau + y_P + \kappa\). Thereby, we immediately obtain
    the desired entries of \(r_j\) when \(\A_T(r_j) = j\tau - \kappa\) (by setting
    \(r_j \coloneqq \brpst{\A_T}{j\tau - \kappa}\)) and
    the desired entries of \(r'_j\) when \(\A_T(r'_j) = j\tau + y_P - \kappa\)
    (by setting \(r'_j \coloneqq \brped{\A_T}{j\tau + y_P - \kappa}\)).
    Recalling \cref{rem:rj-corner}, we see that the remaining entries of \((r_j)\) are those
    that correspond to a (short) prefix of \(T\) (for which \(r_j = 0\)) and we see that the
    remaining entries of \((r'_j)\) are those that correspond to a (short) suffix of \(T\)
    (for which \(r'_j = n\)).

    Formally, we use \cref{fct:alignments} to compute the sequences
    \((\brpst{\A_T}{v})_{v = x_T}^{y_T}\) and
    \((\brped{\A_T}{w})_{w = x_T}^{y_T}\)---which we obtain represented
    as \(\Oh(d_T + 1)\) arithmetic progressions with difference 1 (that is intervals) each.
    Now, for the sequence $(r_j)_{j\in J}$; for
    $j\in \fragment{\ceil{({x_T-\kappa})/{\tau}}}{\floor{({x_T+\kappa})/{\tau}}}$,
    we set \(r_j = 0\) (which we output as single-element arithmetic progressions).

    For the remaining values of \((r_j)\), that is for
    \(j \in \fragment{\ceil{({x_T+\kappa})/{\tau}}}{\floor{({y_T -y_P+\kappa})/{\tau}}}\),
    we process the intervals that represent \((\brpst{\A_T}{v})\) as follows: for
    each interval \(I\) representing \((\brpst{\A_T}{v})_{v = x_T}^{x_T + |I| - 1}\),
    we compute the intersection \(\fragmentco{x_T}{x_T + |I|} \cap (j\tau - \kappa)_j\) and
    return the corresponding entries of \(I\) as an arithmetic progression with difference
    \(\tau\).
    For \((r'_j)_{j \in J}\) we proceed analogously.

    For the running time, computing \((\brpst{\A_T}{v})\) and \((\brpst{\A_T}{w})\) takes time
    \(\Oh(d_T + 1)\); post-processing the obtained intervals takes constant time per
    interval for a total running time of \(\Oh(d_T + 1)\).
    Next, observe that as \(\tau \ge \kappa / 2\), we have
    \[|\fragment{\ceil{({x_T-\kappa})/{\tau}}}{\floor{({x_T+\kappa})/{\tau}}}| \le 2\kappa
    / \tau \le 4.\]
    In particular we can handle the initial values of \((r_j)\) in
    constant time and we output a constant number of extra arithmetic progressions.
    Similarly, we have
    \[|\fragment{\ceil{(y_T-y_P-\kappa)/\tau}}{\floor{(y_T-y_P+\kappa)/{\tau}}}| \le
    2\kappa/ \tau \le 4.\]
    In particular we can handle the final values of \((r'_j)\) in
    constant time  and we output a constant number of extra arithmetic progressions.
    Hence in total, the algorithm runs in time \(\Oh(d_T + 1)\) and produces \(\Oh(d_T +
    1)\) arithmetic progressions, each with difference \(\tau\);\footnote{Observe that
    this is also true for single-element arithmetic progressions.} thus completing the
    proof.
\end{proof}

Finally, we present a first algorithm for \SM, which is particularly useful for small sets
\(J\), that is, when \(q \gg k\).
Let us note that this algorithm is also used in the general case for the efficient computation of
$k$-error occurrences that start in heavy positions; see \cref{sus:heavy}.

\begin{lemma}\label{fct:verifyRj}
    For each $j\in J$, we can compute the set $\OccE_k(P,R_j)$
    in $\Oh(d^2)$ time in the \modelname model.
    In particular, we can solve the \SM problem in $\cO(d^2|J|)$ time in the \modelname model.
\end{lemma}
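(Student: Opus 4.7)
The plan is to leverage the observation that each fragment $R_j$ is only slightly longer than the pattern $P$, so the starting positions of any $k$-error occurrence of $P$ in $R_j$ are confined to a very short prefix of $R_j$. This makes the set $\OccE_k(P,R_j)$ computable by a single direct invocation of the {\tt Verify} primitive from \cref{lm:verifye}.

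First, I would compute the sequences $(r_j)_{j\in J}$ and $(r'_j)_{j\in J}$ via \cref{fct:rj} in $\Oh(d_T+1)$ time; this gives constant-time access (in the \modelname model) to each fragment $R_j = T\fragmentco{r_j}{r'_j}$ through an {\tt Extract} operation.

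Next, I would bound the candidate starting positions within $R_j$. For any witnessing alignment $\A : P \onto R_j\fragmentco{v}{w}$ of cost at most $k$, the fact that $\A$ performs at most $k$ deletions forces $w - v \ge m - k$; combining this with $w \le |R_j|$ and the length bound $|R_j| \le m + 3\kappa - k$ from \cref{lem:rj-length}, we obtain $v \le 3\kappa$. Consequently,
\[
    \OccE_k(P, R_j) \subseteq \fragment{0}{3\kappa},
\]
which is an interval of length $\Oh(\kappa) = \Oh(d)$, using $\kappa = k + d_P + d_T$ together with $d_P \le d$ and $d_T \le 3d$ from the \SM problem.

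Having restricted the search to an interval of size $\Oh(d)$, a single call {\tt Verify}$(P, R_j, k, \fragment{0}{3\kappa})$ from \cref{lm:verifye} returns the entire set $\OccE_k(P, R_j)$ in $\Oh(k(k + 3\kappa)) = \Oh(k\kappa)$ \modelname operations. Since $d \ge 2k$ and $\kappa \le 5d$, this is $\Oh(d^2)$, proving the first part of the statement. For the second part, \cref{lem:aligned} gives $\OccE_k(P, T) = \bigcup_{j \in J} (\OccE_k(P, R_j) + r_j)$, so iterating the above procedure over all $j \in J$ and shifting by $r_j$ solves the entire \SM instance in $\Oh(d^2|J|)$ \modelname operations. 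There is no real obstacle beyond carefully verifying the constant factors hidden by the $\Oh$-notation: the bound $|R_j|-m+k \le 3\kappa$ depends on \cref{lem:rj-length}, and matching $|I| = \Oh(d)$ against the $\Oh(k(k+|I|))$ cost of {\tt Verify} is what yields the target $\Oh(d^2)$ budget.
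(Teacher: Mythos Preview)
Your proposal is correct and follows essentially the same approach as the paper: bound $|R_j|$ via \cref{lem:rj-length}, invoke {\tt Verify} on the resulting $\Oh(\kappa)$-length interval of candidate starts, and assemble the answer through \cref{lem:aligned}. The only cosmetic difference is that the paper writes the interval as $\fragmentco{0}{|R_j|-m+k}$ rather than $\fragment{0}{3\kappa}$, and it explicitly records an additive $\Oh(\log(1+d_T))$ term for looking up $r_j,r'_j$ in the arithmetic-progression representation from \cref{fct:rj} (your ``constant-time access'' is a slight overstatement there, but the cost is dominated by $\Oh(d^2)$ anyway).
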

\begin{proof}
    Recall from \cref{lem:rj-length} that for each $j \in J$ we have $|R_j| \leq m+3\kappa-k$.
    Thus, by \cref{fct:toolbox,fct:rj}, the construction of $\OccE_k(P,R_j)$ using
    \texttt{Verify}$(P,R_j,k,\fragmentco{0}{|R_j|-m+k})$
    takes $\cO(\log(1+d_T)+k(k+\kappa))=\cO(d^2)$ time in the \modelname model.
    \Cref{lem:aligned} shows how the sets $\OccE_k(P,R_j)$ can be combined into $\OccE_k(P,T)$;
    this takes time $\Oh(\kappa|J|)=\Oh(d|J|)$.
    In total, we see that the overall running time in the \modelname model is
    dominated by the calls to $\texttt{Verify}$, which take time
     $\Oh(d^2|J|)$ in the \modelname model.
\end{proof}

\section{Using {\DPM} for Algorithms for {\SM}}\label{sec:DPMtile}

In this section, we formalize the connection between the \DPM problem and the \SM problem.
In particular, we define suitable puzzles for the strings \(P\) and \(T\); then we
show that in the setting of \SM, we can compute \(\OccE_k(P, T)\) with using very few {\tt
DPM-*} operations. To that end, we first define partitions of \(P\) and \(T\) (resulting
in \emph{tiles}) from which we then construct appropriate puzzles.

\begin{definition}[$\tau$-tile partition]\label{def:can_part}
    Consider a string $S$, a primitive string $Q$ of length $q$, an integer $\tau\in \Zp$
    divisible by $q$, and an alignment $\A_S : S \onto Q^\infty\fragmentco{x_S}{y_S}$, where
    $x_S\in \fragmentco{0}{q}$ and $y_S\ge x_S$.

    Partition \(\Q\) into blocks of length \(\tau\) and number them starting from \(1\).
    For the \(j\)-th block \(Q_j \coloneqq \Q\fragmentco{\max\{x_S, (j-1)\tau\}}{\min\{y_S,
    j\tau\}}\), we define the \emph{\(j\)-th tile of \(S\) (with respect to \(\A_S\))} as\[
        S\fragmentco{s_{i-1}}{s_i} \coloneqq \A_S^{-1}(Q_j).
    \]
    Further, we define the \emph{$\tau$-tile partition of $S$ with respect to $\A_S$}
    as the partition \[
    S = \bigodot_{i\in\Zp} S\fragmentco{s_{i-1}}{s_i}.\tag*{\lipicsEnd}\]
\end{definition}
\begin{remark}\label{rem:can_part}
    Observe that the \(\ceil{{y_S}/{\tau}}\)-th tile is the last non-empty tile of \(S\).
    Hence, writing \(\beta_S \coloneqq \ceil{{y_S}/{\tau}}\), we have
    \[S = \bigodot_{i=1}^{\beta_S} S\fragmentco{s_{i-1}}{s_i}.\tag*{\lipicsEnd}\]
\end{remark}
\begin{remark}\label{rem:can_part2}
    We observe that \(s_0 = 0\) and \(s_{\beta_S} = |S|\).
    We also see that
    that for all other \(j \in \fragmentoo{0}{\beta_S}\), we have \[
        s_i = \min\{a_S\in \fragment{0}{|S|} \mid (a_S,i\tau)\in \A_S\}.
    \]
    Recalling \cref{fct:alignments}, we hence have \(s_i = \brpst{\A_S}{i\tau}\).
    \lipicsEnd
\end{remark}

It is easy to verify that we can indeed efficiently
compute the \(\tau\)-tile partition of a string.

\begin{lemma}\label{fct:tile}
    Let $S=\bigodot_{i=1}^\beta S\fragmentco{s_{i-1}}{s_i}$ denote the $\tau$-tile partition
    of a string $S$ with respect to a cost-$d$
    alignment $\A_S : S \onto Q^\infty\fragmentco{x_S}{y_S}$.

    Given (the breakpoint representation of) $\A_S$, in \(\Oh(d + 1)\) time,
    we can construct the sequence $(s_i)_{i=0}^\beta$, represented as $\Oh(d+1)$
    arithmetic progressions with difference $\tau$.
\end{lemma}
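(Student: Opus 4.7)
The plan is to reduce the computation directly to \cref{fct:alignments}. By \cref{rem:can_part2}, for every interior index $i \in \fragmentoo{0}{\beta_S}$ we have $s_i = \brpst{\A_S}{i\tau}$, so the sequence $(s_i)_{i=0}^{\beta_S}$ is obtained by sampling the sequence $(\brpst{\A_S}{\bar{y}})_{\bar{y}=x_S}^{y_S}$ at the arithmetic progression of positions $\{i\tau : 0 < i < \beta_S\}$, and then prepending $s_0 = 0$ and appending $s_{\beta_S} = |S|$.

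First I would invoke \cref{fct:alignments} on the breakpoint representation of $\A_S$, which in $\Oh(d+1)$ time yields the sequence $(\brpst{\A_S}{\bar{y}})_{\bar{y}=x_S}^{y_S}$ represented as a concatenation of $\Oh(d+1)$ arithmetic progressions with difference~$1$, that is, as $\Oh(d+1)$ intervals. Next, I would process these intervals one at a time: for each interval $I=\fragmentco{a}{a+L}$ whose values cover the positions $\bar y \in \fragmentco{b}{b+L}$ of $Q^\infty$, I intersect $\fragmentco{b}{b+L}$ with the set $\{i\tau : i \in \Zp, 0 < i < \beta_S\}$. Since this intersection is either empty or of the form $\{i_1\tau, (i_1+1)\tau, \ldots, i_2\tau\}$ for some $i_1 \le i_2$, it contributes a single arithmetic progression of difference $\tau$ to the output, whose starting value can be computed in constant time from~$a$, $b$, and~$\tau$. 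Processing all intervals thus costs $\Oh(d+1)$ time and produces $\Oh(d+1)$ arithmetic progressions with difference $\tau$.

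Finally, I would handle the boundary values $s_0$ and $s_{\beta_S}$ separately as singleton arithmetic progressions (degenerate ones of length $1$). Their values $s_0 = 0$ and $s_{\beta_S} = |S|$ are available in $\Oh(1)$ time (the latter either from the alignment or by computing $|S|$ directly via \lenOpName). Combining the boundary contributions with the $\Oh(d+1)$ progressions obtained above, and merging any two adjacent progressions that happen to coincide in the lattice, yields the claimed representation of $(s_i)_{i=0}^{\beta_S}$ in $\Oh(d+1)$ total time.

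The only mild obstacle is bookkeeping at the two endpoints of the alignment: when $x_S > 0$, the zeroth tile begins at $s_0 = 0$ rather than at $\brpst{\A_S}{0}$, and the first ``real'' sampled position is $\tau$ (not $0$); symmetrically, when $y_S$ is not a multiple of $\tau$, the index $\beta_S\tau$ lies outside the domain of $\brpst{\A_S}{\cdot}$, so $s_{\beta_S} = |S|$ must be supplied manually rather than read off the sampled sequence. Treating these two entries as singleton progressions, as above, is sufficient to sidestep this issue cleanly.
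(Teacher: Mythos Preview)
Your proposal is correct and follows essentially the same approach as the paper: invoke \cref{fct:alignments} to obtain $(\brpst{\A_S}{\bar y})$ as $\Oh(d+1)$ difference-$1$ progressions, sample at multiples of $\tau$ via \cref{rem:can_part2} (each interval contributing one difference-$\tau$ progression), and append the two boundary values $s_0=0$ and $s_{\beta_S}=|S|$ as singletons.
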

\begin{proof}
    We use \cref{fct:alignments} to obtain the sequence \((\brpst{\A_S}{v})_{v=x_S}^{y_S}\) as a
    concatenation of \(\Oh(d+1)\) arithmetic progressions with difference 1.
    Next, in view of \cref{rem:can_part2},
    we extract and return the subsequence \((\brpst{\A_S}{j\tau})_{j = 1}^{\beta_S - 1} = s_j\);
    together with the boundary values of \(s_0 = 0\) and \(s_{\beta_S} = |S|\).

    For the running time, observe that it suffices to process each arithmetic progression in
    $(\brpst{\A_S}{v})_{v=x_S}^{y_S}$ in $\Oh(1)$ time.
    Further, each such arithmetic progression contributes a single (perhaps empty) arithmetic
    progression of difference $\tau$ to the output; to which we add exactly two further
    arithmetic progressions. In total, this yields the claim.
\end{proof}

As in \cref{3.2},
we fix an instance of the {$\SM(P, T, d, k, Q, \A_P, \A_T)$} problem
and we set {$\kappa \coloneqq k+d_P+d_T$},
and {$\tau \coloneqq q\ceil{{\kappa}/{2q}}$}.

Now additionally, let $P = \bigodot_{i=1}^{\beta_P} P\fragmentco{p_{i-1}}{p_i}$
denote the $\tau$-tile partition of $P$ with respect to $\A_P$,
and let $T=\bigodot_{i=1}^{\beta_T} T\fragmentco{t_{i-1}}{t_i}$ denote
the $\tau$-tile partition of $T$ with respect to $\A_T$.

Toward our goal of creating a suitable puzzle for \DPM, we first observe that
if $\beta_P$ is very small (that is if the tiles are very long),
then we can already efficiently solve the \SM problem.

\begin{lemma}\label{lem:smallbeta}
    We can solve \SM in time $\cO(d^2 \beta_P)$ in the \modelname model.
\end{lemma}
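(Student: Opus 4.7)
\medskip

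\noindent\textbf{Proof plan.}
The plan is to reduce the statement directly to \cref{fct:verifyRj}, which already gives an $\cO(d^2|J|)$-time algorithm for \SM in the \modelname model. Thus it suffices to show $|J|=\cO(\beta_P)$ whenever $\beta_P$ is taken from the canonical $\tau$-tile partition of $P$ with respect to $\A_P$. Once this bound is in place, the running time becomes $\cO(d^2|J|)=\cO(d^2\beta_P)$, as required.

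First, I would unfold the definition of $J$ from \cref{def:jdef}: we have $J=\fragment{\ceil{(x_T-\kappa)/\tau}}{\floor{(y_T-y_P+\kappa)/\tau}}$, so
\[
|J|\le \frac{y_T-y_P-x_T+2\kappa}{\tau}+1\le \frac{y_T-y_P+2\kappa}{\tau}+1,
\]
using $x_T\ge 0$. Next I would translate the constraints of the \SM instance into the two inequalities $y_T\le |T|+d_T$ and $y_P\ge m-d_P$; together with $|T|<\ceil{\threehalfs m}+k$ and $k+d_P+d_T=\kappa$, this yields $y_T-y_P\le \tfrac{m}{2}+\kappa$. Thus
\[
|J|\le \frac{m/2+3\kappa}{\tau}+1.
\]

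Now I would relate $m$ to $\beta_P\tau$. From \cref{rem:can_part}, $\beta_P=\ceil{y_P/\tau}\ge y_P/\tau\ge (m-d_P)/\tau$, so $m\le \beta_P\tau+d_P$. Substituting gives
\[
|J|\le \tfrac{1}{2}\beta_P+\frac{d_P}{2\tau}+\frac{3\kappa}{\tau}+1.
\]
Finally, I would exploit the key inequality $\tau=q\ceil{\kappa/(2q)}\ge \kappa/2$ (stated explicitly above \cref{eq:tau-is-less-k}), which bounds both $d_P/(2\tau)\le d_P/\kappa\le 1$ and $3\kappa/\tau\le 6$. Combined with $\beta_P\ge 1$, this gives $|J|\le \tfrac{1}{2}\beta_P+8=\cO(\beta_P)$, and \cref{fct:verifyRj} completes the proof.

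The only mildly delicate step is the chain of estimates bounding $y_T-y_P$ in terms of $m$ and $\kappa$; it relies on the somewhat unusual length constraint $n\in\fragmentco{m-k}{\ceil{\threehalfs m}+k}$ built into the \SM problem, together with the fact that the alignments $\A_P$ and $\A_T$ have bounded cost ($d_P\le d$ and $d_T\le 3d$). Everything else is routine arithmetic using $\tau\ge \kappa/2$.
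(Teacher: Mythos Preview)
Your proof is correct and follows essentially the same route as the paper: both reduce to \cref{fct:verifyRj} by showing $|J|\le \beta_P/2+8$ via the chain $|J|-1\le (y_T-y_P-x_T+2\kappa)/\tau$, the length constraint $n-m\le m/2+k$, and the key bound $\tau\ge \kappa/2$. The only cosmetic difference is that the paper routes the final step through $y_P$ (writing $m\le y_P+\kappa$ and then $y_P/\tau\le \beta_P$), whereas you relate $m$ to $\beta_P\tau$ directly; the arithmetic and the resulting constant are the same.
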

\begin{proof}
    In light of \cref{fct:verifyRj}, it suffices to prove that $|J|=\Oh(\beta_P)$.

    To that end, observe that by the definition of \(J\) (\cref{def:jdef}), we have
    \begin{align*}
        |J|-1 &= \floor{\frac{y_T-y_P+\kappa}{\tau}}-\ceil{\frac{x_T-\kappa}{\tau}}
              \le \frac{y_T-y_P-x_T+2\kappa}{\tau} =
              \frac{(y_T-x_T)-(y_P-0)+2\kappa}{\tau}\\
              \intertext{Rephrasing in terms of \(n\) and \(m\) yields}
        |J| - 1
              &\le \frac{n+d_T - m+d_P+2\kappa}{\tau}.\\
              \intertext{Using \(n - m \le m/2 + k\) (from using the Standard
              Trick) and, thereafter, \(\kappa \coloneqq k + d_P + d_T\) yield}
        |J| - 1
              &\le
              \frac{m/2+k+d_P+d_T+2\kappa}{\tau}
              =\frac{m+6\kappa}{2\tau}. \\
              \intertext{Rephrasing in terms of \(y_P\) and, thereafter, using
                \(\tau \ge \kappa / 2\), and the definition of \(\beta_P\) yield}
        |J| - 1
              &\le \frac{y_P+7\kappa}{2\tau}
              \le \frac{y_P}{2\tau}+7 \le \beta_P/2+7.
    \end{align*}
    Rearranging yields $|J|\le \beta_P/2+8 \le {}^{17}\!/\!{}_{2}\,\beta_P$, completing the
    proof.
\end{proof}

In particular, \cref{lem:smallbeta} allows us to assume (without loss of generality) that
\(\beta_P \ge 20\).
Further, we set $\Delta \coloneqq  6\kappa$ and $z \coloneqq \beta_P - 17$.

Now, we (essentially) extend the tiles from the \(\tau\)-tile partition of \(P\) by an additional
\(\Delta\) characters to obtain a \(\Delta\)-puzzle with value \(P\).

\begin{lemma}\label{fct:puzzleP}
    The following sequence $P_1,\ldots,P_z$ forms a $\Delta$-puzzle with value $P$:
    \begin{itemize}
        \item $P_1 \coloneqq P\fragmentco{p_0}{p_2+\Delta}$;
        \item $P_{i} \coloneqq P\fragmentco{p_{i}}{p_{i+1}+\Delta}$ for $i\in \fragmentoo{1}{z}$;
        \item $P_z \coloneqq P\fragmentco{p_{z}}{|P|}$.
    \end{itemize}
\end{lemma}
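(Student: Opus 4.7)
The plan is to directly verify the three defining conditions of a $\Delta$-puzzle ($z \geq 2$; each piece has length at least $\Delta$; consecutive pieces overlap on exactly $\Delta$ characters) and then check by a telescoping computation that $\val_\Delta(P_1,\ldots,P_z) = P$. The first condition is immediate: by \cref{lem:smallbeta} we may assume $\beta_P \geq 20$, so $z = \beta_P - 17 \geq 3$. For the overlap condition, by the formulas in the statement, for every $i \in \fragmentoo{1}{z}$ we have
\[
P_i\fragmentco{|P_i|-\Delta}{|P_i|} = P\fragmentco{p_{i+1}}{p_{i+1}+\Delta} = P_{i+1}\fragmentco{0}{\Delta},
\]
where the right-hand equality uses the interior-piece formula when $i+1 < z$ and the definition of $P_z$ when $i+1 = z$; the boundary case $i=1$ is analogous, with $P_1$ ending at $p_2+\Delta$.

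The only nontrivial step is the length bound $|P_z| \geq \Delta$. Here I would exploit the fact that $P_z = P\fragmentco{p_z}{|P|}$ covers exactly the last $17$ tiles of the $\tau$-tile partition, namely the tiles indexed $z+1,\ldots,\beta_P$. Since $x_P = 0$ in the \SM problem, \cref{def:can_part} gives that each of tiles $z+1,\ldots,\beta_P - 1$ aligns to a full block $Q^\infty\fragmentco{(i-1)\tau}{i\tau}$ of $Q$-length exactly $\tau$, so the total $Q$-length of this region is at least $16\tau$. Because $\tau \geq \kappa/2$, this is at least $8\kappa$. The $P$- and $Q$-lengths of any contiguous region differ by at most the number of edits of $\A_P$ falling inside, which is bounded by $d_P \leq \kappa$. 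Hence $|P_z| \geq 16\tau - d_P \geq 8\kappa - \kappa = 7\kappa > 6\kappa = \Delta$. The same argument applied to the trailing portion after tile $i+1$ (which always contains at least $17$ tiles when $i+1 \leq z$) shows that $p_{i+1}+\Delta \leq |P|$, so that $P_1$ and each $P_i$ for $i \in \fragmentoo{1}{z}$ are indeed well-defined substrings of $P$; the remaining length bounds $|P_1|, |P_i| \geq \Delta$ for $i \in \fragmentoo{1}{z}$ then follow trivially from $p_{i+1}\geq p_i$.

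Finally, to verify the value, I would just expand:
\[
\val_\Delta(P_1,\ldots,P_z) = P_1 \cdot \prod_{i=2}^{z} P_i\fragmentco{\Delta}{|P_i|},
\]
where $P_1 = P\fragmentco{0}{p_2+\Delta}$, $P_i\fragmentco{\Delta}{|P_i|} = P\fragmentco{p_i+\Delta}{p_{i+1}+\Delta}$ for $i \in \fragmentoo{1}{z}$, and $P_z\fragmentco{\Delta}{|P_z|} = P\fragmentco{p_z+\Delta}{|P|}$. The endpoints telescope to give $P\fragmentco{0}{|P|} = P$.

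The main (really, the only) substantive obstacle is the lower bound $|P_z| \geq \Delta$: it crucially uses the choice $z = \beta_P - 17$, which was made precisely so that the trailing piece $P_z$ has enough ``spare'' tiles for its $P$-length to survive the worst-case deletion budget of $\A_P$. Everything else is routine index chasing from the definitions.
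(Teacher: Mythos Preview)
Your proof is correct and follows essentially the same approach as the paper's. Both reduce the only nontrivial point to the inequality $p_z + \Delta \le |P|$ (equivalently $|P_z|\ge \Delta$) via a tile-counting argument using $\tau \ge \kappa/2$ and $d_P \le \kappa$; the paper uses $14$ tiles past $p_z$ to reach $p_{z+14}\ge p_z + 14\tau - d_P \ge p_z + 6\kappa$, while you use $16$ full tiles to get $|P_z|\ge 16\tau - d_P \ge 7\kappa$, but the idea is identical.
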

\begin{proof}
    The only non-trivial claim is that each internal piece is well-defined, that is that
    \(0\le p_i\) and that \(p_{i+1}+\Delta \le |P|\).
    To verify this claim, observe that we have for every \(i \in \fragmentoo{1}{z}\)
    \begin{align*}
        0 &= p_0 \le p_i \quad\text{and}\\
        p_{i+1}+\Delta&\le p_z+\Delta
            = p_z +6\kappa
            \intertext{Using \(\tau \ge \kappa/2\) and, thereafter,
            \(\big|(p_{z+14}-p_z)-14\tau\big|
            \le \ed(P\fragmentco{p_z}{p_{z+14}},\Q\fragment{z\tau}{(z+14)\tau})
            \le d_P\), we obtain}
        p_{i+1}+\Delta
          &\le p_z+14\tau-d_P
            \le p_{z+14}
            \le |P|. \qedhere
    \end{align*}
\end{proof}

Similarly, we obtain \(\Delta\)-puzzles for \(T\).
In particular, we use the $\tau$-tile partition of $T$ in order to represent the fragments
$R_j$ as $\Delta$-puzzles.

\begin{lemma}\label{fct:puzzleT}
    For each $j\in J$, the following sequence $T_{j,1},\ldots,T_{j,z}$ forms a
    $\Delta$-puzzle with value $R_j$:
    \begin{itemize}
        \item $T_{j,1} \coloneqq T\fragmentco{r_j}{t_{j+2}+\Delta}$;
        \item $T_{j,i} \coloneqq T\fragmentco{t_{j+i}}{t_{j+i+1}+\Delta}$ for $i\in \fragmentoo{1}{z}$;
        \item $T_{j,z} \coloneqq T\fragmentco{t_{j+z}}{r'_j}$.
    \end{itemize}
\end{lemma}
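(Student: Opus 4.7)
The plan is to verify the three defining properties of a $\Delta$-puzzle for $T_{j,1},\dots,T_{j,z}$ and then identify its value with $R_j$, following the pattern of the proof of \cref{fct:puzzleP}.

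I will first dispatch the overlap condition $T_{j,i}\fragmentco{|T_{j,i}|-\Delta}{|T_{j,i}|} = T_{j,i+1}\fragmentco{0}{\Delta}$ by inspection of the definitions: both sides coincide with $T\fragmentco{t_{j+i+1}}{t_{j+i+1}+\Delta}$ when $T_{j,i}$ and $T_{j,i+1}$ are both internal, with $T\fragmentco{t_{j+2}}{t_{j+2}+\Delta}$ at the $(T_{j,1},T_{j,2})$ boundary, and with $T\fragmentco{t_{j+z}}{t_{j+z}+\Delta}$ at the $(T_{j,z-1},T_{j,z})$ boundary. Second, I will verify $\val_\Delta(T_{j,1},\dots,T_{j,z}) = R_j$ by telescoping: once we discard the initial $\Delta$ characters of every piece past the first, consecutive fragments share the endpoint $t_{j+i}+\Delta$, and the concatenation collapses to $T\fragmentco{r_j}{r'_j} = R_j$.

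The main obstacle is the length condition $|T_{j,i}|\ge\Delta$, which is immediate for internal pieces (where $|T_{j,i}| = t_{j+i+1}+\Delta-t_{j+i}\ge\Delta$) but requires work for the two boundary pieces. For $T_{j,1}$, it suffices to prove $t_{j+2}\ge r_j$; since $r_j$ is, by definition, the smallest position of $T$ whose image under $\A_T$ is at least $j\tau-\kappa$, and $j\tau\ge j\tau-\kappa$, we have $\A_T(r_j)\le j\tau$, whence $r_j\le t_j\le t_{j+2}$. For $T_{j,z}$, I must show $r'_j-t_{j+z}\ge \Delta = 6\kappa$, and this is where the precise choice $z=\beta_P-17$ enters the argument. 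From $\beta_P=\ceil{y_P/\tau}$ I will deduce $y_P-z\tau\in(16\tau,17\tau]$. Combined with the two cases of \cref{rem:rj-corner} for the value of $\A_T(r'_j)$ (either $j\tau+y_P+\kappa$, or $y_T$ when $r'_j=n$) and the inequality $j\tau+y_P-\kappa\le y_T$ valid for $j\in J$, this gives a gap of at least $16\tau-\kappa$ positions of $\Q$ between $\A_T(t_{j+z})$ and $\A_T(r'_j)$. Translating this back to $T$ via the cost-$d_T$ alignment $\A_T$ (whose insertions and deletions distort lengths by at most $d_T$) and invoking $\tau\ge\kappa/2$ together with $d_T\le\kappa$, I obtain $r'_j-t_{j+z}\ge 16\tau-\kappa-d_T\ge 8\kappa-2\kappa = 6\kappa = \Delta$, as required.
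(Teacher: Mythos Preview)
Your overall structure (verify overlap, length, and value separately) is sound and close in spirit to the paper, but there is a genuine gap in the $T_{j,1}$ step, and a well-definedness issue you skip.

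For $T_{j,1}$ you argue through $t_j$: ``$\A_T(r_j)\le j\tau$, whence $r_j\le t_j\le t_{j+2}$''. This fails on two counts. First, $t_j$ need not exist: by \cref{rem:negj} one can have $j\in\{-2,-1\}$, and tile boundaries $t_i$ are only defined for $i\in\fragment{0}{\beta_T}$. Second, the intermediate claim $\A_T(r_j)\le j\tau$ is not true in general: \cref{rem:rj-corner} gives $\A_T(r_j)=\max\{x_T,j\tau-\kappa\}$, and membership $j\in J$ only guarantees $j\tau\ge x_T-\kappa$, so $\A_T(r_j)=x_T>j\tau$ is possible. The fix, and what the paper does, is to skip $t_j$ entirely: show $x_T\le (j+2)\tau$ (so $(t_{j+2},(j+2)\tau)\in\A_T$) and observe $(j+2)\tau\ge j\tau-\kappa$, which gives $r_j\le t_{j+2}$ directly from the definition of $r_j$.

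More broadly, the paper's proof is organized around \emph{well-definedness}: it checks, for each $i\in\fragmentoo{1}{z}$, that $x_T\le (j+i)\tau\le y_T$ so that $t_{j+i}$ is a genuine tile boundary with $(t_{j+i},(j+i)\tau)\in\A_T$, and then that the internal piece lies within $R_j$ (that is, $r_j\le t_{j+i}$ and $t_{j+i+1}+\Delta\le r'_j$). The latter for $i=z-1$ also yields $|T_{j,z}|\ge\Delta$. Your computation $r'_j-t_{j+z}\ge 16\tau-\kappa-d_T\ge 6\kappa$ is correct and gives the same conclusion, but you implicitly use $\A_T(t_{j+z})=(j+z)\tau$ (and likewise for the other $t_{j+i}$) without verifying $(j+i)\tau\in\fragment{x_T}{y_T}$; this range check is exactly what the paper makes explicit and what your proof is missing.
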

\begin{proof}
    The only non-trivial claim is that each internal piece is contained within $R_j$,
    that is that \(r_j \le t_{j+i}\) and that \(t_{j+i+1}+\Delta \le r'_j\).
    To verify this claim,
    first observe that \(\tau \ge \kappa / 2\) yields for every \(j \in J = \fragment{\ceil{{(x_T-\kappa)}/{\tau}}}{\floor{{(y_T-y_P+\kappa)}/{\tau}}}\)
    and every \(i \in \fragmentoo1z\) that\begin{equation*}
        x_T \le x_T-\kappa+2\tau  \le \left(\ceil{\frac{x_T-\kappa}{\tau}}+2\right)\tau
        \le (j + i) \tau.
    \end{equation*}
    In particular, we have $(t_{j+i},(j+i)\tau)\in \A_T$.
    Similarly, we obtain $j\tau-\kappa \le (j+2)\tau \le (j+i)\tau$,
    so $r_j \le t_{j+i}$ holds by definition of $r_j$.

    For the other bound, we first observe that for any \(j \in J\) and \(i \in
    \fragmentoo{1}{z}\), we have \[
        (j+i+15)\tau
        \le \left(\floor{\frac{y_T-y_P+\kappa}{\tau}}+\beta_P-3\right)\tau
        \le y_T-y_P+\kappa + (\beta_P-3)\tau
        \le y_T-y_P+\kappa + y_P-2\tau
        \le y_T.
    \]
    In particular, we have $(t_{j+i+15},(j+i+15)\tau)\in \A_T$.
    Similarly, we obtain $(j+i+15)\tau \le (j+\beta_P-3)\tau \le  y_P+j\tau-\kappa$,
    so $t_{j+i+15} \le r'_j$ holds by definition of $r'_j$.
    Now, as $\Delta = 7\kappa-\kappa \le 14\tau-d_T \le  t_{j+i+15}-t_{j+i+1}$,
    we obtain $t_{j+i+1}+\Delta \le r'_j$; completing the proof.
\end{proof}

Taken together, we obtain the families of puzzle pieces that we use in the remainder of this
work.

\begin{definition}\label{def:puzzleset}
    \begin{itemize}
        \item We write \(\Sb \coloneqq \{P_1\} \cup \{T_{j,1} \mid j\in J\}\) for the
            family of
        \emph{leading} puzzle pieces.
    \item We write \(\Sm \coloneqq \{P_{i} : i\in \fragmentoo{1}{z}\} \cup \{T_{i} : i \in
        \fragmentoo{\min J +1}{\max J +z} \}\) for the family of
        \emph{internal} puzzle pieces.
        \item We write \(\Sf \coloneqq \{P_z\} \cup \{T_{j,z} : j\in J\}\) for the family of
    \emph{trailing} puzzle pieces.
    \lipicsEnd
    \end{itemize}
\end{definition}
\begin{remark}
    For convenience, for $i\in \fragmentoo{\min J + 1}{\max J +z}$, we write
    $T_i\coloneqq T\fragmentco{t_i}{t_{i+1}+\Delta}$.
    Observe that by construction, we have \(T_i = T_{j,i'}\) for all \(j + i' = i\) with
    \(i' \in \fragmentoo{1}{z}\); that is
    overlapping parts of different \(R_j\)'s share their internal pieces.
    Observe further that this is an essential property for our approach to work: when
    moving from \(R_j\) to \(R_{j + 1}\), we exploit that we need to only shift the pieces
    \(T_i\), and not recompute them altogether.
    \lipicsEnd
\end{remark}

Finally, we wish to use the sets from \cref{def:puzzleset} for \DPM.
To that end, we need to convince ourselves that pairs of pieces \(P_i\) and \(T_{j,i}\)
are indeed roughly of the same length on average. In particular, we verify that our choice
of \(\Delta \coloneqq 6\kappa\) is indeed sufficient.

\begin{lemma}\label{fact:lengths_diff}
    For each $j\in J$, we have $\sum_{i=1}^z \big| |T_{j,i}|-|P_i|\big| \le 3\kappa-k$.
\end{lemma}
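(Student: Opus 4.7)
The plan is to reduce the sum to a comparison of the \emph{cores} of the pieces---the parts that remain after stripping the trailing $\Delta$-overlap from every non-trailing piece. Concretely, I would set $P_1^\circ \coloneqq P\fragmentco{p_0}{p_2}$, $P_i^\circ \coloneqq P\fragmentco{p_i}{p_{i+1}}$ for $i \in \fragmentoo{1}{z}$, and $P_z^\circ \coloneqq P\fragmentco{p_z}{|P|}$, and define $T_{j,i}^\circ$ analogously (by removing the trailing $\Delta$ from each $T_{j,i}$ with $i < z$). Two simple observations are then essential: the $P_i^\circ$ form a partition of $P$ and the $T_{j,i}^\circ$ form a partition of $R_j$, and $|T_{j,i}|-|P_i| = |T_{j,i}^\circ|-|P_i^\circ|$ for every $i$, because the extra $\Delta$ either cancels (for $i<z$) or is absent on both sides (for $i=z$).

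Next, I would bound each $\big||T_{j,i}^\circ|-|P_i^\circ|\big|$ by routing through $Q^{\infty}$ via $\A_P$ and $\A_T$. Writing $C^P_i$ for the cost of $\A_P$ restricted to $P_i^\circ$ and $C^T_i(j)$ for the cost of $\A_T$ restricted to $T_{j,i}^\circ$, the elementary bound $\big||X|-|Y|\big|\le \ed^{\A}(X,Y)$ for any alignment $\A\colon X\onto Y$ (each insertion/deletion shifts the length difference by $\pm 1$) combined with the triangle inequality for absolute values gives $\big||T_{j,i}^\circ|-|P_i^\circ|\big| \le C^P_i + C^T_i(j) + \Lambda_i$, where $\Lambda_i$ is the length difference of the two $Q^{\infty}$-fragments to which $\A_P$ and $\A_T$ align the respective cores. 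For internal pieces $i\in\fragmentoo{1}{z}$ both $Q^{\infty}$-fragments have length exactly $\tau$, so $\Lambda_i = 0$.

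The main technical step is showing $\Lambda_1,\Lambda_z\le \kappa$. For $i=1$, $\A_P$ aligns $P_1^\circ$ to $Q^{\infty}\fragmentco{0}{2\tau}$ whereas $\A_T$ aligns $T_{j,1}^\circ$ to $Q^{\infty}\fragmentco{\A_T(r_j)}{(j+2)\tau}$, so $\Lambda_1 = |j\tau - \A_T(r_j)|$. Using \cref{rem:rj-corner}, either $\A_T(r_j) = j\tau - \kappa$ (trivially giving $\Lambda_1 = \kappa$), or $r_j = 0$ with $\A_T(r_j) = x_T$, in which case I would invoke both endpoint inequalities $\lceil (x_T-\kappa)/\tau\rceil \le j \le \lfloor (y_T-y_P+\kappa)/\tau\rfloor$ from the definition of $J$ to conclude $|x_T - j\tau|\le \kappa$. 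The bound $\Lambda_z\le \kappa$ follows symmetrically using the corner case $r'_j = n$. I expect this case analysis (particularly the corner cases $r_j = 0$ and $r'_j = n$) to be the most delicate step; everything else is essentially bookkeeping.

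Summing everything up, since the $P_i^\circ$ partition $P$, the restricted costs telescope to the total cost of $\A_P$, giving $\sum_{i=1}^z C^P_i = d_P$; likewise $\sum_{i=1}^z C^T_i(j) \le d_T$ because the $T_{j,i}^\circ$ partition $R_j \substr T$ and $\A_T$ has total cost $d_T$. Adding the boundary contribution $\Lambda_1 + \Lambda_z \le 2\kappa$, I would obtain the desired inequality, since $d_P + d_T + 2\kappa = 3(k+d_P+d_T) - k = 3\kappa - k$.
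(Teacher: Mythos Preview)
Your proposal is correct and follows essentially the same approach as the paper: both strip the trailing $\Delta$ (your ``cores'' are exactly the fragments $P\fragmentco{p_i}{p_{i+1}}$ and $T\fragmentco{t_{j+i}}{t_{j+i+1}}$ the paper works with), route each length difference through $Q^\infty$ via $\A_P$ and $\A_T$, pick up an extra $\kappa$ at each of the two boundary pieces using \cref{rem:rj-corner}, and then telescope the restricted alignment costs to $d_P+d_T$. One small correction on the corner case $r_j=0$: the bound $j\tau - x_T \le \kappa$ does \emph{not} follow from the upper endpoint of $J$ (which involves $y_T,y_P$, not $x_T$) but rather from the corner-case condition itself, since \cref{rem:rj-corner} tells you that $r_j=0$ occurs precisely when $j\tau-\kappa \le x_T$; combined with the lower endpoint of $J$ giving $x_T \le j\tau+\kappa$, you get $|x_T-j\tau|\le\kappa$ as needed.
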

\begin{proof}
    Fix a \(j \in J\).
    In a first step, we obtain bounds on the length differences of each individual pair of
    pieces. To that end, consider a pair of internal pieces \(P_i\) and \(T_{j,i}\) for
    an $i\in \fragmentoo{1}{z}$; the triangle inequality yields
    \begin{align}
        \big||T_{j,i}|-|P_i|\big| &\le
        \ed(T\fragmentco{t_{j+i}}{t_{j+i+1}},P\fragmentco{p_i}{p_{i+1}})\nonumber\\
        &\le
        \ed(T\fragmentco{t_{j+i}}{t_{j+i+1}},\Q\fragmentco{(j+i)\tau}{(j+i+1)\tau})\nonumber\\\label{eq:4-8-1}
        &\quad+\ed(P\fragmentco{p_i}{p_{i+1}},\Q\fragmentco{i\tau}{(i+1)\tau}).
    \end{align}
    As for $i=1$, recall from \cref{rem:rj-corner} that
    $\A_T(r_j) = \max\{x_T, j\tau-\kappa\} \le j\tau+\kappa$.
    Hence, the triangle inequality yields
    \begin{align}
        \big||T_{j,1}|-|P_1|\big|
        &\le \ed(T\fragmentco{r_j}{t_{j+2}},P\fragmentco{0}{p_{2}})\nonumber\\
        &\le
        \ed(T\fragmentco{r_j}{t_{j+2}},\Q\fragmentco{\A_T(r_j)}{(j+2)\tau})\nonumber\\
        &\quad+ \ed(\Q\fragmentco{\A_T(r_j)}{(j+2)\tau},\Q\fragmentco{0}{2\tau})
        \nonumber\\
        &\quad+
        \ed(P\fragmentco{0}{p_2},\Q\fragmentco{0}{2\tau})
        \nonumber\\
        &\le \ed(T\fragmentco{0}{t_{j+2}},Q^\infty\fragmentco{x_T}{(j+2)\tau})+\kappa +
        \ed(P\fragmentco{0}{p_2},Q^\infty\fragmentco{0}{2\tau}).\label{eq:4-8-2}
    \end{align}
    As for $i=z$, recall from \cref{rem:rj-corner} that $\A_T(r'_j) = \min\{y_T,
    y_P+j\tau+\kappa\} \ge y_P+j\tau-\kappa$.
    Hence, the triangle inequality yields
    \begin{align}
        \big||T_{j,z}|-|P_z|\big|
        &\le \ed(T\fragmentco{t_{j+z}}{r'_j},P\fragmentco{p_z}{|P|})\nonumber\\
        &\le \ed(T\fragmentco{t_{j+z}}{r'_j},\Q\fragmentco{(j+z)\tau}{\A_T(r'_j)})\nonumber\\
        &\quad + \ed(\Q\fragmentco{(j+z)\tau}{\A_T(r'_j)},\Q\fragmentco{z\tau}{y_P})
        \nonumber\\
        &\quad+
        \ed(P\fragmentco{p_z}{|P|},\Q\fragmentco{z\tau}{y_P})\nonumber\\
        &\le \ed(T\fragmentco{t_{j+z}}{|T|},\Q\fragmentco{(j+z)\tau}{y_T}) + \kappa +
        \ed(P\fragmentco{p_z}{|P|},\Q\fragmentco{z\tau}{y_P}).\label{eq:4-8-3}
    \end{align}
    Now, observe that we have
    \begin{align*}
        d_P&\ge\ed(P\fragmentco{0}{p_2},Q^\infty\fragmentco{0}{2\tau})\\
        &\quad+\sum_{i=2}^{z-1}\ed(P\fragmentco{p_i}{p_{i+1}},\Q\fragmentco{i\tau}{(i+1)\tau})\\
        &\quad+\ed(P\fragmentco{p_z}{|P|},Q^\infty\fragmentco{z\tau}{y_P}).
    \end{align*}
    Symmetrically,
    \begin{align*}
        d_T&\ge\ed(T\fragmentco{0}{t_{j+2}},Q^\infty\fragmentco{x_T}{(j+2)\tau})\\
           &\quad+ \sum_{i=2}^{z-1}\ed(T\fragmentco{t_{j+i}}{t_{j+i+1}},
        Q^\infty\fragmentco{(j+i)\tau}{(j+i+1)\tau})\\
           &\quad+ \ed(T\fragmentco{t_{j+z}}{|T|},Q^\infty\fragmentco{(j+z)\tau}{y_T}).
    \end{align*}
    Adding \cref{eq:4-8-1,eq:4-8-2,eq:4-8-3} hence yields the claimed
    $\sum_{i=1}^z \big| |T_{j,i}|-|P_i|\big|\le 2\kappa +d_P+d_T = 3\kappa-k$.
\end{proof}

\subsection{Special Puzzle Pieces and How to Compute Them Efficiently}

In order for puzzle pieces to be useful to us, we need to be able to efficiently
compute the families \(\Sb\), \(\Sm\), and \(\Sf\) from \cref{def:puzzleset}.
As it turns out, when considering instances of \SM, most puzzle pieces are substrings of
\(\Q\)---this in turn means that it suffices to locate and compute the \emph{special} pieces (that
are different from some specific substrings of \(\Q\)).
We start with a formal definition of special puzzle pieces.

\begin{definition}\label{def:specialpiece}
    We say that an internal piece is \emph{special} if and only if it is different from
    \(\Q\fragmentco{0}{\tau+\Delta}\). We write \(\sp{P}\) for the set of \emph{special
    internal pieces} of $P$ and we \(\sp{T}\) for set of \emph{special internal
    pieces} of $T$; that is, we set
    \begin{align*}
        \sp{P} &\coloneqq \{P_i : i \in \fragmentoo{1}{z}
        \text{ and } P_i \neq Q^\infty\fragmentco{0}{\tau+\Delta}\} \quad\text{and}\\
        \sp{T} &\coloneqq \{T_i :i \in \fragmentoo{1 + \min J}{z + \max J}
        \text{ and } T_i \neq Q^\infty\fragmentco{0}{\tau+\Delta}\}.
    \end{align*}
    Further, we say that a leading piece \(T_{j,1}\) is \emph{special} if and only if it is different from
    \(Q^\infty\fragmentco{-\kappa}{2\tau+\Delta}\) and that a trailing piece \(T_{j,z}\)
    is \emph{special} if and only if it is different from \(Q^\infty\fragmentco{z\tau}{y_P+\kappa}\).
    Similar to before, we write \(\spb{T}\) for the set of special leading and trailing
    pieces of \(T\).
    \lipicsEnd
\end{definition}
\begin{remark}
    When we store a special internal piece \(P_i\) (or \(T_i\)), we store it as a pair
    \((i,P_i)\) (or \((i, T_i)\)) together with its index \(i\) in the sequence of pieces;
    special leading and trailing pieces are stored similarly.\lipicsEnd
\end{remark}

As a main result of this (sub-)section, we prove that there are only very few special
pieces and that we can compute them efficiently.

\spbndsimpl*

We proceed to prove \cref{lem:specialbound-simpl} by first bounding the edit
distances of the puzzle pieces to a common substring of \(\Q\); we start with the internal
pieces.

\begin{lemma}\label{lem:puzzle-proto}
    For any \(i \in \fragmentoo1z\), the internal piece \(P_i\) from \cref{fct:puzzleP} satisfies\[
        \ed(P_i,\Q\fragmentco{0}{\tau+\Delta}) \le
            \sum_{j=1}^{13} 2 \ed(P\fragmentco{p_{i+j-1}}{p_{i+j}},\Q\fragmentco{(i+j-1)\tau}{(i+j)\tau}).
    \]
    In particular, the internal piece \(P_i\) is special only if
    any of the fragments \(P\fragmentco{p_{i+j-1}}{p_{i+j}}\) for \(j \in \fragment1{13}\)
    differs from \(\Q\fragmentco{0}{\tau}\).

    For any \(i \in \fragmentoo{\min J + 1}{\max J + z}\), the internal piece \(T_i\) from \cref{fct:puzzleT}
    satisfies\[
        \ed(T_i,\Q\fragmentco{0}{\tau+\Delta}) \le
            \sum_{j=1}^{13}2 \ed(T\fragmentco{t_{i+j-1}}{t_{i+j}},\Q\fragmentco{(i+j-1)\tau}{(i+j)\tau}).
    \]
    In particular, the internal piece \(T_i\) is special only if
    any of the fragments \(T\fragmentco{t_{i+j-1}}{t_{i+j}}\) for \(j \in \fragment1{13}\)
    differs from \(\Q\fragmentco{0}{\tau}\).
\end{lemma}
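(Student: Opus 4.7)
I focus on the bound for $P_i$; the argument for $T_i$ follows by swapping the roles of $(P, \A_P)$ and $(T, \A_T)$ and working with the index range for $T_i$. Let $\B$ denote the restriction of the optimal alignment $\A_P$ to $P_i = P\fragmentco{p_i}{p_{i+1}+\Delta}$, so $\B : P_i \onto \Q\fragmentco{i\tau}{y_i}$ where $y_i \coloneqq \A_P(p_{i+1}+\Delta)$. Because $\tau$ is a positive multiple of the period $q$ of $\Q$, the strings $\Q\fragmentco{0}{\tau+\Delta}$ and $\Q\fragmentco{i\tau}{(i+1)\tau+\Delta}$ are equal. Applying the triangle inequality via $\Q\fragmentco{i\tau}{y_i}$, and observing that $\Q\fragmentco{i\tau}{y_i}$ and $\Q\fragmentco{i\tau}{(i+1)\tau+\Delta}$ both start at $i\tau$ so one is a prefix of the other, gives
\[
\ed(P_i, \Q\fragmentco{0}{\tau+\Delta}) \le \ed^{\B}(P_i, \Q\fragmentco{i\tau}{y_i}) + \big|y_i - (i+1)\tau - \Delta\big|.
\]

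Next, I would bound both summands by sums of tile costs. Let $L$ be the least integer with $p_{i+L} \ge p_{i+1}+\Delta$; then $P_i$ overlaps precisely tiles $i+1, \ldots, i+L$ of the $\tau$-tile partition, with tile $i+L$ at worst partially contained. Because $\Delta = 6\kappa$, $\tau \ge \kappa/2$, and the insertions and deletions of $\A_P$ across these tiles are globally bounded by $d_P \le \kappa$, a direct counting argument confirms $L \le 13$. Since $\A_P$ is optimal, its restriction to any fragment of $P$ is itself optimal, so \cref{fct:ali} applies tile by tile to yield
\[
\ed^{\B}(P_i, \Q\fragmentco{i\tau}{y_i}) \le \sum_{j=1}^{13} \ed\big(P\fragmentco{p_{i+j-1}}{p_{i+j}}, \Q\fragmentco{(i+j-1)\tau}{(i+j)\tau}\big),
\]
where I dominate the cost on the partial $L$-th tile by the cost on the full one and the remaining $13-L$ terms contribute nonnegatively. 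For the shift term, a standard counting argument (each match or substitution consumes one character from both $P$ and $\Q$; each insertion only from $\Q$; each deletion only from $P$) identifies $y_i - (i+1)\tau - \Delta$ with the insertion-minus-deletion imbalance of $\A_P$ inside $P\fragmentco{p_{i+1}}{p_{i+1}+\Delta}$. Its absolute value is at most the cost of $\A_P$ there, and hence at most $\sum_{j=2}^{13}$ of the same tile-cost summands. Adding the two bounds produces the factor $2$ and yields the stated inequality.

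The ``in particular'' part follows immediately: each $\Q\fragmentco{(i+j-1)\tau}{(i+j)\tau}$ equals $\Q\fragmentco{0}{\tau}$ as a string, so the right-hand side vanishes exactly when every tile $P\fragmentco{p_{i+j-1}}{p_{i+j}}$ for $j \in \fragment{1}{13}$ matches $\Q\fragmentco{0}{\tau}$, forcing $P_i = \Q\fragmentco{0}{\tau+\Delta}$; contrapositively, if $P_i$ is special then some tile must differ from $\Q\fragmentco{0}{\tau}$. The main obstacle is the careful bookkeeping required to certify $L \le 13$ while cleanly dominating the cost on the partial $L$-th tile by the cost on the full tile; this is where the tight interplay of $\Delta = 6\kappa$, $\tau \ge \kappa/2$, and $d_P \le \kappa$ becomes delicate.
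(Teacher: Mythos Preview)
Your overall strategy—restrict $\A_P$ to $P_i$, bound the alignment cost tile by tile, and control the shift $|y_i-(i+1)\tau-\Delta|$ by the insertion/deletion imbalance—is sound, but the key claim $L\le 13$ does not follow from the ingredients $\Delta=6\kappa$, $\tau\ge\kappa/2$, and $d_P\le\kappa$. From $p_{i+L-1}<p_{i+1}+\Delta$ together with $p_{i+L-1}-p_{i+1}\ge (L-2)\tau-d_P$ you obtain only $(L-2)\tau<\Delta+d_P\le 7\kappa\le 14\tau$, hence $L\le 15$. Concretely, take $\tau=\kappa/2$ (so $\Delta=12\tau$) and let $\A_P$ make a single deletion inside tile $i+2$; then $p_{i+13}-p_{i+1}=12\tau-1<\Delta$, which already forces $L\ge 14$. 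Your argument therefore proves $\ed(P_i,\Q\fragmentco{0}{\tau+\Delta})\le 2\sum_{j=1}^{15}\ed(P\fragmentco{p_{i+j-1}}{p_{i+j}},\Q\fragmentco{(i+j-1)\tau}{(i+j)\tau})$, which is strictly weaker than the stated sum to~$13$.

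The paper avoids introducing $L$ entirely. Since $\Delta\le 12\tau$, it first applies \cref{fct:ed-prefix} to append the same $12\tau-\Delta$ characters to both sides, yielding
\[
\ed\big(P_i,\Q\fragmentco{0}{\tau+\Delta}\big)\;\le\;\ed\big(P\fragmentco{p_i}{p_{i+1}+12\tau},\,\Q\fragmentco{i\tau}{(i+13)\tau}\big),
\]
so that the $\Q$-side consists of exactly $13$ blocks. It then replaces the $P$-endpoint $p_{i+1}+12\tau$ by $p_{i+13}$ at the price of the correction $|p_{i+13}-p_{i+1}-12\tau|\le \ed\big(P\fragmentco{p_{i+1}}{p_{i+13}},\Q\fragmentco{(i+1)\tau}{(i+13)\tau}\big)$; decomposing both terms over tiles $i+1,\ldots,i+13$ gives the factor-$2$ bound with $13$ summands directly. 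The only extra check is that $p_{i+1}+12\tau\le |P|$, which uses $(i+15)\tau\le y_P$ but does not enlarge the final sum. If you want to salvage your route, you must insert this equal-length padding step before restricting to tiles; counting how many tiles $P_i$ itself overlaps cannot give~$13$.
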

\begin{proof}
    Fix an \(i \in \fragmentoo1z\).
    Recall that we defined
    \(P_i = P\fragmentco{p_i}{p_{i+1} + \Delta}\) and \(\Delta = 6 \kappa \le 12 \tau\).
    Observe that we have \(0 \le i\tau\)
    and
    \((i + 15) \tau \le (\beta_P - 3)\tau
    \le y_P\); and thus indeed \(( p_{i + j},(i + j)\tau) \in \A_P \) for all \(j
    \in\fragment{0}{15}\).
    Now, exploiting \(d_P \le \kappa \le 2\tau\) and \((i + 15)\tau \le
    y_P\), we see that \(
    p_{i+1}+12\tau
    \le p_{i+1}+14\tau-d_P
    \le p_{i + 15} \le |P|.
    \)
    Hence, we obtain (with the additional help of \cref{fct:ed-prefix})
    \begin{align*}
    &\ed(P_i,Q^\infty\fragmentco{0}{\tau+\Delta})\\
    &\quad= \ed(P\fragmentco{p_{i}}{p_{i+1}+\Delta},Q^\infty\fragmentco{i\tau}{(i+1)\tau+\Delta})\\
    &\quad\le \ed(P\fragmentco{p_{i}}{p_{i+1}+12\tau},Q^\infty\fragmentco{i\tau}{(i+13)\tau})
    \intertext{
        Extending the fragments \(P\fragmentco{p_{i}}{p_{i+1}+12\tau}\) and
        \(Q^\infty\fragmentco{i\tau}{(i+13)\tau}\) and correcting for potential length
        differences, we obtain
    }
    &\ed(P_i,Q^\infty\fragmentco{0}{\tau+\Delta})\\
    &\quad\le \ed(P\fragmentco{p_{i}}{p_{i+13}},Q^\infty\fragmentco{i\tau}{(i+13)\tau}) + |p_{i+13} - (p_{i+1} + 12\tau)|\\
    &\quad= \ed(P\fragmentco{p_{i}}{p_{i+13}},Q^\infty\fragmentco{i\tau}{(i+13)\tau}) + |(p_{i+13} - p_{i+1}) - 12\tau|\\
    &\quad\le \ed(P\fragmentco{p_{i}}{p_{i+13}},Q^\infty\fragmentco{i\tau}{(i+13)\tau}) + \ed(P\fragmentco{p_{i+1}}{p_{i+13}},Q^\infty\fragmentco{(i+1)\tau}{(i+13)\tau})\\
    &\quad\le \sum_{j=1}^{13}2\ed(P\fragmentco{p_{i+j-1}}{p_{i+j}},Q^\infty\fragmentco{(i+j-1)\tau}{(i+j)\tau}).
    \end{align*}

    We proceed similarly for the internal pieces of \(T\).
    To that end, fix an \(i \in \fragmentoo{\min J + 1}{\max J + z}\) and
    recall that we set \(T_i = T\fragmentco{t_i}{t_{i+1} + \Delta}\).
    As before, observe that we have
    \begin{align*}
        x_T &\le x_T-\kappa+2\tau \le (\ceil{{(x_T-\kappa)}/{\tau}}+2)\tau \le  i\tau
        \quad\text{and}\\
        (i+15)\tau &\le
        (\floor{{(y_T-y_P+\kappa)}/{\tau}}+z+14)\tau\\
        &\le y_T-y_P+\kappa + (\beta_P-3)\tau \le y_T+\kappa -2\tau \le y_T.
    \end{align*}
    Thus, indeed, we have
    $( t_{i+j}, (i+j)\tau) \in \A_T$ for all $j\in \fragment{0}{15}$.
    Now, exploiting \(d_T \le \kappa \le 2\tau\) and \((i+15) \tau \le y_T\), we see that
    $t_{i+1}+\Delta \le t_{i+1}+14\tau - d_T \le t_{i+15}\le |T|$.
    Hence, analogously to before, we conclude that
    \[
        \ed(T_i, Q^\infty\fragmentco{0}{\tau+\Delta})
        \le \sum_{j=1}^{13} 2 \ed(T\fragmentco{t_{i+j-1}}{t_{i+j}}, Q^\infty\fragmentco{(i+j-1)\tau}{(i+j)\tau});
    \]
    completing the proof.
\end{proof}

As an immediate corollary, we obtain the desired bound for \(\ed(\Sm)\).

\begin{corollary}
    We have \( \ed(\Sm) \le 26 \kappa\) and \(|\sp{T}| + |\sp{P}| \le 26 \kappa\).
\end{corollary}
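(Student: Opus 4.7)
The plan is to choose a single ``prototype'' reference string $\hat S \coloneqq Q^\infty\fragmentco{0}{\tau+\Delta}$ and bound $\ed(\Sm)$ by $\sum_{S\in \Sm}\ed(S,\hat S)$, which directly uses the estimates we already have from \cref{lem:puzzle-proto}. By that lemma,
\[
\sum_{i\in(1,z)}\ed(P_i,\hat S) \;\le\; 2\sum_{i\in(1,z)}\sum_{j=1}^{13}\ed\bigl(P\fragmentco{p_{i+j-1}}{p_{i+j}},Q^\infty\fragmentco{(i+j-1)\tau}{(i+j)\tau}\bigr),
\]
and the analogous inequality holds for the internal pieces of $T$. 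The first step is to swap the order of summation: each tile $P\fragmentco{p_{\ell-1}}{p_\ell}$ of the $\tau$-tile partition appears in the right-hand side for at most $13$ consecutive values of $i$, which yields a bound of $26$ times the total tile-wise edit distance.

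Next, I would invoke \cref{fct:ali}. Since $\A_P$ is optimal, the decomposition of $P$ induced by the $\tau$-tile partition (which, by \cref{def:can_part,rem:can_part2}, is precisely the partition into fragments $\A_P$-aligned to consecutive length-$\tau$ blocks of $Q^\infty$) gives
\[
d_P \;=\; \sum_{\ell}\ed\bigl(P\fragmentco{p_{\ell-1}}{p_\ell},Q^\infty\fragmentco{(\ell-1)\tau}{\ell\tau}\bigr),
\]
and the same identity holds for $T$ with $d_T$ on the left-hand side. Combining, $\sum_{i}\ed(P_i,\hat S) \le 26\,d_P$ and $\sum_{i}\ed(T_i,\hat S)\le 26\,d_T$, and therefore $\ed(\Sm)\le 26(d_P+d_T)\le 26\kappa$ by definition of $\kappa$.

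For the count bound, I would use the ``In particular'' clause of \cref{lem:puzzle-proto}: an internal piece $P_i$ can be special only if at least one of the $13$ tiles $P\fragmentco{p_{i+j-1}}{p_{i+j}}$ (for $j\in\fragment{1}{13}$) differs from $Q^\infty\fragmentco{0}{\tau}$. Call such a tile \emph{bad}. Since each bad tile contributes at least $1$ to the above tile-wise sum for $d_P$, there are at most $d_P$ bad tiles of $P$; conversely, each bad tile is contained in at most $13$ internal pieces, so $|\sp{P}|\le 13\,d_P$. Symmetrically $|\sp{T}|\le 13\,d_T$, giving $|\sp{P}|+|\sp{T}|\le 13(d_P+d_T)\le 13\kappa\le 26\kappa$.

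The main obstacle is just the bookkeeping: verifying that the $\tau$-tile partition really is the decomposition to which \cref{fct:ali} applies with equality for the optimal alignments $\A_P$ and $\A_T$ (so that we can use the tile-wise sums as lower bounds for $d_P$ and $d_T$), and keeping the two factors of $13$ straight—one coming from the range of $j$ in the pointwise bound of \cref{lem:puzzle-proto}, and the other from the fact that a fixed tile lies within $13$ consecutive internal pieces.
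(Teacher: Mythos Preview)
Your proposal is correct and, for the bound on $\ed(\Sm)$, follows essentially the same route as the paper: pick $\hat S = Q^\infty\fragmentco{0}{\tau+\Delta}$, apply \cref{lem:puzzle-proto}, swap the order of summation so that each tile is counted at most $13$ times, and bound the tile-wise sum by $d_P$ (resp.\ $d_T$) via \cref{fct:ali}. One minor remark: you only need the inequality direction of \cref{fct:ali}, not equality, so the ``bookkeeping obstacle'' you flag is lighter than you suggest.

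For the count $|\sp{P}|+|\sp{T}|$, the paper takes a shorter path: since every special piece has edit distance at least $1$ to $\hat S$, the already-established bound $\sum_{S\in\Sm}\ed(S,\hat S)\le 26\kappa$ immediately caps the number of special pieces at $26\kappa$. Your bad-tile counting argument is equally valid and in fact yields the sharper intermediate bound $13(d_P+d_T)\le 13\kappa$; the paper's argument trades this sharpness for brevity.
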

\begin{proof}
    Recall that \(\ed(\Sm) = \min_{\hat{S}} \sum_{S \in \Sm} \ed(S, \hat{S})\).
    Now, by \cref{lem:puzzle-proto}, choosing \(\Q\fragmentco{0}{\tau+\Delta}\)
    yields the upper bound
    \begin{align*}
        \ed(\Sm) &\le \sum_{S \in \Sm} \ed(S, \Q\fragmentco{0}{\tau+\Delta})\\
                 &= \sum_{i = 2}^{z-1} \ed(P_i, \Q\fragmentco{0}{\tau+\Delta})
                 + \sum_{i = \min J + 2}^{\max J + z - 1} \ed(T_i,
             \Q\fragmentco{0}{\tau+\Delta})\\
                 &\le 2\cdot \sum_{j=1}^{13}
                    \sum_{i=2}^{z-1}
                    \ed(P\fragmentco{p_{i+j-1}}{p_{i+j}},Q^\infty\fragmentco{(i+j-1)\tau}{(i+j)\tau})\\
                 &\quad+ \sum_{i=\min J + 2}^{\max J + z - 1} \ed(T\fragmentco{t_{i+j-1}}{t_{i+j}},Q^\infty\fragmentco{(i+j-1)\tau}{(i+j)\tau})
                    \\
                 &\le 2\cdot 13\cdot (d_P + d_T) = 26\kappa.
    \end{align*}
    Finally observe that each special piece has an edit distance of at least 1 to
    \(\Q\fragmentco{0}{\tau+\Delta}\); hence there may be at most \(26 \kappa\) special
    pieces in total.
\end{proof}

Finally, \cref{lem:puzzle-proto} easily yields an algorithm for computing the special
internal pieces of \(P\) and \(T\).

\begin{corollary}\label{lem:puzzleP-algo}\label{lem:puzzlemu-algo}
    Given the alignment $\A_P$, we can compute \(\sp{P}\) in $\Oh(d_P+1)$ time in the
    \modelname model; given the alignment $\A_T$, we can compute \(\sp{T}\) in $\Oh(d_T+1)$ time in the \modelname model.
\end{corollary}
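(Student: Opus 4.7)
The plan is to exploit \cref{lem:puzzle-proto}, which bounds the set of potentially special pieces by a set of size $\Oh(d_P)$ (respectively $\Oh(d_T)$) derivable directly from the breakpoints of $\A_P$ (resp.\ $\A_T$), and then to verify each candidate in $\Oh(1)$ time by a single \lceOpName{} query against $Q^\infty$.

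To begin, I would invoke \cref{fct:tile} on $\A_P$ to obtain the sequence $(p_i)_{i=0}^{\beta_P}$ encoded as $\Oh(d_P+1)$ arithmetic progressions with difference $\tau$; this yields $\Oh(1)$-time random access to any $p_i$. Because $\tau$ is a multiple of $q$, we have $\Q\fragmentco{(\ell-1)\tau}{\ell\tau} = \Q\fragmentco{0}{\tau}$ for every tile $\ell$, and \cref{fct:ali} applied to the optimal alignment $\A_P$ implies that the $\ell$-th tile $P\fragmentco{p_{\ell-1}}{p_\ell}$ can differ from $\Q\fragmentco{0}{\tau}$ only if some augmented breakpoint $(x,y,\star)$ of $\A_P$ satisfies $y \in \fragmentco{(\ell-1)\tau}{\ell\tau}$. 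I would then scan the $\Oh(d_P+1)$ breakpoints of $\A_P$ to collect the set $L$ of \emph{affected} tile indices, and, using \cref{lem:puzzle-proto}, form the candidate set
\[
C \coloneqq \bigl\{\, i \in \fragmentoo{1}{z} : \{i+1,\dots,i+13\} \cap L \neq \emptyset \,\bigr\},
\]
whose size is $\Oh(d_P+1)$.

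For each $i \in C$, I would extract $P_i = P\fragmentco{p_i}{p_{i+1}+\Delta}$ with one \extractOpName{} call; if $|P_i| \neq \tau+\Delta$, the piece is automatically special, and otherwise I would test equality by checking whether $\lceOp{P_i}{Q^\infty\fragmentco{0}{\tau+\Delta}}$ (available through \cref{lm:inflcp}) equals $\tau+\Delta$. Candidates that fail the test are output as the elements of $\sp{P}$. The procedure for $\sp{T}$ is entirely symmetric, using $\A_T$ in place of $\A_P$ and ranging over $i \in \fragmentoo{\min J + 1}{\max J + z}$.

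There is no substantive obstacle in the above plan; the only care required is the index bookkeeping---propagating each affected tile index $\ell$ backward by up to $12$ positions and intersecting with the admissible index range---which is routine and contributes only $\Oh(1)$ overhead per breakpoint. In total, the computation uses $\Oh(d_P+1)$ \modelname{} operations for $\sp{P}$ and $\Oh(d_T+1)$ for $\sp{T}$, matching the claimed bounds.
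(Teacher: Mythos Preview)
Your proposal is correct and follows essentially the same approach as the paper: use \cref{lem:puzzle-proto} to generate $\Oh(d_P+1)$ candidate indices from the breakpoints of $\A_P$, retrieve the endpoints via \cref{fct:tile}, and verify each candidate with a single \lceOpName{} against $Q^\infty$. One small caveat: the arithmetic-progression representation from \cref{fct:tile} does not immediately give $\Oh(1)$ random access to $p_i$; the paper instead processes the candidate indices in left-to-right order jointly with a linear scan of the progressions, which keeps the total work at $\Oh(d_P+1)$---you should do the same (or note that $\Oh(\log d_P)$ binary search per lookup would exceed the stated bound).
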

\begin{proof}
    By \cref{lem:puzzle-proto}, for each special internal piece \(P_i\) (or \(T_i\)),
    there is a \(j \in \fragment1{15}\) such that
    $P\fragmentco{p_{i+j-1}}{p_{i+j}}\ne Q^\infty\fragmentco{(i+j-1)\tau}{(i+j)\tau}$
    (or $T\fragmentco{t_{i+j-1}}{t_{i+j}}\ne
    Q^\infty\fragmentco{(i+j-1)\tau}{(i+j)\tau}$).
    This, in turn, requires a breakpoint $(a_P,a_Q)\in \A_P$ (or \((a_T,a_Q)\in
    \A_T\)) with $a_Q\in \fragmentco{i\tau}{(i+13)\tau}$.

    Hence, for each breakpoint $(a_P,a_Q)\in \A_P$,
    we identify pieces \(P_i\) with $i\in
    \fragmentoc{\floor{{a_Q}/{\tau}}-15}{\floor{{a_Q}/{\tau}}}\cap \fragmentoo{1}{z}$ as
    \emph{candidates};
    for each breakpoint $(a_T,a_Q)\in \A_T$,
    we identify pieces \(T_i\) with $i\in
    \fragmentoc{\floor{{a_Q}/{\tau}}-15}{\floor{{a_Q}/{\tau}}}\cap \fragmentoo{\min J +
    1}{\max J + z}$ as \emph{candidates}.
    For each of these candidates, we check $P_i = Q^\infty\fragmentco{0}{\tau+\Delta}$ (or
    \(T_i = \Q\fragmentco{0}{\tau+\Delta}\) using an \lceOpName operation of the \modelname model.

    By processing the breakpoints in left-to-right order and keeping track of the rightmost candidate generated so far,
    we can list the indices $i$ of the candidates $P_i$ (or \(T_i\)) in left-to-right order without duplicates.
    Then, a linear-time scan over the representation of $(p_i)_{i=0}^{\beta_P}$ (or
    $(p_i)_{i=0}^{\beta_T}$) obtained from \cref{fct:tile} lets us determine the endpoints
    of each candidate $P_i$ (or \(T_i\)).
    Each breakpoint yields up to $15$ candidates, so the total running time in the
    \modelname model is $\Oh(d_P+1)$ (or \(\Oh(d_T + 1)\)).
\end{proof}

We proceed to special leading and trailing pieces---as there is at most one
special leading piece of \(P\) and at most one special trailing piece of \(P\), we can
trivially compute them fast. Hence, we first focus on special leading and trailing peaces
of \(T\). To that end, we proceed as with the internal pieces; however, the calculations
become slightly more involved.


\begin{lemma}\label{lem:puzzlebeta}
    For any \(j \in J\), the leading piece \(T_{j,1}\) from \cref{fct:puzzleT} satisfies\[
        \ed(T_{j,1},Q^\infty\fragmentco{-\kappa}{2\tau+\Delta}) \le 2\kappa + 2d_T.
    \] Further, for any \(j \in \fragment{3}{\max J}\), the leading piece
    \(T_{j,1}\) additionally satisfies\[
        \ed(T_{j,1},Q^\infty\fragmentco{-\kappa}{2\tau+\Delta})
        \le 2\sum_{i=-1}^{14}
        \ed(T\fragmentco{t_{i+j-1}}{t_{i+j}},Q^\infty\fragmentco{(i+j-1)\tau}{(i+j)\tau}).
    \]
    In particular, the leading piece \(T_{j, 1}\) is special only if either \(j < 3\) or any
    of the fragments \(T\fragmentco{t_{i+j-1}}{t_{i+j}}\) differs from
    \(Q^\infty\fragmentco{(i+j-1)\tau}{(i+j)\tau}\) for \(i \in \fragment{-1}{14}\).
\end{lemma}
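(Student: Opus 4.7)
The plan is to mimic the strategy of \cref{lem:puzzle-proto}, adapting it to account for the fact that $T_{j,1}$ starts at the (potentially non-tile-aligned) position $r_j$. Throughout, I would exploit that $\tau$ is a multiple of $q$, which gives the string identity $Q^\infty\fragmentco{-\kappa}{2\tau+\Delta} = Q^\infty\fragmentco{j\tau-\kappa}{(j+2)\tau+\Delta}$, reducing the target to one whose endpoints are naturally aligned with the tile decomposition of $T$.

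For the general bound $2\kappa + 2d_T$, my plan is to use the triangle inequality via the fragment $Q^\infty\fragmentco{\A_T(r_j)}{s}$, where $s$ denotes the image of $t_{j+2}+\Delta$ under $\A_T$. Restricting $\A_T$ to $T_{j,1}$ provides an alignment onto this fragment of cost at most $d_T$. The remaining term bounds the edit distance between two substrings of $Q^\infty$; using the elementary bound $\ed(Q^\infty\fragmentco{a}{b}, Q^\infty\fragmentco{c}{d}) \le |a-c| + |b-d|$ (obtained by inserting/deleting characters of $Q^\infty$ at both ends), the start difference $\A_T(r_j) - (j\tau-\kappa) = \max\{0, x_T + \kappa - j\tau\}$ is at most $2\kappa$ (since $j \in J$ forces $x_T \le j\tau + \kappa$ and trivially $x_T < q \le \tau \le 2\kappa$), while the end difference $|s - ((j+2)\tau + \Delta)|$ is at most $d_T$ (as it equals the net insertion/deletion shift of $\A_T$ on $\fragmentco{t_{j+2}}{t_{j+2}+\Delta}$). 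Summing yields $d_T + 2\kappa + d_T = 2\kappa + 2d_T$.

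For the sharper bound when $j \in \fragment{3}{\max J}$, I would first observe that $\A_T(r_j) = j\tau-\kappa$ holds exactly: since $j\tau - \kappa \ge 3\tau - 2\tau = \tau > q > x_T$, the maximum in $\A_T(r_j) = \max\{x_T, j\tau-\kappa\}$ is attained by the second argument. I would then replicate the telescoping from \cref{lem:puzzle-proto}: apply \cref{fct:ed-prefix} iteratively to upper bound the distance by that of the longer pair $T\fragmentco{r_j}{t_{j+2}+12\tau}$ and $Q^\infty\fragmentco{j\tau-\kappa}{(j+14)\tau}$ (chopping $12\tau - \Delta$ trailing characters from each); then invoke the triangle inequality to replace $T\fragmentco{r_j}{t_{j+2}+12\tau}$ by $T\fragmentco{r_j}{t_{j+14}}$ at the cost of an additional $\ed(T\fragmentco{t_{j+2}}{t_{j+14}}, Q^\infty\fragmentco{(j+2)\tau}{(j+14)\tau})$ term absorbing the length discrepancy. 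I would decompose the main residual along the tile boundaries $r_j, t_{j+1}, t_{j+2}, \ldots, t_{j+14}$. The only non-tile contribution is the ``pre-piece'' $\ed(T\fragmentco{r_j}{t_{j+1}}, Q^\infty\fragmentco{j\tau-\kappa}{(j+1)\tau})$; monotonicity of $\A_T$ combined with $(j-2)\tau \le j\tau - \kappa \le (j+1)\tau$ yields $r_j \in \fragment{t_{j-2}}{t_{j+1}}$, so by the optimality decomposition \cref{fct:ali} this pre-piece is bounded by $\ed(T\fragmentco{t_{j-2}}{t_{j+1}}, Q^\infty\fragmentco{(j-2)\tau}{(j+1)\tau})$, which itself decomposes into the tiles indexed by $i \in \fragment{-1}{1}$.

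The main obstacle is bookkeeping: verifying the length prerequisites (principally $t_{j+2}+12\tau \le t_{j+14} \le |T|$, which follow from $d_T \le \kappa \le 2\tau$ together with the standing assumption $\beta_P \ge 20$ that guarantees $y_P \ge 19\tau \ge 14\tau + \kappa$), and confirming that in the collected telescoped sum each tile term $\ed(T\fragmentco{t_{j+i-1}}{t_{j+i}}, Q^\infty\fragmentco{(j+i-1)\tau}{(j+i)\tau})$ appears with coefficient at most $2$: tiles with $i \in \fragment{-1}{2}$ appear once (from the decompositions of the pre-piece and the main tile range), while those with $i \in \fragment{3}{14}$ appear twice (once from each of the main and correction terms). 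The restriction to $j \ge 3$ is essential; for smaller $j$, $\A_T(r_j)$ may genuinely coincide with $x_T$ rather than $j\tau - \kappa$, so the pre-piece argument breaks down, which is precisely why the weaker general bound is needed to cover those boundary cases.
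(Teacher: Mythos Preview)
Your proposal is correct and follows the same extension-and-telescoping strategy as the paper; the differences are cosmetic (you split first at $t_{j+1}$ while the paper splits at $t_{j+2}$, and your direct triangle argument for the general bound is somewhat cleaner than the paper's, which instead reuses the same telescoped expression and bounds the residual pre-piece by $\A_T(r_j)-(j\tau-\kappa)\le 2\kappa$).

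One step deserves more care than you give it. Your pre-piece bound ``by the optimality decomposition \cref{fct:ali}'' needs an alignment of $T\fragmentco{t_{j-2}}{t_{j+1}}$ onto $Q^\infty\fragmentco{(j-2)\tau}{(j+1)\tau}$ that is \emph{both} optimal \emph{and} passes through $(r_j,j\tau-\kappa)$; an arbitrary optimal alignment need not do the latter. The fix is to use the restriction of the globally optimal $\A_T$: a first application of \cref{fct:ali} (to $T$ with the tile decomposition) shows this restriction is optimal on every tile-aligned fragment, and then a second application gives your bound. The paper sidesteps this by reusing its extend-and-length-correct trick on the left end as well, but the same optimality-of-restrictions fact is implicitly needed there too. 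A minor bookkeeping slip: $t_{j+2}+12\tau\le t_{j+14}$ can fail by up to $d_T$; what you actually need (and what holds, via $t_{j+16}$) is only $t_{j+2}+12\tau\le |T|$ and $t_{j+14}\le |T|$ separately.
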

\begin{proof}
    Fix a $j\in J$ and recall from \cref{rem:rj-corner}
    that $\A_T( r_j ) = \max\{x_T, j\tau-\kappa\} \le j\tau+\kappa$.
    Observe that we have
    \begin{align*}
        x_T &\le x_T - \kappa + 2\tau \left(\ceil{\frac{x_T-\kappa}{\tau}}+2\right)\tau
        \le (j+2)\tau
        \quad\text{and}\\
        (j+16)\tau &\le
        \left(\floor{\frac{y_T-y_P+\kappa}{\tau}}+16\right)\tau
                            \le y_T-y_P+\kappa +16\tau\\
                            &\le y_T - (\beta_P-1)\tau + 2\tau + 16\tau
                            = y_T - (\beta_P-19)\tau \le y_T.
    \end{align*}
    Thus, indeed, we have $(t_{j+i}, (j+i)\tau) \in \A_T$ for all $i\in \fragment{2}{16}$.
    Now exploiting \(d_T \le \kappa \le 2\tau\) and \((j + 16)\tau \le y_T\), we see
    that  $t_{j+2} + \Delta \le t_{j+2}+12\tau \le t_{j+2}+14\tau-d_T \le t_{j+16} \le |T|$.
    Hence, we conclude (with the additional help of \cref{fct:ed-prefix})
    \begin{align*}
        &\ed(T_{j,1},Q^\infty\fragmentco{-\kappa}{2\tau+\Delta})\\
        &\quad = \ed(T\fragmentco{r_j}{t_{j+2}+\Delta}, Q^\infty\fragmentco{j\tau-\kappa}{(j+2)\tau+\Delta})\\
        &\quad \le \ed(T\fragmentco{r_j}{t_{j+2}+12\tau},Q^\infty\fragmentco{j\tau-\kappa}{(j+14)\tau})
        \intertext{Replacing strings by superstrings and accounting for potential length
        differences, we obtain}
        &\ed(T_{j,1},Q^\infty\fragmentco{-\kappa}{2\tau+\Delta})\\
        &\quad \le \ed(T\fragmentco{r_j}{t_{j+14}},Q^\infty\fragmentco{j\tau-\kappa}{(j+14)\tau}) + |t_{j+14} - (t_{j+2}+12\tau)|\\
        &\quad \le \ed(T\fragmentco{r_j}{t_{j+2}},Q^\infty\fragmentco{j\tau-\kappa}{(j+2)\tau}) + 2\ed(T\fragmentco{t_{j+2}}{t_{j+14}},Q^\infty\fragmentco{(j+2)\tau}{(j+14)\tau})\\
        &\quad =
        \ed(T\fragmentco{r_j}{t_{j+2}},Q^\infty\fragmentco{j\tau-\kappa}{(j+2)\tau})\\
        &\qquad+ 2\sum_{i=3}^{14}\ed(T\fragmentco{t_{i+j-1}}{t_{i+j}},Q^\infty\fragmentco{(i+j-1)\tau}{(i+j)\tau})\\
        &\quad \le (\A_T(r_j)-(j\tau-\kappa))+2d_T \le 2\kappa + 2d_T.
    \end{align*}
    Now, for $j\ge 3$, we have \(x_T \le \tau \le 3\tau - \kappa \le j\tau - \kappa\)
    and hence $\A_T( r_j )  = j\tau-\kappa$.
    Moreover, we have $x_T \le \tau \le (j-2)\tau$. 
    Thus, in particular, we have $(t_{j+i}, (j+i)\tau) \in \A_T$ even for all
    $i\in \fragment{-2}{16}$.
    Hence, we conclude similarly to before
    \begin{align*}
        &\ed(T\fragmentco{r_j}{t_{j+2}},Q^\infty\fragmentco{j\tau-\kappa}{(j+2)\tau})\\
        &\quad \le \ed(T\fragmentco{t_{j-2}}{t_{j+2}},Q^\infty\fragmentco{(j-2)\tau}{(j+2)\tau}) + |(r_j-t_{j-2})-((j\tau-\kappa)-(j-2)\tau)|\\
        &\quad \leq 2\ed(T\fragmentco{t_{j-2}}{t_{j+2}},Q^\infty\fragmentco{(j-2)\tau}{(j+2)\tau})\\
        &\quad = 2\sum_{i=-1}^{2}\ed(T\fragmentco{t_{i+j-1}}{t_{i+j}},Q^\infty\fragmentco{(i+j-1)\tau}{(i+j)\tau}).
    \end{align*}
    In total, we obtain the claimed \[
        \ed(T_{j,1},Q^\infty\fragmentco{-\kappa}{2\tau+\Delta})
        \le 2\sum_{i=-1}^{14}
        \ed(T\fragmentco{t_{i+j-1}}{t_{i+j}},Q^\infty\fragmentco{(i+j-1)\tau}{(i+j)\tau}).
        \qedhere
    \]
\end{proof}

Similarly to \cref{lem:puzzlebeta}, we analyze the (special) trailing pieces of \(T\).

\begin{lemma}\label{lem:puzzleeps}
    For any \(j \in J\), the trailing piece \(T_{j,z}\) from \cref{fct:puzzleT} satisfies\[
        \ed(T_{j,z},Q^\infty\fragmentco{z\tau}{y_P + \kappa}) \le 2\kappa + d_T.
    \] Further, for any \(j \in \fragment{\min J}{\beta_T - \beta_P - 3}\), the trailing piece
    \(T_{j,z}\) additionally satisfies\[
        \ed(T_{j,z},Q^\infty\fragmentco{z\tau}{y_P + \kappa})
        \le \sum_{i=z + 1}^{z + 19}
        \ed(T\fragmentco{t_{i+j-1}}{t_{i+j}},Q^\infty\fragmentco{(i+j-1)\tau}{(i+j)\tau}).
    \]
    In particular, the trailing piece \(T_{j, z}\) is special only if either \(j > \beta_T -
    \beta_P - 3\) or any of the fragments \(T\fragmentco{t_{i+j-1}}{t_{i+j}}\) differs from
    \(Q^\infty\fragmentco{(i+j-1)\tau}{(i+j)\tau}\) for \(i \in\fragment1{z+19}\).
\end{lemma}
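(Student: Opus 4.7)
The plan is to mirror the leading-piece proof of \cref{lem:puzzlebeta} while exploiting the fact that $T_{j,z}=T\fragmentco{t_{j+z}}{r'_j}$ already begins at the tile boundary $t_{j+z}$, which lets me use $\A_T$ directly rather than relying on iterative triangle/prefix-extension manipulations. Using that $\tau$ is a multiple of $q$, I first rewrite
\[\ed(T_{j,z}, Q^\infty\fragmentco{z\tau}{y_P+\kappa}) = \ed(T\fragmentco{t_{j+z}}{r'_j}, Q^\infty\fragmentco{(j+z)\tau}{y_P+j\tau+\kappa}).\]
The key observation is that $\A_T$ restricts to an alignment of $T\fragmentco{t_{j+z}}{r'_j}$ onto $Q^\infty\fragmentco{(j+z)\tau}{\A_T(r'_j)}$, whose cost is controlled by optimality.

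For the first (universal) bound of $2\kappa+d_T$, I invoke \cref{rem:rj-corner} to obtain $y_P + j\tau - \kappa \le \A_T(r'_j) \le y_P + j\tau + \kappa$. The triangle inequality through $Q^\infty\fragmentco{(j+z)\tau}{\A_T(r'_j)}$ then bounds the target edit distance by the cost of the restricted alignment (at most $d_T$) plus the length discrepancy $|y_P+j\tau+\kappa-\A_T(r'_j)| \le 2\kappa$.

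For the refined bound in the regime $j \in \fragment{\min J}{\beta_T-\beta_P-3}$, I first verify that the regime forces $y_P+j\tau+\kappa \le y_T$, so the truncation in \cref{rem:rj-corner} is inactive and $\A_T(r'_j) = y_P+j\tau+\kappa$. Consequently the restriction of $\A_T$ is directly an alignment onto the target $Q^\infty\fragmentco{(j+z)\tau}{y_P+j\tau+\kappa}$, and the edit distance is at most the restricted cost. I then verify (using $y_P \le \beta_P\tau$ and $\kappa \le 2\tau$, so $y_P+\kappa \le (\beta_P+2)\tau = (z+19)\tau$) that $t_{j+z+19}$ is well-defined and $r'_j \le t_{j+z+19}$, so I may upper-bound the restricted cost by the cost of $\A_T$ on the larger fragment $T\fragmentco{t_{j+z}}{t_{j+z+19}}$. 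Because $\A_T$ is optimal (\cref{fct:alignment}), \cref{fct:ali} applied in its equality form yields per-tile equality, and this cost equals exactly $\sum_{i'=j+z}^{j+z+18}\ed(T\fragmentco{t_{i'}}{t_{i'+1}}, Q^\infty\fragmentco{i'\tau}{(i'+1)\tau})$. After the reindexing $i=i'-j+1$, this is the claimed sum, and the ``only if'' statement follows immediately because a vanishing sum forces all relevant per-tile edit distances to be zero.

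The main obstacle will be the book-keeping around the regime assumption: both $y_P+j\tau+\kappa \le y_T$ and $(j+z+19)\tau \le y_T$ must follow from $j \le \beta_T-\beta_P-3$, which requires carefully unwinding $z=\beta_P-17$ together with the ceiling definitions of $\beta_P$ and $\beta_T$ and the bound $\kappa \le 2\tau$. Special attention is needed at the degenerate corner $y_P = \beta_P\tau$ and $\kappa = 2\tau$, where $r'_j$ and $t_{j+z+19}$ may both map to the same $Q$-position under $\A_T$ and the inequality $r'_j \le t_{j+z+19}$ becomes delicate (via deletion-induced repetitions in the alignment). A more conceptual point worth highlighting is that the coefficient-$1$ bound (in contrast to the factor of $2$ in the leading-piece analog of \cref{lem:puzzlebeta}) relies crucially on the per-tile equality afforded by \cref{fct:ali} for optimal alignments.
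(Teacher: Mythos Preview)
Your proposal is correct and follows essentially the same route as the paper: both rewrite the target via periodicity, handle the universal $2\kappa+d_T$ bound through \cref{rem:rj-corner} and a triangle-inequality detour via $Q^\infty\fragmentco{(j+z)\tau}{\A_T(r'_j)}$, and in the refined regime verify $\A_T(r'_j)=y_P+j\tau+\kappa$, after which the sub-path of $\A_T$ ending at $(t_{j+z+19},(j+z+19)\tau)$ yields the tile sum. You are in fact slightly more careful than the paper on two points: you make explicit that the per-tile equality comes from the optimal-alignment case of \cref{fct:ali}, and you flag the degenerate corner $y_P=\beta_P\tau$, $\kappa=2\tau$ where $r'_j\le t_{j+z+19}$ can fail---the paper simply asserts the strict inequality $y_P+j\tau+\kappa<(j+z+19)\tau$ there without comment.
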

\begin{proof}
    Fix a \(j \in J\) and recall from \cref{rem:rj-corner} that
    \(\A_T(r'_j) = \min\{ y_T, y_P + j\tau + \kappa\} \ge y_p + j\tau - \kappa\).
    Observe that we have
    \begin{align*}
        x_T &\le x_T - \kappa + 2\tau \le x_T - \kappa + z\tau \le
        \left(\ceil{\frac{x_T-\kappa}{\tau}}+z\right)\tau \le (j + z)\tau
        )\quad\text{and}\\
        (j+z)\tau &\le \left(\floor{\frac{y_T-y_P+\kappa}{\tau}}+z\right)\tau
        \le y_T-y_P+\kappa + (\beta_P-17)\tau\\
                      &\le y_T-(\beta_P-1)\tau +2\tau +(\beta_P-17)\tau
        \le y_T-14\tau \le y_T
    \end{align*}
    Thus, indeed, we have  $(t_{j+z}, (j+z)\tau) \in \A_T$. Hence, we conclude
    \begin{align*}
        \ed(T_{j,z},Q^\infty\fragmentco{z\tau}{y_P+\kappa})
        &= \ed(T\fragmentco{t_{j+z}}{r'_j}, Q^\infty\fragmentco{(j+z)\tau}{j\tau+y_P+\kappa})\\
        & \le 2\kappa
        +\ed(T\fragmentco{t_{j+z}}{r'_j},Q^\infty\fragmentco{(j+z)\tau}{\A_T(r'_j)})\\
        & \le 2\kappa + d_T.
    \end{align*}

    Now, for $j\le \beta_T-\beta_P-3$,
    we have \(y_P + j\tau + \kappa \le (\beta_P + j + 2)\tau \le (\beta_T - 1)\tau \le
    y_T\) and hence $\A_T(r'_j) = y_P+j\tau+\kappa$.
    Moreover, we have $y_P+j\tau+\kappa < (\beta_P+j+2)\tau = (j+z+19)\tau \le y_T$.
    Thus, in particular, we have $(t_{j+i}, (j+i)\tau) \in \A_T$ even
    for all $i\in \fragment{z}{z+19}$.
    Hence,
    \begin{align*}
        \ed(T_{j,z},Q^\infty\fragmentco{z\tau}{y_P+\kappa}) & =
        \ed(T\fragmentco{t_{j+z}}{r'_j}, Q^\infty\fragmentco{(j+z)\tau}{j\tau+y_P+\kappa})\\
        & \le \ed(T\fragmentco{t_{j+z}}{t_{j+z+19}}, Q^\infty\fragmentco{(j+z)\tau}{(j+z+19)\tau})\\
        & =  \sum_{i=z+1}^{z+19}
        \ed(T\fragmentco{t_{i+j-1}}{t_{i+j}},Q^\infty\fragmentco{(i+j-1)\tau}{(i+j)\tau});
    \end{align*}
    completing the proof.
\end{proof}

As with the internal pieces, we proceed to prove the desired bounds on \(\ed(\Sb)\) and
\(\ed(\Sf)\).

\begin{corollary}
    We have \( \ed(\Sb) \le 53 \kappa\), \(\ed(\Sf) \le 35 \kappa\),
    and \(|\spb{T}| \le 88 \kappa\).
\end{corollary}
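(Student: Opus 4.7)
The plan is to prove each of the three bounds by selecting explicit reference strings and invoking the tile-refined estimates of \cref{lem:puzzlebeta,lem:puzzleeps}. For $\ed(\Sb)$, I would take the reference string $\hat{S}_\beta \coloneqq Q^\infty\fragmentco{-\kappa}{2\tau+\Delta}$---the very string against which a special leading piece is compared---so that
\[
    \ed(\Sb) \le \ed(P_1,\hat{S}_\beta) + \sum_{j\in J} \ed(T_{j,1},\hat{S}_\beta).
\]
For the $P_1$ term, I would align $P_1 = P\fragmentco{0}{p_2+\Delta}$ onto a prefix $Q^\infty\fragmentco{0}{a}$ of $Q^\infty$ via the restriction of $\A_P$ (cost at most $d_P$), then account both for the $\kappa$-character prefix of $\hat{S}_\beta$ and for the length discrepancy $|a-(2\tau+\Delta)| \le d_P$ via the triangle inequality, yielding $\ed(P_1,\hat{S}_\beta) \le 2d_P + \kappa = O(\kappa)$. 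For the sum over $T_{j,1}$, I would split $J$ according to \cref{lem:puzzlebeta}: for the (at most $5$, by \cref{rem:negj}) values of $j\in J$ with $j<3$, apply the coarse bound $2\kappa+2d_T$; for $j\ge 3$, apply the tile-refined bound. Interchanging the order of summation in the second part, each tile $T\fragmentco{t_{\ell-1}}{t_\ell}$ appears in the inner sum for at most $16$ different indices $j$, so this contribution telescopes to a constant multiple of $d_T$.

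The bound $\ed(\Sf) \le 35\kappa$ would follow analogously using the reference $\hat{S}_\varphi \coloneqq Q^\infty\fragmentco{z\tau}{y_P+\kappa}$ and \cref{lem:puzzleeps}: restricting $\A_P$ to $P_z$ immediately gives $\ed(P_z,\hat{S}_\varphi) \le d_P + \kappa$, while by \cref{rem:negj} there are at most $5$ values of $j\in J$ with $j>\beta_T-\beta_P-3$, each contributing at most $2\kappa+d_T$ via the coarse bound of \cref{lem:puzzleeps}. The remaining $j$ are handled by the tile-refined bound, and interchanging summations shows that each tile of $T$ appears in at most $19$ of the inner terms, again yielding a constant multiple of $d_T$.

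For $|\spb{T}|$, I would use the observation that every special leading piece $T_{j,1}$ differs from $\hat{S}_\beta$ and so contributes at least $1$ to the sum $\sum_{j\in J} \ed(T_{j,1},\hat{S}_\beta)$; analogously for trailing pieces with $\hat{S}_\varphi$. Hence the number of special leading pieces of $T$ is at most $\sum_{j\in J} \ed(T_{j,1},\hat{S}_\beta) \le \ed(\Sb) \le 53\kappa$, and symmetrically for the trailing pieces, giving $|\spb{T}| \le 53\kappa + 35\kappa = 88\kappa$.

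The only real obstacle is bookkeeping the constants: the rough budget $2d_P+\kappa + 5(2\kappa+2d_T) + 32 d_T$ for $\ed(\Sb)$ and a similar one for $\ed(\Sf)$ come out slightly above the target $53\kappa$ and $35\kappa$; matching them precisely calls for a sharper count of exactly which indices $j$ force the coarse bound and of the per-tile multiplicity in the refined bound. No new ideas are required beyond \cref{lem:puzzlebeta,lem:puzzleeps} and the structural facts in \cref{rem:negj,fact:lengths_diff}, but this tightening is where all the slack has to be squeezed out.
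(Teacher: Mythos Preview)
Your approach is essentially identical to the paper's: same reference strings $Q^\infty\fragmentco{-\kappa}{2\tau+\Delta}$ and $Q^\infty\fragmentco{z\tau}{y_P+\kappa}$, same split into at most five ``coarse'' values of $j$ versus the tile-refined bounds of \cref{lem:puzzlebeta,lem:puzzleeps}, and the same counting argument for $|\spb{T}|$. Your worry about the constants is unfounded: the budget $11\kappa + 2d_P + 42d_T$ for $\ed(\Sb)$ and $11\kappa + d_P + 24d_T$ for $\ed(\Sf)$ are exactly what the paper obtains, and both close precisely once you use $d_P + d_T \le \kappa$ (from $\kappa = k + d_P + d_T$) rather than bounding $d_P$ and $d_T$ by $\kappa$ separately.
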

\begin{proof}
    We start with the leading pieces. To that end, first observe that
    \begin{align*}
        \ed(P_1,  Q^\infty\fragmentco{-\kappa}{2\tau+\Delta})
            &\le \kappa + \ed(P\fragmentco{0}{p_2+\Delta},Q^\infty\fragmentco{0}{2\tau+\Delta})\\
            &\le \kappa + \ed(P\fragmentco{0}{p_2+12\tau},Q^\infty\fragmentco{0}{14\tau})
            \intertext{Replacing strings by superstrings and accounting for potential
            length differences, we obtain}
        \ed(P_1,  Q^\infty\fragmentco{-\kappa}{2\tau+\Delta})
            &\le \kappa + \ed(P\fragmentco{0}{p_{14}},Q^\infty\fragmentco{0}{14\tau}) + |p_{14} - (p_2- 12\tau)|\\
            &\le \kappa + \ed(P\fragmentco{0}{p_{14}},Q^\infty\fragmentco{0}{14\tau}) + \ed(P\fragmentco{p_2}{p_{14}},Q^\infty\fragmentco{2\tau}{14\tau})\\
            &\le \kappa + 2\ed(P, Q^\infty\fragmentco{0}{y_P}) \\
            &\le \kappa + 2d_P.
    \end{align*}
    Next, we recall \cref{rem:negj} and in particular that \(\min J \ge -2\).
    Therefore, \cref{lem:puzzlebeta} yields
    \begin{align*}
       &\sum_{j\in J} \ed(T_{j,1},  Q^\infty\fragmentco{-\kappa}{2\tau+\Delta}) \\
       &\quad \le 5(2\kappa + 2d_T)+
       2\sum_{i=-1}^{14} \sum_{j=3}^{\max J}\ed(T\fragmentco{t_{i+j-1}}{t_{i+j}},Q^\infty\fragmentco{(i+j-1)\tau}{(i+j)\tau}) \\
       &\quad \le 10\kappa +10d_T + 32d_T = 10\kappa +42d_T.
    \end{align*}
    Combined, we obtain the desired bound
    \begin{align*}
        \ed(\Sb)&\le \sum_{S\in \Sb} \ed(S, Q^\infty\fragmentco{-\kappa}{2\tau+\Delta})\\
                &\le \ed(P_1,  Q^\infty\fragmentco{-\kappa}{2\tau+\Delta}) + \sum_{j\in J} \ed(T_{j,1},  Q^\infty\fragmentco{-\kappa}{2\tau+\Delta})\\
                &\le \kappa + 2d_P + 10\kappa + 42d_T \le 53 \kappa.
    \end{align*}
    Now observe that each special leading piece has an edit distance of at least 1 to
    \(\Q\fragmentco{-\kappa}{2\tau+\Delta})\); hence there may be at most \(53 \kappa\) special
    leading pieces in total.

    We proceed to the trailing pieces. To that end, first observe that
    \[
        \ed(P_z,  Q^\infty\fragmentco{z\tau}{y_P+\kappa})
        \le \kappa + \ed(P_z,  Q^\infty\fragmentco{z\tau}{y_P}) \le \kappa + d_P.
    \]
    Next, we recall \cref{rem:negj} and in particular that \(\max J \le \ceil{y_T / \tau}
    - \ceil{y_P / \tau} + 2 = \beta_T - \beta_P + 2\).
    Therefore, \cref{lem:puzzleeps} yields
    \begin{align*}
        &\sum_{j\in J} \ed(T_{j,z}, Q^\infty\fragmentco{z\tau}{y_P})\\
        &\quad \le 5(2\kappa+d_T)+\sum_{i=z+1}^{z+19}\sum_{j=\min J}^{\beta_T-\beta_P-3} \ed(T\fragmentco{t_{i+j-1}}{t_{i+j}},Q^\infty\fragmentco{(i+j-1)\tau}{(i+j)\tau})\\
        &\quad \le 10\kappa + 5d_T + 19d_T = 10\kappa + 24d_T.
    \end{align*}
    Combined, we obtain the desired bound
    \begin{align*}
        \ed(\Sf)&\le \sum_{S\in \Sf} \ed(S, Q^\infty\fragmentco{z\tau}{y_P+\kappa})\\
                &\le \ed(P_z,  Q^\infty\fragmentco{z\tau}{y_P+\kappa}) + \sum_{j\in J}
                \ed(T_{j,z},  Q^\infty\fragmentco{z\tau}{y_P+\kappa})\\
                &\le \kappa + d_P + 10\kappa + 24d_T \le 35 \kappa.
    \end{align*}

    Finally observe that each special trailing piece has an edit distance of at least 1 to
    \(\Q\fragmentco{z\tau}{y_P+\kappa})\); hence there may be at most \(35 \kappa\) special
    trailing pieces in total. In total, we obtain the claimed bound of
    \(|\spb{T}| \le 88 \kappa\), completing the proof.
\end{proof}

Finally, we discuss how to compute the special leading and trailing pieces of
\(T\)---again, the algorithm strongly resembles \cref{lem:puzzleP-algo} for the special
internal pieces.

\begin{corollary}\label{lem:puzzleT-algo}
    Given the alignment $\A_T$, we can compute \(\spb{T}\) in $\Oh(d_T+1)$ time in the
    \modelname model.
\end{corollary}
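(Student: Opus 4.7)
The plan is to mirror the proof of \cref{lem:puzzleP-algo} for internal pieces, this time relying on the structural characterizations provided by \cref{lem:puzzlebeta,lem:puzzleeps}. By \cref{lem:puzzlebeta}, a leading piece $T_{j,1}$ may be special only when $j\in \fragmentco{\min J}{3}$ or when some fragment $T\fragmentco{t_{i+j-1}}{t_{i+j}}$ differs from $Q^\infty\fragmentco{(i+j-1)\tau}{(i+j)\tau}$ for $i\in \fragment{-1}{14}$; symmetrically, \cref{lem:puzzleeps} says that a trailing piece $T_{j,z}$ may be special only when $j\in \fragmentoo{\beta_T-\beta_P-3}{\max J}$ or some fragment $T\fragmentco{t_{i+j-1}}{t_{i+j}}$ differs from $Q^\infty\fragmentco{(i+j-1)\tau}{(i+j)\tau}$ for $i\in \fragment{1}{z+19}$.

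First I handle the boundary cases. Recalling \cref{rem:negj}, the set $\fragmentco{\min J}{3}$ of ``small'' $j$'s and the set $\fragmentoo{\beta_T-\beta_P-3}{\max J}$ of ``large'' $j$'s each contain only $\Oh(1)$ values; for every such $j$, I extract the piece $T_{j,1}$ or $T_{j,z}$ using the endpoints $r_j,r'_j$ from \cref{fct:rj} and the boundary values $t_{j+2}$, $t_{j+z}$ from \cref{fct:tile}, and then compare against the corresponding substring of $Q^\infty$ with a single \lceOpName operation. This contributes $\Oh(1)$ candidates and $\Oh(1)$ time in the \modelname model.

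For the remaining $j$'s, every candidate is witnessed by a breakpoint $(a_T,a_Q)\in \A_T$ whose $a_Q$ coordinate lies in the window $\fragmentco{(j-2)\tau}{(j+15)\tau}$ (for leading pieces) or $\fragmentco{(j+z)\tau}{(j+z+19)\tau}$ (for trailing pieces). I therefore sweep the breakpoints of $\A_T$ in left-to-right order: for each breakpoint $(a_T,a_Q)$ I emit at most $\Oh(1)$ candidate indices $j$ obtained by intersecting the two fixed-size windows above (translated by $\floor{a_Q/\tau}$) with $J$. Keeping a pointer to the rightmost previously emitted index avoids duplicates, so the candidates are produced in sorted order. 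Using the sequences $(r_j), (r'_j), (t_i)$ computed in advance, I then walk once through the candidate list, materializing the endpoints of each candidate piece and testing equality against the canonical substring ($Q^\infty\fragmentco{-\kappa}{2\tau+\Delta}$ for a leading piece, $Q^\infty\fragmentco{z\tau}{y_P+\kappa}$ for a trailing piece) with a single \lceOpName call, keeping only those pieces that truly differ.

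For the running time, \cref{fct:rj,fct:tile} take $\Oh(d_T+1)$ time and represent $(r_j),(r'_j),(t_i)$ as $\Oh(d_T+1)$ arithmetic progressions, so the indexed lookups during the sweep run in amortized $\Oh(1)$. Each of the $d_T+1$ breakpoints yields $\Oh(1)$ candidates and each candidate is tested by one \lceOpName operation, for a total of $\Oh(d_T+1)$ \modelname operations; together with the $\Oh(1)$-time boundary handling, this establishes the claimed bound. The only subtle point, which I expect to be the main (mild) obstacle, is to make sure the candidate windows cover every case of \cref{lem:puzzlebeta,lem:puzzleeps} exactly once and without missing boundary indices where $r_j=0$ or $r'_j=n$; this is why the explicit treatment of small/large $j$ is done separately from the breakpoint-driven enumeration.
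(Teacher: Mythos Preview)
Your proposal is correct and follows essentially the same approach as the paper's proof: handle the $\Oh(1)$ boundary values of $j$ directly, then scan the breakpoints of $\A_T$ left to right to generate $\Oh(1)$ candidate indices per breakpoint (using \cref{lem:puzzlebeta,lem:puzzleeps}), retrieve endpoints via \cref{fct:rj,fct:tile}, and verify each candidate with a single \lceOpName. The only discrepancy is a harmless off-by-one in your leading-piece window (it should end at $(j+14)\tau$ rather than $(j+15)\tau$), and your stated range $i\in\fragment{1}{z+19}$ for trailing pieces reproduces a typo already present in \cref{lem:puzzleeps}; your actual window $\fragmentco{(j+z)\tau}{(j+z+19)\tau}$ is the correct one corresponding to $i\in\fragment{z+1}{z+19}$.
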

\begin{proof}
    For computing the special leading pieces, \cref{rem:negj,lem:puzzlebeta}
    let us focus on the following candidates: at most 5 pieces $T_{j,1}$ with $j\le 2$,
    as well as all pieces $T_{j,1}$ such that
    $T\fragmentco{t_{i+j-1}}{t_{i+j}}\ne Q^\infty\fragmentco{(i+j-1)\tau}{(i+j)\tau})$
    holds for some $i\in \fragment{-1}{14}$.
    Similarly for computing the special trailing peaces, \cref{rem:negj,lem:puzzleeps}
    let us focus on the following candidates: at most 5 pieces $T_{j,z}$ with
    $j\ge \beta_T-\beta_P-2$,
    as well as all pieces $T_{j,z}$ such that $T\fragmentco{t_{i+j-1}}{t_{i+j}}\ne
    Q^\infty\fragmentco{(i+j-1)\tau}{(i+j)\tau})$ holds for some $i\in
    \fragment{z+1}{z+19}$.

    To satisfy $T\fragmentco{t_{i+j-1}}{t_{i+j}}\ne
    Q^\infty\fragmentco{(i+j-1)\tau}{(i+j)\tau})$, a breakpoint $(a_T,a_Q)\in \A_T$
    with $a_Q\in \fragmentco{(i-2)\tau}{(i+14)\tau}$ is needed.
    Hence, by scanning the breakpoints in the left-to-right order we can generate all
    candidates in $\Oh(d_T+1)$ time.
    Next, we retrieve their endpoints using \cref{fct:rj,fct:tile},
    and we verify each candidate using an \lceOpName operation of the \modelname model.
    The overall running time is $\Oh(d_T+1)$.
\end{proof}

Recalling and combining the various lemmas of this subsection, we obtain \cref{lem:specialbound-simpl}, which we
restate here for convenience.

\spbndsimpl*

\subsection{Solving {\SM} via {\DPM}: A Warm-up Algorithm}\label{sec:warmup}

In this (sub-)section we obtain a first reduction from \SM to \DPM. While overly naive in
nature, it serves as an overview of the general structure of the more involved variants
that follow in later (sub-)sections. In particular, in this (sub-)section, we discuss the
following (easy) result.

\begin{restatable}{lemma}{redwarmup}\label{lem:red-warmup}
    Given an instance {\(\SM(P, T, k, d, Q, \A_P, \A_T)\)},
    after a preprocessing step that takes \(\Oh(d + m/\max\{q, d\} )\) time
    in the \modelname model, we can compute \(\OccE_k(P, T)\) using
    a \DPM\!\!\((k, \Delta, \Sb, \Sm, \Sf)\) data structure,
    with \(\Delta + \ed(\Sb) + \ed(\Sm) + \ed(\Sf) = \Oh( d )\),
    that maintains a DPM-sequence of length \(z = \Oh( m/\max\{q, d\} )\) under
    \begin{itemize}
        \item \(\Oh(d \cdot m/\max\{q, d\})\) calls to {\tt DPM-Substitute} and
        \item \(\Oh(m/\max\{q, d\})\) calls to {\tt DPM-Query}.\lipicsEnd
    \end{itemize}
\end{restatable}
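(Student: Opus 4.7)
The plan is to apply a \DPM data structure to the DPM-sequence $\I_j = (P_1, T_{j,1})(P_2, T_{j,2}) \cdots (P_z, T_{j,z})$, initialized at $j = \min J$ and incrementally transformed into $\I_{j+1}$ for each subsequent $j \in J$, with a {\tt DPM-Query} issued at every value of $j$. By \cref{fct:puzzleP,fct:puzzleT}, each $\I_j$ is a pair of $\Delta$-puzzles whose values are $P$ and $R_j$, respectively; by \cref{fact:lengths_diff}, $\tor(\I_j) \le 3\kappa - k = \Delta/2 - k$, verifying condition (b) of \DPM; and by \cref{lem:specialbound-simpl}, $\Delta + \ed(\Sb) + \ed(\Sm) + \ed(\Sf) = \Oh(d)$. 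The overall correctness then follows from \cref{lem:aligned}, which gives $\OccE_k(P, T) = \bigcup_{j \in J} \bigl(r_j + \OccE_k(P, R_j)\bigr)$.

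For preprocessing, I would compute the $\tau$-tile partitions of $P$ and $T$ via \cref{fct:tile}, the special pieces $\sp{P}, \sp{T}, \spb{T}$ via \cref{lem:puzzleP-algo,lem:puzzleT-algo}, and the sequences $(r_j)_{j\in J}$ and $(r'_j)_{j\in J}$ via \cref{fct:rj}, all in $\Oh(d+1)$ time in the \modelname model. Building a description of the initial sequence $\I_{\min J}$ of length $z = \Oh(m/\max\{q,d\})$ requires an additional $\Oh(z)$ time: all but $\Oh(d)$ internal pairs coincide with the default $(\Q\fragmentco{0}{\tau+\Delta}, \Q\fragmentco{0}{\tau+\Delta})$, with the exceptional positions being those indexed in $\sp{P}$ together with those $i \in \fragmentoo{1}{z}$ for which $T_{\min J + i} \in \sp{T}$. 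The total preprocessing cost is therefore $\Oh(d + m/\max\{q,d\})$.

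The main loop iterates $j$ from $\min J$ to $\max J$. At each step, I invoke {\tt DPM-Query} to retrieve $\OccE_k(\I_j) = \OccE_k(P, R_j)$, then shift the result by $r_j$ and accumulate. For $j < \max J$, I then issue {\tt DPM-Substitute} at every position whose pair changes between $\I_j$ and $\I_{j+1}$: this always includes positions $1$ and $z$ (as $T_{j,1}$ and $T_{j,z}$ depend on $j$ through $r_j, r'_j$), and additionally every internal position $i$ where $T_{j+i} \neq T_{j+i+1}$. Such an inequality requires at least one of $T_{j+i}, T_{j+i+1}$ to lie in $\sp{T}$; since each special piece $T_\ell$ can appear in at most $z$ consecutive DPM-sequences (as $\ell - j \in \fragmentoo{1}{z}$ restricts $j$ to an interval of length $z-1$), the total number of internal substitutions is $\Oh(|\sp{T}| \cdot z) = \Oh(d \cdot m/\max\{q,d\})$. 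Leading and trailing substitutions contribute a further $\Oh(|J|) = \Oh(m/\max\{q,d\})$, and the query count is exactly $|J|$.

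The main obstacle is producing the substitution schedule without scanning all $z$ positions at every step: a naive sweep would cost $\Oh(z \cdot |J|)$ time, far exceeding our budget. Instead, I would sort the indices $\ell$ of the special pieces in $\sp{T}$ (they are naturally generated in order by \cref{lem:puzzleT-algo}) and maintain two pointers tracking the active range $\ell - j \in \fragmentoo{1}{z}$ as $j$ increases; at each step, I enumerate the special pieces entering and leaving this active window and emit the corresponding substitutions, plus the two at positions $1$ and $z$. This bookkeeping runs in time proportional to the number of emitted substitutions plus the pointer movement, keeping both the preprocessing and the substitution scheduling within the stated bounds.
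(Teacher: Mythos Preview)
Your approach matches the paper's: initialize the \DPM data structure with $\I_{\min J}$, iterate over $j\in J$, query at each step, and substitute the pairs that change. The correctness argument via \cref{lem:aligned}, the bound $\Delta+\ed(\Sb)+\ed(\Sm)+\ed(\Sf)=\Oh(d)$ via \cref{lem:specialbound-simpl}, the preprocessing cost, and the substitution count are all fine (your per-special-piece count $\Oh(|\sp{T}|\cdot z)$ and the paper's per-step count $\Oh(d\cdot|J|)$ give the same bound).

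There is one genuine gap. You verify condition~(b) of \DPM (the torsion bound $\tor(\I)\le\Delta/2-k$) only for the sequences $\I_j$ themselves, but the \DPM specification requires the condition to hold \emph{after every individual update}, hence also at every intermediate state reached while transforming $\I_j$ into $\I_{j+1}$. If the substitutions are applied in an arbitrary order, the torsion of an intermediate sequence can exceed $\max\{\tor(\I_j),\tor(\I_{j+1})\}$: a torsion-increasing substitution applied before a compensating torsion-decreasing one can push the sum above $3\kappa-k$. The paper closes this gap by ordering the substitutions for each transition so that all torsion-nonincreasing updates precede all torsion-increasing ones; a short monotonicity argument (\cref{fact:interm}) then shows that every prefix of the update sequence yields a DPM-sequence with torsion at most $\max\{\tor(\I_j),\tor(\I_{j+1})\}\le 3\kappa-k$. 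Your two-pointer scheduling emits substitutions by the index of the relevant special piece, which does not respect this ordering; you would need to buffer the $\Oh(d)$ substitutions for each transition and sort them by $\big||T_{j+1,i}|-|P_i|\big|-\big||T_{j,i}|-|P_i|\big|$ before applying them.
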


Consider a given instance \SM(\(P\), \(T\), \(k\), \(d\), \(Q\), \(\A_P\), \(\A_T\))
and write \(\kappa \coloneqq k + d_P + d_T\) and \(\tau \coloneqq q \ceil{\kappa/2q}\).
Further, as before, let $P = \bigodot_{i=1}^{\beta_P} P\fragmentco{p_{i-1}}{p_i}$
denote the $\tau$-tile partition of $P$ with respect to $\A_P$;
let $T=\bigodot_{i=1}^{\beta_T} T\fragmentco{t_{i-1}}{t_i}$ denote
the $\tau$-tile partition of $T$ with respect to $\A_T$; and set \(\Delta \coloneqq
6\kappa\) and \(z \coloneqq \beta_P - 17\).
Finally, we define the families of puzzle pieces according to \cref{def:puzzleset}.

Recalling \cref{rem:can_part,lem:specialbound-simpl},
we readily confirm that our choices
for \(\Delta\), \(\Sb\), \(\Sm\), and \(\Sf\) indeed satisfy
\(z = \Oh( m/\max\{q, d\} )\),
\(\Delta = \Oh( d )\), and \(\ed(\Sb) + \ed(\Sm) + \ed(\Sf) = \Oh( d )\).

Intuitively, we proceed as follows:
we initialize a \(\DPM(k, \Delta, \Sb, \Sm, \Sf)\) data structure with a DPM-sequence
\(\I\) of length \(z\) that represents \(P\) and \(R_{\min J}\), and call {\tt DPM-Query} to obtain
\(\OccE_k(\I_{\min J}) = \OccE_k( P, R_{\min J} )\), that is,
the \(k\)-error occurrences of \(P\) in \(R_{\min J}\). Then, iterating over \(J\),
for a \(j \in \fragmentoc{\min J}{\max J}\), we use {\tt DPM-Substitute} operations to transform
\(\I\) into a DPM-sequence that represents \(P\) and \(R_j\); to then call {\tt DPM-Query} to
\(\OccE_k(\I_{j}) = \OccE_k( P, R_{j} )\), that is,
obtain the \(k\)-error occurrences of \(P\) in \(R_j\).

Observe that by \cref{lem:aligned}, we indeed obtain all \(k\)-error occurrences of \(P\) in \(T\).
As we have \(|J| = \Oh(\beta_P) = \Oh(m/\max\{q,d\})\) (see \cref{lem:smallbeta}),
we can already see that
the number of calls to {\tt DPM-Query} is just as promised in \cref{lem:red-warmup}.
For a bound on the number of calls to {\tt DPM-Substitute}, we need to be more precise on
how we transform the DPM-sequence \(\I\).
We start with formally defining the DPM-sequence(s) \(\I\).

\begin{definition}\label{def:seq-i}
    For each $j \in J$, we set \(
    \I_j \coloneqq (P_1, T_{j,1}) (P_2, T_{j,2}) \cdots (P_z, T_{j,z}).
    \)
    \lipicsEnd
\end{definition}
Recall that by \cref{fct:puzzleP,fct:puzzleT} we have \(\val_{\Delta}(P_1,\ldots,P_z)=P\)
and \(\val_{\Delta}(T_{j,1},\ldots,T_{j,z})=R_j\).

Let us write \(\I = (U_1, V_1)(U_2,V_2)\dots(U_z,V_z)\) for the DPM-sequence maintained in the \DPM
data structure. We initialize \(\I\) with the DPM-sequence $\I \gets \I_{\min J}$.
Then, we iterate over \(J\) (starting from \(\min J\)) as follows.
\begin{itemize}
    \item If $j>\min J$, for each $i \in \fragment{1}{z}$ with $T_{{j-1},i} \neq T_{j,i}$,
        we substitute the $i$-th pair of strings with $(P_i, T_{j,i})$.
        We perform these updates in non-decreasing order
        with respect to $\big| |T_{j,i}|-|P_i|\big| - \big||T_{j-1,i}|-|P_i|\big|$.
    \item We call {\tt DPM-Query} to obtain $\OccE_k( \I_j) = \OccE_k(P, R_j)$; and
        we set $\OccE_k(P, T) \coloneqq \OccE_k(P, T) \cup (r_j+\OccE_k(P, R_j))$.
\end{itemize}

\paragraph*{Verifying the Conditions of \DPM}

Before we discuss the number of required calls to {\tt DPM-Substitute}, we have to
convince ourselves that \(\I\) fulfills the constraints of \DPM on the maintained DPM-sequence.
In particular, we need to show that the following conditions are satisfied at all times.
\begin{enumerate}[(a)]
    \item $U_1,V_{1} \in \Sb$ and $U_z, V_{z}  \in \Sf$,
        and, for all $i\in \fragmentoo{1}{z}$, $U_i,V_{j} \in \Sm$\label{it:orig-r}
    \item \(\tor(\I) = \sum_{i=1}^z \big||U_i|-|V_i|\big| \le \Delta/2 - k\).\label{it:unique-r}
\end{enumerate}

Observe that before any {\tt DPM-Query}, the DPM-sequence \(\I\) is equal to the DPM-sequence
\(\I_j\).
Now, while Condition~\eqref{it:orig-r} is satisfied by construction, we need to work slightly
harder for Condition~\eqref{it:unique-r}.
First, if we have \(\I = \I_j\) for some \(j \in J\), then \cref{fact:lengths_diff}
yields\[
    \tor(\I) = \tor(\I_j) =
    \sum_{i=1}^z \big||P_i|-|T_{j,i}|\big| \le 3\kappa - k = \Delta/2 - k.
\]
Next, consider the sequence of updates \(\mathcal{U}\) between two sequences \(\I_{j}\)
and \(\I_{j + 1}\).
Observe that by construction, we can split \(\mathcal{U}\) into two
disjoint parts: first the updates \(\mathcal{U}_{\le}\) that do not increase
the \torn \(\tor(\I_j)\), followed by the updates \(\mathcal{U}_{>}\) that do.
In particular, we can apply the following useful fact, which directly yields the desired
properties.

\begin{fact}\label{fact:interm}
    Let \(C\) denote a non-negative integer and
    let \(\I\) and \(\I'\) denote sequences of length \(z\) that satisfy
    \(\tor(\I) \le C\) and \(\tor(\I') \le C\).
    Consider a sequence \(\mathcal{U}\) of updates that
    \begin{itemize}
        \item transforms \(\I\) into \(\I'\), and
        \item is ordered such that all updates that do not increase
            the \torn of the DPM-sequence precede all updates that increase the \torn of the
            DPM-sequence.
    \end{itemize}
    Then, any DPM-sequence \(\I''\) obtained from \(\I\) by applying a prefix of \(\mathcal{U}\)
    satisfies \(\tor(\I'') \le C\).\lipicsEnd
\end{fact}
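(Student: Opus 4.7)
The plan is to split $\mathcal{U}$ according to the promised ordering and then bound the torsion on each part separately, exploiting monotonicity. Write $\mathcal{U}=\mathcal{U}_{\le}\cdot \mathcal{U}_{>}$, where $\mathcal{U}_{\le}$ is the (possibly empty) prefix of updates that do not increase $\tor(\,\cdot\,)$ and $\mathcal{U}_{>}$ is the (possibly empty) suffix of updates that do increase $\tor(\,\cdot\,)$. Let $\I_{\le}$ denote the DPM-sequence obtained from $\I$ by applying all of $\mathcal{U}_{\le}$, so that applying $\mathcal{U}_{>}$ to $\I_{\le}$ produces $\I'$.

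First I would handle the case where the prefix of $\mathcal{U}$ in question is contained in $\mathcal{U}_{\le}$. Here I simply induct on the number of updates applied: the starting sequence $\I$ satisfies $\tor(\I)\le C$ by hypothesis, and, by definition of $\mathcal{U}_{\le}$, every update in this part of $\mathcal{U}$ leaves the torsion unchanged or decreases it. Hence every intermediate sequence $\I''$ obtained in this way satisfies $\tor(\I'')\le \tor(\I)\le C$. In particular, this also yields $\tor(\I_{\le})\le C$.

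Next I would handle the case where the prefix extends strictly into $\mathcal{U}_{>}$, so that $\I''$ is obtained from $\I_{\le}$ by applying some prefix $\mathcal{U}'_{>}$ of $\mathcal{U}_{>}$. By definition of $\mathcal{U}_{>}$, each update in $\mathcal{U}_{>}$ increases the torsion, so the torsion along the chain of sequences produced by successively applying the updates of $\mathcal{U}_{>}$ to $\I_{\le}$ is monotonically non-decreasing. Therefore $\tor(\I'') \le \tor(\I')\le C$, as the full application of $\mathcal{U}_{>}$ yields $\I'$. Combining the two cases covers every possible prefix of $\mathcal{U}$, completing the argument.

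There is no real obstacle here: the statement is a straightforward monotonicity observation once one notes that the promised ordering of $\mathcal{U}$ lets one bound the intermediate torsion by either $\tor(\I)$ or $\tor(\I')$, both of which are $\le C$. The only mild subtlety is to be careful with empty prefixes/suffixes of the decomposition of $\mathcal{U}$ (e.g., when $\mathcal{U}_{\le}$ or $\mathcal{U}_{>}$ is empty), but those cases are handled uniformly by the two cases above.
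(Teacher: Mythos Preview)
Your proposal is correct and matches the paper's own reasoning essentially verbatim: the paper states this fact without proof (marking it with a QED symbol), but the surrounding discussion and a commented-out proof sketch split $\mathcal{U}$ into exactly your $\mathcal{U}_{\le}$ and $\mathcal{U}_{>}$ and bound the intermediate torsion by $\tor(\I)$ in the first phase and by $\tor(\I')$ in the second.
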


\paragraph*{Bounding the Number of Calls to {\tt DPM-Substitute}}

We return to bounding the number of calls to {\tt DPM-*} operations.
In particular, our next goal is to upper-bound the number of pairs that our algorithm substitutes
to transform $\I_{j-1}$ to $\I_j$ for some $j\in \fragmentoc{\min J}{\max J}$.

To that end, observe that we might need to substitute the head or the tail of the DPM-sequence.
Further, for each $i \in \fragmentoo{1}{z}$, the $i$-th pair is substituted only if at least one
of $T_{{j-1},i}$ and $T_{j,i}$ differs from $Q^\infty\fragmentco{0}{\tau+\Delta}$; that
is, if either \(T_{{j-1},i}\) or \(T_{j,i}\) is special.
Now, recall from \cref{lem:specialbound-simpl} that the set \(\sp{T}\cup\sp{P}\) is of
size \(\Oh(d)\)---thus, for each $j \in J$, we perform $\cO(d)$ substitutions.

In total, we hence perform $\Oh( \kappa |J|) = \cO(\kappa \cdot m/\tau) = \Oh(d \cdot
m/\max\{q, d\})$ calls to {\tt DPM-Substitute}; recall that earlier we already
bounded the number of calls to {\tt DPM-Update} by \(\Oh(|J|) = \Oh(m /\max\{q, d\})\).

\paragraph*{Analyzing the Preprocessing Time}

Now for bounding the preprocessing time, we need to argue about the time required for constructing the
initial DPM-sequence \(\I_{\min J}\), as well as the time required to compute the substitutions that are to be performed.

For computing $\I_{\min J}$ we turn to \cref{fct:rj,fct:tile}, which yield
the sequences $(r_j)_{j\in J}$, $(r'_j)_{j\in J}$, $(p_i)_{i=0}^{\beta_P}$,
and $(t_i)_{i=0}^{\beta_T}$, from which we can easily obtain \(\I_{\min J}\)
in time \(\Oh(z + \kappa) = \cO(m/\max\{q,d\} + d)\).

For computing which pieces to update, \cref{lem:specialbound-simpl} yields the special
pieces of \(P\) and \(T\) in a suitable representation in time \(\Oh(d)\)
(in the \modelname model).

To summarize our result, let us recall \cref{lem:red-warmup}, which we have just proved.

\redwarmup*

Observe that combining \cref{thm:dpm,lem:red-warmup}, we obtain an algorithm for \SM
that requires
$\cOtilde(d^3+\Delta^2 d)=\cOtilde(d^3)$ time for preprocessing,
$\cOtilde(\Delta z)=\cOtilde(d \cdot m/\max\{q, d\})$ time for initialization,
and $\cOtilde(\Delta \kappa \cdot m/\tau)=\cO(d^2 \cdot m/\max\{q,d\})$
time for processing updates and queries.
This yields an overall running time of $\cOtilde(d^3+d^2 \cdot m/\max\{q,d\})$,
which is by far too much.
In particular, in the remainder of this section (and, by extension, this part), we aim to
replace the factor \(m/\max\{q, d\}\) with a \emph{small} power of \(d\).

\subsection{Solving {\SM} via {\DPM}, Improvement 0: Replacing Pair Substitutions with
Pair Insertions and Pair Deletions}
\label{sec:warmup2}

In our quest to replace the factor $\Oh(m/\tau) = \Oh(m/\max\{q,d\})$ in
\cref{lem:red-warmup}  by a small power of $d$, we take a small detour to give an
alternative algorithm for obtaining \(\I_{j}\) from \(\I_{j-1}\).
In particular, observe that up until now, we have not used the {\tt DPM-Delete} and {\tt
DPM-Insert} operations, as we could make do with just calling {\tt DPM-Substitute}.
While we do not decrease the overall number of calls to {\tt DPM-*},
in this (sub-)section, we decrease the number of such calls that involve special pieces to
\(\Oh(\kappa^2) = \Oh(d^2)\).
Formally, we call an internal pair of pieces in \(\I\) \emph{canonical} if it does not contain a special
piece.

\begin{definition}\label{def:plain}
    For an instance  {\(\SM(P, T, k, d, Q, \A_P, \A_T)\)}, write \(\Delta \coloneqq 6(k +
    d_P + d_T)\) and \(\tau \coloneqq q\ceil{\kappa/2q}\).
    We call a pair of pieces \emph{canonical} if it equals \[
        \mathcal{Q}\coloneqq (\Q\fragmentco0{\tau+\Delta},\Q\fragmentco0{\tau+\Delta}).
        \tag*{\lipicsEnd}
    \]
\end{definition}

Now, we obtain the following variant of
\cref{lem:red-warmup}.

\begin{lemma}\label{lem:red-warmup2}
    Given an instance {\(\SM(P, T, k, d, Q, \A_P, \A_T)\)},
    after a preprocessing step that takes \(\Oh(d + m/\max\{q, d\} )\) time
    in the \modelname model, we can compute \(\OccE_k(P, T)\) using
    a \DPM\!\!\((k, \Delta, \Sb, \Sm, \Sf)\) data structure,
    with \(\Delta + \ed(\Sb) + \ed(\Sm) + \ed(\Sf) = \Oh( d )\),
    that maintains a DPM-sequence of length \(z = \Oh( m/\max\{q, d\} )\)
    under
    \begin{itemize}
        \item \(\Oh(d \cdot m/\max\{q, d\})\) calls to {\tt DPM-Delete} and {\tt DPM-Insert} that delete and insert, respectively, a canonical pair,
        \item \(\Oh(d^2)\) calls to {\tt DPM-Substitute}, and
        \item \(\Oh(m/\max\{q, d\})\) calls to {\tt DPM-Query}.
    \end{itemize}
\end{lemma}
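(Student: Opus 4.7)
The plan is to refine the warmup algorithm of \cref{lem:red-warmup} by replacing most of the substitutions with pairs of canonical insertions and deletions, reserving actual substitutions for the "costly" interactions where a special $P$-piece and a special $T$-piece genuinely interact, and for the (infrequent) changes of the leading and trailing pairs.

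The key observation is that when transitioning from $\I_{j-1}$ to $\I_j$, the internal $T$-pieces effectively shift by one: position $i$ changes from $(P_i, T_{j-1+i})$ to $(P_i, T_{j+i})$. When both $T_{j-1+i}$ and $T_{j+i}$ are canonical, the corresponding pair equals $\mathcal{Q}$ in both sequences, so no update is required. Moreover, since all canonical pairs coincide as strings (all equal $\mathcal{Q}$), shifting a contiguous group of special $T$-pieces by one position to the left can be realized by deleting a single canonical pair just to the left of the group and inserting a single canonical pair just to the right, \emph{provided} that the $P$-pieces at all positions affected are canonical (so that the $P$-components in the shifted pairs are interchangeable).

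Concretely, at each transition I would group the special $T$-pieces currently in the sequence into maximal runs of consecutive special-$T$ positions, of which there are at most $|\sp{T}|=\Oh(d)$. For each such "block" whose surrounding $P$-pieces are all canonical, two canonical operations (one delete, one insert) suffice to shift the entire block left by one. Over $|J|=\Oh(m/\max\{q,d\})$ transitions this yields $\Oh(d\cdot m/\max\{q,d\})$ canonical insertions and deletions. Whenever a special $P$-piece $P_p$ lies in or next to a shifting block, the pair at position $p$ depends genuinely on $P_p$ and cannot be moved by canonical operations; I would instead perform $\Oh(1)$ {\tt DPM-Substitute} calls at the affected positions. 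Since each pair $(p,u)\in\sp{P}\times\sp{T}$ contributes only $\Oh(1)$ such interactions (as $T_u$ occupies each position at most once throughout its lifetime), these account for $\Oh(|\sp{P}|\cdot|\sp{T}|)=\Oh(d^2)$ substitutions in total. The leading and trailing pairs change only at the $\Oh(d)$ transitions where a piece from $\spb{T}$ enters or leaves, contributing $\Oh(d)$ further substitutions.

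The main obstacle I expect is ensuring that condition~(\ref{it:unique-r}) of \DPM, namely $\tor(\I)\le \Delta/2-k$, is maintained after every \emph{single} update, not just between complete transitions. Following the strategy recorded in \cref{fact:interm}, I would order the operations within each transition so that those not increasing \torn precede those that do; since $\tor(\I_{j-1}),\tor(\I_j)\le 3\kappa-k=\Delta/2-k$ by \cref{fact:lengths_diff}, every intermediate DPM-sequence then satisfies the constraint as well. The preprocessing is essentially unchanged from the warmup: by \cref{lem:specialbound-simpl} together with \cref{fct:tile} and \cref{fct:rj}, we can compute all special pieces, the initial DPM-sequence $\I_{\min J}$, and, via a linear sweep over the special pieces of $P$ and $T$, the ordered plan of operations for each transition, all within $\Oh(d+m/\max\{q,d\})$ time in the \modelname model.
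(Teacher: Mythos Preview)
Your proposal is correct and follows essentially the same approach as the paper. Both arguments reserve {\tt DPM-Substitute} for positions where a special $P$-piece and a special $T$-piece coincide (yielding the $\Oh(d^2)$ bound via the pairing argument over $\sp{P}\times\sp{T}$), handle all other internal changes by deleting and inserting canonical pairs, and bound head/tail substitutions by $|\spb{T}|=\Oh(d)$; the only cosmetic difference is that you phrase the canonical operations in terms of shifting maximal blocks of special $T$-pieces, whereas the paper treats each changed position individually, but both yield the same $\Oh(d\cdot m/\max\{q,d\})$ count and both invoke \cref{fact:interm} for the torsion invariant.
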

\begin{proof}
    In general, we follow the algorithm from \cref{lem:red-warmup}; however,
    we use a different method for transforming \(\I_{j-1}\) to \(\I_j\) (for \(j \in
    \fragmentoc{\min J}{\max J}\)).

    First, observe that, by \cref{lem:puzzlebeta,lem:puzzleeps}, we need to
    perform \(\Oh(\kappa)=\cO(d)\) calls to {\tt DPM-Substitute} for the head and the tail
    of the maintained DPM-sequence \(\I\).
    We can thus focus on updates involving pairs of internal pieces.

    Now, as an illustrative example, consider an internal pair
    \(\mathcal{P}_i^{j-1} \coloneqq (P_i, T_{j-1,i})\),
    and suppose that only \(T_{j-1,i}\) is special (that is, different from
    \(\Q\fragmentco{0}{\Delta  + \tau}\)). Further, suppose that the pairs
    \(\mathcal{P}_{i-1}^{j-1} \coloneqq (P_{i-1}, T_{j-1,i-1})\) and
    \(\mathcal{P}_{i+1}^{j-1}
    \coloneqq (P_{i + 1}, T_{j-1,i+1})\) are both canonical.
    Now, observe that in \(\I_j\), all pieces of \(P\) get aligned to a piece of \(T\)
    that is one piece to the right compared to \(\I_{j-1}\), that is, we need to construct
    the pairs
    \begin{align*}
        \mathcal{P}_{i-1}^{j} \coloneqq
        (P_{i-1}, T_{j,i-1}) =
        (P_{i-1}, T_{j-1,i}) = \mathcal{P}_{i}^{j-1}
        \quad\text{and}\quad
        \mathcal{P}_{i}^{j} \coloneqq  (P_{i}, T_{j,i}) =
        (P_{i}, T_{j-1,i + 1}) = \mathcal{Q}.
    \end{align*}
    Now, instead of calling {\tt DPM-Substitute} to directly create
    \(\mathcal{P}_{i-1}^{j}\) and \(\mathcal{P}_{i}^{j}\) (as we did in
    \cref{lem:red-warmup}), we {\tt DPM-Delete} the canonical pair \(\mathcal{P}_{i-1}^{j-1}\) and
    {\tt DPM-Insert} it after (the unchanged) \(\mathcal{P}_{i-1}^{j} =
    \mathcal{P}_{i}^{j-1}\).
    Observe that by doing so, we replaced the previous calls to {\tt DPM-Substitute} with
    an equal number of calls to {\tt DPM-Delete} and {\tt DPM-Insert} involving copies of
    \(\mathcal{Q}\).

    Generalizing the above example,
    unless both pieces \(P_i\) and \(T_{j-1,i}\) (or both
    \(P_i\) and \(T_{j,i}\)) are special (and if \(i\in\fragmentoo{2}{z-1}\)),
    we can replace the calls to {\tt DPM-Substitute}
    with an equal number of calls to {\tt DPM-Delete} and {\tt DPM-Insert} that delete and insert a copy of \(\mathcal{Q}\), respectively.
    Now, by \cref{lem:specialbound-simpl}, there are at most \(\Oh(d)\)
    special pieces in \(P\) and \(T\) each---hence, over all \(\I_j, j\in J\) in total,
    only at most \(\Oh(d^2)\)
    pairs contain two special pieces.
    Replacing all other calls to {\tt DPM-Substitute}, we obtain the claimed result.
\end{proof}

\begin{remark}\label{rem:upd-lb}
    Observe that in the worst case,
    no algorithm can iterate over all $\I_j$ for $j \in J$ with
    $o(\kappa \cdot m/\tau)$ updates.
    Consider the example, where for some constant~$c$,
    every $\lfloor c z /\kappa \rfloor$-th piece
    within each of the sequences $P_2,\ldots, P_{z-1}$ and
    $T_{\min J+2}, \ldots, T_{\max J+z-1}$
    is special and no two special pieces are equal.
    We can easily convince ourselves that for each pair $j_1,j_2 \in J$,
    we have $\ed(\I_{j_1},\I_{j_2})=\Omega(\kappa)$
    and hence $\Omega(\kappa \cdot m/\tau)$ updates are required for exactly computing
    each \(\I_j\).
    \lipicsEnd
\end{remark}

\begin{remark}\label{rem:trim-proto-idea}
    Observe that for each $j\in J$, the DPM-sequence $\I_j$ without its head and tail
    consists of $\cO(\kappa)$ pairs that contain at least one special piece and
    $\cO(\kappa)$ runs of canonical pairs.
    Then, roughly speaking the following claims hold (which we prove---in a more general
    setting---in~\cref{sec:druns}):
    each run of canonical pairs in $\I_{j-1}$ remains intact in $\I_j$ up to the potential insertion
    or deletion of a canonical pair.
    In addition, we can decide whether we need any of these two potential updates
    by inspecting the pairs adjacent to the run in scope.

    Thus, in the transformation of $\I_{j-1}$ to $\I_j$ we have to
    insert or delete $\cO(\kappa)$ canonical pairs.
    The crucial observation in \cref{sec:druns} now is that for very long runs of
    \(\mathcal{Q}\), inserting or deleting a single copy of \(\mathcal{Q}\) \emph{does not
    matter}, that is, the resulting \(k\)-error occurrences are essentially the same.
    This allows us to skip many of the calls to {\tt DPM-Insert}
    and {\tt DPM-Delete}, decreasing their number to a power~of~\(\Oh(d)\).
    Observe that this allows us to circumvent \cref{rem:upd-lb}, as we compute the
    sequences \(\I_j\) only approximately.
    \lipicsEnd
\end{remark}

\subsection{Solving {\SM} via {\DPM}, Improvement 1: Trimming Long Perfectly Periodic
Segments}
\label{sec:druns}

In this (sub-)section, we extend and formalize the idea that \cref{rem:trim-proto-idea}
hinted at. In particular,
we exploit that insertions and deletions of canonical pairs \(\mathcal{Q}\)
that extend or shrink a ``long'' run of canonical pairs
do not alter the answer to a \DPM query.
Based on this, we iterate over ``compressed'' versions of the sequences in scope,
while still being able to compute all sought approximate occurrences.
To simplify our exposition, let us define a \(\Comp\) operator for DPM-sequences.

\begin{definition}
    Consider a DPM-sequence \(\I\),
    where the elements of some subset \(\rpl(\I)\) of the internal canonical DPM-pairs are labeled as \emph{plain}.
    (We call all other pieces \emph{non-plain}.)

    For a positive integer \(\comp\), we write \(\Comps(\I, \rpl(\I), \comp)\) for
    the DPM-sequence obtained from \(\I\) by removing exactly one plain DPM-pair from any one
    contiguous subsequence of plain DPM-pairs of length at least \(\comp + 1\);
    if \(\I\) does not contain any such subsequence, we set \(\Comps(\I, \rpl(\I),
    \comp) \coloneqq \I\).
    (That is, \(\Comps(\I, \rpl(\I), \comp)\) is a DPM-sequence of length at least \(z - 1\).)

    Further, we write \(\Comp(\I, \rpl(\I), \comp) \coloneqq \Comps^{\star}(\I,
    \rpl(\I), \comp)\) for an iterated application of \(\Comps\) until the DPM-sequence
    remains unchanged. (That is, in \(\Comp(\I, \rpl(\I), \comp)\) every contiguous
    subsequence of plain DPM-pairs is of length at most \(\comp\).)
    \lipicsEnd
\end{definition}
\begin{remark}
    For technical reasons, we may not want to include all canonical pairs
    \(\mathcal{Q}\) in the set \(\rpl(\I)\). In particular, write \(\rred(P) \supseteq \sp{P}\) for
    a set of \emph{red} pieces of \(P\) and write \(\rred(T) \supseteq \sp{T}\) for a set
    of red pieces of \(T\). Now, for a \(j \in J\), we set \[
        \rpl(\I_j) \coloneqq \{ (P_i, T_{j,i}) \mid i \in \fragmentoo{1}{z} \text{ and }
            P_i \notin \rred(P) \text{ and } T_{j,i} \notin \rred(T).
        \] Abusing notation, we write \(\Comp(\I_j, \rred(P), \rred(T), \comp) \coloneqq
    \Comp(\I_j, \rpl(\I_j), \comp)\).
    \lipicsEnd
\end{remark}
\begin{remark}\label{rem:trimmed-length}
    It is easy to see that for every \(j \in J\),  we have \(|\Comp(\I_j, \rred(P),
    \rred(T), \comp)| = \Oh(\comp |\rred(P)|\, |\rred{T}|)\):  in the DPM-sequence \(\Comp(\I_j, \rred(P), \rred(T),
    \comp)\), the set \(\rpl(\I_j)\)
    contains at most \(\Oh(|\rred(P)| |\rred(T)|)\) sequences of contiguous pairs, each of
    length of at most \(\comp\). In particular, we have \(|\Comp(\I_j, \sp{P}, \sp{T}, \comp)| = \Oh( d^2 \comp)\) by \cref{lem:specialbound-simpl}.
    \lipicsEnd
\end{remark}

As a key result of this (sub-)section, we proceed to
show that we can trim the DPM-sequences \(\I_j\)
with \(\alpha = k + 1\) and still recover all \(k\)-error occurrences of \(P\) in \(R_j\).

\begin{restatable}{lemma}{druns}\label{fact:druns}
    For any \(j \in J\), we have
    \[
        \OccE_k(P, R_j) = \OccE_k(\I_j)
        = \OccE_k( \Comp(\I_j, \sp{P}, \sp{T}, k + 1) ).
        \tag*{\lipicsEnd}
    \]
\end{restatable}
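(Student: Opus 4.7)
The first equality $\OccE_k(P,R_j) = \OccE_k(\I_j)$ is immediate from the definition of $\OccE_k(\I_j)$ together with Lemmas~\ref{fct:puzzleP} and~\ref{fct:puzzleT}, which yield $\val_{\Delta}(P_1,\ldots,P_z) = P$ and $\val_{\Delta}(T_{j,1},\ldots,T_{j,z}) = R_j$.

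For the second equality I induct on the number of applications of $\Comps$, so it suffices to prove that deleting one plain pair from any contiguous run of plain pairs of length at least $k+2$ preserves $\OccE_k$. Fix such a sequence $\I$ and let $\I'$ be the result of a single such deletion. Since every plain pair equals $(Q^\infty\fragmentco{0}{\tau+\Delta},Q^\infty\fragmentco{0}{\tau+\Delta})$ and $\tau$ is a multiple of $q$, the entire run contributes a single $Q$-power of length at least $(k+2)\tau+\Delta$ to each coordinate of $\val_{\Delta}(\I)$, occurring at synchronized positions in $P^{\I} \coloneqq \val_{\Delta}(\I[1])$ and $R^{\I} \coloneqq \val_{\Delta}(\I[2])$. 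Passing to $\I'$ excises exactly one $\tau$-length factor (that is, $\sigma \coloneqq \tau/q$ copies of $Q$) from each of these $Q$-powers.

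For the inclusion $\OccE_k(\I')\subseteq\OccE_k(\I)$ I take an alignment $\A'\colon P^{\I'}\onto R^{\I'}\fragmentco{v}{w}$ of cost at most $k$ and inflate it by inserting $\sigma$ copies of $Q$ into the $Q$-power regions of both strings, matching the inserted copies letter by letter; the result witnesses $v\in\OccE_k(\I)$. The converse $\OccE_k(\I)\subseteq\OccE_k(\I')$ is the crux: given a cost-$\le k$ alignment $\A\colon P^{\I}\onto R^{\I}\fragmentco{v}{w}$, the $Q$-power region in $P^{\I}$ contains at least $k+1$ complete copies of $Q$, so pigeonhole guarantees that $\A$ matches at least one copy $P^{\I}\fragmentco{p}{p+q}$ exactly to some length-$q$ substring of $R^{\I}$; primitivity of $Q$ forces this image to be an un-rotated copy of $Q$, and the length bound places it strictly inside $R^{\I}$'s $Q$-power region. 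Using this matched copy as a fixed point, I contract the alignment by removing $\sigma$ copies of $Q$ from both strings on one side of the fixed point, yielding an alignment from $P^{\I'}$ to a substring of $R^{\I'}$ of cost at most $k$ starting at position $v$.

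The main obstacle is the bookkeeping on the starting position $v$: this is transparent when $v$ lies before the excision point in $R^{\I}$, but when $v$ lies inside or past the $Q$-power region one must invoke the $q$-periodicity of the region, together with the fact that the excised factor is a full integer multiple of $Q$, to conclude that the same integer $v$ witnesses an occurrence in both $R^{\I}$ and $R^{\I'}$.
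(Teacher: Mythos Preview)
Your high-level structure matches the paper's: reduce to a single application of $\Comps$ and prove both inclusions. But there is a genuine gap, and it is precisely the ingredient the paper's proof turns on.

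You assert that the run of plain pairs contributes $Q$-powers ``occurring at synchronized positions in $P^{\I}$ and $R^{\I}$''. This is false: the starting positions of the two $Q$-power regions differ by $\sum_{\ell<\text{run}}(|V_\ell|-|U_\ell|)$, which can be as large as the torsion $\tor(\I)$. Both of your inclusion arguments silently rely on this synchronization and break without it. In the inflate direction, ``insert $\sigma$ copies of $Q$ into the $Q$-power regions of both strings, matching the inserted copies letter by letter'' requires that the alignment $\A'$ actually carry some point of $P^{\I'}$'s $Q$-power region into $R^{\I'}$'s $Q$-power region \emph{at a compatible residue modulo $q$}; you give no reason for this. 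In the contract direction, pigeonhole yields an exactly matched copy of $Q$, but your appeal to ``the length bound'' does not establish that its image lies inside $R^{\I}$'s $Q$-power region; the image can be displaced by up to $k$ from the alignment and by up to $\tor(\I)$ from the region offset, so the bound you need is $k+\tor(\I)\le\Delta/2$, which is exactly the torsion condition.

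The paper makes this explicit: it proves the general \cref{lem:redundantk} under the hypothesis $\tor(\I)\le\Delta/2-k$, verifies that hypothesis for $\I_j$ via \cref{fact:lengths_diff}, and then deduces \cref{fact:druns}. The core step is Claim~\ref{clm:cutout2}, which uses the torsion bound to guarantee that $\A(\hat U_\ell)\subseteq V_\ell$ for every plain pair $\ell$. With that in hand, the contract direction needs neither pigeonhole nor primitivity of $Q$ (it works at tile granularity and for \emph{any} $\alpha\ge1$); the inflate direction is where the paper spends the pigeonhole, on the $k{+}1$ remaining tiles rather than on $Q$-copies. Your argument is salvageable, but only after you import the torsion bound and use it to replace the ``synchronized positions'' and ``length bound'' claims.
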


Now, we can concisely state the main algorithmic result of this (sub-)section.

\begin{lemma}\label{lem:swaps-proto}
    Given an instance {\(\SM(P, T, k, d, Q, \A_P, \A_T)\)},
    after a preprocessing step that takes \(\Oh( d^3 \comp \log n )\) time
    in the \modelname model
    model, we can compute \[
        \mathcal{G} \coloneqq \bigcup_{j \in J}
        (r_j+\OccE_k(\Comp(\I_j, \sp{P}, \sp{T}, k + 1)))
    \] as \(\Oh(d^4)\) arithmetic
    progressions with difference \(\tau = \Oh(\max\{q,d\})\), using
    a \DPM\!\!\((k, \Delta, \Sb, \Sm, \Sf)\) data structure,
    with \(\Delta + \ed(\Sb) + \ed(\Sm) + \ed(\Sf) = \Oh( d )\),
    that maintains a DPM-sequence of length \(z = \Oh( d \comp )\) under
    \begin{itemize}
        \item \(\Oh(d^3)\) calls to {\tt DPM-Delete} and {\tt DPM-Insert} that delete and insert, respectively, a canonical pair,
        \item \(\Oh(d^2)\) calls to {\tt DPM-Substitute}, and
        \item \(\Oh(d^2) \) calls to {\tt DPM-Query}.
    \end{itemize}
    \lipicsEnd
\end{lemma}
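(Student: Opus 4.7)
The plan is to build on the algorithm of \cref{lem:red-warmup2} but now maintain the trimmed sequence $\Comp(\I_j, \sp{P}, \sp{T}, k+1)$ in the \DPM data structure, rather than $\I_j$ itself. Correctness of this trimming is exactly the content of \cref{fact:druns}: each \texttt{DPM-Query} returns precisely $\OccE_k(\Comp(\I_j,\sp P,\sp T,k+1)) = \OccE_k(P, R_j)$, which after the shift by $r_j$ contributes to the target $\mathcal{G}$. The maintained sequence has length $z = \Oh(d\alpha)$: at most $|\sp P|+|\spb T| = \Oh(d)$ non-plain pairs separate at most $\Oh(d)$ plain runs, each capped at length $\alpha = k+1 = \Oh(d)$. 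Both $\Delta = 6\kappa$ and $\ed(\Sb)+\ed(\Sm)+\ed(\Sf)$ are $\Oh(d)$ by \cref{lem:specialbound-simpl}, so the parameter constraints of \cref{thm:dpm} are met.

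In the preprocessing phase, I use \cref{fct:tile}, \cref{fct:rj}, and \cref{lem:specialbound-simpl} to obtain the tile partitions of $P$ and $T$, the sequences $(r_j)_{j\in J}$ and $(r'_j)_{j\in J}$, and the multisets $\sp{P}$ and $\spb{T}$. From these I assemble the initial compressed DPM-sequence $\Comp(\I_{\min J},\sp P,\sp T,k+1)$ explicitly in $\Oh(d\alpha)$ time and then initialize the \DPM structure of \cref{thm:dpm}. The preprocessing cost of that structure is $\tilde{\Oh}(d^3 + \Delta^2 d) = \tilde{\Oh}(d^3)$, and its initialization on the sequence of length $\Oh(d\alpha)$ costs $\Oh(\Delta z \log\Delta) = \tilde{\Oh}(d^2\alpha)$; with the \modelname-model handles indexed over a string of length $n$, this fits within the stated $\Oh(d^3\alpha\log n)$ budget.

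I then iterate $j$ through $J$ and transform the maintained DPM-sequence from $\Comp(\I_{j-1})$ to $\Comp(\I_j)$. As in \cref{lem:red-warmup2}, every change involving an internal pair with at most one special piece is expressed as an insertion or deletion of the canonical pair $\mathcal{Q} = (\Q\fragmentco{0}{\tau+\Delta},\Q\fragmentco{0}{\tau+\Delta})$. A genuine \texttt{DPM-Substitute} is required only for pairs whose two coordinates are both special; since every element of $\sp P\times\sp T$ participates in such a DPM-pair for at most one value of $j$, we bound substitutions by $\Oh(|\sp P|\cdot|\spb T|) = \Oh(d^2)$, with an additional $\Oh(d)$ substitutions for the leading and trailing pieces in $\Sb$ and $\Sf$. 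For the insertion/deletion count I follow the charging scheme sketched in the excerpt: every plain run in $\I_j$ is bounded by a pair of consecutive non-plain pairs, and while that pair of bounding non-plain pairs remains fixed, the uncompressed run length varies monotonically in $j$. After trimming to $\alpha = k+1$, only $\Oh(\alpha) = \Oh(d)$ distinct compressed values of that run ever appear, and since there are $\Oh(d^2)$ such bounding-pair configurations across the iteration, the total number of canonical-pair insertions and deletions is $\Oh(d^3)$.

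Queries are called only when the compressed sequence changes in a way not reducible to a uniform shift by $\tau$; \cref{fact:druns} implies that a canonical-pair update confined to (and leaving) a plain run of length $\ge k+1$ does not alter $\OccE_k(\Comp(\I_j))$ beyond such a shift, so the only events forcing a new query are the $\Oh(d^2)$ order swaps between red pieces of $\sp{P}$ and $\sp T$ in the DPM-sequence (plus the $\Oh(d)$ boundary events from $\Sb,\Sf$), giving $\Oh(d^2)$ calls to \texttt{DPM-Query}. Between consecutive queries I extend each arithmetic progression of the previous output by $\tau$ for every $j$-increment. Each query yields $\OccE_k(\Comp(\I_j))$ as $\Oh(d^2)$ arithmetic progressions—matching the output-size bound for a DPM-sequence of length $\Oh(d^2)$ with torsion $\Oh(d)$—so the aggregated output $\mathcal{G}$ has the claimed $\Oh(d^4)$ arithmetic progressions with common difference $\tau$. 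The main obstacle will be formalizing the query amortization: one must verify that the canonical-pair updates performed between two consecutive queries genuinely preserve $\OccE_k(\Comp(\I_j))$ up to a $\tau$-shift, and that this shift is consistent with the extension of the progressions produced by the most recent query, so that $\mathcal{G}$ is assembled correctly without the need to query at every individual $j \in J$.
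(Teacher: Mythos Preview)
Your overall approach is correct and matches the paper: \cref{lem:swaps-proto} is obtained from the more general \cref{lem:swaps} (\texttt{TrimmedPM}) with $\rred(X)=\sp{X}$ and $\alpha=k+1$, and your $\Oh(d^2)$ bound on substitutions together with the charging argument for $\Oh(d^2\alpha)=\Oh(d^3)$ canonical insertions/deletions are exactly what the paper uses.

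The gap is in your query bound. You invoke \cref{fact:druns} to argue that canonical-pair updates ``confined to (and leaving) a plain run of length $\ge k+1$'' do not change $\OccE_k$, but in the \emph{trimmed} sequence every plain run already has length at most $k+1$, so no update of the kind you describe ever occurs: every actual canonical insertion or deletion takes a run from some length $\ell\le k+1$ to $\ell\pm1$. \Cref{lem:redundantk} only guarantees equality when removing from a run of length at least $k+2$; for $\ell\le k+1$ the occurrence set may genuinely change, forcing a fresh query. Since each of the $\Oh(d^2)$ pairs of bounding special pieces contributes $\Oh(\alpha)$ length transitions within $\fragment{0}{k+1}$, the number of $j$ with $\I'_{j}\ne\I'_{j-1}$, and hence the number of queries, is $\Oh(d^2\alpha)=\Oh(d^3)$, not $\Oh(d^2)$. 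This is precisely the bound that \cref{lem:swaps} proves; the paper gives no separate argument for an $\Oh(d^2)$ query bound in \cref{lem:swaps-proto}. Relatedly, a single \texttt{DPM-Query} returns at most $|V|-|U|+k+1\le 3\kappa+1=\Oh(d)$ positions, not $\Oh(d^2)$ arithmetic progressions; your two miscounts cancel, which is why you still landed on the correct $\Oh(d^4)$ output bound.
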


We proceed with a proof of \cref{fact:druns};
formally, we prove the following, slightly stronger statement.

\begin{lemma}\label{lem:redundantk}
    Fix a string \(\hat{Q}\) and integers \(k \geq 0\) and \(\Delta > 0\).
    Further, consider a DPM-sequence \(\I\) whose pieces form \(\Delta\)-puzzles and
    that has a \torn of \(\tor(\I) \le \Delta/2 - k\), and a set \(\rpl(\I)\) of
    DPM-pairs labeled as plain, where \(\rpl(\I) \subseteq \{ (U_i, V_i) \in \I \mid
        i\in\fragmentoo1z \text{ and } U_i =
    V_i = \hat{Q}\} \).

    For any \(\alpha \ge 1\), we have
    \(\OccE_k(\Comps(\I, \rpl(\I), \alpha)) \supseteq \OccE_k(\I)\) and
    \(\OccE_k(\Comps(\I, \rpl(\I), k + \alpha)) = \OccE_k(\I)\).
\end{lemma}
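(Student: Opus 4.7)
The plan is to prove the two set-theoretic claims separately, in both cases exploiting the puzzle constraint $\hat{Q}\fragmentco{|\hat{Q}|-\Delta}{|\hat{Q}|} = \hat{Q}\fragmentco{0}{\Delta}$, which forces $\hat{Q}$ to be periodic with period $q' \coloneqq |\hat{Q}| - \Delta$. Consequently, the fragments that a maximal plain run contributes to $U = \val_\Delta(U_1,\ldots,U_z)$ and to $V = \val_\Delta(V_1,\ldots,V_z)$ are both $q'$-periodic, so inserting or deleting a single copy of $\hat{Q}\fragmentco{\Delta}{|\hat{Q}|}$ at any position inside such a region yields a uniquely determined string; in particular, the result of $\Comps$ depends (up to the strings it computes) only on the run it shortens, not on which pair of that run is removed.

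For the forward inclusion $\OccE_k(\Comps(\I,\rpl(\I),\alpha)) \supseteq \OccE_k(\I)$, I would first establish the general fact that removing a common substring cannot increase edit distance: $\ed(A_1 A_2, B_1 B_2) \le \ed(A_1 S A_2, B_1 S B_2)$ for arbitrary strings $A_1,A_2,B_1,B_2,S$. This is proved by path-surgery in the alignment grid: take an optimal alignment of $A_1 S A_2$ to $B_1 S B_2$, modify it (without raising cost) so that it passes diagonally through the two copies of $S$, and then excise that zero-cost diagonal. Given $v \in \OccE_k(\I)$ witnessed by $\A : U \onto V\fragmentco{v}{w}$ of cost $\le k$, I apply this with $S = \hat{Q}\fragmentco{\Delta}{|\hat{Q}|}$ to decompositions of $U$ and $V\fragmentco{v}{w}$, using the freedom granted by $q'$-periodicity to choose the removal position in $V$ so that it lies inside $V\fragmentco{v}{w}$; this produces an alignment $\A' : U' \onto V'\fragmentco{v}{w'}$ of cost $\le k$ witnessing $v \in \OccE_k(\Comps(\I,\rpl(\I),\alpha))$.

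For the reverse inclusion $\OccE_k(\Comps(\I,\rpl(\I),k+\alpha)) \subseteq \OccE_k(\I)$, the compressed sequence still has at least $k+\alpha$ consecutive plain pairs in the trimmed run. Given a witness $\A' : U' \onto V'\fragmentco{v}{w'}$ of cost $\le k$, the at most $k$ errors of $\A'$ are distributed across the $\ge k+\alpha$ pieces of this run, so a pigeonhole argument yields at least $\alpha \ge 1$ plain pieces matched exactly (cost 0) by $\A'$. For any such piece, the matched substring of $V'$ equals $\hat{Q}\fragmentco{\Delta}{|\hat{Q}|}$; I extend $\A'$ by inserting an extra copy of $\hat{Q}\fragmentco{\Delta}{|\hat{Q}|}$ into both $U'$ and $V'$ at these matched positions, extending the diagonal at zero cost. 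By $q'$-periodicity, the resulting enlarged strings coincide with $U$ and $V$, so the extension witnesses $v \in \OccE_k(\I)$.

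The main obstacle will be the positional bookkeeping in both directions: one has to exploit the freedom granted by $q'$-periodicity (of the plain regions in $U$ and $V$ alike) to place the removed or inserted block at a position consistent with the claimed set membership of $v$, rather than at the specific position indicated by a given plain-pair index. The most delicate step is arguing, for direction 2, that the exact-match piece's image in $V'$ lies inside $V'$'s plain region (rather than at some unrelated occurrence of $\hat{Q}\fragmentco{\Delta}{|\hat{Q}|}$ in $V'$), so that the insertion reconstructs $V$ and not some other string. This will follow by combining monotonicity of $\A'$ with a Fine--Wilf-style argument: a long $q'$-periodic region aligned with few errors must track the $V'$-plain region on the other side, and the torsion bound $\tor(\I) \le \Delta/2 - k$ ensures there is enough slack in the endpoint $w'$ to accommodate the insertion.
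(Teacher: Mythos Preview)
Your forward-direction argument rests on a ``general fact'' that is false: it is \emph{not} true that $\ed(A_1 A_2,\, B_1 B_2) \le \ed(A_1 S A_2,\, B_1 S B_2)$ for arbitrary strings. Take $A_1 = \varepsilon$, $A_2 = \texttt{xy}$, $B_1 = \texttt{yx}$, $B_2 = \varepsilon$, $S = \texttt{y}$: then $\ed(A_1SA_2,\,B_1SB_2) = \ed(\texttt{yxy},\texttt{yxy}) = 0$ but $\ed(A_1A_2,\,B_1B_2) = \ed(\texttt{xy},\texttt{yx}) = 2$. Your path-surgery sketch fails exactly here: the optimal alignment of $\texttt{yxy}$ to itself is the identity, and no monotone rerouting can match the first copy of $S$ (at position~$0$) against the second copy (at position~$2$).

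The paper avoids this by \emph{not} choosing the removal position in $V$ independently of $\A$. It fixes a cut position $\hat u_i$ inside $U$'s plain region and cuts $V$ at $\hat v_i$, the start of $\A(U\fragmentco{\hat u_i}{\hat u_i+q'})$; restricting $\A$ then trivially gives an alignment of the shortened strings starting at the same $v$. The actual work is showing that $\hat v_i$ lands inside $V$'s plain region (so that, by $q'$-periodicity, excising $q'$ characters there really yields $V'$), and this is precisely where the torsion bound $\tor(\I)\le \Delta/2 - k$ is used: it forces $\A(\hat U_i)\subseteq V_i$. You invoke the torsion bound only for the reverse inclusion, but it is just as indispensable for the forward one. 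Your reverse-direction plan---pigeonhole on the $\ge k+\alpha$ surviving plain pieces to find one aligned at cost~$0$, then splice in a diagonal---matches the paper's argument; the ``Fine--Wilf'' step you anticipate is unnecessary, since the same torsion-based containment $\A'(\hat U_j)\subseteq V'_j$ (now applied to $\I'$, whose torsion equals $\tor(\I)$) directly places the exact-match image inside $V'$'s plain region.
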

\begin{proof}
    Let us assume that the \(\Comps\)-operation indeed removes a plain piece of \(\I\),
    otherwise there is nothing to prove. Further, let us fix an \(\alpha \ge 1\).
    Now, first, we convince ourselves that the \(\Comps\) operation produces valid
    \(\Delta\)-puzzles.

    \begin{claim}\label{clm:cutout}
        Given a \(\Delta\)-puzzle \(\mathcal{X} \coloneqq X_1,\dots,X_i, X_{i + 1},\dots,X_z\)
        with \(X_i = X_{i + 1}\) and value \(X \coloneqq \val_{\Delta}(\mathcal{X})\),
        the sequence \[
            \mathcal{X}' \coloneqq
            X_1,\dots,X_{i-1},X_{i+1},\dots,X_z
            = X_1,\dots,X_{i},X_{i+2},\dots,X_z
        \] forms a \(\Delta\)-puzzle with value \[
        \val_{\Delta}(\mathcal{X}') =
        X\fragmentco{0}{\xi_{i} + \floor{\Delta/2}} X\fragmentco{\xi_{i} + |X_{i}| -
        \ceil{\Delta/2}}{|X|} =
        X\fragmentco{0}{\xi_i + p} X\fragmentco{\xi_{i+1} + p}{|X|},
        \] where
        \(\xi_i \coloneqq \sum_{t=1}^{i-1} |X_t| - \Delta\) and \(p \in \fragment{0}{|X_{i}|
        - \Delta}\).
    \end{claim}
    \begin{claimproof}
        By definition, we have
        $X_i\fragmentco{|X_i|-\Delta}{|X_i|} = X_{i+1}\fragmentco{0}{\Delta}$ and
        $X_{i+1}\fragmentco{|X_{i+1}|-\Delta}{|X_{i+1}|} =
        X_{i+2}\fragmentco{0}{\Delta}$.
        As \(X_i = X_{i+1}\), we also have
        $X_i\fragmentco{|X_i|-\Delta}{|X_i|} = X_{i+2}\fragmentco{0}{\Delta}$,
        yielding the claim.
    \end{claimproof}

    Now, let us write \(\I = (U_1, V_1)(U_2, V_2)\cdots(U_z, V_z)\) and
    \(U \coloneqq \val_{\Delta}(U_1,\dots, U_z)\) and
    \(V \coloneqq \val_{\Delta}(V_1,\dots, V_z)\).
    Further, for each \(j \in \fragmentoo{1}{z}\), write $U\fragmentco{u_j}{u'_j} = U_j$ and
    $V\fragmentco{v_j}{v'_j} = V_j$;
    that is $u_j = \sum_{t=1}^{j-1}(|U_t|-\Delta)$ and
    $v_j = \sum_{t=1}^{j-1}(|V_t|-\Delta)$.
    With \cref{clm:cutout} in mind,
    we also write
    \(\hat{U}_j \coloneqq U\fragmentco{\hat{u}_j}{\hat{u}'_j}\), where
    \begin{align*}
        \hat{u}_j &\coloneqq u_j + \floor{\Delta/2}
        = \sum_{t=1}^{j-1}(|U_t|-\Delta) + \floor{\Delta/2} \qquad\text{and}\\
        \hat{u}'_j &\coloneqq \hat{u}_j + |U_j| - \Delta
        = \sum_{t=1}^{j-1}(|U_t|-\Delta) + |U_j|
        - \ceil{\Delta/2}
        = u'_j - \ceil{\Delta/2}.
    \end{align*}
    For convenience, we define \(\hat{u}'_1\) and \(\hat{u}_z\) analogously.
    Observe that we have \(\hat{u}'_j = \hat{u}_{j+1}\),
    that is, we obtain a partition of
    \(U = U\fragmentco{0}{\hat{u}'_1} \hat{U}_2 \cdots \hat{U}_{z-1}
    U\fragmentco{\hat{u}_z}{|U|}\).

    We proceed to show \(\OccE_k(\Comps(\I, \rpl(\I), \alpha)) \supseteq \OccE_k(\I)\).
    To that end, suppose that the \(\Comps\)-operation removes the DPM-pair \((U_i, V_i)\)
    from \(\I\). We may assume that \(U_i = U_{i+1} = V_i = V_{i+1}\), as the
    \(\Comps\)-operation may not fully remove a subsequence of plain DPM-pairs and we can
    relabel the remaining plain pair, if necessary.
    In particular, we can interpret \(\hat{U}_i\) as the fragment cut out of \(U\) due to
    the \(\Comps\)-operation.

    Now, write
    \begin{align*}
        U' &\coloneqq \val_{\Delta}(U_1,\dots,U_{i-1},U_{i+1}, \dots,
        U_z)\quad\text{and}\quad
        V' \coloneqq \val_{\Delta}(V_1,\dots,V_{i-1},V_{i+1},\dots, V_z),
    \end{align*}
    and consider an optimal alignment $\A : U \onto V\fragmentco{a}{b}$ of cost at most $k$.
    As \(\I\) has a bounded \torn, we can see that \(\A\) indeed aligns \(\hat{U}_i\)
    within \(V_i\)---in fact, it does so for every fragment \(\hat{U}_i\).
    \begin{claim}\label{clm:cutout2}
        For any \(j \in \fragmentoo1z\), the fragment $\hat{V}_j
        \coloneqq V\fragmentco{\hat{v}_j}{\hat{v}'_j}
        \coloneqq \A(\hat{U}_j)$ is contained in $V_j$.
    \end{claim}
    \begin{claimproof}
        Observe that we have by construction
        \begin{align*}
            v_j &= u_j + \sum_{t=1}^{j-1} (|V_t|-|U_t|)
            \le u_j + \sum_{t=1}^z \big||U_t|-|V_t|\big| = u_j + \tor(\I)
                \le u_j + \floor{\Delta/2} - k
                \le a+\hat{u}_j - k \le \hat{v}_j.
        \end{align*}
        Symmetrically, we obtain
        \begin{align*}
            \hat{v}'_j \le b+\hat{u}'_j-|U| +k
            &\le |V|-|U|+u'_j-\ceil{\Delta/2}+k\\
            &\le |V|-|U|+u'_j-\tor(\I) = |V|-|U|+u'_j-\sum_{t=1}^z \big||U_t|-|V_t|\big|)\\
            &\le |V|-|U|+u'_j - \sum_{t=j+1}^z (|V_t|-|U_t|)= v'_j,
        \end{align*}
        thus completing the proof.
    \end{claimproof}

    Next, we use \cref{cor:al-cut-out} to cut out the pair \((U_i, V_i)\)
    from \(\A\); we have
    \begin{align*}
        \ed(U, V\fragmentco{a}{b})
        &\ge \ed(U\fragmentco{0}{\hat{u}_i}, V\fragmentco{a}{\hat{v}_i})
        + \ed(U\fragmentco{\hat{u}_i+|U_i|-\Delta}{|U|},V\fragmentco{\hat{v}_i+|U_i|+\Delta}{b})
    \end{align*}
    Now, \cref{clm:cutout2} ensures that \(\hat{V}_i\) is indeed contained
    in \(V_i\). Adding that \(V_i\fragmentco{0}{|U_i-\Delta}V_{i+1}\)
    has a period \(|U_i| - \Delta\),
    we see that \(V\fragmentco{\hat{v}_i}{\hat{v}_i+|U_i| - \Delta}\) is a rotation
    of \(V_i\fragmentco{0}{|U_i|-\Delta}\). Thus, we indeed obtain \(V'\):
    \begin{align*}
        &\ed(U\fragmentco{0}{\hat{u}_i}, V\fragmentco{a}{\hat{v}_i})
        +
        \ed(U\fragmentco{\hat{u}_i+|U_i|-\Delta}{|U|},V\fragmentco{\hat{v}_i+|U_i|+\Delta}{b})\\
        &\quad= \ed(U'\fragmentco{0}{\hat{u}_i}, V'\fragmentco{a}{\hat{v}_i})
        + \ed(U'\fragmentco{\hat{u}_i}{|U'|},V'\fragmentco{\hat{v}_i}{b-|U_i|+\Delta})\\
        &\quad\ge \ed(U', V'\fragmentco{a}{b-|U_i|+\Delta}).
    \end{align*}
    In total, we constructed an alignment \(\A' : U' \onto
    V'\fragmentco{a}{b-|U_i|+\Delta}\) of cost at most \(k\)---and hence completed the
    proof of \(\OccE_k(\Comps(\I, \rpl(\I), \alpha)) \supseteq \OccE_k(\I)\).

    Now, to prove \(\OccE_k(\Comps(\I, \rpl(\I), k + \alpha)) = \OccE_k(\I)\),
    and more specifically
    \(\OccE_k(\Comps(\I, \rpl(\I), k + \alpha)) \subseteq \OccE_k(\I)\),
    we first observe that in \(\I\), the DPM-pair \((U_i, V_i)\) is part of a contiguous
    subsequence
    \(\X \coloneqq (U_{i-k'},V_{i-k'})\cdots(U_i,V_i)\cdots(U_{i + k''}, V_{i + k''})\) of
    length (at least) \(k' + k'' + 1\ge k + 2\),
    where all DPM-pairs that are equal to \((U_i, V_i)\).
    Again, write
    \begin{align*}
        U' &\coloneqq \val_{\Delta}(U_1,\dots,U_{i-1},U_{i+1}, \dots,
        U_z)\quad\text{and}\quad
        V' \coloneqq \val_{\Delta}(V_1,\dots,V_{i-1},V_{i+1},\dots, V_z),
    \end{align*}
    and consider an optimal alignment $\A : U' \onto V'\fragmentco{a}{b}$ of cost at most $k$.
    As the alignment \(\A\) can make at most \(k\) edits, at least one DPM-pair of the
    contiguous subsequence \[
        \X' \coloneqq (U_{i-k'},V_{i-k'})\cdots(U_{i-1},V_{i-1})
        (U_{i+1},V_{i+1})\cdots(U_{i + k''}, V_{i + k''})
    \]
    gets aligned without any edits; as all pairs of \(\X\) are the same, we may thus
    assume without loss of generality that \(\hat{V}_{i-1} = \hat{U}_{i-1}\).
    In particular, we can insert \((U_i, V_i)\) into \(\X'\) after \((U_{i-1}, V_{i-1})\)
    such that \(\hat{U}_i = \hat{V}_i\); that is (using \cref{fct:ali}), we have
    \begin{align*}
        \ed(U', V'\fragmentco{a}{b})
        &=\ed(U'\fragmentco{0}{\hat{u}_i}, V'\fragmentco{a}{\hat{v}_i})
        + 0 + \ed(U'\fragmentco{\hat{u}'_i}{|U|},V'\fragmentco{\hat{v}'_i}{b})\\
        &=\ed(U\fragmentco{0}{\hat{u}_i}, V\fragmentco{a}{\hat{v}_i})
        + 0 \\
        &\quad+
        \ed(U\fragmentco{\hat{u}'_i+|U_i|-\Delta}{|U|},V\fragmentco{\hat{v}'_i+|U_i|-\Delta}{b+|U_i|-\Delta})\\
        &\ge \ed(U, V\fragmentco{a}{b+|U_i| - \Delta}).
    \end{align*}
    In total, we constructed an alignment \(\A' : U \onto
    V\fragmentco{a}{b+|U_i|-\Delta}\) of cost at most \(k\)---and hence completed the
    proof of \(\OccE_k(\Comps(\I, \rpl(\I), k + \alpha)) \subseteq \OccE_k(\I)\).
\end{proof}

Combining \cref{fact:lengths_diff,lem:redundantk}, we obtain \cref{fact:druns}, which we
restate here for convenience.

\druns*

\begin{remark}\label{rem:overtrim}
    Observe that we could trim sequences of plain pairs to be even shorter than \(k + 1\).
    By \cref{lem:redundantk}, processing such a ``over-trimmed'' sequence naively
    might result in false-positive reports of \(k\)-error occurrences, though.
    In essence, the remainder of this part shows when and how we are able to avoid such
    false-positive reports when trimming with \(\comp = \widetilde{\Theta}(\sqrt{d})\).
    \lipicsEnd
\end{remark}

Observe that trimming the DPM-sequences alone is not enough to obtain
\cref{lem:swaps-proto}: in the interesting case where $\kappa^3=o(m/\tau)$,
we cannot afford to consider each of the $\cO(m/\tau)$ DPM-sequences
\(\I'_j \coloneqq \Comp(\I_j, \rred(P), \rred(T), k + 1)\)
separately.
However, for most values \(j \in J\), we have $\I'_{j-1}=\I'_j$.
Further, if $r_j-r_{j-1}=\tau$, then we have
$r_{j} + \OccE_k(P, R_{j}) = r_{j-1} + \OccE_k(P,R_{j-1}) + \tau$.
Consequently, we can return all occurrences compactly as arithmetic progressions.
(Also recall from \cref{fct:rj} that the sequence $(r_j)_{j\in J}$ is the concatenation of
$\Oh(d_T+1)$ arithmetic progressions with difference $\tau$.)
While these observations alone \emph{still} are not enough to obtain a faster algorithm
for \SM, they do yield an alternative algorithm with (roughly) the same running time as
\cref{lm:impEdC}.

With \cref{rem:overtrim} in mind, we give a slightly more general algorithm that allows
for an arbitrary threshold \(\comp\) up to which we may trim the runs of plain pairs in the
sequences \(\I_j\).
In particular, we prove the following more general variant of \cref{lem:swaps-proto}.

\begin{lemma}[$\protect\swaps(T, P, k, Q, \A_P, \A_T, \rred(P), \rred(T),
    \comp)$]\label{lem:swaps}
    Suppose we are given
    an instance {\(\SM(P, T, k, d, Q, \A_P, \A_T)\)}, a positive integer \(\comp\),
    as well as sets of puzzle
    pieces \(\rred(P) \supseteq \sp{P}\) and \(\rred(T) \supseteq \sp{T}\) of size \(\Oh(d)\)
    each.

    After a preprocessing step that takes \(\Oh( d^2 \comp \log n )\) time in the \modelname
    model, we can compute \[
        \mathcal{G} \coloneqq \bigcup_{j \in J}
        (r_j+\OccE_k(\Comp(\I_j, \rred(P), \rred(T), \comp)))
    \] as \(\Oh(d^3 \comp)\) arithmetic
    progressions with difference \(\tau = \Oh(\max\{q,d\})\), using
    a \DPM\!\!\((k, \Delta, \Sb, \Sm, \Sf)\) data structure,
    with \(\Delta + \ed(\Sb) + \ed(\Sm) + \ed(\Sf) = \Oh( d )\),
    that maintains a DPM-sequence of length \(z = \Oh( d \comp )\)
    under
    \begin{itemize}
        \item \(\Oh(d^2 \comp)\) calls to {\tt DPM-Delete} and {\tt
            DPM-Insert} that delete and insert, respectively, a canonical pair,
        \item \(\Oh(d^2)\) calls to {\tt DPM-Substitute}, and
        \item \(\Oh(d^2\comp) \) calls to {\tt DPM-Query}.
    \end{itemize}
    \lipicsEnd
\end{lemma}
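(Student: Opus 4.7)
The plan is to iterate $j$ over $J$ in increasing order, maintaining the trimmed DPM-sequence $\I'_j \coloneqq \Comp(\I_j, \rred(P), \rred(T), \comp)$ inside a $\DPM(k, \Delta, \Sb, \Sm, \Sf)$ data structure, and to compute $r_j + \OccE_k(\I'_j)$ via {\tt DPM-Query} at each event $j$ where $\I'_j$ changes. By \cref{fct:rj}, the sequence $(r_j)_{j \in J}$ decomposes into $\Oh(d)$ arithmetic sub-progressions with difference $\tau$, so whenever both $\I'_j$ and the active sub-progression of $(r_j)$ are static over a range of $j$, the corresponding occurrences can be batched into arithmetic progressions with difference $\tau$.

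\textbf{Event schedule.} Writing the internal pair at position $i$ in $\I_j$ as $(P_i, T_{j+i})$, as $j$ grows by $1$ every red piece $T_{i_T} \in \rred(T)$ shifts one slot to the left in the sequence while red pieces of $P$ stay put. For each pair $(P_{i_P}, T_{i_T}) \in \rred(P) \times \rred(T)$, the signed gap $\delta_j \coloneqq (i_T - j) - i_P$ is strictly decreasing in $j$, so this pair affects $\I'_j$ only during a window of $\Oh(\comp)$ consecutive $j$'s (namely those with $|\delta_j| \le \comp + 1$); during this window, each unit change of the gap translates into a {\tt DPM-Insert} or {\tt DPM-Delete} of a canonical pair, and when the gap hits zero, the red--red coincidence is absorbed by a single {\tt DPM-Substitute}, exactly as in \cref{lem:red-warmup2}. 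We precompute the $\Oh(d^2 \comp)$ critical $j$-values arising from the $\Oh(d^2)$ pairs of red pieces and interleave them with the $\Oh(d)$ breakpoints of $(r_j)$ via a priority queue, paying $\Oh(\log n)$ per insertion and yielding the claimed $\Oh(d^2 \comp \log n)$ preprocessing time. The initial sequence $\I'_{\min J}$ is assembled in $\Oh(d\comp)$ time using \cref{fct:rj,fct:tile} together with the input sets $\rred(P), \rred(T)$, and inserted pair by pair into the DPM data structure.

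\textbf{Counts and output.} Summing over red-piece pairs yields $\Oh(d^2 \comp)$ {\tt DPM-Insert}/{\tt DPM-Delete} calls that delete or insert a canonical pair, $\Oh(d^2)$ {\tt DPM-Substitute} calls, and $\Oh(d^2 \comp)$ {\tt DPM-Query} calls (one per distinct batch). Each query returns $\Oh(\Delta) = \Oh(d)$ positions, because the \torn bound $\tor(\I) \le \Delta/2 - k$ forces all $k$-error occurrences to lie in an interval of length $\Oh(\Delta)$; combining with the $\Oh(d^2 \comp)$ batches gives the claimed $\Oh(d^3 \comp)$ arithmetic progressions of difference $\tau$. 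Correctness of $\mathcal{G}$ is immediate, since {\tt DPM-Query} by definition returns $\OccE_k(\I'_j)$ and the batching within a sub-progression of $(r_j)$ merely translates the answer by multiples of $\tau$.

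\textbf{Main obstacle.} The hard part is to verify that each event is realizable via $\Oh(1)$ {\tt DPM-*} operations rather than a larger reorganization of several plain pairs. This reduces to two facts: (i) all plain pairs in any run are copies of $\mathcal{Q} = (Q^\infty\fragmentco{0}{\tau+\Delta}, Q^\infty\fragmentco{0}{\tau+\Delta})$ and hence freely exchangeable, so a unit change of a run's length is realized by a single canonical Insert or Delete at the run's boundary; and (ii) non-plain pairs change identity only when a red $T$-piece crosses a run boundary (handled by the Insert/Delete trick of \cref{lem:red-warmup2}, whereby a canonical pair is moved from one end of the shifting red piece to the other) or when two red pieces meet (the sole setting where a Substitute is unavoidable, and which contributes only $\Oh(d^2)$ times globally). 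Checking that the \torn bound $\tor(\I'_j) \le \Delta/2 - k$ is preserved throughout follows as in \cref{fact:interm}, by performing the non-increasing updates before the increasing ones within each event.
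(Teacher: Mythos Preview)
Your approach matches the paper's: maintain $\I'_j$ in the DPM data structure, schedule $\Oh(\comp)$ events per pair of red pieces, and batch queries into arithmetic progressions over maximal stretches of $j$ where both $\I'_j$ and the active arithmetic sub-progression of $(r_j)$ are static. One omission worth noting: the head and tail pairs $(P_1,T_{j,1})$ and $(P_z,T_{j,z})$ also change identity as $j$ varies (at the $\Oh(d)$ values of $j$ where $T_{j,1}$ or $T_{j,z}$ is special, handled by Substitutes) and also act as run boundaries, so the paper explicitly adjoins $\{P_1,P_z\}$ to $\rred(P)$ when enumerating enclosing pairs; this contributes $\Oh(d\comp)$ further Insert/Delete events and still fits within your stated bounds. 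The paper also makes explicit that the $\Oh(\log n)$ overhead per event comes from a rank query (via a balanced BST over a doubly-linked-list representation of $\I$) needed to convert a pointer to a red piece into a DPM index, rather than solely from the priority queue.
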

\begin{proof}
    As in \cref{sec:warmup,sec:warmup2},
    we set $\mathcal{S}_\beta$, $\mathcal{S}_\mu$, $\mathcal{S}_{\varphi}$
    according to \cref{def:puzzleset} and choose $\Delta \coloneqq 6\kappa = 6(k + d_P +
    d_T)$.
    Further, for each \(j\in J\), set \(\I'_j \coloneqq \Comp(\I_j, \rred(P), \rred(T), \comp)\).

    In contrast to the warm-up algorithms, we initialize the \DPM data structure
    with \(\I \gets \I'_{\min J}\) (instead of \(\I_{\min J}\)).
    Now, in general we follow the approach of the algorithms from
    \cref{sec:warmup,sec:warmup2}: we transform \(\I\) to be equal to \(\I'_{j}\) and then
    call {\tt DPM-Query} to obtain the contained \(k\)-error occurrences.
    However, we skip over an \(\I'_j\) if it is equal to \(\I'_{j + 1}\) and instead
    extend the arithmetic progressions obtained from the last call to {\tt DPM-Query} by
    \(\tau\).

    \begin{claim}\label{claim:kaupdates}
        In $\cO(\kappa^2\comp \log n)$ time in total, we can compute,
        for all $j \in J\setminus \{\min J\}$ with
        $\I'_{j-1} \neq \I'_j$,
        a sequence $\textsf{update}_j$ of \DPM update operations that transforms
        $\I'_{j-1}$
        to
        $\I'_j$,
        such that if we perform these updates one by one,
        all intermediate sequences satisfy conditions (\ref{it:orig}) and (\ref{it:unique}) of \DPM.
        Further, these sequences of updates are returned in increasing order with respect to $j$
        and contain in total
        \begin{itemize}
            \item \(\Oh(d^2 \comp)\) calls to {\tt DPM-Delete} and {\tt
                DPM-Insert} that delete and insert, respectively, a canonical pair and
            \item \(\Oh(d^2)\) calls to {\tt DPM-Substitute}.
        \end{itemize}
    \end{claim}
    \begin{claimproof}
        We store $\I$ as a doubly-linked list $I$.
        We compute sequences that consist of the following types of updates:
        \begin{itemize}
            \item $\textsf{list-sub}(\textsf{pointer}, (P_i,T_{j,i}))$:
                given a handle $\textsf{pointer}$ to an element of $I$,
                substitute $(P_i,T_{j,i})$ for this element.
            \item $\textsf{list-ins}(\textsf{pointer}, (P_i,T_{j,i}))$:
                given a handle $\textsf{pointer}$ to an element of $I$,
                insert $(P_i,T_{j,i})$ before this element.
            \item $\textsf{list-del}(\textsf{pointer})$:
                given a pointer $\textsf{pointer}$ to an element of $I$,
                delete its successor in $I$.
        \end{itemize}
        While this interface is different from that of \DPM, each update can be mapped to a
        \DPM update as long as we are able to perform the following operation:
        given a handle to an element of $I$, return its rank, that is,
        the number of elements that precede it in $I$.
        This operation can be implemented in $\cO(\log n)$ time by maintaining a balanced binary
        search tree over the elements of $I$
        at the cost of an $\cO(\log n)$-time additive overhead for each update operation.

        In $\cO(\kappa \log \kappa)$ time, we store the elements of $\rred(P) \cup \{P_1,
        P_z\}$ in a doubly linked list $L_P$ and we store the elements of
        \(\rred(T)\) in a doubly linked list $L_X$, each
        sorted with respect to their starting positions.
        At all times, we store a bidirectional pointer between each element of each
        of $L_P$ and $L_T$ and the DPM-pair that contains it in $I$.
        (We do not explicitly mention when such pointers need to be updated.)
        For both \(P\) and \(T\), we write \(\textsf{pointer}(P_i)\) (or
        \(\textsf{pointer}(T_i)\))
        for the pointer from the element of $L_P$ (or \(L_T\)) corresponding to $P_i$ (or
        \(T_i\))  to the element of $I$ that contains $P_i$ (or \(T_i\)).

        We insert all updates in a global min-heap with keys in
        $\fragmentoc{\min J}{\max J} \times \mathbb{Z} \cup \{-\infty\}$.
        When we are done generating updates, we pop them from the heap, inserting
        an update with key $(j, p)$ in the min-heap $\update_j$ with priority $p$.
        These priorities ensure that the conditions on the maintained sequence are satisfied at
        all times.
        We can easily convince ourselves that this process indeed returns sequences
        $\textsf{update}_j$ in increasing order with respect to~$j$.

        Similar to the algorithm that was sketched in~\cref{sec:warmup2}, we consider a few types
        of updates to $\I$:
        \begin{itemize}
            \item $\cO(\kappa)$ substitutions of the head/tail of the sequence,
            \item $\cO(\kappa^2)$ updates that involve a pair that contains a red
                fragment, and
            \item $\cO(\kappa^2 \comp)$ insertions/deletions of a plain pair, effectively
                shrinking/expanding a run of plain pairs.
        \end{itemize}

        In order to develop some more intuition, let us examine how the length of the run
        of plain pairs that succeeds the
        head of $\I$ changes in the course of the algorithm.
        This length may decrease as a DPM pair~$A$ containing some $T_t \in \rred(T)$
        approaches the head of $\I$ while we slide $P$ on $T$.
        When such a DPM-pair disappears in the process of transforming
        $\I_{t-1}$ to $\I_{t}$,
        the length of the considered run might increase depending on where the leftmost
        pair with a red piece in $\I_t$ is;
        roughly speaking, the run of plain pairs after $A$ in $\I_{t-1}$ becomes the run of
        plain pairs that succeeds the head of $\I_t$.

        We say that a run of plain pairs is \emph{enclosed} by the two DPM-pairs
        that are adjacent to it.
        Suppose that in each of $\I_{j-1}$ and $\I_j$ there is a run of plain pairs that is
        enclosed by a pair that contains a fragment $P_i \in \rred(P)\cup \{P_1, P_z\}$
        and a pair that contains a fragment $T_t \in \rred(T)$.
        The lengths of these two runs differ by at most one; details are provided below.
        Observe that any run of plain pairs in $\I_{j-1}$ that is not enclosed by a pair
        that contains a fragment $P_i \in \rred(P)\cup \{P_1, P_z\}$ and a pair that contains
        a fragment $T_t \in \rred(T)$
        either remains intact in $\I_j$ or is shifted to the left by one position.
        In particular, each change to the length of a run as we transform $\I'_{j-1}$ to
        $\I'_j$ can
        be attributed to a single pair of fragments $P_i \in \rred(P)\cup \{P_1, P_z\}$
        and $T_t \in \rred(T)$
        getting closer to (or farther from) each other.

        We first issue updates for the head and the tail of the sequence.\footnote{Some
        of the issued updates might be redundant or duplicate, but this is not a problem.}
        To this end, we first compute the union $\textsf{HeadTail}$ of the sets
        \[
            \{ j \in J : T_{j,1} \neq Q^\infty\fragmentco{-\kappa}{2\tau+\Delta} \}
            \quad\text{and}\quad
            \{ j \in J : T_{j,z} \neq Q^\infty\fragmentco{z\tau}{y_P+\kappa} \},
        \]
        with its elements sorted in increasing order,
        in $\cO(\kappa)$ time using \cref{lem:puzzlebeta,lem:puzzleeps}.
        For each $j \in \textsf{HeadTail} \setminus \min{J}$, for each $e \in \{1,z\}$
        we issue an update
        $\textsf{list-sub}(\textsf{pointer}(P_e), (P_e,T_{j,e}))$
        with key $(j, \big| |T_{j,e}|-|P_e|\big| - \big||T_{j-1,e}|-|P_e|\big|)$.
        Further, for each $j \in \textsf{HeadTail} \setminus \max{J}$, for each $e \in \{1,z\}$,
        issue update $\textsf{list-sub}(\textsf{pointer}(P_e), (P_e,T_{j+1,e}))$
        with key $(j+1, \big| |T_{j+1,e}|-|P_e|\big| - \big||T_{j,e}|-|P_e|\big|)$.

        Let us now issue all updates that involve some red fragment.
        For each pair of fragments $P_i \in \rred(P)$ and $T_t \in \rred(T)$ we do the following.
        \begin{itemize}
            \item If $t-i \in \fragmentoc{\min J}{\max J}$, we issue the following updates:
                \begin{itemize}
                    \item $\textsf{list-sub}(\textsf{pointer}(P_i), (P_i,T_t))$ 
                        with key $(t-i, \big| |T_t|-|P_i|\big| - \big||T_{t-1}|-|P_i|\big|)$, and
                    \item if $i+1<z$ and $P_{i+1}\not\in \rred(P)$,	$\textsf{list-del}(\textsf{pointer}(P_i))$ 
                        with key $(t-i, -\infty)$.
                \end{itemize}
            \item If $t-i+1 \in \fragmentoc{\min J}{\max J}$, we issue the following updates:
                \begin{itemize}
                    \item if $i>2$ and $P_{i-1} \not\in \rred(P)$, $\textsf{list-ins}(\textsf{pointer}(P_i), (P_{i-1},T_t))$ 
                        with key $(t-i+1,\big||T_t|-|P_{i-1}|\big|)$, and
                    \item if $T_{t+1} \not \in \rred(T)$,  $\textsf{list-sub}(\textsf{pointer}(P_i), (P_i,T_{t+1}))$ 
                        with key $(t-i+1, \big| |T_{t+1}|-|P_i|\big| - \big||T_{t}|-|P_i|\big|)$.
                \end{itemize}
        \end{itemize}

        Next, we show how to compute updates that allow the algorithm to maintain the
        trimmed length of the run of plain pairs that is enclosed by a pair
        containing some $P_i \in \rred(P)\cup\{P_1, P_z\}$
        and a pair containing some $T_t \in \rred(T)$, when such a run exists.

        First, we consider the case where the pair containing $P_i$ precedes the pair containing $T_t=T_{j-1,t-(j-1)}$ in some $\I'_{j-1}$
        and these two pairs enclose a non-empty run of plain pairs.
        In this case, the run would either shrink in $\I'_j$ or retain its length $\comp$.
        Let the successor of $P_i$ in $L_P$ be $P_v$.
        Further, if $T_t$ is not the first element in $L_T$ let its predecessor in $L_T$ be $T_u = T_{j-1,u-(j-1)}$; otherwise, set $u\coloneqq -\infty$.
        The following conditions must be satisfied:
        \begin{enumerate}[(i)]
            \item $t - (j - 1) \leq v$, which is equivalent to $j > t - v$, so that $P_v$ is not in a pair strictly between the two pairs and hence none of the elements of $\rred(P)$ is,
            \item $u - (j - 1) \leq i $, which is equivalent to $ j > u - i$, so that
                there is no non-plain pair containing some element of $\rred(T)$ between the two pairs, and
            \item $t - (j - 1) - i > 1 $, which is equivalent to $ j \leq t - i - 1$, so that the precedence condition is satisfied and the two pairs are not adjacent.
        \end{enumerate}
        Further, the run should shrink in $\I'_j$ only if the two pairs are close enough,
        that is, if
        $t - (j - 1) - i -1 \leq \comp$, which is equivalent to $j > t - i - \comp - 1$.
        Hence, for each \[
            j \in \fragmentoc{\max\{ \min J, t - v, u - i, t - i - \comp - 1  \}}{\min \{
        \max J, t - i - 1 \}},\] we issue an update
        $\textsf{list-del}(\textsf{pointer}(P_i))$ 
        with key $(j,-\infty)$.
        Observe that the total number of such issued updates is at most $(t - i - 1)-(t - i - \comp - 1) = \comp$.

        We now consider the complementary case where in each of $\I'_{j-1}$ and $\I'_j$,
        the pair containing $P_i$ succeeds the pair containing $T_t$
        and these two pairs either enclose a non-empty run of plain pairs or are
        adjacent.\footnote{Observe that the pairs can only be adjacent in $\I'_{j-1}$.}
        In this case, the run would either be expanded by one plain pair in $\I'_j$ or retain its length $\comp$.
        Let the predecessor of $P_i$ in $L_P$ be $P_y$.
        Further, if $T_t$ is not the last element in $L_T$ let its successor in $L_T$ be $T_w = T_{j-1,w-(j-1)}$; otherwise, set $w\coloneqq \infty$.
        The following conditions must be satisfied:
        \begin{enumerate}[(i)]
            \item $t - (j - 1) > y $, which is equivalent to $ j \leq t - y$, so that $P_w$ is strictly to the left of both pairs in $\I'_{j-1}$, as if this is not the case then
                the condition on $\I'_j$ would not be satisfied,
            \item $w - (j - 1) > i $, which is equivalent to $ j \leq w - i$, so that $T_y$ (if it exists) is strictly to the right of both pairs in $\I'_{j-1}$, as if this is not the case then
                the condition on $\I'_j$ would not be satisfied,
            \item $t - (j - 1) < i $, which is equivalent to $ j > t - i + 1$, so that the precedence condition is satisfied.
        \end{enumerate}
        Further, the run should be expanded in $\I'_j$ only if the two pairs are close enough,
        that is, if
        $i - (t - (j - 1)) -1 < \comp $, which is equivalent to $ j \leq \comp + t - i + 1$.
        Hence, for each \[
            j \in \fragmentoc{\max\{ \min J, t - i + 1 \}}{\min \{ \max J, t - y, w - i,
        \comp + t - i + 1 \}},\] we issue an update
        $\textsf{list-ins}(\textsf{pointer}(P_i),
        (Q^\infty\fragmentco{0}{\tau+\Delta},Q^\infty\fragmentco{0}{\tau+\Delta}))$ 
        plain pair in the (potentially empty) run preceding the pair that contains $P_i$
        with key $(j,-\infty)$.
        Observe that the total number of such issued updates is at most $(\comp + t - i + 1)-(t - i + 1) = \comp$.

        Clearly, we can compute all updates for a given pair (in the intermediate interface)
        in $\cO(\comp)$ time.

        The conditions of \DPM are clearly satisfied at initialization.
        Condition (\ref{it:orig}) is satisfied at all times as all the updates involving the head or tail of
        only involve pairs in $\mathcal{S}_\beta \times \mathcal{S}_\beta$ and
        $\mathcal{S}_{\varphi} \times \mathcal{S}_{\varphi}$, respectively,
        while all remaining updates involve pairs in $\mathcal{S}_\mu \times \mathcal{S}_\mu$.
        Condition (\ref{it:unique}) is satisfied at all times; it is satisfied for each constructed $\I'_j$ by~\cref{fact:lengths_diff}
        as $\I'_j$ is a subsequence of $\I_j$, and for all other constructed sequences by
        a direct application of~\cref{fact:interm}, which is applicable because of how the updates in each $\textsf{update}_j$ are sorted.
    \end{claimproof}

    We are now ready to complete the reduction to \DPM; see~\cref{alg:arprogr} for a pseudocode implementation.

    Using \cref{claim:kaupdates}, we initialize the
    set \[
        \mathcal{U}=\{\textsf{update}_j : j\in \fragmentoc{\min J}{\max J} \text{ and }
        \I'_{j-1} \neq \I'_j\}.
    \]
    To simplify the construction of certain arithmetic progressions later in the proof,
    we insert to $\mathcal{U}$ the following single-element (void) sequences of updates.
    For each $j \in \fragmentoc{\min J}{\max J}$ such that $r_j - r_{j-1}\neq \tau$ and $\I'_{j-1} = \I'_j$,
    we insert to $\mathcal{U}$ the sequence $\textsf{update}_j$ that consists of a single
    element: {\tt DPM-Substitute(\((P_1,T_{j,1})\),\(1\))}, that is,
    the substitution of the first pair of the maintained sequence with $(P_1,T_{j,1})$.
    By \cref{fct:rj}, we only generate $\cO(d_T+1)$ new updates.

    Now, we initialize the sequence $\I$ with $\I'_{\min J}$.
    We consider the sequences of updates $\textsf{update}_j$ in $\mathcal{U}$
    in increasing order with respect to~$j$
    and apply them to the maintained sequence~$\I$.
    Prior to performing the sequence of updates $\textsf{update}_j$ we do the following.
    Suppose that the maintained sequence $\I$ corresponds to $\I'_{t}$, that is,
    either $\textsf{update}_j$ is the first element of $\mathcal{U}$ and $t=\min J$
    or the previously applied sequence of updates was $\textsf{update}_{t}$.
    First, we compute $\OccE_k(\I'_t)$ using a \texttt{DPM-Query}.
    Observe that we have $\I'_t = \cdots = \I'_{j-1}$
    and hence $\OccE_k(\I'_t) = \cdots = \OccE_k(\I'_{j-1})$.
    Further, for each
    $i \in \fragmentoo{t}{j}$, we have $r_i-r_{i-1} = \tau$ and hence
    $r_i + \OccE_k(\I'_i) = r_t + \OccE_k(\I'_t) + (i-t)\tau$.
    We can thus efficiently return
    $\bigcup_{i = t}^{j-1} (r_i+\OccE_k(\I'_t))$
    as the union of $\cO(|\OccE_k(\I'_i)|)$ arithmetic progressions
    \[
        \bigcup_{a \in \OccE_k(\I'_t)} \{r_t+a + i \cdot \tau : i \in \fragmentco{0}{j-t}\}.
    \]
    (Observe that in the case where $t=j-1$, each of the constructed arithmetic progressions consists of a single element.)
    Finally, we apply the sequence of updates $\textsf{update}_j$.
    The case where there are no more updates to be processed is treated analogously (with $j$ set to $\max J+1$).
    The upper bound on the number of returned arithmetic progressions is embedded in the analysis of the time complexity of the algorithm.

\begin{algorithm}[t!]
    \SetKwBlock{Begin}{}{end}
    \SetKwFunction{DPMquery}{DPM-Query}
    \swaps{$T$, $P$, $k$, $Q$, $\A_P$, $\A_T$, $\rred(P)$, $\rred(T)$, $\comp$}\Begin{
		$\mathcal{G} \gets \emptyset$\;
		Construct set $\mathcal{U}$ of sequences $\textsf{update}_j$ specified in \cref{claim:kaupdates}\tcp*{$\cO(\kappa^2 \comp \log n)$ time}
		\ForEach{$j \in \fragmentoc{\min J}{\max J}$ such that $r_j - r_{j-1}\neq \tau$ and $\I'_{j-1} = \I'_j$}{\label{line:void1}
			$\textsf{update}_j \gets \{${\tt DPM-Substitute(\((P_1,T_{j,1})\),\(1\))}\(\}\)\;
			$\mathcal{U} \gets \mathcal{U} \cup \{\textsf{update}_j\}$\;\label{line:void2}
		}
    	$\I \gets \I'_{\min J}$\tcp*{$\cO(\kappa \comp + \kappa \log \kappa)$ time}
		$t \gets \min J$\;
		\ForEach{$j \in \{ i \in \fragmentoc{\min J}{\max J + 1} : \textsf{update}_i \neq \emptyset \text{ or } i = \max J + 1\}$ in incr.~order}{\label{line:for1}
			$\OccE_k(\I'_j) \gets \DPMquery$\;\label{line:dpmq}
			\ForEach{$a \in \OccE_k(\I'_j)$}{
           		$\mathcal{G} \gets \mathcal{G} \cup \{r_{t} + a + i \cdot \tau : i \in \fragmentco{0}{j-t}\}$\;\label{line:arprogr}
           	}
			\If {$j \leq \max J$}{
				Apply the sequence of updates $\textsf{update}_{j}$ on $\I$\;
				$t \gets j$\;\label{line:for2}
			}
		}
        \Return{$\mathcal{G}$}\;
    }
    \caption{Reduction of computing $\mathcal{G}$ to an instance of \DPM{}.}\label{alg:arprogr}
\end{algorithm}

We now proceed to analyze the time required by \cref{alg:arprogr}.

First, recall that the elements of $\mathcal{U}$ as returned by \cref{claim:kaupdates} in $\cO(\kappa^2 \comp \log n)$ time are sorted in increasing order with respect to $j$ and a representation of $(r_j)_{j\in J}$
as the concatenation of $\Oh(d_T+1)$ arithmetic progressions with difference $\tau$
can be computed in $\Oh(d_T+1)$ time in the \modelname model due to \cref{fct:rj}.
Thus, Lines~\ref{line:void1}--\ref{line:void2} can be implemented in $\cO(\kappa^2 \comp)$ time in a parallel left-to-right scan of $\mathcal{U}$ and the aforementioned representation of $(r_j)_{j\in J}$,
maintaining the sortedness of $\mathcal{U}$.

Next, observe that $\I'_{\min J}$ is of length $\cO(\kappa \comp)$ as it consists of a head,
a tail, $\cO(\kappa)$ non-plain pairs and $\cO(\kappa)$ runs of plain pairs, each
of length at most $\comp$.
This sequence can be computed in $\cO(\kappa \comp + \kappa \log \kappa)$
time: the head and the tail are computed in $\cO(\log (\kappa +1))$ time due to
\cref{lem:puzzleP-algo,lem:puzzlebeta,lem:puzzleeps}, while the pairs consisting of internal pieces can be computed
by processing the elements of $\rred(P)$, $\rred(T)$ in a left-to-right manner, after sorting them
with respect to their starting positions in $\cO(\kappa \log \kappa)$ time.

The for-loop of Lines~\ref{line:for1}--\ref{line:for2} is executed $\cO(\kappa^2 \comp)$ times
as this is an upper bound on the number of sequences of updates that are processed.
For each $j\in J$, due to \cref{fact:lengths_diff}, we have
$\sum_{i=1}^{z_j}\big||F_{j,i}| - |G_{j,i}|\big| \leq \sum_{i=1}^{z}\big||T_{j,i}| - |P_{i}|\big| \leq 3\kappa-k$
and hence $|\OccE_k(F_j,G_j)|=\cO(\kappa)$.
Thus, each query in \cref{line:dpmq} returns a set of size $\cO(\kappa)$.
Hence, apart from the time required for \DPM updates and queries, each iteration of the for-loop takes $\cO(\kappa)$ time,
for a total of $\cO(\kappa^3 \comp)$ time.
It readily follows that the output of the algorithm consists in $\cO(\kappa^3 \comp)$ arithmetic progressions.

All in all, in $\cO(\kappa^3 \comp + \kappa^2 \comp \log n)=\cO(d^3\comp+d^2\comp \log n)$ time, the problem in scope
reduces to an instance of \DPM
with $\mathcal{S}_\beta+\mathcal{S}_\mu+\mathcal{S}_{\varphi}=\cO(d)$ (due to
\cref{lem:specialbound-simpl}), $\Delta=6\kappa = \Oh(d)$,
$\I$ initialized as a sequence of length $\cO(d \comp)$, and
$\cO(d^2 \comp)$ updates and queries.
\end{proof}

As a direct consequence, we obtain an alternative algorithm for the \SM problem with a
competitive running time, that is, we obtain an algorithm for \SM, which is slower
than the algorithm implied by \cref{lm:impEdC} only by a poly-logarithmic factor.

\begin{corollary}
    We can solve the \SM problem in $\cO(d^4 \log n \log d)$ time in the \modelname model.
\end{corollary}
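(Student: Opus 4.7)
The plan is to instantiate \cref{lem:swaps} with $\rred(P) = \sp{P}$, $\rred(T) = \sp{T}$, and the trimming threshold $\comp = k + 1$, and to implement the underlying \DPM data structure using \cref{thm:dpm}. By \cref{fact:druns}, trimming the DPM-sequences with $\comp = k+1$ preserves the exact sets of $k$-error occurrences; that is, $\OccE_k(P,R_j) = \OccE_k(\Comp(\I_j, \sp{P}, \sp{T}, k+1))$ for every $j \in J$. Combined with \cref{lem:aligned}, which guarantees $\OccE_k(P,T) = \bigcup_{j \in J} (r_j + \OccE_k(P,R_j))$, the output $\mathcal{G}$ returned by the procedure of \cref{lem:swaps} equals exactly $\OccE_k(P,T)$.

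Next I would account for each contribution to the running time. The preprocessing step inside the reduction costs $\Oh(d^2 \comp \log n) = \Oh(d^3 \log n)$ time in the \modelname model. Since $\Delta + \ed(\Sb) + \ed(\Sm) + \ed(\Sf) = \Oh(d)$ and the maintained DPM-sequence has length $z = \Oh(d\comp) = \Oh(d^2)$, the preprocessing of the \DPM data structure of \cref{thm:dpm} takes $\Oh((d^3 + \Delta^2 d)\log^2(d+\Delta)) = \Oh(d^3 \log^2 d)$ time, and its initialization takes $\Oh(\Delta z \log \Delta) = \Oh(d^3 \log d)$ time. Each \DPM update or query then costs $\Oh(\Delta \log z \log \Delta) = \Oh(d \log^2 d)$, since $\log z = \Oh(\log d)$. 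The reduction triggers $\Oh(d^2 \comp) = \Oh(d^3)$ such operations, for a total of $\Oh(d^4 \log^2 d)$ time spent on updates and queries.

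Summing up these contributions yields an overall bound of $\Oh(d^4 \log^2 d + d^3 \log n)$, which simplifies to $\Oh(d^4 \log n \log d)$ under the standing (and mild) assumption that $d \le n$, so $\log d \le \log n$. The main obstacle is merely bookkeeping: verifying that the sizes $\Delta$, $\ed(\Sb) + \ed(\Sm) + \ed(\Sf)$, $z$, and the update/query count all match the hypotheses of \cref{thm:dpm} and that the preprocessing overhead of \cref{lem:swaps} does not dominate. No further algorithmic ingredient is needed.
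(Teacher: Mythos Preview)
Your proposal is correct and follows essentially the same approach as the paper: instantiate \cref{lem:swaps} with $\rred(P)=\sp{P}$, $\rred(T)=\sp{T}$, $\comp=k+1$, invoke \cref{fact:druns} for correctness, and plug in \cref{thm:dpm} for the \DPM data structure. Your cost accounting is in fact slightly tighter than the paper's (you obtain $\Oh(d^4\log^2 d + d^3\log n)$ before the final simplification, whereas the paper bounds the update/query cost directly by $\Oh(d^4\log n\log d)$), but both arrive at the stated bound.
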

\begin{proof}
    By \cref{fact:druns}, in order to solve an instance of \SM,
    it suffices to construct sets
    $\textsf{Special}(P)$ and $\textsf{Special}(T)$ in $\cO(d)$ time in the \modelname model using
    \cref{lem:specialbound-simpl}  and
    call the algorithm underlying~\cref{lem:swaps} with $\rred(X) \coloneqq
    \textsf{Special}(X)$ for each $X \in \{P,T\}$ and $\comp \coloneqq k+1$.

    Using \cref{thm:dpm} for the \DPM problem, we need
    $\cO(d^3\log^2 d)$ time for preprocessing,
    $\Oh(d^4 \log d)$ time for initialization,
    and $\cO(d^3 \cdot d \log n \log d)$ time for processing updates and queries.
    Overall, the required time is thus $\cO(d^4 \log n \log d)$.
\end{proof}

In the following sections, our goal is to reduce the \SM problem to an instance
$\swaps(T, P, k, Q, \A_T, \A_P, \rred(P), \rred(T), O(\sqrt{d}))$ with $|\rred(P)| + |\rred(T)|= \cO(d)$.

\section{Faster {\SM}: Additional Combinatorial Insights}\label{sec:marking}

While the improved algorithms in \cite{unified} crucially relied on analyzing and
understanding the structure of the \emph{pattern}, we obtain the improvements in this work
by additionally analyzing and understanding the structure of the \emph{text}.
Throughout this section, we fix a text \(T\), a pattern \(P\), a primitive string \(Q\), and parameters $d_P$, $d_T$, and $d$
stemming from an instance of \SM.
Our goal is to classify each position of the text where a $k$-error occurrence of $P$ can start
(cf.~\cref{ft:no-occs-suffix} in this regard) as either \emph{heavy} or \emph{light}, using the notion of \emph{locked fragments},
depending on some parameter $\tradeoff$.
The heavy positions are few and can be covered by a few \emph{heavy ranges} of small total length.
The set of light positions may be large, but, as explained and exploited in \cref{sec:faster},
for each light position $v \in \OccE_k(P,T)$,
any alignment $\A: P \onto T\fragmentco{v}{w}$ of cost \(\min_t \ed(P,T\fragmentco{v}{t})\)
is quite restricted: it does not makes edit operations outside the vicinity (interpreted as $\cO(\tradeoff \tau)$ positions)
of locked fragments of the text and the pattern.

\subsection{Locked Fragments and their Properties}

We intend to understand and analyze the structure of the text by using a more elaborate
version of the marking schemes used in \cite{unified}. In particular, we heavily rely on
the notion of \emph{locked fragments} in text and pattern from \cite{unified}.

\begin{definition}[{\cite[Definition~5.5]{unified}}]
    Let $S$ denote a string and let $Q$ denote a primitive string.
    We say that a fragment $L$ of~$S$ is \emph{locked} (with respect to $Q$)
    if at least one of~the following holds:
    \begin{itemize}
        \item For some integer $\alpha$, we have $\edl{L}{Q}=\ed(L,Q^\alpha)$.
        \item The fragment $L$ is a suffix of~$S$ and $\edl{L}{Q}=\ed(L,Q^*)$.
        \item The fragment $L$ is a prefix of~$S$ and $\edl{L}{Q}=\eds{L}{Q}$.
        \item We have $L = S$.\lipicsEnd
    \end{itemize}
\end{definition}

Let us also recall an intuitive example from \cite{unified}.

\begin{example}[\cite{unified}]
Fix a primitive string \(Q\) and consider a string \(U=Q^{k+1} S Q^{k+1}\) with
$\edl{U}{Q} \leq k$.
In any optimal alignment of~$U$ with a substring of~$Q^\infty$ with at most \(k\) edits,
at least one of~the leading $k+1$ occurrences of~$Q$ in $U$ is matched exactly and
at least one of the trailing \(k + 1\) occurrences of~$Q$ in $U$ is matched exactly.
Hence, all occurrences preceding (or succeeding) said exactly matched occurrence of \(Q\)
are also matched exactly. Thus, $U$ is locked with respect to $Q$.
\lipicsEnd
\end{example}

As in \cite{unified}, we also need the slightly stronger notion of an \emph{$h$-locked prefix} of a string.

\begin{definition}[{\cite[Definition~5.10]{unified}}]\label{def:klocked}
    Let $S$ denote a string, let $Q$ denote a primitive string,
    and let $h\ge 0$ denote an integer.
    We say that a prefix $L$ of~$S$ is \emph{$h$-locked} (with respect to $Q$)
    if at least one of~the following holds:
    \begin{itemize}
        \item For every $p\in \fragmentco{0}{|Q|}$, if $\ed(L,\rot^p(Q)^*)\le h$, then
            $\ed(L,\rot^p(Q)^*)=\ed(L,Q^\infty\fragmentco{-p}{j|Q|})$ for some \(j\in\Z\).
        \item We have $L=S$.\lipicsEnd
    \end{itemize}
\end{definition}

Given a string \(S\) (that is either the pattern or the text),
we intend to construct locked fragments
covering all errors of $S$ with respect to $\!{}^*\!Q^*\!\!$,
such that the total length of these locked fragments is roughly proportional to the product
of $|Q|$ and $\edl{S}{Q}$.

\SetKwFunction{locked}{Locked}

\begin{lemma}[{\tt Locked($S$, $Q$, $d$, $h$)}, {\cite[Lemma 6.9]{unified}}]\label{lem:klocked}
    Let $S$ denote a string, let $Q$ denote a primitive string,
    let $d_S$ denote a positive integer such that $\edl{S}{Q}\le d_S$
    and $|S| \ge (2d_S+1)|Q|$, and let $h\in \Zz$.

    Then, there is an algorithm that computes
    disjoint locked fragments $L_1,\ldots,L_{\ell} \preceq S$
    such that
        \begin{itemize}
        \item $S=L_1 \cdot \bigodot_{i=1}^{\ell-1} (Q^{\alpha_i} L_{i+1})$ for
            positive integers $\alpha_1,\ldots,\alpha_{\ell-1}$;\footnote{This item is not
            stated in \cite[Lemma 6.9]{unified}. However, it readily follows from the
        construction algorithm underlying that lemma and it is already used in
    \cite{unified} (for instance, in the proof of \cite[Claim 5.17]{unified}). We believe
that adding a proof of this item here would not be instructive.}
        \item $L_1$ is an $h$-locked prefix of $S$ and $L_{\ell}$ is a suffix of~$S$;
        \item
            \(\edl{S}{Q}=\sum_{i=1}^{\ell} \edl{L_i}{Q}\) and  $\edl{L_i}{Q} > 0$ for $i\in \fragmentoo{1}{\ell}$; and
        \item \(\displaystyle
                \sum_{i=1}^\ell |L_i| \le (5|Q|+1)\edl{S}{Q} + 2(h+1)|Q|.
            \)
    \end{itemize}

    The algorithm takes $\Oh(d_S^2+h)$ time in the \modelname model.
    \lipicsEnd
\end{lemma}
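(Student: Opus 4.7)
}

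The plan is to first materialize an optimal alignment from $S$ to $Q^\infty$, then read off the locked fragments from the pattern of breakpoints in this alignment, extending each cluster of breakpoints by a suitable number of full copies of $Q$ on each side. Concretely, I would begin by invoking \cref{lem:witness} with threshold $d_S$ to obtain some $x \in \Zz$ with $\edp{S}{\rot^{-x}(Q)} = \edl{S}{Q}$, and then feed $x$ to \cref{fct:alignment} to produce (the breakpoint representation of) an alignment $\A : S \onto Q^\infty\fragmentco{x}{y}$ of optimal cost $d \coloneqq \edl{S}{Q} \le d_S$. Together these two calls fit into $\Oh(d_S^2)$ time in the \modelname model, matching the stated budget.

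Next, I would scan the breakpoint set $\brkp_{S,Q^\infty}(\A)$ left-to-right (there are $\Oh(d)$ of them) and group them into maximal clusters so that consecutive breakpoints within one cluster are separated, on the $Q$-side, by strictly fewer than $2|Q|$ characters, while consecutive clusters are separated by at least $2|Q|$ characters (in the leading cluster, I would enforce a separation of at least $(h+1)|Q|$ between the last breakpoint of that cluster and the next one). For each resulting cluster I would define a candidate fragment $L_i$ by rounding outward: on each interior boundary, I snap to a position where $\A$ aligns $S$ to the start of a full copy of $Q$ in $Q^\infty$, and then extend one more full $Q$-copy outward so that the candidate starts and ends with an exactly matched copy of $Q$ (with the outermost candidates snapped to the endpoints of $S$ instead). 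This directly gives the decomposition $S = L_1 \cdot \bigodot_{i=1}^{\ell-1}(Q^{\alpha_i} L_{i+1})$ with $\alpha_i \ge 1$, and with $\A$ inducing an optimal alignment $L_i \onto Q^{\alpha_i'}$ for each interior $L_i$; the additivity $\edl{S}{Q} = \sum_i \edl{L_i}{Q}$ follows because $\A$ restricted to the $Q^{\alpha_i}$ gaps is the identity, and $\edl{L_i}{Q}>0$ for interior $L_i$ by construction of the clusters. Lockedness of each $L_i$ (other than $L_1$) is a standard counting argument: since both endpoints of $L_i$ lie inside a run of at least $\edl{L_i}{Q}+1$ consecutive exactly matched copies of $Q$, any alignment realizing $\edl{L_i}{Q}$ must align those extension copies without edits, forcing the form $\ed(L_i, Q^\alpha)$.

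For the prefix $L_1$, I would use the stronger extension of $h+1$ extra full $Q$-copies at its right end (inside the first long run of exact matches). The $h$-locked property is then verified as follows: fix any rotation $\rot^p(Q)$ with $\ed(L_1,\rot^p(Q)^*)\le h$; any optimal alignment between $L_1$ and a prefix of $\rot^p(Q)^\infty$ has at most $h$ breakpoints, and therefore at least one of the $h+1$ extension copies at the right end of $L_1$ must be matched exactly against some rotation of $Q$. Primitivity of $Q$ forces this rotation to coincide with $\rot^p(Q)$ up to a shift by a multiple of $|Q|$, which forces the entire alignment of $L_1$ to end at a position of the form $j|Q|$ relative to $\rot^{-p}$, i.e., $\ed(L_1,\rot^p(Q)^*) = \ed(L_1,Q^\infty\fragmentco{-p}{j|Q|})$ as required by \cref{def:klocked}.

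Finally, for the length bound, each breakpoint can be charged at most a constant number of $Q$-copies of length in the locked fragment that contains it: the clustering rule guarantees that within a cluster every breakpoint is within $2|Q|$ of its neighbours on the $Q$-side (contributing about $2|Q|$ per edit), and each cluster pays for at most two extension copies (one on each side), which amortizes to a further $2|Q|$ per edit plus an additive $+1$ accounting for the on-the-$S$-side characters. Summing and adding the extra $2(h+1)|Q|$ spent on extending the prefix yields the advertised $(5|Q|+1)\edl{S}{Q} + 2(h+1)|Q|$ bound. The main technical obstacle I expect is making the charging argument tight enough to hit the $5|Q|+1$ constant (as opposed to a larger one) while simultaneously keeping neighbouring candidates disjoint when clusters are close to each other — this is where a careful choice of where exactly to snap boundaries matters, and it may require combining the two-sided extension of a cluster with a ``shared'' extension when two clusters almost merge.
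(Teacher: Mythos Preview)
The paper does not prove this lemma at all; it is imported verbatim from \cite[Lemma~6.9]{unified} and closed with \lipicsEnd, so there is nothing here to compare your argument against line by line. Your overall strategy (materialize an optimal alignment via \cref{lem:witness} and \cref{fct:alignment}, cluster breakpoints, snap to $Q$-boundaries, extend) is the right shape and is essentially what the cited construction does.

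That said, your lockedness argument for the interior $L_i$ has a real gap. You extend each cluster by \emph{one} exact copy of $Q$ on each side, and then assert that ``both endpoints of $L_i$ lie inside a run of at least $\edl{L_i}{Q}+1$ consecutive exactly matched copies of $Q$'' so that a pigeonhole argument forces synchronization. But those long runs live in $S$ \emph{outside} $L_i$ (they are the $Q^{\alpha_j}$ separators), whereas lockedness is a statement about alignments of $L_i$ in isolation: an optimal alignment realizing $\edl{L_i}{Q}$ does not see the surrounding $Q^{\alpha_{i-1}}$ and $Q^{\alpha_i}$ at all. With only a single extension copy inside $L_i$ at each end, the pigeonhole step (``at least one extension copy must be matched exactly'') fails as soon as $\edl{L_i}{Q}\ge 2$. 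You need either to extend each $L_i$ by roughly $\edl{L_i}{Q}$ copies (which is what makes the $5|Q|$ constant attainable and simultaneously explains why the charging you flag as ``the main technical obstacle'' works out), or to replace the local pigeonhole by a global additivity argument showing $\sum_i \edl{L_i}{Q}\ge \edl{S}{Q}$, which does not follow from what you have written: concatenating per-piece optimal alignments need not produce an alignment of $S$ to a substring of $Q^\infty$ because the endpoints $u_i,v_i$ need not be congruent modulo $|Q|$. Your $h$-locked argument for $L_1$ is fine precisely because there you \emph{do} extend by $h+1$ copies; the interior case needs the analogous treatment.
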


\subsection{Analyzing the Text Using Locked Fragments}

Our marking scheme is similar to the marking scheme used in
the proof of \cite[Theorem~5.2]{unified}. As both \(T\) and \(P\) are close to being
periodic, we can compute locked fragments with respect to a common string \(Q\) (which is
given as a parameter in the call of \SM).

We start with some intuition for how different locked fragments of \(T\) and \(P\)
influence the edit distance between \(T\) and \(P\).
To that end, let \(\mathcal{L}^P\) denote the set of locked fragments computed for \(P\),
let \(\mathcal{L}^T \) denote the set of locked fragments computed for \(T\),
and consider an alignment $\A: P \onto T\fragmentco{t}{t'}$.
If \(\A\) aligns a locked fragment \(L^P \in \mathcal{L}^P\) to a fragment of
\(Q^{\infty}\), that is, $\A(L^P)$ does not overlap any locked fragment of
\(T\fragmentco{t}{t'}\), we obtain \(\ed^{\A}(L^P,\A(L^P))\geq \edl{L^P}{Q}\).
Symmetrically, any locked fragment $L^T \in \mathcal{L}^T$ of $T\fragmentco{t}{t'}$,
for which $\A^{-1}(L^T)$ does not overlap any locked fragment of $P$
satisfies $\ed^{\A}(\A^{-1}(L^T),L^T)\geq \edl{L^T}{Q}$.

However, if \(\A\) aligns a substring \(U\) of a locked fragment \(L^P \in \mathcal{L}^P\)
to a substring \(V\) of a locked fragment $L^T \in \mathcal{L}^T$,
the situation is not as straightforward. (One can consider the cleaner case where $U$ is $L^P$ and $V$ is $L^T$.)
Suppose for simplicity that there exist integers \(x\) and \(y\) such that $\edl{U}{Q}=\ed(U,Q^\infty\fragmentco{x}{y})$
and $\edl{V}{Q}=\ed(V,Q^\infty\fragmentco{x}{y})$.
In this case, by the triangle inequality, we have that
\[\edl{U}{Q} + \edl{V}{Q} - \min\{\edl{U}{Q},\edl{V}{Q}\} \leq \ed(U,V)\leq \edl{U}{Q} + \edl{V}{Q}.\]
In other words, by aligning these substrings of locked fragments, we can hope to ``save'' at
most \(\min\{\edl{U}{Q},\) \(\edl{V}{Q}\}\) edits compared to the $\edl{U}{Q} + \edl{V}{Q}$ upper
bound (when aligning the same locked fragments to substrings of \(Q^{\infty}\) instead).

To quantify said potential ``savings'', we give marks to each position $t$ in the text
corresponding to the total number of marks potentially saved in alignment
$\A: P \onto T\fragmentco{t}{t'}$.
Roughly speaking, for each pair of fragments
\(L^T\) and \(L^P\), we place \(\min\{\edl{L^T}Q, \edl{L^P}Q\}\) marks at position \(t\),
if they may overlap in any alignment $\A: P \onto T\fragmentco{t}{t'}$ with at most \(\ktotm\) insertions and deletions;
\(\ktotm\) can be thought to be $\cO(d)$.
In what follows, when we check whether two fragments overlap, we thus allow for a (small) slack~\(\ktotm\).
Formally, the marking scheme is captured in \cref{def:marking}.

\begin{definition}\label{def:marking}
    For a text \(T\), a pattern \(P\), a primitive string \(Q\) with \(\edl{P}{Q} =
    d_P\) and \(\edl{T}{Q} = d_T\), and corresponding sets of locked
    fragments \(\mathcal{L}^P \coloneqq \locked(P,Q,d_P,\lpref)\) and \(\mathcal{L}^T \coloneqq \locked(T,Q,d_T,0)\),
    write \(\markf\) for the function that
    maps an integer $v$ to a (weighted) number of
    locked fragments in \(\mathcal{L}^P\) that
    (almost) overlap locked fragments in \(\mathcal{L}^T\) when aligning \(P\) to
    position $v$.

    Formally, we first define a function
    \(\markf: \Z \times \Zz \times \mathcal{L}^P\times  \mathcal{L}^T \to \Zz \) by
    \begin{multline*}
        \markf(v,\ktotm, L^P = P\fragmentco{\ell_P}{r_P}, L^T = T\fragmentco{\ell_T}{r_T} )\\
         \coloneqq \begin{cases}
            \edl{L^T}{Q},&\text{if } \ell_P=0 \text{ and }  v\in \fragmentoo{\ell_T - \ktotm-r_P}{r_T + \ktotm-\ell_P}\text{;}\\
            \min\{ \edl{L^T}{Q}, \edl{L^P}{Q} \}, &\text{if } \ell_P \neq 0 \text{ and }  v\in \fragmentoo{\ell_T - \ktotm-r_P}{r_T + \ktotm-\ell_P} \text{;}\\
        0,& \text{otherwise.}
            \end{cases}
        \end{multline*}
     Now, set
        \[\markf(v,\ktotm,\mathcal{L}^P,\mathcal{L}^T)
        \coloneqq \sum_{L^P \in \mathcal{L}^P} \sum_{L^T \in \mathcal{L}^T}
        \markf(v,\ktotm, L^P, L^T ).\tag*{\lipicsEnd}\]
\end{definition}

When $\ktotm$, $\mathcal{L}^P$, and $\mathcal{L}^T$ are clear from context we may just write
$\markf(v)$ for $\markf(v,\ktotm,\mathcal{L}^P,\mathcal{L}^T)$.

We continue with a set of useful observations about our marking scheme.
Fix sets of locked
fragments \(\mathcal{L}^P \coloneqq \locked(P,Q,d_P,\lpref)\) and \(\mathcal{L}^T \coloneqq \locked(T,Q,d_T,0)\),
with $\lpref = \cO(d)$.

\begin{lemma}\label{lem:mark-pos}
For every $\ktotm\in \Zz$, we have
\[\sum_{v\in \Z} \markf(v,\ktotm,\mathcal{L}^P,\mathcal{L}^T) \le 2d_T(d_P+2)\ktotm + 2d_T(6d_P+3d_T+\lpref+2)|Q| + 2d_P|Q| = \cO(d^2(\ktotm + |Q|)).\]
\end{lemma}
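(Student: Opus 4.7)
}

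The plan is to swap the order of summation: instead of summing contributions position by position, I~would fix a pair $(L^P, L^T)$ of locked fragments and count how many positions $v$ receive a contribution from that pair. By \cref{def:marking}, the set of such positions is the integer interval $\fragmentoo{\ell_T-\ktotm-r_P}{r_T+\ktotm-\ell_P}$, whose size is exactly $|L^P|+|L^T|+2\ktotm-1 \le |L^P|+|L^T|+2\ktotm$. Therefore
\[
\sum_{v\in\Z}\markf(v,\ktotm,\mathcal{L}^P,\mathcal{L}^T) \;\le\; \sum_{L^P\in\mathcal{L}^P}\sum_{L^T\in\mathcal{L}^T}\bigl(|L^P|+|L^T|+2\ktotm\bigr)\,\markf(v,\ktotm,L^P,L^T),
\]
and the task reduces to bounding this triple sum cleanly. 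Throughout, I would exploit that $\markf(\cdot,\cdot,L^P,L^T)\le \edl{L^T}{Q}$ holds in \emph{every} case, while $\markf(\cdot,\cdot,L^P,L^T)\le\edl{L^P}{Q}$ holds whenever $L^P\neq L^P_1$ (the unique prefix with $\ell_P=0$). Using the right bound for each of the three summands $|L^P|$, $|L^T|$, and $2\ktotm$ is the key to avoiding a quadratic blow-up.

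Concretely, I~would bound the three contributions separately. For the $|L^P|$-part I~would use $\markf\le\edl{L^T}{Q}$, which factorises to $\bigl(\sum_{L^P}|L^P|\bigr)\cdot\bigl(\sum_{L^T}\edl{L^T}{Q}\bigr)=S_P\,d_T$, where $S_P\coloneqq\sum_{L^P}|L^P|$. For the $|L^T|$-part I~would split off the single pair with $L^P=L^P_1$, using $\markf=\edl{L^T}{Q}\le d_T$ there, and use $\markf\le\edl{L^P}{Q}$ for $L^P\neq L^P_1$, obtaining at most $S_T\,d_T+S_T\sum_{L^P\neq L^P_1}\edl{L^P}{Q}\le S_T(d_T+d_P)$, where $S_T\coloneqq\sum_{L^T}|L^T|$. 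For the $2\ktotm$-part I~would again use $\markf\le\edl{L^T}{Q}$, giving $2\ktotm\cdot|\mathcal{L}^P|\cdot d_T$. Since \cref{lem:klocked} guarantees that $\edl{L^P_i}{Q}>0$ for interior indices and that these quantities sum to $d_P$, we have $|\mathcal{L}^P|\le d_P+2$, which yields the desired $\ktotm$-coefficient of $2 d_T(d_P+2)$.

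It remains to insert the length bounds from \cref{lem:klocked}, namely $S_T\le(5|Q|+1)d_T+2|Q|$ (applied with $h=0$) and $S_P\le(5|Q|+1)d_P+2(\lpref+1)|Q|$ (applied with $h=\lpref$), into $S_P d_T + S_T(d_T+d_P)$. Expanding produces the main terms $10\,d_P d_T|Q|+5\,d_T^2|Q|+2\lpref d_T|Q|+4 d_T|Q|+2 d_P|Q|$ together with two ``dimensionless'' leftovers $2 d_P d_T$ and $d_T^2$ (from the $+1$ in the $(5|Q|+1)$ factors). The final step is to absorb these leftovers by noting that $|Q|\ge1$, so $2 d_P d_T\le 2 d_P d_T|Q|$ and $d_T^2\le d_T^2|Q|$, which upgrades the coefficients from $10$ to $12$ and from $5$ to $6$ and thereby matches the target bound $2 d_T(6 d_P+3 d_T+\lpref+2)|Q|+2 d_P|Q|$. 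The only real subtlety is choosing, for each of the three summand types, which of the two available upper bounds on $\markf$ decouples the double sum into a product of two one-dimensional sums of the form $\sum|L|$ or $\sum\edl{L}{Q}$; once the right choices are made, the calculation is mechanical.
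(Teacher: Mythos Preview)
Your proposal is correct and follows essentially the same approach as the paper: swap the order of summation, bound the interval length by $|L^P|+|L^T|+2\ktotm$, choose for each summand the more convenient of the two upper bounds on $\markf$ (using $\edl{L^T}{Q}$ for the $|L^P|$- and $2\ktotm$-parts, and $\edl{L^P}{Q}$ or $d_T$ for the $|L^T|$-part depending on whether $L^P=L^P_1$), then plug in the length bounds from \cref{lem:klocked} and absorb the two leftover terms via $|Q|\ge1$. The paper groups the terms slightly differently (pairing $|L^P|+2\ktotm$ together) but the computation is identical.
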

\begin{proof}
    Fix a locked fragment
    \(L^P = P\fragmentco{\ell_P}{r_P}\in \mathcal{L}^P\) and
    a locked fragment
    \(L^T = P\fragmentco{\ell_T}{r_T}\in \mathcal{L}^T\).
    If $\ell_P\ne 0$, then the definition of \(\markf\) yields
    \begin{align*}
        \sum_{v\in \Z}
        \markf(v,\ktotm, L^P, L^T)
            & \le \sum_{v\in \fragmentoo{\ell_T-\ktotm -r_P}{r_T+\ktotm-\ell_P}}
            \min\{ \edl{L^T}{Q}, \edl{L^P}{Q} \} \\
            & \le \min\{ \edl{L^T}{Q}, \edl{L^P}{Q} \} (|L^T| + |L^P| + 2\ktotm)\\
            & \le \edl{L^T}{Q}(|L^P| + 2\ktotm) + \edl{L^P}{Q}|L^T|.
    \end{align*}
    Similarly, if $\ell_P =0$, then the definition of \(\markf\) yields
    \begin{align*}
        \sum_{v\in \Z}
        \markf(v,\ktotm, L^P, L^T)
            & \le \sum_{v\in \fragmentoo{\ell_T-\ktotm -r_P}{r_T+\ktotm-\ell_P}}
            \edl{L^T}{Q} \\
            & \le \edl{L^T}{Q}(|L^T| + |L^P| + 2\ktotm)\\
            & \le \edl{L^T}{Q}(|L^P| + 2\ktotm) + d_T|L^T|.
    \end{align*}
    Overall, we have
    \begin{align*}
        \sum_{a\in \Z} \markf(v,\ktotm,\mathcal{L}^P,\mathcal{L}^T)& \le \sum_{L^P \in \mathcal{L}^P} \sum_{L^T \in \mathcal{L}^T} (\edl{L^T}{Q}(|L^P| + 2\ktotm)+\edl{L^P}{Q}|L^T|) + d_T\sum_{L^T \in \mathcal{L}^T}|L^T|\\
                                                                   & \le d_T\sum_{L^P \in \mathcal{L}^P}(|L^P|+2\ktotm) + (d_P+d_T)\sum_{L^T \in \mathcal{L}^T}|L^T|\\
                                                                   & \le d_T(d_P+2)2\ktotm + d_T\sum_{L^P \in \mathcal{L}^P}|L^P| + (d_P+d_T)\sum_{L^T \in \mathcal{L}^T}|L^T|\\
                                                                   & \le 2d_T(d_P+2)\ktotm + d_T(5|Q|+1)d_P+2d_T(\lpref+1)|Q|+(d_P+d_T)(5|Q|+1)d_T \\
                                                                   & \qquad\qquad\qquad\qquad\qquad\qquad\qquad\qquad\qquad\qquad\qquad\quad + (d_P+d_T)2|Q|\\
                                                                   & \le 2d_T(d_P+2)\ktotm + 2d_T(6d_P+3d_T+\lpref+2)|Q|+2d_P|Q|.\qedhere
    \end{align*}
\end{proof}

\begin{definition}\label{def:restrset}
For a set $\mathcal{U}$ of fragments of a string $S$ and an interval $I\sub \Z$,
we write
$\res{\mathcal{U}}{I}=\{S\fragmentco{x}{y} \in \mathcal{U} : \fragmentco{x}{y} \cap I \neq \emptyset\}$.
\lipicsEnd
\end{definition}

Recall that, as stated in \cref{ft:no-occs-suffix}, $\OccE_k(P,T) \cap \fragmentoo{n-m+k}{n} = \emptyset$.
We partition the remaining positions of the text into two groups.
Intuitively, a position $t \in \fragment{0}{n-m+k}$ is \emph{light} if it has a few marks,
and there cannot be an alignment $\A: P \onto T\fragmentco{t}{t'}$ with at most $\ktotm$ insertions and deletions
that aligns the first position of $P$, the last position of $L^P_1$, or the last position of $P$ against a portion of a locked fragment of $T$.

\begin{definition}\label{def:light}
For any fixed thresholds $\ktotm\in \Zz$ and $\tradeoff\in \Zp$, we say that a position \(v\in \fragment{0}{n-m+k}\) is
\emph{light} if the following conditions are simultaneously satisfied:
\begin{itemize}
    \item $\markf(v, \ktotm,\mathcal{L}^P,\mathcal{L}^T) < \tradeoff$,
    \item $\res{\mathcal{L}^T}{\fragmentco{v-\ktotm}{v+\ktotm}}=\emptyset$, 
    \item $\res{\mathcal{L}^T}{\fragmentco{v+m-\ktotm}{v+m+\ktotm}}=\emptyset$, and 
    \item $\res{\mathcal{L}^T}{\fragmentco{v+|L_1^P|-\ktotm}{v+|L_1^P|+\ktotm}}=\emptyset$. 
\end{itemize}
Otherwise, the position $v\in \fragment{0}{n-m+k}$ is called \emph{heavy}.
We denote the sets of heavy and light positions by $\Hv$ and $\light$, respectively. \lipicsEnd
\end{definition}

\SetKwFunction{heavy}{Heavy}
\begin{lemma}[$\protect\heavy(P,T,k,d,Q,\LP,\LT,\ktotm,\tradeoff)$]\label{lem:heavy-alg}
Consider an instance of the \SM problem, families $\LP = \locked(P,Q,d_P,\lpref)$ and $\LT = \locked(T,Q,d_T,0)$, as well as thresholds $\ktot\in \Zz$ and $\tradeoff\in \Zp$,

The set $\Hv$ of heavy positions, represented as the union of $\Oh(d^2)$ disjoint integer ranges,
can be computed in $\Oh(d^2 \log \log d)$ time.
\end{lemma}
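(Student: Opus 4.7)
The plan is to observe that $\Hv$ decomposes as the union of four sets, one for each violated clause in the definition of lightness, and to produce each set as an explicit union of $\Oh(d^2)$ integer intervals before merging.

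For clauses~(2)--(4), fix a displacement $\delta\in\{0,|L_1^P|,m\}$ and a locked fragment $L^T=T\fragmentco{\ell_T}{r_T}\in\LT$. The set of positions $v$ for which $\fragmentco{v+\delta-\ktotm}{v+\delta+\ktotm}\cap\fragmentco{\ell_T}{r_T}\neq\emptyset$ is the single integer interval $\fragmentoo{\ell_T-\ktotm-\delta}{r_T+\ktotm-\delta}$. Since $|\LT|=\Oh(d)$ by \cref{lem:klocked}, this produces $\Oh(d)$ intervals across the three displacements, which I would sort and merge in $\Oh(d\log d)$ time.

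For clause~(1), I would exploit the fact that, by \cref{def:marking}, for every pair $(L^P,L^T)\in\LP\times\LT$, the function $v\mapsto\markf(v,\ktotm,L^P,L^T)$ is zero outside $I_{L^P,L^T}\coloneqq\fragmentoo{\ell_T-\ktotm-r_P}{r_T+\ktotm-\ell_P}$ and equal to the known weight $w_{L^P,L^T}$ on $I_{L^P,L^T}$. Hence $\markf(v,\ktotm,\LP,\LT)$ is a piecewise-constant function with $N=\Oh(|\LP|\cdot|\LT|)=\Oh(d^2)$ breakpoints. I would generate $2N$ events (a $+w_{L^P,L^T}$ event at $\ell_T-\ktotm-r_P$ and a $-w_{L^P,L^T}$ event at $r_T+\ktotm-\ell_P$), sort them by position, perform a left-to-right sweep maintaining the running value, and emit a maximal interval of heavy positions whenever the running value crosses the threshold $\tradeoff$. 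This yields at most $\Oh(d^2)$ disjoint intervals. After combining with the $\Oh(d)$ intervals from clauses~(2)--(4) and merging overlaps in a final linear sweep, I obtain the claimed representation of $\Hv$.

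The main obstacle is the sorting step for clause~(1): a comparison sort costs $\Oh(d^2\log d)$, which exceeds the claimed budget. To save a factor of $\log d/\log\log d$, I would exploit the product structure of the event coordinates. After an initial $\Oh(d\log d)$-time sort of the endpoints of $\LP$ and of $\LT$, the $\Oh(d^2)$ start events $\ell_T-\ktotm-r_P$ decompose into $\Oh(d)$ already-sorted sequences of length $\Oh(d)$ each (one sequence per fixed $L^P$, sorted by $\ell_T$); likewise for the end events. I would merge these $\Oh(d)$ sorted sequences using an atomic heap (or $y$-fast trie) that supports successor queries in $\Oh(\log\log d)$ time, giving a total sorting time of $\Oh(d^2\log\log d)$, which dominates the overall running time of the algorithm.
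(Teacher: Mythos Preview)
Your approach is essentially the paper's: generate $\Oh(d^2)$ weighted intervals and sweep. The paper folds clauses~(2)--(4) into the same sweep by assigning each of those intervals weight $\tradeoff$, so a single pass over all events suffices. More importantly, the paper does not exploit any product structure for sorting; it simply invokes the $\Oh(N\log\log N)$ integer sorting algorithm of Andersson--Hagerup--Nilsson--Raman on the $N=\Oh(d^2)$ event coordinates directly.

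Your merge-based alternative is sound in principle, but the data structures you name do not deliver the claimed bound: atomic heaps only handle poly-logarithmically many keys, and a $y$-fast trie pays $\Oh(\log\log U)$ per operation with universe $U=\Theta(n)$, not $\Theta(d)$, since the event coordinates $\ell_T-\ktotm-r_P$ live in $\fragment{-\Oh(n)}{\Oh(n)}$. To actually get $\Oh(\log\log d)$ per operation on a heap of size $\Oh(d)$ you would need Thorup's equivalence between sorting and priority queues combined with Han's deterministic $\Oh(n\log\log n)$ sort---at which point it is simpler to just sort the events outright, as the paper does.
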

\begin{proof}
We implement the marking process according to \cref{def:marking}, assigning $\tradeoff$ extra marks to all positions made heavy due to the last three items in \cref{def:light}.
Formally, we produce the following $\Oh(d^2)$ weighted intervals:
\begin{itemize}
    \item $\fragmentoo{\ell_T-\ktotm-r_P}{r_T+\ktotm-\ell_P}$ of weight $\min\{ \edl{L^T}{Q}, \edl{L^P}{Q} \}$ for each locked fragment $L^P=P\fragmentco{\ell_P}{r_P}\in \LP$ with $\ell_P\ne 0$ and $L_T=T\fragmentco{\ell_T}{r_T}\in \LT$,
    \item $\fragmentoo{\ell_T-\ktotm-r_P}{r_T+\ktotm-\ell_P}$ of weight $\edl{L^T}{Q}$ for each locked fragment $L^P=P\fragmentco{\ell_P}{r_P}\in \LP$ with $\ell_P=0$ and $L_T=T\fragmentco{\ell_T}{r_T}\in \LT$,
    \item $\fragmentoo{\ell_T-\ktotm}{r_T+\ktotm}$ of weight $\tradeoff$ for each locked fragment $L_T=T\fragmentco{\ell_T}{r_T}\in \LT$,
    \item $\fragmentoo{\ell_T-\ktotm-m}{r_T+\ktotm-m}$ of weight $\tradeoff$ for each locked fragment $L_T=T\fragmentco{\ell_T}{r_T}\in \LT$,
    \item $\fragmentoo{\ell_T-\ktotm-|L_1^P|}{r_T+\ktotm-|L_1^P|}$ of weight $\tradeoff$ for each locked fragment $L_T=T\fragmentco{\ell_T}{r_T}\in \LT$.
\end{itemize}
The heavy positions are exactly those positions in $\fragment{0}{n-m+k}$ that are contained in intervals of total weight at least $\tradeoff$;
they can be computed using a sweep-line procedure with events corresponding to interval endpoints.
The number of events is $\Oh(d^2)$, so the output consists of $\Oh(d^2)$ disjoint ranges.
In terms of the running time, the bottleneck is sorting the events using the algorithm of \cite{DBLP:journals/jcss/AnderssonHNR98}.
\end{proof}

\begin{lemma}\label{lem:heavy-bound}
The total number of heavy positions does not exceed
\begin{align*}
|\Hv| & \le \frac{2d_T(d_P+2+3\tradeoff)\ktotm + 2d_T(6d_P+3d_T+\lpref+2+9\tradeoff)|Q| + 2d_P|Q| + 6|Q|{\tradeoff} + 2\ktotm \tradeoff}{\tradeoff}\\
	  &  =\cO((d^2/\tradeoff + d)(\ktotm+|Q|)).
	  \end{align*}
Moreover, for any integer $b\in \Zp$,
\begin{align*}
& \left|\bigcup_{v\in \Hv}\fragment{v-b}{v+b} \right| \\
& \quad \le \frac{2d_T(d_P+2+3\tradeoff)(\ktotm+b) + 2d_T(6d_P+3d_T+\lpref+2+9\tradeoff)|Q| + 2d_P|Q| + 6|Q|{\tradeoff} + 2(\ktotm+2b) \tradeoff}{\tradeoff}\\
& \quad = \cO((d^2/\tradeoff + d)(\ktotm+b+|Q|)).
\end{align*}
\end{lemma}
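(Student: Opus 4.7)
The plan is to re-use the weighted-interval encoding of $\Hv$ that was already produced inside the proof of \cref{lem:heavy-alg}. Recall that there we constructed a collection of weighted intervals such that a position $v\in\fragment{0}{n-m+k}$ is heavy exactly when the total weight of intervals containing $v$ is at least $\tradeoff$. Consequently, a double-counting argument gives
\[
\tradeoff \cdot |\Hv| \;\le\; \sum_{v\in\Z}\, W(v) \;=\; \sum_{I} w(I)\cdot|I|,
\]
where $W(v)$ is the total weight of intervals containing $v$, and the second sum ranges over all constructed intervals. So the entire task reduces to bounding the RHS and then dividing by $\tradeoff$.

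The RHS naturally splits into four groups, corresponding to the four clauses of \cref{def:light}. For the $\markf$-group, we invoke \cref{lem:mark-pos} directly, which yields the contribution
$2d_T(d_P+2)\ktotm + 2d_T(6d_P+3d_T+\lpref+2)|Q| + 2d_P|Q|$. For each of the three ``overlap'' groups, the intervals are of the form $\fragmentoo{\ell_T-\ktotm-\delta}{r_T+\ktotm-\delta}$ with weight $\tradeoff$, where $L^T=T\fragmentco{\ell_T}{r_T}$ ranges over $\LT$ and $\delta\in\{0,m,|L_1^P|\}$, so each has length at most $|L^T|+2\ktotm$. Summing over $L^T$, we invoke \cref{lem:klocked} (with $h=0$) to get $\sum_{L^T\in\LT}|L^T|\le (5|Q|+1)d_T+2|Q|$; and we bound the number of locked fragments $\ell_T$ by $d_T+2$, which follows from the fact that the middle locked fragments have strictly positive $\edl{\cdot}{Q}$ summing to at most $d_T$ (also from \cref{lem:klocked}). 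Thus each overlap group contributes at most $\tradeoff\cdot[(5|Q|+1)d_T+2|Q|+2\ktotm(d_T+2)]$; summing the four groups, dividing by $\tradeoff$, and regrouping (the precise algebra is the only routine, bookkeeping part) yields exactly the numerator in the first claim. The asymptotic simplification $\cO((d^2/\tradeoff+d)(\ktotm+|Q|))$ then follows from $d_T,d_P,\lpref=\cO(d)$.

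For the second claim, the key observation is that $\bigcup_{v\in\Hv}\fragment{v-b}{v+b}$ is contained in the set of positions belonging to the $b$-dilation on each side of some maximal interval of $\Hv$. Equivalently, if we inflate every interval $I$ from the four groups by $b$ on each side before running the same double-counting argument, the new set of ``heavy'' positions is a superset of $\bigcup_{v\in\Hv}\fragment{v-b}{v+b}$. Inflating an interval of length $|L^T|+|L^P|+2\ktotm$ (resp.\ $|L^T|+2\ktotm$) changes its length to $|L^T|+|L^P|+2(\ktotm+b)$ (resp.\ $|L^T|+2(\ktotm+b)$), so the effect is exactly to replace $\ktotm$ by $\ktotm+b$ in every place where it arose from the $2\ktotm$ slack in the interval definitions. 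Tracing this substitution through the first computation, and observing that the $4\ktotm$ contribution coming from $\ell_T\le d_T+2$ in the overlap groups gets inflated to $4(\ktotm+b)$ per group (hence $2(\ktotm+2b)\tradeoff$ in the consolidated form), yields the stated bound verbatim.

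The main obstacle I anticipate is purely bookkeeping: matching the exact constants in the stated numerator. Conceptually the proof is a clean double-counting with a weighted-interval covering, but distributing $\ell_T\le d_T+2$ correctly between ``$d_T\ktotm\tradeoff$'' and ``$\ktotm\tradeoff$'' terms (and likewise for the $|Q|\tradeoff$ terms) needs some care; otherwise one easily obtains an inequality off by a small multiplicative constant from the stated one. The asymptotic version is immediate regardless.
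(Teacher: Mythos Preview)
Your proposal is correct and matches the paper's approach: split by the four clauses of \cref{def:light}, invoke \cref{lem:mark-pos} for the first clause (divided by $\tradeoff$), and bound the other three directly via $\sum_{L^T}|L^T|\le (5|Q|+1)d_T+2|Q|$ and $\ell^T\le d_T+2$ from \cref{lem:klocked}. For the second claim the paper likewise reduces to replacing $\ktotm$ by $\ktotm+b$; the extra $2b$ in the $2(\ktotm+2b)\tradeoff$ term comes from a boundary correction $\left|\bigcup_{v\in\Hv}\fragment{v-b}{v+b}\setminus\fragment{0}{n-m+k}\right|\le 2b$ rather than from the overlap groups as you suggest---your direct interval-inflation argument actually gives the slightly tighter $2(\ktotm+b)\tradeoff$ without needing that correction.
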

\begin{proof}
By \cref{lem:mark-pos}, the number of positions violating the first condition of \cref{def:light}
does not exceed $\big({2d_T(d_P+2)\ktotm + 2d_T(6d_P+3d_T+\lpref+2)|Q| + 2d_P|Q|}\big)/{\tradeoff}=\cO(d^2(\ktotm+|Q|)/\tradeoff)$.
As for the remaining conditions, each locked fragment $L^T\in \mathcal{L}^T$ may yield at most $3(|L^T|+2\ktotm)$
heavy positions, for a total of:
\[\sum_{L^T\in \mathcal{L}^T} 3(|L^T|+2\ktotm)
\le 3(5|Q|+1)d_T +3\cdot 2|Q|+6(d_T+2) \ktotm \le 18d_T|Q|+6|Q|+6d_T\ktotm+2\ktotm = \cO(d(\ktotm+|Q|)).\]

As for the second claim, observe that if $v$ is heavy, then all positions in $\fragment{v-b}{v+b} \cap \fragment{0}{n-m+k}$ would
be heavy if we increased the threshold $\ktotm$ to $\ktotm+b$.
This is because the left endpoint of all intervals considered in the proof of \cref{lem:heavy-alg} contains a $-\ktotm$ term whereas the right endpoint contains a $+\ktotm$ term (and there is no other dependency on $\ktotm$).
Further, $\left|\bigcup_{v\in \Hv}\fragment{v-b}{v+b} \setminus \fragment{0}{n-m+k} \right| \leq 2b$, since $\Hv \subseteq \fragment{0}{n-m+k}$.
\end{proof}

\section{A Faster Algorithm for {\SM}}\label{sec:faster}

In this section, we present the core of our improvements: a faster algorithm for \SM.

\newsm*

Let us fix an instance of the \SM problem, sets of locked fragments $\LP \coloneqq \locked(P,Q,d_P,2d)$ and $\LT \coloneqq \locked(T,Q,d_T,0)$,
that is, we have $\lpref \coloneqq 2d$,
an integer
$\ktotm \coloneqq 7d \geq \lpref+2k+d_P+d_T > \ktot$
and an integer threshold
$\tradeoff \coloneqq \max\{1 , \lfloor{\sqrt{d}/\sqrt{\log(n+1) \log(d+1)}}\rfloor\} \leq d$.
Further, consider a partition of the positions of $T$ in $\fragment{0}{n-m+k}$ into heavy and light using \cref{lem:heavy-alg}.

First, we show how to compute $k$-error occurrences that start at heavy positions; this is a straightforward application of \cref{fct:rj,fct:verifyRj}.
Then, we reduce the problem of computing $k$-error occurrences that start at light positions
to an instance $\swaps(T, P, k, Q, \A_T, \A_P, \rred(P), \rred(T), \alpha))$, where  $\rred(P)$ and $\rred(T)$
are both of size $\cO(d)$ and contain all the internal pieces that (almost) overlap locked fragments,
and $\alpha=o(\sqrt{d})$.


\subsection{Computing Occurrences Starting at Heavy Positions}
\label{sus:heavy}

\begin{lemma}[\texttt{HeavyMatches($P$, $T$, $k$, $d$, $Q$, $\A_P$, $\A_T$, $\Hv$)}]\label{lem:heavy-total}
    Given an instance of the \SM problem
    and the set $\Hv$ of heavy positions constructed using \cref{lem:heavy-alg},
    $\OccE_k(P,T)\cap \Hv$ can be computed in $\Oh(d^3+d^4/\tradeoff)$ time in the \modelname model.
\end{lemma}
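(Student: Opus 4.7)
The plan is to process each fragment $R_j$ whose slot $\fragmentco{r_j}{r_{j+1}}$ contains a heavy position, using the $\Oh(k(k+\kappa))$-time verification from \cref{fct:verifyRj}. Define
\[
    \Hv^{\star} \coloneqq \{j \in J : \fragmentco{r_j}{r_{j+1}} \cap \Hv \neq \emptyset\}.
\]
By \cref{lem:aligned}, every $v \in \OccE_k(P,T)$ lies in a unique slot $\fragmentco{r_j}{r_{j+1}}$, so
\[
    \OccE_k(P,T) \cap \Hv \;=\; \bigcup_{j \in \Hv^{\star}} \bigl(r_j + (\OccE_k(P, R_j) \cap (\Hv - r_j))\bigr).
\]
For each $j \in \Hv^{\star}$ we invoke \cref{fct:verifyRj}, which returns $\OccE_k(P, R_j)$ of size $\Oh(d)$, and intersect the result with the sorted-range representation of $\Hv - r_j$.

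The main obstacle is bounding the total cost: a naive count would yield $|\Hv^{\star}| = \Oh(d^2)$ (one per heavy range), producing an unacceptable $\Oh(d^4)$ bound. The key observation is that every slot in $\Hv^{\star}$ lies inside the $\tau$-expansion $\bigcup_{v \in \Hv}\fragment{v-\tau}{v+\tau}$; since the slots are pairwise disjoint and essentially all have length exactly $\tau$,
\[
    |\Hv^{\star}| \;\le\; \frac{1}{\tau}\Bigl|\bigcup_{v \in \Hv}\fragment{v - \tau}{v + \tau}\Bigr| + \Oh(1).
\]
Plugging in the second inequality of \cref{lem:heavy-bound} with $b = \tau$, using $\ktotm = 7d$ and $\tau \ge |Q|$, we get
\[
    |\Hv^{\star}| \;=\; \Oh\bigl((d^{2}/\tradeoff + d)(1 + d/\tau)\bigr).
\]
Multiplying by the per-slot cost $\Oh(k(k+\kappa))$ and using $\tau \ge \kappa/2$ together with $k \le \kappa \le 5d$, one checks by cases (whether $\kappa = \Omega(d)$ or $\kappa = o(d)$) that
\[
    |\Hv^{\star}| \cdot \Oh(k(k+\kappa)) \;=\; \Oh((d^{2}/\tradeoff + d)(1 + d/\tau) \cdot k(k+\kappa)) \;=\; \Oh(d^{3} + d^{4}/\tradeoff).
\]

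For the algorithm itself, we first compute $(r_j)_{j \in J}$ via \cref{fct:rj} as a concatenation of $\Oh(d)$ arithmetic progressions of difference $\tau$ in $\Oh(d)$ time. A parallel sweep over these progressions and the $\Oh(d^{2})$ disjoint ranges representing $\Hv$ enumerates $\Hv^{\star}$, and, for each $j \in \Hv^{\star}$, prepares the sorted list of heavy positions falling in $\fragmentco{r_j}{r_{j+1}}$, in total $\Oh(d^{2} + |\Hv^{\star}|)$ time. The remaining work is dominated by the $|\Hv^{\star}|$ invocations of \cref{fct:verifyRj}, each followed by an $\Oh(d)$-time filtering step, yielding the claimed $\Oh(d^{3} + d^{4}/\tradeoff)$ total.
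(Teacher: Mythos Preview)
Your proof is correct and follows essentially the same approach as the paper: enumerate the slots $j$ with $\fragmentco{r_j}{r_{j+1}}\cap\Hv\ne\emptyset$ by sweeping the heavy ranges against $(r_j)_{j\in J}$, verify each such $R_j$ via \cref{fct:verifyRj}, and bound the number of relevant slots through the expansion estimate of \cref{lem:heavy-bound} with $b\approx\tau$. The only notable difference is bookkeeping: the paper asserts $\tau=\Theta(\max(q,d))$ to conclude $|J'|=\Oh(d+d^2/\tradeoff)$ and then multiplies by the coarse per-slot cost $\Oh(d^2)$, whereas you keep the weaker bound $|\Hv^\star|=\Oh((d^2/\tradeoff+d)(1+d/\tau))$ but compensate with the finer per-slot cost $\Oh(k(k+\kappa))$, using $\tau\ge\kappa/2$ to absorb the extra $(1+d/\tau)$ factor---this makes your argument self-contained even when $\kappa$ is not $\Theta(d)$, which the \SM problem statement alone does not guarantee.
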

\begin{proof}
    Recall that the set $\Hv$ is represented as the union of $\Oh(d^2)$ disjoint heavy ranges
    (listed in the left-to-right order).

    Our algorithm starts with an application \cref{fct:rj} to construct the sequence $(r_j)_{j\in J}$, represented as a concatenation of $\Oh(d)$ arithmetic progressions.
    Our first goal is to enumerate elements of the set $J' \coloneqq \{j\in J : \fragmentco{r_j}{r_{j+1}}\cap \Hv \ne \emptyset\}$.
    For this, we simultaneously traverse the sequence $(r_j)_{j\in J}$ along with the heavy ranges constituting $\Hv$.
    For each heavy range $H \sub \Hv$, we list all $j\in J$ such that $\fragmentco{r_j}{r_{j+1}}\cap H \ne \emptyset$
    (only the smallest such $j$ might have already been listed for an earlier heavy range).
    In the second phase, we compute $O\coloneqq\bigcup_{j\in J'} (r_j+\OccE_k(P,R_j))\cap \fragmentco{r_j}{r_{j+1}}$
    using \cref{fct:verifyRj}, making sure that the positions are listed in the left-to-right order.
    Finally, we simultaneously scan $O$ and the heavy ranges constituting $\Hv$, reporting all positions $v\in O\cap \Hv$.

    As for correctness, observe that $R_j=T\fragmentco{r_j}{r'_j}$,
    so $O \sub \OccE_k(P,T)$ and, due to the final filtering step, we output a subset of $\OccE_k(P,T)\cap \Hv$.
    To prove the converse inclusion, consider a position $v\in \OccE_k(P,T)\cap \Hv$.
    By \cref{lem:aligned}, there exists $j\in J$ such that $v\in (r_j+\OccE_k(P,R_j))\cap \fragmentco{r_j}{r_{j+1}}$
    and, by the definition of $J'$, we also have $j\in J'$. Consequently, $v$ is indeed reported.

    As for the complexity analysis, let us first compute the running time in terms of $|J'|$.
    Constructing the sequence $(r_j)_{j\in J}$ costs $\Oh(d)$ time.
    The set $J'$ can be computed in $\Oh(|J'|+d+d^2)=\Oh(|J'|+d^2)$ time.
    The applications of \cref{fct:verifyRj} cost $\Oh(d^2)$ time each (in the \modelname model),
    for a total of $\Oh(|J'|\cdot d^2)$ time in the \modelname model.
    This is also a (crude) upper bound on the output size, so the final filtering step works in $\Oh((1+|J'|)d^2)$ time.
    Overall, the running time in the \modelname model is $\Oh((1+|J'|)d^2)$.

    It remains to bound $|J'|$. Observe that each interval $\fragmentco{r_j}{r_{j+1}}$ (possibly except for the last one with $j=\max J$) has length at most $\tau+d_T$.
    Moreover, the total length of any $s$ intervals $\fragmentco{r_j}{r_{j+1}}$ with $0<r_j < r_{j+1} < n$
    is at least $s\tau-d_T$. On top of that, there might be one non-empty interval with $r_j=0$ and one non-empty interval with $r_{j+1}=n$.
    We conclude that
    \[ |J'| \le 2 + \frac{\sum_{j\in J'\sm \{\max J\}} |\fragmentco{r_j}{r_{j+1}}|+d_T}{\tau}
    \le 2+\frac{\left|\bigcup_{v\in \Hv} \fragment{v-\tau-d_T}{v+\tau+d_T}\right|+d_T}{\tau}.\]
    Since $\tau = \Theta(\max(|Q|,d))$ and $d_T,d_P,k,\ktotm=\Oh(d)$,
    the bound of \cref{lem:heavy-bound} yields that $|J'|=\Oh(d+{d^2}/{\tradeoff})$.
    Hence, the total running time is $\Oh(d^3+{d^4}/{\tradeoff})$ just as claimed.
\end{proof}

\subsection{Computing Occurrences Starting at Light Positions}
\label{sus:light}
\subsubsection{Combinatorial Insights}

The following fact follows directly from~\cref{def:light}.
\begin{fact}\label{fact:restrlckd}
    For any light position $v$ in $T$ and for any $w\in \fragmentoo{v+m-\ktotm}{v+m+\ktotm}$, we have $\res{\mathcal{L}^T}{\fragmentco{v}{w}}=\{T\fragmentco{\ell}{r} \in \mathcal{L}^T : \fragmentco{\ell}{r} \subseteq \fragmentco{v}{v+m}\}$.
    \lipicsEnd
\end{fact}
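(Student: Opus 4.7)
The plan is a direct verification of both inclusions by unpacking the four bullets of Definition~\ref{def:light}. The key observation is that the second and third bullets provide buffer zones of width $\ktotm$ around $v$ and $v+m$ that contain no locked fragment of $T$, and the hypothesis $w \in \fragmentoo{v+m-\ktotm}{v+m+\ktotm}$ places $w$ squarely inside the right buffer zone; together these turn the statement into a short case analysis on the endpoints $\ell, r$ of any candidate locked fragment $T\fragmentco{\ell}{r} \in \mathcal{L}^T$.

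For the ``$\subseteq$'' direction, I would take a locked fragment $T\fragmentco{\ell}{r}\in \mathcal{L}^T$ whose intersection with $\fragmentco{v}{w}$ is non-empty, so $\ell < w$ and $r > v$. Emptiness of $\res{\mathcal{L}^T}{\fragmentco{v-\ktotm}{v+\ktotm}}$ forces this fragment to lie entirely on one side of the buffer interval; since $r > v$ excludes the left side, this gives $\ell \geq v+\ktotm$, and in particular $\ell \geq v$. Symmetrically, emptiness of $\res{\mathcal{L}^T}{\fragmentco{v+m-\ktotm}{v+m+\ktotm}}$ combined with $\ell < w < v+m+\ktotm$ rules out $\ell \geq v+m+\ktotm$, so $r \leq v+m-\ktotm$, hence $r \leq v+m$. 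Together these yield $\fragmentco{\ell}{r} \subseteq \fragmentco{v+\ktotm}{v+m-\ktotm} \subseteq \fragmentco{v}{v+m}$.

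For the ``$\supseteq$'' direction, I would start with $T\fragmentco{\ell}{r}\in \mathcal{L}^T$ satisfying $\fragmentco{\ell}{r}\subseteq \fragmentco{v}{v+m}$. The same right buffer zone, combined with the trivial bound $\ell < r \leq v+m < v+m+\ktotm$, again forces $r \leq v+m-\ktotm$. Since by hypothesis $w > v+m-\ktotm$, we obtain $\ell < r \leq v+m-\ktotm < w$ and $r > \ell \geq v$, so $\fragmentco{\ell}{r}$ is a non-empty subset of $\fragmentco{v}{w}$; in particular, it intersects $\fragmentco{v}{w}$.

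I do not foresee any substantive obstacle: the statement is a bookkeeping consequence of how the buffer zones in Definition~\ref{def:light} are set up. The only nuance is keeping track of which direction of each strict inequality is needed and invoking the definition $\res{\mathcal{U}}{I}=\{S\fragmentco{x}{y}\in \mathcal{U} : \fragmentco{x}{y}\cap I\neq \emptyset\}$ from Definition~\ref{def:restrset}.
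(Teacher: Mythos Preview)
Your proposal is correct and matches the paper's approach: the paper actually omits the proof entirely, stating only that the fact ``follows directly from~\cref{def:light},'' and your verification fills in precisely the expected details using the second and third bullets of that definition. The only implicit assumption you make is that locked fragments are non-empty (needed for $\ell < r$ in the $\supseteq$ direction), which is harmless in context.
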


For each position $v$ of $T$, set $\rho(v)\in \fragment{x_T}{y_T}$ such that
$\A_T(T\fragmentco{v}{n})=\Q\fragmentco{\rho(v)}{y_T}$.
Observe that
\begin{align*}\edl{T}{Q}&=\ed(T\fragmentco{0}{v},\Q\fragmentco{x_T}{\rho(v)})+\ed(T\fragmentco{v}{n},\Q\fragmentco{\rho(v)}{y_T})\\
    &=\eds{T\fragmentco{0}{v}}{\rot^{-\rho(v)}(Q)}+\ed(T\fragmentco{v}{n},\rot^{-\rho(v)}(Q)^*).\end{align*}

\newcommand{\Lck}{\mathcal{L}}
\begin{lemma}\label{fact:rot}
    For any two positions $v<w$ of $T$ such that $\res{\mathcal{L}^T}{\position{v}}=\res{\mathcal{L}^T}{\position{w}}=\emptyset$,
    we have
    \[\edl{T\fragmentco{v}{w}}{Q}=\ed(T\fragmentco{v}{w},\Q\fragmentco{\rho(v)}{\rho(w)}) = \sum_{L \in \Lck^T_{\fragmentco{v}{w}}} \edl{L}{Q}.\] 
\end{lemma}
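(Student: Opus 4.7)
The plan is to sandwich all three quantities using the structural decomposition of $T$ guaranteed by \cref{lem:klocked} together with the optimality of the alignment $\A_T$. By \cref{lem:klocked}, we have $T=L_1 \cdot \bigodot_{i=1}^{\ell-1}(Q^{\alpha_i} L_{i+1})$ with $\edl{T}{Q}=\sum_{r=1}^{\ell}\edl{L_r}{Q}=d_T$. The assumption $\res{\mathcal{L}^T}{\position{v}}=\res{\mathcal{L}^T}{\position{w}}=\emptyset$ (combined with the half-open convention for fragments) forces $v$ and $w$ to lie in some $Q^{\alpha_i}$ and $Q^{\alpha_j}$ blocks, respectively (where the left boundary of such a block is allowed but the right boundary is not, since it is the starting position of the next non-empty $L_{r+1}$). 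Consequently, $T\fragmentco{v}{w}$ admits the decomposition
\[
T\fragmentco{v}{w}=Q_a\cdot L_{i+1}\cdot Q^{\alpha_{i+1}}\cdot L_{i+2}\cdots L_j\cdot Q_b,
\]
where $Q_a,Q_b$ are, respectively, a suffix and a prefix of some $Q^{\alpha_\cdot}$ block, and the sequence $L_{i+1},\ldots,L_j$ is exactly $\Lck^T_{\fragmentco{v}{w}}$.

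For the upper bound on $\ed(T\fragmentco{v}{w},\Q\fragmentco{\rho(v)}{\rho(w)})$, I would use the restricted alignment $\A_T|_{T\fragmentco{v}{w}}$; by definition of $\rho$ this is indeed an alignment of $T\fragmentco{v}{w}$ onto $\Q\fragmentco{\rho(v)}{\rho(w)}$. Applying the equality case of \cref{fct:ali} to $\A_T$ against the global decomposition $T=L_1\cdot Q^{\alpha_1}\cdots L_\ell$, the per-part costs sum to $d_T=\sum_r\edl{L_r}{Q}$. Since each $L_r$-part contributes at least $\edl{L_r}{Q}$ and each $Q^{\alpha_\cdot}$-part contributes at least $0$, saturation forces every $Q^{\alpha_\cdot}$-part to have cost $0$ and every $L_r$-part to have cost exactly $\edl{L_r}{Q}$. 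Restricting the sum to the parts contained in $T\fragmentco{v}{w}$ (which includes $Q_a,Q_b$, the interior $Q^{\alpha_r}$ blocks, and the $L_r$'s in $\Lck^T_{\fragmentco{v}{w}}$) yields $\cost(\A_T|_{T\fragmentco{v}{w}})=\sum_{L\in\Lck^T_{\fragmentco{v}{w}}}\edl{L}{Q}$.

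For the lower bound on $\edl{T\fragmentco{v}{w}}{Q}$, I would take an arbitrary alignment $\B:T\fragmentco{v}{w}\onto Q^\infty\fragmentco{i'}{j'}$ and apply \cref{fct:ali} to the decomposition above. Each induced image $\B(L_r)$ is a substring of $Q^\infty$, so $\ed(L_r,\B(L_r))\ge\edl{L_r}{Q}$. Hence
\[
\cost(\B)\ge\sum_{r=i+1}^{j}\ed(L_r,\B(L_r))\ge\sum_{L\in\Lck^T_{\fragmentco{v}{w}}}\edl{L}{Q}.
\]
Taking the minimum over $\B$ gives $\edl{T\fragmentco{v}{w}}{Q}\ge\sum_{L\in\Lck^T_{\fragmentco{v}{w}}}\edl{L}{Q}$. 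Together with the trivial inequality $\edl{T\fragmentco{v}{w}}{Q}\le\ed(T\fragmentco{v}{w},\Q\fragmentco{\rho(v)}{\rho(w)})$ (since $\Q\fragmentco{\rho(v)}{\rho(w)}$ is itself a substring of $Q^\infty$), the three quantities coincide.

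The only subtlety I foresee is verifying carefully that the half-open convention indeed places $v$ and $w$ in $Q^{\alpha_\cdot}$ blocks so that the decomposition of $T\fragmentco{v}{w}$ is of the stated form and the enumeration $L_{i+1},\ldots,L_j$ exactly matches $\Lck^T_{\fragmentco{v}{w}}$; once this is pinned down (using the positivity $\edl{L_r}{Q}>0$ for interior $r$ from \cref{lem:klocked}, which ensures these fragments are non-empty), the rest is a direct combination of \cref{fct:ali} with the optimality of $\A_T$.
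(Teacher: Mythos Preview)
Your proof is correct and follows essentially the same sandwich strategy as the paper: both close the chain $\ed(T\fragmentco{v}{w},\Q\fragmentco{\rho(v)}{\rho(w)}) \le \sum_{L}\edl{L}{Q} \le \edl{T\fragmentco{v}{w}}{Q} \le \ed(T\fragmentco{v}{w},\Q\fragmentco{\rho(v)}{\rho(w)})$, and the lower bound step is identical. The only organizational difference is in the upper bound: the paper uses the optimality of $\A_T$ (via \cref{fct:ali}) to write $\ed(T\fragmentco{v}{w},\Q\fragmentco{\rho(v)}{\rho(w)}) = d_T - \ed(\text{prefix part}) - \ed(\text{suffix part})$ and then lower-bounds the prefix and suffix costs by the corresponding locked sums, whereas you argue the equivalent saturation directly on the middle (each $Q^{\alpha_r}$-block has $\A_T$-cost $0$ and each $L_r$ has $\A_T$-cost exactly $\edl{L_r}{Q}$).
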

\begin{proof}
    Set $\res{\mathcal{L}^T}{\fragmentco{v}{w}} = \{L^T_j \in \mathcal{L}^T: j \in
    \fragment{j_1}{j_2}\}$, and observe
    that all elements of this set are fragments of $T\fragmentco{v}{w}$.
    Now, we have
    \begin{align*}
& \ed(T\fragmentco{v}{w},\Q\fragmentco{\rho(v)}{\rho(w)})\\
& \quad = \ed(T,\Q\fragmentco{x_T}{y_T}) - \ed(T\fragmentco{0}{v},\Q\fragmentco{x_T}{\rho(v)}) - \ed(T\fragmentco{w}{n},\Q\fragmentco{\rho(w)}{y_T}))\\
& \quad \leq d_T - \sum_{j=1}^{j_1-1} \edl{L^T_j}{Q} - \sum_{j=j_2+1}^{\ell^T} \edl{L^T_j}{Q}\\
& \quad = \sum_{j=j_1}^{j_2} \edl{L^T_j}{Q}\\
& \quad \leq \edl{T\fragmentco{v}{w}}{Q}\\
& \quad \leq \ed(T\fragmentco{v}{w},\Q\fragmentco{\rho(v)}{\rho(w)}).\qedhere
    \end{align*}
\end{proof}

\begin{lemma}\label{lem:ub}
    Consider a light position $v$ of $T$.
    If $\ed(L_1^P,\rot^{-\rho(v)}(Q)^*) < \lpref$, then there is a $w\in \fragment{v}{n}$
    such that
    \[\ed(P,T\fragmentco{v}{w})\leq \ed(L_1^P,\rot^{-\rho(v)}(Q)^*)+\sum_{i=2}^{\ell^P}\edl{L^P_i}{Q}+\sum_{L\in\res{\mathcal{L}^T}{\fragmentco{v}{v+m}} }\edl{L}{Q}.\]
\end{lemma}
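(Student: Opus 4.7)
My plan is to construct the required alignment $\A : P \to T\fragmentco{v}{w}$ by \emph{factoring through $\Q$}: first build an auxiliary alignment $\A' : P \to \Q\fragmentco{\rho(v)}{u_{\mathrm{end}}}$ whose cost matches the first two summands of the claimed bound, then compose it with (the inverse of) the restriction of $\A_T$ to $T\fragmentco{v}{w}$, where $w$ is chosen so that $(w, u_{\mathrm{end}}) \in \A_T$. The cost contributed by this restriction will give the third summand via \cref{fact:rot,fact:restrlckd}.

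\textbf{Building $\A'$.} The scaffold is the decomposition $P = L_1^P \cdot \bigodot_{i=1}^{\ell^P - 1}(Q^{\alpha_i} L^P_{i+1})$ from \cref{lem:klocked}. The essential use of the hypothesis $\ed(L_1^P, \rot^{-\rho(v)}(Q)^*) < \lpref$ is via \cref{def:klocked} instantiated with $p \coloneqq (-\rho(v)) \bmod |Q|$: it yields an integer $j$ such that $\ed(L_1^P, Q^\infty\fragmentco{\rho(v)}{u_1}) = \ed(L_1^P, \rot^{-\rho(v)}(Q)^*)$ with $u_1 \equiv 0 \pmod{|Q|}$. This \emph{phase-$0$} endpoint is the linchpin of the construction: after aligning $L_1^P$ to $Q^\infty\fragmentco{\rho(v)}{u_1}$ at the witnessed cost, every subsequent block $Q^{\alpha_i}$ of $P$ matches at cost $0$ against the length-$\alpha_i|Q|$ substring of $\Q$ following a phase-$0$ position (which literally equals $Q^{\alpha_i}$), and every interior locked fragment $L^P_i$ with $i \in \fragmentoo{1}{\ell^P}$ locks to an integer power $Q^{\beta_i}$ by the definition of locked, aligning at cost $\edl{L^P_i}{Q}$ while preserving phase-$0$. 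The suffix locked fragment $L^P_{\ell^P}$ uses $\edl{L^P_{\ell^P}}{Q} = \ed(L^P_{\ell^P}, Q^*)$ to finish at cost $\edl{L^P_{\ell^P}}{Q}$. Concatenating these pieces yields $\A'$ of cost $\ed(L_1^P, \rot^{-\rho(v)}(Q)^*) + \sum_{i=2}^{\ell^P}\edl{L^P_i}{Q}$.

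\textbf{Composition and the $T$-side cost.} Choose $w$ with $(w, u_{\mathrm{end}}) \in \A_T$; since $v$ is light, $v$ lies strictly outside every locked fragment of $T$, so $(v,\rho(v))\in \A_T$ sits on a match-transition, and the restriction of $\A_T$ to $T\fragmentco{v}{w}$ is a bona fide alignment onto $\Q\fragmentco{\rho(v)}{u_{\mathrm{end}}}$; call it $\mathcal{B}$. Composing $\A'$ with $\mathcal{B}^{-1}$ produces the desired $\A$, whose cost is at most $\mathrm{cost}(\A') + \mathrm{cost}(\mathcal{B})$. Since $\mathrm{cost}(\A') + d_T = \cO(d) < \ktotm = 7d$, a direct length count places $w$ inside $\fragmentoo{v+m-\ktotm}{v+m+\ktotm}$; hence \cref{fact:restrlckd} identifies $\res{\mathcal{L}^T}{\fragmentco{v}{w}}$ with $\{L \in \mathcal{L}^T : L \subseteq T\fragmentco{v}{v+m}\}$, and \cref{fact:rot}, whose hypotheses hold because both $v$ and $w$ avoid locked fragments of $T$, bounds $\mathrm{cost}(\mathcal{B})$ by $\edl{T\fragmentco{v}{w}}{Q} = \sum_{L \in \res{\mathcal{L}^T}{\fragmentco{v}{v+m}}} \edl{L}{Q}$. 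The main obstacle is exactly the phase bookkeeping for $L_1^P$, where the $\lpref$-locked hypothesis is used precisely to land the alignment on a phase-$0$ $\Q$-position so that the downstream perfect-period blocks of $P$ incur no extra cost; a secondary technicality is the boundary case $u_{\mathrm{end}} > y_T$ (which can only occur when $v$ is within $\cO(d)$ of $n-m$), resolved by taking $w = n$ and truncating the last piece of $\A'$, absorbing the overshoot into the tail contributions.
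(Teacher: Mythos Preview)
Your approach is essentially the same as the paper's: both construct an alignment $P \to \Q\fragmentco{\rho(v)}{\cdot}$ of the claimed cost using the $\lpref$-locked property of $L_1^P$ to land on a phase-$0$ position (the paper's Claim~``claim:rotate''), then pull this back through $\A_T$ (the paper uses the triangle inequality rather than explicit composition, but this is cosmetic) and invoke \cref{fact:rot,fact:restrlckd}.

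One point deserves correction. Your ``boundary case $u_{\mathrm{end}} > y_T$'' never arises for a light position $v$, and your proposed resolution (truncate $\A'$ and ``absorb the overshoot'') is not obviously sound. The paper dispatches this cleanly in its Claim~``claim:rhow'': since $v$ is light, $\res{\mathcal{L}^T}{\fragmentco{v+m-\ktotm}{v+m+\ktotm}}=\emptyset$, and since $\mathcal{L}^T$ always contains a suffix of $T$ (by \cref{lem:klocked}), this forces $v+m+\ktotm \le n$. Then $\A_T$ matches $T\fragmentco{v+m-\ktotm}{v+m+\ktotm}$ exactly, so every $y\in \fragmentoo{\rho(v)+m-\ktotm+d_T}{\rho(v)+m+\ktotm-d_T}$ equals $\rho(w)$ for some $w\in\fragmentco{0}{n}$; your $u_{\mathrm{end}}$ lies in this window because $\mathrm{cost}(\A') < \lpref + d_P < \ktotm - d_T$. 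Replacing your hand-wavy boundary paragraph with this argument makes the proof complete and matches the paper's.
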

\begin{proof}
    We first prove two auxiliary claims.

    \begin{claim}\label{claim:rotate}
        We have
        $\ed(P,\rot^{-\rho(v)}(Q)^*) = \ed(L_1^P,\rot^{-\rho(v)}(Q)^*)+\sum_{i=2}^{\ell^P}\edl{L^P_i}{Q} < \lpref+d_P$.
    \end{claim}
    \begin{claimproof}
        The inequality $\ed(P,\rot^{-\rho(v)}(Q)^*) \geq \ed(L_1^P,\rot^{-\rho(v)}(Q)^*)+\sum_{i=2}^{\ell^P}\edl{L^P_i}{Q}$ holds trivially.

        By~\cref{lem:klocked}, we have $P=L_1^P Q^{\alpha_1} L_2^P Q^{\alpha_2} \cdots Q^{\alpha_{\ell^P-1}} L_{\ell^P}^P$ for some non-negative integers $\alpha_i$.
        Since $L_1^P$ is $\lpref$-locked, we have
        $\ed(L_1^P,\rot^{-\rho(v)}(Q)^*)=\ed(L_1^P,Q^\infty\fragmentco{\rho(v)}{jq})$ for some non-negative integer $j$.
        Further, for each $i \in \fragmentoo{1}{\ell^P}$,
        we have $\edl{L^P_i}{Q}=\ed(L^P_i,Q^{\beta_i})$ for some non-negative integer $\beta_i$.
        Finally, we have $\edl{L^P_{\ell^P}}{Q}=\ed(L^P_{\ell^P},Q^\infty\fragmentco{0}{x})$ for some non-negative integer $x$.

        Set $\gamma = \alpha_1+\sum_{i=2}^{\ell^P-1}(\alpha_i+\beta_i)$.
        The above discussion implies that there is an alignment of $P$ with the prefix $Q\fragmentco{\rho(v)}{jq}Q^\gamma Q^\infty\fragmentco{0}{x}$
        of $Q\fragmentco{\rho(v)}{jq}Q^\infty$ that costs $\ed(L_1^P,\rot^{-\rho(v)}(Q)^*)+\sum_{i=2}^{\ell^P}\edl{L^P_i}{Q}$, thus proving
        that $\ed(P,\rot^{-\rho(v)}(Q)^*) \leq \ed(L_1^P,\rot^{-\rho(v)}(Q)^*)+\sum_{i=2}^{\ell^P}\edl{L^P_i}{Q}$.
        This concludes the proof of the claimed equality.

        The claimed inequality holds since $\ed(L_1^P,\rot^{-\rho(a)}(Q)^*)< \lpref$ and $\sum_{i=2}^{\ell^P}\edl{L^P_i}{Q}\leq d_P$.
    \end{claimproof}

    \begin{claim}\label{claim:rhow}
        There is a $w\in \fragmentoo{v+m-\ktotm}{v+m+\ktotm}\cap \fragmentco{0}{n}$ such that $\ed(P,\rot^{-\rho(v)}(Q)^*)=\ed(P,Q^\infty \fragmentco{\rho(v)}{\rho(w)})$.
    \end{claim}
    \begin{claimproof}
        Recall that $\Lck^T_{\fragmentco{v+m-\ktotm}{v+m+\ktotm}} = \emptyset$
        holds because $v$ is a light position of $T$.
        Since $\Lck^T$ contains a suffix of $T$, we conclude that either $n \le v+m-\ktotm$ or $v+m+\ktot \le n$.
        The former case contradicts $v \le n-m+k < n-m+\ktotm$, so $T\fragmentco{v+m-\ktotm}{v+m+\ktotm}$ must be a fragment of $T$ disjoint with all locked fragments in $\Lck^T$.
        This means that $\A_T$ matches $T\fragmentco{v+m-\ktotm}{v+m+\ktotm}$ without any edits to $\A_T(T\fragmentco{v+m-\ktotm}{v+m+\ktotm})$.
        Since $(v,\rho(v))\in \A_T$ and the cost of $\A_T$ is $d_T$, the fragment $\A_T(T\fragmentco{v+m-\ktotm}{v+m+\ktotm})$ must contain $\Q\fragmentco{\rho(v)+m-\ktotm+d_T}{\rho(v)+m+\ktotm-d_T}$.
        Consequently, for every $y\in \fragmentoo{\rho(v)+m-\ktotm+d_T}{\rho(v)+m+\ktotm-d_T}$, there exists $w\in \fragmentoo{v+m-\ktotm}{v+m+\ktotm}\cap \fragmentco{0}{n}$ such that $y=\rho(w)$.
        Since $\ed(P,\rot^{-\rho(v)}(Q)^*)\le \lpref + d_P < \ktotm - d_T$ holds by \cref{claim:rotate},
        such $w$ exist in particular for $y$ chosen so that $\ed(P,\rot^{-\rho(v)}(Q)^*)=\ed(P,Q^\infty \fragmentco{\rho(v)}{y})$.
    \end{claimproof}

    We get the desired bound by combining \cref{claim:rhow,claim:rotate,fact:restrlckd,fact:rot} via the triangle inequality;
    observe that \cref{fact:rot} is applicable because $\Lck^T_{\position{w}}\sub \Lck^T_{\fragmentco{v+m-\ktotm}{v+m+\ktotm}}=\emptyset$.
    \begin{align*}
    &\ed(P,T\fragmentco{v}{w})\\
    &\quad \leq \ed(P,Q^\infty \fragmentco{\rho(v)}{\rho(w)}) + \ed(T\fragmentco{v}{w},Q^\infty \fragmentco{\rho(v)}{\rho(w)})   & (\text{triangle inequality})\\
    &\quad = \ed(P,\rot^{-\rho(v)}(Q)^*) + \ed(T\fragmentco{v}{w},Q^\infty \fragmentco{\rho(v)}{\rho(w)}) 					& (\text{\cref{claim:rhow}})\\
    &\quad = \ed(L_1^P,\rot^{-\rho(v)}(Q)^*)+\sum_{i=2}^{\ell^P}\edl{L^P_i}{Q} + \sum_{L\in \res{\mathcal{L}^T}{\fragmentco{v}{w}}}\edl{L}{Q}			& (\text{\cref{claim:rotate,fact:rot}})\\
    &\quad = \ed(L_1^P,\rot^{-\rho(v)}(Q)^*)+\sum_{i=2}^{\ell^P}\edl{L^P_i}{Q} + \sum_{L\in \res{\mathcal{L}^T}{\fragmentco{v}{v+m}}}\edl{L}{Q}.			& (\text{\cref{fact:restrlckd}})
    \end{align*}
    This completes the proof of the lemma.
\end{proof}

For a position $v$ and a locked fragment $L \in \mathcal{L}^P \cup \mathcal{L}^T$,
we write $\markf(v,L)$ for the number of marks placed in
$v$ due to pairs of locked fragments that contain $L$; formally:
\begin{align*}
\markf(v,L)= \begin{cases}
    \sum_{L^T \in \Lck^T} \markf(v,\ktotm ,L,L^T) & \text{if }L\in \Lck^P;\\
    \sum_{L^P \in \Lck^P} \markf(v,\ktotm ,L^P,L) & \text{if }L\in \Lck^T.\\
\end{cases}
\end{align*}

\begin{definition}
    For a light position $v$ of $T$, set 
    \[\mathcal{D}(v) \coloneqq \{L^P_1\} \cup \{L \in \mathcal{L}^P\cup \mathcal{L}^T_{\fragmentco{v}{v+m}} : \markf(v,L) < \edl{L}{Q}\}.
    \tag*{\lipicsEnd}\]
\end{definition}

Let us now provide some intuition on what follows.
Consider an alignment $\mathcal{A}: P \onto T\fragmentco{v}{w}$, where $v$ is a light position of $T$
and the cost of $\mathcal{A}$ is not larger than $k$.
In the next lemma, we essentially lower bound the cost of the restriction of such an alignment $\A$
to each locked fragment. For instance, we lower bound $\ed(L,\A(L))$ for each $L\in \mathcal{L}^P$.
Our lower bound is positive only for elements of $\mathcal{D}(v)$.
Consider some locked fragment $L \in \mathcal{D}(v) \cap \mathcal{L}^P$ other than $L^P_1$.
Roughly speaking, at most $\markf(v,L)$ errors of $L$ with a fragment $Q'$ of $Q^\infty$
cancel out with errors between $\mathcal{A}(L)$ and $Q'$, yielding a lower bound $\ed(L,\A(L)) \ge \edl{L}{Q} - \markf(v,L)$.
Then, the definition of $\mathcal{D}(v)$ guarantees that,
for any $L \in \mathcal{D}(v) \cap \mathcal{L}^P$, $\A(L)$ is disjoint from all $L' \in \mathcal{D}(v) \cap \mathcal{L}^T$.
One can exploit this property to obtain a lower bound for the cost of~$\A$ by showing that we can
sum over the lower bounds for individual locked fragments in $\mathcal{D}(v)$.
In fact, we use this reasoning to lower bound the cost of an alignment between two other strings,
obtained from $P$ and $T$, respectively, via the deletion of some fragments;
this happens in the proof of \cref{lem:nofakelight} in \cref{sec:redundantsqrtk}.

\begin{lemma}\label{lem:lb_local}
    Consider a light position $v$ of $T$
    and a locked fragment $L \in \mathcal{L}^P \cup \mathcal{L}^T_{\fragmentco{v}{v+m}}$.
    \begin{enumerate}
        \item If $L = T\fragmentco{v+\ell}{v+r} \in \mathcal{L}^T$, then every $U\in \{P\fragmentco{i}{j} : i \geq \ell-k \text{ and } j \leq r+k \}$ satisfies
            \[ \ed(L,U)\geq \edl{L}{Q} - \markf(v,L).\]
        \item If $L = P\fragmentco{\ell}{r} \in \mathcal{L}^P\setminus \{L^P_1\}$, then every $U\in \{T\fragmentco{i}{j} : i \geq v+\ell-k \text{ and } j \leq v+r+k \}$ satisfies
            \[\ed(L,U) \geq \edl{L}{Q} - \markf(v,L).\]
        \item If $L$ is $L^P_1$, then for every $U\in  \{ T\fragmentco{v}{v+j} : j \in \fragment{|L|-k}{|L|+k}\}$ satisfies
            \[\ed(L,U)\geq \ed(L,\rot^{-\rho(v)}(Q)^*) - \markf(v,L).\]
    \end{enumerate}
\end{lemma}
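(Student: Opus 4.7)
My plan is to prove all three cases of \cref{lem:lb_local} by a common triangle-inequality template. For each case, I would pick a witness string $V$ that realizes the right-hand side quantity: in Cases 1 and 2, $V$ is a substring of $Q^\infty$ with $\ed(L,V) = \edl{L}{Q}$, whose existence is guaranteed by the appropriate clause of the locked-fragment definition (the ``some $Q^\alpha$'' clause, or the prefix/suffix clause when $L$ is at the endpoint of its ambient string); in Case 3, $V$ is $\rot^{-\rho(v)}(Q)^*$. Then $\ed(L,U) + \ed(U,V) \geq \ed(L,V)$ reduces the claim to showing $\ed(U,V) \leq \markf(v,L)$. Note that if $\edl{L}{Q} \leq \markf(v,L)$ the statement is trivial, so I may assume we are in the nontrivial regime.

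To bound $\ed(U,V)$, I would construct an explicit alignment whose cost is accounted for by the marks contributing to $\markf(v,L)$. The construction is driven by the canonical decomposition of $U$'s ambient string (from \cref{lem:klocked}): $P = L_1^P Q^{\alpha_1} L_2^P \cdots L_{\ell^P}^P$ (or the analogous one for $T$ in Case 2). Inside each $Q^{\alpha}$-segment, the natural alignment to $V$ is exact. For each locked fragment $L'$ overlapping $U$'s position range, I would charge a cost to $L'$; the combinatorial position conditions in the lemma ($i \geq \ell-k$, $j \leq r+k$), together with the choice $\ktotm = 7d \geq \lpref + 2k + d_P + d_T$, ensure that every such $L'$ satisfies the window condition $0 \in \fragmentoo{\ell_T - \ktotm - r_P}{r_T + \ktotm - \ell_P}$ of \cref{def:marking}, so $\markf(v,\ktotm,L',L) > 0$.

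The quantitative core is matching the per-fragment contribution to $\min\{\edl{L'}{Q}, \edl{L}{Q}\}$ (or $\edl{L}{Q}$ when $L' = L_1^P$, mirroring the special-case contribution for the $\lpref$-locked prefix). When $\edl{L'}{Q} \leq \edl{L}{Q}$, aligning $L' \cap U$ against its own $Q$-power witness already gives at most $\edl{L'}{Q}$ edits. When $\edl{L'}{Q} > \edl{L}{Q}$, I reroute the alignment: use the locked structure of $L$ (which forces its best alignment to~$Q^*$ to align through the corresponding $Q$-stretches exactly) to conclude that the edits in the $L'$ region of $U$ can be absorbed at cost at most $\edl{L}{Q}$. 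For $L_1^P$, I rely on $L_1^P$ being $\lpref$-locked, so whenever $\ed(L_1^P, \rot^{-p}Q^*) \leq \lpref$ the optimal alignment aligns $L_1^P$ into a specific $Q^\infty$-segment determined by $p$, which is what makes the $\edl{L}{Q}$ (full) contribution tight and matches the $V = \rot^{-\rho(v)}(Q)^*$ choice of Case 3. Summing over overlapping $L'$ yields $\ed(U,V) \leq \markf(v,L)$.

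The main obstacle I anticipate is the bookkeeping at the endpoints of $U$, where a locked fragment may be truncated and where the $\pm k$ slack in $i, j$ could naively produce uncontrolled extra edits. Absorbing this slack cleanly is exactly why $\ktotm$ is chosen strictly larger than $k + \lpref + d_P + d_T$: the $\ktotm$-window around $L$ in \cref{def:marking} strictly contains any truncation region arising from the $\pm k$ slack, so no additional error terms appear. A secondary subtlety is Case 3, where $V$ is the infinite periodic string $\rot^{-\rho(v)}(Q)^*$ rather than a finite substring of $Q^\infty$; handling this requires invoking the $\lpref$-locked clause of \cref{def:klocked} to reduce the comparison to a specific $Q^\infty\fragmentco{-\rho(v)\bmod q}{jq}$, after which the argument proceeds as in Cases 1 and 2.
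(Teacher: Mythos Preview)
Your triangle-inequality direction is inverted, and the resulting sub-goal $\ed(U,V)\le\markf(v,L)$ is false in general. You choose $V$ as a \emph{fixed} substring of $Q^\infty$ determined only by $L$ (realizing $\ed(L,V)=\edl{L}{Q}$), but $U$ ranges over fragments whose length can be anything in $\fragment{0}{|L|+2k}$. Take Case~2 with $\markf(v,L)=0$, i.e., no locked fragment of $T$ lies in the $\ktotm$-window around $L$. Then $U$ is itself a substring of $Q^\infty$, yet nothing forces $|U|=|V|$ or that $U$ starts at the correct phase modulo $q$; already $\ed(U,V)\ge\big||U|-|V|\big|>0=\markf(v,L)$. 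Your ``explicit alignment'' paragraph cannot fix this: aligning each $Q$-power segment of $U$ exactly inside $V$ presupposes that $V$ contains matching $Q$-power segments at the right offsets and of the right total length, which a witness for $L$ has no reason to supply.

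The paper's proof places the intermediate point on the other side of the triangle: it chooses $Q^\infty\fragmentco{x}{y}$ as a witness for $\edl{U}{Q}$ rather than for $L$. Then $\ed(L,Q^\infty\fragmentco{x}{y})\ge \edl{L}{Q}$ is automatic (any substring of $Q^\infty$ suffices for this direction), while $\ed(U,Q^\infty\fragmentco{x}{y})=\edl{U}{Q}\le \sum_{K}\edl{K}{Q}$, the sum ranging over locked fragments $K$ of the ambient string $Y$ overlapping $U$; this last sum is at most $\markf(v,L)$ by the marking definition, since each such $K$ satisfies the window condition with slack $\ktotm>k$. That upper bound on $\edl{U}{Q}$ via the locked decomposition of $Y$ is essentially your ``explicit alignment'' idea, but correctly targeted. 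For Case~3 the paper additionally pins $x=\rho(v)$ so that $Q^\infty\fragmentco{x}{y}$ is a prefix of $\rot^{-\rho(v)}(Q)^\infty$; the light-position conditions (emptiness of $\mathcal{L}^T$ near $v$ and near $v+|L_1^P|$) together with \cref{fact:rot} guarantee that this constrained choice still realizes $\edl{U}{Q}$, so $\ed(L,Q^\infty\fragmentco{x}{y})\ge \ed(L_1^P,\rot^{-\rho(v)}(Q)^*)$ as required.
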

\begin{proof}
    Consider some $L \in \mathcal{L}^P\cup \mathcal{L}^T_{\fragmentco{v}{v+m}}$.
    If $L\in \mathcal{L}^T$, let $Y=P$; otherwise, let $Y=T$.
    Further, let $U \coloneqq Y\fragmentco{u}{w}$ denote
    any of the fragments specified in the statement of the lemma, and let
    \begin{align*}
        \zeta \coloneqq
        \begin{cases}
            \ed(L_1^P,\rot^{-\rho(v)}(Q)^*) & \text{if } L = L^P_1,\\
            \edl{L}{Q} & \text{otherwise.}
        \end{cases}
    \end{align*}
    Consult~\cref{fig:lbed} for an illustration of the setting.

    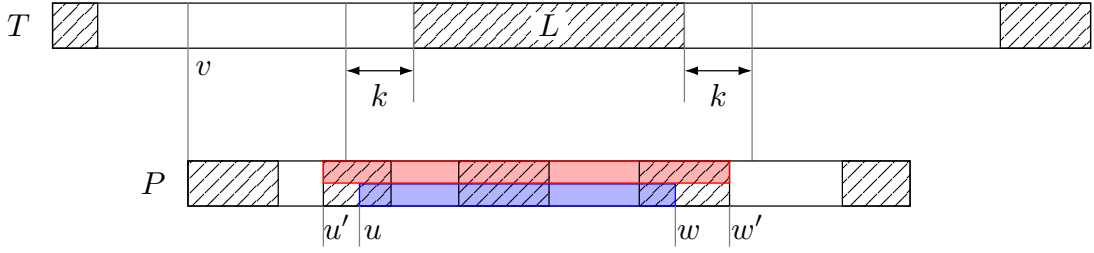
\begin{figure}[t!]
        \centering
        \scalebox{1.2}{
        \begin{tikzpicture}

\draw [pattern={Lines[angle=45,distance=4pt]},pattern color=black]  (5,0) rectangle (8,0.5);
\draw [pattern={Lines[angle=45,distance=4pt]},pattern color=black]  (1,0) rectangle (1.5,0.5);
\draw [pattern={Lines[angle=45,distance=4pt]},pattern color=black]  (11.5,0) rectangle (12.5,0.5);
\fill[white] (6.3,0.1) rectangle (6.7,0.4);
\draw (1,0) rectangle (12.5,0.5);

\node[label = {left: $T$}]  at (1,0.25) {};
\node[label = {right: \hphantom{$T$}}]  at (12.5,0.25) {};
\node[label = {below: $v$}]  at (2.675,0.1) {};
\foreach \x/\y in {6.5/$\strut L$}{
	\node[anchor=base, label = {\y}]  at (\x,-.25) {};
}

\draw[thin, gray] (2.5,0.5) --  (2.5,-1.75);

\draw[thin, gray] (4.25,0.5) --  (4.25,-1.25);
\draw[thin, gray] (8.75,0.5) --  (8.75,-1.25);

\draw[thin, gray] (5,0.5) --  (5,-.6);
\draw[thin, gray] (8,0.5) --  (8,-.6);

\begin{scope}[yshift=-1.75cm]
	\fill[blue!30!white]
    (4.4,0) rectangle (7.9,0.25);
	\fill[red!30!white]
    (4,0.25) rectangle (8.5,0.5);

	\draw [pattern={Lines[angle=45,distance=4pt]},pattern color=black]  (2.5,0) rectangle (3.5,0.5);
	\draw [pattern={Lines[angle=45,distance=4pt]},pattern color=black]  (4,0) rectangle (4.75,0.5);
	\draw [pattern={Lines[angle=45,distance=4pt]},pattern color=black]  (5.5,0) rectangle (6.5,0.5);
	\draw [pattern={Lines[angle=45,distance=4pt]},pattern color=black]  (7.5,0) rectangle (8.5,0.5);
	\draw [pattern={Lines[angle=45,distance=4pt]},pattern color=black]  (9.75,0) rectangle (10.5,0.5);
	\draw (2.5,0) rectangle (10.5,0.5);
    \path[draw=blue]
    (4.4,0) rectangle (7.9,0.245);
    \path[draw=red]
    (4,0.255) rectangle (8.5,0.5);

	\node[label = {left: $P$}]  at (2.5,0.25) {};
	\foreach \x/\y in {4.15/$\strut u'$, 4.55/$\strut u$, 8.05/$\strut w$, 8.65/$\;\strut w'$}{
		\node[anchor=base, label = {\y}]  at (\x,-.75) {};
        \pgfmathsetmacro{\nx}{\x - .15};
        \draw[thin, gray] (\nx,0) -- (\nx,-.45);
	}

	\draw[{Latex[length=1.5mm, width=1mm]}-{Latex[length=1.5mm, width=1mm]}] (4.25,1.5) -- (5,1.5);
	\draw[{Latex[length=1.5mm, width=1mm]}-{Latex[length=1.5mm, width=1mm]}] (8,1.5) -- (8.75,1.5);
	\node[label = {below: $k$}]  at (4.625,1.6) {};
	\node[label = {below: $k$}]  at (8.375,1.6) {};


\end{scope}

\end{tikzpicture}}
        \caption{An illustration of the setting in the proof of~\cref{lem:lb_local}.
            Locked fragments are distinguished by diagonal stripes.
            In this example, $L$ is the sole element of $\mathcal{L}^T_{\fragmentco{v}{v+m}}$ and $Y=P$.
            The fragment $U=P\fragmentco{u}{w}$ is shaded in blue,
        the fragment $P\fragmentco{u'}{w'}$ is shaded in red.}
        \label{fig:lbed}
    \end{figure}


    Let $x$ and $y$ denote integers that satisfy
    $\ed(U, Q^\infty\fragmentco{x}{y})=\edl{U}{Q}$.
    In the case where $L=L^P_1$,
    we choose $x=\rho(v)$ and $y=\rho(w)$ so that $Q^\infty\fragmentco{x}{y}$ is a prefix of $\rot^{-\rho(v)}(Q)^\infty$.
    This is allowed by~\cref{fact:rot} because $\res{\mathcal{L}^T}{\fragmentco{v-\ktotm}{v+\ktotm}}=\res{\mathcal{L}^T}{\fragmentco{v+|L_1^P|-\ktotm}{v+|L_1^P|+\ktotm}}=\emptyset$, and it ensures that the following inequality holds in all three cases (the inequality holds trivially if $L\ne L^P_1$):
    \begin{equation}\label{eq:rot}
        \ed(L, Q^\infty\fragmentco{x}{y}) \geq \zeta.
    \end{equation}
    Further, our marking scheme
    implies
    \begin{equation}\label{eq:dom}
        \markf(v,L) \geq \sum_{K \in \mathcal{L}^Y_{\fragmentco{u}{w}}} \edl{K}{Q}.
    \end{equation}
    Let $Y\fragmentco{u'}{w'}$ denote the fragment of $Y$ that is covered by
    $Y\fragmentco{u}{w}$ and the elements of $\mathcal{L}^Y_{\fragmentco{u}{w}}$.
    We have
    \begin{equation}\label{eq:extlock}
        \ed(U,Q^\infty\fragmentco{x}{y}) = \edl{U}{Q} \leq  \edl{Y\fragmentco{u'}{w'}}{Q} =\sum_{K \in \mathcal{L}^Y_{\fragmentco{u}{w}}} \edl{K}{Q},
    \end{equation}
    where the last equality follows from the properties of locked fragments as computed by \cref{lem:klocked}.

    We are now ready to prove the claimed inequality.
    \begin{align*}
        \ed(L,U) & \geq |\ed(L, Q^\infty\fragmentco{x}{y})-\ed(U,Q^\infty\fragmentco{x}{y})| & \text{\hspace{2cm}(triangle inequality)}\\
                 &  \geq    \ed(L, Q^\infty\fragmentco{x}{y})-\ed(U,Q^\infty\fragmentco{x}{y})  & \\
                 &  \geq  \zeta -\ed(U,Q^\infty\fragmentco{x}{y})& \text{(due to~\eqref{eq:rot})}\\
                 &  \geq  \zeta - \sum_{K \in \mathcal{L}^Y_{\fragmentco{i}{j}}} \edl{K}{Q}
                 & \text{(due to~\eqref{eq:extlock})}\\
                 &  \geq  \zeta - \markf(v,L).                   & \text{(due to \eqref{eq:dom})}
    \end{align*}
    This completes the proof of the lemma.
\end{proof}

\subsubsection{Shrinking Runs of Plain Pairs to Length $o(\sqrt{d})$}\label{sec:redundantsqrtk}

\newcommand{\tbd}{13}
\newcommand{\slack}{30}

We set $\textsf{Red}(P)$ to be equal to
\[\textsf{Special}(P)\cup \{P_i : i\in\fragmentoo{1}{z} \text{ and } \exists_{P\fragmentco{\ell}{r} \in \mathcal{L}^P} \fragmentco{\ell-\tbd\ktotm}{r+\tbd\ktotm} \cap \fragmentco{p_i}{p_{i+1}+\Delta} \neq \emptyset\}.\]
Similarly, we set $\textsf{Red}(T)$ to be equal to
\[\textsf{Special}(T)\cup \{T_i : i\in\fragmentoo{\min J + 1}{\max J + z} \text{ and } \exists_{T\fragmentco{\ell}{r} \in \mathcal{L}^T} \fragmentco{\ell-\tbd\ktotm}{r+\tbd\ktotm} \cap \fragmentco{t_i}{t_{i+1}+\Delta} \neq \emptyset\}.\]
Next, we upper-bound the sizes of these two sets and show how to construct them efficiently.

\begin{lemma}\label{lem:fewreds}
    Given $P$, $T$, $\A_P$, $\A_T$, $\mathcal{L}^P$, and $\mathcal{L}^T$,
    the sets $\textsf{Red}(P)$ and $\textsf{Red}(T)$ are of size $\cO(d)$ and can be
    constructed in time $\cO(d \log\log d)$ in the \modelname model.
\end{lemma}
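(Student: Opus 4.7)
The plan is to treat $\textsf{Red}(P)$ and $\textsf{Red}(T)$ symmetrically; I will focus on $\textsf{Red}(P)$ and point out where the text-side argument is analogous.

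First, I would establish the size bound. By \cref{lem:specialbound-simpl}, $|\textsf{Special}(P)| = \cO(d)$, so I only need to control the new contribution coming from the extended locked fragments. Observe that, since the sequence $(p_i)$ is monotonically non-decreasing, for every $L = P\fragmentco{\ell}{r} \in \mathcal{L}^P$ the indices $i \in \fragmentoo{1}{z}$ for which $\fragmentco{p_i}{p_{i+1}+\Delta}$ intersects $\fragmentco{\ell - \tbd\ktotm}{r+\tbd\ktotm}$ form a contiguous range of length $\cO((|L|+\ktotm)/\tau + 1)$, because each internal piece has length $\Theta(\tau)$ (recall that $\Delta = 6\kappa \leq 12\tau$). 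Summing over $\mathcal{L}^P$ and using \cref{lem:klocked} together with $\tau \ge |Q|$ and $|\mathcal{L}^P| \le d_P + 2$:
\[
    \sum_{L \in \mathcal{L}^P} \cO\!\left(\frac{|L|}{\tau} + \frac{\ktotm}{\tau} + 1\right)
    \;\le\; \cO\!\left( d_P + h + \frac{|\mathcal{L}^P|\cdot \ktotm}{\tau} + |\mathcal{L}^P|\right).
\]
With $h = 2d$ and $\ktotm = 7d$, the first two terms are $\cO(d)$. For the remaining term I split on $d_P$: if $d_P = 0$, then $|\mathcal{L}^P| \le 2$ and the term is $\cO(\ktotm) = \cO(d)$; otherwise, $\tau \ge \kappa/2 \ge d_P/2$, giving $(d_P+2)\cdot \ktotm/\tau \le \cO((d_P+2)d/d_P) = \cO(d)$. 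The analogous calculation for $\mathcal{L}^T$ uses $\tau \ge d_T/2$ and $|\mathcal{L}^T| \le d_T + 2$, yielding $|\textsf{Red}(T)| = \cO(d)$ as well.

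For the construction, the plan is as follows. Compute $\textsf{Special}(P)$ via \cref{lem:puzzlemu-algo} in $\cO(d)$ time in the \modelname model, in left-to-right order. Apply \cref{fct:tile} to $\A_P$ to get $(p_i)_{i=0}^{\beta_P}$ as a concatenation of $\cO(d+1)$ arithmetic progressions of difference $\tau$, so that for any position $x$ one can retrieve the largest $i$ with $p_i \le x$ in $\cO(1)$ time \emph{within a known progression}. Since $\mathcal{L}^P$ is given in left-to-right order (see \cref{lem:klocked}), both the left and right endpoints of the extensions $\fragmentco{\ell - \tbd\ktotm}{r + \tbd\ktotm}$ are monotonically non-decreasing in the order in which locked fragments are processed; hence the corresponding piece-index ranges also have monotone endpoints. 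I would therefore scan the locked fragments with two monotone pointers into the arithmetic-progression representation of $(p_i)$, outputting each contributed range in constant amortized time per output index. Both the total movement of the pointers across the $\cO(d)$ arithmetic progressions and the output itself are $\cO(d)$, so this phase costs $\cO(d)$ in the \modelname model. Finally, merge this sorted list of indices with the sorted list $\textsf{Special}(P)$ and deduplicate in one sweep; the merge is $\cO(d)$. The procedure for $\textsf{Red}(T)$ is symmetric, using $\A_T$, $\mathcal{L}^T$, and the full range $\fragmentoo{\min J + 1}{\max J + z}$ of internal-piece indices.

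The main obstacle I anticipate is the case analysis for the size bound: one has to argue carefully that $\ktotm/\tau$ does not blow up when $\kappa$ is small, and here the key observation is that $|\mathcal{L}^P|$ is simultaneously small (bounded by $d_P + 2$) in that regime, so the product $|\mathcal{L}^P|\cdot \ktotm/\tau$ stays at $\cO(d)$. The running-time analysis is essentially linear after one notices the monotonicity of the extended ranges; the extra $\log\log d$ factor in the stated bound can be absorbed by using integer sorting \cite{DBLP:journals/jcss/AnderssonHNR98} if one prefers to merge the two sources of red pieces by sorting rather than by exploiting that each source is already produced in order.
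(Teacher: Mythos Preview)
Your plan matches the paper's approach: bound how many internal pieces can overlap each locked fragment extended by $\tbd\ktotm$, then build $\rred(P)$ and $\rred(T)$ by a simultaneous left-to-right sweep over the arithmetic-progression representation of $(p_i)$ (resp.\ $(t_i)$) and the locked fragments.

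One step is under-justified. The claim ``each internal piece has length $\Theta(\tau)$'' is true, but it does \emph{not} by itself bound how many pieces meet a window of length $W$: consecutive pieces overlap by exactly $\Delta$, and in the regime $q\le 6\kappa$ one has $\Delta\ge\tau$, so the per-piece length gives no lower bound on the advance $p_{i+1}-p_i$ between successive piece starts. The paper handles this by counting \emph{tiles} rather than pieces: if $\mu$ tiles $P\fragmentco{p_{i-1}}{p_i}$ are contained in a window of length $W$, then $\sum_{i\in M}(\tau-|P\fragmentco{p_{i-1}}{p_i}|)\le d_P$ gives $\mu\tau-d_P\le W$, hence $\mu\le (W+d_P)/\tau$; with $W=\cO(|L|+\ktotm)$ and $d_P\le 2\tau$ this recovers exactly your bound $\cO((|L|+\ktotm)/\tau+1)$. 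Your subsequent summation and case split on $d_P$ are fine.

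For the construction you are in fact slightly cleaner than the paper: you correctly note that $\mathcal{L}^P$ already comes in left-to-right order from \cref{lem:klocked}, so a two-pointer scan suffices and no sorting is needed. The paper sorts the locked fragments anyway and attributes the $\log\log d$ factor to that step.
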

\begin{proof}
    We start with upper-bounding the size of each of the sets $\textsf{Red}(P)$ and $\textsf{Red}(T)$.
    $\textsf{Special}(P)$ and $\textsf{Special}(T)$ are of size $\cO(d)$ due to \cref{lem:specialbound-simpl}.
    The task is therefore to bound, for each of $P$ and $T$, the number of internal pieces that are within $\tbd \ktotm$ positions of a locked fragment.

    Let $(S,\beta_S)$ denote either of $(P,\beta_P)$ or $(T,\beta_T)$ and $\mathcal{L}^S=\{L\fragmentco{\ell_j}{h_j} : j \in \fragment{1}{\ell^S}\}$.
    It suffices to upper bound the number of tiles $S\fragmentco{s_{i-1}}{s_i}$ with $i \in \fragmentoo{1}{\beta_S}$
    in the $\tau$-tile partition of~$S$ (with respect to $\A_S$) such that $\fragmentco{s_{i-1}}{s_i}$ overlaps $\fragmentco{\ell_j-\Delta - \tbd \ktotm}{h_j+ \tbd \ktotm}$ for some $j\in \fragment{1}{\ell^S}$;
    that is,
    \[|\{i \in \fragmentoo{1}{\beta_S} : \exists_{j \in \fragment{1}{\ell^S}} \fragmentco{\ell_j-\Delta -\tbd \ktotm}{h_j + \tbd \ktotm} \cap \fragmentco{s_{i-1}}{s_{i}} \neq \emptyset\}|.\]

\def\tbc{32}
    Intuitively, we extend each locked fragment in $\mathcal{L}^S$ by $\Delta + \tbd \ktotm$ characters to the left and by $\tbd \ktotm$ characters to the right,
    thus covering at most $\|\mathcal{L}^S\|+ \ell^S \cdot (2\cdot\tbd+6)\ktotm=\|\mathcal{L}^S\|+ \tbc\ell^S \ktotm$ positions of $S$, since $\Delta=6 \ktot \leq 6 \ktotm$.
    Let us first upper bound the size $\mu$ of the set $M$ of tiles (other than the first and last ones) that are fully covered by these ``extended'' locked fragments,
    that is, $M = \{i \in \fragmentoo{0}{\beta_S} : \fragmentco{s_{i-1}}{s_{i}} \subseteq \bigcup_{j  \in \fragment{1}{\ell^S}} \fragmentco{\ell_j-\Delta-\tbd \ktotm}{h_j+\tbd \ktotm}\}$.
    We have
    \[\sum_{i \in M} (\tau - |S\fragmentco{s_{i-1}}{s_i}|) \leq \sum_{i \in M}\ed(S\fragmentco{s_{i-1}}{s_i},Q\fragmentco{(i-1)\tau}{i\tau}) \leq \edl{S}{Q}\]
    and hence
    \[\mu \cdot \tau - \edl{S}{Q} \leq \sum_{i \in M} |S\fragmentco{s_{i-1}}{s_i}| \leq \|\mathcal{L}^S\|+\tbc\ell^S \ktotm,\]
    which is equivalent to
    \[\mu \leq \frac{\|\mathcal{L}^S\|+\tbc\ell^S\ktotm +\edl{S}{Q}}{\tau}.\]
    Finally, we have to account for the at most $2\ell^S$ tiles that overlap ``extended'' locked fragments but are not fully contained in them; we have at most two such tiles for each $L \in \mathcal{L}^S$.

    Since $\ell^S = \cO(\edl{S}{Q})$ and $\|\mathcal{L}^S\|=\cO(\edl{S}{Q} \cdot q)$ by~\cref{lem:klocked},
    $\edl{S}{Q} = \cO(\ktot)$, and $\tau = \Theta(\max\{\ktot,q\})$,
    we have
    \[2\ell^S + \frac{\|\mathcal{L}^S\|+\tbc \ell^S\ktotm + \edl{S}{Q}}{\tau}  = \cO\left(\ktot + \frac{\ktot q + \ktot \ktotm + \ktot}{\tau}\right)
    																		= \cO(\ktotm)=\cO(d).\]

    Let us now show how to efficiently construct the sets in scope.
    First, recall that $\textsf{Special}(P)$ and $\textsf{Special}(T)$ can be constructed in $\cO(d)$ time in the \modelname model due to \cref{lem:specialbound-simpl}.
    The remaining elements of $\rred(P)$ (resp.~$\rred(T)$) can be computed in a simultaneous left-to-right scan of:
    \begin{itemize}
        \item the representation of starting positions of internal pieces $p_i$ (resp.~$t_i$) as $\cO(d)$ arithmetic progressions, which can be computed in $\cO(d)$ time due 				to \cref{fct:tile};
        \item the $\cO(d)$ locked fragments of $P$ (resp.~$T$) sorted with respect to their starting positions. 
    \end{itemize}
    The $\cO(d \log\log d)$ time required for sorting the starting positions of the locked fragments, using the algorithm of \cite{DBLP:journals/jcss/AnderssonHNR98}, is the bottleneck of the algorithm in the \modelname model.
\end{proof}

\begin{definition}\label{def:fjgj}
    For $j \in J$, set $\I'_j
    \coloneqq (F_{j,1},G_{j,1})\cdots(F_{j,z_j}, G_{j,z_j})
    \coloneqq \Comp(\I_j, \rred(P), \rred(T), 2\tradeoff + \slack)$,
    and \(F_j \coloneqq \val_{\Delta}( F_{j,1},\ldots, F_{j,z_j} )\) and
    \(G_j \coloneqq \val_{\Delta}( G_{j,1},\ldots, G_{j,z_j} )\).

    Further, let $F_{j,i}$ and $G_{j,i}$ correspond to the fragments
    $F_j\fragmentco{f_{j,i}}{f_{j,i+1}+\Delta}$ and $G_j\fragmentco{g_{j,i}}{g_{j,i+1}+\Delta}$,
    respectively, that is, we set $f_{j,1}=g_{j,1}=0$ and, for $i\in \fragment{2}{z_j+1}$, $f_{j,i}=\left(\sum_{x<i} |F_{j,x}| \right)-\Delta$ and $g_{j,i}=\left(\sum_{x<i} |G_{j,x}| \right)-\Delta$.
    \lipicsEnd
\end{definition}

The focus of the remainder of~\cref{sec:redundantsqrtk} is to prove the following lemma.

\begin{restatable}{lemma}{correctlight}\label{lem:correctlight}
    $\OccE_k(P,T) \cap \light
    = \big(\bigcup_{j \in J} (r_j +  \OccE_k(\I'_j))\big) \cap \light
    =
    \big(\bigcup_{j \in J} (r_j +  \OccE_k(F_j,G_j))\big) \cap \light$.
    \ifx\correctlightlend\undefined\lipicsEnd\fi
\end{restatable}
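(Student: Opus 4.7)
The plan is to establish both equalities via a chain of inclusions. The second equality $\OccE_k(\I'_j) = \OccE_k(F_j, G_j)$ is immediate from the \DPM-definition of $\OccE_k(\I)$ once one verifies that the sequences $F_{j,1},\ldots,F_{j,z_j}$ and $G_{j,1},\ldots,G_{j,z_j}$ form $\Delta$-puzzles with values $F_j$ and $G_j$; this follows from an iterated application of \cref{clm:cutout}. For the ``easy'' inclusion $\OccE_k(P, T) \cap \light \subseteq \bigcup_j (r_j + \OccE_k(\I'_j)) \cap \light$, I combine three ingredients: \cref{lem:aligned} gives the decomposition $\OccE_k(P, T) = \bigcup_{j \in J}(r_j + \OccE_k(P, R_j))$; \cref{fact:druns} gives $\OccE_k(P, R_j) = \OccE_k(\I_j)$; and since $\I'_j$ arises from $\I_j$ by an iterated $\Comps$-operation, the first half of \cref{lem:redundantk} applied iteratively yields $\OccE_k(\I'_j) \supseteq \OccE_k(\I_j)$.

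For the reverse inclusion, I fix a light $v$ and $j \in J$ with $v - r_j \in \OccE_k(F_j, G_j)$, witnessed by an alignment $\A' : F_j \onto G_j\fragmentco{v - r_j}{w'}$ of cost at most $k$. The goal is to construct an alignment $\A : P \onto R_j\fragmentco{v - r_j}{w}$ of cost at most $k$; by \cref{lem:aligned}, this will give $v \in r_j + \OccE_k(P, R_j) \subseteq \OccE_k(P, T)$. The construction proceeds by iteratively inflating $\I'_j$ back to $\I_j$: at each step I reinsert one plain pair $(\mathcal{Q}, \mathcal{Q})$ into a plain run, duplicating a plain pair that the current alignment already matches perfectly. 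This duplication pairs the two new copies of $\mathcal{Q}$ character-by-character and thus contributes zero to the alignment cost, paralleling the second half of the proof of \cref{lem:redundantk}.

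The crux is a bound on the number of edits that $\A'$ commits within each plain run of $\I'_j$: I claim it is at most $\tradeoff$. Given this, each plain run of length at least $2\tradeoff + \slack > \tradeoff$ contains at least one plain pair aligned perfectly by $\A'$; since every inflation step only inserts a perfect match, this property is preserved throughout, enabling arbitrarily many reinsertions. To prove the claim, I exploit the light-position hypothesis via \cref{lem:lb_local}: the total cost of $\A'$ is at least $\sum_{L \in \mathcal{D}(v)} (\edl{L}{Q} - \markf(v, L))$ on the portions of $\A'$ aligning locked fragments, which leaves at most $\markf(v) < \tradeoff$ edits for the remaining regions; by construction these remaining regions contain every plain run, since $\rred(P)$ and $\rred(T)$ include every piece within $\tbd\ktotm$ positions of a locked fragment.

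The main obstacle is that \cref{lem:lb_local} applies to alignments from $P$ into $T\fragmentco{v}{w}$, whereas $\A'$ operates between $F_j$ and $G_j\fragmentco{v - r_j}{w'}$. To bridge this gap, I will lift $\A'$ to an alignment $\hat\A$ between $P$ and a fragment of $T$ starting at $v$ by identity-matching every copy of $Q$ that was removed during trimming; this lift introduces only perfect matches, so the edit count inside plain runs is preserved, and \cref{lem:lb_local} applied to $\hat\A$ then delivers the required bound on $\A'$. Some care is needed for the leading and trailing pieces and for handling the $L_1^P$ case of \cref{lem:lb_local} (which uses $\ed(L_1^P, \rot^{-\rho(v)}(Q)^*)$ rather than $\edl{L_1^P}{Q}$)---this is precisely where the light-position conditions $\res{\mathcal{L}^T}{\fragmentco{v - \ktotm}{v + \ktotm}} = \res{\mathcal{L}^T}{\fragmentco{v + |L_1^P| - \ktotm}{v + |L_1^P| + \ktotm}} = \emptyset$ become essential, alongside \cref{lem:ub} to anchor the existence of a suitable endpoint $w$.
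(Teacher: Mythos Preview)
Your high-level structure matches the paper's: the easy inclusion via \cref{lem:aligned} and \cref{lem:redundantk} (the paper packages this as \cref{cor:supset}), and the hard inclusion by bounding the edits inside each trimmed plain run, finding a perfectly-matched pair there, and then iteratively reinserting plain pairs (the paper's \cref{lem:nofakelight}). The puzzle-value identity $\OccE_k(\I'_j)=\OccE_k(F_j,G_j)$ is indeed immediate.

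There are, however, two real gaps in your plan for the hard direction.

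\textbf{The ``lift first, then bound'' step is circular as written.} You propose to lift $\A'$ to $\hat\A:P\onto T\fragmentco{v}{w}$ by identity-matching every removed copy of $Q$, and only afterwards invoke \cref{lem:lb_local}. But inserting a removed plain pair as a \emph{perfect} match requires a synchronization point inside the run (a position where the $F$-tile boundary lands on a $G$-tile boundary under $\A'$). That synchronization is exactly what you are trying to establish; without it, the lift may incur extra edits (or may not land at the right starting position in~$T$), and then neither the ``at most $k$'' budget nor the positional hypotheses of \cref{lem:lb_local} are available. The paper avoids this by working entirely in $F_j,G_j$ coordinates: it first transports the marking structure via \cref{claim:wip} (overlaps of locked fragments are the same in $(F_j,G_j)$ as in $(P,T)$), proves disjointness of the relevant intervals (\cref{claim:disjoint}), derives the lower bound $\Lambda-2\markf(r_j+p)$ directly on $\A'$ (\cref{claim:lb}), and only \emph{then} uses the resulting synchronization point (\cref{claim:plain_exact}) to inflate. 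You should reorganize your argument the same way: bound first in $F_j,G_j$, lift last.

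\textbf{You are missing the case split handled by \cref{lem:focus3k}, and the budget is $2\tradeoff$, not $\tradeoff$.} \cref{lem:ub} requires $\ed(L_1^P,\rot^{-\rho(v)}(Q)^*)<\lpref$; you invoke \cref{lem:ub} without verifying this. The paper disposes of the complementary case up front via \cref{lem:focus3k}, which shows that if $\ed(L_1^P,\rot^{-\rho(v)}(Q)^*)\ge \lpref$ then $p\notin\OccE_k(F_j,G_j)$, so there is nothing to prove. Once \cref{lem:ub} is available, the arithmetic gives a remaining budget of $<2\tradeoff$ (not $\tradeoff$): the lower bound in \cref{claim:lb} is $\Lambda-2\markf(v)$ because each overlapping pair is counted once from the $P$-side and once from the $T$-side, and combining with $\Lambda>k$ and $\markf(v)<\tradeoff$ yields $\ed(U,\mathcal{B}(U))<2\tradeoff$. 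This is why the trimming threshold is $2\tradeoff+\slack$ with $\slack=30$ (the extra $28$ absorbed by the $\Delta$-containment margins), not $\tradeoff+\text{const}$.
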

\def\correctlightlend{1}

The combination of \cref{lem:aligned} and \cref{lem:redundantk} directly yields the following.

\begin{corollary}\label{cor:supset}
$\big(\bigcup_{j \in J} (r_j +  \OccE_k(F_j,G_j))\big) \supseteq \big(\bigcup_{j \in J} (r_j +  \OccE_k(P,R_j))\big)=\OccE_k(P,T)$.
\lipicsEnd
\end{corollary}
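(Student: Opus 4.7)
}
The second equality is immediate from the definition of $\OccE_k(\I)$ in the \DPM problem: the $\Comps$-operation preserves the $\Delta$-puzzle property and the value of the puzzle (as recorded by \cref{clm:cutout} inside the proof of \cref{lem:redundantk}), so for every $j\in J$ the sequences $F_{j,*}$ and $G_{j,*}$ are $\Delta$-puzzles with $\val_\Delta(F_{j,*})=F_j$ and $\val_\Delta(G_{j,*})=G_j$. Hence $\OccE_k(\I'_j)=\OccE_k(F_j,G_j)$, and the middle set equals the right set even before intersecting with $\light$.

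For the first equality, the forward inclusion $\OccE_k(P,T)\cap\light \subseteq \bigl(\bigcup_{j\in J}(r_j+\OccE_k(F_j,G_j))\bigr)\cap\light$ is immediate from \cref{cor:supset}, which already gives $\bigcup_{j\in J}(r_j+\OccE_k(F_j,G_j)) \supseteq \OccE_k(P,T)$; intersecting with $\light$ preserves the inclusion. Thus the only nontrivial content is the no-false-positives inclusion
\[
\bigl(\textstyle\bigcup_{j\in J}(r_j+\OccE_k(F_j,G_j))\bigr)\cap \light \;\subseteq\; \OccE_k(P,T)\cap\light,
\]
which I would phrase and prove as a separate, technical lemma (\cref{lem:nofakelight}) in \cref{sec:redundantsqrtk}, and then simply invoke here.

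The plan for that separate lemma is reinflation. Fix $v\in\light$ with $v=r_j+u$ for some $u\in\OccE_k(F_j,G_j)$, and fix an optimal alignment $\A\colon F_j\onto G_j\fragmentco{u}{u'}$ of cost at most $k$. The pair of strings $(F_j,G_j)$ is obtained from $(P,R_j)$ by repeatedly applying $\Comps$ to $\I_j$, where each $\Comps$-step removes one plain pair, necessarily of the form $(\mathcal{Q},\mathcal{Q})=(Q^\infty\fragmentco{0}{\tau+\Delta},Q^\infty\fragmentco{0}{\tau+\Delta})$. I would reinflate $\A$ into an alignment $\A'\colon P\onto T\fragmentco{v}{w}$ by inserting, at each position of a former $\Comps$-step, one copy of $\mathcal{Q}$ into $F_j$ and into $G_j$ and extending $\A$ by aligning the two inserted copies character-by-character. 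Since both sides of each reinserted pair are equal copies of $Q^*$ and the local context of $F_j,G_j$ around each trimmed run is also $Q^*$, this splicing produces a valid alignment of the enlarged strings with the same cost, \emph{provided} $\A$ is ``shift-consistent'' (in the sense used in the proof of \cref{lem:redundantk}) across each trimmed run.

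The main obstacle, and the hardest step, will be establishing this shift-consistency using lightness. The ingredients are \cref{lem:lb_local} and \cref{def:light}: for a light $v$, every locked fragment $L\in\LP\cup\res{\LT}{\fragmentco{v}{v+m}}$ forces at least $\edl{L}{Q}-\markf(v,L)$ edits of $\A'$ to be spent inside $L$, and summing these lower bounds together with $\markf(v,\ktotm,\LP,\LT)<\tradeoff$ will show that $\A$ can afford at most $O(\tradeoff)$ edits outside the red zones. On the plain zones, which consist entirely of copies of $Q$, such a small edit budget forces the shift of $\A$ to stay constant over each maximal run of plain pairs -- exactly the condition needed to splice in additional $(\mathcal{Q},\mathcal{Q})$ pairs without cost. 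The buffer $\slack$ built into the trim threshold $2\tradeoff+\slack$ (as opposed to a tighter $2\tradeoff$) is deliberately chosen so that each trimmed run in $\I'_j$ retains $\Omega(\slack)$ plain pairs on each side of any red neighbor, leaving enough ``room'' to absorb the $O(\tradeoff)$ edit slack while keeping the shift locally constant. The remaining bookkeeping is to track piece indices between $\I_j$ and $\I'_j$ and to verify that the reinflated alignment reaches $T\fragmentco{v}{w}$ for some $w$, which follows from the fact that reinflation adds the same length $\tau$ of characters to both $F_j$ and (the used fragment of) $G_j$ at each insertion.
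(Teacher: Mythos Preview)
You are proving the wrong statement. The corollary in question is \cref{cor:supset}, which asserts only the \emph{easy} inclusion $\bigcup_{j}(r_j+\OccE_k(F_j,G_j)) \supseteq \OccE_k(P,T)$ together with the equality $\bigcup_j(r_j+\OccE_k(P,R_j))=\OccE_k(P,T)$. Your write-up is a proof plan for \cref{lem:correctlight}, and in your second paragraph you explicitly \emph{cite} \cref{cor:supset} as an already-established fact to get the forward inclusion. That is circular.

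The paper's proof of \cref{cor:supset} is two lines. The equality is exactly \cref{lem:aligned}. The inclusion follows by iterating the first half of \cref{lem:redundantk}, namely $\OccE_k(\Comps(\I,\rpl(\I),\alpha)) \supseteq \OccE_k(\I)$ for every $\alpha\ge 1$: since $\I'_j=\Comp(\I_j,\rred(P),\rred(T),2\tradeoff+\slack)$ is obtained from $\I_j$ by repeated $\Comps$-steps, one gets $\OccE_k(F_j,G_j)=\OccE_k(\I'_j)\supseteq \OccE_k(\I_j)=\OccE_k(P,R_j)$ for each $j$. Your proposal never invokes either of these ingredients for this purpose.

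Your reinflation sketch is a reasonable plan for \cref{lem:nofakelight} (the hard, converse inclusion on light positions) and indeed matches the paper's approach there, but that is a different and much harder statement than \cref{cor:supset}.
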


The following---skippable---example illustrates that,
for some $j\in J$ and $p\in \OccE_k(F_j,G_j)$,
we might have $r_j + p \not\in \OccE_k(P,T)$ if $r_j+p\in \Hv$.

\begin{example}
    Let $k>6$ denote an integer and set $\Sigma = \{\texttt{a},\texttt{b}\}$.
    Further, set $U=\texttt{a}^{i} \texttt{b}$ and $V=\texttt{a}^{i+1} \texttt{b}$
    for $i = \lfloor 2k /3 \rfloor$.
    Set $Q=UVU$ and
    $P=Q^{y} (UVV)^{k+1} Q^{y}$
    and $T=Q^{w} (VVU)^{k+1} Q^{w}$
    for some integers $y$ and $w$ that satisfy $w>y>k$, ensuring that $d\coloneqq 2k \leq |P|/8|Q|$.
    We have a valid instance of the \SM problem with $d_P=d_T=k+1$ and $\tau = \Theta(q)$.

    Consider a set of locked fragments (with respect to $Q$) for each of $P$ and $T$ such that
    the only locked fragment of $P$ (resp.~$T$) that is not its prefix or suffix is $L^P_2 \coloneqq P\fragmentco{y|Q|}{y|Q|+(k+1)(3i+5)}=(UVV)^{k+1}$
    (or~$L^T_2 \coloneqq \fragmentco{w|Q|}{w|Q|+(k+1)(3i+5)}=(VVU)^{k+1}$).\footnote{These are not precisely the locked fragments that would be computed by the algorithm underlying \cref{lem:klocked}, but they are consistent with the properties that need to be satisfied and are easier to work with for the sake of this example.}
    Let $\tradeoff = \cO(\sqrt{k})$ denote the threshold used in the marking and assume that $k$ is large enough so that $\tradeoff <k$.
    Consider some $j \in J$ and a position $p$ of $R_j$ such that $Q^{y} (VVU)^{k+1} Q^{y}$
    is a prefix of $R_{j}\fragmentco{p}{r'_j}$,
    noting that $r_j+p$ has at least $\min\{\edl{L_2^P}{Q}, \edl{L_2^T}{Q}\} = k+1 > \tradeoff$ marks, and is thus heavy.

    We next argue that $\min_t(\ed(P, R_{j}\fragmentco{p}{t})) >k$.
    Toward a contradiction, suppose that there exists an integer $t'$ and an alignment $\A: P \onto R_{j}\fragmentco{p}{t'}$
    of cost at most $k$.
    Then, at least one of the first/last $y$ copies of $Q$ in $P$ is matched exactly by $\A$ since $y>k$.
    In addition, at least one of the copies of $UVV$ is matched exactly by $\A$.
    Thus, $\A$ makes at least $|U|$ edits in order to ``synchronize'' a copy of $VVU$ in $P$ to a copy of $VVU$ in $T$, and then at least
    $|U|$ more edits to ``synchronize'' copies of $Q$.
    Hence, we have
    $\min_t(\ed(P, R_{j}\fragmentco{p}{t}))\geq 2|U| = 2  (\lfloor 2k /3 \rfloor +1) \geq 4k/3 > k$.

    On the other hand, for some integers $y_1\leq y$ and $y_2 \leq c \tradeoff$, where $c$ is a constant independent from $k$, we have
    \begin{align*}
        F_{j} & = Q^{y_1} (UVV)^{k+1} Q^{y_2} = Q^{y_1} U (VVU)^{k+1} VU Q^{y_2-1} \text{ and }\\
        G_{j}\fragmentco{p}{|G_{j}|} & = Q^{y_1} (VVU)^{k+1} Q^{y_2} W, \text{ for some string } W.
    \end{align*}
    It is easy to observe that
    \begin{align*}
        &\min_t\ed(F_{j}, G_{j}\fragmentco{p}{t}) \\
        &\quad \leq \ed(Q^{y_1},Q^{y_1}) + \ed(U, \varepsilon) + \ed((VVU)^{k+1},(VVU)^{k+1})+ \ed(VU Q^{y_2-1}, Q^{y_2-1}UV)\\
        &\quad \leq |U| + \ed(VU Q^{y_2-1},U^{3y_2-1})+ \ed(U^{3y_2-1},Q^{y_2-1}UV)\\
        &\quad = i+1 + 2y_2.
    \end{align*}
    We can assume that $k$ is large enough so that $k>12 c \tradeoff$, in which case
    $i + 1 + 2y_2 < 2k/3 + k/6 + k/6 = k$. The point is that the implied alignment pays $|U|$ to ``synchronize'' copies
    of $VVU$, but it can then afford to pay for the (fewer than $k$) misaligned copies of $Q$ without needing to ``synchronize'' again.
    \lipicsEnd
\end{example}

\newcommand{\D}{\mathcal{D}}

It remains to show that, for any light position $p \in \big(\bigcup_{j \in J} (r_j +  \OccE_k(F_j,G_j))\big)$,
we have $p \in \OccE_k(P,T)$.
The following lemma demonstrates that, for each $j \in J$, we can restrict our attention to a subset of the positions of $R_j$.

\begin{lemma}\label{lem:focus3k}
    Consider some $j \in J$ and a position $p$ of $R_j$ such that $r_j+p$ is a light position of $T$.
    If $\ed(L_1^P,\rot^{-\rho(r_j+p)}(Q)^*) \geq \lpref$, then $p \not\in \OccE_k(F_j,G_j)$.
\end{lemma}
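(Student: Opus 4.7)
The plan is to derive a contradiction. Assuming $p \in \OccE_k(F_j, G_j)$, there is an alignment $\A\colon F_j \onto G_j\fragmentco{p}{w}$ of cost at most $k$. From $\A$ we will extract a sub-alignment of $L_1^P$ onto a short prefix of $G_j\fragmentco{p}{w}$, and then show via item~(3) of \cref{lem:lb_local} that this sub-alignment must in fact have cost strictly greater than $k$.

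The backbone of the argument is the observation that every pattern piece $P_i$ with $p_i < |L_1^P| + \tbd\ktotm$ belongs to $\rred(P)$ (by the definition of $\rred(P)$ applied to the locked fragment $L_1^P = P\fragmentco{0}{|L_1^P|}$). Taking $t$ to be the largest such index, all pairs $(P_1, T_{j,1}), \ldots, (P_t, T_{j,t})$ are non-plain and are therefore preserved \emph{in order} and \emph{without any trimming} at the head of $\I'_j$. Unfolding $\val_{\Delta}$ together with the telescoping structure $P_i = P\fragmentco{p_i}{p_{i+1}+\Delta}$ and $T_{j,i} = T\fragmentco{t_{j+i}}{t_{j+i+1}+\Delta}$ (with the boundary pieces $P_1$ and $T_{j,1}$ handled separately), we will verify
\[
F_j\fragmentco{0}{p_{t+1}+\Delta} = P\fragmentco{0}{p_{t+1}+\Delta}
\quad\text{and}\quad
G_j\fragmentco{0}{\ell_0} = T\fragmentco{r_j}{r_j+\ell_0},
\]
where $\ell_0 \coloneqq t_{j+t+1}+\Delta-r_j$. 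By maximality of $t$ we have $p_{t+1} \ge |L_1^P| + \tbd\ktotm$, so $L_1^P$ is a prefix of $F_j$; \cref{fct:ali} then supplies an induced sub-alignment of $L_1^P$ onto $G_j\fragmentco{p}{p'}$ of cost at most $k$ with $p'-p \in \fragment{|L_1^P|-k}{|L_1^P|+k}$.

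The next step is length accounting to verify $p' \le \ell_0$. The DPM constraint $\tor(\I'_j) \le \tor(\I_j) \le \Delta/2 - k$ (from \cref{fact:lengths_diff}) yields $p \le |G_j| - |F_j| + k \le \Delta/2 = 3\kappa$. Cost bounds on $\A_P$ and $\A_T$ give $p_{t+1} \le (t+1)\tau + d_P$ and $t_{j+t+1} - r_j \ge (t+1)\tau - \kappa - d_T$ (the latter using $\A_T(r_j) \le j\tau + \kappa$ from \cref{rem:rj-corner}). Combining these with $|L_1^P| \le p_{t+1} - \tbd\ktotm$ and $\Delta = 6\kappa$, we obtain $\ell_0 - |L_1^P| \ge \tbd\ktotm + 5\kappa - d_P - d_T \ge \tbd\ktotm$; with $\tbd = 13$ and $\ktotm = 7d$ this is $91d$, comfortably exceeding $p + k \le 3\kappa + k \le 16d$. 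Hence $p' \le \ell_0$, and we can identify $G_j\fragmentco{p}{p'} = T\fragmentco{v}{v+(p'-p)}$ where $v \coloneqq r_j + p$.

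Finally, applying item~(3) of \cref{lem:lb_local} to $L = L_1^P$ and $U = T\fragmentco{v}{v+(p'-p)}$ (legal since $p'-p \in \fragment{|L_1^P|-k}{|L_1^P|+k}$), and using the lightness of $v$ to bound $\markf(v, L_1^P) \le \markf(v) < \tradeoff$ (via the first condition in \cref{def:light}), we conclude
\[
\ed(L_1^P, G_j\fragmentco{p}{p'}) \ge \ed(L_1^P, \rot^{-\rho(v)}(Q)^*) - \markf(v, L_1^P) > 2d - \tradeoff \ge k,
\]
contradicting the cost-$k$ sub-alignment. The main obstacle is the length accounting in the third paragraph; the constants $\tbd = 13$ and $\ktotm = 7d$ are fixed precisely to absorb $p \le \Delta/2$ together with the $\Oh(\kappa)$ distortion in converting between $P$-positions, $T$-positions, and multiples of $\tau$.
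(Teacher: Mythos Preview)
Your proof is correct and follows essentially the same approach as the paper: both arguments observe that the pieces $P_i$ overlapping $\fragmentco{0}{|L_1^P|+\tbd\ktotm}$ lie in $\rred(P)$ and are therefore preserved at the head of $\I'_j$, so that a prefix of $G_j$ of the relevant length coincides with a prefix of $T\fragmentco{r_j}{n}$, after which \cref{lem:lb_local}(3) together with $\markf(v,L_1^P)\le\markf(v)<\tradeoff$ yields the contradiction. The paper phrases the final step via $\min_t \ed(L_1^P,G_j\fragmentco{p}{t})$ with a case split on the length of the optimizer, whereas you restrict a fixed cost-$k$ alignment---this is a cosmetic difference (your restriction automatically lands in the length window $\fragment{|L_1^P|-k}{|L_1^P|+k}$, so the case split is absorbed). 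One small remark: your ``maximality of $t$'' step tacitly assumes $t<z-1$; when every internal piece is red you have $\I'_j=\I_j$, $F_j=P$, $G_j=R_j$, and the conclusion is immediate, so this edge case is trivial to patch.
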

\begin{proof}
    By the definition of $\rred(T)$, it follows that $|F_j| \geq |L^P_1|$ and
    any prefix of $F_j$ of length at most $|L^P_1|+\tbd \ktotm$ is also a prefix of $P$.
    Combined with the upper bound on the sum of length-differences from \cref{fact:lengths_diff},
    this also implies that any prefix of $G_j$ of length at most $|L^P_1| + \tbd \ktotm - 3\ktot$
    is also a prefix of $R_j$.
    Further, recall that we have $|R_j|\leq |P| + 3\ktot - k$ due to \cref{fact:lengths_diff}.
    Consequently, since $|R_j|-|F_j| = |P| - |G_j|$,
    $\OccE_k(F_j,G_j) \subseteq \fragment{0}{|G_j|-|F_j|+k} \subseteq \fragment{0}{3\ktot}$.

    It thus suffices to consider the case where $p \in \fragment{0}{3\ktot}$.
    Let $W$ denote a prefix of $G_j\fragmentco{p}{|G_j|}$ that satisfies $\ed(F_j\fragmentco{0}{|L^P_1|}, W) = \min_t \ed(F_j\fragmentco{0}{|L^P_1|}, G_j\fragmentco{p}{t})$.
    We distinguish between two cases:
    \begin{itemize}
        \item If $|W| \in \fragment{|L^P_1|-k}{|L^P_1|+k}$, then
            $W$ is a prefix of $T\fragmentco{r_j+p}{n}$ since $p+|W| \leq 3\ktot+|L^P_1|+k \leq |L^P_1| + \tbd \ktotm - 3\ktot$.
            Thus, by~\cref{lem:lb_local}, we have
            \[\ed(L^P_1, W) \geq \ed(L_1^P,\rot^{-\rho(r_j+p)}(Q)^*) - \markf(r_j+p,L^P_1) \geq \lpref - \markf(r_j+p,L^P_1) \geq 2d - \tradeoff \geq d > k.\]
        \item Otherwise, we have $\ed(F_j\fragmentco{0}{|L^P_1|}, W)\geq \big||W|-|L^P_1|\big| > k$.
    \end{itemize}

    To conclude the proof, it suffices to observe that $\min_{t} \ed(F_j,G_j\fragmentco{p}{t}) \geq \min_{t} \ed(L^P_1,G_j\fragmentco{p}{t}) > k$,
    and hence $p \not\in \OccE_k(F_j, G_j)$.
\end{proof}

\newcommand{\orig}{\textsf{orig}}
In what follows, for convenience, we assume that, for each run of plain pairs of $\I_j$
that has been trimmed, the deleted pairs correspond to a suffix of this run.
This yields a natural mapping from pairs
of $\I'_j$ 
to pairs of $\I_j$. 

\begin{definition}
    For each $j\in J$ and each $i \in \fragment{1}{z_j}$, let $\orig(j,i)$ denote the number of pairs to the left of pair $(F_{j,i}, G_{j,i})$
    that were  deleted in the process of obtaining $\I'_j$ from $\I_j$.
    We say that pair $(F_{j,i}, G_{j,i})$ \emph{originates} from pair $(P_{i+\orig(j,i)}, T_{j,i+\orig(j,i)})$.
    \lipicsEnd
\end{definition}

Let each internal pair of pieces $(F_{j,i}, G_{j,i})$ inherit the color of
$(P_{i+\orig(j,i)}, T_{j, i+\orig(j,i)})$. In addition, mark the first and the last pairs
as not plain.

\begin{definition}
    For $j\in J$ and $i\in \fragment{1}{z_j}$, we say that $F_j\fragmentco{x_1}{x_2}$
    (or~$G_j\fragmentco{x_1}{x_2}$) has an \emph{overlap} with a pair $(F_{j,i},G_{j,i})$
    if and only if
    $\fragmentco{x_1}{x_2} \cap \fragmentco{f_{j,i}}{f_{j,i+1}+\Delta} \neq \emptyset$
    (or ~$\fragmentco{x_1}{x_2} \cap \fragmentco{g_{j,i}}{g_{j,i+1}+\Delta} \neq \emptyset$).
    \lipicsEnd
\end{definition}


\begin{definition}
    Consider some $j \in J$ and a contiguous sequence
    $\mathcal{M}=(F_{j,i_1},G_{j,i_1})\cdots (F_{j,i_2},G_{j,i_2})$ of pairs in $\I'_j$
    that are either all plain or all not plain (that is, \(\mathcal{M}\) is
    \emph{monochromatic}).
    Fix an $X \in \{F_j,G_j\}$.
    For $i \in \fragment{1}{z_j}$, if $X=F$, we set
    $x_{j,i}=f_{j,i}$; otherwise, we set
    $x_{j,i}=g_{j,i}$.

    For a non-negative integer $\gamma$, we say that a fragment
    $X\fragmentco{x_1}{x_2}$ of $X \in \{F_j,G_j\}$ is
    $\gamma$-\emph{contained} by $\mathcal{M}$ when
    the following two conditions are satisfied: (a)
    $x_1 \geq x_{j,i_1} + \gamma$ or $i_1 = 1$
    and (b)
    $x_2 \leq x_{j,i_2+1}+\Delta - \gamma$ or $i_2 = z_j$.
    \lipicsEnd
\end{definition}

\begin{fact}\label{obs:Dcont}
If a fragment $U$ of $F$ or $G$ is $\Delta$-contained by a monochromatic sequence $\mathcal{M}$ of contiguous pairs in $\I'_j$,
then $U$ only overlaps pairs in $\mathcal{M}$.
\lipicsEnd
\end{fact}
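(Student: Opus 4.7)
The plan for \cref{obs:Dcont} is a direct definition-chasing argument. I would fix $X\in\{F_j,G_j\}$, write $U=X\fragmentco{x_1}{x_2}$, and let $(x_{j,i})_i$ stand for either $(f_{j,i})_i$ or $(g_{j,i})_i$ depending on the choice of $X$. The first observation I would record is that this sequence is non-decreasing: by \cref{def:fjgj} we have $x_{j,i+1}-x_{j,i}=|X_{j,i}|-\Delta\ge 0$, since every piece of the $\Delta$-puzzle $\I'_j$ has length at least $\Delta$.

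Armed with this monotonicity, I would then show separately that $U$ is disjoint from every pair lying strictly to the left of $\mathcal{M}$ and from every pair lying strictly to the right. For the left side, the case $i_1=1$ is vacuous; otherwise the $\Delta$-containment condition gives $x_1\ge x_{j,i_1}+\Delta$, and any pair $(F_{j,i'},G_{j,i'})$ with $i'<i_1$ occupies the interval $\fragmentco{x_{j,i'}}{x_{j,i'+1}+\Delta}$, whose right endpoint is bounded by $x_{j,i_1}+\Delta\le x_1$ via monotonicity; hence this interval is disjoint from $\fragmentco{x_1}{x_2}$. The right side is symmetric: if $i_2<z_j$, then $\Delta$-containment forces $x_2\le x_{j,i_2+1}$, and any pair with index $i'>i_2$ starts at $x_{j,i'}\ge x_{j,i_2+1}\ge x_2$, once again giving disjointness.

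There is really no hard part here; the argument is purely bookkeeping. The only subtlety is keeping track of the two disjunctive cases $i_1=1$ and $i_2=z_j$ that appear in the definition of $\Delta$-containment, which correspond precisely to the situations where $\mathcal{M}$ abuts the beginning or end of the DPM-sequence and no pair outside $\mathcal{M}$ exists on that side.
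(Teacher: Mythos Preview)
Your argument is correct and is exactly the natural definition unwinding; the paper in fact omits the proof entirely (the \texttt{\textbackslash lipicsEnd} marks it as an unproved fact), so there is nothing to compare against beyond noting that the authors consider this immediate from the definitions of overlap and $\gamma$-containment.
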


\begin{lemma}\label{lem:monochrom}
For $j\in J$, consider a monochromatic sequence $\mathcal{M}=(F_{j,i_1},G_{j,i_1})\cdots (F_{j,i_2},G_{j,i_2})$ of contiguous pairs in $\I'_j$.
Let $\{X, Y\} = \{F_j, G_j\}$ and $X\fragmentco{x_1}{x_2}$ be a fragment of $X$ that is $\gamma$-contained by $\mathcal{M}$ for $\gamma \geq 13 \ktot$.
All the pairs of $\I'_j$ that overlap $Y\fragmentco{y_1}{y_2} \coloneqq Y\fragmentco{\max\{0,x_1-4\ktot\}}{\min\{|Y|,x_2+4\ktot\}}$ belong to $\mathcal{M}$.
\end{lemma}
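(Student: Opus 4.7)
The plan is to exploit the torsion bound satisfied by $\I'_j$ together with the $\gamma$-containment hypothesis and the identity $\Delta=6\ktot$ in order to show that both $y_1$ and $y_2$ stay within the portion of $Y$ covered exclusively by pairs of $\mathcal{M}$. Concretely, it suffices to verify that pair $i_1-1$ (when $i_1>1$) does not extend beyond $y_1$ and pair $i_2+1$ (when $i_2<z_j$) does not start before $y_2$; monotonicity of the sequences $(x_{j,i})$ and $(y_{j,i})$ then takes care of all other pairs outside $\mathcal{M}$.

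First I would observe that $\I'_j$ inherits the torsion bound of $\I_j$. The $\Comp$ operator only removes canonical pairs $(Q^\infty\fragmentco{0}{\tau+\Delta},Q^\infty\fragmentco{0}{\tau+\Delta})$ whose two components are identical, and therefore contribute $0$ to $\sum_i\bigl||F_{j,i}|-|G_{j,i}|\bigr|$. Consequently, \cref{fact:lengths_diff} yields $\tor(\I'_j)=\tor(\I_j)\le 3\kappa-k$, and since $x_{j,i}-y_{j,i}=\sum_{x<i}(|F_{j,x}|-|G_{j,x}|)$, every $i\in\fragment{1}{z_j+1}$ satisfies $|x_{j,i}-y_{j,i}|\le 3\kappa-k<3\ktot$.

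Next I would handle the left boundary. If $i_1=1$, there is no pair with smaller index, so nothing needs to be shown. Otherwise, $\gamma$-containment gives $x_1\ge x_{j,i_1}+13\ktot$, whence $x_1-4\ktot\ge x_{j,i_1}+9\ktot\ge y_{j,i_1}-3\ktot+9\ktot=y_{j,i_1}+6\ktot=y_{j,i_1}+\Delta\ge 0$. Hence $y_1\ge y_{j,i_1}+\Delta$, which is exactly the right endpoint of pair $i_1-1$ in $Y$, so that pair (and every earlier pair) fails to overlap $Y\fragmentco{y_1}{y_2}$. A symmetric computation for the right boundary shows that if $i_2<z_j$ then $x_2+4\ktot\le x_{j,i_2+1}+\Delta-\gamma+4\ktot\le x_{j,i_2+1}-3\ktot\le y_{j,i_2+1}<|Y|$, so $y_2=x_2+4\ktot\le y_{j,i_2+1}$ and pair $i_2+1$ (along with every later pair) avoids $\fragmentco{y_1}{y_2}$ as well.

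The proof is essentially a matter of bookkeeping, and the main obstacle is simply to verify that the constants line up. The decomposition $13\ktot=4\ktot+\Delta+3\ktot$ captures precisely the three sources of slack: the $4\ktot$ expansion of $\fragmentco{x_1}{x_2}$ into $\fragmentco{y_1}{y_2}$ on the $Y$-side, the $\Delta$-wide overlap of consecutive $\Delta$-puzzle pieces, and the torsion bound, which (strictly) upper-bounds the discrepancy between corresponding pair boundaries in $X$ and $Y$. Beyond \cref{fact:lengths_diff} and unpacking the definition of $\gamma$-containment, no additional combinatorial ingredient is needed.
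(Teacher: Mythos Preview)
Your proof is correct and follows essentially the same route as the paper's: you invoke the torsion bound $\sum_i\bigl||F_{j,i}|-|G_{j,i}|\bigr|\le 3\kappa-k$ (the paper simply recalls it, you justify it via the observation that $\Comp$ removes only equal-length pairs), and then carry out the same chain of inequalities on each side to show $y_1\ge y_{j,i_1}+\Delta$ and $y_2\le y_{j,i_2+1}$. The paper wraps up by appealing to \cref{obs:Dcont}, whereas you spell that step out inline; the arithmetic and the decomposition $13\ktot=4\ktot+\Delta+3\ktot$ are identical.
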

\begin{proof}
For $i \in \fragment{1}{z_j}$, if $X=F_j$ and $Y=G_j$, let
$x_{j,i}=f_{j,i}$ and $y_{j,i}=g_{j,i}$; otherwise, let
$x_{j,i}=g_{j,i}$ and $y_{j,i}=f_{j,i}$.

Recall that the sum of length-differences of pairs in $\I'_j$ is at most $3\ktot - k$,
that is, $|x_{j,i}-y_{j,i}|\leq 3\ktot -k$ for all $i$.
Let us first show that $Y\fragmentco{y_1}{y_2}$ does not overlap any of the first $i_1-1$ pairs.
The case where $i_1=1$ is trivial.
In the remaining case,
\[y_1 \geq x_1 - 4\ktot \geq x_{j,i_1} + \gamma - 4\ktot \geq y_{j,i_1}+\tbd \ktot - 4\ktot - (3\ktot - k) \geq y_{j,i_1} + (\tbd-7) \ktot = y_{j,i_1}+\Delta.\]
We next show that $Y\fragmentco{y_1}{y_2}$ does not overlap any of the last $z_j - i_2$ pairs, thus concluding the proof.
The case where $i_2=z_j$ is trivial.
In the remaining case,
\[y_2 \leq x_2 +4\ktot \leq x_{j,i_2+1}+\Delta - \gamma + 4\ktot \leq y_{j,i_2+1}+\Delta - \tbd \ktot + 4\ktot + (3\ktot - k) \leq y_{j,i_2+1} + \Delta - (\tbd-7) \ktot = y_{j,i_2+1}.\]
We are done by a direct application of \cref{obs:Dcont}.
\end{proof}

\begin{lemma}\label{lem:nofakelight}
    Consider some $j\in J$ and a position $p \in \OccE_k(F_j,G_j)$ such that $r_j+p$ is a light position of $T$.
    Then, $r_j + p \in \OccE_k(P,T)$.
\end{lemma}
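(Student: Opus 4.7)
The plan is to start from an optimal alignment $\A\colon F_j \onto G_j\fragmentco{p}{q}$ of some cost $c \le k$ (which exists by $p \in \OccE_k(F_j, G_j)$) and transform it into an alignment $\A'\colon P \onto R_j\fragmentco{p}{q'}$ of cost at most $k$; via \cref{lem:aligned-proto} and \cref{lem:aligned}, this would witness $r_j + p \in \OccE_k(P, T)$. First, \cref{lem:focus3k} lets us assume $\ed(L_1^P, \rot^{-\rho(r_j+p)}(Q)^*) < \lpref$. Next, observe that $P$ and $R_j$ are obtained from $F_j$ and $G_j$, respectively, by reinserting the canonical pairs $(\hat{Q}, \hat{Q})$ with $\hat{Q} = Q^\infty\fragmentco{0}{\tau + \Delta}$ that were deleted while trimming $\I_j$ into $\I'_j$. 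The strategy is to ``un-trim'' $\A$ by reinserting, for every \emph{trimmed} plain run of $\I'_j$, the missing $(\hat{Q}, \hat{Q})$ pairs on both the pattern and text sides and extending $\A$ by identity matches of the inserted copies of $Q$, in direct analogy with the second half of the proof of \cref{lem:redundantk}. This extension preserves the alignment cost provided that, within the corresponding plain run of $\I'_j$, the alignment $\A$ already matches at least one full copy of $Q$ exactly (untrimmed plain runs require no reinsertion).

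The main obstacle is therefore to prove the existence of such an exact $Q$-match inside every trimmed plain run of $\I'_j$, which has length exactly $2\tradeoff + \slack$. The crucial leverage here is the lightness of $r_j + p$. By the definitions of $\rred(P)$ and $\rred(T)$, every plain pair is separated by at least $\tbd \ktotm$ positions from every locked fragment in both $P$ and $T$; hence \cref{lem:monochrom}, applied to any maximal contiguous block of plain pairs in $\I'_j$, guarantees that $\A$ maps the interior of a plain run of $F_j$ strictly into the corresponding plain run of $G_j$, never crossing into a pair that overlaps a locked fragment. Using \cref{lem:lb_local} for every $L \in \mathcal{D}(r_j + p)$, chained via \cref{cor:al-cut-out} to accumulate the local lower bounds into a single inequality for $\A$, shows that at least $\sum_{L \in \mathcal{D}(r_j+p)} \edl{L}{Q} - \markf(r_j+p)$ edits of $\A$ must be consumed within the pairs overlapping locked fragments. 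Since $\markf(r_j+p) < \tradeoff$, the leftover edit budget that can land inside plain runs is at most $c - \sum_{L \in \mathcal{D}(r_j+p)} \edl{L}{Q} + \tradeoff$; matching this against the upper bound of \cref{lem:ub} (and summing the telescoping contributions from $L_1^P$ and $\res{\mathcal{L}^T}{\fragmentco{r_j+p}{r_j+p+m}}$) shows this leftover budget is at most $\tradeoff + \Oh(1)$ in total. Pigeonholing across the $2\tradeoff + \slack$ disjoint copies of $Q$ contributed (one per plain piece) by any trimmed plain run then forces at least one of them to be aligned without any edit, which is the sought exact $Q$-match.

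Given such an exact $Q$-match in each trimmed plain run, the insertion of the deleted $(\hat{Q}, \hat{Q})$ pairs adjacent to these exact matches on both sides (as in \cref{clm:cutout} and the final paragraph of the proof of \cref{lem:redundantk}) produces the desired alignment $\A'\colon P \onto R_j\fragmentco{p}{q'}$ whose cost remains bounded by $c \le k$. Consequently, $p \in \OccE_k(P, R_j)$, and $r_j + p \in \OccE_k(P, T)$ follows by \cref{lem:aligned}. The principal technical difficulty in this plan is the careful bookkeeping needed to aggregate the local lower bounds of \cref{lem:lb_local} into a single inequality valid for the entire alignment $\A$ while accounting for the \torn bound $\tor(\I'_j) \le \Delta/2 - k$ of \cref{fact:lengths_diff}, which is what makes \cref{lem:monochrom} applicable uniformly across all plain runs.
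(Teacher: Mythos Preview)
Your plan follows the same architecture as the paper's proof, but two steps are not wired up correctly.

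First, the role of \cref{lem:ub}. Set
\[
\Lambda \coloneqq \ed(L_1^P,\rot^{-\rho(r_j+p)}(Q)^*)+\sum_{i\ge 2}\edl{L_i^P}{Q}+\sum_{L\in\res{\mathcal{L}^T}{\fragmentco{r_j+p}{r_j+p+m}}}\edl{L}{Q}.
\]
\Cref{lem:ub} gives $\min_w\ed(P,T\fragmentco{r_j+p}{w})\le \Lambda$, which is a statement about $P$ and $T$, not about the cost $c$ of your alignment on $F_j,G_j$; it cannot be ``matched against'' the leftover budget the way you describe. The paper's argument is a case split: if $\Lambda\le k$, you are done \emph{immediately} via \cref{lem:ub}, with no un-trimming at all. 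Only when $\Lambda>k$ is the pigeonhole needed, and then the leftover bound comes purely from $c\le k<\Lambda$ combined with the aggregated lower bound, yielding leftover ${}<2\markf(r_j+p)<2\tradeoff$ per plain run (not $\tradeoff+\Oh(1)$); \cref{lem:ub} plays no role in this branch.

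Second, the aggregation step you flag as ``the principal technical difficulty'' is not mere bookkeeping, and \cref{cor:al-cut-out} is the wrong tool. To sum the local lower bounds of \cref{lem:lb_local} across all $L\in\mathcal{D}(r_j+p)$, you must prove that the alignment intervals in $\B$ corresponding to these locked fragments (transported into $F_j$ and $G_j$) are \emph{pairwise disjoint}. Disjointness within $\mathcal{L}^P$ and within $\mathcal{L}^T$ is free, but for $L^P\in\mathcal{D}(r_j+p)\cap\mathcal{L}^P$ versus $L^T\in\mathcal{D}(r_j+p)\cap\mathcal{L}^T$ you need that $\B(L^P)$ and the $G_j$-image of $L^T$ do not overlap. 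The paper obtains this through a separate ``marking consistency'' claim: the overlap indicator $\markf(r_j+p,\ktotm,L^P,L^T)$ computed on $T$ equals the analogous overlap count computed inside $G_j$, which is zero precisely because both fragments belong to $\mathcal{D}(r_j+p)$. That claim is where \cref{lem:monochrom} really does its work---not just to confine $\B$ to plain runs, but to show that overlapping pairs of locked fragments land in the same non-plain block and hence receive identical $\orig$-shifts.
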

\begin{proof}
    First, observe that we may assume that $\I'_j \neq \I_j$; otherwise, the statement follows trivially.
    To avoid clutter, we drop the subscript $j$ when referring to $F_j$ and $G_j$ and simply call them $F$ and~$G$, respectively.

    Let $b$ denote a position of $G$ and let
    $\mathcal{B}: F \onto G\fragmentco{p}{b}$ denote an alignment of cost $\min_t \ed(F, G\fragmentco{p}{t}) \leq k$.
    We intend to show that, in this case, there exist a position $c$ of $R_j$ and an alignment $\mathcal{C}: P \onto R_j\fragmentco{p}{c}$ of the same cost.


    Set \[
        \res{\mathcal{L}^T}{\fragmentco{r_j+p}{r_j+p+m}}=\{L^T_i : i \in \fragment{i_1}{i_2}\}.
    \]
    Observe that there is a natural mapping of each locked fragment $L^P_i \in \mathcal{L}^P$ to a fragment of $F$, which we denote by~$L^{F}_i$;
    we denote the set of fragments in the image of this mapping by~$\mathcal{L}^F$.
    Similarly, there is a natural mapping of each locked fragment $L^T_i \in \{L^T_i : i \in \fragment{i_1}{i_2}\}$ to a fragment of~$G$, which we denote by~$L^{G}_i$;
    we denote the set of fragments in the image of this mapping by~$\mathcal{L}^G$.
    Let $\mathcal{D}'(p)$ consist of the images of the locked fragments in $\D(r_j+p)$ under these mappings.
    Observe that each fragment $L \in \mathcal{L}^F \cup \mathcal{L}^G$ only overlaps
    non-plain pairs.
    For a fragment $L^X_y \in \mathcal{L}^X$, where $X \in \{P,T,F,G\}$, let $L^X_y=X\fragmentco{\ell^X_y}{r^X_y}$.
    The following claim follows instantly.

    \begin{claim}\label{claim:lockcont}
        Each fragment $L \in \mathcal{L}^F \cup \mathcal{L}^G$ is $\tbd \ktotm$-contained by a sequence
        of contiguous non-plain pairs in~$\I'_j$.
        \lipicsClaimEnd
    \end{claim}

    We next essentially show that, if we were to mark positions of $G_j$
    based on overlaps of pairs of fragments in $\mathcal{L}^F \times \mathcal{L}^G$, consistently with \cref{def:marking},
    position $p$ of $G_j$ would get the same number of marks as position $r_j+p$ of $T$.

    \begin{claim}\label{claim:wip}
        For all $x \in \fragment{1}{|\mathcal{L}^P|}$ and $y \in \fragment{i_1}{i_2}$, we have
        \[\big|\fragmentco{\ell^G_y-\ktotm}{r^G_y+\ktotm} \cap \fragmentco{p + \ell^F_x}{p + r^F_x}\big| =
        \big|\fragmentco{\ell^T_y-\ktotm}{r^T_y+\ktotm} \cap \fragmentco{r_j + p + \ell^P_x}{r_j + p + r^P_x}\big|.\]
    \end{claim}
    \begin{claimproof}
        Consider sequences $\mathcal{M}_x = (F_{j,w_1}, G_{j,w_1}) \cdots (F_{j,w_2}, G_{j,w_2})$
        and $\mathcal{M}_y = (F_{j,w_3}, G_{j,w_3}) \cdots (F_{j,w_4}, G_{j,w_4})$ of
        contiguous non-plain pairs of $\I'_j$
        that $\tbd \ktotm$-contain $L^F_x$ and $L^G_y$, respectively,
        and are maximal in the sense that they cannot be extended and remain monochromatic.
        (Such sequences exist by \cref{claim:lockcont}.)

        First, consider the case where $\mathcal{M}_x$ and $\mathcal{M}_y$ do not coincide.
        We treat the case where $\mathcal{M}_x$ lies to the left of~$\mathcal{M}_y$; the other case can be handled analogously.
        In this case, $w_2<z_j$ and $w_3>1$.
        Recall that $p \in \fragment{0}{3\ktot}$.
        By \cref{lem:monochrom}, we have
        that $G\fragmentco{p + \ell^F_x}{p + r^F_x}$
        only overlaps pairs in $\mathcal{M}_x$ and hence it is disjoint
        from $\fragmentco{\ell^G_y-\ktotm}{r^G_y+\ktotm}$ since
        $p + r^F_x \leq g_{j,w_3} \leq \ell^G_y-\ktotm$.
        Now, observe that $\orig(j,w_1) \leq \orig(j,w_3)$
        and hence
        \[r_j + p + r^P_x = r_j + \orig(j,w_1)\tau + p + r^F_x < r_j + \orig(j,w_3)\tau + \ell^G_y-\ktotm = \ell^T_y - \ktotm.\]
        Thus, in this case, both considered intersections are empty.

        Otherwise, $\mathcal{M}_x$ and $\mathcal{M}_y$ coincide.
        We then have
        \[r_j + \orig(j,w_1)\tau + \fragmentco{p+\ell^F_x}{p+r^F_x} = r_j + \fragmentco{p+\ell^P_x}{p+r^P_x} = \fragmentco{r_j + p+\ell^P_x}{r_j + p+r^P_x}\]
        and
        \[r_j + \orig(j,w_1)\tau + \fragmentco{\ell^G_y - \ktotm}{r^G_y + \ktotm} = r_j + \fragmentco{\ell^T_y - r_j - \ktotm}{r^T_y - r_j + \ktotm} = \fragmentco{\ell^T_y - \ktotm}{r^T_y + \ktotm}.\]
        The statement readily follows in the considered case.
    \end{claimproof}

    Let $\mathcal{B} = (f_v,g_v)_{v=0}^u$.
    For each $L_i^F\in \D'(p)$, let $\fragmentco{a^F_i}{b^F_i}\sub \fragment{0}{u}$ so that $L^F_i = F\fragmentco{f_{a^F_i}}{f_{b^F_i}}$ and $\mathcal{B}(L^F_i) = G\fragmentco{g_{a^F_i}}{g_{b^F_i}}$.
    Symmetrically, for each  $L^G_i\in \D'(p)$, let $\fragmentco{a^G_i}{b^G_i}\sub \fragment{0}{u}$ so that
    $L^G_i = G\fragmentco{g_{a^G_i}}{g_{b^G_i}}$ and $\mathcal{B}^{-1}(L^G_i) = F\fragmentco{f_{a^G_i}}{f_{b^G_i}}$.
    Further, let $\mathcal{E}$ denote the multiset union of the multisets
    \[\mathcal{E}^F = \{\fragmentco{a^F_i}{b^F_i} : L_i^F\in \mathcal{L}^F \cap \mathcal{D}'(p)\} \quad \text{and} \quad \mathcal{E}^G = \{\fragmentco{a^G_i}{b^G_i} : L_i^G\in \mathcal{L}^G\cap \D'(p)\}.\]
    In fact, the following claim implies that the multiplicity of each element of $\mathcal{E}$ is one.

    \begin{claim}\label{claim:disjoint}
        The intervals in $\mathcal{E}$ are pairwise disjoint.
    \end{claim}
    \begin{claimproof}
        First, observe that the elements of each of $\mathcal{L}^F$ and $\mathcal{L}^G$
        are pairwise disjoint and hence the elements of each of $\mathcal{E}^F$ and $\mathcal{E}^F$
        are pairwise disjoint.

        Now, consider any two fragments $L^F_x \in \D'(p)$ and
        $L^G_y\in \D'(p)$. Observe that since $L^P_x, L^T_y \in \D(r_j+p)$, we have
        $\markf(r_j+p,\ktotm,L^P_x,L^T_y)=0$.
        Hence, $\fragmentco{\ell^T_y-\ktotm}{r^T_y+\ktotm} \cap \fragmentco{r_j + p + \ell^P_x}{r_j + p + r^P_x}=\emptyset$.
        By \cref{claim:wip}, we also have $\fragmentco{g_{a^G_y}-\ktotm}{g_{b^G_y}+\ktotm} \cap \fragmentco{p + f_{a^F_x}}{p + f_{b^F_x}}=\emptyset$.

        Since the cost of $\mathcal{B}$ is no more than $k$,
        $g_{i}\in \fragment{p+f_i-k}{p+f_i+k}$ holds for all $i\in \fragment{0}{u}$.
        In particular, we have $\fragment{g_{a^G_y}-k-1}{g_{b^G_y}+k} \supseteq \fragmentco{p+f_{a^G_y}-1}{p+f_{b^G_y}+1}$.
        Since $\fragmentco{g_{a^G_y}-\ktotm}{g_{b^G_y}+\ktotm} \supseteq \fragment{g_{a^G_y}-k-1}{g_{b^G_y}+k}$,
        we then have $\fragmentco{f_{a^G_y}-1}{f_{b^G_y}+1}\cap \fragmentco{f_{a^F_x}}{f_{b^F_x}}=\emptyset$,
        that is, $\fragment{f_{a^G_y}}{f_{b^G_y}}\cap \fragment{f_{a^F_x}}{f_{b^F_x}}=\emptyset$.
        This implies $\fragmentco{a^G_y}{b^G_y}\cap \fragmentco{a^F_x}{b^F_x}=\emptyset$;
        consequently, the intervals in $\mathcal{E}$ are pairwise disjoint.
    \end{claimproof}

    \[\text{Set } \Lambda \coloneqq \ed(L_1^P,\rot^{-\rho(r_j+p)}(Q)^*)+\sum_{i=2}^{\ell^P}\edl{L^P_i}{Q}+\sum_{i=i_1}^{i_2}\edl{L^T_i}{Q}.\]

    \begin{claim}\label{claim:lb}
        \[\sum_{L^{F}_i\in \mathcal{D}'_j(p)}\ed(L^{F}_i,\mathcal{B}(L^{F}_i)) + \sum_{L^{G}_i \in \mathcal{D}'_j(p)}\ed(L^{G}_i, \B^{-1}(L^{G}_i)) \geq \Lambda - 2\markf(r_j+p).\]
    \end{claim}
    \begin{claimproof}
        Using~\cref{lem:lb_local}, we lower bound each individual term of the left-hand side of the proved inequality.
        We consider three cases.
        \begin{enumerate}
            \item Consider some $L^{G}_i=G\fragmentco{p+\ell}{p+r} \in \mathcal{D}'_j(p)$ and let $L^T_i=\fragmentco{r_j + p +\ell'}{r_j + p + r'}$.
                Since $\ed(F,G\fragmentco{p}{b}) \leq k$, we have $\mathcal{B}^{-1}(L^{G}_i) \substr F\fragmentco{\max\{0,\ell-k\}}{\min\{|F|,r+k\}}$.
                By \cref{claim:lockcont}, $L^G_i$ is $\tbd \ktotm$-contained by a sequence
                $\mathcal{M}$ of contiguous non-plain pairs in $\I'_j$.
                Then, by \cref{lem:monochrom}, $F\fragmentco{\max\{0,\ell-k\}}{\min\{|F|,r+k\}}$ only overlaps pairs of $\I'_j$ that are in $\mathcal{M}$ and
                hence it is a substring of
                $P\fragmentco{(\ell' - \ell) + \max\{0,\ell-k\}}{(r' - r) + \min\{|F|,r+k\}}$, which in turn is a substring of
                $P\fragmentco{(\max\{0,\ell'-k\}}{\min\{|P|,r'+k\}}$.
                Thus, $\ed(L^G_i, \mathcal{B}^{-1}(L^{G}_i)) \geq \edl{L^T_j}{Q} - \markf(r_j+p,L^T_j)$
                holds by~\cref{lem:lb_local}.
            \item Consider some $L^F_i = F\fragmentco{\ell}{r} \in \mathcal{D}'_j(p)\setminus \{L^P_1\}$ and let $L^P_i = P\fragmentco{\ell'}{r'}$.
                Since $\ed(F,G\fragmentco{p}{b}) \leq k$, we have $\mathcal{B}(L^{F}_i) \substr G\fragmentco{\max\{p,p+\ell-k\}}{\min\{|G|,p+r+k\}}$.
                As before, by combining \cref{claim:lockcont} and \cref{lem:monochrom} we get that $G\fragmentco{\max\{p,p+\ell-k\}}{\min\{|G|,p+r+k\}}$
                is a substring of
                $T\fragmentco{r_j + (\ell' - \ell) + \max\{p,p+\ell-k\}}{r_j + (r' - r) + \min\{|G|,p+r+k\}}$, which in turn is a substring of
                $T\fragmentco{r_j + p + \max\{0,\ell'-k\}}{r_j + p + \min\{|T|,r'+k\}}$.
                Thus,
                $\ed(L^F_i,\mathcal{B}(L^F_i))\geq \edl{L^P_i}{Q} - \markf(r_j+p,L^P_i)$ holds by~\cref{lem:lb_local}.
        \item Lastly, since $\ed(F,G\fragmentco{p}{b}) \leq k$, we have that $\mathcal{B}(L^F_1)$ is a prefix of $T\fragmentco{r_j+p}{r_j+p+|L^P_1|+k}\}$, and hence
            $\ed(L^F_1,\mathcal{B}(L^F_1))\geq \ed(L_1^P,\rot^{-\rho(r_j+p)}(Q)^*) - \markf(r_j+p,L_1^P)$ holds by~\cref{lem:lb_local}.
    \end{enumerate}

    We now put everything together.
    In the following inequalities, we use the fact that,
    for each $L \in (\mathcal{L}^P \cup \mathcal{L}^T_{\fragmentco{r_j+p}{r_j+p+m}})\setminus{\mathcal{D}(r_j+p,B)}$, we have $\edl{L}{Q} \leq \markf(r_j+p,L)$,
    and hence the sum of $\edl{L}{Q} - \markf(r_j+p,L)$ over all such $L$ is at most zero.
    \begin{align*}
    &		\sum_{L^{F}_i\in \mathcal{D}'_j(p)}\ed(L^{F}_i,\mathcal{B}(L^{F}_i)) + \sum_{L^{G}_i \in \mathcal{D}'_j(p)}\ed(\B^{-1}(L^{G}_i),L^{G}_i)\\
    &\quad  \geq \ed(L^{P}_1,\rot^{-\rho(r_j+p)}(Q)^*) - \markf(r_j+p,L^P_1) + \sum_{L^P_i \in \mathcal{D}(r_j+p) \setminus \{L^P_1\}} (\edl{L^P_i}{Q} - \markf(r_j+p,L^P_i))\\
    &\qquad\qquad\qquad\qquad\qquad\qquad\qquad\qquad\qquad\qquad + \sum_{L^T_i \in \mathcal{D}(r_j+p)} (\edl{L^T_i}{Q} - \markf(r_j+p,L^T_i))\\
    &\quad  \geq \ed(L_1^P,\rot^{-\rho(r_j+p)}(Q)^*) + \sum_{i=2}^{\ell^P}\edl{L^P_i}{Q}+\sum_{i=i_1}^{i_2}\edl{L^T_i}{Q} - \sum_{i=1}^{\ell^P}\markf(r_j+p,L^P_i) \\
    &\qquad\qquad\qquad\qquad\qquad\qquad\qquad\qquad\qquad\qquad\qquad\qquad\qquad\qquad\;\, - \sum_{i=i_1}^{i_2}\markf(r_j+p,L^T_i)\\
    &\quad  = \Lambda - \sum_{i=1}^{\ell^P}\sum_{v=1}^{\ell^T} \markf(r_j+p,\ktotm,L^P_v,L^T_i) - \sum_{i=i_1}^{i_2}\sum_{v=1}^{\ell^P} \markf(r_j+p,\ktotm,L^P_v,L^T_i) \\
    &\quad  \geq \Lambda - 2\markf(r_j+p).
    \end{align*}
    This concludes the proof of the claim.
\end{claimproof}

Next, due to \cref{lem:focus3k}, we have
$\ed(L_1^P,\rot^{-\rho(r_j+p)}(Q)^*) < \lpref$.
By a direct application of \cref{lem:ub}, we then have
that $\min_t \ed(P,T\fragmentco{r_j+p}{t})\leq \Lambda$.
If $\Lambda\leq k$, then we are done. For the remainder of the proof we thus consider the case where $\Lambda >k$, which, combined with the fact
that $r_j + p$ is a light position of $T$, means that
$\Lambda-2\markf(r_j+p) > k-2\markf(r_j+p) \geq k-2 \tradeoff$.

\begin{claim}\label{claim:plain_exact}
    For any run $M$ of $2\tradeoff + \slack$ consecutive plain pairs $(F_{j,i_1},G_{j,i_1}), \ldots, (F_{j,i_2},G_{j,i_2})$ in $\I'_j$,
    there exists some $i\in \fragmentoo{i_1}{i_2}$ for which $\ed(F\fragmentco{f_{j,i}}{f_{j,i+1}}, \mathcal{B}(F\fragmentco{f_{j,i}}{f_{j,i+1}})) = 0$.
\end{claim}
\begin{claimproof}
    We have $|f_{j,i_1+14} - f_{j,i_1}| \geq 14 \tau - d_T \geq 14 \ktot/2 - d_T \geq 6 \ktot = \Delta$. Similarly,
    $|f_{j,i_2-13} - f_{j,i_2+1}| \geq \Delta$.
    Thus, $U=F\fragmentco{f_{j,i_1+14}}{f_{j,i_2-13}}$ is $\Delta$-contained by $M$ and hence only overlaps pairs in $M$ by \cref{obs:Dcont}.

    Now, each fragment $L_i^F \in \mathcal{L}^F$ is $13\ktotm$-contained by a sequence of
    contiguous non-plain pairs and hence only overlaps non-plain pairs.
    Further, for each fragment $L_i^G \in \mathcal{L}^G$, as shown in the proof of
    \cref{claim:lb}, $\mathcal{B}^{-1}(L_i^G)$ only overlaps non-plain pairs.
    Observe that two fragments of $F$ are necessarily disjoint if the sets of pairs that they overlap are disjoint, and hence $\fragmentco{f_{j,i_1+14}}{f_{j,i_2-13}}$
    is disjoint from all elements of $\mathcal{E}$.
    As the intervals in $\mathcal{E}$ are pairwise disjoint by \cref{claim:disjoint}, using \cref{claim:lb}, we obtain
    \begin{align*}
        k &\geq \ed(F,G\fragmentco{p}{b})\\
          &\geq \ed(U,\mathcal{B}(U)) + \sum_{L^{F}_i\in \mathcal{D}'_j(p)}\ed(L^{F}_i,\mathcal{B}(L^{F}_i)) + \sum_{L^{G}_i \in \mathcal{D}'_j(p)}\ed(\B^{-1}(L^{G}_i),L^{G}_i)\\
          &\geq \ed(U,\mathcal{B}(U)) + \Lambda - 2\markf(r_j+p) \\
          &> \ed(U,\mathcal{B}(U)) + k - 2\tradeoff.
    \end{align*}
    Hence, we have $\ed(U,\mathcal{B}(U)) < 2\tradeoff$.
    This implies that $\sum_{i=i_1+14}^{i_2-14} \ed(F\fragmentco{f_{j,i}}{f_{j,i+1}},\mathcal{B}(F\fragmentco{f_{j,i}}{f_{j,i+1}})) < 2\tradeoff$,
    and hence, since the number of summands in in the left-hand side of the inequality is $i_2 - 14 - (i_1+14) + 1 = 2\tradeoff + \slack - 28 > \tradeoff$
    and each of these summands is a non-negative integer, the claim follows.
\end{claimproof}

We are now ready to conclude the proof of the lemma.
Let $\mathcal{M}_1, \ldots, \mathcal{M}_w$ denote the trimmed runs of $2\tradeoff + \slack$
consecutive plain pairs in $\I'_j$ such that, for all $i$, $\mathcal{M}_i$
originated from a run with $\chi(i)$ more plain pairs. For $i \in \fragment{1}{w}$, let $\mathcal{M}_i=(F_{j,s_i},G_{j,s_i}) \cdots (F_{j,e_i},G_{j,e_i})$
let $\psi(i) \in \fragmentoo{s_i}{e_i}$ be such that
and $\ed(F\fragmentco{f_{j,\psi(i)}}{f_{j,\psi(i)+1}},
\mathcal{B}(F\fragmentco{f_{j,\psi(i)}}{f_{j,\psi(i)+1}})) = 0$; observe that $\psi(i)$
exists due to \cref{claim:plain_exact}.
Let us now show how to construct $\mathcal{C}$ given $\mathcal{B} = (f_v,g_v)_{v=0}^u$.
We intuitively achieve this by inserting $\chi(i)$ copies of $Q^{\infty}\fragmentco{0}{\tau}$ after each of $F\fragmentco{f_{j,\psi(i)}}{f_{j,\psi(i)+1}}$
and $\mathcal{B}(F\fragmentco{f_{j,\psi(i)}}{f_{j,\psi(i)+1}})$ and aligning them without
errors, thus restoring the original length of each trimmed plain run,
without changing the cost of the alignment.
Initially, set $\mathcal{C} \coloneqq \mathcal{B}$.
Then, for each $\psi(i)$, in decreasing order
\begin{itemize}
    \item replace each pair $(f_v,g_v)$ of $\mathcal{C}$ that satisfies $f_v \geq f_{j,\psi(i)+1}$
        with $(\chi(i)\cdot\tau + f_v, \chi(i)\cdot\tau + g_v)$, and
    \item insert $(f_{j,\psi(i)+1}+\mu, g_{j,\psi(i)+1}+\mu)_{\mu=0}^{\chi(i)\cdot\tau-1}$ after $(f_{j,\psi(i)+1}-1,g_{j,\psi(i)+1}-1)$.\qedhere
\end{itemize}
\end{proof}

We conclude with the proof of \cref{lem:correctlight} as promised.

\correctlight*
\begin{proof}
$(\subseteq)$: This direction is an immediate consequence of \cref{cor:supset}.

$(\supseteq)$: This direction is an immediate consequence of \cref{lem:nofakelight}.
\end{proof}

\subsection{Combining the Partial Results: Faster \SM}

We are ready to prove our headline result---\cref{lem:sthold}---which we restate here for
convenience.

\newsm*
\begin{proof}
\newcommand{\lmore}{\mathcal{R}}
\newcommand{\mq}{\mathcal{F}}
    Consistently with all previous sections, set \[
        \tradeoff \coloneqq \max\{1 , \lfloor{\sqrt{d}/\sqrt{\log(n+1) \log(d+1)}}\rfloor\}
        \leq d\quad\text{and}\quad \ktot \coloneqq k+d_P+d_T\quad\text{and}\quad
        \ktotm \coloneqq 7d.
    \]

    First, observe that we may assume
    \(\tradeoff = \Theta (\sqrt{d}/\sqrt{(\log(n+1) \log(d+1)})\)---otherwise,
    in the case where $\sqrt{d} \leq \sqrt{\log(n+1) \log(d+1)}$,
    {\tt PeriodicMatches($P$, $T$, $k$, $d$, $Q$)} from \cref{lm:impEdC} already
    runs in time
    $\cO(d^4) = \cO(d^{3.5}\sqrt{\log(n+1) \log(d+1)})$.\footnote{All running times in this proof are in the \modelname model.}

    We proceed in roughly four steps.
    \begin{itemize}
        \item First, in a preprocessing step, we identify the heavy positions \(\Hv\) and the light
            positions \(\light\) in \(T\), as well as a filter \(\mq\) (according to
            \cref{lm-tmp-4.2-2}) for where potential \(k\)-error occurrences may start.
        \item Next, we compute all \(k\)-error occurrences starting at a position in \(\Hv
            \), using the {\tt Verify}-based algorithm from \cref{sus:heavy}.
            In particular, we obtain \(\Hv \cap \OccE_k(P,T)\) as a set of positions.
        \item Next, we use \(\swaps\) from \cref{lem:swaps} to obtain a candidate set
            \(\lmore\) of potential starting positions of \(k\)-error occurrences
            (represented as \(\Oh(d^3\tradeoff)\) disjoint arithmetic progressions with difference
            \(\tau\)).
            From \cref{sus:light}, we have
            \((\lmore \cap \mq) \setminus (\Hv \cap \mq) = \OccE_k(P, T) \cap \light\);
            hence we proceed to compute \(\Hv \cap \mq\) (represented as a set), and
            \(\lmore \cap \mq\) (represented as \(\Oh(d^3\tradeoff)\) disjoint
            arithmetic progressions with difference~\(\tau\)), and thereafter compute
            their set difference to obtain \(\OccE_k(P, T) \cap \light\)
            (represented as \(\Oh(d^3\tradeoff)\) disjoint arithmetic progressions with
            difference~\(\tau\)).
        \item In a post-processing step, we union the two sets \(\Hv \cap \OccE_k(P,T)\) and
            \(\OccE_k(P, T) \cap \light\) and compute a representation as \(\Oh(d^3)\)
            arithmetic progressions with difference \(q\).
    \end{itemize}

    \subparagraph{Preprocessing.}
    We compute the sets of locked fragments $\LP = \locked(P,Q,d_P,\lpref)$ and
    $\LT = \locked(T,Q,d_T,0)$ in $\cO(d^2)$ time using \cref{lem:klocked} and call
    $\protect\heavy(P,T,d,k,Q,\LP,\LT,\ktotm,\tradeoff)$ to obtain the set $\Hv$ of
    heavy positions,
    represented as the union of $\Oh(d^2)$ disjoint integer ranges,
    in $\Oh(d^2 \log \log d)$ time (cf.~\cref{lem:heavy-alg}).

    Define $J$ and each of $r_j$ and $\I'_j$, for $j\in J$, as
    in \cref{sec:smnew,sec:DPMtile} and \cref{def:fjgj}, and set \[
        \lmore \coloneqq \bigcup_{j \in J} (r_j +  \OccE_k( \I'_j)).
    \]
    Observe that by \cref{lem:correctlight}, $\OccE_k(P,T)$ is equal to the union
    \[
        \OccE_k(P,T) = (\OccE_k(P,T) \cap \Hv) \cup (\OccE_k(P,T) \cap \light) =
         (\OccE_k(P,T) \cap \Hv) \cup (\lmore \cap \light).
    \]
    Finally, for our filter, we recall that \cref{lm-tmp-4.2-2} yields
    \[
        \mq \coloneqq \bigcup_{j\in \mathbb{Z}} \fragment{j q-x_T-\kappa-d_T}{j
        q-x_T+\kappa+d_T} \supseteq \OccE_k(P,T).
    \]
    Observe that we thus have
    \begin{equation}\label{eq:mq}
        \OccE_k(P,T) \cap \light = (\lmore \cap \mq) \setminus (\Hv \cap \mq).
    \end{equation}

    \subparagraph{Heavy occurrences.}

    By \cref{lem:heavy-total}, we can compute $\OccE_k(P,T) \cap \Hv$ in time
    \[\cO(d^3 + d^{4} / \tradeoff)=\cO(d^{3.5}\sqrt{\log n \log d}).\]

    \subparagraph{Light occurrences.}

    Let us now proceed to computing the remaining $k$-error occurrences, namely those that
    start at light positions.
    To this end, we first compute the $\cO(d)$-size sets $\rred(P)$ and $\rred(T)$,
    defined in the beginning of \cref{sec:redundantsqrtk},
    in $\cO(d \log\log d)$ time (cf.~\cref{lem:fewreds}).
    Then, we call \(\swaps(T, P, k, Q, \A_P, \A_T, \rred(P), \rred(T), 2\tradeoff +
    \slack)\) from \cref{lem:swaps}, which returns
    a representation of \(\lmore\)
    as $\cO(d^3 \cdot \tradeoff)$ arithmetic progressions with difference
    $\tau \coloneqq q\ceil{{\kappa}/{2q}}$.
    Plugging in \cref{thm:dpm} for the \DPM data structure, \swaps runs in time \[
        \cO(d^3 \cdot \tradeoff \cdot \log n \log d)=\cO(d^{3.5}\sqrt{\log n \log d}).
    \]

    We move on to remove the surplus positions from the candidate set \(\lmore\).
    We start by computing \(\Hv \cap \mq\).

    \begin{claim}\label{claim:hmq}
        The set $\Hv \cap \mq$ is of size $\cO(d^3/\tradeoff)$ and can be computed in time $\cO(d^3/\tradeoff)$.
    \end{claim}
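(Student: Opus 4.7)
The plan is to split the analysis according to whether consecutive intervals $I_j \coloneqq \fragment{jq-x_T-\kappa-d_T}{jq-x_T+\kappa+d_T}$ comprising $\mq$ are disjoint; since each has length $2(\kappa+d_T)+1=\cO(d)$, the threshold is $q$ versus $2(\kappa+d_T)$.

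In the easy regime $q\le 2(\kappa+d_T)$, in which case $|Q|=q=\cO(d)$, adjacent intervals touch and $\mq=\mathbb{Z}$, so $\Hv\cap\mq=\Hv$. \Cref{lem:heavy-bound} then directly gives $|\Hv|\le \cO((d^2/\tradeoff+d)(\ktotm+|Q|))=\cO(d^3/\tradeoff)$, and enumerating the positions in the $\cO(d^2)$ already-sorted ranges that represent $\Hv$ runs in $\cO(|\Hv|)=\cO(d^3/\tradeoff)$ time.

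The interesting regime $q>2(\kappa+d_T)$ has disjoint $I_j$, and here $|\Hv|$ itself can reach $\Theta(d^2 q/\tradeoff)$, so I must exploit the sparsity of $\mq$. I partition $\Hv=\Hv_{\mathrm{mark}}\cup\Hv_{\mathrm{loc}}$ according to which condition of~\cref{def:light} fails. For $\Hv_{\mathrm{mark}}$ I use $|\Hv_{\mathrm{mark}}\cap\mq|\le \tradeoff^{-1}\sum_{v\in\mq}\markf(v)$ and bound the sum pair by pair: for $L^P=P\fragmentco{\ell_P}{r_P}\in\LP$ and $L^T=T\fragmentco{\ell_T}{r_T}\in\LT$, the mark range $\fragmentoo{\ell_T-\ktotm-r_P}{r_T+\ktotm-\ell_P}$ has length $|L^P|+|L^T|+2\ktotm$, and intersecting it with the length-$\cO(d)$, $q$-spaced intervals of $\mq$ yields $\cO((|L^P|+|L^T|+\ktotm)d/q+d)$ positions, each weighted by $\min(\edl{L^P}{Q},\edl{L^T}{Q})$. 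Summing over all pairs using $\sum_{L^P}|L^P|,\sum_{L^T}|L^T|=\cO(dq)$ and $|\LP|,|\LT|=\cO(d)$ (all from~\cref{lem:klocked}), together with $\sum_{L^P,L^T}\min(\edl{L^P}{Q},\edl{L^T}{Q})=\cO(d^2)$ (bound $\min$ by either argument), gives $\sum_{v\in\mq}\markf(v)=\cO(d^3+d^4/q)=\cO(d^3)$ since $q>d$ here. For $\Hv_{\mathrm{loc}}$, each locked fragment $L^T\in\LT$ (counted three times for the shifts $0$, $m$, and $|L_1^P|$ appearing in~\cref{def:light}) forces a heavy neighborhood of length $\cO(|L^T|+\ktotm)$ whose intersection with $\mq$ has size $\cO(|L^T|d/q+d)$, summing to $\cO(d\|\LT\|/q+d\cdot d_T)=\cO(d^2)$. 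Thus $|\Hv\cap\mq|=\cO(d^3/\tradeoff)$.

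Algorithmically in the disjoint regime, I sweep the $\cO(d^2)$ sorted heavy ranges in parallel with the $I_j$, which I enumerate on the fly using the closed-form bounds $j\in\bigl[\lceil(L+x_T-\kappa-d_T)/q\rceil,\lfloor(R+x_T+\kappa+d_T)/q\rfloor\bigr]$ for each heavy range $\fragment{L}{R}$; for each overlapping pair $(H,I_j)$ I emit the $|H\cap I_j|$ positions explicitly. Applying~\cref{lem:heavy-bound} with $b=2(\kappa+d_T)$ yields $\cO(|\Hv|/q+d^2)=\cO(d^2)$ overlapping pairs in total, and the previous paragraph bounds the output size by $\cO(d^3/\tradeoff)$, for a total running time of $\cO(d^2+d^3/\tradeoff)=\cO(d^3/\tradeoff)$. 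The main obstacle is exactly this sharpening of $|\Hv\cap\mq|$ when $q\gg d$: the naive $|\Hv|$ bound overshoots by a factor of $q/d$, and one must carry out the mark-counting over $\mq$ rather than over the entire universe to obtain the $q$-independent $\cO(d^3/\tradeoff)$ estimate.
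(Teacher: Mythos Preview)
Your approach is correct in spirit but takes a substantially longer route than the paper and contains one unjustified step.

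\textbf{The gap.} Your case split is at $q$ versus $2(\kappa+d_T)$, and in the hard regime you assert ``$q>d$ here'' to conclude $d^4/q=\cO(d^3)$. That implication fails: since $\kappa+d_T=k+d_P+2d_T$ can be much smaller than $d$ (indeed, $d$ is an input parameter with only the one-sided guarantees $d\ge 2k$, $d\ge d_P$, $3d\ge d_T$), the hard regime $q>2(\kappa+d_T)$ can include values $q\le d$. The same unjustified assumption reappears in your running-time analysis when you claim $|\Hv|/q=\cO(d^2)$ overlapping pairs. The fix is trivial: split at $q$ versus $d$ instead. The easy case $q\le d$ needs only $|\Hv|=\cO(d^3/\tradeoff)$ from \cref{lem:heavy-bound} (it does not matter whether $\mq=\mathbb{Z}$), and in the genuine hard case $q>d$ your detailed analysis goes through verbatim.

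\textbf{Comparison with the paper.} The paper's proof is a two-line density argument: it invokes \cref{lem:heavy-bound} once to get $|\Hv|=\cO(d^2(d+q)/\tradeoff)$, then observes that $\mq$ contains $\cO(d)$ out of any $q$ consecutive integers, so $|\Hv\cap\mq|=\cO(d^2+|\Hv|\cdot d/q)$ when $q>d$. There is no need to revisit the marking scheme or to split $\Hv$ into $\Hv_{\mathrm{mark}}\cup\Hv_{\mathrm{loc}}$. Your approach re-derives mark sums restricted to $\mq$ pair by pair, which is considerably more work; on the other hand, your calculation is self-contained and arguably yields the additive $\cO(d^2)$ term more transparently (the paper's additive ``$d^2$'' is stated tersely). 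Both routes arrive at the same $\cO(d^3/\tradeoff)$ bound once the case threshold is set correctly.
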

    \begin{claimproof}
        The proof is similar to a part of the proof of~\cref{lem:heavy-total}.

        Due to \cref{lem:heavy-alg}, our representation of $\Hv$ consists of $\Oh(d^2)$ disjoint integer ranges.
        In addition, due to \cref{lem:heavy-bound}, we have \[
            |\Hv|=\cO((d^2/\tradeoff + d)(\ktotm+q)) = \cO(d^2/\tradeoff \cdot (d+q)).
        \]

        Let us first upper-bound the size of \(|\Hv \cap \mq|\).
        We distinguish between two cases.
        \begin{itemize}
            \item First, if $q \leq d$, we have $|\Hv \cap \mq| \leq |\Hv|=\cO(d^3/\tradeoff)$.
            \item As for the complementary case where $q > d$,
                observe that only $\cO(d)$ out of any $\cO(q)$ consecutive integers are in $\mq$.
                Hence, $\Hv \cap \mq$ is of size $\cO(d^2 + |\Hv|\cdot d/q) = \cO(d^3/\tradeoff)$.
        \end{itemize}

        As for computing $\Hv \cap \mq$, we first
        sort the $\Oh(d^2)$ integer ranges that comprise $\Hv$ with respect to their starting positions in $\cO(d^2 \log d)$ time
        and then scan them from left to right, skipping positions in $\mathbb{Z} \setminus \mq$.
        The running time of the scan is proportional to the total number of input ranges and output positions, and hence we are done.
    \end{claimproof}

    We move on to compute \(\lmore \cap \mq\).
    In what follows, for any integer $x$, let us say that the \emph{residue modulo $x$} of a non-empty arithmetic progression whose difference is a multiple of $x$
    is the residue of any element of this arithmetic progression modulo $x$.

    \begin{claim}\label{claim:fmq}
        We can compute a representation of
        $\lmore \cap \mq$ as $\cO(d^3 \tradeoff)$ disjoint arithmetic progressions with difference~$\tau$,
        sorted according to their starting positions,
        in $\cO(d^{3.5}\sqrt{\log n \log d})$ time.
    \end{claim}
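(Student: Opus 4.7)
The plan is to exploit that $\tau$ is a multiple of $q$, which means any arithmetic progression with difference $\tau$ occupies a single residue class modulo $q$, whereas $\mq$ is a union of intervals that is $q$-periodic. Hence every progression of the representation of $\lmore$ is either entirely contained in $\mq$ or entirely disjoint from it, and membership can be decided by inspecting its residue modulo $q$ alone.

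First, I would observe that $\mq = \bigcup_{j\in\Z}\fragment{jq-x_T-\kappa-d_T}{jq-x_T+\kappa+d_T}$ is determined by the set $R$ of residues $r\in\fragmentco{0}{q}$ such that $r\equiv s\pmod q$ for some $s\in\fragment{-x_T-\kappa-d_T}{-x_T+\kappa+d_T}$. If $2(\kappa+d_T)+1\ge q$, then $R=\fragmentco{0}{q}$ and $\mq=\Z$, so $\lmore\cap\mq=\lmore$; otherwise $R$ is the union of at most two integer intervals modulo $q$ (wrapping around $0$), and these can be computed in $\Oh(1)$ time. Since $\tau=q\ceil{\kappa/2q}$ is a multiple of $q$, for each arithmetic progression $A$ with difference $\tau$ occurring in the representation of $\lmore$ all elements of $A$ share the same residue $\rho(A)\in\fragmentco{0}{q}$, and $A\sub \mq$ if $\rho(A)\in R$ while $A\cap\mq=\emptyset$ otherwise.

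Consequently, the algorithm iterates over the $\Oh(d^3\tradeoff)$ progressions produced by \swaps, computes $\rho(A)$ in $\Oh(1)$ time from the starting value of $A$ and from $q$, and retains $A$ exactly when $\rho(A)\in R$. This filtering step takes $\Oh(d^3\tradeoff)$ time and yields $\lmore\cap\mq$ as at most $\Oh(d^3\tradeoff)$ disjoint arithmetic progressions with difference $\tau$. Finally, I would sort the retained progressions by their starting positions using any comparison sort in $\Oh(d^3\tradeoff\log d)$ time; disjointness is preserved since disjointness is a pairwise property. Recalling that $\tradeoff=\Theta(\sqrt{d/(\log n\log d)})$, we have
\[
d^3\tradeoff\log d \;=\; \Oh\!\left(d^{3.5}\sqrt{\tfrac{\log d}{\log n}}\right) \;=\; \Oh\!\left(d^{3.5}\sqrt{\log n\log d}\right),
\]
matching the claimed time bound.

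The only mild subtlety — and the closest thing to a real obstacle — is the wrap-around case in the definition of $R$ when $2(\kappa+d_T)+1<q$: the membership test $\rho(A)\in R$ must handle two residue subintervals of $\fragmentco{0}{q}$ rather than one. This is a constant-time check per progression and does not affect the complexity. Everything else (computing $\rho(A)$, evaluating the test, sorting the retained progressions) is routine, so the overall running time is dominated by the sort and fits within $\Oh(d^{3.5}\sqrt{\log n\log d})$.
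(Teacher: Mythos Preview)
Your filtering argument---that $\tau$ is a multiple of $q$, so each progression lies in a single residue class modulo $q$, and $\mq$ is $q$-periodic, so membership is a constant-time residue test---is correct and is precisely how the paper implements the step ``delete any arithmetic progression whose elements are not in $\mq$.''

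The gap is your claim that ``disjointness is preserved since disjointness is a pairwise property.'' This presupposes that the progressions returned by \swaps are already pairwise disjoint, but \cref{lem:swaps} makes no such guarantee: it only promises $\Oh(d^3\tradeoff)$ arithmetic progressions with difference $\tau$, and indeed the sets $r_j+\OccE_k(\I'_j)$ for distinct $j\in J$ can overlap (the fragments $R_j$ overlap heavily, so the same position of $T$ may be reported from several $j$). The paper therefore adds an explicit merging pass after filtering: it sorts the surviving progressions, distributes them into $\Oh(d^2)$ buckets by residue modulo~$\tau$, and greedily merges overlapping progressions within each bucket; a final sort by starting position then yields the required representation. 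This costs $\Oh(d^3\tradeoff\log d)=\Oh(d^{3.5})$ and is easy to splice into your argument, but without it your output need not be disjoint.
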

    \begin{claimproof}
        In a linear scan of the $\cO(d^3 \tradeoff)$ arithmetic progressions that comprise $\lmore$ as returned by the call to the algorithm $\swaps$,
        we delete any arithmetic progression whose elements are not in $\mq$.
        Then, we sort all arithmetic progressions according to their starting positions in time $\cO(d^3 \tradeoff \log d)=\cO(d^{3.5})$
        and distribute them among $\cO(d/q\cdot \tau)=\cO(d^2)$ buckets according to their residues modulo~$\tau$.
        Finally, we scan linearly the arithmetic progressions in each bucket, greedily
        merging progressions that overlap (that is, we merge progressions if their union
        is also a valid arithmetic progression with difference \(\tau\)).
        The number of the resulting arithmetic progressions is clearly upper-bounded by the size
        of the input, that is, $\cO(d^3 \tradeoff)$;
        we sort them according to their starting positions in $\cO(d^{3.5})$ time.
    \end{claimproof}

    Finally, we compute the set difference \(
        (\lmore \cap \mq) \setminus (\Hv \cap \mq) =
        \OccE_k(P,T) \cap \light.
    \)

    \begin{claim}\label{claim:apdt}
        A representation of $\OccE_k(P,T) \cap \light$ as $\cO(d^3 \tradeoff)$ disjoint arithmetic progressions with difference $\tau$,
        sorted according to their starting positions,
        can be computed in time $\cO(d^{3.5}\sqrt{\log n \log d})$ time the \modelname model.
    \end{claim}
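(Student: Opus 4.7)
The plan is to obtain $\OccE_k(P,T) \cap \light$ by taking the set difference of the representations provided by Claims~\ref{claim:fmq} and~\ref{claim:hmq}, invoking equation~\eqref{eq:mq}, which states $\OccE_k(P,T) \cap \light = (\lmore \cap \mq) \setminus (\Hv \cap \mq)$.

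First, I would sort the set $\Hv \cap \mq$ in increasing order. By Claim~\ref{claim:hmq}, this set has size $\cO(d^3/\tradeoff)$, and it can be sorted in $\cO(d^3/\tradeoff \cdot \log d)$ time, which is dominated by $\cO(d^{3.5}\sqrt{\log n \log d})$. The representation of $\lmore \cap \mq$ from Claim~\ref{claim:fmq} is already supplied as $\cO(d^3 \tradeoff)$ disjoint arithmetic progressions with common difference $\tau$, sorted by starting position.

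Next, I would perform a simultaneous left-to-right merge-style scan of the sorted sequence of arithmetic progressions together with the sorted sequence of heavy positions. For the current arithmetic progression $(a, \tau, \ell)$, I enumerate all heavy positions $h \in \Hv \cap \mq$ lying in $\fragment{a}{a + (\ell-1)\tau}$. If $h \not\equiv a \pmod{\tau}$, then $h$ does not belong to the progression and can be ignored. Otherwise, letting $s \coloneqq (h - a)/\tau$, the current progression is split into $(a, \tau, s)$ and $(h + \tau, \tau, \ell - s - 1)$, either of which may be empty and is discarded in that case; the scan then continues on the right-hand sub-progression. When a progression is exhausted (or when no more heavy positions fall in its range), the remaining non-empty piece is emitted as output, and the scan advances to the next input progression. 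Correctness is immediate: this procedure removes exactly the positions in $\Hv \cap \mq$ from $\lmore \cap \mq$; since every removed position is congruent to the starting position of its enclosing progression modulo $\tau$, every output piece is an arithmetic progression with difference $\tau$; and the output is sorted by construction.

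For the size and running time analysis, each heavy position in $\Hv \cap \mq$ splits at most one input progression into at most two output progressions, so the total number of output progressions is bounded by $\cO(d^3 \tradeoff) + \cO(d^3/\tradeoff) = \cO(d^3 \tradeoff)$. The scan itself runs in time proportional to the sum of the input and output sizes, which is $\cO(d^3 \tradeoff) = \cO(d^{3.5}/\sqrt{\log n \log d})$ by the choice of $\tradeoff$. Adding the sorting cost yields the desired total running time of $\cO(d^{3.5}\sqrt{\log n \log d})$ in the \modelname model. The only non-routine step is confirming that the split positions always respect the residue class modulo $\tau$, which is automatic since a heavy position that is not congruent to $a$ modulo $\tau$ simply does not lie in the arithmetic progression $(a, \tau, \ell)$ in the first place.
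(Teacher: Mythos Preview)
Your overall strategy via equation~\eqref{eq:mq} matches the paper's, but the ``merge-style scan'' has a gap. The arithmetic progressions from Claim~\ref{claim:fmq} are disjoint as \emph{sets}, yet progressions with different residues modulo~$\tau$ can have overlapping \emph{ranges}; for instance, $\{0,\tau\}$ and $\{5\}$ overlap in range whenever $\tau>5$. Hence a heavy position~$h$ may lie inside the interval $\fragment{a}{a+(\ell-1)\tau}$ of several input progressions simultaneously. In a true single-pass merge that advances monotonically through both sorted lists, $h$ is examined only against the first such progression and silently skipped for the others---so if $h$ actually belongs to a later one (the one whose residue matches), it is never removed and the output is wrong. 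If instead you rescan the relevant heavy positions for each progression (which is what ``enumerate all heavy positions $h$ lying in~$\fragment{a}{a+(\ell-1)\tau}$'' literally says), the running time is \emph{not} proportional to input plus output as you claim. For the same reason the output is not ``sorted by construction'': after splitting $\{0,\tau\}$ into $\{0\}$ and $\{\tau\}$, the next input progression $\{5\}$ has a smaller starting position than the already-emitted~$\{\tau\}$.

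The paper's fix is to first bucket both the progressions and the heavy positions by their residue modulo~$\tau$ (into $\cO(d^2)$ buckets). Within a single bucket all progressions share one residue, so their ranges are genuinely non-overlapping and totally ordered; a single-pass parallel scan per bucket is then both correct and linear in that bucket's input plus output. A final sort of all output progressions across buckets restores the global ordering.
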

    \begin{claimproof}
        We intend to use equation \eqref{eq:mq}, relying on \cref{claim:hmq,claim:fmq}
        to compute $\Hv \cap \mq$ and a representation of $\lmore \cap \mq$ as $\cO(d^3
        \tradeoff)$ disjoint arithmetic progressions with difference $\tau$ in
        $\cO(d^{3.5})$ time in total.
        Then, we process the elements of these two sets in $\cO(d/q\cdot \tau)=\cO(d^2)$
        batches, where each batch contains all elements with a specific residue modulo
        $\tau$.
        For such a fixed residue, we scan in parallel the arithmetic progressions in
        $\lmore \cap \mq$ and
        the positions in $\Hv \cap \mq$, both of which are sorted in increasing order.
        When some element of $\Hv \cap \mq$ is contained in some arithmetic progression in $\lmore \cap \mq$,
        this arithmetic progression is split into two parts, either of which may be empty.
        The number of the resulting arithmetic progressions is upper-bounded by the total
        size of the input, that is, $\cO(d^3 \tradeoff)$;
        we sort them according to their starting positions in $\cO(d^{3.5})$ time.
    \end{claimproof}

    \subparagraph{Post-processing.}

    Recall that the set $\OccE_k(P,T) \cap \Hv$ is computed explicitly using \cref{lem:heavy-total} in $\cO(d^{3.5}\sqrt{\log n \log d})$ time;
    its size is $\cO(d^3/\tradeoff)$ due to \cref{lm-tmp-4.2-2,claim:hmq}.
    In addition, due to \cref{claim:apdt}, a representation of $\OccE_k(P,T) \cap \light$ as $\cO(d^3 \tradeoff)$ disjoint arithmetic progressions with difference $\tau$
    can be computed in $\cO(d^{3.5}\sqrt{\log n \log d})$ time.
    Overall, by taking the union of the disjoint sets $\OccE_k(P,T) \cap \Hv$ and $\OccE_k(P,T) \cap \light$,
    we obtain a representation of $\OccE_k(P,T)$ as $\cO(d^3 \tradeoff)$ disjoint arithmetic progressions with difference $\tau$.
    In what follows, we show how to efficiently replace these arithmetic progressions with difference $\tau$ with arithmetic progressions with difference $q$.
    This allows us to benefit from the fact that $\OccE_k(P,T)$ can be decomposed to $\cO(d^3)$ arithmetic progressions with difference~$q$
    (see~\cite[Main Theorem 7]{unified}) to decrease the size of the output.

    First, we repeatedly merge any two arithmetic progressions that contain elements that are~$\tau$ positions apart.
    Then, we process each of the $\cO(d)$ relevant residues modulo~$q$ separately.
    Let us fix such a residue~$r$.
    We maintain arrays $B_r$ and $S_r$, each of size $\tau/q$, during a left-to-right scan of the text.
    When some position $\mu$ of $T$ is processed,
    for $i \in \fragmentco{0}{\tau/q}$,
    \begin{itemize}
        \item $B_r\position{i}$ stores a boolean variable indicating whether the successor of $\mu$ in $\mathbb{Z}$ with residue $iq + r$ modulo $\tau$ is in $\OccE_k(P,T)$.
        \item $S_r\position{i}$ stores the successor of $\mu$ in $\mathbb{Z}$ with residue $iq + r$ modulo $\tau$ that is \emph{not} in $\OccE_k(P,T)$ if $B_r\position{i}=1$ and $\infty$ otherwise.
    \end{itemize}
    We maintain balanced binary trees over arrays $B_r$ and $S_r$ so that we can efficiently query for
    the leftmost non-zero element in any subarray of $B_r$
    and the minimum element of $S_r$,
    with an $\cO(\log d)$-time additive overhead per update of $B_r$ and $S_r$.
    All updates to $B_r$ and $S_r$ can be stored in a priority queue after a linear-time preprocessing of the arithmetic progressions in scope,
    prioritized by the value of $\mu$ that triggers them.

    In our scan of $T$, we maintain the minimum position $\mu \in \OccE_k(P,T)$ of residue $r$ modulo $q$ that we have not reported so far.
    Given such a position $\mu$, we query in $\cO(\log d)$ time for the successor $\nu$ of $\mu$ in $\mathbb{Z}$ with residue $iq + r$ that is \emph{not} in $\OccE_k(P,T)$:
    \begin{itemize}
        \item If there is any false entry in $B_r\fragmentoo{\mu-r \mod q}{\tau/q}$, then the sought integer can be retrieved in $\cO(1)$ time given the leftmost such entry;
        \item else, if there is any false entry in $B_r\fragmentco{0}{\mu-r \mod q}$, then the sought integer can be retrieved in $\cO(1)$ time given the leftmost such entry;
        \item else, all entries of $B_r$ are set to true and hence the sought integer corresponds to the smallest integer stored in $S_r$.
    \end{itemize}
    Given $\nu$, we report the arithmetic progression $\{\mu + iq : i \in \fragmentco{0}{(\nu-\mu)/q)}\}$.
    Then, we implicitly continue our scan of $T$ by performing the necessary (precomputed) updates in $B_r$ and $S_r$
    until we reach either the successor of $\nu$ in $\mathbb{Z}$ with residue $r$ modulo $q$ that is in $\OccE_k(P,T)$ or the end of the text.
    The former condition can be checked using array $B_r$ in $\cO(\log d)$ time prior to each update to either $B_r$ or~$S_r$.
    Then, we set said position as $\mu$ and repeat the above process.

    Over all residues, the total time taken is $\cO(d^2 + d^3 \tradeoff \log d) = \cO(d^{3.5})$: we pay $\cO(d\tau/q) = \cO(d^2)$ time to initialize all arrays,
    while all other operations take total time proportional to the product of $\log d$ with the total size of the input and the output arithmetic progressions.
    We can benefit from the upper bound on the number of arithmetic progressions to which $\OccE_k(P,T)$ can be decomposed due to the greedy nature of the
    algorithm that computes them.
    This concludes the proof of this lemma.
\end{proof}

\section{Faster Approximate Pattern Matching in Important Settings}\label{sec:impl}

In this section, we rely on known implementations (see \cite{unified})
of the \modelname model in the static, dynamic, and fully compressed
settings, thereby lifting \cref{thm:edalgII} to these settings.

\subsection{An Algorithm for the Standard Setting}

In the standard setting, we implement a handle to $S=X\fragmentco{\ell}{r}$
as a pointer to $X\in \X$ (which is stored explicitly) along with the indices $\ell$ and $r$.
As was argued in detail in~\cite{unified}, for the implementation of the \modelname model
in the standard setting it suffices to combine well-known results. Namely,
\begin{itemize}
    \item $\extractOpName$, $\accOpName$, and $\lenOpName$ admit trivial implementations;
    \item $\lceOpName$ queries can be efficiently implemented
        by constructing a generalized suffix tree for the strings in the collection~\cite{F97} and preprocessing it
        for $\cO(1)$-time lowest common ancestor queries~\cite{Bender2000}, while for \lcbOpName we can use an analogous
        construction over the reverse strings of the strings in the collection;
    \item $\ipmOpName$ queries can be answered in $\cO(1)$ time by the linear-size data structure of Kociumaka et al.~\cite{IPM,thesis}.
\end{itemize}
The above discussion is summarized in the following statement.

\begin{theorem}[see {\cite[Theorem 7.2]{unified}}]\label{thm:pilis}
    After an $\cO(n)$-time preprocessing of~a collection of~strings of~total length $n$,
    each \modelname operation can be performed in $\Oh(1)$ time.\lipicsEnd
\end{theorem}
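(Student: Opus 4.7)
The plan is to fix a concrete representation for handles and then appeal to three well-known linear-preprocessing, constant-query data structures -- one for each of the three nontrivial families of \modelname operations. Concretely, I would store every string $X \in \X$ as an explicit character array and represent a handle to a fragment $S = X\fragmentco{\ell}{r}$ as a triple consisting of a pointer to $X$ together with the two offsets $\ell, r$. With this representation, \extractOpName simply shifts the offsets, \accOpName dereferences $X\position{\ell + i}$, and \lenOpName returns $r - \ell$; all three run in $\Oh(1)$ time using only the $\Oh(n)$ space required to store $\X$ itself, with no further preprocessing.

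Next I would handle $\lceOpName$ and $\lcbOpName$ via a single generalized suffix tree. First I would build a generalized suffix tree over the strings in $\X$ in $\Oh(n)$ time (for instance by Farach's algorithm applied to the concatenation of the strings in $\X$ separated by pairwise distinct sentinels, so that each suffix corresponds to a unique leaf), and then augment it with the Bender--Farach-Colton constant-time LCA structure, again in $\Oh(n)$ time. A query $\lceOp{S}{T}$ with $S = X\fragmentco{\ell_S}{r_S}$ and $T = Y\fragmentco{\ell_T}{r_T}$ then reduces to looking up the leaves of the suffixes of $X$ and $Y$ starting at $\ell_S$ and $\ell_T$ respectively, taking their LCA to read off the longest common prefix of the two full suffixes, and truncating that length by $\min\{|S|, |T|\}$; this takes $\Oh(1)$ time. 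The $\lcbOpName$ queries are handled symmetrically by performing the same construction on the reversals of the strings in $\X$.

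Finally, for $\ipmOpName$ I would instantiate the linear-size internal pattern matching structure of Kociumaka et al.~\cite{IPM,thesis}, which, after $\Oh(n)$-time preprocessing of $\X$, answers any $\ipmOp{P}{T}$ query with $|T| \le 2|P|$ in $\Oh(1)$ time and returns $\OccEx(P, T)$ encoded as a single arithmetic progression with difference $\per(P)$. Summing the preprocessing times yields $\Oh(n)$ overall, and every \modelname operation then runs in $\Oh(1)$ time, as required. There is no real obstacle here: the ``proof'' is a composition of off-the-shelf results, and the only thing to check is that these three data structures can coexist on a common representation of $\X$ with simultaneously linear preprocessing, which follows directly from the cited constructions.
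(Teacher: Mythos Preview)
Your proposal is correct and matches the paper's approach essentially verbatim: the paper also represents a handle as a pointer to $X$ together with the offsets $\ell,r$, handles \extractOpName/\accOpName/\lenOpName trivially, uses a generalized suffix tree with constant-time LCA for \lceOpName/\lcbOpName (building the latter on the reversed strings), and invokes the Kociumaka et al.\ internal pattern matching structure for \ipmOpName.
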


Combining~\cref{thm:pilis,thm:edalgII}, we obtain an algorithm for pattern
matching with edits that is faster than the algorithm of Cole and Hariharan \cite{ColeH98},
unless $k/\log(k+1) = \cO(\log m)$ or both algorithms run in $\cO(n)$ time.
\stedalgmain*

Observe that the algorithm encapsulated in \cref{thm:pilis}
is faster than the classical $\cO(nk)$-time
algorithm of Landau and Vishkin~\cite{LandauV89} when $k=\omega(1)$ and $k^{2.5}\sqrt{\log k}=o(m/ \sqrt{\log (m+1)})$.

\begin{remark}
    Our algorithm also applies to the internal setting.
    That is, a string $S$ of~length $n$ can be preprocessed in $\cO(n)$ time, so that
    given fragments $P$~and~$T$ of~$S$, and a threshold $k$, we can compute
    $\OccE_k(P, T)$ in time $\Oh(|T|/|P| \cdot k^{3.5} \sqrt{\log m \log k})$.\lipicsEnd
\end{remark}

\subsection{An Algorithm for the Dynamic Setting}

Next, we consider the dynamic setting. In particular, we consider the dynamic
maintenance of a collection of non-empty persistent strings $\X$ that is initially empty
and undergoes updates specified by the following operations:
\begin{itemize}
    \item $\makestring(U)$: Insert a non-empty string $U$ to $\X$.
    \item $\concat(U,V)$: Insert $UV$ to $\X$, for $U,V \in \X$.
    \item $\splitOp(U,i)$: Insert $U\fragmentco{0}{i}$ and $U\fragmentco{i}{|U|}$ in $\X$,
        for $U \in \X$ and $i \in \fragmentco{0}{|U|}$.
\end{itemize}

Let $N$ denote an upper bound on the total length of~all strings in $\X$
throughout the execution of~the algorithm.
Gawrychowski et al.~\cite{ods} presented a data structure that
efficiently maintains such
a collection and allows for efficient longest common prefix queries.
In~\cite{unified}, it was (a) argued in detail that that the aforementioned data structure
readily supports all \modelname operations other than \ipmOpName queries,
and (b) shown that it can be augmented to efficiently answer \ipmOpName queries;
see also~\cite[Section 4]{panosthesis} for a more direct proof of the latter claim.
The above discussion is formalized in the following statement.

\begin{theorem}[\cite{ods,unified}]\label{thm:pildyn}
    A collection $\X$ of non-empty persistent strings of~total length $N$ can be
    dynamically maintained with operations $\makestring(U)$, $\concat(U,V)$,
    $\splitOp(U,i)$ requiring time $\cO(\log N +|U|)$,
    $\cO(\log N)$ and $\cO(\log N)$, respectively, so that \modelname operations can
    be performed in time $\Oh(\log^2 N)$.\footnote{All running time bounds hold w.h.p.}\lipicsEnd
\end{theorem}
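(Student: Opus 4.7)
The plan is to invoke, essentially as a black box, the dynamic strings data structure of Gawrychowski et al.~\cite{ods}, and then augment it with the machinery for \ipmOpName queries developed in~\cite{unified}. Since the statement is a restatement of results already established in the literature, the ``proof'' is really an orchestration: I would first explain how \cite{ods} yields the stated update bounds and the operations \accOpName, \lenOpName, \lceOpName, and \lcbOpName, and then argue why \ipmOpName also fits in $\Oh(\log^{2} N)$ time within the same representation.

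First, I would recall that the data structure of~\cite{ods} represents each string in $\X$ as a balanced tree over a locally consistent signature encoding; concatenation and splitting correspond to standard balancing operations on such trees, yielding the $\Oh(\log N)$ bounds, while $\makestring(U)$ needs an additional $\Oh(|U|)$ term to build the signature of $U$ from scratch. In this representation, \lenOpName and \accOpName are routine (the length is stored at the root and access descends the tree), and the core contribution of~\cite{ods} is a logarithmic implementation of \lceOpName via simultaneous descent in the two signature trees, which in the persistent version costs $\Oh(\log^{2} N)$ time. The analogous claim for \lcbOpName follows by symmetry, after maintaining a reverse copy of each string (or, more carefully, by navigating the same tree right-to-left).

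The main obstacle, and the step I would treat in most detail, is \ipmOpName, since this query is not directly supported by~\cite{ods}. Here I would follow the approach of~\cite[Section 4]{panosthesis} and~\cite{unified}: under the promise $|T|\le 2|P|$, the set $\OccEx(P,T)$ is an arithmetic progression with difference $\per(P)$ (by the standard periodicity argument), so it suffices to identify $\per(P)$ and two witness occurrences. One identifies, within the signature tree of $T$, $\Oh(1)$ ``anchor'' positions through which every occurrence must pass, and checks each anchor using a constant number of \lceOpName/\lcbOpName queries; combined with a $\perOpName$ computation on $P$ (itself reducible to $\Oh(1)$ \lceOpName queries on copies of $P$), this yields the whole answer in $\Oh(\log^{2} N)$ time.

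Finally, I would note the subtlety that all bounds hold with high probability because the signature construction of~\cite{ods} is randomized; a deterministic variant is available at a $\mathrm{poly}(\log\log N)$ overhead, matching the footnote in \cref{thm:dynalgmain}. Combining the update bounds from~\cite{ods} with the \ipmOpName augmentation from~\cite{unified} gives the claimed theorem verbatim, so no further combinatorial work is needed beyond carefully citing these two sources.
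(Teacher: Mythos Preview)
Your proposal is correct and matches the paper's treatment: the paper does not prove this theorem at all but simply cites \cite{ods} for the update operations and basic \modelname primitives and \cite{unified} (with a pointer to \cite[Section 4]{panosthesis}) for the \ipmOpName augmentation, which is exactly the orchestration you describe. Your exposition is, if anything, more detailed than what the paper provides.
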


A very recent work~\cite[Section 8]{KK22} provides an alternative deterministic implementation of dynamic strings,
supporting operations $\makestring(U)$, $\concat(U,V)$,
$\splitOp(U,i)$ in $\Oh(|U|\log^{\Oh(1)}\log N)$, $\Oh(\log|UV|\log^{\Oh(1)}\log N)$, and $\Oh(\log|U|\log^{\Oh(1)}\log N)$,
respectively, so that \modelname operations can
be performed in time $\Oh(\log N \log^{\Oh(1)}\log N)$ time.

Combining \cref{thm:pildyn,thm:edalgII}, we obtain the following result for approximate pattern matching
under the edit distance for dynamic strings.
\dynalgmain*

The result encapsulated in \cref{thm:dynalgmain}
should be compared to~\cite[Main Theorem 3]{unified},
which has the same complexity guarantees for processing updates, but answers approximate pattern matching queries under
the edit distance in $\Oh(|T|/|P| \cdot k^{4} \log^2 N)$ time w.h.p.

\subsection{An Algorithm for the Fully Compressed Setting}

\newcommand{\Tr}{\mathsf{PT}}

Next, we focus on the fully compressed setting, where we want to solve approximate pattern
matching when both the text and the pattern are given as a straight-line programs.

We write \(N_{\G}\) for the set of~non-terminals of~a context-free grammar $\G$
and call the elements of~$\mathcal{S}_G \coloneqq N_\G \cup \Sigma$ \emph{symbols}.
Then, a \emph{straight line program} (\emph{SLP}) $\G$ is a context-free grammar
that consists of~a set $N_\G=\{ A_1 , \ldots , A_n \}$ of~non-terminals, such that each $A_i
\in N_\G$ is associated with a unique production rule $A_i \to f_\G(A_i) \in
(\Sigma \cup \{A_j : j < i \})^*$.
For SLPs given as input, we can assume without loss of~generality that each production
rule is of~the form $A \to BC$ for some symbols $B$ and $C$ (that is, the given SLP is in
Chomsky normal form).

Every symbol $A \in \mathcal{S}_\G$ generates a unique string, which we denote by $\gen(A) \in
\Sigma^*$. The string $\gen(A)$ can be obtained from $A$ by repeatedly replacing each
non-terminal by its production. In addition, $A$ is associated
with its \emph{parse tree} $\Tr(A)$ consisting of~a root labeled with $A$ to which zero or
more subtrees are attached:
\begin{itemize}
    \item If $A$ is a terminal, there are no subtrees.
    \item If $A$ is a non-terminal $A\to BC$, then $\Tr(B)$ and $\Tr(C)$ are attached (in this order).
\end{itemize}
\noindent Observe that if we traverse the leaves of~$\Tr(A)$ from left to right, spelling out the
corresponding non-terminals, then we obtain $\gen(A)$.
We say that $\G$ generates $\gen(\G) \coloneqq \gen(A_n)$.
The parse tree $\Tr_\G$ of~$\G$ is then the parse tree of~the starting symbol $A_n \in N_\G$.

First, as also observed in~\cite{unified}, given an SLP $\G$ of size $n$, generating a string $S$ of size $N$, we can efficiently implement the \modelname operations through
dynamic strings. Let us start with an empty collection $\X$ of dynamic strings.
Using $\cO(n)$ $\makestring(a)$ operations, for $a\in \Sigma$, and $\cO(n)$ $\concat$ operations (one for each non-terminal
of $\G$), we can insert $S$ to $\X$ in $\cO(n \log N)$ time w.h.p.
Then, we can perform each \modelname operation in $\cO(\log^2 N)$ time w.h.p., due to~\cref{thm:pildyn},
thus outperforming~\cref{thm:pilgc}.
Next, we outline a deterministic implementation of \modelname operations in the fully compressed setting.

Following~\cite{unified}, the handle of a fragment $S=X\fragmentco{\ell}{r}$
consists of a pointer to the SLP $\G\in \X$ generating $X$ along with the positions $\ell$ and $r$.
This makes operation $\extractOpName$ trivial.
As argued in~\cite{unified}, all remaining \modelname operations admit efficient implementations in the considered setting.
\begin{itemize}
	\item For operation $\lenOpName$, we precompute $|\gen(A)|$ for each non-terminal $A$ using dynamic programming.
	\item For operation $\accOpName$, we use the data structure of Bille et al.~\cite{BilleLRSSW15}.
	\item For operations $\lceOpName$ and \lcbOpName, we use the data structure of I~\cite{I17} that is based on the recompression technique,
		which is due to \cite{talg/Jez15,jacm/Jez16}.
	\item For operation $\ipmOpName$, we use a data structure presented in~\cite{unified,KK20} that is also based on the recompression technique.
\end{itemize}
The above discussion is summarized in the following statement.

\begin{theorem}[see \cite{BilleLRSSW15,I17,unified,KK20}]\label{thm:pilgc}
    Given a collection of~SLPs of~total size $n$, generating strings of~total length $N$,
    each \modelname operation can be performed in $\cO(\log^2 N \log\log N)$ time
    after an $\cO(n \log N)$-time preprocessing.\lipicsEnd
\end{theorem}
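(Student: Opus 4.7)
The plan is to assemble the result as a direct combination of the four data-structural components cited in the statement, verifying that each one fits within the claimed budgets and that the slowest of them determines the overall per-operation time.

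First I would dispose of the easy operations. The handle for a fragment $X\fragmentco{\ell}{r}$ is a pointer to the SLP $\G$ generating $X$ together with the indices $\ell$ and $r$, making $\extractOpName$ free. For $\lenOpName$, a bottom-up pass over the grammar computes $|\gen(A)|$ for every symbol $A\in \mathcal{S}_\G$ via $|\gen(A)|=|\gen(B)|+|\gen(C)|$ for each rule $A\to BC$; this takes $\Oh(n)$ time in the preprocessing phase and constant time per query. In particular, given a handle we can also compute $|S|=r-\ell$ in constant time.

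Next, for $\accOpName(S,i)$, after reducing to a query on the underlying string $X$ at position $\ell+i$, I would invoke the random-access structure of Bille et al.~\cite{BilleLRSSW15} on $\G$; this structure is built in $\Oh(n)$ time and answers character access in $\Oh(\log N)$ time. For $\lceOpName$ and $\lcbOpName$, I would apply I's data structure~\cite{I17}, built on top of the recompression framework of Je\.z~\cite{talg/Jez15,jacm/Jez16}: after $\Oh(n\log N)$-time preprocessing it answers LCP (and symmetrically LCS, by building the same structure on the reverse grammar, which can be produced in $\Oh(n)$ additional time) of any two fragments in $\Oh(\log N)$ time. Finally, for $\ipmOpName$ I would plug in the recompression-based fully compressed IPM structure described in~\cite{unified,KK20}, whose preprocessing fits in $\Oh(n\log N)$ time and whose query time is $\Oh(\log^2 N\log\log N)$.

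Putting the four pieces together, the total preprocessing cost is $\Oh(n\log N)$, dominated by the recompression-based structures for $\lceOpName$ and $\ipmOpName$. The query time of each $\modelname$ operation is bounded by the slowest component, which is the $\Oh(\log^2 N\log\log N)$ bound for $\ipmOpName$; all other operations are either $\Oh(1)$ or $\Oh(\log N)$. I expect the main obstacle to be purely bookkeeping: ensuring that the various structures interoperate on a shared representation of the grammar (in Chomsky normal form, with sizes precomputed) so that handles on fragments of $\gen(\G)$ can be translated consistently into queries on the underlying recompression-based structures. Once this interface is fixed, the bounds follow immediately from the cited results.
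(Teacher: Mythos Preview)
Your proposal is correct and matches the paper's own justification essentially line for line: the paper also handles $\extractOpName$ via the handle representation, $\lenOpName$ via dynamic programming on the grammar, $\accOpName$ via Bille et al.~\cite{BilleLRSSW15}, $\lceOpName$/$\lcbOpName$ via I's recompression-based structure~\cite{I17}, and $\ipmOpName$ via the recompression-based IPM from~\cite{unified,KK20}, with the $\Oh(\log^2 N\log\log N)$ bound coming from the last of these.
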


We are now ready to present an efficient algorithm for approximate pattern matching under edit distance
in the fully compressed setting.
We choose to state our results using the deterministic implementation of the \modelname model
in this setting, that is, \cref{thm:pilgc}.

We are given an SLP $\G_T$ of~size $n$ with $T \coloneqq \gen(\G_T)$, an SLP $\G_P$ of~size $m$ with
$P \coloneqq \gen(\G_T)$, and a threshold $k$ and are required to compute
the $k$-error occurrences of~$P$ in $T$.

Set $N\coloneqq|T|$, $M \coloneqq |P|$, and $\X \coloneqq \{\G_T, \G_P\}$.
The overall structure of~our algorithm is as follows:
We first preprocess the collection $\X$ in $\cO((n+m) \log N)$ time according to~\cref{thm:pilgc}.
Next, we traverse~$\G_T$ and compute, for every non-terminal~$A$ of~$\G_T$, the approximate
occurrences of~$P$ in $T$ that ``cross'' $A$. We combine
\cref{thm:pilgc} with \cref{thm:edalgII} to compute such occurrences.
Finally, we combine the computed occurrences using dynamic programming.

\newcommand{\cross}{\textsf{cross}}
Formally, for each non-terminal $A \in N_{\G_T}$, with production rule $A \to BC$,
let
\begin{itemize}
	\item $A_{\ell} \coloneqq \gen(B)\fragmentco{\max\{0, |\gen(B)|-M-k+1\}}{|\gen(B)|}$,
	\item $A_{r} \coloneqq \gen(C)\fragmentco{0}{\min\{M+k , |\gen(C)|\}}$, and
	\item $\cross(A) \coloneqq (\OccE_k(P, A_{\ell}A_r) \cap \fragmentco{0}{|A_{\ell}|}) \setminus \OccE_k(P, A_{\ell})$,
\end{itemize}
observing that $\OccE_k(P,\gen(A))$ can then be partitioned to
\begin{itemize}
	\item $\OccE_k(P,\gen(B))$,
	\item $|\gen(B)| - |\gen(A_\ell)| + \cross(A)$, and
	\item $|\gen(B)| + \OccE_k(P,\gen(C))$.
\end{itemize}
Next, observe that, by combining~\cref{thm:pilgc,thm:edalgII},
$\cross(A)$ 
can be computed in time $\cO(k^{3.5} \sqrt{\log M \log k} \log^2 N \log\log N)$
since $A_{\ell}A_r$ and $A_\ell$
are fragments of~$\gen(\G_T)$ of~length at most $2(M+k-1)$.
Now, the size of $|\OccE_k(P, T)|$ can be computed by a straightforward dynamic programming
approach: for a non-terminal $A \in N_{\G_T}$, with production rule $A \to BC$,
the number of $k$-errors occurrences of $P$ in $\gen(A)$ equals
$|\OccE_k(P,\gen(B))| + |\cross(A)| + |\OccE_k(P,\gen(C))|$.
Further, all approximate occurrences can be reported in time proportional to their number by
performing a traversal of~$\Tr_\G$, avoiding to explore subtrees that correspond to
fragments of~$T$ that do not contain $k$-error occurrences.\footnote{Compare
\cite[Main Theorem 2]{unified} for a similar algorithm.}

We obtain the following algorithm for pattern
matching with edits in the fully compressed setting.
\gcedalgmain*

The result encapsulated in \cref{gc_ed_alg_intro}
should be compared to~\cite[Main Theorem 2]{unified},
which summarizes an algorithm that computes $|\OccE_k(P, T)|$
in time $\cO(m\log N + n\, k^4 \log^2 N \log\log N)$ and can then report all elements of
$\OccE_k(P, T)$ within $\cO(|\OccE_k(P, T)|)$ extra time.

\clearpage
\partn{Seaweeds}

\newcommand{\AG}{\mathsf{AG}}
\newcommand{\dist}{\mathsf{dist}}
\newcommand{\R}{\mathbb{R}}
\newcommand{\LIS}{\mathsf{LIS}}
\newcommand{\lt}{\mathsf{lt}}
\newcommand{\br}{\mathsf{br}}
\newcommand{\wid}{\mathsf{width}}
\newcommand{\spn}{\mathsf{band}}
\newcommand{\dom}{\mathsf{dom}}
\renewcommand{\spn}{\mathsf{span}}
\newcommand{\ppr}[1]{{#1}^+}
\newcommand{\ppl}[1]{{}^+{#1}}
\newcommand{\mmr}[1]{{#1}^-}
\newcommand{\mml}[1]{{}^-{#1}}

\section{The Seaweed Monoid of Permutation Matrices}\label{sec:seaweeds}

We start by introducing the terminology behind the definition of permutation matrices
and their seaweed products.
We mostly follow~\cite[Section 3.2]{LIS} and~\cite[Chapters 2 and 3]{abs-0707-3619}, except that we use matrices whose rows and columns are indexed with non-empty integer intervals (finite or infinite).
Consistently with previous work, we think of the plane reflected along the horizontal axis; that is, $(1,1)$ is below and to the right of $(0,0)$.
For an interval $I\sub \Z$, denote
\begin{align*}\ppr{I} &= \{i\in \Z : i\in I \text{ or }i-1\in I\}, & \ppl{I} &= \{i\in \Z : i\in I \text{ or }i+1\in I\},\\
\mmr{I} &= \{i\in \Z : i\in I \text{ and }i+1 \in I\}, & \mml{I} &= \{i\in \Z : i\in I \text{ and }i-1 \in I\}.\end{align*}
Moreover, for $S\sub \Z$, we define $\spn(S)=\{i\in \Z :\exists_{s,s'\in S} s \le i \le s'\}$
as the smallest interval containing $S$.

The $(\min,+)$ product of matrices $A\in \Zz^{I\times K}$ and $B\in \Zz^{K\times J}$
is a matrix $A\odot B\in \Zz^{I\times J}$ with entries defined as follows for $i\in I$ and $j\in J$:
\[(A\odot B)\position{i,j} = \min_{k\in K}\{A\position{i,k}+B\position{k,j}\}.\]
For $A\in \Z^{I\times J}$, the \emph{density matrix} $A^\square \in \Z^{\mmr{I}\times \mmr{J}}$ has entries defined as follows for $i\in \mmr{I}$ and $j\in \mmr{J}$:
\[A^\square \position{i,j} = A\position{i+1,j}+ A\position{i,j+1} - A\position{i,j}-
A\position{i+1,j+1}.\]
A matrix $A\in \Z^{I\times J}$ is a \emph{Monge matrix} if all entries of the density matrix $A^\square$ are non-negative.

A matrix $A\in \{0,1\}^{I\times I}$ is a \emph{permutation matrix} if each row and each column contains exactly one entry equal to $1$.
We define $\spn(A)=\mml{\spn(\{i\in I : A\position{i,i}=0\})}$, and we say that the matrix is \emph{bounded} if $\spn(A)$ is finite.
Note that a permutation matrix $A$ can be represented with a permutation $\sigma:\ppl{\spn(A)}\to \ppl{\spn(A)}$
such that $A\position{i,j}=1$ if and only if $j = \sigma(i)$ (when $i\in \ppl{\spn(A)}$) or $j=i$ (otherwise).

For a bounded permutation matrix $A\in \{0,1\}^{I\times I}$,
the \emph{distribution matrix} $A^\Sigma \in \Zz^{\ppr{I}\times \ppr{I}}$,
has its entries defined as follows for $i,j\in \ppr{I}$ (the entries are finite because $A$ is bounded):
\[A^\Sigma\position{i,j} = \sum_{i'\ge i} \sum_{j'<j} A\position{i',j'}.\]
By constructing the data structure of Chan and P\v{a}tra\c{s}cu \cite{ChanP10} for answering
two-sided orthogonal range counting queries in 2D over the non-zero entries of $A$, one obtains the following.

\begin{fact}[\cite{abs-0707-3619,CKM20,LIS}]\label{fct:ra}
After $\Oh(w\sqrt{\log w})$-time preprocessing of a permutation representing a permutation matrix $A$ with $w=|\spn(A)|$,
any entry of $A^\Sigma$ can be computed in $\Oh({\log w}/{\log \log w})$ time.
\lipicsEnd
\end{fact}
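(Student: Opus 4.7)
The plan is to reduce each query $A^\Sigma\position{i,j}$ to a two-sided orthogonal range counting problem on a finite point set of size $w$, so that we can invoke the data structure of Chan and P\v{a}tra\c{s}cu directly, and to separately handle the (infinitely many) identity entries of $A$ in closed form.

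First I would unfold the definition: $A^\Sigma\position{i,j}$ counts the ones of $A$ lying in the dominance region $\{(i',j') : i' \ge i,\ j' < j\}$. Since $A$ is a bounded permutation matrix, exactly $w$ of its nonzero entries lie in $\ppl{\spn(A)}\times \ppl{\spn(A)}$ (these are encoded by the input permutation $\sigma$ on $\ppl{\spn(A)}$), and all remaining ones are of the form $(i',i')$ for $i'\notin \ppl{\spn(A)}$. Let $S$ denote the set of $w$ non-identity ones. I would split $A^\Sigma\position{i,j} = N(i,j)+D(i,j)$ where $N(i,j)$ counts points of $S$ with $i'\ge i$ and $j'<j$, and $D(i,j)$ counts integers $i'$ with $i'\notin \ppl{\spn(A)}$, $i'\ge i$, and $i'<j$.

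For $N$, I would store the $w$ points of $S$ in the Chan--P\v{a}tra\c{s}cu structure for two-sided (in fact, dominance) orthogonal range counting; this gives $\Oh(w\sqrt{\log w})$ preprocessing and $\Oh(\log w/\log\log w)$ query time, directly matching the target bounds. For $D$, note that $\ppl{\spn(A)}$ is an interval $[\ell,r]$ (possibly widened by $1$) determined at preprocessing time from the permutation in $\Oh(w)$ time, so
\[
D(i,j) = \max(0,\,j-i) - \bigl|[\max(i,\ell),\,\min(j,r+1))\bigr|,
\]
which is computed in $\Oh(1)$ time per query. Summing $N(i,j)+D(i,j)$ yields the answer within the claimed bound.

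The construction is essentially a wrapper around Chan--P\v{a}tra\c{s}cu, so the only real point requiring care is the bookkeeping around the finite vs.\ infinite part of the permutation matrix: one has to verify that the split into $S$ and the identity diagonal is consistent with the definitions of $\ppr{I}$ and $\mmr{I}$ used in $A^\Sigma$, and that the closed-form term $D(i,j)$ handles all the boundary cases (queries fully below $\spn(A)$, fully above it, or straddling its endpoints). I expect this boundary-case analysis to be the only mildly delicate part; once it is settled, the stated preprocessing and query complexities follow immediately from the Chan--P\v{a}tra\c{s}cu bounds.
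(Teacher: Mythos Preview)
Your proposal is correct and matches the paper's approach: the paper states this as a cited Fact and justifies it in one sentence by pointing to the Chan--P\v{a}tra\c{s}cu two-sided range counting structure built over the nonzero entries of $A$, which is exactly what you do. Your explicit split into the finite permutation part $N(i,j)$ and the closed-form identity-diagonal contribution $D(i,j)$ is a careful elaboration that the paper leaves implicit; the only quibble is that the permutation $\sigma$ is on $\ppl{\spn(A)}$, which has $w+1$ elements rather than $w$, but this is immaterial for the bounds.
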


The \emph{seaweed product} of bounded permutation matrices
$A,B\in \{0,1\}^{I\times I}$ is defined as $A \boxdot B \coloneqq  (A^\Sigma \odot B^\Sigma)^\square$.

\begin{theorem}[{Tiskin~\cite{abs-0707-3619,Tis15}}]\label{thm:prod}
	For any two bounded permutation matrices $A,B$, the seaweed product $C \coloneqq  A \boxdot B$
	is a permutation matrix with $\spn(C)\sub \spn(\spn(A)\cup \spn(B))$.
	Moreover, given the permutations representing $A$ and $B$, the permutation representing $C$ can be constructed
	in $\Oh(w \log w)$ time, where $w=|\spn(\spn(A)\cup \spn(B))|$.
    \lipicsEnd
\end{theorem}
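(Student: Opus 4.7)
The plan is to establish the three claims in order: first that $C \coloneqq A \boxdot B$ is a 0/1 matrix, then that it is in fact a permutation matrix with the claimed span property, and finally to describe the algorithm. The cleanest route goes via the combinatorial interpretation of distribution matrices.

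First, I would observe that $A^\Sigma$ and $B^\Sigma$ are Monge: by definition $(A^\Sigma)^\square = A$ has entries in $\{0,1\}\subseteq \Zz$, and similarly for $B^\Sigma$. A classical fact (see e.g.\ Tiskin's monograph) is that the $(\min,+)$ product of two Monge matrices is again Monge, so $A^\Sigma \odot B^\Sigma$ is Monge and hence $C = (A^\Sigma \odot B^\Sigma)^\square$ has non-negative integer entries. Moreover, away from $\spn(\spn(A)\cup\spn(B))$ the distribution entries of $A^\Sigma$ and $B^\Sigma$ agree with those of the identity, so the minimum in $(A^\Sigma \odot B^\Sigma)\position{i,j}$ is realized by the diagonal and $C\position{i,j}$ picks up the corresponding identity entry. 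This simultaneously yields the span containment $\spn(C)\subseteq \spn(\spn(A)\cup \spn(B))$.

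The key claim — that $C$ has entries in $\{0,1\}$ and is a permutation matrix — I would prove by the ``seaweed'' interpretation: think of $A$ and $B$ as collections of non-crossing curves (strands) read top-to-bottom, and stack $A$ on top of $B$. Each strand of $A$ merges with a strand of $B$ at the boundary, producing a composite diagram whose ``reduction to non-crossing form'' (resolving every pair of strands that cross twice) defines a permutation $\pi$; one then verifies by a direct Monge-style counting argument that $C\position{i,j} = [\pi(i)=j]$. Concretely, $C\position{i,j}$ counts the number of pairs of strands that, after reduction, end up matching row $i$ of the stacked diagram with column $j$, and the reduction guarantees this count is in $\{0,1\}$ with exactly one $1$ per row and column on $\spn(C)$. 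Row and column sums of $C$ add up to the expected counts by telescoping via the definition of $A^\Sigma, B^\Sigma$.

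For the algorithm I would follow Tiskin's divide-and-conquer ``steady ant'' approach. Set $w \coloneqq |\spn(\spn(A)\cup \spn(B))|$ and split the common span into two halves of size $w/2$. The permutation $A$ decomposes into two sub-permutations (the strands that end in the left half and those that end in the right half), and likewise for $B$; this induces four sub-seaweed products of total span $w$, each of which is recursively evaluated. The merge step — the combinatorial heart of the algorithm — scans the boundary in linear time, using the Monge structure to identify, for each output strand, the correct pair of crossings to resolve via a ``walking ant'' argument that always moves monotonically. This yields a recurrence $T(w) = 2T(w/2) + O(w)$, solving to $O(w \log w)$.

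The main obstacle is the merge step in the recursion: justifying that a single linear-time sweep suffices to resolve all re-routings correctly relies on the Monge (quadrangle) inequality preserving monotonicity of the crossing structure across the split, and on the fact that the ``ant'' never needs to backtrack. I would handle this by stating the two technical invariants used by Tiskin (monotonicity of resolved crossings along the split line and the amortized advancement of the ant pointer) and then invoking \cite{abs-0707-3619,Tis15} for the routine but lengthy case analysis, since reproducing it in full would be mechanical.
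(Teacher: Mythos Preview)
The paper does not prove this theorem at all: it is stated with a \texttt{\textbackslash lipicsEnd} marker and attributed directly to Tiskin~\cite{abs-0707-3619,Tis15}, with no proof in the text. So there is no ``paper's own proof'' to compare against; the authors simply import the result as a black box.

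Your outline is a faithful high-level sketch of Tiskin's argument: Monge-ness of distribution matrices, the seaweed/strand interpretation to see that the density of the product is a permutation matrix, and the divide-and-conquer (``steady ant'') algorithm with a linear-time merge. One small slip: you say the recursion produces ``four sub-seaweed products'' but then write $T(w)=2T(w/2)+O(w)$. Tiskin's split is along the \emph{middle} index set (the $k$ in the $(\min,+)$ product), partitioning the strands into a ``lo'' half and a ``hi'' half; this yields \emph{two} recursive sub-products $A_{\mathrm{lo}}\boxdot B_{\mathrm{lo}}$ and $A_{\mathrm{hi}}\boxdot B_{\mathrm{hi}}$, not four. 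The recurrence you wrote is the correct one; just fix the description of the split. Your acknowledgment that the merge step's correctness (monotone crossing structure, non-backtracking ant) is where the real work lies, and your deferral to the cited references for that case analysis, is exactly what the paper itself does by citing the theorem wholesale.
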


\newcommand{\RD}[2]{\ensuremath{D_{#1|_{#2}}}}
\newcommand{\RP}[2]{\ensuremath{P_{#1|_{#2}}}}

For a matrix $A\in \Z^{I\times J}$ and an integer $s \in \Z$,
we define the (diagonal) \emph{shift} of $A$ by $s$ units as a matrix $A\downshift s\in \Z^{(I+s)\times (J+s)}$
such that $(A\downshift s)\position{i+s,j+s}=A\position{i,j}$ for $(i,j)\in I\times J$.

\begin{fact}
The $\downshift$ operation distributes over the seaweed product,
that is, $(A\downshift s)\boxdot (B\downshift s)=(A\boxdot B)\downshift s$
holds for all bounded permutation matrices $A,B$ and integers $s\in \Z$.
\lipicsEnd
\end{fact}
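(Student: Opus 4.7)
The plan is to show that each of the three building blocks of the seaweed product---the $\Sigma$-operation, the $(\min,+)$-product, and the density ($\square$) operation---commutes with the diagonal shift $\downshift s$, and then compose these three equivariance statements in the definition $A\boxdot B = (A^\Sigma\odot B^\Sigma)^\square$. Throughout, I would rely on the fact that for a bounded permutation matrix $A\in \{0,1\}^{I\times I}$ with associated permutation $\sigma$, the shifted matrix $A\downshift s$ has its single $1$ in row $i+s$ located exactly at column $\sigma(i)+s$; in particular $A\downshift s$ is again a bounded permutation matrix with $\spn(A\downshift s)=\spn(A)+s$, so that all three auxiliary matrices $(A\downshift s)^\Sigma$, $(A\downshift s)^\square$, and $((A\downshift s)\odot(B\downshift s))$ are well-defined and have their index sets obtained from the unshifted ones by the translation $i\mapsto i+s$.

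First I would verify $(A\downshift s)^\Sigma = A^\Sigma \downshift s$ by a direct reindexing calculation: for $i,j\in \ppr{I}$,
\[
(A\downshift s)^\Sigma\position{i+s, j+s}
= \sum_{i'\ge i+s}\sum_{j'<j+s} (A\downshift s)\position{i',j'}
= \sum_{i''\ge i}\sum_{j''<j} A\position{i'',j''}
= A^\Sigma\position{i,j},
\]
via the substitution $i''=i'-s$, $j''=j'-s$ and the identity $(A\downshift s)\position{i',j'}=A\position{i'-s,j'-s}$. The analogous verification for the density operator is even shorter: every entry of $(A\downshift s)^\square$ at position $(i+s,j+s)$ is a fixed linear combination of four entries of $A\downshift s$ at positions $(i+s+\epsilon,j+s+\epsilon')$ with $\epsilon,\epsilon'\in\{0,1\}$, and each of these equals the corresponding entry of $A$ at $(i+\epsilon,j+\epsilon')$, so $(A\downshift s)^\square = A^\square \downshift s$. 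For the $(\min,+)$-product, applied to matrices $M\in \Z^{I\times K}$ and $N\in \Z^{K\times J}$,
\[
((M\downshift s)\odot (N\downshift s))\position{i+s, j+s}
= \min_{k+s\in K+s}\{M\position{i,k}+N\position{k,j}\}
= (M\odot N)\position{i,j},
\]
so $(M\downshift s)\odot(N\downshift s) = (M\odot N)\downshift s$; note that the shifts $M\downshift s$ and $N\downshift s$ share the same inner index set $K+s$, which is exactly what we need here since both summands of the seaweed product share the inner index set $\ppr{I}$.

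Stitching these three commutation identities together finishes the proof:
\[
(A\downshift s)\boxdot (B\downshift s)
= \bigl((A\downshift s)^\Sigma \odot (B\downshift s)^\Sigma\bigr)^\square
= \bigl((A^\Sigma \downshift s)\odot (B^\Sigma \downshift s)\bigr)^\square
= \bigl((A^\Sigma \odot B^\Sigma)\downshift s\bigr)^\square
= (A^\Sigma \odot B^\Sigma)^\square \downshift s
= (A\boxdot B)\downshift s.
\]
There is no real obstacle; the only thing to be careful about is bookkeeping of the index sets, in particular that the ``identity tail'' of a bounded permutation outside $\spn(A)$ shifts consistently so that all three operations are applied to matrices over the right index ranges, and that the inner dimension $K$ used when applying the commutation of $\odot$ with $\downshift s$ is indeed the same $\ppr{I}+s$ on both sides (which is forced by $\spn(A\downshift s)=\spn(A)+s$). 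Everything else is a one-line index substitution.
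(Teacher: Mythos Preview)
Your proof is correct. The paper states this as a \emph{Fact} without proof (note the \verb|\lipicsEnd| marker with no proof environment following), treating it as immediate from the definitions. Your argument is exactly the natural unfolding the paper omits: the three constituents of the seaweed product---$\Sigma$, $\odot$, and $\square$---are each equivariant under the diagonal shift by a straightforward reindexing, and composing these gives the result. There is nothing to compare against; you have simply written out the routine verification that the paper leaves implicit.
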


\subsection{Alignment Graphs and Distance Matrices}
\begin{definition}
Given a set $M\sub \Z^2$, we define the \emph{alignment graph} $\AG(M)$ with vertices $\Z^2$ and weighted edges:
\begin{itemize}
    \item $(x,y)\stackrel{1}{\longleftrightarrow} (x+1,y)$ for every $(x,y)\in \Z^2$ (horizontal edges),
    \item $(x,y)\stackrel{1}{\longleftrightarrow} (x,y+1)$ for every $(x,y)\in \Z^2$ (vertical edges),
    \item $(x,y)\stackrel{0}{\longleftrightarrow} (x+1,y+1)$ for every $(x,y)\in \Z^2 \sm M$ (diagonal edges).
\end{itemize}
We denote the underlying distance function on $\Z^2$ by $\dist_M$.
\lipicsEnd
\end{definition}

We introduce an order $\prec$ on $\Z^2$ so that $(x,y)\prec (x',y')$ if and only if $x < x'$ and $y < y'$.
For $S\sub \Z^2$, we denote by $\LIS(S)$ the maximum length of an increasing sequence (with respect to $\prec$) of points in $S$.

\begin{lemma}\label{lem:dist}
Consider a set $M\sub \Z^2$ and two points $p=(x,y),p'=(x',y')$ in $\Z^2$.
\begin{itemize}
    \item If $x\le x'$ and $y\le y'$, then $\dist_M(p,p') = |x'-x|+|y'-y|-2\LIS((\fragmentco{x}{x'}\times \fragmentco{y}{y'})\sm M)$.
    \item If $x \le x'$ and $y\ge y'$, then  $\dist_M(p,p') = |x'-x|+|y'-y|$.
    \item If $x \ge x'$ and $y\le y'$, then  $\dist_M(p,p') = |x'-x|+|y'-y|$.
    \item If $x \ge x'$ and $y\ge y'$, then  $\dist_M(p,p') = |x'-x|+|y'-y|-2\LIS((\fragmentco{x'}{x}\times \fragmentco{y'}{y})\sm M)$.
\end{itemize}
\end{lemma}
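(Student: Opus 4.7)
I would begin by exploiting symmetry to reduce the four cases to two. Since the underlying graph $\AG(M)$ is undirected, $\dist_M$ is symmetric in its two arguments: case~4 follows from case~1 by interchanging $p$ and $p'$, and case~3 follows from case~2 in the same way. It therefore suffices to establish cases~1 and~2.

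For case~2 (where $x \le x'$ and $y \ge y'$), I would track the quantity $x - y$ along any walk in $\AG(M)$. Each horizontal or vertical edge changes $x - y$ by $\pm 1$ while costing $1$, and each diagonal edge leaves $x - y$ unchanged while costing $0$. Hence any walk from $p$ to $p'$ has cost at least $|(x' - y') - (x - y)| = (x' - x) + (y - y') = |x' - x| + |y' - y|$, while the Manhattan walk using $|x'-x|$ horizontal and $|y'-y|$ vertical moves achieves this bound.

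For case~1, the upper bound is constructive. Starting from a $\prec$-increasing sequence $(a_1, b_1) \prec \cdots \prec (a_\ell, b_\ell)$ of length $\ell = \LIS((\fragmentco{x}{x'} \times \fragmentco{y}{y'}) \setminus M)$ inside $(\fragmentco{x}{x'} \times \fragmentco{y}{y'}) \setminus M$, I would construct an explicit monotone walk that proceeds rectilinearly from $p$ to $(a_1, b_1)$, takes the free diagonal edge to $(a_1 + 1, b_1 + 1)$, continues rectilinearly to $(a_2, b_2)$, and so on, ending with a rectilinear segment from $(a_\ell + 1, b_\ell + 1)$ to $p'$. The strict $\prec$-monotonicity of the sequence ensures each rectilinear segment is non-degenerate, and a telescoping computation shows the total cost equals $(x' - x) + (y' - y) - 2\ell$.

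The matching lower bound is the technically harder step, and where I expect the main obstacle to lie. For any walk from $p$ to $p'$, I would classify its edges by type (horizontal, vertical, diagonal) and direction ($+$ or $-$), with counts $h^\pm$, $v^\pm$, $d^\pm$. Combining the balance equations $x' - x = h^+ - h^- + d^+ - d^-$ and $y' - y = v^+ - v^- + d^+ - d^-$ with $\mathrm{cost} = h^+ + h^- + v^+ + v^-$ yields $\mathrm{cost} \ge (x' - x) + (y' - y) - 2(d^+ - d^-)$. What remains is to show $d^+ - d^- \le \ell$, and my plan here is an exchange argument: any backward edge on the walk can be paired with a subsequent forward edge of matching type, and the two can be simultaneously cancelled by rerouting along the detour they bound, producing a walk of no greater cost with strictly fewer backward edges. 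Iterating this procedure yields a $\prec$-monotone walk from $p$ to $p'$ using exactly $d^+ - d^-$ forward diagonal edges, whose base points necessarily lie in $(\fragmentco{x}{x'} \times \fragmentco{y}{y'}) \setminus M$ and are visited in strict $\prec$-increasing order, giving $d^+ - d^- \le \ell$ as required.
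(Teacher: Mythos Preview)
The paper itself does not argue from scratch here: it restricts attention to a finite box $B$ containing a fixed shortest path (so that $\dist_M(p,p')=\dist_{M,B}(p,p')$ trivially) and then invokes \cite[Lemma~12]{LIS} for the bounded-box version of the statement. Your self-contained route is therefore different in spirit, and the symmetry reduction, the treatment of case~2, and the upper bound in case~1 are all fine.

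The lower bound in case~1 has a genuine gap, however. The inequality $\mathrm{cost}\ge (x'-x)+(y'-y)-2(d^+-d^-)$ is correct, but the claim you then set out to prove, namely $d^+-d^-\le \ell$, is false even for shortest walks. Take $M=\{(0,0)\}$, $p=(0,0)$, $p'=(2,2)$: the three points of $(\fragmentco{0}{2}\times\fragmentco{0}{2})\setminus M=\{(0,1),(1,0),(1,1)\}$ are pairwise $\prec$-incomparable, so $\ell=1$, yet the shortest walk $(0,0)\to(1,0)\to(2,1)\to(3,2)\to(2,2)$ (cost~$2$) has $d^+-d^-=2$. Consequently no exchange procedure can convert an arbitrary shortest walk into a monotone one with ``exactly $d^+-d^-$ forward diagonal edges'' as you assert. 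The repair is to decouple the two steps: use the exchange only to show that \emph{some} shortest walk is monotone (without tracking $d^+-d^-$), and then observe that on a monotone walk the forward-diagonal base points already form a $\prec$-chain in $(\fragmentco{x}{x'}\times\fragmentco{y}{y'})\setminus M$, so that walk's own diagonal count is at most $\ell$, giving cost $=(x'-x)+(y'-y)-2d^+\ge (x'-x)+(y'-y)-2\ell$.
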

\begin{proof}
Let us fix a shortest path between $p$ and $p'$ in $\AG(M)$ and a box $B\coloneqq \fragment{i}{i'}\times \fragment{j}{j'}$
containing all vertices of this path. Observe that $\dist_M(p,p')=\dist_{M,B}(p,p')$,
where $\dist_{M,B}$ denotes the distance function of the subgraph of $\AG(M)$ induced by $B$.
By~\cite[Lemma 12]{LIS}, $\dist_{M,B}(p,p')$ satisfies the claimed formula in all four cases.
\end{proof}

\newcommand{\up}{\mathsf{up}}
\newcommand{\down}{\mathsf{down}}

\begin{definition}
For a \emph{box} $B=\fragment{i}{i'}\times \fragment{j}{j'}$, we define the \emph{left-top} and \emph{bottom-right boundaries} $(\lt^B_d)_{d\in \fragment{i-j'}{i'-j}},(\br^B_d)_{d\in \fragment{i-j'}{i'-j}}$
so that
\[\lt^B_d = \begin{cases}
    (i,i-d) & \text{for }d\in \fragment{i-j'}{i-j},\\
    (d+j, j) & \text{for }d\in \fragment{i-j}{i'-j},\\
\end{cases}\qquad
\br^B_d = \begin{cases}
    (d+j',j') & \text{for }d\in\fragment{i-j'}{i'-j'},\\
    (i',i'-d) & \text{for }d\in \fragment{i'-j'}{i'-j}.\\
\end{cases}
    \tag*{\lipicsEnd}
\]
\end{definition}

\begin{definition}\label{def:dmb}
    For a finite set $M\sub \Z^2$, we say that $B=\fragment{i}{i'}\times \fragment{j}{j'}$ is a \emph{bounding box}
    of $M$ if $M\sub \fragmentco{i}{i'}\times \fragmentco{j}{j'}$.
    We then define an infinite matrix $D_{M,B}\in \Z^{\Z\times \Z}$ as follows:
    \[D_{M,B}\position{a,b} \coloneqq  \begin{cases}
        \dist_M(\lt^B_a,\br^B_b) & \text{if }a,b\in \fragment{i-j'}{i'-j},\\
        |a-b| & \text{otherwise}.
    \end{cases}
    \tag*{\lipicsEnd}
\]
\end{definition}

\newcommand{\hB}{\hat{B}}
\newcommand{\hi}{\hat{\imath}}
\newcommand{\hj}{\hat{\jmath}}
\newcommand{\hM}{\hat{M}}
\begin{lemma}\label{lem:independent}
For every finite set $M\sub \Z^2$, the distance matrix $D_{M,B}$ does not depend on the choice of the bounding box $B$.
\end{lemma}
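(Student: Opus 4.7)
The plan is to show that $D_{M,B}=D_{M,\hat B}$ whenever $B\subseteq \hat B$ are both bounding boxes of $M$; invariance for arbitrary bounding boxes then follows because the smallest axis-aligned box containing any two bounding boxes is itself a bounding box that contains each of them. Writing $\hat B=\fragment{\hi}{\hi'}\times\fragment{\hj}{\hj'}$ and $I=\fragment{i-j'}{i'-j}$, I will repeatedly use two observations: (a) since $M\subseteq B$, any portion of an anti-diagonal $\{(x,y):x-y=d\}$ lying outside $B$ is disjoint from $M$ and can therefore be traversed by cost-$0$ diagonal edges; and (b) every edge of $\AG(M)$ changes $x-y$ by $0$ or $\pm 1$ while each non-diagonal edge costs $1$, yielding the lower bound $\dist_M(p,p')\ge|(x_p-y_p)-(x_{p'}-y_{p'})|$.

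My first case is $a,b\in I$, where the points $\lt^B_a$ and $\lt^{\hat B}_a$ both lie on anti-diagonal $a$ and the segment joining them is contained in $\hat B\setminus B$: it is precisely the portion of anti-diagonal $a$ that lies inside $\hat B$ but has not yet entered $B$. By~(a) this segment avoids $M$, so $\dist_M(\lt^B_a,\lt^{\hat B}_a)=0$, and analogously $\dist_M(\br^B_b,\br^{\hat B}_b)=0$. The triangle inequality together with the symmetry of $\dist_M$ then forces $\dist_M(\lt^B_a,\br^B_b)=\dist_M(\lt^{\hat B}_a,\br^{\hat B}_b)$.

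The remaining case is when at least one of $a,b$ lies in $\hat I\setminus I$, where $\hat I\coloneqq\fragment{\hi-\hj'}{\hi'-\hj}$. Here $D_{M,B}[a,b]=|a-b|$ by definition, and $D_{M,\hat B}[a,b]$ also equals $|a-b|$ unless both $a,b\in\hat I$; in that remaining sub-case I must verify $\dist_M(\lt^{\hat B}_a,\br^{\hat B}_b)=|a-b|$. The lower bound is~(b). For the matching upper bound, by symmetry I may assume $a\notin I$, so by~(a) the entire anti-diagonal $a$ is free. I would use a three-leg path: (i) travel along anti-diagonal $a$ from $\lt^{\hat B}_a$ down to $(a+\hj'+1,\hj'+1)$, one row below $\hat B$, at cost $0$; (ii) take $|a-b|$ horizontal steps along the row $y=\hj'+1$ to $(b+\hj'+1,\hj'+1)$; and (iii) walk back up along anti-diagonal $b$ to $\br^{\hat B}_b$. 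Each edge traversed in leg~(iii) has lower endpoint $(y+b,y)$ with $y$ at least the second coordinate of $\br^{\hat B}_b$ and $y+b$ at least its first coordinate; and since $\br^{\hat B}_b$ lies either on the bottom edge of $\hat B$ (so its second coordinate is $\hj'\ge j'$) or on its right edge (so its first coordinate is $\hi'\ge i'$), every such intermediate point satisfies $y\notin\fragmentco{j}{j'}$ or $y+b\notin\fragmentco{i}{i'}$, hence lies outside $M$, making leg~(iii) free and yielding a path of total cost exactly $|a-b|$.

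The hard part will be leg~(iii): the anti-diagonal $b$ may itself pass through $M$ when $b\in I$, so a direct walk along it could pick up unwanted cost. Routing one row below $\hat B$ before the horizontal shift is precisely what forces the final walk to remain in the safe region $\{y\ge\hj'\}\cup\{x\ge\hi'\}$ throughout, and verifying this entry by entry for both shapes of $\br^{\hat B}_b$ is the technical heart of the argument.
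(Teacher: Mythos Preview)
Your proposal is correct. It follows the same high-level case structure as the paper's proof (split according to whether $a,b$ lie inside the smaller box's diagonal range), but differs in one respect: where the paper invokes \cref{lem:dist} (the $\LIS$-based distance formula) to conclude that $\dist_M(\lt^B_a,\lt^{\hat B}_a)=0$, $\dist_M(\lt^B_a,\br^B_a)=0$, and $\dist_M(\br^B_a,\br^B_b)=|a-b|$, you instead construct explicit shortest paths by routing just outside the bounding box and arguing directly that the relevant diagonal edges are free. Your approach is more elementary and self-contained, at the price of a longer case analysis (verifying leg~(iii) separately for the two shapes of $\br^{\hat B}_b$, and justifying the ``by symmetry'' step for the $b\notin I$ sub-case, which does go through via the mirror-image path that detours one row \emph{above} $\hat B$). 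The paper's approach is more concise because \cref{lem:dist} absorbs all the path-construction work, and it compares an arbitrary box directly to the minimum bounding box rather than reducing two arbitrary boxes to a common larger one.

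Two minor imprecisions in your write-up, neither fatal: the complement of ``$a,b\in I$'' is ``at least one of $a,b\notin I$'', not ``at least one lies in $\hat I\setminus I$'' (you do handle the case where one lies outside $\hat I$ implicitly); and the segment from $\lt^{\hat B}_a$ to $\lt^B_a$ is not literally contained in $\hat B\setminus B$ since its endpoint $\lt^B_a$ lies in $B$, though what you actually need---that the lower endpoints of the diagonal edges used are outside $\fragmentco{i}{i'}\times\fragmentco{j}{j'}\supseteq M$---holds.
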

\begin{proof}
Consider an arbitrary bounding box $B=\fragment{i}{i'}\times \fragment{j}{j'}$
and the minimum bounding box $\hB=\fragment{\hi}{\hi'}\times \fragment{\hj}{\hj'}$;
if $M=\emptyset$, set $\hi=\hi'=i$ and $\hj=\hj'=j$.
We prove $D_{M,B}\position{a,b}=D_{M,\hB}\position{a,b}$ by analyzing several cases:
\begin{itemize}
    \item {\boldmath\bf$a,b\in \fragment{\hi-\hj'}{\hi'-\hj}$.} \cref{lem:dist} implies  $\dist_M(\lt_a^B,\lt_a^{\hB})=0=\dist_M(\br_b^{\hB},\br_b^{B})$.
        Thus, \[
            D_{M,B}\position{a,b}=\dist_M(\lt_a^B,\br_b^B)=\dist_M(\lt_a^{\hB},\br_b^{\hB})=D_{M,\hB}\position{a,b}.
        \]
    \item {\boldmath\bf$a,b\in \fragment{i-j'}{i'-j}$ and $a\notin \fragment{\hi-\hj'}{\hi'-\hj}$.}
    \cref{lem:dist} implies $\dist_M(\lt_a^B,\br_a^B)=0$ and $\dist_M(\br_a^B,\br_b^B)=|a-b|$.
    Thus,  \[
        D_{M,B}\position{a,b}=\dist_M(\lt_a^B,\br_b^B)=\dist_M(\br_a^B,\br_b^B)=|a-b|=D_{M,\hB}\position{a,b}.
    \]
    \item {\boldmath\bf$a,b\in \fragment{i-j'}{i'-j}$ and $b\notin \fragment{\hi-\hj'}{\hi'-\hj}$.}
    \cref{lem:dist} implies $\dist_M(\lt_b^B,\br_b^B)=0$ and $\dist_M(\lt_a^B,\lt_b^B)=|a-b|$.
    Thus, \[
        D_{M,B}\position{a,b}=\dist_M(\lt_a^B,\br_b^B)=\dist_M(\lt_a^B,\lt_b^B)=|a-b|=D_{M,\hB}\position{a,b}.
    \]
    \item {\bf Otherwise,} $D_{M,B}\position{a,b}=|a-b|=D_{M,\hB}\position{a,b}$.\qedhere
\end{itemize}
\end{proof}

\begin{definition}\label{def:dmpm}
For a finite set $M\sub \Z^2$, we define the \emph{distance matrix} $D_M\coloneqq D_{M,B}$
(for an arbitrary bounding box $B$ of $M$) and the \emph{seaweed matrix} $P_M \coloneqq
\onehalf D_M^\square$.
\lipicsEnd
\end{definition}

\begin{fact}\label{obs:shift}
For every finite set $M\sub \Z^2$ and vector $(u,v)\in \Z^2$,
we have $P_{M+(u,v)} = P_M \downshift (u-v)$, where $M+(u,v) = \{(x+u,y+v) : (x,y)\in M\}$. 
\lipicsEnd
\end{fact}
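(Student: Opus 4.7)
\medskip
\noindent\textit{Proof proposal.}
The plan is to reduce the claim to the translation invariance of the alignment graph $\AG(M)$, push this invariance through the definition of $D_M$, and then observe that the density operator $(\cdot)^\square$ commutes with the diagonal shift $\downshift$.

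First, fix any bounding box $B=\fragment{i}{i'}\times\fragment{j}{j'}$ of $M$, and set $s \coloneqq u-v$ and $B' \coloneqq B+(u,v)=\fragment{i+u}{i'+u}\times\fragment{j+v}{j'+v}$. Since shifting $M$ by $(u,v)$ shifts its bounding coordinates accordingly, $B'$ is a bounding box of $M+(u,v)$; by \cref{lem:independent} we may compute $P_{M+(u,v)}$ from $D_{M+(u,v),B'}$. A direct check on the definitions of $\lt$ and $\br$ shows that $\lt^{B'}_{a+s} = \lt^{B}_{a}+(u,v)$ and $\br^{B'}_{b+s} = \br^{B}_{b}+(u,v)$, while the index range $\fragment{(i+u)-(j'+v)}{(i'+u)-(j+v)}$ equals $\fragment{i-j'}{i'-j}+s$.

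Next, because the alignment graph is translation invariant, $\AG(M+(u,v))$ is isomorphic to $\AG(M)$ via the map $p\mapsto p+(u,v)$, so $\dist_{M+(u,v)}(p+(u,v),q+(u,v))=\dist_M(p,q)$ for all $p,q\in\Z^2$. Combining this with the identifications of left-top and bottom-right boundary points above shows that, for $a,b\in\fragment{i-j'}{i'-j}$,
\[
D_{M+(u,v),B'}\position{a+s,b+s}=\dist_{M+(u,v)}(\lt^{B'}_{a+s},\br^{B'}_{b+s})=\dist_M(\lt^B_a,\br^B_b)=D_{M,B}\position{a,b}.
\]
Outside this range, the default formula $|a-b|$ in \cref{def:dmb} is manifestly shift invariant. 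Hence $D_{M+(u,v)}=D_M\downshift s$ as infinite matrices.

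Finally, it suffices to verify that the density operator commutes with $\downshift$. Unfolding the definitions, for any matrix $A$ one computes
\[
(A\downshift s)^\square\position{a,b}=A\position{a+1-s,b-s}+A\position{a-s,b+1-s}-A\position{a-s,b-s}-A\position{a+1-s,b+1-s}=A^\square\position{a-s,b-s},
\]
so $(A\downshift s)^\square=A^\square\downshift s$. Applying this to $A=D_M$ and multiplying by $\onehalf$ yields $P_{M+(u,v)}=\onehalf(D_M\downshift s)^\square=\onehalf D_M^\square\downshift s=P_M\downshift(u-v)$, as desired. There is no real obstacle here: the only thing to be careful about is keeping the index shift $s=u-v$ consistent between the box, the diagonal coordinates, and the density operator, which the computation above makes routine.
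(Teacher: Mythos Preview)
Your proof is correct. The paper states this as a \texttt{Fact} without proof (it carries only the \verb|\lipicsEnd| marker), treating it as a routine observation that follows from \cref{def:dmb,def:dmpm}; your argument is precisely the natural verification the paper omits, carried out cleanly.
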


\begin{lemma}\label{lem:perm}
For every finite set $M\sub \Z^2$, the seaweed matrix $P_M$ is a bounded permutation matrix.
Moreover, $D_M\position{a,b}=2P^{\Sigma}_M\position{a,b}+a-b$ holds for all $a,b\in \Z^2$.
\end{lemma}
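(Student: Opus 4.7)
The plan is to prove both claims together by first establishing a dichotomy for $D_M^\square$, then identifying $P_M = \tfrac12 D_M^\square$ as a bounded permutation matrix via row- and column-sum calculations, and finally deriving the distance identity by a double telescoping. Throughout I will work with an arbitrary bounding box $B$, which is justified by \cref{lem:independent}, and I will repeatedly exploit two facts: (a) $D_M[a,b]=|a-b|$ outside the bounding-box diagonal range (from \cref{def:dmb}), and (b) consecutive boundary points $\lt^B_a,\lt^B_{a+1}$ (and likewise $\br^B_b,\br^B_{b+1}$) are adjacent in $\AG(M)$ via a single unit edge.

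For $D_M^\square[a,b]\in\{0,2\}$, I would combine three ingredients. The lower bound $D_M^\square[a,b]\ge 0$ is the Monge inequality, which follows from the classical planar crossing argument: the four boundary points $\lt^B_a,\lt^B_{a+1},\br^B_{b+1},\br^B_b$ lie in convex order on the bounding rectangle, so any shortest paths $\lt^B_a\rightsquigarrow \br^B_{b+1}$ and $\lt^B_{a+1}\rightsquigarrow \br^B_b$ must share a vertex, and swapping at such a vertex yields paths $\lt^B_a\rightsquigarrow \br^B_b$ and $\lt^B_{a+1}\rightsquigarrow \br^B_{b+1}$ of no larger total weight. The upper bound $D_M^\square[a,b]\le 2$ follows from the $1$-Lipschitz property of $D_M$ in each argument (immediate from boundary adjacency). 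Parity closes the dichotomy: every shortest path in $\AG(M)$ uses only unit horizontal/vertical edges and free diagonals, so its weight has the same parity as the $\ell^1$-distance between its endpoints, giving $D_M[a,b]\equiv a-b\pmod 2$ and hence $D_M^\square[a,b]$ is always even.

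With $P_M\in\{0,1\}^{\Z\times\Z}$ in hand, boundedness is immediate from \cref{def:dmb}: outside the bounding-box diagonal range $D_M[a,b]=|a-b|$, and a direct computation then gives $P_M[a,b]=[a=b]$ there. For the row-sum condition, I would fix $a$ and telescope over $b\in\fragment{-B'}{B'}$, obtaining $\sum_b D_M^\square[a,b]=\bigl(D_M[a+1,-B']-D_M[a+1,B'+1]\bigr)+\bigl(D_M[a,B'+1]-D_M[a,-B']\bigr)$; choosing $B'$ large enough that all four terms use the boundary formula $|a-b|$, this sum evaluates to exactly $2$, so $\sum_b P_M[a,b]=1$. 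A symmetric telescope handles the column sums, proving that $P_M$ is a bounded permutation matrix.

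Finally, for the identity $D_M[a,b]=2P_M^\Sigma[a,b]+(a-b)$ I would use a double telescoping. Since $D_M^\square=2P_M$ everywhere (agreement on the diagonal outside the bounded range is easy to check), I can write $2P_M^\Sigma[a,b]=\sum_{i\ge a,\,j<b}D_M^\square[i,j]$ and truncate to $i\in\fragment{a}{N}$, $j\in\fragmentco{-N}{b}$ for large $N$. The inner telescope over $j$ yields $D_M[i,b]-D_M[i+1,b]+1$, using the boundary values $D_M[i,-N]=i+N$ and $D_M[i+1,-N]=i+1+N$; the outer telescope over $i$ then yields $D_M[a,b]-D_M[N+1,b]+(N-a+1)=D_M[a,b]+b-a$ via $D_M[N+1,b]=N+1-b$. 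The main obstacle I anticipate is not conceptual but bookkeeping: one must choose $N$ large enough that the boundary formula activates in \emph{every} boundary term simultaneously, and verify the argument covers the corner cases where $(a,b)$ itself lies outside the bounding-box diagonal range. Since $M$ is finite, these reductions are routine but require care.
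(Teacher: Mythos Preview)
Your argument is correct and complete. The dichotomy $D_M^\square\in\{0,2\}$ via Monge (planar crossing), unit-Lipschitz, and parity is sound; the row/column telescoping correctly yields sum $2$ once you use the boundary formula $D_M[a,b]=|a-b|$ for large $|b|$ (and you have noted that the truncation is justified because $P_M[a,b]=0$ outside a finite region); and the double telescope for the identity is clean. The only care points you flag---choosing $N$ uniformly large and handling $(a,b)$ outside the diagonal range---are indeed routine since $M$ is finite and \cref{lem:independent} lets you enlarge the bounding box freely.

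This is a genuinely different route from the paper. The paper does not argue from first principles at all: it invokes \cite[Lemma~19]{LIS} to conclude that $P_M$ restricted to any bounding-box diagonal window $\fragmentco{i-j'}{i'-j}$ is a permutation matrix (hence, by enlarging $B$, a bounded permutation matrix on all of $\Z$), and then cites \cite[Definition~18 and Lemma~19]{LIS} for the distance identity. Your approach reproves those cited facts directly via the crossing/parity/telescope machinery, which is essentially how the results in \cite{LIS} (and earlier, Tiskin's framework \cite{abs-0707-3619}) are established. The trade-off is clear: the paper's proof is two sentences but relies on the reader chasing external references, whereas yours is self-contained and illuminates why the seaweed matrix is a permutation, at the cost of the bookkeeping you mention.
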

\begin{proof}
By~\cite[Lemma 19]{LIS}, for every bounding box $B=\fragment{i}{i'}\times \fragment{j}{j'}$,
the matrix $P_M$ restricted to entries $P_M\position{a,b}$ with  $a,b\in \fragmentco{i-j'}{i'-j}$
is a permutation matrix.
Since the bounding box can be chosen arbitrarily large,
the entire matrix $P_M$ is a bounded permutation matrix.
The second claim also follows from~\cite[Definition~18 and Lemma~19]{LIS}.
\end{proof}

\begin{definition}
    We say that two boxes $B=\fragment{i}{i'}\times \fragment{j}{j'}$ and $\hB=\fragment{\hi}{\hi'}\times \fragment{\hj}{\hj'}$  are
    \begin{itemize}
        \item \emph{vertically adjacent} if $i=\hi$, $i'=\hi'$, and $j'=\hj$,
        \item \emph{horizontally adjacent} if $i'=\hi$, $j=\hj$, and $j'=\hj'$.
            \lipicsEnd
    \end{itemize}
\end{definition}

\begin{lemma}\label{lem:adjacent}
    Consider finite sets $M,\hM\sub \Z^2$ with (horizontally or vertically) adjacent
    bounding boxes $B$ and $\hB$, respectively.
    Then, $P_{M\cup \hM} = P_{M}\boxdot P_{\hM}$.
\end{lemma}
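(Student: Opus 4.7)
The plan is to reduce the statement to a geometric stitching identity for the distance matrices and then translate it back into the seaweed-product formalism. Specifically, I would prove the identity
\[ D_{M \cup \hat{M}}[a,b] = \min_{c \in \mathbb{Z}} \bigl( D_M[a,c] + D_{\hat{M}}[c,b] \bigr) \quad \text{for all } a,b \in \mathbb{Z}, \]
i.e.\ $D_{M\cup \hat{M}} = D_M \odot D_{\hat M}$. Given this, Lemma~\ref{lem:perm} yields $D_X[a,b] = 2 P_X^\Sigma[a,b] + (a-b)$ for any finite $X \subseteq \mathbb{Z}^2$, so plugging the three identities into the stitching formula and cancelling the linear $(a-c)+(c-b)=a-b$ telescoping term gives $P_{M\cup \hat{M}}^\Sigma = P_M^\Sigma \odot P_{\hat M}^\Sigma$. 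Applying the density operator $\square$ to both sides and using the definition $A \boxdot B = (A^\Sigma \odot B^\Sigma)^\square$ then yields $P_{M \cup \hat{M}} = P_M \boxdot P_{\hat M}$.

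For the geometric identity, I would use Lemma~\ref{lem:independent} to work with the single bounding box $\tilde{B} \coloneqq B \cup \hat{B}$ of $M \cup \hat{M}$ (which is a box because $B$ and $\hat{B}$ are adjacent). Assume vertical adjacency, so $\hat{B}$ sits directly below $B$ and they share the horizontal edge $E \coloneqq \{i,\ldots, i'\}\times\{j'\}$; the horizontal case is symmetric. The crucial observation is that $M \subseteq \fragmentco{i}{i'}\times \fragmentco{j}{j'}$ lies strictly above $E$ and $\hat{M} \subseteq \fragmentco{\hat{\imath}}{\hat{\imath}'}\times \fragmentco{\hat{\jmath}}{\hat{\jmath}'}$ lies strictly below $E$, so within $B$ the alignment graphs $\AG(M)$ and $\AG(M \cup \hat{M})$ coincide, and analogously for $\hat{B}$.

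For the inequality $D_{M \cup \hat M}[a,b] \le D_M[a,c] + D_{\hat M}[c,b]$, I would concatenate shortest paths realising the two right-hand distances at a common point on $E$ (using that $\br^B_c = \lt^{\hat B}_c$ for $c$ in the shared range, and falling back to trivial paths of length $|a-c|$ or $|c-b|$ when $a$ or $c$ lies outside the active range, using Lemma~\ref{lem:dist}). For the reverse inequality, I would take a shortest path in $\AG(M\cup\hat{M})$ realising $D_{M\cup \hat M}[a,b]$, restrict it to a large enough enclosing box (possible because $\dist$ only depends on a suitable finite sub-grid by Lemma~\ref{lem:dist}), and split it at its first crossing of $E$: the initial segment lies in $B$ and so has length at least $\dist_M(\lt^B_a, p) \geq D_M[a,c]$ for $c$ the anti-diagonal index of the crossing point $p$, while the terminal segment lies in $\hat{B}$ and similarly contributes at least $D_{\hat M}[c,b]$. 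Paths that do not cross $E$ are subsumed by taking $c$ at the extreme of the active range, where one of the two factor distances degenerates to its $|a-c|$ or $|c-b|$ fallback.

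The main obstacle I expect is clean bookkeeping for the indexing on the "fallback" regions of $D_M$ and $D_{\hat{M}}$ (entries equal to $|a-b|$ when $a$ or $b$ lies outside the active diagonal range), together with verifying that the minimum in $D_M \odot D_{\hat M}$ is always attained in a bounded range of $c$ (which it is, because for $|c|$ large both factors grow linearly while for moderate $c$ they are bounded by the diameter of $\tilde{B}$). Once these routine checks are in place, the whole argument packages into the three-line reduction described in the first paragraph.
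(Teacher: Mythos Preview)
Your proposal is correct and follows essentially the same route as the paper: reduce to the identity $D_{M\cup\hat M}=D_M\odot D_{\hat M}$ and establish the latter by splitting shortest paths in $\AG(M\cup\hat M)$ at the shared edge. The paper, however, sidesteps all the ``fallback'' bookkeeping you flag as the main obstacle: instead of fixing $\tilde B=B\cup\hat B$, it invokes Lemma~\ref{lem:independent} \emph{separately for each triple} $(a,b,c)$, choosing a bounding box wide enough to place the input point at height $j$ and the output point at height $j''$ (and similarly for the two factors). Then every relevant entry is a genuine distance $\dist_{M\cup\hat M}((a+j,j),(c+j'',j''))$, every such path is forced to cross the line $y=j'$, and the argument collapses to a three-line computation with no case analysis.
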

\begin{proof}
    It suffices to prove that $D_{M\cup \hM}=D_M\odot D_{\hM}$.
    By symmetry, we assume without loss of generality that the boxes $B,\hB$ are vertically adjacent,
    that is, $B=\fragment{i}{i'}\times \fragment{j}{j'}$ and $\hB = \fragment{i}{i'}\times \fragment{j'}{j''}$
    for some integers $i\le i'$ and $j\le j'\le j''$.
    By \cref{lem:independent}, for every $a,c\in \Z$, we have
    $D_{M\cup \hM}\position{a,c} =\dist_{M\cup \hM}((a+j,j),(c+j'',j''))$,
    since \[
        \fragment{\min\{i,a+j,c+j''\}}{\max\{i',a+j,c+j''\}} \times \fragment{j}{j''}
    \]is a bounding box for $M\cup \hM$.
    In addition, since each path from $(a+j,j)$ to $(c+j'',j'')$ passes through a vertex of the form $(b+j',j')$ for some $b\in \Z$,
    we also have \[
        \dist_{M\cup \hM}((a+j,j),(c+j'',j'')) = \min_{b\in \Z} \{\dist_{M\cup
            \hM}((a+j,j),
        (b+j',j'))\;+\;\dist_{M\cup \hM}((b+j',j'), (c+j'',j''))\}.
    \]
    Now, for every $a,b,c\in \Z$, \cref{lem:independent,lem:dist} yield
    \[D_M\position{a,b}=\dist_M((a+j,j),(b+j',j'))=\dist_{M\cup \hM}((a+j,j),(b+j',j')),\]
    since 
    $\fragmentco{a+j}{b+j'} \times \fragmentco{j}{j'}$ is disjoint from $\hM$, and
    \[D_{\hM}\position{b,c}=\dist_{\hM}((b+j',j'),(c+j'',j''))=\dist_{M\cup \hM}((b+j',j'),(c+j'',j'')),\]
    since 
    $\fragmentco{b+j'}{c+j''} \times \fragmentco{j'}{j''}$ is disjoint from $M$.

    Thus, $D_{M\cup \hM}\position{a,c}=\min_{b\in
    \Z}\{D_M\position{a,b}+D_{\hM}\position{b,c}\}$ holds as claimed.
\end{proof}

\begin{lemma}\label{fct:representable}
Every bounded permutation matrix can be represented as $P_M$ for some finite set $M\sub \Z^2$.
\end{lemma}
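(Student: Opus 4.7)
The plan is to realize any bounded permutation matrix $A$ as $P_M$ by writing its underlying permutation as a reduced product of adjacent transpositions, encoding each transposition by a single-point set, and assembling these via repeated use of \cref{lem:adjacent}.

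First, I will handle two base realizations by direct computation from \cref{def:dmb,def:dmpm}. For $M=\emptyset$, any box is a bounding box and $D_\emptyset\position{a,b}=|a-b|$ for all $a,b\in\mathbb{Z}$, which yields $P_\emptyset\position{a,b}=[a=b]$: the identity. For $M=\{(0,0)\}$ with the minimal bounding box $\fragment{0}{1}\times\fragment{0}{1}$, only the diagonal at $(0,0)\to(1,1)$ is blocked; a short enumeration of the nine boundary distances and the resulting $2\times 2$ density matrix shows that $P_{\{(0,0)\}}$ is the adjacent-transposition matrix $T_0$ that swaps positions $-1$ and $0$. Combining with \cref{obs:shift} gives $P_{\{(s,0)\}}=T_s$ for every $s\in\mathbb{Z}$, where $T_s$ denotes the permutation matrix of the adjacent transposition $\tau_s$ swapping $s-1$ and $s$.

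Next, I will exploit that the permutation $\sigma$ underlying $A$ has finitely many inversions (since $\spn(A)$ is finite) and therefore admits a reduced decomposition $\sigma=\tau_{s_1}\tau_{s_2}\cdots\tau_{s_k}$ into adjacent transpositions. I define $M_j\coloneqq\{(s_j+(j-1),\,j-1)\}$ and $M\coloneqq\bigcup_{j=1}^{k}M_j$, and choose a common bounding box $B=\fragment{\ell}{\ell+L}\times\fragment{0}{k}$ large enough to contain every $M_j$. I partition $B$ into the $k$ vertically adjacent unit-height strips $B_j=\fragment{\ell}{\ell+L}\times\fragment{j-1}{j}$, each containing exactly the point of $M_j$. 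Iterating \cref{lem:adjacent} along these strips (at step $j$, the sets $M_1\cup\cdots\cup M_{j-1}$ and $M_j$ have vertically adjacent bounding boxes $B_1\cup\cdots\cup B_{j-1}$ and $B_j$) yields $P_M=P_{M_1}\boxdot\cdots\boxdot P_{M_k}$, and by the first step together with \cref{obs:shift} each $P_{M_j}$ equals $T_{s_j}$; hence $P_M=T_{s_1}\boxdot\cdots\boxdot T_{s_k}$.

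The hard step will be concluding $P_M=A$ from this chain: because the seaweed product generally differs from the group product, the equality $T_{s_1}\boxdot\cdots\boxdot T_{s_k}=A$ depends on a central structural property of the seaweed monoid established in Tiskin's theory~\cite{abs-0707-3619,Tis15}---that on a \emph{reduced} expression in adjacent transpositions, the seaweed product coincides with the standard composition in the symmetric group. Granting this property, the right-hand side is the permutation matrix of $\tau_{s_1}\cdots\tau_{s_k}=\sigma$, namely $A$, which completes the realization (the degenerate case $k=0$, corresponding to $\sigma=\operatorname{id}$ and $\spn(A)=\emptyset$, is covered by the $M=\emptyset$ base case).
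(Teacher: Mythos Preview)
Your proposal is correct and follows essentially the same approach as the paper: realize each transposition matrix as $P_M$ for a singleton $M$, then stitch via \cref{lem:adjacent} along vertically adjacent unit-height strips. The only cosmetic difference is in how you invoke Tiskin: the paper directly cites that every bounded permutation matrix is a seaweed product of transposition matrices $G_d$ (your $T_s$) and then realizes each factor, whereas you take a reduced word in the symmetric group and appeal to the fact that the seaweed product agrees with the group product on reduced expressions---these are two phrasings of the same structural result from~\cite{abs-0707-3619}.
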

\begin{proof}
Tiskin~\cite{abs-0707-3619} showed that every bounded permutation matrix can be represented as the seaweed product of transposition matrices, which can be defined as the unique permutation matrices $G_d:\{0,1\}^{\Z\times \Z}$
satisfying $\spn(G_d)=\{d\}$ for $d\in \Z$.
By \cref{def:dmb,def:dmpm}, for any $y,d\in \Zz$, we have $G_d = P_{\{(y+d,y)\}}$.
By \cref{lem:adjacent}, any product $G_{d_1}\boxdot \cdots \boxdot G_{d_t}$
can be represented as $P_{M}$ for $M=\{(y+d_y,y) : y\in \fragment{1}{t}\}$.
\end{proof}

For a set $M\sub \Z^2$, we define $\spn(M)=\spn(\{x-y : (x,y)\in M\})$.

\begin{lemma}\label{fct:out}
For every finite set $M\sub \Z^2$, we have $\spn(P_M)\sub \spn(M)$.
\end{lemma}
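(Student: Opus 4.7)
The plan is to proceed by induction on $|M|$, reducing the statement to the case of a single point via \cref{lem:adjacent} and then combining via \cref{thm:prod}. The base case $|M|=0$ is handled by taking a degenerate bounding box $\fragment{i}{i}\times \fragment{i}{i}$ in \cref{def:dmb}, which makes $D_M\position{a,b}=|a-b|$ for all $a,b\in\Z$ and hence $P_M$ the identity matrix, so $\spn(P_M)=\emptyset=\spn(M)$. For $|M|=1$, say $M=\{(x_0,y_0)\}$, the identification already used in the proof of \cref{fct:representable} gives $P_M=G_{x_0-y_0}$, whose $\spn$ is $\{x_0-y_0\}=\spn(M)$ by the definition of transposition matrices.

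For the inductive step ($|M|\ge 2$), the key combinatorial observation is that we can always partition $M$ into two non-empty pieces $M_1,M_2$ admitting adjacent bounding boxes. If $M$ contains two distinct $y$-coordinates, let $y_0=\min\{y:(x,y)\in M\}$, put $M_1=\{(x,y)\in M:y=y_0\}$ and $M_2=M\sm M_1$, and take the boxes $B_1=\fragment{i_0}{i_1}\times \fragment{y_0}{y_0+1}$ and $B_2=\fragment{i_0}{i_1}\times \fragment{y_0+1}{j_1}$, where $\fragment{i_0}{i_1}$ covers all $x$-coordinates and $j_1$ exceeds all $y$-coordinates of $M$; these are vertically adjacent. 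Otherwise all points of $M$ lie on one horizontal line $y=y_0$ with distinct $x$-coordinates $x_0<\dots<x_{|M|-1}$, and we split off $M_1=\{(x_0,y_0)\}$ from $M_2=M\sm M_1$ using the horizontally adjacent boxes $\fragment{x_0}{x_0+1}\times \fragment{y_0}{y_0+1}$ and $\fragment{x_0+1}{x_{|M|-1}+1}\times \fragment{y_0}{y_0+1}$.

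Having made this split, \cref{lem:adjacent} yields $P_M=P_{M_1}\boxdot P_{M_2}$. By the induction hypothesis, $\spn(P_{M_i})\subseteq \spn(M_i)\subseteq \spn(M)$ for $i\in\{1,2\}$ (the latter inclusion because $\spn$ of a set is monotone). Applying \cref{thm:prod} finally gives
\[\spn(P_M)\subseteq \spn\bigl(\spn(P_{M_1})\cup \spn(P_{M_2})\bigr)\subseteq \spn(M),\]
since $\spn(M)$ is an interval containing $\spn(M_1)\cup\spn(M_2)$.

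The only delicate step is the splitting argument, which needs a little care so that each piece fits inside a bounding box of the required shape for \cref{lem:adjacent}; beyond that the proof is a clean induction leveraging only the two cited results.
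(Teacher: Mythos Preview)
Your proof is correct but takes a genuinely different route from the paper. The paper argues directly: for any $a$ with $a,a+1\notin\spn(M)$, it uses \cref{lem:dist} to compute $D_M\position{a,a}=D_M\position{a+1,a+1}=0$ and $D_M\position{a,a+1}=D_M\position{a+1,a}=1$ (since the diagonals $a$ and $a+1$ avoid all points of $M$, so all cost-$0$ diagonal edges along them are present), whence $P_M\position{a,a}=1$; this yields $\{a:P_M\position{a,a}=0\}\sub\ppl{\spn(M)}$ and hence $\spn(P_M)\sub\spn(M)$. Your argument instead builds $P_M$ from singletons via \cref{lem:adjacent} and controls the span through \cref{thm:prod}. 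The paper's proof is shorter and self-contained, relying only on the elementary \cref{lem:dist}; your approach avoids any explicit distance computation but leans on Tiskin's seaweed-product bound, a substantially heavier external ingredient. One minor point: the identification $P_{\{(x_0,y_0)\}}=G_{x_0-y_0}$ is stated in the proof of \cref{fct:representable} only for $y,d\in\Zz$, but the general case follows immediately from \cref{obs:shift}.
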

\begin{proof}
Let $a\in \Z \sm \ppl{\spn(M)}$.
This means that $a,a+1 \notin \spn(M)$
and thus $a,a+1\notin \{x-y : (x,y)\in M\}$.
By \cref{lem:dist}, this implies $D_M\position{a,a}=D_M\position{a+1,a+1}=0$ and
$D_M\position{a,a+1}=D_M\position{a+1,a}=1$.
Consequently, \[
    P_M\position{a,a}=\onehalf(D_M\position{a,a+1}+D_M\position{a+1,a}-D_M\position{a,a}-D_M\position{a+1,a+1})=1.
\]
Thus, $\{a\in \Z : P_M\position{a,a}=0\}\sub \ppl{\spn(M)}$.
Given that $\spn(M)$ is an interval, this yields $\spn(P_M)=\mml{\spn(\{a\in \Z :
P_M\position{a,a}=0\})}\sub \spn(M)$.
\end{proof}

\subsection{Restriction of Permutation Matrices}

For a finite set $M\sub \Z^2$ and an integer interval $I$, we denote $M|_I \coloneqq  \{(x,y)\in M : x-y\in I\}$;
observe that $\spn(M|_I) = \spn(\{x-y:  (x,y)\in M|_I\})\sub \spn(I) = I$.

\begin{lemma}\label{lem:rpm}
Consider a finite set $M\sub \Z^2$ and an integer interval $I=\fragmentoo{\ell}{r}$.
The \emph{restricted seaweed matrix} $\RP{M}{I}$ can be characterized as follows
based on the sets \begin{align*}
    L&=\{(a,b)\in \fragment{\ell}{r}^2 : a+b\le\ell+r\text{ and } D_M\position{a,b}\ge a+b-2\ell\},\\
R&=\{(a,b)\in \fragment{\ell}{r}^2 : a+b\ge \ell+r \text{ and }D_M\position{a,b}\ge 2r-a-b\}.\end{align*}
    For every $a,b\in \ppl{I}$, we have $\RP{M}{I}\position{a,b}=1$ if and only if at least one of the following cases holds:
 \begin{enumerate}[(1)]
     \item\label{it:L} $(a+1,b),(a,b+1)\in L$ and $(a+1,b+1)\notin L$, or
     \item\label{it:R} $(a+1,b),(a,b+1)\in R$ and $(a,b)\notin R$, or
     \item\label{it:M} $P_M\position{a,b}=1$, $(a,b)\notin L$, and $(a+1,b+1)\notin R$.
 \end{enumerate}

 Moreover, $(a+1,b),(a,b+1)\notin L$ holds for every $(a,b)\in \fragment{\ell}{r}^2\sm L$
 and $(a-1,b),(a,b-1)\notin R$ holds for every $(a,b)\in \fragment{\ell}{r}^2\sm R$.
\end{lemma}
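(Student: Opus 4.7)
The plan is to first derive a closed-form expression for $D_{M|_I}$ in terms of $D_M$ by analyzing the structure of shortest paths in $\AG(M|_I)$. Since $\AG(M|_I)$ differs from $\AG(M)$ only in that diagonal edges at points $(x,y)$ with $x-y \notin I$ are unblocked, any shortest path from $\lt_a^B$ to $\br_b^B$ either stays within the band $\{(x,y) : \ell < x-y < r\}$ (where the two graphs coincide), exits through the "left" side (reaching some point on diagonal $\le \ell$), or exits through the "right" side (reaching some point on diagonal $\ge r$). A counting argument on non-diagonal steps (each of which changes the diagonal by $\pm 1$) shows that the minimum cost among paths that exit on the left is exactly $a+b-2\ell$, since outside the band all diagonal edges are free; symmetrically for the right. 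Thus, for $a,b \in \fragment{\ell}{r}$,
\[
D_{M|_I}\position{a, b} = \min\bigl(D_M\position{a, b},\; a + b - 2\ell,\; 2r - a - b\bigr).
\]

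From this formula one reads off that $(a, b) \in L$ iff $d_L := a+b-2\ell$ satisfies $d_L \le d_R := 2r-a-b$ and $d_L \le D_M\position{a,b}$, and symmetrically for $R$; moreover, when $(a,b) \notin L \cup R$, all three candidates collapse to $D_{M|_I}\position{a, b} = D_M\position{a, b}$. Computing $\RP{M}{I}\position{a, b} = \onehalf D_{M|_I}^\square\position{a, b}$ then becomes a case analysis over the memberships of $(a, b), (a+1, b), (a, b+1), (a+1, b+1)$ in $L$ and $R$. In case~\ref{it:L}, the monotonicity of $L$ (established below) forces $(a, b) \in L$ as well; parity of $D_M$ (namely $D_M\position{a,b} \equiv a+b \pmod{2}$) together with the Monge inequality $P_M \ge 0$ then pin down $D_{M|_I}\position{a+1, b+1} = a+b-2\ell$, and a direct substitution gives $\onehalf D_{M|_I}^\square\position{a, b} = 1$. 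Case~\ref{it:R} is symmetric. In case~\ref{it:M}, the assumptions $(a,b)\notin L$ and $(a+1,b+1)\notin R$ propagate (via monotonicity) to all four corners lying outside the "dominant" region of $L$ and $R$, so $D_{M|_I}$ coincides with $D_M$ on the four corners and $\RP{M}{I}\position{a,b} = P_M\position{a,b} = 1$. Conversely, one verifies by the same bookkeeping that every other configuration yields $\onehalf D_{M|_I}^\square\position{a, b} = 0$.

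For the "moreover" part, I would use the Lipschitz bound $|D_M\position{a+1, b} - D_M\position{a, b}| \le 1$, which follows because $\lt_a^B$ and $\lt_{a+1}^B$ are adjacent in $\AG(M)$. Indeed, if $(a+1, b) \in L$, then $a+b \le (a+1)+b \le \ell+r$ and $D_M\position{a, b} \ge D_M\position{a+1, b} - 1 \ge (a+b+1-2\ell) - 1 = a+b-2\ell$, so $(a, b) \in L$; the argument for $(a, b+1) \in L \Rightarrow (a, b) \in L$ and for the symmetric statements about $R$ (using adjacency of $\br_b^B$ and $\br_{b-1}^B$) is analogous.

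The main obstacle will be the case analysis in the middle paragraph. Combining all possible memberships of the four corners in $L$ and $R$ produces many a~priori distinct cases, but parity combined with the Monge property---which forces $D_M$ to change by exactly $\pm 1$ across adjacent indices and thus pins down specific integer values once one corner value is fixed---collapses most of them. Getting this bookkeeping right, in particular verifying consistency at the "corner" $a+b=\ell+r-1$ (where cases~\ref{it:L} and~\ref{it:R} may simultaneously apply) and correctly handling the interplay when $L$ and $R$ overlap on the anti-diagonal $a+b=\ell+r$, will be the bulk of the work.
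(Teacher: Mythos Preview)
Your proposal is correct and follows essentially the same approach as the paper: establish the formula $D_{M|_I}\position{a,b}=\min\{D_M\position{a,b},\,a+b-2\ell,\,2r-a-b\}$ via a path argument, deduce the ``moreover'' monotonicity from the Lipschitz bound, and then do a case analysis on the four corners.

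One simplification worth noting: you plan to use parity and the Monge property to pin down $D_{M|_I}\position{a+1,b+1}$ exactly in case~\eqref{it:L}, but the paper avoids this. Since $\RP{M}{I}$ is already known to be a permutation matrix (entries in $\{0,1\}$), it suffices to show the density is \emph{strictly positive}; for that you only need the trivial bounds $D_{M|_I}\position{a,b}\le a+b-2\ell$ (from the $\min$) and $D_{M|_I}\position{a+1,b+1}< a+b+2-2\ell$ (from $(a+1,b+1)\notin L$), without ever computing the exact value. This also dissolves your worry about the anti-diagonal corner cases. Similarly, for the converse direction, the paper does not enumerate all remaining configurations; instead it starts from $\RP{M}{I}\position{a,b}=1$, reads off the exact values of the four corners of $D_{M|_I}$ (which are forced by the $\pm 1$ Lipschitz property), and then splits only on whether $(a,b)\in L$ and whether $(a+1,b+1)\in R$---three cases, each landing directly in one of~\eqref{it:L}--\eqref{it:M}.
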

\begin{proof}
    Let us first prove an auxiliary claim characterizing the \emph{restricted distance matrix} $\RD{M}{I}$.
    \begin{claim}\label{clm:rdm}
    For every $a,b\in \Z$, the restricted distance matrix satisfies
    \[\RD{M}{I}\position{a,b} =\begin{cases}
            \min\{ D_M\position{a,b}, a+b-2\ell,2r-a-b\} & \text{if }a,b\in \fragmentoo{\ell}{r},\\
        |a-b| & \text{otherwise}.\end{cases}\]
    \end{claim}
    \begin{claimproof}
            Consider the smallest bounding box $B$ of $M|_I$.
            The second case follows from the fact that $\spn(\RP{M}{I})\sub \spn(M|_I)\sub
            \fragmentoo{\ell}{r}$.
            As for the first case, note that $\AG(M)$ is a subgraph of $\AG(M|_I)$, so
            $\RD{M}{I}\position{a,b}\le D_M\position{a,b}$.
            Moreover, the triangle inequality yields
            \begin{align*}
                \RD{M}{I}\position{a,b} &\le
                \RD{M}{I}\position{a,\ell}+\RD{M}{I}\position{\ell,b}
                                        =|a-\ell|+|\ell-b|=a+b-2\ell\quad\text{and}\\
                \RD{M}{I}\position{a,b} &\le
                \RD{M}{I}\position{a,r}+\RD{M}{I}\position{r,b} =|a-r|+|r-b|=2r-a-b.
            \end{align*}
            Finally, observe that if a path from $\lt^B_a$ to $\br^B_b$ in $\AG(M|_I)$ does not reach
            any vertex $(x,y)$ with $x-y\notin I$, then path is also present in $\AG(M)$.
            Otherwise, the path must reach a vertex $(x,y)$ with $x-y\in \{\ell,r\}$, so
            its length is at least $\min\{|a-\ell|+|\ell-b|,
            |a-r|+|r-b|\} = \min\{a+b-2\ell,2r-a-b\}$.
    \end{claimproof}
    \Cref{clm:rdm} implies
    \begin{align*}
        L&=\{(a,b)\in \fragment{\ell}{r}^2 : \RD{M}{I}\position{a,b}=a+b-2\ell\}\quad    \text{and}\\
        R&=\{(a,b)\in \fragment{\ell}{r}^2 : \RD{M}{I}\position{a,b}=2r-a-b\}.
    \end{align*}
    By \cref{def:dmb,def:dmpm}, the neighboring entries of $\RD{M}{I}$ differ by at most one.
    Consequently, if $(a,b)\in \fragment{\ell}{r}^2\sm L$,
    then $\RD{M}{I}\position{a,b} < a+b-2\ell$, so
    $\RD{M}{I}\position{a+1,b} < a+b-2\ell+1=(a+1)+b-2\ell$
    and $\RD{M}{I}\position{a,b+1} < a+b-2\ell+1=a+(b+1)-2\ell$.
    Similarly, if $(a,b)\in \fragment{\ell}{r}^2\sm R$,
    then $\RD{M}{I}\position{a,b} < 2r-a-b$, so
    $\RD{M}{I}\position{a-1,b} < 2r-a-b+1 = 2r - (a-1)-b$
    and $\RD{M}{I}\position{a,b-1} < 2r-a-b+1 = 2r-a-(b-1)$.

    Thus, it remains to prove the statement characterizing $\RP{M}{I}$.
    In Case~\eqref{it:L}, we have
    \begin{align*}
        \RP{M}{I}\position{a,b} &= \RD{M}{I}\position{a+1,b}
        +\RD{M}{I}\position{a,b+1}-\RD{M}{I}\position{a,b}
        -\RD{M}{I}\position{a+1,b+1}\\
        &> (a+1+b-2\ell)+(a+b+1-2\ell)-(a+b-2\ell)-(a+1+b+1-2\ell)=0.
    \end{align*}
    In Case~\eqref{it:R}, we have
    \begin{align*}
        \RP{M}{I}\position{a,b} &=
        \RD{M}{I}\position{a+1,b}+\RD{M}{I}\position{a,b+1}
        -\RD{M}{I}\position{a,b}-\RD{M}{I}\position{a+1,b+1}\\
        &> (2r-a-1-b)+(2r-a-b-1)-(2r-a-b)-(2r-a-b-1)=0.
    \end{align*}
    In Case~\eqref{it:M}, we have
    \begin{align*}
        \RP{M}{I}\position{a,b} &= \RD{M}{I}\position{a+1,b}+\RD{M}{I}\position{a,b+1}
        -\RD{M}{I}\position{a,b}-\RD{M}{I}\position{a+1,b+1}\\
                                &= D_M\position{a+1,b}+D_M\position{a,b+1}
                                -D_M\position{a,b}-D_M\position{a+1,b+1}\\
                                &=P_M\position{a,b}=1.
    \end{align*}
    Thus, if at least of the three cases holds, then indeed $\RP{M}{I}\position{a,b}=1$.

    For a proof of the converse implication, suppose that $\RP{M}{I}\position{a,b}=1$ for some $a,b\in \fragmentco{\ell}{r}$.
    Since the neighboring entries of $\RD{M}{I}$ differ by at most one,
    the equality
    \[1=\RD{M}{I}\position{a+1,b}+\RD{M}{I}\position{a,b+1}-\RD{M}{I}\position{a,b}-\RD{M}{I}\position{a+1,b+1}\]
    implies $\RD{M}{I}\position{a+1,b+1}=\RD{M}{I}\position{a,b}$  and
    $\RD{M}{I}\position{a+1,b}=\RD{M}{I}\position{a,b+1} = \RD{M}{I}\position{a,b}+1$.
    If $(a,b)\in L$, then we have $(a+1,b),(a,b+1)\in L$ and $(a+1,b+1)\notin L$, that is, case~\eqref{it:L} holds.
    If $(a+1,b+1)\in R$, then we have $(a+1,b),(a,b+1)\in R$ and $(a,b)\notin R$, that is, case~\eqref{it:R} holds
    Finally, if $(a,b)\notin L$ and $(a+1,b+1)\notin R$, then $(a,b),(a+1,b),(a,b+1),(a+1,b+1)\notin L\cup R$, so
    \begin{align*}
        P_M\position{a,b} &=
        D_M\position{a+1,b}+D_M\position{a,b+1}-D_M\position{a,b}-D_M\position{a+1,b+1}\\
                          &=\RD{M}{I}\position{a+1,b}+\RD{M}{I}\position{a,b+1}-\RD{M}{I}\position{a,b}-\RD{M}{I}\position{a+1,b+1}\\
                          &=\RP{M}{I}\position{a,b}=1,
    \end{align*}
    that is, Case~\eqref{it:M} holds.
    \end{proof}

\begin{corollary}\label{cor:restrict}
Consider two finite sets $M,\hat{M}\sub \Z^2$ with $P_M = P_{\hat{M}}$.
For every integer interval $I\sub \Z$, we have $\RP{M}{I}=\RP{\hat{M}}{I}$.
\end{corollary}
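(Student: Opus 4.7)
The plan is to observe that the characterization of $\RP{M}{I}$ given by \cref{lem:rpm} depends on $M$ only through $P_M$, and then conclude that $P_M = P_{\hat{M}}$ forces $\RP{M}{I} = \RP{\hat{M}}{I}$.

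Specifically, \cref{lem:rpm} describes every entry of $\RP{M}{I}$ in terms of (a) the entries of $P_M$ itself and (b) the two sets
\[
L = \{(a,b)\in\fragment{\ell}{r}^2 : a+b\le\ell+r \text{ and } D_M\position{a,b}\ge a+b-2\ell\}
\]
and
\[
R = \{(a,b)\in\fragment{\ell}{r}^2 : a+b\ge\ell+r \text{ and } D_M\position{a,b}\ge 2r-a-b\}.
\]
So it suffices to show that $D_M$ itself is determined by $P_M$. This is exactly the content of the second part of \cref{lem:perm}: for all $a,b\in\Z$, we have
\[
D_M\position{a,b} = 2 P_M^\Sigma\position{a,b} + a - b.
\]
Thus, $P_M = P_{\hat{M}}$ implies $P_M^\Sigma = P_{\hat{M}}^\Sigma$, which in turn implies $D_M = D_{\hat{M}}$, which implies that the sets $L,R$ associated with $M$ and $\hat{M}$ coincide. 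Plugging this back into the three-case characterization of \cref{lem:rpm} yields $\RP{M}{I}\position{a,b} = \RP{\hat{M}}{I}\position{a,b}$ for all $a,b\in\ppl{I}$, and the entries of the two matrices outside $\ppl{I}\times\ppl{I}$ agree because both matrices are bounded permutation matrices with spans contained in~$I$.

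The only potential subtlety is that \cref{lem:rpm}'s cases~\eqref{it:L} and~\eqref{it:R} reference $L$ and $R$ at the corner $(a+1,b+1)$ and $(a,b)$, respectively, which might lie outside $\fragment{\ell}{r}^2$; one should verify that these membership queries are still well-defined (or vacuously false) and agree for $M$ and $\hat{M}$. This is routine given that both $L$ and $R$ are defined as subsets of $\fragment{\ell}{r}^2$ and therefore membership outside this square is always false. No substantial obstacle is expected, and the proof is a one-line consequence of \cref{lem:perm,lem:rpm}.
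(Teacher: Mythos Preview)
Your approach is essentially the same as the paper's for the main case: you invoke \cref{lem:rpm} to see that $\RP{M}{I}$ is determined by $P_M$ and $D_M$, and then \cref{lem:perm} to see that $D_M$ is determined by $P_M$. This is exactly what the paper does.

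There is, however, a genuine gap. The statement of the corollary allows $I$ to be an arbitrary integer interval, including an infinite one (e.g., $I=\Z$ or a half-line). But \cref{lem:rpm} is stated only for a finite interval $I=\fragmentoo{\ell}{r}$: the sets $L$ and $R$ in its statement are subsets of $\fragment{\ell}{r}^2$, which makes no sense when $\ell=-\infty$ or $r=+\infty$. Your proof silently assumes $I$ is finite and never treats the infinite case. The paper handles this with a one-line reduction: if $I$ is infinite, set $J\coloneqq\spn(M\cup\hat{M})\cap I$, which is finite, and observe that $M|_I=M|_J$ and $\hat{M}|_I=\hat{M}|_J$, so $\RP{M}{I}=\RP{M}{J}=\RP{\hat{M}}{J}=\RP{\hat{M}}{I}$ by the finite case.
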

\begin{proof}
If $I$ is a finite interval, then \cref{lem:rpm} provides a complete characterization of $\RP{M}{I}$ in terms of $D_M$ and $P_M$, whereas \cref{lem:perm} provides a complete characterization of $D_M$ in terms of $P_M$.
Consequently, $\RP{M}{I}=\RP{\hat{M}}{I}$ holds in this case.
If $I$ is an infinite interval, then we define $J=\spn(M\cup \hat{M})\cap I$
and observe that $M|_I=M|_J$ as well as $\hat{M}|_I = \hat{M}|_J$, so $\RP{M}{I}=\RP{M}{J}=\RP{\hat{M}}{J}=\RP{\hat{M}}{I}$.
\end{proof}

\cref{cor:restrict} combined with \cref{fct:representable} let us define the restriction operation
on bounded permutation matrices.
\begin{definition}\label{def:restr}
For a bounded permutation matrix $A$ and an interval $I$, we define the \emph{restriction of $A$ to $I$}, denoted $A|_I$, as $\RP{M}{I}$,
where $M\sub \Z^2$ is an arbitrary finite set such that $A=P_M$.
\lipicsEnd
\end{definition}

\begin{lemma}\label{fct:comm}
The following equalities hold for all bounded permutation matrices $A,B$, intervals $I,J\sub \Z$, and shifts $s\in \Z$:
\begin{enumerate}[(a)]
    \item $A|_I\boxdot B|_I = (A\boxdot B)|_I$;\label{it:distr}
    \item $(A|_I)|_J = A|_{I\cap J}$;\label{it:prod}
    \item $(A|_{I}) \downshift s = (A\downshift s)|_{I+s}$.\label{it:sh}
\end{enumerate}
\end{lemma}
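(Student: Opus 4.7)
The plan is to lift all three identities to the level of the underlying point-set representations. By \cref{fct:representable}, I can write $A = P_M$ and $B = P_N$ for some finite sets $M, N \sub \Z^2$, and by \cref{def:restr,cor:restrict} the restriction $A|_I = P_{M|_I}$ is well-defined, independently of the chosen representation. Crucially, by \cref{obs:shift}, translating $M$ along the main diagonal ($M \mapsto M+(u,u)$) does not affect $P_M$, giving me the freedom to relocate $M$ and $N$ independently without changing $A$ and $B$.

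Parts (b) and (c) are short. For (b), it suffices to observe that at the level of sets $(M|_I)|_J = M|_{I\cap J}$, so applying $P_{-}$ together with \cref{def:restr} yields $(A|_I)|_J = A|_{I\cap J}$. For (c), pick any $(u,v)\in \Z^2$ with $u-v=s$, so that \cref{obs:shift} gives $P_{M+(u,v)} = A\downshift s$. A direct unfolding shows $M|_I + (u,v) = (M+(u,v))|_{I+s}$: a point $(x',y')$ belongs to the left-hand side iff $(x'-u,y'-v)\in M$ and $(x'-u)-(y'-v)=(x'-y')-s\in I$, i.e., iff $(x',y')\in (M+(u,v))|_{I+s}$. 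Applying $P_{-}$ to this set identity and using \cref{obs:shift} again on both sides gives $(A|_I)\downshift s = (A\downshift s)|_{I+s}$.

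The substantive content is (a). I would first choose $u_1,u_2\in \Z$ so that, after replacing $M$ by $M' \coloneqq M+(u_1,u_1)$ and $N$ by $N' \coloneqq N+(u_2,u_2)$, the horizontal extent of $M'$ lies strictly to the left of the horizontal extent of $N'$ (concretely, by taking $u_2-u_1$ sufficiently large). Enlarging the vertical extents to a common interval $\fragmentco{J}{J'}$ covering both, I can pick horizontally adjacent bounding boxes $B_M = \fragmentco{I_M}{X}\times \fragmentco{J}{J'}$ and $B_N = \fragmentco{X}{I_N}\times \fragmentco{J}{J'}$ for $M'$ and $N'$. Because the translations are diagonal, $P_{M'}=A$ and $P_{N'}=B$, so \cref{lem:adjacent} yields $A\boxdot B = P_{M'\cup N'}$ and hence
\[
(A\boxdot B)|_I \;=\; P_{(M'\cup N')|_I}\;=\;P_{M'|_I \cup N'|_I},
\]
using the trivial set identity $(M'\cup N')|_I = M'|_I \cup N'|_I$. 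Conversely, $M'|_I\sub B_M$ and $N'|_I\sub B_N$ still inhabit the same pair of horizontally adjacent bounding boxes, so a second application of \cref{lem:adjacent} gives $A|_I \boxdot B|_I = P_{M'|_I}\boxdot P_{N'|_I} = P_{M'|_I \cup N'|_I}$, matching the displayed identity. The only delicate step in the entire argument is this bounding-box bookkeeping; once the diagonal-translation freedom is exploited to slide $M$ and $N$ past each other horizontally, the whole of (a) collapses to two invocations of \cref{lem:adjacent}.
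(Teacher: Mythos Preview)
Your proof is correct and follows essentially the same approach as the paper's: all three parts are reduced to set-level identities via \cref{fct:representable}, \cref{def:restr}, \cref{obs:shift}, and \cref{lem:adjacent}. The only cosmetic differences are that the paper chooses \emph{vertically} adjacent bounding boxes in part~(a) (rather than horizontally adjacent ones) and uses the specific shift $(s,0)$ in part~(c) (rather than an arbitrary $(u,v)$ with $u-v=s$); neither affects the substance of the argument.
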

\begin{proof}
\eqref{it:distr}
Let $M,\hat{M}\sub \Z^2$ be finite sets such that $A=P_M$ and $B=P_{\hat{M}}$ (these sets exist by \cref{fct:representable}). By \cref{def:dmb,def:dmpm}, we can shift $M$ and $\hat{M}$ along the diagonals
without influencing $P_M$ and $P_{\hat{M}}$. In particular, this lets us assume that $M$ and $\hat{M}$ admit vertically adjacent bounding boxes so that $A\boxdot B = P_{M\cup \hat{M}}$ holds by \cref{lem:adjacent}.
Since $M|_I$ and $\hat{M}_I$ also admit the same vertically adjacent bounding boxes,
\cref{lem:adjacent} also yields $A|_I\boxdot B|_I = P_{M|_I \cup \hat{M}|_I} = P_{(M\cup \hat{M})|_I}
= (A\boxdot B)|_I$.

\eqref{it:prod} Let $M\sub \Z^2$ be a finite set such that $A=P_M$.
Observe that $(M|_I)|_J=
\{(x,y)\in M|_I : x-y\in J\} = \{(x,y)\in M : x-y\in I\cap J\}= M|_{I\cap J}\}$.
By \cref{def:restr}, this implies $(A|_I)_J = A_{I\cap J}$.

\eqref{it:sh} Let $M\sub \Z^2$ be a finite set such that $A=P_M$.
Observe that $(M+(s,0))|_{I+s} = \{(x,y)\in M+(s,0) : x-y\in I+s\} = \{(x+s,y)\in M+(s,0) : x+s-y\in I+s\}
=\{(x+s,y) : (x,y)\in M \text{ and }x-y\in I\} = (M|_I) + (s,0)$.
Consequently, \cref{obs:shift} implies
$(A\downshift s)|_{I+s} = P_{(M+(s,0)) |_{I+s}} = P_{(M|_{I}) + (s,0)} = (A|_{I})\downshift s$.
\end{proof}

Next, we turn the combinatorial characterization of \cref{lem:rpm} into an efficient algorithm for restricting permutation matrices.
\begin{lemma}[{\tt Restrict}($\sigma$, $I$)]\label{lem:restrict}
Given a permutation matrix $A$ (represented by a permutation $\sigma$) and an integer interval $I$, the permutation matrix $A|_I$ (represented by a permutation $\sigma|_I$) can be constructed in $\Oh(|\spn(A)|)$ time.
\end{lemma}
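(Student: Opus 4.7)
The plan is to turn the combinatorial characterization of $\RP{M}{I}$ from \cref{lem:rpm} into an $\Oh(w)$-time algorithm, where $w\coloneqq|\spn(A)|$. Writing $I=\fragmentoo{\ell}{r}$ and extending $\sigma$ to be the identity outside $\ppl{\spn(A)}$, I first use \cref{lem:perm} to rewrite the inequalities defining $L$ and $R$ in terms of $P_M^\Sigma\position{a,b}=|\{i\ge a : \sigma(i)<b\}|$: explicitly, $(a,b)\in L$ iff $a+b\le\ell+r$ and $P_M^\Sigma\position{a,b}\ge b-\ell$, and $(a,b)\in R$ iff $a+b\ge\ell+r$ and $P_M^\Sigma\position{a,b}\ge r-a$. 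This form admits the simple recurrence $P_M^\Sigma\position{a-1,b}=P_M^\Sigma\position{a,b}+[\sigma(a-1)<b]$.

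By the second half of \cref{lem:rpm}, both $L$ and $R$ are order ideals in $\fragment{\ell}{r}^2$ (with opposite orientations), so each is bounded by a monotone staircase. The 1-entries of $\RP{M}{I}$ contributed by Cases~\eqref{it:L} and~\eqref{it:R} are then in bijection with the inner corners of the $L$- and $R$-staircases, respectively, while Case~\eqref{it:M} yields exactly those non-trivial entries $(a,\sigma(a))$ of $A$ for which $(a,\sigma(a))\notin L$ and $(a+1,\sigma(a)+1)\notin R$.

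The algorithm itself would therefore sweep $a$ from $r$ down to $\ell$ while maintaining a compact pointer/stack representation of the $L$-staircase; each step $a\rightsquigarrow a-1$ shifts only a contiguous piece of the staircase by one unit (determined by $\sigma(a-1)$), and since the staircase deviates from the identity only at positions where $\sigma$ is non-trivial and moves monotonically with $a$, standard amortized analysis yields $\Oh(w)$ total time. The symmetric sweep handles $R$. Finally, enumerating the inner corners of both staircases and scanning the $\Oh(w)$ non-trivial entries of $\sigma$ (checking Case~\eqref{it:M} against the precomputed staircases in $\Oh(1)$ time each) emits all 1-entries of $\RP{M}{I}$; identity entries outside $\spn(\RP{M}{I})$ complete $\sigma|_I$.

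The main technical obstacle will be implementing the staircase sweep so that each update is amortized $\Oh(1)$ despite both the incremental addition of a seaweed to $P_M^\Sigma$ and the drifting thresholds $b-\ell$ (for $L$) and $r-a$ (for $R$), and then arguing that the three groups of emitted 1-entries fit together into a genuine permutation on $\ppl{I}$. The combinatorial correctness is guaranteed by \cref{lem:perm} applied to $M|_I$, but distilling this into the algorithm requires a careful charging argument matching the inner corners of the $L$- and $R$-staircases to the seaweeds of $A$ that are rerouted by the band restriction.
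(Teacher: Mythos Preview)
Your conceptual setup is correct and matches the paper's: both start from \cref{lem:rpm}, recognize that $L$ and $R$ are staircase-shaped, and identify the Case~\eqref{it:L}/\eqref{it:R} entries of $\RP{M}{I}$ with the inner corners of those staircases. Where you diverge is in the implementation, and here the paper takes a much simpler route that sidesteps precisely the technical obstacle you flag.

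Rather than maintaining any staircase data structure or running an amortized sweep, the paper simply \emph{walks along the boundary} of $L$ as a single monotone lattice path from $(\ell,r)$ to $(r,\ell)$, carrying just one scalar $v=D_M\position{a,b}$. At each of the $2(r-\ell)$ steps it decides whether to move to $(a+1,b)$ or to $(a,b-1)$ by testing $(a+1,b)\in L$; this test reduces to comparing $v\pm1$ against $a+1+b-2\ell$, where the sign is determined by whether $\sigma(a)\ge b$ (for the $a$-step) or $\sigma^{-1}(b-1)<a$ (for the $b$-step). Whenever the walk is forced to take a $b$-step, an inner corner has been found and $\sigma|_I(a)\coloneqq b-1$ is recorded. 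A symmetric walk along the boundary of $R$ follows. This is $\Oh(w)$ worst-case with no stack and no charging argument.

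The paper also handles Case~\eqref{it:M} differently: it does not test each non-trivial $(a,\sigma(a))$ against the precomputed staircases, but simply \emph{initializes} $\sigma|_I\coloneqq\sigma$ on $\fragmentco{\ell}{r}$ and lets the two boundary walks overwrite those entries that fall under Cases~\eqref{it:L} or~\eqref{it:R}. This makes your ``three groups fit into a genuine permutation'' concern evaporate: correctness of each overwrite follows directly from \cref{lem:rpm}, and the end result is a permutation because $\RP{M}{I}$ is one by \cref{lem:perm}. Your sweep-and-stack plan could in principle be made to work, but the direct boundary walk is both easier to implement and easier to prove correct.
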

\begin{algorithm}[h]
    \SetKwBlock{Begin}{}{end}
    \SetKwFunction{Restrict}{Restrict}
    \Restrict{$\sigma$, $I$}\Begin{
        $\fragmentoo{\ell}{r} \gets I\cap \spn(A)$\;
        \lForEach{$d\in \fragmentco{\ell}{r}$}{$\sigma|_I(d) \coloneqq  \sigma(d)$}
        $(a,b)\coloneqq (\ell,r)$\;\label{ln:Lstart}
        $v\coloneqq  2|\{d\in \dom(\sigma): d\ge \ell \text{ and } \sigma(d)<r\}|+\ell-r$\;
        \While{$(a,b)\ne (r,\ell)$}{
            \If{$a+1+b \le \ell+r$ \KwSty{and} ($\sigma(a)\ge b$ \KwSty{or} $v-1 \ge a+1+b-2\ell$)}{\label{ln:Ltest}
                \lIf{$\sigma(a)\ge b$}{$v\coloneqq v+1$}
                \lElse{$v\coloneqq v-1$}
                $a\coloneqq a+1$\;
            }\Else{
                $\sigma|_I(a) \coloneqq  b-1$\;\label{ln:Lset}
                \lIf{$\sigma^{-1}_M(b-1)< a$}{$v\coloneqq v+1$}
                \lElse{$v\coloneqq v-1$}
                $b\coloneqq b-1$\;
            }
        }\label{ln:Lend}
        \While{$(a,b)\ne (\ell,r)$}{\label{ln:Rstart}
            \If{$a-1+b \ge \ell+r$ \KwSty{and} ($\sigma(a-1)<b$ \KwSty{or} $v-1 \ge 2r-a+1-b$)}{\label{ln:Rtest}
                \lIf{$\sigma(a-1)<b$}{$v\coloneqq v+1$}
                \lElse{$v\coloneqq v-1$}
                $a\coloneqq a-1$\;
            }\Else{
                $\sigma|_I(a-1) \coloneqq  b$\;\label{ln:Rset}
                \lIf{$\sigma^{-1}_M(b)\ge a$}{$v\coloneqq v+1$}
                \lElse{$v\coloneqq v-1$}
                $b\coloneqq b+1$\;
            }
        }\label{ln:Rend}

        \Return{$\sigma|_I$}\;
    }
    \caption{Restricting the seaweed matrix.}\label{alg:restrict}
\end{algorithm}
\begin{proof}
Write $\fragmentoo{\ell}{r}\coloneqq  I\cap \spn(A)$;
observe that $A|_I = A|_{\fragmentoo{\ell}{r}}$, so we construct $\sigma|_I$ using the characterization of \cref{lem:rpm}.
For each $a\in \fragmentco{\ell}{r}$, we initialize $\sigma|_I(a)\coloneqq \sigma(a)$; this covers case~\eqref{it:M},
so it suffices to assign the values $\sigma|_I(a)$ corresponding to cases~\eqref{it:L} and~\eqref{it:R}.
As for case~\eqref{it:L}, we traverse the \emph{boundary} of the set $L$ (that is, pairs $(a,b)\in L$ with $(a+1,b+1)\notin L$).
A symmetric traversal of the boundary of the set $R$ (that is, pairs $(a,b)\in L$ with $(a-1,b-1)\notin R$) covers case~\eqref{it:R}.
The algorithm implementing this strategy is presented as~\cref{alg:restrict}.

Let us focus on the traversal of the boundary of $L$. Observe that $(r,\ell),(\ell,r)\in L$
due to $D_M\position{r,\ell}\ge |r-\ell| = r+\ell-2\ell$ and $D_M\position{\ell,r}\ge |\ell-r| = \ell+r-2\ell$.
Consequently, by the last part of \cref{lem:rpm}, the boundary of $L$ forms a path from $(r,\ell)$
to $(\ell,r)$ with individual steps going either down or to the left.
In Lines~\ref{ln:Lstart}--\ref{ln:Lend}, we traverse this path while maintaining $v =
D_M\position{a,b}$.
This value is initialized to $D_M\position{\ell,r} = 2P_M^\Sigma\position{\ell,r}+\ell-r$ according to \cref{lem:perm}
and the definition of the distribution matrix.

After visiting $(a,b)\in L$, we proceed to $(a+1,b)$ or $(a,b-1)$
depending on whether $(a+1,b)\in L$ or not.
According to the definition of $L$, this condition is equivalent to $a+1+b\le \ell+r$
and $D_M\position{a+1,b}\ge a+1+b-2\ell$. The first part is verified explicitly,
whereas for the second one, we observe that $D_M\position{a+1,b}-D_M\position{a,b} =
1-2\sum_{b'\in \fragmentco{\ell}{b}}P_M\position{a,b'}$,
that is, $D_M\position{a+1,b}=v+1$ if $\sigma(a)\ge b$ and $D_M\position{a+1,b}=v-1$ otherwise.
In the former case, $v\ge a+b-2\ell$ yields $v+1\ge a+1+b-2\ell$,
whereas in the latter one, we explicitly check whether $v-1 \ge a+1+b-2\ell$.
If this test, implemented in \cref{ln:Ltest}, reveals $(a+1,b)\in L$,
we move to $(a+1,b)$ and update $v$ as described above.
Otherwise, we proceed to $(a,b-1)$.
In order to update $v$, we observe that $D_M\position{a,b-1}- D_M\position{a,b}= 1-2\sum_{a'\in
\fragmentco{a}{r}}P_M\position{a',b-1}$,
that is, $D_M\position{a,b-1}=v+1$ if $\sigma(b-1)< a$ and $D_M\position{a,b-1}=v-1$ otherwise.
In this case, we also know that $\sigma|_I(a)$ is covered by case~\eqref{it:L} of \cref{lem:rpm}.
Thus, we set $\sigma|_I(a)=b-1$ in \cref{ln:Lset}; this value is either assigned correctly (if $(a+1,b-1)\in L$)
or overwritten in the next step (otherwise).

Lines~\ref{ln:Rstart}--\ref{ln:Rend} implement traversal of the boundary of $R$ and cover case~\eqref{it:R} of \cref{lem:rpm}. Observe that $(\ell,r),(r,\ell)\in R$
due to $D_M\position{\ell,r}\ge |\ell-r| = 2r-\ell-r$ and $D_M\position{r,\ell}\ge |r-\ell| = 2r-\ell-r$.
Consequently, by the last part of \cref{lem:rpm},
the boundary of $R$ forms a path from $(\ell,r)$ to $(r,\ell)$ with individual steps going either up or to the right.
We traverse this path maintaining $v=D_M\position{a,b}$.
Given that the traversal of the boundary of $L$ terminated at $(a,b)=(\ell,r)$, we do not need to initialize $v$.

After visiting $(a,b)\in R$, we proceed to $(a-1,b)$ or $(a,b+1)$
depending on whether $(a-1,b)\in R$ or not.
According to the definition of $R$, this condition is equivalent to $a-1+b\ge \ell+r$
and $D_M\position{a-1,b}\ge 2r-a+1-b$. The first part is verified explicitly,
whereas for the second one, we observe that \[
    D_M\position{a-1,b}-D_M\position{a,b} = 2\sum_{b'\in
    \fragmentco{\ell}{b}}P_M\position{a-1,b'}-1,
\]
that is, $D_M\position{a-1,b}=v+1$ if $\sigma(a-1)<b$ and $D_M\position{a-1,b}=v-1$ otherwise.
In the former case, $v\ge 2r-a-b$ yields $v+1\ge 2r-a+1-b$,
whereas in the latter one, we explicitly check whether $v-1 \ge 2r-a+1-b$.
If this test, implemented in \cref{ln:Rtest}, reveals $(a-1,b)\in R$,
we move to $(a-1,b)$ and update $v$ as described above.
Otherwise, we proceed to $(a,b+1)$.
In order to update $v$, we observe that $D_M\position{a,b+1}-D_M\position{a,b} =
2\sum_{a'\in \fragmentco{a}{r}}P_M\position{a',b}-1$,
that is, $D_M\position{a,b+1}=v+1$ if $\sigma(b)\ge a$ and $D_M\position{a,b+1}=v-1$ otherwise.
In this case, we also know that $\sigma|_I(a-1)$ is covered by case~\eqref{it:R} of \cref{lem:rpm}.
Thus, we set $\sigma|_I(a-1)=b$ in \cref{ln:Rset}; this value is either assigned correctly (if $(a-1,b+1)\in R$)
or overwritten in the next step (otherwise).

The running time of the algorithm is $\Oh(|\spn(A)|)$; this includes building the permutation $\sigma^{-1}_M$.
\end{proof}

\subsection{Computing the Seaweed Matrix}

\begin{lemma}\label{lem:psiphi}
    Consider a finite set $M\sub \Z^2$. For every $(x,y)\in \Z$, there are unique thresholds $\phi(x,y),\psi(x,y)\in \Z$ such that the following holds for any bounding box $B\in \fragment{i}{i'}\times \fragment{j}{j'}$ of $M$:
    \begin{align*}
        \dist_M(\lt^B_d,(x+1,y))-\dist_M(\lt^B_d,(x,y))&=\begin{cases} 1 &\text{if } d \le \psi(x,y), \\
        -1 & \text{otherwise.}\end{cases}\\
        \dist_M(\lt^B_d,(x,y+1))-\dist_M(\lt^B_d,(x,y))&=\begin{cases} -1 &\text{if } d \le \phi(x,y), \\
        1 & \text{otherwise.}\end{cases}
    \end{align*}
    Moreover,
    \begin{itemize}
        \item If $(x,y)\in M$, then $\phi(x+1,y)=\min\{\psi(x,y),\phi(x,y)\}$ and $\psi(x,y+1)=\max(\psi(x,y),\phi(x,y))$.
    \item Otherwise, $\phi(x+1,y)=\psi(x,y)$ and $\psi(x,y+1)=\phi(x,y)$.
    \end{itemize}
\end{lemma}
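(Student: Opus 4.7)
The plan is to first establish existence and uniqueness of the thresholds, then their independence from the bounding box, and finally the recurrences by direct case analysis. Write
\[
\delta^\psi_B(d)\coloneqq \dist_M(\lt^B_d,(x+1,y))-\dist_M(\lt^B_d,(x,y))
\quad\text{and}\quad
\delta^\phi_B(d)\coloneqq \dist_M(\lt^B_d,(x,y+1))-\dist_M(\lt^B_d,(x,y)).
\]
A case-by-case inspection of \cref{lem:dist} yields the parity identity $\dist_M(p,p')\equiv(p_x-p'_x)+(p_y-p'_y)\pmod 2$, so both differences are odd; combined with the triangle inequality along the weight-$1$ edge between the two targets, they must lie in $\{-1,+1\}$. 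The key step is then to show that $\delta^\psi_B$ is non-increasing in $d$. I would argue by contradiction: assume $d<d'$ with $\delta^\psi_B(d)=-1$ and $\delta^\psi_B(d')=+1$, and fix shortest paths $\pi_1\colon\lt^B_d\to(x+1,y)$ and $\pi_2\colon\lt^B_{d'}\to(x,y)$ in $\AG(M)$. By standard rectification (each non-monotone step can be straightened at no extra cost, since horizontal and vertical edges cost $1$ while diagonal edges cost $0$), both paths may be taken monotone south-east. Since $\lt^B_d$ strictly precedes $\lt^B_{d'}$ in the natural ordering of the left-top boundary of $B$ while $(x,y)$ lies strictly west of $(x+1,y)$, the endpoint pairing is crossing and a grid Jordan-curve argument forces $\pi_1$ and $\pi_2$ to share a vertex $v$. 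Swapping suffixes at $v$ produces paths $\lt^B_d\to(x,y)$ and $\lt^B_{d'}\to(x+1,y)$ of total weight at most $|\pi_1|+|\pi_2|$, yielding the Monge-like inequality
\[
\dist_M(\lt^B_d,(x,y))+\dist_M(\lt^B_{d'},(x+1,y)) \le \dist_M(\lt^B_d,(x+1,y))+\dist_M(\lt^B_{d'},(x,y))
\]
that contradicts the assumed signs. This yields the unique threshold $\psi(x,y)$, and the argument for $\phi(x,y)$ is symmetric.

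For independence from the bounding box, given $B\subseteq B'$ and $d$ in $B$'s range, I would observe that outside $B$ the alignment graph contains no blocked diagonal, so a monotone shortest path from $\lt^{B'}_d$ into $B$ may be slid along its initial diagonal to hit the boundary of $B$ precisely at $\lt^B_d$. This gives $\dist_M(\lt^{B'}_d,q)=\dist_M(\lt^{B'}_d,\lt^B_d)+\dist_M(\lt^B_d,q)$ for $q\in\{(x,y),(x+1,y)\}$, and subtraction yields $\delta^\psi_{B'}(d)=\delta^\psi_B(d)$; hence the threshold does not depend on the chosen bounding box.

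For the recurrences, fix $s=\lt^B_d$; the last edge of any shortest path to $(x+1,y+1)$ is horizontal, vertical, or (only if $(x,y)\notin M$) diagonal, so
\begin{align*}
\dist_M(s,(x+1,y+1)) &= \min\bigl\{\dist_M(s,(x,y)),\ \dist_M(s,(x+1,y))+1,\ \dist_M(s,(x,y+1))+1\bigr\} &&\text{if }(x,y)\notin M,\\
\dist_M(s,(x+1,y+1)) &= \min\bigl\{\dist_M(s,(x+1,y))+1,\ \dist_M(s,(x,y+1))+1\bigr\} &&\text{if }(x,y)\in M.
\end{align*}
Splitting $d$ into the four regions induced by $\phi(x,y)$ and $\psi(x,y)$ and substituting the known relative values of $\dist_M(s,(x+1,y))$ and $\dist_M(s,(x,y+1))$ against $\dist_M(s,(x,y))$, the parity constraint from the first paragraph uniquely determines $\dist_M(s,(x+1,y+1))$; reading off $\dist_M(s,(x+1,y+1))-\dist_M(s,(x+1,y))$ and $\dist_M(s,(x+1,y+1))-\dist_M(s,(x,y+1))$ in each region yields precisely the claimed formulas for $\phi(x+1,y)$ and $\psi(x,y+1)$. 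The main technical hurdle is the monotonicity step in the first paragraph: while intuitive, making the non-crossing argument rigorous requires the reduction to monotone shortest paths together with a careful grid Jordan-curve argument to conclude that two monotone paths with crossing endpoint pairing must share a vertex.
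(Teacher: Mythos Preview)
Your proposal is correct and takes a genuinely different route from the paper. The paper proves the lemma by \emph{induction on $x+y$}: it first computes $\psi$ and $\phi$ explicitly at points $(x,y)$ with $x<i$ or $y<j$ (where all distances are Manhattan, giving $\psi(x,y)=x-y$ and $\phi(x,y)=x-y-1$), and then uses precisely the four-region case split you describe to propagate the thresholds forward, establishing their existence and the recurrence simultaneously. There is no separate Monge step and no separate independence step; both are absorbed into the induction.

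Your approach instead separates the three ingredients. The parity-plus-triangle observation cleanly pins the differences to $\{-1,+1\}$, and the exchange argument gives monotonicity directly, which is conceptually appealing because it exposes the Monge structure that the paper leaves implicit. Two small points worth noting: first, the monotone-path reduction only applies when $\lt^B_d$ lies northwest of the target; outside that range the distances are pure Manhattan and one checks directly that $\delta^\psi$ is $+1$ for very negative $d$ and $-1$ for very large $d$ (this also shows finiteness of the thresholds, which you should state). Second, your Bellman-style recurrence for $\dist_M(s,(x+1,y+1))$ is correct for $s=\lt^B_d$ but is not the full neighbor minimum; it works because the southeast neighbors can never undercut (they are within $1$ of $\dist_M(s,(x+1,y+1))$ by the triangle inequality), and in fact when $(x,y)\notin M$ the bidirectional zero-weight diagonal already forces $\dist_M(s,(x+1,y+1))=\dist_M(s,(x,y))$, which is exactly how the paper argues that case. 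The paper's inductive proof is more elementary and avoids the topological crossing lemma altogether; your proof is more structural and makes clearer \emph{why} a threshold exists, at the cost of the Jordan-type argument you flag.
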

\begin{proof}
Let us first consider points $(x,y)\in \Z^2$ with $x<i$ or $y<j$.
In this case, we have $\dist_M(\lt^B_d,(x,y))=\dist_M(\lt^B_d,(x+1,y+1))=|x-y-d|$, $\dist_M(\lt^B_d,(x+1,y))=|x+1-y-d|$, $\dist_M(\lt^B_d,(x,y+1))=|x-y-1-d|$. Consequently, $\psi(x,y)=\phi(x+1,y)=x-y$ and $\phi(x,y)=\psi(x,y+1)=x-y-1$.

For the remaining points, we proceed by induction on $x+y$.
Suppose that the existence of unique values $\psi(x,y)$ and $\phi(x,y)$ has already been established.
If $(x,y)\notin M$, then $\dist_M((x,y),(x+1,y+1))=0$. We consider four cases depending on whether $d\le \psi(x,y)$ and $d\le \phi(x,y)$.
\begin{itemize}
\item If $d\le \psi(x,y)$ and $d\le \phi(x,y)$, then
    \begin{align*}
        \dist_M(\lt^B_d,(x,y))&=\dist_M(\lt^B_d,(x+1,y+1))=\dist_M(\lt^B_d,(x+1,y))-1\\
                              &=\dist_M(\lt^B_d,(x,y+1))+1.
    \end{align*}
\item If $d\le \psi(x,y)$ and $d> \phi(x,y)$, then
    \begin{align*}
        \dist_M(\lt^B_d,(x,y))&=\dist_M(\lt^B_d,(x+1,y+1))=\dist_M(\lt^B_d,(x+1,y))-1\\
    &=\dist_M(\lt^B_d,(x,y+1))-1.
    \end{align*}
\item If $d> \psi(x,y)$ and $d\le \phi(x,y)$, then
    \begin{align*}
        \dist_M(\lt^B_d,(x,y)) &= \dist_M(\lt^B_d,(x+1,y+1)) =
        \dist_M(\lt^B_d,(x+1,y))+1\\
                               &=\dist_M(\lt^B_d,(x,y+1))+1.
    \end{align*}
\item If $d> \psi(x,y)$ and $d> \phi(x,y)$, then
    \begin{align*}
        \dist_M(\lt^B_d,(x,y)) &= \dist_M(\lt^B_d,(x+1,y+1))=\dist_M(\lt^B_d,(x+1,y))+1\\
                               &=\dist_M(\lt^B_d,(x,y+1))-1.
    \end{align*}
\end{itemize}
Based on this case analysis, we see that $\phi(x+1,y)=\psi(x,y)$
and $\psi(x,y+1)=\phi(x,y)$ are unique and well-defined.

If $(x,y)\in M$, then \cref{lem:dist} implies \[
    \dist_M(\lt^B_d,(x+1,y+1))=1+\min\{\dist_M(\lt^B_d,(x+1,y)),\allowbreak
    \dist_M(\lt^B_d,(x,y+1))\}.
\] We consider the same four cases as previously:
\begin{itemize}
    \item If $d\le \psi(x,y)$ and $d\le \phi(x,y)$, then \[
        \dist_M(\lt^B_d,(x,y))=\dist_M(\lt^B_d,(x+1,y))-1=\dist_M(\lt^B_d,(x,y+1))+1,\]
        so \[\dist_M(\lt^B_d,(x+1,y+1))=\dist_M(\lt^B_d,(x,y)).\]
    \item If $d\le \psi(x,y)$ and $d> \phi(x,y)$, then
        \[\dist_M(\lt^B_d,(x,y))=\dist_M(\lt^B_d,(x+1,y))-1=\dist_M(\lt^B_d,(x,y+1))-1,\] so
        \[\dist_M(\lt^B_d,(x+1,y+1))=\dist_M(\lt^B_d,(x,y))+2.\]
    \item If $d> \psi(x,y)$ and $d\le \phi(x,y)$, then
        \[\dist_M(\lt^B_d,(x,y))=\dist_M(\lt^B_d,(x+1,y))+1=\dist_M(\lt^B_d,(x,y+1))+1,\]
        so \[\dist_M(\lt^B_d,(x+1,y+1))=\dist_M(\lt^B_d,(x,y)).\]
    \item If $d> \psi(x,y)$ and $d> \phi(x,y)$, then
        \[\dist_M(\lt^B_d,(x,y))=\dist_M(\lt^B_d,(x+1,y))+1=\dist_M(\lt^B_d,(x,y+1))-1,\] so
        \[\dist_M(\lt^B_d,(x+1,y+1))=\dist_M(\lt^B_d,(x,y)).\]
\end{itemize}
Based on this case analysis, we conclude that $\phi(x+1,y)=\min\{\psi(x,y),\phi(x,y)\}$ and $\psi(x,y+1)=\max(\psi(x,y),\phi(x,y))$ are unique and well-defined.
\end{proof}

\SetKwFunction{Seaweed}{Seaweed}
\SetKwFunction{suc}{successor}
\SetKwFunction{push}{insert}
\SetKwFunction{swap}{swap}
\SetKwFunction{extractMin}{extractMin}
\newcommand{\Que}{\mathcal{Q}}

\begin{lemma}[{\tt Seaweed}($I$,{\tt successor}($M,\star,\star$))]\label{prop:seaweed}
Consider a finite set $M\sub \Z^2$.
Suppose that we are given an interval $I\supseteq \spn(M)$ and function
$\suc(M,\star,\star)$ that,
given $d\in I$ and $y\in \{-\infty\}\cup \Z$, in $\Oh(1)$ time returns $\min\{y'\ge y : (y'+d,y')\in M\}$,
where $\min\emptyset=\infty$.
Then, a permutation $\sigma_M : \ppl{I} \to \ppl{I}$ representing $P_M$ can be constructed in $\Oh(|I|^2\log\log |I|)$ time.
\end{lemma}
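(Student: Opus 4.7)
The plan is to compute the functions $\phi$ and $\psi$ of \cref{lem:psiphi} by sweeping through a bounding box of $M$ and then to read off $\sigma_M$ from the values of $\phi,\psi$ on the boundary.

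First, I would fix a bounding box $B=\fragment{i}{i'}\times \fragment{j}{j'}$ of $M$ whose left-top and bottom-right boundaries are indexed exactly by $I$ (padding with empty rows/columns if necessary). By \cref{lem:psiphi}, for points $(x,y)$ with $x<i$ or $y<j$, one has $\phi(x,y)=d-1$ and $\psi(x,y)=d$ where $d=x-y$; these values initialize $\phi$ on the west boundary of $B$ and $\psi$ on the south boundary.

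Next, I would process the vertices of $B$ in any order that respects the dependency induced by \cref{lem:psiphi}, i.e., so that each vertex $(x,y)$ is processed after its west and south neighbors; for concreteness, sweep row by row from bottom to top and left to right within each row. At each vertex $(x,y)$ the two output values $\phi(x{+}1,y)$ and $\psi(x,y{+}1)$ are determined from $\phi(x,y),\psi(x,y)$ and the bit $[(x,y)\in M]$ via the recurrences of \cref{lem:psiphi}: they simply swap when $(x,y)\notin M$ and take the min/max otherwise. To avoid touching the $\Theta(|I|^2)$ non-$M$ vertices one-by-one via naive lookups, I would maintain the current "row" of $\psi$-values (indexed by column $x$) and the current "column" of $\phi$-values (indexed by row $y$) in a y-fast trie over the ground set $I$, supporting the swap/update needed for each event in $\Oh(\log\log|I|)$ time. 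The oracle $\suc(M,\cdot,\cdot)$ is used to enumerate the $M$-points of the current row in order of increasing column in amortized $\Oh(1)$ time per point; between consecutive $M$-points within the row the trivial swap rule applies, and its net effect on a contiguous block of non-$M$ vertices can be summarized by a single $\Oh(\log\log|I|)$-time "swap-range" update in the y-fast trie. Summed over all rows, this gives $\Oh(|I|^2\log\log|I|)$ time.

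Finally, after the sweep, the values of $\psi$ on the north boundary $y=j'$ and of $\phi$ on the east boundary $x=i'$ give, via \cref{lem:psiphi} and the definition of $D_M$ in \cref{def:dmb,def:dmpm}, the complete distance matrix $D_M$ along the bottom-right boundary of $B$. Applying the $\frac{1}{2}\square$ operation on two consecutive boundary rows/columns yields the non-trivial entries of $P_M$, and in particular the permutation $\sigma_M:\ppl{I}\to\ppl{I}$ representing $P_M$; this postprocessing is linear in $|I|$. Correctness follows from \cref{lem:psiphi} by a straightforward induction on $x+y$ and \cref{lem:perm,fct:out} to confirm that the resulting object is indeed a permutation with support contained in $I$.

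The main obstacle is the data structure that lets us skip over non-$M$ regions without paying $\Omega(1)$ per cell: one needs to argue that the "net effect" of a run of non-$M$ vertices on the maintained $\phi,\psi$ structures is succinctly computable and can be applied in $\Oh(\log\log|I|)$ time, despite the fact that a single non-$M$ swap already permutes the ground set. I expect this to go through because the non-$M$ rule is a fixed involution at each cell, so its action on a block of $k$ consecutive non-$M$ cells is an involution of known form depending only on the parity of $k$, which a y-fast trie over $I$ can apply in $\Oh(\log\log|I|)$ time.
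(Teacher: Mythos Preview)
Your approach has a fundamental gap: a bounding box of $M$ need not have $\Oh(|I|^2)$ cells, and indeed such a box may not exist at all. The hypothesis $\spn(M)\subseteq I$ only bounds the \emph{diagonals} $\{x-y:(x,y)\in M\}$, not the absolute coordinates; e.g., $M=\{(0,0),(N,N)\}$ has $\spn(M)=\{0\}$ yet every bounding box contains $\Omega(N^2)$ cells, and $|M|$ itself can exceed $|I|^2$ arbitrarily. Hence neither a row-by-row sweep nor an enumeration of all $M$-points can meet the $\Oh(|I|^2\log\log|I|)$ bound. This is not a corner case: in \cref{lem:prepr} the lemma is invoked precisely on ``perfect'' string fragments that are much longer than $|I|$. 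Two further symptoms of the same issue: the oracle $\suc(M,d,y)$ is keyed by diagonal, not by row, so ``enumerate the $M$-points of the current row'' is not what it provides; and your description of the non-$M$ rule as an ``involution depending on parity'' is off---in diagonal coordinates the rule $\phi(x{+}1,y)=\psi(x,y)$, $\psi(x,y{+}1)=\phi(x,y)$ simply propagates each value along its diagonal, so a block of non-$M$ cells is a \emph{no-op} on the diagonal-indexed array, independently of its length.

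The paper exploits exactly that observation. It maintains a single array $\sigma$ indexed by diagonal $d$, with the invariant that (on the current frontier) $\sigma(d)=\psi(y{+}d,y)$. The recurrence of \cref{lem:psiphi} then says that $\sigma$ changes only when $(y{+}d,y)\in M$ \emph{and} $\sigma(d{-}1)<\sigma(d)$, in which case these two entries are swapped. Every swap strictly increases the inversion count of $\sigma$, so there are at most $\binom{|I|+1}{2}=\Oh(|I|^2)$ swaps in total, regardless of $|M|$ or the box dimensions. A priority queue keyed by $(y,d)$ and fed by $\suc$ lets the algorithm jump directly from one candidate event to the next; after each actual swap only the two neighboring diagonals $d{\pm}1$ are re-inserted, so the queue sees $\Oh(|I|^2)$ events overall. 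This swap-count argument is the missing idea that decouples the running time from both the bounding-box size and $|M|$.
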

\begin{algorithm}[t]
    \SetKwBlock{Begin}{}{end}
    $\Seaweed(I,\suc(M,\star,\star))$\Begin{
        $\fragmentoo{\ell}{r} \coloneqq  I$\;
        \lForEach{$d\in \fragmentco{\ell}{r}$}{$\sigma(d)\coloneqq d$}\label{ln:inisigma}
        \lForEach{$d\in \fragmentoo{\ell}{r}$}{$\Que.\push(\suc(M,d, -\infty),d)$}\label{ln:inique}
        \While{$\Que$ not empty}{
            $(y,d)\coloneqq  \Que.\extractMin()$\;
            \If{$y\ne \infty$ \KwSty{and} $\sigma(d-1) < \sigma(d)$}{
                $\swap(\sigma(d-1),\sigma(d))$\;
                \lIf{$d-1\in \fragmentoo{\ell}{r}$}{$\Que.\push(\suc(M,d-1, y+1),d-1)$}
                \lIf{$d+1\in \fragmentoo{\ell}{r}$}{$\Que.\push(\suc(M,d+1, y),d+1)$}
            }
        }
        \Return{$\sigma^{-1}$}\;
    }
    \caption{Constructing the seaweed matrix.}\label{alg:seaweed}
\end{algorithm}
\begin{proof}
Write $I\coloneqq\fragmentoo{\ell}{r}$.
Our solution, presented as \cref{alg:seaweed} maintains a permutation $\sigma : \ppl{I} \to \ppl{I}$
and implicitly iterates over all pairs $(y,d)\in \Z\times \fragment{\ell-1}{r}$ in the lexicographic order.
The main invariant is that, after processing $(y,d)$,
the permutation $\sigma$ is as follows:
\[\sigma(d') = \begin{cases}
\psi(y+1+d',y+1)& \text{for }d' \in \fragmentco{\ell}{d},\\
\phi(y+1+d,y) & \text{for }d' = d,\\
\psi(y+d', y) & \text{for }d' \in \fragmentoo{d}{r}.
\end{cases}
\]
Additionally, the algorithm maintains a priority queue $\Que$ that stores events
$(y',d')\in \Z\times \fragmentoo{\ell}{r}$ satisfying the following three invariants
for every $d'\in \fragmentoo{\ell}{r}$:
\begin{enumerate}[(1)]
    \item If $(y',d')\in \Que$ with $y'\ne \infty$, then $(y',d')\succ_{lex} (y,d)$ and $(y'+d',y')\in M$.
    \item If $\sigma(d'-1)<\sigma(d')$ and $d'\le d$, then $(\suc(M,d', y+1),d')\in \Que$.
    \item If $\sigma(d'-1)<\sigma(d')$ and $d'>d$, then $(\suc(M,d', y),d')\in \Que$.
\end{enumerate}

In the initialization phase (Lines~\ref{ln:inisigma}--\ref{ln:inique}), we set
$\sigma(d)=d$ for $d\in \ppl{I}$ and $\Que = \{(\suc(M,d,-\infty),d) : d\in
\fragmentoo{\ell}{r}\}$.
The processing of $(y,d)$ handled implicitly if $(y,d)\notin \Que$.
Otherwise, we extract $(y,d)$ from $\Que$ and check whether $\sigma(d-1)<\sigma(d)$.
If so, we swap $\sigma(d-1)$ with $\sigma(d)$ and insert to $\Que$ pairs $(\suc(M,d-1,
y+1),d-1)$ (provided that $d-1\in \fragmentoo{\ell}{r}$) and $(\suc(M,d-1, y),d+1)$
(provided that $d+1\in \fragmentoo{\ell}{r}$).
After processing all $(y,d)\in \Z\times \fragment{\ell-1}{r}$, the queue $\Que$ may still
contain entries of the form $\{\infty\}\times \fragmentoo{\ell}{r}$, which are extracted one by one until $\Que$ is empty.

Let us first prove that the invariants are satisfied after the initialization.
For this, we select $d=\ell-1$ and an arbitrary value $y$ with $M\sub \Z\times \fragmentco{y}{\infty}$.
Since $\psi(y+d',d')=d'$ holds for all $d'\in \ppl{I}$, the permutation $\sigma$ satisfies the invariant.
As for the invariants regarding $\Que$, note that $\Que = \{(\suc(M,d',y),d') : d'\in
\fragmentoo{\ell}{r}\}$.
In particular, $\Que$ contains all the required pairs. Moreover, if $(y',d')\in \Que$ with $y'\ne \infty$,
then $y'\ge y$ (and thus $(y',d')\succ_{lex} (y,d)$) and $(y'+d',y')\in M$ (by definition
of $\suc(M,\star,\star)$).

Next, consider processing $(y,d)\in \fragment{\ell-1}{r}$.
If $d=\ell-1$, the invariants for $(y,\ell-1)$ are exactly the same as the invariants for $(y-1,r)$.
If $d=\ell$, then the only difference between the invariants for $(y,\ell-1)$ and $(y,\ell)$
is that $\sigma(\ell)$ needs to be updated from $\psi(y+\ell,y)$ to $\phi(y+1+\ell,y)$.
However, due to $(y+\ell,y)\notin M$, \cref{lem:psiphi} guarantees $\phi(y+1+\ell,y)=\psi(y+\ell,y)$.
If $d=r$, then the only difference between the invariants for $(y,r-1)$ and $(y,r)$
is that $\sigma(r-1)$ needs to be updated from $\phi(y+r,y)$ to $\psi(y+r,y+1)$.
However, due to $(y+r,y)\notin M$, \cref{lem:psiphi} guarantees $\psi(y+r,y+1)=\phi(y+r,y)$.
Hence, in each of the above three cases, the void implementation (guaranteed by $\Que \sub
\Z\times \fragmentoo{\ell}{r}$) is correct.

In the main case of $d\in \fragmentoo{\ell}{r}$, the values $\sigma(d-1),\sigma(d)$
should be updated from $\phi(y+d,y),\psi(y+d,y)$
to $\psi(y+d,y+1),\phi(y+d+1,y)$.
If $(y+d,y)\notin M$, then $(y+d,y)\notin \Que$ and our implementation is void.
This is correct because \cref{lem:psiphi} yields $\psi(y+d,y+1)=\phi(y+d,y)$ and
$\phi(y+d+1,y)=\psi(y+d,y)$.
If $(y+d,y)\in M$ and $\sigma(d-1)<\sigma(d)$,
then the invariant on $\Que$ guarantees $(y+d,y)\in \Que$.
Hence, our algorithm swaps $\sigma(d-1)$ with $\sigma(d)$.
This is correct because \cref{lem:psiphi} yields $\psi(y+d,y+1)=\max(\psi(y+d,y),\phi(y+d,y))=\psi(y+d,y)$ and
$\phi(y+d+1,y)=\min\{\psi(y+d,y),\phi(y+d,y)\}=\phi(y+d,y)$.
If  $(y+d,y)\in M$ and $\sigma(d-1)\ge \sigma(d)$,
the queue $\Que$ may contain $(y+d,y)$ or not.
In both scenarios, our algorithm keeps $\sigma(d-1)$ and $\sigma(d)$ intact
This is correct because \cref{lem:psiphi} yields $\psi(y+d,y+1)=\max(\psi(y+d,y),\phi(y+d,y))=\phi(y+d,y)$ and
$\phi(y+d+1,y)=\min\{\psi(y+d,y),\phi(y+d,y)\}=\psi(y+d,y)$.

Next, we shall prove that the invariants regarding $\Que$ remain satisfied.
Since $(y,d)$ is removed from $\Que$, all the remaining elements $(y',d')$ satisfy $(y',d')\succ_{lex} (y,d)$
and $(y',d')\in M$.
By definition of $\suc(M,\star,\star)$, this is also true for the newly inserted elements, if any.
Next, consider $d'\in \fragmentoo{\ell}{r}$ with $\sigma(d'-1)<\sigma(d)$.
If $d'\in \fragmentoo{\ell}{d-1}$, the entries $\sigma(d'-1)$ and $\sigma(d')$
were kept intact, so $(\suc(M,d',y+1),d')$ is still guaranteed to be contained in $\Que$.
The same is true for $d'=d-1$ if we did not swap $\sigma(d-1)$ with $\sigma(d)$.
If we did, however, then $(\suc(M,d-1,y+1),d-1)$ is contained in $\Que$ because it was inserted explicitly.
Symmetrically, for $d'\in \fragmentoo{d+1}{r}$, the entries $\sigma(d'-1)$ and $\sigma(d')$
were kept intact, so $(\suc(M,d',y),d')$ is still guaranteed to be contained in $\Que$.
The same is true for $d'=d+1$ if we did not swap $\sigma(d-1)$ with $\sigma(d)$.
If we did, however, then $(\suc(M,d+1,y),d)$ is contained in $\Que$ because it was inserted explicitly.
It remains to consider $d'=d$.
Due to $\psi(y+d,y+1)<\phi(y+d,y)$, \cref{lem:psiphi} guarantees $(y,y+d)\notin M$.
Consequently, $(\suc(M,d,y),d)=(\suc(M,d,y+1),d)$ is still guaranteed to be contained in $\Que$.

We conclude the correctness analysis by considering $d=r$ and an arbitrary value $y$ with $M\cap \Z\times \fragmentoc{-\infty}{y}$
The invariants show that $\sigma(d')=\psi(y+d',d')$ for $d'\in \ppl{I}$
and that $\Que$ does not contain any entry $(y',d')$ with $y'\ne \infty$ (so no further iterations alter $\sigma$).
Recall that  $P_M\position{a,b}=1$ implies $D_M\position{a,b}=D_M\position{a+1,b+1}$ and
$D_M\position{a,b+1}=D_M\position{a+1,b}=D_M\position{a,b}+1$,
that is, $D_M\position{a,b+1}-D_M\position{a,b}=1$ and
$D_M\position{a+1,b+1}-D_M\position{a+1,b}=-1$.
Using \cref{def:dmb,def:dmpm,lem:psiphi}, we conclude that $a \le \sigma(b)$ and $a+1>\sigma(b)$,
that is, that $a=\sigma(b)$.
Since $P_M$ is a permutation matrix, this guarantees that $\sigma^{-1}$ is the permutation representing $P_M$.

As for the running time,
we observe that each swap of subsequent entries of $\sigma$ increases the number of inversions in $\sigma$,
so the total number of swaps is $\Oh(|I|^2)$.
Consequently, the total number of operations on $\Que$ is $\Oh(|I|+|I|^2)=\Oh(|I|^2)$.
Using the state-of-the-art priority queries with integer keys~\cite{Han2004,Thorup2007}, we can achieve $\Oh(\log \log |I|)$ time per operation,
for a total running time of $\Oh(|I|^2 \log \log |I|)$.
\end{proof}

\newcommand{\Xr}{\hat{X}}
\newcommand{\Yr}{\hat{Y}}

\section{Applications of Seaweeds}\label{sec:DPM_sol}

For two strings $X,Y\in \Sigma^*$, we define $M(X,Y)=\{(x,y)\in \fragmentco{0}{|X|}\times
\fragmentco{0}{|Y|} : X\position{x}\ne Y\position{y}\}$.
Moreover, we denote $D_{X,Y}=D_{M(X,Y)}$ and $P_{X,Y}=P_{M(X,Y)}$.
\cref{lem:dist,def:dmb} yields the following characterization:
\begin{fact}\label{obs:dd}
For $X,Y\in \Sigma^*$ and $0\le \ell \le r\le |X|$, we have $\DD(X\fragmentco{\ell}{r},Y)
= D_{X,Y}\position{\ell,r-|Y|}$.
\lipicsEnd
\end{fact}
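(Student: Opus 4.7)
The plan is to unfold the definitions and reduce the claim to the well-known identity $\DD(U,V) = |U| + |V| - 2\LCS(U,V)$. First I would choose the bounding box $B = \fragment{0}{|X|}\times \fragment{0}{|Y|}$ for $M(X,Y)$, which is valid since $M(X,Y)\sub \fragmentco{0}{|X|}\times \fragmentco{0}{|Y|}$; \cref{lem:independent} guarantees this choice does not affect the definition of $D_{X,Y}$. Under this choice we have $i-j'=-|Y|$ and $i'-j=|X|$, and the two-piece parametrizations of $\lt^B$ and $\br^B$ specialize to $\lt^B_\ell = (\ell, 0)$ (since $\ell \in \fragment{0}{|X|}$) and $\br^B_{r-|Y|} = (r, |Y|)$ (since $r - |Y| \in \fragment{-|Y|}{|X|-|Y|}$). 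By \cref{def:dmb} this gives $D_{X,Y}\position{\ell, r-|Y|} = \dist_{M(X,Y)}((\ell,0), (r,|Y|))$.

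Next, since $\ell \le r$ and $0 \le |Y|$, the first case of \cref{lem:dist} applies and rewrites this distance as $(r-\ell) + |Y| - 2\LIS(S)$, where $S = (\fragmentco{\ell}{r}\times \fragmentco{0}{|Y|})\sm M(X,Y)$. An element $(x,y)\in S$ is precisely a position with $X\position{x}=Y\position{y}$, so a $\prec$-increasing sequence of elements of $S$ is in bijection with a common subsequence of $X\fragmentco{\ell}{r}$ and $Y$ (the $x$-coordinates pick out the positions chosen in $X\fragmentco{\ell}{r}$, and the $y$-coordinates pick out the matching positions in $Y$). Hence $\LIS(S)$ equals the length of a longest common subsequence of $X\fragmentco{\ell}{r}$ and $Y$.

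Finally, I would invoke the standard identity $\DD(U,V) = |U| + |V| - 2\LCS(U,V)$: any alignment witnessing $\DD(U,V)$ is determined by its set of matched pairs, and if $m$ pairs are matched then the alignment is forced to delete the remaining $|U|-m$ characters of $U$ and insert the remaining $|V|-m$ characters of $V$, for a cost of $|U|+|V|-2m$; this is minimized when $m$ is maximal, that is, when the matched pairs form an LCS. Applying this identity to $U = X\fragmentco{\ell}{r}$ and $V = Y$ and combining with the preceding LIS formula yields $\DD(X\fragmentco{\ell}{r},Y) = (r-\ell)+|Y|-2\LCS(X\fragmentco{\ell}{r},Y) = D_{X,Y}\position{\ell, r-|Y|}$.

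The main obstacle is mild: this proof is essentially a definitional unfolding. The only bookkeeping is verifying that $\ell$ and $r-|Y|$ both lie in $\fragment{-|Y|}{|X|}$ so that the nontrivial case of \cref{def:dmb} applies, and checking that the piecewise definitions of $\lt^B$ and $\br^B$ select the intended pieces — both of which follow directly from $0\le \ell \le r \le |X|$.
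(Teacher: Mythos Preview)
Your proposal is correct and follows exactly the route the paper intends: the paper states this fact as an immediate consequence of \cref{lem:dist} and \cref{def:dmb}, and your proof simply spells out that derivation, including the standard $\DD = |U|+|V|-2\LCS$ identity that bridges $\dist_{M(X,Y)}$ and $\DD$.
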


\begin{lemma}\label{fct:concat}
All strings $X,Y,Z\in \Sigma^*$
satisfy $P_{X,YZ}=P_{X,Y}\boxdot (P_{X,Z}\downshift (-|Y|))$ and $P_{XY,Z}=P_{X,Z} \boxdot (P_{Y,Z}\downshift |X|)$.
\end{lemma}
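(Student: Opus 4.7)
The plan is to reduce both identities to \cref{lem:adjacent} (the multiplicativity of $P$ under union of adjacent bounding boxes) combined with \cref{obs:shift} (how $P_M$ transforms under diagonal shifts of $M$). The key observation is that the mismatch set $M(\cdot,\cdot)$ decomposes naturally when one of the two strings is split as a concatenation.

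For the first identity, I would first verify the set-theoretic decomposition
\[ M(X,YZ) = M(X,Y) \,\cup\, \bigl(M(X,Z) + (0,|Y|)\bigr), \]
which follows directly from the definition by case-splitting on whether the second coordinate $y$ lies in $\fragmentco{0}{|Y|}$ or in $\fragmentco{|Y|}{|Y|+|Z|}$, and observing that in the latter case $(YZ)\position{y}=Z\position{y-|Y|}$. Then I would exhibit vertically adjacent bounding boxes $B_1 = \fragment{0}{|X|}\times \fragment{0}{|Y|}$ and $\hat B_1 = \fragment{0}{|X|}\times \fragment{|Y|}{|Y|+|Z|}$ for the two parts (so that $i=\hat\imath=0$, $i'=\hat\imath'=|X|$, $j'=|Y|=\hat\jmath$, matching the vertical-adjacency definition). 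Applying \cref{lem:adjacent} yields
\[ P_{X,YZ} = P_{M(X,Y)} \boxdot P_{M(X,Z)+(0,|Y|)}, \]
and \cref{obs:shift} with $(u,v)=(0,|Y|)$ gives $P_{M(X,Z)+(0,|Y|)} = P_{X,Z}\downshift(-|Y|)$, completing the first identity.

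The second identity is entirely symmetric: the analogous decomposition is
\[ M(XY,Z) = M(X,Z) \,\cup\, \bigl(M(Y,Z) + (|X|,0)\bigr), \]
with horizontally adjacent bounding boxes $B_2=\fragment{0}{|X|}\times \fragment{0}{|Z|}$ and $\hat B_2 = \fragment{|X|}{|X|+|Y|}\times \fragment{0}{|Z|}$ (so that $i'=|X|=\hat\imath$, $j=\hat\jmath=0$, $j'=\hat\jmath'=|Z|$). Invoking \cref{lem:adjacent} and then \cref{obs:shift} with $(u,v)=(|X|,0)$ yields $P_{XY,Z} = P_{X,Z}\boxdot (P_{Y,Z}\downshift |X|)$.

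There is no real obstacle: both claims are essentially bookkeeping once one notices that concatenation of one argument in $P_{\cdot,\cdot}$ corresponds exactly to gluing two alignment subgraphs along a shared boundary row or column. The only thing to be a bit careful about is that the bounding boxes chosen above are valid bounding boxes of the shifted mismatch sets in the sense of \cref{def:dmb} (which is immediate as they contain the relevant ranges) and that the adjacency convention matches the direction of the split (vertical adjacency when the second coordinate is split, horizontal adjacency when the first is).
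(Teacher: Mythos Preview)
Your proposal is correct and follows essentially the same approach as the paper: decompose the mismatch set along the concatenation, apply \cref{lem:adjacent} using vertically (resp.\ horizontally) adjacent bounding boxes, and then invoke \cref{obs:shift} to identify the shifted factor. You in fact spell out the explicit bounding boxes and the verification of the adjacency conditions, which the paper leaves implicit.
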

\begin{proof}
    Observe that $M(X,YZ)=M(X,Y)\cup (M(X,Z)+(0,|Y|))$
    and the two sets of the right-hand side have vertically adjacent bounding boxes.
    Hence, \cref{obs:shift,lem:adjacent} imply $P_{X,YZ}=P_{X,Y}\boxdot P_{M(X,Z)+(0,|Y|)}=P_{X,Y}\boxdot (P_{X,Z}\downshift (-|Y|))$.
    Similarly, $M(XY,Z)=M(X,Z)\cup (M(Y,Z)+(|X|,0))$,
    and the two sets of the right-hand side have horizontally adjacent bounding boxes.
    Hence, \cref{obs:shift,lem:adjacent} imply $P_{XY,Z}=P_{X,Z}\boxdot P_{M(Y,Z)+(|X|,0|)}=P_{X,Z} \boxdot (P_{Y,Z}\downshift |X|)$.
\end{proof}

The principle behind \cref{fct:concat} was used in the algorithm of~\cite{CKM20}
to maintain $\DD(X,Y)$ subject to edit operations in $X,Y$. The algorithm of~\cite{CKM20}
actually maintains the seaweed matrix $P_{X,Y}$, so we get the following result:
\begin{fact}[\cite{CKM20}]\label{prp:dynamic}
    There exists a dynamic algorithm that maintains $P_{X,Y}$ subject to character insertions and deletions in $X,Y\in \Sigma^*$. The initialization costs $\Oh(|X|\cdot |Y|)$ time
    and the updates cost $\Oh((|X|+|Y|)\log^2(|X|+|Y|))$ time.
\lipicsEnd
\end{fact}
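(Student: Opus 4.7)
The plan is to recursively decompose $P_{X,Y}$ via \cref{fct:concat} using a pair of balanced binary trees, one over $X$ and one over $Y$. Concretely, I would build a tree $\mathcal{T}_Y$ whose leaves correspond to the characters of $Y$ and whose internal nodes store partial products: at a node $v$ with substring $Y_v = Y_{v_L} Y_{v_R}$, store (the permutation representing) $P_{X, Y_v} = P_{X,Y_{v_L}} \boxdot (P_{X,Y_{v_R}} \downshift (-|Y_{v_L}|))$, so that the root of $\mathcal{T}_Y$ yields $P_{X,Y}$. A symmetric tree $\mathcal{T}_X$ over $X$, combined via the dual identity of \cref{fct:concat}, handles updates to $X$. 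Each stored seaweed matrix has span at most $|X|+|Y|$ and is represented by its defining permutation, so a single product can be computed via \cref{thm:prod} in $\Oh((|X|+|Y|)\log(|X|+|Y|))$ time.

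For initialization, I would first compute $D_{X,Y}$ in $\Oh(|X|\cdot|Y|)$ time by the usual deletion-distance DP, extract $P_{X,Y}$ in the same time bound from its definition as the half-density matrix of $D_{X,Y}$ (cf.\ \cref{def:dmpm}), and then populate the internal tree nodes either bottom-up or by restricting $P_{X,Y}$ to the appropriate diagonal sub-bands using \cref{lem:restrict}. For an update at $Y[j]$, only the $\Oh(\log|Y|)$ matrices on the leaf-to-root path in $\mathcal{T}_Y$ need to be refreshed, and each refresh is one seaweed product of cost $\Oh((|X|+|Y|)\log(|X|+|Y|))$, yielding the claimed $\Oh((|X|+|Y|)\log^2(|X|+|Y|))$ update time; updates at $X[i]$ use the dual tree $\mathcal{T}_X$.

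The main obstacle is that, in the naive realization, the two trees interact poorly: every leaf of $\mathcal{T}_Y$ encodes a matrix $P_{X,Y[j]}$ that depends on \emph{all} of $X$, so a single $X$-update would touch every such leaf. Resolving this (as in \cite{CKM20}) requires exploiting the low-rank structure of a single-character update: altering $X[i]$ changes $M(X,Y)$ only in row $i$, hence modifies $P_{X,Y}$ only within a narrow diagonal band, and this band can be propagated through both trees by composing $\Oh(\log(|X|+|Y|))$ shifted seaweed products---using the distributivity of restriction over products and of shifts over restrictions (\cref{fct:comm})---rather than by recomputing leaf matrices from scratch. Keeping the total span of the matrices involved in each level of updates bounded by $\Oh(|X|+|Y|)$ is the delicate technical point, but once established it suffices to match the stated update bound.
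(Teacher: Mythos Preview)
The paper does not prove this statement; it is imported as a black-box fact from \cite{CKM20}, with no proof given. So there is no paper-side argument to compare your proposal against.

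Evaluated on its own, your sketch correctly sets up the balanced-tree-of-seaweed-products framework via \cref{fct:concat} and correctly identifies the central difficulty: with two independent trees $\mathcal{T}_X,\mathcal{T}_Y$, an edit to $X$ invalidates every leaf of $\mathcal{T}_Y$ and vice versa. The proposed resolution, however, does not close this gap. The assertion that ``altering $X[i]$ \ldots\ modifies $P_{X,Y}$ only within a narrow diagonal band'' is false as stated: the altered entries of $M(X,Y)$ sit on diagonals $i-y$ for every $y\in\fragmentco{0}{|Y|}$, a band of width $|Y|$, and the induced change in the permutation $P_{X,Y}$ is in general global (indeed, by \cref{fct:concat} a single column insertion multiplies $P_{X,Y}$ by a full-size factor $P_{c,Y}\downshift i$, not a near-identity one). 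Restricting via \cref{fct:comm} to a band of width $\Theta(|Y|)$ therefore buys nothing. As written, your scheme still incurs $\Theta(|Y|)$ leaf recomputations on an $X$-edit, giving $\Theta(|Y|\cdot(|X|{+}|Y|)\log(|X|{+}|Y|))$ rather than the target bound. The actual mechanism in \cite{CKM20} is a genuinely different piece of machinery, which is precisely why the present paper treats the statement as a citation rather than reproving it.
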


\begin{lemma}\label{lem:whole}
Consider families $\X,\Y\sub \Sigma^{\le n}$.
The seaweed matrices $P_{X,Y}$ for $X\in \X$ and $Y\in \Y$
can all be constructed in $\Oh(n^2 + (1+d_X)(1+d_Y)n\log^2 n)$ time,
where $d_X = \ed(\X)$, and $d_Y = \ed(\Y)$.
\end{lemma}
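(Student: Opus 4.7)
The plan is to maintain a single instance of the dynamic data structure of \cref{prp:dynamic} and visit every pair $(X,Y) \in \X \times \Y$ through short sequences of character edits, extracting $P_{X,Y}$ at each visit. Concretely, I would fix arbitrary orderings $X_0, \ldots, X_{|\X|-1}$ of $\X$ and $Y_0, \ldots, Y_{|\Y|-1}$ of $\Y$, initialize the data structure with $(X_0, Y_0)$ in $\Oh(n^2)$ time, and then traverse all pairs in a ``snake'' pattern: for each successive $X_i$, sweep $\Y$ forward when $i$ is even and backward when $i$ is odd, so that only $|\X|-1$ transitions are ever spent moving along the $X$-coordinate.

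The crucial analytic ingredient is that, for \emph{any} ordering of $\X$ and any candidate center $\hat{X}$,
\[
    \sum_{i=1}^{|\X|-1} \ed(X_{i-1}, X_i) \;\le\; \sum_{i=1}^{|\X|-1}\bigl(\ed(X_{i-1},\hat{X})+\ed(\hat{X}, X_i)\bigr) \;\le\; 2\sum_{X\in\X}\ed(X,\hat{X}),
\]
since each term $\ed(X_j,\hat{X})$ appears at most twice on the right-hand side. Picking $\hat{X}$ to witness $d_X = \ed(\X)$ thus bounds the total number of $X$-edits by $2d_X$, and the analogous argument bounds each forward or backward sweep of $\Y$ by $2d_Y$ edits. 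Since any two distinct strings have edit distance at least $1$, at most one element of $\X$ can coincide with the optimal center, yielding $|\X| \le d_X + 1$ (and symmetrically $|\Y| \le d_Y+1$). Hence the total number of updates is $\Oh(d_X + (1+d_X)\,d_Y) = \Oh((1+d_X)(1+d_Y))$, contributing $\Oh((1+d_X)(1+d_Y)\,n\log^2 n)$ by \cref{prp:dynamic}.

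To actually drive each transition I would precompute, for every consecutive pair $(X_{i-1},X_i)$ and $(Y_{j-1},Y_j)$, an alignment via Landau--Vishkin with doubling thresholds, in $\Oh\bigl((1+\ed(X_{i-1},X_i))\cdot n\bigr)$ time per pair. Summing gives $\Oh((1+d_X)n + (1+d_Y)n)$ total alignment time. Reading off the permutation that represents $P_{X_i,Y_j}$ at each visit takes $\Oh(n)$ time, for a total extraction cost of $\Oh((1+d_X)(1+d_Y)n)$; both contributions are dominated by the terms already accounted for.

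The main subtlety I expect is the interface with \cref{prp:dynamic}, which accepts only single-character insertions and deletions rather than whole alignments: I need to decompose each Landau--Vishkin alignment into the right sequence of atomic edit operations (applied to the ``left'' or ``right'' string of the maintained pair), without inflating the edit count beyond $\ed(X_{i-1},X_i)$. This is routine but must be handled carefully so that after a forward sweep of $\Y$ followed by an $X$-transition, the data structure genuinely represents $(X_i, Y_{|\Y|-1})$ and the ensuing backward sweep through $\Y$ visits exactly the remaining required pairs. Putting everything together yields the claimed $\Oh(n^2 + (1+d_X)(1+d_Y)\,n\log^2 n)$ bound.
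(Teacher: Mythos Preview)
Your proposal is correct and follows essentially the same approach as the paper: maintain a single instance of the dynamic data structure of \cref{prp:dynamic}, iterate over all pairs $(X,Y)\in\X\times\Y$, and bound the total number of edits via the triangle inequality through the centers $\hat X,\hat Y$ together with $|\X|\le 1+d_X$ and $|\Y|\le 1+d_Y$. The paper's proof is terser---it does not spell out the traversal order or the alignment computation---whereas you make the snake traversal and the Landau--Vishkin step explicit; these additional details are all sound and dominated by the stated bound.
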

\begin{proof}
We iterate over $(X,Y)\in \X\times \Y$ while maintaining $P_{X,Y}$ via \cref{prp:dynamic}.
The initialization costs $\Oh(n^2)$ time whereas
updating $(X,Y)$ to $(X',Y')$ costs $\Oh((\ed(X,X')+\ed(Y,Y'))n\log^2 n)$ time.
To bound the total contribution of the update costs, let $\Xr$ and $\Yr$ be such that $\ed(\X)=\sum_{X\in \X}\ed(X,\Xr)$
and $\ed(\Y)=\sum_{Y\in \Y}\ed(Y,\Yr)$.
Then, $\ed(X,X')+\ed(Y,Y')\le \ed(X,\Xr)+\ed(X',\Xr) + \ed(Y,\Yr)+\ed(Y',\Yr)$.
Consequently, the total update time is $\Oh((|\Y|\cdot \ed(\X) + |\X|\cdot \ed(\Y))n\log^2 n)
= \Oh((1+d_X)(1+d_Y)n\log^2 n)$ because $|\Y|\le 1+\ed(\Y)$ and $|\X| \le 1+\ed(\X)$.
\end{proof}

\begin{lemma}\label{lem:prefsum}
There exists a dynamic algorithm that maintains a matrix $\boxdot_{i=1}^t (A_i \downshift \sum_{j=1}^i \delta_i)$
subject to insertions and deletions of pairs $(\delta_i,A_i)$
consisting of an integer $\delta_i$ and a bounded permutation matrix $A_i$.
The update cost is $\Oh(w \log w \log t)$ time and the initialization
costs $\Oh(wt\log w)$ time, where $w=|\spn(\bigcup_{i=1}^t\spn(A_i\downshift (\sum_{j=1}^{i} \delta_j)))|$.
\end{lemma}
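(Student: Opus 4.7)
The plan is to maintain the sequence of pairs $(\delta_i, A_i)$ in a balanced binary search tree (for instance, a weight-balanced or red-black tree) supporting insertions and deletions at arbitrary positions in $O(\log t)$ rotations. Each node $v$ of the tree, corresponding to a contiguous subinterval $R_v = \fragment{l_v}{r_v}$ of the current sequence, will store two auxiliary quantities: the \emph{subtree shift} $\Delta_v \coloneqq \sum_{i \in R_v} \delta_i$, and the \emph{subtree product}
\[
B_v \coloneqq \boxdot_{i \in R_v}\bigl(A_i \downshift \textstyle\sum_{j = l_v}^{i} \delta_j\bigr),
\]
which encodes the seaweed product of $A_i$'s restricted to $R_v$, shifted only by the \emph{local} prefix sums of $\delta$ inside $R_v$. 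The root then stores $B_{\text{root}}$, which by construction equals $\boxdot_{i=1}^t (A_i \downshift \sum_{j=1}^i \delta_j)$ after a global shift by $\delta_1$ has been applied; this is trivially handled.

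The combine operation at an internal node $v$ with children $u_\ell, u_r$ sets $\Delta_v \coloneqq \Delta_{u_\ell} + \Delta_{u_r}$ and
\[
B_v \coloneqq B_{u_\ell} \boxdot (B_{u_r} \downshift \Delta_{u_\ell}),
\]
using that $\downshift$ distributes over $\boxdot$ and the fact that shifting $B_{u_r}$ by $\Delta_{u_\ell}$ converts local prefix sums inside $R_{u_r}$ into prefix sums starting at $l_v$. By \cref{thm:prod}, each such combine costs $O(w \log w)$ time, since the span $\spn(B_v)$ is contained in $\spn(\bigcup_{i \in R_v} \spn(A_i \downshift \sum_{j=1}^i \delta_j))$ and thus fits within $w$ by hypothesis. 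Initialization builds a balanced tree on the $t$ leaves bottom-up, performing $O(t)$ combines for a total of $O(w t \log w)$.

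For an update (insertion, deletion, or replacement of some $(\delta_i, A_i)$), we locate the affected leaf, modify it, and then recompute $B_v$ and $\Delta_v$ at each of its $O(\log t)$ ancestors, each at cost $O(w \log w)$, giving $O(w \log w \log t)$ per update. Rotations required to rebalance the tree likewise only affect $O(\log t)$ nodes and are handled by the same combine procedure within the same budget.

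The main obstacle will be verifying cleanly that the inductive invariant on $B_v$ truly gives $B_{\text{root}}$ equal to the desired product (modulo a fixed global shift), and ensuring that at no internal node does the span exceed $w$. The former follows from the distributivity of $\downshift$ over $\boxdot$ (trivial to check from the definitions), and the latter from $\spn(A \boxdot B) \subseteq \spn(\spn(A) \cup \spn(B))$ together with the hypothesis that $w$ bounds the span of the union of all shifted matrices in the entire sequence: every intermediate $B_v$ is obtained by seaweed-multiplying a subfamily of matrices whose shifted spans all lie within the set whose span equals $w$.
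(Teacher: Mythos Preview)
Your proposal is correct and takes essentially the same approach as the paper: a balanced binary tree over the sequence, where each node stores the local prefix-sum of the $\delta_i$'s and the corresponding partial seaweed product, with the combine rule $B_v = B_{u_\ell} \boxdot (B_{u_r} \downshift \Delta_{u_\ell})$. Your remark about a ``global shift by $\delta_1$'' is unnecessary (with $l_{\text{root}}=1$ the root already holds the exact target product), but otherwise your argument matches the paper's, with the added detail that rotations are absorbed into the same $\Oh(\log t)$ recomputation budget.
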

\begin{proof}
We maintain a balanced binary tree with $t$ leaves representing $\fragment{1}{t}$.
A node $\nu$ whose subtree contains leaves representing $\fragment{\ell}{r}$
maintains $A_{\nu} \coloneqq  \boxdot_{i=\ell}^r (A_i \downshift \sum_{j=\ell}^i \delta_i)$
as well as $s_\nu = \sum_{i=\ell}^r \delta_i$.
If $\nu$ is the $i$th leaf, then $A_{\nu}=A_i\downshift \delta_i$ and $s_\nu = \delta_i$.
If $\nu$ is an internal node with children $\nu_L,\nu_R$,
then $A_{\nu} = A_{\nu_L}\boxdot (A_{\nu_R}\downshift s_{\nu_L})$ and $s_{\nu}=s_{\nu_L}+s_{\nu_R}$;
this matrix can be constructed in $\Oh(w\log w)$ time using \cref{thm:prod}.
Any insertion or deletion requires updating $\Oh(\log t)$ nodes, whereas the initialization requires computing $A_\nu$
for all $\Oh(t)$ nodes. Thus, the running times are $\Oh(w \log w \log t)$ and $\Oh(wt\log w)$, respectively.
\end{proof}

\begin{lemma}\label{lem:prod}
Consider strings $X,Y$, an integer interval $I=\fragment{\ell}{r}$, and
a decomposition $Y=Y_1\cdots Y_t$ such that $Y_i=Y\fragmentco{y_i}{y'_i}$ for $i\in \fragment{1}{t}$.
Moreover, for $i\in \fragment{1}{t}$, let
$X_i = X\fragmentco{x_i}{x'_i}$, where $x_i \le \max(0, y_i+\ell)$ and $x'_i \ge \min\{|X|,
y'_i+r\}$,
and $I_i \supseteq  I - x_i + y_i$ is an integer interval.
Then, $P_{X,Y}|_I = \left(\boxdot_{i=1}^t \left(\left(P_{X_i,Y_i} |_{I_i}\right)\downshift (x_i-y_i)\right)\right)|_I$.
\end{lemma}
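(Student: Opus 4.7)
My plan is to reduce the identity to a pointwise geometric statement at the level of the point sets $M(X,Y)$ and $M(X_i,Y_i)$, and then assemble the global identity via the distributivity rules collected in \cref{fct:comm}. I will carry this out in four steps.

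\textbf{Step 1 (decomposition of $P_{X,Y}$).} Iterating the second identity of \cref{fct:concat} on the decomposition $Y = Y_1\cdots Y_t$, I obtain
\[
P_{X,Y} \;=\; \boxdot_{i=1}^{t}\bigl(P_{X,Y_i}\downshift (-y_i)\bigr),
\]
since the $i$-th factor contributes the block of columns $\fragmentco{y_i}{y'_i}$. Applying $|_I$ to both sides and distributing via \cref{fct:comm}\eqref{it:distr} yields
\[
P_{X,Y}|_I \;=\; \boxdot_{i=1}^{t}\Bigl(\bigl(P_{X,Y_i}\downshift (-y_i)\bigr)|_I\Bigr).
\]

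\textbf{Step 2 (shift/restriction commutation).} Using \cref{fct:comm}\eqref{it:sh} I rewrite each factor as
\[
\bigl(P_{X,Y_i}\downshift (-y_i)\bigr)|_I \;=\; \bigl(P_{X,Y_i}|_{I+y_i}\bigr)\downshift (-y_i).
\]
So everything reduces to understanding $P_{X,Y_i}|_{I+y_i}$.

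\textbf{Step 3 (the key geometric lemma).} This is the main obstacle and the one place where the hypotheses on $x_i,x'_i$ are actually used. I plan to verify that
\[
M(X,Y_i)|_{I+y_i} \;=\; \bigl(M(X_i,Y_i)+(x_i,0)\bigr)|_{I+y_i}.
\]
The containment $\supseteq$ is immediate since $X_i = X\fragmentco{x_i}{x'_i}$. For $\subseteq$, take $(x,y)\in M(X,Y_i)$ with $x-y\in I+y_i = \fragment{\ell+y_i}{r+y_i}$. Then $y\in\fragmentco{0}{y'_i-y_i}$ gives $x\ge \ell+y_i$ and $x < y'_i+r$; combined with $x\in\fragmentco{0}{|X|}$ and the hypotheses $x_i\le\max(0,y_i+\ell)$ and $x'_i\ge\min\{|X|,y'_i+r\}$ this forces $x\in\fragmentco{x_i}{x'_i}$, as required. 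Hence, using \cref{obs:shift} and \cref{def:restr},
\[
P_{X,Y_i}|_{I+y_i} \;=\; \bigl(P_{X_i,Y_i}|_{(I+y_i)-x_i}\bigr)\downshift x_i,
\]
where the band shift $-x_i$ reflects the $(x_i,0)$ translation. Since $I - x_i + y_i \subseteq I_i$ by assumption, \cref{fct:comm}\eqref{it:prod} lets me replace $|_{(I+y_i)-x_i}$ by $|_{I_i}$ composed with $|_{(I+y_i)-x_i}$, so after applying \cref{fct:comm}\eqref{it:sh} a second time I arrive at
\[
\bigl(P_{X,Y_i}|_{I+y_i}\bigr)\downshift (-y_i) \;=\; \Bigl(\bigl(P_{X_i,Y_i}|_{I_i}\bigr)\downshift (x_i-y_i)\Bigr)\Big|_I.
\]

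\textbf{Step 4 (reassembly).} Plugging this identity into the product from Step 1 and invoking \cref{fct:comm}\eqref{it:distr} in the reverse direction to pull the outer $|_I$ out of the seaweed product gives
\[
P_{X,Y}|_I \;=\; \biggl(\boxdot_{i=1}^{t}\Bigl(\bigl(P_{X_i,Y_i}|_{I_i}\bigr)\downshift (x_i-y_i)\Bigr)\biggr)\Big|_I,
\]
which is exactly the claimed identity.

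The only truly nontrivial step is Step~3: it is the unique place where the specific bounds on $x_i,x'_i,I_i$ enter, and it amounts to checking that any alignment-graph edge of $M(X,Y_i)$ whose diagonal lies in the relevant band is already supported on the substring $X_i$. Everything else is bookkeeping with the algebraic identities from \cref{fct:concat,fct:comm,obs:shift}.
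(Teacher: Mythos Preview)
Your proof is correct and follows essentially the same route as the paper's: both isolate the per-block identity $P_{X,Y_i}|_{I+y_i}=\bigl((P_{X_i,Y_i}|_{I_i})|_{I-x_i+y_i}\bigr)\downshift x_i$ via the geometric containment $M(X,Y_i)|_{I+y_i}\subseteq \fragmentco{x_i}{x'_i}\times\fragmentco{0}{|Y_i|}$, and then assemble the product using \cref{fct:concat,fct:comm}. One triviality: in Step~1 you want the \emph{first} identity of \cref{fct:concat} (the one decomposing the second argument), not the second.
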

\begin{proof} We start with an auxiliary claim.
    \begin{claim}\label{clm:prod}
        For every $i\in \fragment{1}{t}$, we have $P_{X,Y_i}|_{I+y_i}=\left((P_{X_i,Y_i} |_{I_i})|_{I-x_i+y_i}\right)\downshift x_i$.
        \end{claim}
        \begin{claimproof}
            Let $M_i = M(X,Y_i)$ so that $P_{M_i}=P_{X,Y_i}$.
            Observe that $M_i |_{I+y_i} = \{(x,y)\in M_i : x-y\in I+y_i\}
            =\{(x,y)\in M_i : x-y-y_i\in I\}
            = \{(x,y)\in M_i: x-y-y_i\in \fragment{\ell}{r}\}=\{(x,y)\in M_i: x\in \fragment{y+y_i+\ell}{y+y_i+r}\}
            \sub \fragmentco{0}{|X_i|}\times \fragmentco{y_i+\ell}{y'_i+r}$.
            At the same time, $M_i |_{I+y_i}\sub \fragmentco{0}{|X|}\times \fragmentco{0}{|Y_i|}$,
            so $M_i |_{I+y_i}\sub \fragmentco{x_i}{x'_i}\times \fragmentco{0}{|Y_i|}$.
            In particular, $M_i|_{I+y_i} = (M_i\cap (\fragmentco{x_i}{x'_i}\times \fragmentco{0}{|Y_i|}))|_{I+y_i}
            = (M(X_i,Y_i)-(0,y_i))|_{I+y_i}$,
            so $P_{X,Y_i}|_{I+y_i} = (P_{X_i,Y_i}\downshift x_i)|_{I+y_i}
            = (P_{X_i,Y_i}|_{I-x_i+y_i})\downshift x_i$ holds by \cref{fct:comm}\eqref{it:sh}.
            Moreover, due to  \cref{fct:comm}\eqref{it:prod} and $I_i \supseteq I-x_i+y_i$,
            the equality $P_{X,Y_i}|_{I+y_i}=(P_{X_i,Y_i}|_{I-x_i+y_i})\downshift x_i=((P_{X_i,Y_i}|_{I_i})|_{I-x_i+y_i})\downshift x_i$ holds as claimed.
        \end{claimproof}
    \begin{align*}
        P_{X,Y} |_I
        &= \left(\boxdot_{i=1}^t (P_{X,Y_i}\downshift (-y_i))\right)|_I &\text{\cref{fct:concat}}\\
        &= \left(\left(\boxdot_{i=1}^t (P_{X,Y_i}\downshift (-y_i))\right)|_I\right)|_I  &\text{\cref{fct:comm}\eqref{it:prod}}\\
        &=\left(\boxdot_{i=1}^t \left((P_{X,Y_i}\downshift (-y_i))|_I\right)\right)|_I &\text{\cref{fct:comm}\eqref{it:distr}}\\
        &=\left(\boxdot_{i=1}^t ((P_{X,Y_i}|_{I+y_i})\downshift (-y_i))\right)|_I &\text{\cref{fct:comm}\eqref{it:sh}}\\
        &=\left(\boxdot_{i=1}^t \left(\left(\left(P_{X_i,Y_i} |_{I_i}\right)|_{I-x_i+y_i}\right)\downshift (x_i-y_i)\right)\right)|_I &\text{\cref{clm:prod}}\\
        &= \left(\boxdot_{i=1}^t \left(\left(\left(P_{X_i,Y_i} |_{I_i}\right)\downshift (x_i-y_i)\right)|_I\right)\right)|_I &\text{\cref{fct:comm}\eqref{it:sh}}\\
        &= \left(\left(\boxdot_{i=1}^t \left(\left(P_{X_i,Y_i} |_{I_i}\right)\downshift (x_i-y_i)\right)\right)|_I\right)|_I &\text{\cref{fct:comm}\eqref{it:distr}}\\
        &=  \left(\boxdot_{i=1}^t \left(\left(P_{X_i,Y_i} |_{I_i}\right)\downshift (x_i-y_i)\right)\right)|_I. &\text{\cref{fct:comm}\eqref{it:prod}}
    \end{align*}
    This completes the proof.
\end{proof}

\newcommand{\xr}{\hat{x}}
\newcommand{\yr}{\hat{y}}

\begin{corollary}\label{cor:prod}
Consider strings $X,\Xr,Y,\Yr$, alignments $\A_X: \Xr \onto X$ and $\A_Y : \Yr \onto Y$,
 an integer $k\ge \ed^{\A_X}(X,\Xr)+\ed^{\A_Y}(Y,\Yr)$, an interval $\fragment{\ell}{r}$,
and a decomposition $\Yr=\Yr_1\cdots \Yr_{t}$ such that $\Yr_i = \Yr\fragmentco{\yr_i}{\yr'_i}$ for $i\in \fragment{1}{t}$.
For $i\in \fragment{1}{t}$, let \begin{itemize}
\item $\Xr_i = \Xr\fragmentco{\xr_i}{\xr'_i}$, where $\xr_i = \max(0,\yr_i+\ell -k)$ and
$\xr'_i = \min\{|\Xr|,\yr'_i+r+k\}$.
\item $Y_i = Y\fragmentco{y_i}{y'_i}=\A_Y(\Yr_i)$.
\item $X_i = X\fragmentco{x_i}{x'_i}=\A_X(\Xr_i)$.
\item $I_i = \fragment{\ell-k-\xr_i+\yr_i}{r+k-\xr_i+\yr_i}$
\end{itemize}
Then, $P_{X,Y}|_I = \left(\boxdot_{i=1}^t \left(\left(P_{X_i,Y_i} |_{I_i}\right)\downshift (x_i-y_i)\right)\right)|_I$.
\end{corollary}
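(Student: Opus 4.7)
The plan is to reduce \cref{cor:prod} to a direct application of \cref{lem:prod}. The statement of the corollary is essentially the statement of \cref{lem:prod} applied to a specific choice of the decomposition parameters derived from the alignments $\A_X$ and $\A_Y$, so the core task is to verify the three structural hypotheses of \cref{lem:prod}: that $Y = Y_1 \cdots Y_t$, that $x_i \le \max(0, y_i + \ell)$ with $x'_i \ge \min\{|X|, y'_i + r\}$, and that $I_i \supseteq I - x_i + y_i$.

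First, the equality $Y = Y_1 \cdots Y_t$ follows immediately from applying \cref{fct:ali} to the alignment $\A_Y$ and the decomposition $\hat{Y} = \hat{Y}_1 \cdots \hat{Y}_t$, since $Y_i = \A_Y(\hat{Y}_i)$ by definition. The key underlying tool for the remaining two conditions is the observation that along any alignment $\B : A \onto B$ of cost $c$, every pair $(a,b)\in \B$ satisfies $|a - b - (a_0 - b_0)| \le c$. Applied to $\A_X$ and $\A_Y$ (of costs $c_X$ and $c_Y$ with $c_X + c_Y \le k$), this yields $|\hat{x}_i - x_i| \le c_X$, $|\hat{x}'_i - x'_i| \le c_X$, $|\hat{y}_i - y_i| \le c_Y$, and $|\hat{y}'_i - y'_i| \le c_Y$, where one uses the $\min$-convention from the definition of $\A(\cdot)$ together with the fact that $\A_X$ starts at $(0,0)$ to handle the boundary case $\hat{x}_i = 0$ (which forces $x_i = 0$), and symmetrically $\hat{x}'_i = |\hat{X}|$ (which forces $x'_i = |X|$).

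With these shift bounds in hand, the two inequalities on $x_i, x'_i$ split naturally into two cases according to whether the $\max$/$\min$ in the definition of $\hat{x}_i, \hat{x}'_i$ is attained at $0$ or $|\hat{X}|$ (the boundary case, handled as above) or at the interior expression $\hat{y}_i + \ell - k$ or $\hat{y}'_i + r + k$ (handled by a short chain of inequalities that collapses via $k - c_X - c_Y \ge 0$). The condition $I_i \supseteq I - x_i + y_i$ reduces to the symmetric pair of inequalities
\[
(x_i - \hat{x}_i) - (y_i - \hat{y}_i) \le k \quad\text{and}\quad (\hat{x}_i - x_i) - (\hat{y}_i - y_i) \le k,
\]
each of which is bounded by $c_X + c_Y \le k$ using the four shift bounds above.

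The main (minor) obstacle is bookkeeping around the boundary cases where the $\max$/$\min$ in the definitions of $\hat{x}_i, \hat{x}'_i$ clips at $0$ or $|\hat{X}|$; here one cannot directly use the shift bound but must fall back on the fact that $\A_X$ pins the endpoints, ensuring that $x_i = 0$ or $x'_i = |X|$, respectively. Once this is verified, the claimed equality is obtained by a single invocation of \cref{lem:prod}.
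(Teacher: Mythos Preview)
Your proposal is correct and takes essentially the same approach as the paper's proof: both reduce to \cref{lem:prod} by invoking \cref{fct:ali} for the decomposition $Y=Y_1\cdots Y_t$ and then verifying the three hypotheses via the alignment-cost shift bounds $|\hat{x}_i-x_i|\le c_X$, $|\hat{y}_i-y_i|\le c_Y$ (and analogously for the primed endpoints), with the identical case split on whether the $\max/\min$ in $\hat{x}_i,\hat{x}'_i$ clips at the boundary.
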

\begin{proof}
By \cref{fct:ali}, $Y=Y_1\cdots Y_t$ is a decomposition of $Y$.
Due to \cref{lem:prod}, it suffices to prove that $x_i \le \max(0, y_i+\ell)$,
$x'_i \ge \min\{|X|,y'_i+r\}$, and $I_i \supseteq I-x_i+y_i$.
If $\xr_i = 0$, then $x_i=0$.
Otherwise, $x_i \le \xr_i + \ed^{\A_X}(X,\Xr) = \yr_i+\ell-k+ \ed^{\A_X}(X,\Xr) \le y_i + \ed^{\A_Y}(Y,\Yr)+\ell-k+\ed^{\A_X}(X,\Xr)\le y_i+\ell$ holds as claimed.
If $\xr'_i = |\Xr|$, then $x'_i = |X|$.
Otherwise, $x'_i \ge \xr'_i - \ed^{\A_X}(X,\Xr) = \yr'_i+r+k-\ed^{\A_X}(X,\Xr) \le y'_i - \ed^{\A_Y}(Y,\Yr)+r+k-\ed^{\A_X}(X,\Xr)\ge y'_i +r$ holds as claimed.
Finally, note that $\ell-k-\xr_i + \yr_i \le \ell-k-x_i+\ed^{\A_X}(X,\Xr)+y_i+\ed^{\A_Y}(Y,\Yr)\le \ell-x_i+y_i$
and $r+k-\xr_i+\yr_i \ge r+k-x_i - \ed^{\A_X}(X,\Xr)+y_i - \ed^{\A_Y}(Y,\Yr) \ge r-x_i+y_i$,
so $I_i \supseteq I -x_i + y_i$ holds as claimed.
\end{proof}

\newcommand{\bS}{\bar{S}}
\begin{lemma}\label{fct:center}
Given a string family $\S$, a string $\bS\in \S$ such that $\sum_{S\in S}\ed(S,\bS)\le 2\ed(\S)$
can be constructed in $\Oh(1+\ed(\S)^3)$ time in the \modelname model.
\end{lemma}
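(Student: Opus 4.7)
The plan is to invoke the folklore 2-approximation for the 1-median problem in a metric space: the minimizer $\bS \coloneqq \argmin_{S \in \S} f(S)$ of $f(S) \coloneqq \sum_{S' \in \S} \ed(S, S')$ always satisfies $f(\bS) \le 2 \ed(\S)$. This follows from the triangle inequality: letting $\hat{S}^* \in \Sigma^*$ attain $\ed(\S)$, we have
\[\sum_{S, S' \in \S} \ed(S, S') \le \sum_{S, S' \in \S} \bigl(\ed(S, \hat{S}^*) + \ed(\hat{S}^*, S')\bigr) = 2|\S| \ed(\S),\]
and averaging over $S$ yields $\min_{S} f(S) \le 2 \ed(\S)$.

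The algorithm is then direct: I would iterate over every ordered pair $(S, S') \in \S \times \S$ and compute $\ed(S, S')$ using a Landau--Vishkin-style subroutine with exponentially doubling threshold (try $k = 1, 2, 4, \dots$ until the threshold exceeds the actual distance). Since Landau--Vishkin needs only $\lceOpName$ queries to perform ``kangaroo jumping'' along diagonals of the edit-distance DP table, it is available in the \modelname model and computes $\ed(S, S')$ up to threshold $k$ in $\Oh(k^2)$ time. Exponential doubling therefore recovers $\ed(S, S')$ exactly in $\Oh(1 + \ed(S, S')^2)$ time per pair. Finally, I sum to obtain each $f(S)$ and output the minimizer.

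Two structural bounds make the total time $\Oh(1 + \ed(\S)^3)$. First, $\S$ is a set of distinct strings, so distinct pairs have edit distance at least $1$ and hence $\ed(\S) \ge |\S| - 1$, giving $|\S| \le 1 + \ed(\S)$. Second, the triangle inequality via $\hat{S}^*$ gives $\ed(S, S') \le 2\ed(\S)$ for all $S, S' \in \S$. Combining these,
\[\sum_{(S,S') \in \S^2} \Oh\bigl(1 + \ed(S, S')^2\bigr) \le \Oh(|\S|^2) + 2\ed(\S) \cdot \!\!\sum_{(S,S')}\!\! \ed(S, S') \le \Oh(\ed(\S)^2) + 2\ed(\S) \cdot 2|\S|\ed(\S) = \Oh(1 + \ed(\S)^3).\]

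The main delicate point is ensuring that an \modelname-model subroutine for computing \emph{global} edit distance of two strings (rather than the pattern-matching variant $\OccE_k$ provided by \texttt{Verify}) is available with an $\Oh(k^2)$ bound. This is standard: the textbook Landau--Vishkin algorithm for two strings $X, Y$ only invokes $\lceOpName$ on suffixes of $X$ and $Y$ to extend diagonal wavefronts, so it ports to the \modelname model verbatim. A minor secondary check is that the lemma implicitly treats $\S$ as a set; this is satisfied at every call site (the families $\Sb$, $\Sm$, $\Sf$ of \cref{def:puzzleset} are defined as sets).
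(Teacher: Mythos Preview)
Your proposal is correct and follows essentially the same approach as the paper: compute all pairwise edit distances in $\Oh(1+\ed(S,S')^2)$ time each and return the minimizer of $f(S)=\sum_{S'}\ed(S,S')$, with correctness via the triangle inequality through the optimal median $\hat S^*$. The only cosmetic difference is in the bookkeeping of the running-time bound (the paper expands $\ed(S,S')^2\le(\ed(S,\hat S^*)+\ed(S',\hat S^*))^2$ directly, whereas you use the uniform bound $\ed(S,S')\le 2\ed(\S)$), and you are more explicit about implementing Landau--Vishkin with exponential doubling in the \modelname model.
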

\begin{proof}
The algorithm computes $\ed(S,S')$ for all pairs of distinct strings $S,S'\in \S$
and returns the string $\bS\in \S$ minimizing $\sum_{S\in \S} \ed(S,\bS)$.

As for correctness, consider $\Sr\in \Sigma^*$ such that $\ed(\S)=\sum_{S\in \S}\ed(S,\Sr)$.
Then, there exists $\bS'\in \S$ such that $|\S|\ed(\bS',\Sr)\le \sum_{S\in \S}\ed(S,\Sr) = \ed(\S)$.
Consequently, $\sum_{S\in \S} \ed(S,\bS') \le \sum_{S\in \S} (\ed(S,\Sr)+\ed(\bS',\Sr))
\le \ed(\S) + |\S|\ed(\bS',\Sr) \le 2\ed(\S)$.
Now, by definition of $\bS$,
we have
$\sum_{S\in \S} \ed(S,\bS) \le \sum_{S\in \S} \ed(S,\bS') \le 2\ed(\S)$.

As for the running time note that computing $\ed(S,S')$ costs $\Oh(\ed(S,S')^2)
= \Oh((\ed(S,\Sr)+\ed(S',\Sr))^2)$ time.
 Across all pairs $(S,S')$ this is bounded by
 $\Oh(|\S|\sum_{S\in \S} \ed(S,\Sr)^2 +\sum_{S,S'\in \S}\ed(S,\Sr)\ed(S',\Sr))
 =\Oh(|\S|\ed(\S)^2)=\Oh(1+\ed(\S)^3)$.
\end{proof}

\begin{lemma}\label{lem:prepr}
    Consider families $\X,\Y\sub \Sigma^{\le n}$,
    a positive integer $d\ge \ed(\X)+ \ed(\Y)$,
    as well as an integer interval $I$.
    The matrices $P_{X,Y}|_I$ for $X\in \X$ and $Y\in \Y$
    can all be constructed in $\Oh((d^3 + d|I|^2)\log^2 (d+|I|))$ time in the \modelname model.
\end{lemma}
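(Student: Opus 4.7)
The plan is to apply \cref{cor:prod} after decomposing $\bar{Y}$ into $\Oh(d)$ ``perfect'' and ``imperfect'' parts, compute the restricted seaweed matrix for each pair of parts, and then stitch them via the seaweed product over a balanced binary tree.

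First, I would use \cref{fct:center} to select center strings $\bar{X}\in \X$ and $\bar{Y}\in \Y$ with $\sum_{X\in \X}\ed(X,\bar{X})\le 2d$ and $\sum_{Y\in \Y}\ed(Y,\bar{Y})\le 2d$, in $\Oh(d^3)$ time.  For each $X\in \X$ (resp.\ $Y\in \Y$) I would then compute an optimal alignment $\A_X:\bar{X}\onto X$ (resp.\ $\A_Y:\bar{Y}\onto Y$) in $\Oh(1+\ed(X,\bar{X})^2)$ time via the Landau--Vishkin schema implemented with \textsc{LCE} queries in the \modelname model, for a total of $\Oh(d^2)$ time. Setting $k$ as a constant multiple of $d$ so that $k\ge \ed^{\A_X}(X,\bar{X})+\ed^{\A_Y}(Y,\bar{Y})$ holds uniformly, I will eventually invoke \cref{cor:prod} with this value of $k$ and the interval $I=\fragment{\ell}{r}$.

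Next, let $B_{\bar Y}$ denote the set of breakpoint positions of all alignments $\A_Y$ in $\bar Y$ and $B_{\bar X}$ the analogous set in $\bar X$; each set has cardinality $\Oh(d)$.  The partition $\bar{Y}=\bar{Y}_1\cdots \bar{Y}_t$ is obtained as follows: for every $p\in B_{\bar{Y}}$, declare the length-$\Theta(|I|+d)$ window around $p$ to be \emph{imperfect}; similarly, for every $p\in B_{\bar{X}}$, declare the (length-$\Oh(|I|+d)$) interval of $\bar Y$-positions that can induce a $\bar X_i$ containing $p$ (via $\hat x_i=\max\{0,\hat y_i+\ell-k\}$, $\hat x'_i=\min\{|\bar X|,\hat y'_i+r+k\}$) imperfect, and merge overlapping intervals.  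The complementary maximal intervals are declared \emph{perfect}.  This yields at most $\Oh(d)$ imperfect and $\Oh(d)$ perfect pieces, each imperfect piece of length $\Oh(|I|+d)$, while perfect pieces may be arbitrarily long.  Crucially, on a perfect piece no alignment has a breakpoint, so $X_i=\bar X_i$ and $Y_i=\bar Y_i$ hold for every $(X,Y)\in \X\times\Y$.

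For each perfect piece, I would compute $P_{\bar X_i,\bar Y_i}|_{I_i}=P_{M(\bar X_i,\bar Y_i)|_{I_i}}$ by invoking \cref{prop:seaweed} on $M'=M(\bar X_i,\bar Y_i)|_{I_i}$ with the interval $I_i$ (of size $\Oh(|I|+d)$).  The required $\texttt{successor}$ oracle is implemented in $\Oh(1)$ time via a single \textsc{LCE} query locating the next mismatch on a diagonal of $\bar X_i$ vs.\ $\bar Y_i$.  The cost is $\Oh((|I|+d)^2\log\log(|I|+d))$ per piece, or $\Oh(d(|I|+d)^2\log^2)$ overall.  For each imperfect piece (of length $\Oh(|I|+d)$), I would invoke \cref{lem:whole} on the restricted families $\{X_i:X\in \X\}$ and $\{Y_i:Y\in\Y\}$ (which have small median edit distances $d_{X_i}, d_{Y_i}$ summing to $\Oh(d)$), obtaining every $P_{X_i,Y_i}$, and then apply \cref{lem:restrict} to extract $P_{X_i,Y_i}|_{I_i}$.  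The total cost across imperfect pieces is $\Oh(d(|I|+d)^2+\sum_i(1+d_{X_i})(1+d_{Y_i})(|I|+d)\log^2) = \Oh(d(|I|+d)^2 + d^2(|I|+d)\log^2)$, which, using $d^2(|I|+d)\le d^3+d|I|^2$ by AM--GM, fits the claimed bound.

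Finally, for each pair $(X,Y)\in \X\times \Y$ I would maintain the seaweed product $\boxdot_i((P_{X_i,Y_i}|_{I_i})\downshift(x_i-y_i))$ using the balanced-tree data structure of \cref{lem:prefsum}, with $t=\Oh(d)$ leaves and node span $w=\Oh(|I|+d)$; \cref{cor:prod} guarantees that the root's value, restricted to $I$ via \cref{lem:restrict}, equals $P_{X,Y}|_I$.  The key to staying within the budget is to iterate over pairs smartly: fix an ordering $Y^{(1)},Y^{(2)},\ldots$ of $\Y$ with $\sum_j \ed(Y^{(j)},Y^{(j+1)})=\Oh(d)$ (obtainable from a star-spanning at $\bar Y$), and likewise for $\X$; then transitioning between consecutive pairs changes only $\Oh(\ed)$ imperfect pieces and triggers that many $\Oh((|I|+d)\log(|I|+d)\log d)$-time updates.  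Summing the $\Oh(d^2)$ total updates yields $\Oh(d^2(|I|+d)\log^2)=\Oh((d^3+d|I|^2)\log^2)$.  The main obstacle will be paragraph two: defining the partition so that all of (i) the number of imperfect pieces is $\Oh(d)$, (ii) perfect pieces are truly invariant across all $(X,Y)$ pairs, and (iii) the induced windows $I_i$ in \cref{cor:prod} all have size $\Oh(|I|+d)$, hold simultaneously---this requires a careful joint treatment of $B_{\bar X}$ and $B_{\bar Y}$ together with the $\pm k$ padding inherent to \cref{cor:prod}.
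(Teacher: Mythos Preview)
Your proposal is correct and follows essentially the same approach as the paper: select centers via \cref{fct:center}, partition $\bar{Y}$ into $\Oh(d)$ perfect and imperfect parts using the union of breakpoint sets $B_{\bar X}$ and $B_{\bar Y}$, handle perfect parts with \cref{prop:seaweed} and imperfect parts with \cref{lem:whole} plus \cref{lem:restrict}, and stitch via \cref{lem:prefsum} invoking \cref{cor:prod}. The paper organizes the iteration over $\X\times\Y$ by always substituting from and reverting to the baseline $(\bar X,\bar Y)$, which is exactly your star-spanning idea; the total update count $\Oh(|\Y|\cdot\ed(\X)+|\X|\cdot\ed(\Y))=\Oh(d^2)$ matches.
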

\begin{proof}
First, we compute $\Xr\in \X$ and $\Yr\in \Y$ such that $\ed(\X,\Xr)\le 2\ed(\X)$
and $\ed(\Y,\Yr)\le 2\ed(\Y)$ (\cref{fct:center})
as well as optimal alignments $\A_X : \Xr\onto X$
and $\A_Y: \Yr \onto Y$ for all $X\in \X$ and $Y\in \Y$ (represented by the underlying breakpoints).

Next, we iteratively construct of a partition $\Yr=\Yr_1\cdots \Yr_t$ of $\Yr$,
where $\Yr_i = \Yr\fragmentco{\yr_i}{\yr'_i}$ for $i\in \fragment{1}{t}$.
In the $i$-th iteration, we set $\yr_i = 0$ (if $i=1$) or $\yr_i=\yr'_{i-1}$ (otherwise).
Next, we define $\yr'_i$. If $\yr_i > |\Yr|-(|I|+4d)$, we set $\yr'_i = |\Yr|$.
Otherwise, we define $\Yr_i = \Yr\fragmentco{\yr_i}{\yr'_i}$ to be the longest possible fragment of $\Yr_i$
starting position $\yr_i$ such that either (a) $|\Yr_i|=|I|+4d$ or (b) $Y_i \coloneqq  \A_Y(\Yr_i)$ matches $\Yr_i$
for all $Y\in \Y$ and $X_i \coloneqq  \A_X(\Xr_i)$ matches $\Xr_i$ for all $X\in \X$, where $\Xr_i$ is defined as in \cref{cor:prod} for $k=2d$.
This construction partitions $\Yr$ into \emph{perfect} fragments satisfying condition (b)
and the remaining \emph{imperfect} fragments.

Let us denote $\mathcal{U}_i = \{\A_X(\Xr_i) : X\in \X\}$ and $\Y_i = \{A_Y(\Yr_i) : Y\in \Y\}$.
Our goal is to compute $P_{X_i,Y_i}|_{I_i}$ (with $I_i$ defined in \cref{cor:prod} for $k=2d$)
for all $(X_i,Y_i)\in \mathcal{U}_i \times \Y_i$.
If $\Yr_i$ is a perfect fragment, then $|\mathcal{U}_i|=|\Y_i|=1$, and we apply \cref{prop:seaweed}.
Otherwise, we use \cref{lem:whole} to first construct $P_{X_i,Y_i}$, and then we derive $P_{X_i,Y_i}|_{I_i}$ using \cref{lem:restrict}.

Finally, we initialize the data structure of \cref{lem:prefsum}
with $A_i = P_{\Xr_i,\Yr_i}|_{I_i}$ and $\delta_1=0$ and $\delta_i= \xr_i-\yr_i-\xr_{i-1}+\yr_{i-1}$ for $i\in \fragment{2}{t}$.
For each $(X,Y)\in \X\times \Y$, we substitute $A_i\coloneqq P_{X_i,Y_i}|_{I_i}$ whenever $(X_i,Y_i)\ne (\Xr_i,\Yr_i)$
and $\delta_i\coloneqq x_i-y_i-x_{i-1}+y_{i-1}$ whenever $x_i-y_i-x_{i-1}+y_{i-1}\ne \xr_i-\yr_i-\xr_{i-1}+\yr_{i-1}$.
Then, we retrieve the matrix $A\coloneqq \boxdot_{i=1}^t (A_i \downshift \sum_{j=1}^i \delta_i)$ from the dynamic algorithm of \cref{cor:prod} and return $A|_I$, computed using \cref{lem:restrict}.
Finally, we undo all the substitutions applied for $(X,Y)$.

Correctness of the algorithm follows directly from \cref{cor:prod}.
It remains to analyze the running time.
First, we note that the applications of \cref{fct:center} take $\Oh(d^3)$ time.
Constructing the alignment $\A_X:\Xr \onto X$ costs $\Oh(1+\ed(X,\Xr)^2)$ time,
which sums up to $\Oh(d^2)$ across all $X\in \X$.
Symmetrically, all the alignments $\A_Y:\Yr \onto Y$
are built in $\Oh(d^2)$.

In order to efficiently implement the partitioning of $\Xr$,
we construct the sets
\begin{align*}
    \brkp_{\X} &= \{\xr\in \fragment{0}{|\Xr|} : (\xr,x)\in \brkp_{\A_X}\text{ for some
    }X\in \X\text{ and }x\in \fragment{0}{|X|}\}\quad\text{and}\\
    \brkp_{\Y} &= \{\yr\in \fragment{0}{|\Yr|} : (\yr,y)\in \brkp_{\A_Y}\text{ for some
    }Y\in \Y\text{ and }y\in \fragment{0}{|Y|}\}.
\end{align*}
For each $X\in \X$, the contribution of $X$ to $\brkp_{\X}$ can be constructed in $\Oh(1+\ed(X,\Xr))$ time by
scanning the (breakpoints behind) $\A_X$.
Consequently, the set  $\brkp_{\X}$ is of size $\Oh(d)$ and can be constructed in $\Oh(d\log d)$ time (this includes sorting and removing duplicates).
A symmetric argument shows that $\brkp_{\Y}$ is of size $\Oh(d)$ and can be constructed in $\Oh(d \log d)$ time.

Recall that for $\Yr_i = \Yr\fragmentco{\yr_i}{\yr'_i}$, we have $\Xr_i = \Xr\fragmentco{\xr_i}{\xr'_i}$,
where $\xr_i = \max(0, \yr_i+\ell-k)$ and $\xr'_i = \min\{|\Xr|,\yr'_i+r+k\}$.
Thus, $\Yr\fragmentco{\yr_i}{\yr'_i}$ is a perfect fragment as long as $\fragmentco{\yr_i+\ell-k}{\yr'_i+r+k}\cap \brkp_{\X}=\emptyset$
and $\fragmentco{\yr_i}{\yr'_i}\cap \brkp_{\Y}=\emptyset$.
In $\Oh(\log d)$ time (by binary search over $\brkp_{\X}$ and $\brkp_{\Y}$), we can verify this condition for $\yr'_i = \yr_i+|I|+4d$ and, if satisfied, determine the maximum possible $\yr'_i$.
Thus, the partition is constructed in $\Oh((d+t)\log d)$ time.

Our next goal is to prove that $t=\Oh(d)$.
For this, we note that we cannot simultaneously have $\fragmentco{\xr_i}{\xr'_i}\cap \brkp_{\X}=\emptyset=\fragmentco{\xr_{i+1}}{\xr'_{i+1}}$ and $\fragmentco{\yr_i}{\yr'_i}\cap \brkp_{\Y}=\emptyset=\fragmentco{\yr_{i+1}}{\yr'_{i+1}}\cap \brkp_{\Y}$ for any $i\in \fragmentco{1}{t}$.
At the same time, $\fragmentco{\xr_i}{\xr'_i}\cap \fragmentco{\xr_{i+2}}{\xr'_{i+2}}=\emptyset = \fragmentco{\yr_i}{\yr'_i}\cap \fragmentco{\yr_{i+2}}{\yr'_{i+2}}$ holds for $i\in \fragment{1}{t-2}$ (because $\xr_{i+2}\ge  \yr'_{i+1} +\ell -k \ge \yr_{i+1}+|I|+4d +\ell - k= \xr_{i+1}+(r-\ell)+2k +r - k = \xr'_{i}+r+k \le \yr'_i$).
Consequently, the number $t'$ of imperfect fragments is at most $2|\brkp_{\Y}|+2|\brkp_{\X}|=\Oh(d)$
and the partition size $t$ satisfies $t\le 2t'+1=\Oh(d)$.
Moreover, each $X\in \X$ satisfies  $\sum_{i=1}^t \ed(\Xr_i,X_i) \le 2\ed(\Xr,X)$
and each $Y\in \Y$ satisfies $\sum_{i=1}^t \ed(\Yr_i,Y_i) \le 2\ed(\Yr,Y)$.
In particular, we conclude that $\sum_{i=1}^t \ed(\mathcal{U}_i) = \Oh(d)$
and $\sum_{i=1}^t \ed(\Y_i) = \Oh(d)$.

Each application of \cref{lem:whole} costs $\Oh((|I|+d)^2 + (1+\ed(\mathcal{U}_i))(1+\ed(\Y_i))(|I|+d)\log^2 (|I|+d))$
time, which is $\Oh(d(|I|+d)^2 + d^2(|I|+d)\log^2 (|I|+d))=\Oh((d^3 + d|I|^2)\log^2 (|I|+d))$ in total.
On the other hand, each application of \cref{prop:seaweed} costs $\Oh((d+|I|)^2 \log \log (|I|+d))$,
which is $\Oh((d^3 + d|I|^2)\log \log (|I|+d))$ in total;
the subsequent calls to \cref{lem:restrict} are dominated by this running time.

Observe that $|\spn(\bigcup_{i=1}^t\spn(A_i\downshift (\sum_{j=1}^{i} \delta_j)))|=\Oh(|I|+d)$
holds at all times, so the initialization of \cref{cor:prod} costs $\Oh((|I|+d)t\log (|I|+d))=
\Oh((d^2 + d|I|)\log (|I|+d))$ time,
whereas each update costs $\Oh((|I|+d)\log t\log (|I|+d))=\Oh((|I|+d)\log^2(|I|+d))$ time.
The total number of updates is $\Oh(|\X|\cdot \ed(\Y)+\ed(X)\cdot |\Y|)=\Oh(d^2)$,
so this sums up to $\Oh((d^3+d^2|I|)\log^2(|I|+d))$.
Finally, the total time needed to restrict the returned matrices is $\Oh(|\X|\times |\Y|\times (|I|+d))
= \Oh(d^3+d^2|I|)$.
\end{proof}

\begin{lemma}\label{lem:puzzle}
    Let $U_1,\ldots,U_z$ and $V_1,\ldots,V_z$ denote $\Delta$-puzzles with
    values $U$ and $V$, respectively.
    Moreover, consider $k,w\in \Zz$ such that \[
        k+\sum_{i=1}^z \big||U_i|-|V_i|\big|\le w \le \Delta/2.
    \]
    Further, define strings $U'_i$ and intervals $I_i$ so that
    \begin{itemize}
        \item $U'_1 = U_1\fragmentco{0}{|U_1|+w-\Delta}$ and $I_1 = \fragment{-k}{w}$;
        \item $U'_i = U_i\fragmentco{w}{|U_i|+w-\Delta}$ and $I_i = \fragment{0}{\Delta}$ for $i\in \fragment{2}{z-1}$;
        \item $U'_z = U_z\fragmentco{w}{|U_z|}$ and $I_z = \fragment{0}{\Delta}$.
    \end{itemize}
    Then, we have \[
    P_{V,U}|_{\fragment{-k}{|V|-|U|+k}} =
        \Big(
            \boxdot_{i=1}^z
            \Big(
                (P_{V_i,U'_i}|_{I_i})\downshift
                \Big(
                    \sum_{j=1}^{i-1}(|V_j|-\Delta-|U'_j|)
                \Big)
            \Big)
        \Big)|_{\fragment{-k}{|V|-|U|+k}}.
    \]
\end{lemma}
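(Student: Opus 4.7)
The plan is to deduce the statement as a direct application of Lemma~\ref{lem:prod} by identifying the right decomposition of one string and the right covering of the other. Observe that the fragments $U'_1,\ldots,U'_z$ actually form a partition of $U$: the $\Delta$-puzzle property implies that the overlap of $U_i$ and $U_{i+1}$ has length $\Delta$, and cutting each such overlap at offset $w$ produces exactly the fragments described in the statement. A short length computation (sum of $|U'_i|$ telescopes to $|U|$) confirms $U=U'_1U'_2\cdots U'_z$, so $U_i$ is the ``extended'' fragment of $U$ covering $U'_i$. Symmetrically, although we do not need to cut $V$, the fragments $V_1,\ldots,V_z$ are successive fragments of $V$ anchored at positions $y_i^V \coloneqq \sum_{j<i}(|V_j|-\Delta)$. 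The idea is to apply Lemma~\ref{lem:prod} with $X=V$, $Y=U$, the decomposition $U=U'_1\cdots U'_z$, the larger pieces $X_i=V_i$, and the intervals $I_i$ from the statement.

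To invoke Lemma~\ref{lem:prod}, I will verify its three hypotheses on $(x_i, x'_i, I_i)$ with $\ell=-k$ and $r=|V|-|U|+k$. Writing $y_i^U$ for the starting position of $U'_i$ in $U$, a telescoping calculation gives
\[
y_i^U-x_i \;=\; w+\sum_{j<i}(|U_j|-|V_j|),
\qquad
y_i'^U-x'_i \;=\; -w-\Delta+|U_i|-|V_i|+(y_i^U-x_i).
\]
The hypothesis $w \ge k + \sum_j \bigl||U_j|-|V_j|\bigr|$ controls partial sums, yielding $|y_i^U-x_i - w|\le w-k$, and hence
$x_i\le \max(0,y_i^U-k)$ and $x'_i\ge \min\{|V|,y_i'^U+k\}$ (the boundary cases $i=1$ and $i=z$ hold by direct inspection, since $x_1=0$ and $x'_z=|V|$). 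The inclusion $I_i\supseteq I - x_i+y_i^U$ reduces, for internal $i$, to checking $0\le -k+(y_i^U-x_i)$ and $|V|-|U|+k+(y_i^U-x_i)\le \Delta$; the first is immediate and the second simplifies to $k+w+\sum_{j\ge i}(|V_j|-|U_j|)\le 2w\le \Delta$, again by the bound on partial sums. The boundary intervals $I_1$ and $I_z$ handle the boundary cases analogously.

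With the hypotheses verified, Lemma~\ref{lem:prod} yields
\[
P_{V,U}|_{\fragment{-k}{|V|-|U|+k}}
\;=\;
\Big(\boxdot_{i=1}^z \big((P_{V_i,U'_i}|_{I_i})\downshift (x_i-y_i^U)\big)\Big)\Big|_{\fragment{-k}{|V|-|U|+k}}.
\]
It remains to match the shift. A direct computation, splitting off the leading term that uses $|U'_1|=|U_1|+w-\Delta$ rather than $|U_i|-\Delta$, gives
\[
\sum_{j=1}^{i-1}(|V_j|-\Delta-|U'_j|) \;=\; \sum_{j<i}(|V_j|-|U_j|)-w \;=\; x_i-y_i^U,
\]
which is exactly the shift appearing in the statement, completing the reduction.

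The only potentially delicate step is bookkeeping the shift and verifying the interval inclusion $I_i\supseteq I-x_i+y_i^U$ for internal pieces; everything hinges on $2w\le\Delta$ and on the bound $w\ge k+\sum_j\bigl||U_j|-|V_j|\bigr|$, and once these are used to control the partial sums, the algebra collapses to the three simple inequalities above. I expect no real obstacle beyond carefully treating the asymmetric boundary cases $i=1$ and $i=z$, where $U'_1$ is longer than the generic $|U_i|-\Delta$ and $U'_z$ extends to the end of $U_z$.
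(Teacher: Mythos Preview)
Your approach is exactly the paper's: apply Lemma~\ref{lem:prod} with $X=V$, $Y=U$, the decomposition $U=U'_1\cdots U'_z$, the covering pieces $X_i=V_i$, and the intervals $I_i$, then verify the three hypotheses and match the shift. The shift computation and the interval-inclusion check are correct.

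However, two slips need fixing. First, the right-endpoint hypothesis in Lemma~\ref{lem:prod} is $x'_i\ge \min\{|X|,y'_i+r\}$ with $r=|V|-|U|+k$, so you must verify $x'_i\ge \min\{|V|,\,y_i'^{\,U}+|V|-|U|+k\}$, not $\min\{|V|,\,y_i'^{\,U}+k\}$; the stronger inequality still follows from $\Delta-w-k\ge w-k\ge \sum_{j>i}(|V_j|-|U_j|)$, but your stated condition is too weak. Second, your formula $y_i'^{\,U}-x'_i=-w-\Delta+|U_i|-|V_i|+(y_i^U-x_i)$ has a spurious $-w$: for internal $i$ one has $|U'_i|=|U_i|-\Delta$, so $y_i'^{\,U}-x'_i=(y_i^U-x_i)+|U_i|-\Delta-|V_i|$. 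These are bookkeeping errors rather than conceptual gaps; once corrected, the proof goes through identically to the paper's.
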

\begin{proof}
    Observe that $U'_i = U\fragmentco{u_i}{u'_i}$, where $u_i = \sum_{j=1}^{i-1} |U'_i|$
    and $u'_i = \sum_{j=1}^{i} |U'_i|$.
    Moreover, $V_i = V\fragmentco{v_i}{v'_i}$, where $v_i = \sum_{j=1}^{i-1} (|V_i|-\Delta)$
    and $v'_i = \Delta+\sum_{j=1}^{i} (|V_i|-\Delta)$.
    By \cref{lem:prod}, it suffices to prove that $v_i \le \max(0, u_i-k)$,
    $v'_i \ge \min\{|V|,u'_i+|V|-|U|+k\}$, and $I_i \supseteq \fragmentco{-k}{|V|-|U|+k+1}-v_i+u_i$
    hold for $i\in \fragment{1}{z}$.

    As for the first inequality, we have $v_i = 0$ if $i=1$.
    Otherwise, $u_i = w + \sum_{j=1}^{i-1} (|U_i|-\Delta)$
    and $v_i = \sum_{j=1}^{i-1} (|V_i|-\Delta)$,
    so $v_i-u_i = \sum_{j=1}^{i-1} (|V_i|-|U_i|)-w \le w-k-w =-k$ holds as claimed.
    As for the second inequality, we have $v'_i = |V|$ if $i=z$.
    Otherwise, $u'_i = w+\sum_{j=1}^{i} (|U_i|-\Delta)$
    and $v'_i = \Delta + \sum_{j=1}^{i} (|V_i|-\Delta)$,
    so $v'_i-u'_i = \sum_{j=1}^{i} (|V_i|-|U_i|)+\Delta-w
    = |V|-|U| - (\sum_{j={i+1}}^z|V_i|-|U_i|)+\Delta - w \ge |V|-|U|-w+k+\Delta-w
    \ge |V|-|U|+k$ holds as claimed.
    Next, we need to prove that $I+u_i-v_i\sub I_i$.
    This is true for $i=1$, when $u_i = v_i = 0$ and $I = \fragment{-k}{|V|-|U|+k+1}\sub \fragment{-k}{w}=I$.
    Otherwise, $u_i - v_i = \sum_{j=1}^{i-1} (|U_i|-|V_i|)+w
    \ge k-w+w \ge k$
    and  \[
        u_i - v_i = \sum_{j=1}^{i-1} (|U_i|-|V_i|)+w
            = |U|-|V|+\sum_{j=i}^{z} (|V_i|-|U_i|)+w
            \le |U|-|V|+w-k+w = |U|-|V|+2w-k,
        \]
    that is, $u_i-v_i\sub \fragment{k}{|U|-|V|+2w-k}$.
    Hence,
    so $\fragment{-k}{|V|-|U|+k}+v_i - u_i \sub \fragment{0}{2w} \sub \fragment{0}{\Delta}$
    holds as claimed.
    \end{proof}

\begin{lemma}\label{lem:dpm}
    Consider an instance of the \DPM problem.
    We can maintain a permutation matrix $A$
    such that $A=P_{V,U}|_{\fragment{-k}{|V|-|U|+k+1}}$
    holds
    whenever $U_1,\ldots,U_z$ and $V_1,\ldots,V_z$ are $\Delta$-puzzles with values $U$
    and $V$, respectively.
    \begin{itemize}
        \item The preprocessing of each family $\S_i$ costs
            $\Oh((d_i^3+d_i\Delta^2)\log^2(d_i+\Delta))$ time,
            where $d_i = \min\{1,\ed(\S_i)\}$,
        \item the initialization of $\I$ costs $\Oh(z\Delta \log \Delta)$ time, and
        \item the updates of $\I$ cost $\Oh(\Delta \log z \log \Delta)$ time.
    \end{itemize}
\end{lemma}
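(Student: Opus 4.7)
The plan is to realize the required permutation matrix as a dynamic seaweed product of $z$ small factors arranged in a balanced binary tree, plugging together Lemma~\ref{lem:puzzle} (to decompose), Lemma~\ref{lem:prepr} (to precompute), and Lemma~\ref{lem:prefsum} (to maintain dynamically). With the choice $w = \Delta/2$ (permissible because condition~\ref{it:unique} of DPM ensures $\tor(\I) \le \Delta/2 - k$), Lemma~\ref{lem:puzzle} rewrites $P_{V,U}|_{\fragment{-k}{|V|-|U|+k}}$ as the restriction of $\boxdot_{i=1}^z ((P_{V_i,U'_i}|_{I_i}) \downshift s_i)$, where $U'_i$ is obtained from $U_i$ by stripping a fixed-length prefix and suffix determined by the slot type (leading, internal, or trailing), $|I_i| \le \Delta$, and $s_i = \sum_{j=1}^{i-1}(|V_j|-\Delta-|U'_j|)$.

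For preprocessing I would invoke Lemma~\ref{lem:prepr} once per family $\S_i\in\{\Sb,\Sm,\Sf\}$, setting the text-side input to $\X = \{V : V\in \S_i\}$ and the pattern-side input to $\Y = \{U' : U\in \S_i\}$. Since the truncations defining $U'$ remove fixed-length prefixes and suffixes, we have $\ed(\Y) \le \ed(\S_i)$, and hence $\ed(\X) + \ed(\Y) = \Oh(d_i)$. Combined with $|I| = \Oh(\Delta)$, Lemma~\ref{lem:prepr} produces every needed matrix $P_{V_i,U'_i}|_{I_i}$ for that family in the claimed $\Oh((d_i^3 + d_i\Delta^2)\log^2(d_i+\Delta))$ time.

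The dynamic structure is then the BST of Lemma~\ref{lem:prefsum} over the sequence of pairs $(\delta_i, A_i)$, where $A_i = P_{V_i,U'_i}|_{I_i}$ is read from the precomputed table and $\delta_i$ is the increment chosen so that $\sum_{j\le i}\delta_j = s_i$. Each DPM-update changes a constant number of pairs (the modified entry plus the $\delta$-value of its right neighbor), so it translates into $\Oh(1)$ insertions, deletions, or substitutions in the BST. The quantity that governs the cost in Lemma~\ref{lem:prefsum} is $w = |\spn(\bigcup_i \spn(A_i\downshift \sum_{j\le i}\delta_j))|$: after telescoping, every cumulative shift equals $\sum_{j<i}(|V_j|-|U_j|)$ plus an $\Oh(\Delta)$ boundary correction, whose absolute value is bounded by $\tor(\I) + \Oh(\Delta) = \Oh(\Delta)$; consequently every shifted factor lives inside an $\Oh(\Delta)$-wide band around the main diagonal, so $w = \Oh(\Delta)$. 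Lemma~\ref{lem:prefsum} then yields $\Oh(z\Delta\log\Delta)$ initialization and $\Oh(\Delta\log z\log\Delta)$ per update. To expose $A = P_{V,U}|_{\fragment{-k}{|V|-|U|+k+1}}$, we read the root matrix of the BST and apply Lemma~\ref{lem:restrict} to the target interval in $\Oh(\Delta)$ extra time; correctness of this final restriction step is an instance of Lemma~\ref{fct:comm}.

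The main obstacle I expect is the band-width argument: one must verify that every prefix shift $\sum_{j\le i}\delta_j$ (not just the total) stays within $\Oh(\Delta)$, accounting for the different truncation lengths at the leading and trailing slots, which introduce additive $\Oh(\Delta)$ adjustments to individual $\delta$'s. Once this invariant is pinned down, the rest of the argument is a routine composition of Lemmas~\ref{lem:puzzle}, \ref{lem:prepr}, \ref{lem:prefsum}, and \ref{lem:restrict}.
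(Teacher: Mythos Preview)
Your proposal is correct and follows essentially the same route as the paper: decompose via \cref{lem:puzzle} with $w=\lfloor\Delta/2\rfloor$, precompute the factor matrices $P_{V_i,U'_i}|_{I_i}$ per family using \cref{lem:prepr}, maintain the seaweed product in the balanced tree of \cref{lem:prefsum}, and restrict the root via \cref{lem:restrict}. The band-width argument you flag as the main obstacle is exactly the one point the paper asserts (``$|\spn(\bigcup_i\spn(A_i\downshift\sum_{j\le i}\delta_j))|=\Oh(\Delta)$ holds at all times'') without spelling out; your telescoping sketch via $\tor(\I)\le\Delta/2-k$ is the right way to justify it.
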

\begin{proof}
    Set $w \coloneqq \floor{{\Delta}/{2}}$.
    The preprocessing consists of the following steps: \begin{enumerate}
        \item We build $\mathcal{U}_\beta = \{S\fragmentco{0}{|S|+w-\Delta}: S\in \Sb\}$
            and process $(\mathcal{U}_\beta,\Sb)$ using \cref{lem:prepr} for $I=\fragment{-k}{w}$.
        \item We build $\mathcal{U}_\mu = \{S\fragmentco{w}{|S|+w-\Delta} : S\in \Sm\}$
            and process $(\mathcal{U}_\mu,\Sm)$ using \cref{lem:prepr} for $I=\fragment{0}{\Delta}$.
        \item We build $\mathcal{U}_{\varphi} = \{S\fragmentco{w}{|S|} : S\in \Sf\}$
            and process $(\mathcal{U}_{\varphi},\Sf)$ using \cref{lem:prepr} for $I=\fragment{0}{\Delta}$.
    \end{enumerate}
    At initialization time, we initialize the data structure of \cref{lem:prefsum}
    with $A_i = P_{V_i,U'_i}|_{I_i}$ for $i\in \fragment{1}{z}$,  $\delta_1=0$, and $\delta_i=|U'_{i-1}|-|V_{i-1}|-\Delta$ for $i\in \fragment{2}{z}$, where $U'_i$ and $I_i$ are defined in \cref{lem:puzzle}.
    At update time, we update the sequence $A_i$ and $\delta_i$ accordingly.
    At query time, we retrieve the matrix $A\coloneqq \boxdot_{i=1}^t (A_i \downshift \sum_{j=1}^i \delta_i)$
    from the dynamic algorithm of \cref{lem:prefsum}, and return $A|_{\fragment{-k}{|V|-|U|+k}}$,
    retrieved using \cref{lem:restrict}.

    As for correctness, we note that all the possible matrices $A_i = P_{V_i,U'_i}|_{I_i}$ have been constructed at  during the preprocessing phase. Moreover, \cref{lem:puzzle} guarantees that $A|_{\fragment{-k}{|V|-|U|+k}}=P_{V,U}|_{\fragment{-k}{|V|-|U|+k}}$ provided that $U_1,\ldots,U_z$ and $V_1,\ldots,V_z$ are $\Delta$-puzzles with values $U$ and $V$, respectively.

    It remains to analyze the running time.
    Due to $\ed(\mathcal{U}_i)\le \ed(\S_i)$, preprocessing each family $\S_i$ using \cref{lem:prepr}
    costs $\Oh((d_i^3+d_i\Delta^2)\log^2(d_i+\Delta))$ time.

Observe that $|\spn(\bigcup_{i=1}^t\spn(A_i\downshift (\sum_{j=1}^{i} \delta_j)))|=\Oh(\Delta)$
holds at all times, so the initialization of \cref{cor:prod} costs $\Oh(\Delta z\log \Delta)$ time,
whereas each update costs $\Oh(\Delta\log z \log \Delta)$ time.
Each query costs $\Oh(\Delta)$ time.
\end{proof}

\begin{lemma}\label{fct:dd}
Consider strings $U,V$ and an integer $k\in \Zz$ such that $I\coloneqq \fragment{-k}{|V|-|U|+k}$ is non-empty.
Given $P_{V,U}|_I$, the set $\OccD_k(U,V)$ can be constructed in $\Oh(|I|{\log |I|}/{\log \log |I|})$ time.
\end{lemma}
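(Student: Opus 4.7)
The plan is to express $\OccD_k(U,V)$ as the set of rows of the restricted distance matrix $\RD{V,U}{I}$ whose minimum does not exceed $k$, and then to use the SMAWK algorithm on this Monge matrix after preparing fast entry queries via the given seaweed matrix $P_{V,U}|_I$. By \cref{obs:dd}, a position $v\in\fragment{0}{|V|}$ belongs to $\OccD_k(U,V)$ exactly when there exists $b=w-|U|$ with $D_{V,U}\position{v,b}\le k$. Any alignment witnessing such an occurrence uses at most $k$ non-diagonal edges, so it stays within the band $I=\fragment{-k}{|V|-|U|+k}$; consequently, for indices $(v,b)$ in the relevant range, $\RD{V,U}{I}\position{v,b} = D_{V,U}\position{v,b}$ whenever the value is $\le k$, thanks to the formula in \cref{clm:rdm}. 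Restricting the search over $b$ to the sub-band $\fragment{v-k}{v+k}$ (and intersected with the legal column range $\fragmentoo{\ell}{r}$) eliminates the ``boundary shortcut'' terms $a+b-2\ell$ and $2r-a-b$, so the true deletion distance is recovered without false positives.

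Next, I would invoke \cref{fct:ra} to preprocess the permutation matrix $P_{V,U}|_I$ in $\Oh(|I|\sqrt{\log |I|})$ time, enabling queries to the distribution matrix $(P_{V,U}|_I)^\Sigma$ in $\Oh(\log|I|/\log\log|I|)$ time. Combining with \cref{lem:perm}, any entry $\RD{V,U}{I}\position{a,b} = 2(P_{V,U}|_I)^\Sigma\position{a,b}+a-b$ becomes available within the same time bound. Since $|I|$ is the only nontrivial dimension of the restricted matrix, both the preprocessing and the per-query costs stay within the target.

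Then, I would apply the SMAWK algorithm to compute all row minima of the relevant $\Oh(|I|)\times\Oh(|I|)$ submatrix. SMAWK requires total monotonicity; this follows from the well-known fact that distance matrices of alignment graphs are Monge, a property inherited by the band-restricted version (as can be verified via the expression for $\RD{V,U}{I}$ in \cref{clm:rdm} together with the Monge structure of $D_{V,U}$, or directly by tracking the seaweed representation). SMAWK then produces all $\Oh(|I|)$ row minima using $\Oh(|I|)$ entry queries, so the total running time is $\Oh(|I|\log|I|/\log\log|I|)$. A final linear-time pass over the row minima outputs the positions with value at most $k$.

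The main obstacle I anticipate is making the reduction from $\OccD_k(U,V)$ to row minima fully rigorous: I must confirm that choosing the column range carefully eliminates the two ``boundary escape'' terms in the formula of \cref{clm:rdm}, so that the restricted distance matches the true deletion distance for every $(v,b)$ that could witness a $k$-occurrence, and simultaneously verify that the chosen submatrix still satisfies total monotonicity. Both facts essentially reduce to standard properties of the seaweed framework of Tiskin, but they require careful indexing bookkeeping given the slightly nonstandard way $P_{V,U}|_I$ encodes the band-restricted alignment graph.
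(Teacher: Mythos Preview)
Your proposal is correct and follows essentially the same approach as the paper: use \cref{obs:dd} and \cref{clm:rdm} to show that on the relevant rectangular submatrix the boundary-escape terms exceed $k$, then apply SMAWK with $\Oh(\log|I|/\log\log|I|)$-time entry access via \cref{fct:ra} and \cref{lem:perm}. The paper makes the column range concrete as the fixed rectangle $\fragment{0}{|V|-|U|+k}\times\fragment{-k}{|V|-|U|}$ rather than a row-dependent band, which is exactly the bookkeeping you flag as the remaining obstacle.
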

\begin{proof}
    By \cref{obs:dd}, we have $\DD(V\fragmentco{i}{j},U)=D_{V,U}\position{i,j-|U|}$.
    Moreover, \cref{clm:rdm} yields
    \begin{align*}
        &D_{M(V,U)|_I}\position{i,j-|U|} \\
        &\quad= \min\{D_{V,U}\position{i,j-|U|},\quad
        i+j-|U|+2(k+1),\quad 2(|V|-|U|+k+1)-i-j+|U|\}\\
        &\quad=\min\{D_{V,U}\position{i,j-|U|},\quad 2k+2+i+j-|U|,\quad
                                        2k+2 +2|V|-|U|-i-j\}.
    \end{align*}
    If $j \ge |U|-k$, then \[
        2k+2+i+j-|U| \ge k+2+i > k.
    \]
    If $i\le |V|-|U|+k$, then \[
        2(|V|-|U|+k+1)-i-j+|U| > |V|-j+k+2 > k.
    \]
    Hence, for $(i,j)\in \fragment{0}{|V|-|U|+k}\times \fragment{|U|-k}{|V|}$,
    we have  $\DD(U,V\fragmentco{i}{j})\le k$ if and only if
    $D_{M(V,U)|_I}\position{i,j-|U|}\le k$.
    On the other hand  $\DD(U,V\fragmentco{i}{j})\ge |j-i-|U\|>k$ holds
    whenever $j < |U|-k$ or $i>|V|-|U|+k$.
    Hence, our task reduces to checking, for each $i\in \fragment{0}{|V|-|U|+k}$,
    whether  $D_{M(V,U)|_I}\position{i,j-|U|}\le k$ holds for some $j\in \fragment{|U|-k}{|V|}$.
    For this, we recall that $D_{M(V,U)|_I}$ is a Monge matrix and note that it remains
    a Monge matrix when restricted to $\fragment{0}{|V|-|U|+k}\times \fragment{-k}{|V|-|U|}$.
    The SMAWK algorithm~\cite{SMAWK} finds row-minima in an $m\times m$ Monge matrix using $\Oh(m)$
    queries asking for values of the matrix entries.
    By \cref{fct:ra,lem:perm}, after $\Oh(|I|\sqrt{\log |I|})$-time preprocessing,
    we have $\Oh({\log |I|}/{\log \log |I|})$-time access to entries of $D_{M(V,U)|_I}$.
    Since $m=|V|-|U|+k+1 \le |I|$, the final running time is $\Oh(|I|{\log |I|}/{\log \log |I|})$.
\end{proof}

\begin{proposition}\label{prp:dpm}
    There is a data structure for a $\DPM(k, \Delta, \Sb, \Sm, \Sf)$ problem variant,
    reporting $\OccD_k(U,V)$ instead of $\OccE_k(U,V)$,
    with $\Oh(\Delta \log z \log \Delta)$-time updates and queries,
    $\Oh(\Delta z \log \Delta)$-time initialization, and $\Oh((d^3+\Delta^2 d)\log^2 (d+\Delta))$-time preprocessing, where
    $d=\ed(\Sb)+\ed(\Sm)+\ed(\Sf)$.
\end{proposition}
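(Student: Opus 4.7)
The plan is to assemble the proposition directly from the two main building blocks that have been set up just above it: Lemma~\ref{lem:dpm}, which maintains the restricted seaweed matrix $A = P_{V,U}|_{\fragment{-k}{|V|-|U|+k+1}}$ under DPM updates, and Lemma~\ref{fct:dd}, which extracts $\OccD_k(U,V)$ from this matrix. All the complexity bounds in the statement, except for the query bound, already appear verbatim in Lemma~\ref{lem:dpm}, so essentially nothing extra is required for initialization, preprocessing, or updates.

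For the query, I would simply invoke Lemma~\ref{fct:dd} on the current value of $A$. To check that the running time matches the claimed $\Oh(\Delta\log z\log\Delta)$, I need to argue that the interval $I \coloneqq \fragment{-k}{|V|-|U|+k+1}$ has length $|I|=\Oh(\Delta)$. By the triangle inequality, $||U|-|V|| \le \sum_i ||U_i|-|V_i|| = \tor(\I) \le \Delta/2 - k$, which also forces $k \le \Delta/2$. Hence $|I| \le (\Delta/2-k) + 2k + 2 = \Delta/2 + k + 2 \le \Delta + 2$, so Lemma~\ref{fct:dd} runs in time $\Oh(\Delta \log\Delta/\log\log\Delta)$, which is absorbed by the $\Oh(\Delta\log z\log\Delta)$ update cost. (Retrieving $A$ itself from the data structure of Lemma~\ref{lem:prefsum} already costs $\Oh(\Delta\log\Delta)$ plus the one-call-to-\cref{lem:restrict} overhead, all within the same budget.)

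For correctness, Lemma~\ref{lem:dpm} guarantees that whenever $(U_1,\ldots,U_z)$ and $(V_1,\ldots,V_z)$ are $\Delta$-puzzles with values $U$ and $V$, the maintained matrix $A$ equals $P_{V,U}|_I$; this is exactly the assumption under which \DPM queries are defined. Lemma~\ref{fct:dd} then reconstructs $\OccD_k(U,V)$ from $P_{V,U}|_I$. Combining the two yields the proposition.

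There is no real obstacle here: this is pure composition, and the only nontrivial sanity check is the $|I|=\Oh(\Delta)$ calculation above, which follows straight from the DPM torsion bound~\eqref{it:unique}. If a subtlety arises, it will be simply in verifying that the $\Oh(\Delta\sqrt{\log\Delta})$ preprocessing inside \cref{fct:ra} (invoked by \cref{fct:dd}) is performed at query time, not at initialization, and that this cost too is subsumed by $\Oh(\Delta\log z\log\Delta)$.
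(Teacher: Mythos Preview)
Your proposal is correct and follows essentially the same approach as the paper: maintain $P_{V,U}|_I$ via \cref{lem:dpm} and answer queries with \cref{fct:dd}, bounding $|I|=\Oh(\Delta)$ through the torsion constraint. The only detail you skipped, which the paper handles explicitly, is the degenerate case $|V|<|U|-k$: there $\OccD_k(U,V)=\emptyset$ trivially, and one must short-circuit because \cref{fct:dd} requires $I$ to be non-empty (your torsion calculation gives an upper bound on $|I|$ but not a lower bound, and indeed $|V|-|U|+k$ can be negative when $\Delta>4k$).
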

\begin{proof}
    Let $w = \floor{{\Delta}/{2}}$.   We use \cref{lem:dpm} to maintain
    a permutation matrix $A$ such that $A=P_{V,U}|_{\fragment{-k}{|V|-|U|+k}}$ holds
    whenever $U_1,\ldots,U_z$ and $V_1,\ldots,V_z$ are $\Delta$-puzzles with values $U,V$, respectively.

    It remains to implement queries. If $|V| < |U|-k$, then we report that $\OccD(U,V)=\emptyset$.
    Otherwise, we derive $A=P_{V,U}|_{\fragment{-k}{|V|-|U|+k}}$.
    Next, we compute $\OccD_k(U,V)$ using \cref{fct:dd}.
    These two steps add $\Oh(w\log w / \log \log w)$ to the query time,
    which is dominated by the update time $\Oh(w\log z \log w)$.
\end{proof}

\dpm*
\begin{proof}
We apply \cref{prp:dpm} using the distortion-free embedding from $\ed$ to $\DD$~\cite{abs-0707-3619}.
This embedding is defined by mapping every string $S\in \Sigma^*$
to a string $f(S)\coloneqq \bigodot_{i=0}^{|S|-1} S\position{i}\$$, where $\$\notin \Sigma$.
Observe that $\ed(S,T)=\onehalf\DD(f(S),f(T))$
and $\OccE_k(P,T) = \{i\in \fragment{0}{|T|-|P|+k} : 2i\in \OccD_{2k}(f(P),f(T))\}$.
Thus, we use \cref{prp:dpm} we the strings mapped through $f$ and the integer parameters
$k$ and $\Delta$ doubled. Note that \modelname operations on the family $f(\X)$ can be easily implemented
using the \modelname operations on~$\X$.
\end{proof}

\clearpage
\phantomsection
\addcontentsline{toc}{section}{\sffamily \bfseries  References}
\bibliographystyle{alphaurl}
\bibliography{ms}
\clearpage
\appendix
\section{Notation Overview for Part I}

    \begin{tabularx}{1.05\textwidth}{c X}
        {\bf Notation} & {\bf Explanation}\\\toprule
        \(i, j\) & Integer indices.\\
        \(s, \posa, \posb, x, y\) & Positions in strings, typically occurring in pairs as
        boundaries of fragments.\\
        \(F, G, S, U, V, X, Y\) & Strings or fragments of strings; possibly
        with additional meaning in their respective contexts.\\
        \((a_{X}, a_{Y})\) & Elements of an alignment of (fragments of) \(X \onto Y\). We
        may deviate from this notation in the special case of alignments of (fragments of)
        a string to (fragments of) itself.\\
        \midrule
        \(T\), \(|T| = n\) & Text \(T\) of length \(n\).\\
        \(P\), \(|P| = m\) & Pattern \(P\) of length \(m\).\\
        \(k\) & A non-negative threshold \(k\) used to denote the allowed number of edits
        when searching for occurrences of  \(P\) in \(T\).\\
        \(\OccE_k(P, T)\) & The set of all starting positions of \(k\)-edit occurrences of
        \(P\) in \(T\).\\[2ex]
        \textit{The Standard Trick}& We may assume \(n < \threehalfs m + k\) with a
        running-time overhead of \(\Oh(n/m)\) (see \cref{sec:oldalgo}).\\
        \midrule
        \(Q\), \(|Q| = q\) & Primitive string \(Q\) of length \(q\).\\
        {\small \(\A_T : T \onto \Q\fragmentco{x_T}{y_T}\), \(d_T\)} & An alignment of \(T\) onto a
        fragment of \(\Q\) with cost \(d_T\). We have \(x_T \in \fragmentco{0}{q}\).\\
            {\small \(\A_P : P \onto \Q\fragmentco{0}{y_P}\), \(d_P\)} & An alignment of \(P\) onto a
        fragment of \(\Q\) with cost \(d_P\).\\[2ex]
        \(\kappa \coloneqq d_P + k + d_T\) & For an alignment \(\A: P \onto
        T\fragmentco{\posa}{\posb}\) of cost at most \(k\), the bound \(\kappa\) is an
        upper bound on the number of edits of the induced alignment \(\A_T\circ
        \A\circ\A_P^{-1} : \Q\fragmentco{0}{y_P} \onto \Q\fragmentco{\posa'}{\posb'}\).\\
        \(\tau \coloneqq q\ceil{\kappa / 2q}\) & Auxiliary parameter.
        \\[2ex]
        \(d\) & Positive integer that satisfies \(d_T/3  \le d\) and \(d_P \le d\) and \(2k
        \le d\) and \(8q/m \le d\).\\
        \midrule
        \(J\) & See \cref{def:jdef}; we have
        \(J =
        \fragment{\ceil{{(x_T-\kappa)}/{\tau}}}{\floor{{(y_T-y_P+\kappa)}/{\tau}}}\).\\
        \(R_j \coloneqq T\fragmentco{r_j}{r'_j}\)& See \cref{def:jdef}; we have
        \(r_j \coloneqq \min\{a_T \mid (a_T,a_Q)\in \A_T
        \text{ for }  j\tau-\kappa \le a_Q\}\)
        and
        \(r'_j \coloneqq \max\{a_T \mid (a_T,a_Q)\in \A_T\text{ for } a_Q \le
        j\tau+y_P + \kappa\}.\) Furthermore, \(r_{\max J + 1} \coloneqq n\).\\
        \midrule
        \(S=\bigodot_{i=1}^{\beta_S} S\fragmentco{s_{i-1}}{s_i}\) & Tile partition of a
        string \(S\)
        with respect to some alignment \(\A_S: S \onto \Q\fragmentco{x_S}{y_S}\) with
        \(S\fragmentco{s_{i-1}}{s_i} = \A_S^{-1}(\Q\fragmentco{\max\{x_S,
        (j-1)\tau\}}{\min\{y_S,j\tau\}})\).\\
        \(\beta_S \coloneqq \ceil{y_S/\tau}\)& The number of the last non-zero tile of \(S\).\\
        \(P=\bigodot_{i=1}^{\beta_P} P\fragmentco{p_{i-1}}{p_i}\) & \(\tau\)-tile partition of
        \(P\).
        We assume that \(\beta_P \ge 20\) (by \cref{lem:smallbeta}).\\
        \(T=\bigodot_{i=1}^{\beta_T} T\fragmentco{t_{i-1}}{t_i}\) & \(\tau\)-tile partition of
        \(T\).\\[2ex]
        $\Delta \coloneqq  6\kappa$& Overlap of neighboring puzzle pieces.\\
        $z \coloneqq \beta_P - 17$& Number of puzzle pieces.\\
        \(P_1, \dots, P_z\) & A \(\Delta\)-puzzle with value \(P\), where
        $P_1 \coloneqq P\fragmentco{p_0}{p_2+\Delta}$ and\\
        \(\val_{\Delta}(P_1, \dots, P_z) = P\)
                            &$P_{i} \coloneqq
        P\fragmentco{p_{i}}{p_{i+1}+\Delta}$ for $i\in \fragmentoo{1}{z}$ and $P_z
        \coloneqq
        P\fragmentco{p_{z}}{|P|}$.\\
        \(T_{j,1},\ldots,T_{j,z}\) & A \(\Delta\)-puzzle with value \(R_j\), where
        \(T_{j,1} \coloneqq T\fragmentco{r_j}{t_{j+2}+\Delta}\) and\\
        \(\val_\Delta(T_{j,1},\ldots,T_{j,z}) = R_j\)   &
        \(T_{j,i} \coloneqq T\fragmentco{t_{j+i}}{t_{j+i+1}+\Delta}\) for \(i \in
        \fragmentoo{1}{z}\) and \(T_{j,z} \coloneqq
        T\fragmentco{t_{j+z}}{r'_j}\).\\
        \(T_{\min J + 2}, \dots, T_i, \dots, T_{\max J + z -1}\)
        & Shorthand for the pieces \(T_{j,i'}\) with \(j + i' = i\) and \(i
        \in\fragmentoo{1}{z}\); that is \(T_i \coloneqq T\fragmentco{t_i}{t_{i+1} +
        \Delta}\).
        \\\midrule\pagebreak\\
        \midrule
        \(\Sb \coloneqq \{P_1\} \cup \{T_{j,1} \mid j\in J\}\)
        & The set of {leading} puzzle pieces.\\
        \(\Sm \coloneqq \{P_{i} : i\in \fragmentoo{1}{z}\}\;\cup\) &
        The set of  {internal} puzzle pieces.\\
        \(\{T_{i} : i \in \fragmentoo{\min J +1}{\max J +z} \}\) \\
        \(\Sf \coloneqq \{P_z\} \cup \{T_{j,z} : j\in J\}\)&
        The set of {trailing} puzzle pieces.\\[2ex]
        \(\ed(\S)\coloneqq \min_{\Sr\in \Sigma^*} \sum_{S\in \S}\ed(S,\Sr)\)
        & The median edit distance of the family \(\S\).\\[2ex]
        \(\sp{T}\), \(\sp{P}\) & The special internal pieces in \(T\) and \(P\), that is,
        the pieces in \(\Sm\) that differ from \(\Q\fragmentco{0}{\tau +
        \Delta}\).\\
        \midrule
        \(\I = (U_1, V_1)(U_2,V_2)\dots(U_z,V_z)\)
        & Sequence of ordered pairs of strings, a
        \emph{DPM-sequence}. Occasionally, we call the elements \((U_i, V_i)\) a
        \emph{DPM-pair}.\\
        \(\OccE_k(\I) \coloneqq \OccE_k( U, V ) \)
        & Set of \(k\)-error occurrences of the string
        \(U \coloneqq \val_{\Delta}( U_1,\dots,U_z ) \)
        in the string \( V \coloneqq \val_{\Delta}( V_1,\dots,V_z ) \).\\
        \(\tor(\I) \coloneqq \sum_{i=1}^z \big||U_i|-|V_i|\big|\) & The \torn of the
        DPM-sequence \(\I\).\\
        \(\I_j \coloneqq (P_1, T_{j,1}) (P_2, T_{j,2}) \cdots (P_z, T_{j,z})\)
        & DPM-sequence
        representing the strings \(P = \val_{\Delta}( P_1, \dots, P_z )\) and
        \(R_j = \val_{\Delta}( T_{j,1}, \dots, T_{j,z})\);
        we have \(\tor(\I_j) \le 3\kappa - k = \Delta/2 - k\)
         by \cref{fact:lengths_diff}.\\
        \(\mathcal{Q} \coloneqq
        (\Q\fragmentco0{\tau+\Delta},\)
        & The plain (internal) DPM-pair.\\
        \(\Q\fragmentco0{\tau+\Delta})\) & \\
        \midrule
        \(\mathcal{L}^P, \lpref, L^P_1\)
        & The set of locked fragments of $P$ that contains a $\lpref$-locked prefix $L^P_1$, computed using $\locked(P,Q,d_P,\lpref)$.\\
        \(\mathcal{L}^T\)
        & The set of locked fragments of $T$, computed using $\locked(T,Q,d_T,0)$.\\
        \(\ktotm\)
        & A slack allowance for computing overlaps of fragments for the purpose of marking.\\
        \(\markf(v,\ktotm, L^P, L^T )\)
        & The number of marks given to a position $v$ due to the (possible) overlap of $L^P \in \mathcal{L}^P$ with $L^T \in \mathcal{L}^T$.\\
        \(\markf(v) = \markf(v,\ktotm, \mathcal{L}^P, \mathcal{L}^T )\)
        & The total number of marks given to a position $v$ of $T$.\\
        \(\markf(v,L)\)
        & The number of marks given to a position $v$ of $T$ due to overlaps of pairs of locked fragments containing locked fragment $L$.\\
        \(\tradeoff , \Hv, \light\)
        & A positive integer threshold used in the partition of the positions $\fragment{0}{n-m+k}$
        of $T$ to a set $\Hv$ of heavy positions and a set $\light$ of light positions; see \cref{def:light}.\\
        \midrule
        \(\rho(\star)\)
        & A mapping from positions of $T$ to integers in $\fragmentco{x_T}{y_T}$ such that $\A_T(T\fragmentco{v}{n})=\Q\fragmentco{\rho(v)}{y_T}$.\\
        \(\mathcal{D}(v)\)
        & For a light position $v$ of $T$, the set\\
        &$\{L^P_1\} \cup \{L \in \mathcal{L}^P\cup \mathcal{L}^T_{\fragmentco{v}{v+m}} : \markf(v,L) < \edl{L}{Q}\}$.\\
        \bottomrule
    \end{tabularx}

\clearpage
\section{Notation Overview for Part II}

    \begin{tabularx}{1.05\textwidth}{c X}
        {\bf Notation} & {\bf Explanation}\\\toprule
            $\ppr{I},\ppl{I},\mmr{I},\mml{I}$ & An interval $I$ extended or shrunk by one element to the left or right.\\
            $\spn(S)$ & The smallest integer interval containing a set $S\sub \Z$.\\\midrule
            $A\odot B$ & The min-plus product of matrices $A,B$.\\
            $A^\square$ & The density matrix of a matrix $A$.\\
            $\spn(A)$ & The span of a permutation a matrix $A$, defined as $\mml{\spn(\{i\in I : A[i,i]=0\})}$.\\
            $A^\Sigma$ & The distribution matrix of a matrix $A$.\\
            $A \boxdot B$ & The seaweed product of permutation matrices $A,B$; defined as $(A^\Sigma \odot B^\Sigma)^\square$.\\
            $A\downshift s$ & The diagonal shift of a matrix $A$ by $s$ units; $(A\downshift s)[i+s,j+s]=A[i,j]$.\\\midrule
            $M$ & A finite subset of $\Z^2$.\\
            $\AG(M)$ & The alignment graph of $M$.\\
            $\dist_M$ & The distance function on $\Z^2$ induced by distances in $\AG(M)$.\\
            $\prec$ & A partial order on $\Z^2$ defined so that $(x,y) \prec (x',y')$ if and only if $x < x'$ and $y<y'$.\\
            $\LIS(S)$ & The maximum length of a $\prec$-chain within $S\sub \Z^2$.\\\midrule
            $\lt^B$, $\br^B$ & The left-top and the bottom-right boundary of a bounding box $B$.\\
            $D_{M,B}$, $D_M$ & The distance matrix of $M$ with respect to a bounding box $B$ (shown to be independent of $B$).\\
            $P_M$ & The seaweed matrix of $M$, defined as $\frac12D_M^\square$.\\\midrule
            $\spn(M)$ & The span of $M$, defined as $\spn(\{x-y : (x,y)\in M\})$.\\
            $M_I$ & The restriction of $M$ to interval $I$, defined as $\{(x,y)\in M : x-y\in I\}$.\\
            $A_I$ & The restriction of a permutation matrix $A$ to interval $I$, defined as $\RP{M}{I}$ if $A=P_M$ (shown to be independent of $M$).\\\midrule
            $M(X,Y)$ & The set of mismatches between two strings $X,Y$, defined as $\{(x,y) : X[x]\ne Y[y]\}$.\\
            $D_{X,Y}$, $P_{X,Y}$ & The distance matrix and the permutation matrix of $M(X,Y)$.\\
            $\DD(X,Y)$ & The deletion distance between strings $X,Y$.\\
            $\DD(\X)$ & The median deletion distance of a finite string family $\X$.\\
        \bottomrule
    \end{tabularx}

\end{document}